\xpatchcmd{\@endpart}{\vfil\newpage}{}{}{}
\xpatchcmd{\@endpart}{\newpage}{}{}{}
\newcommand\blfootnote[1]{%
  \begingroup
  \renewcommand\thefootnote{}\footnote{#1}%
  \addtocounter{footnote}{-1}%
  \endgroup
}
\newcommand{\eqdef}{\overset{\mathrm{def}}{=\joinrel=}}
\newcommand{\specialcell}[2][c]{%
\begin{tabular}[#1]{@{}c@{}}#2\end{tabular}}
\newcommand*\Let[2]{\State #1 $\gets$ #2}
\algrenewcommand\algorithmicrequire{\textbf{Input:}}
\algrenewcommand\algorithmicensure{\textbf{Output:}}
\algnewcommand{\Initialize}[1]{%
  \State \textbf{Initialize:}
  \Statex \hspace*{\algorithmicindent}\parbox[t]{.8\linewidth}{\raggedright #1}
}
\renewcommand{\Return}{\State \textbf{return} }
\newcommand{\mathleft}{\@fleqntrue\@mathmargin\parindent}
\newcommand{\mathcenter}{\@fleqnfalse}
\newcommand{\MOV}[1][r]{M_{#1}(\mathcal{E})}
\newcommand{\MV}{\textsc{Margin of Victory}\xspace}
\newcommand{\el}{\ensuremath{\ell}\xspace}
\newcommand{\suc}{\ensuremath{\succ}\xspace}
\newcommand{\Query}{\textsc{Query}\xspace}
\DeclareMathOperator*{\argmin}{arg\!min}
\DeclareMathOperator*{\argmax}{arg\!max}
\DeclarePairedDelimiter{\ceil}{\lceil}{\rceil}
\DeclarePairedDelimiter{\floor}{\lfloor}{\rfloor}
\newcommand{\defproblem}[3]{
  \vspace{1mm}
\begin{center}
\noindent\fbox{

  \begin{minipage}{\textwidth}
  \begin{tabular*}{\textwidth}{@{\extracolsep{\fill}}l} \textsc{\underline{#1}} \\ \end{tabular*}\vspace{1ex}
  {\bf{Input:}} #2  \\
  {\bf{Question:}} #3
  \end{minipage}
 
  }
\end{center}
  \vspace{1mm}
}
\renewcommand{\leq}{\leqslant}
\renewcommand{\geq}{\geqslant}
\renewcommand{\ge}{\geqslant}
\renewcommand{\le}{\leqslant}
\newcommand{\E}[1]{\mathbb{E}\left[#1\right]}
\newcommand\numberthis{\addtocounter{equation}{1}\tag{\theequation}}
\newcommand{\YES}{\textsc{Yes}\xspace}
\newcommand{\NO}{\textsc{No}\xspace}
\newcommand{\BigO}{\ensuremath{\mathcal{O}}\xspace}
\newcommand{\true}{\textsc{true}\xspace}
\newcommand{\false}{\textsc{false}\xspace}
\newcommand{\pr}{\ensuremath{\prime}\xspace}
\newcommand{\prm}{{\ensuremath{\prime}}\xspace}
\newcommand{\PE}{\textsc{Preference Elicitation}\xspace}
\newcommand{\CW}{\textsc{Weak Condorcet Winner}\xspace}
\newcommand{\WCW}{\textsc{Weak Condorcet Winner}\xspace}
\newcommand{\WD}{\textsc{Winner Determination}\xspace}
\newcommand{\WM}{\textsc{Weak Manipulation}\xspace}
\newcommand{\SM}{\textsc{Strong Manipulation}\xspace}
\newcommand{\OM}{\textsc{Opportunistic Manipulation}\xspace}
\newcommand{\CM}{\textsc{Coalitional Manipulation}\xspace}
\newcommand{\PW}{\textsc{Possible Winner}\xspace}
\newcommand{\NW}{\textsc{Necessary Winner}\xspace}
\newcommand{\PS}{\textsc{Permutation sum}\xspace}
\newcommand{\NPCshort}{\ensuremath{\mathsf{NP}}\text{-complete}\xspace}
\newcommand{\Pshort}{\ensuremath{\mathsf{P}}\xspace}
\newcommand{\NPshort}{\ensuremath{\mathsf{NP}}\xspace}
\newcommand{\NP}{\ensuremath{\mathsf{NP}}\xspace}
\newcommand{\Pb}{\ensuremath{\mathsf{P}}\xspace}
\newcommand{\NPCb}{\ensuremath{\mathsf{NP}}\text{-complete}\xspace}
\newcommand{\coNPH}{\ensuremath{\mathsf{co}}-\ensuremath{\mathsf{NP}}-hard\xspace}
\newcommand{\NPH}{\ensuremath{\mathsf{NP}}-hard\xspace}
\newcommand{\NPC}{\ensuremath{\mathsf{NP}}-complete\xspace}
\newcommand{\caveat}{\ensuremath{\mathsf{CoNP\subseteq NP/Poly}}\xspace}
\newcommand{\RB}{\ensuremath{\mathbb R}\xspace}
\newcommand{\SB}{\ensuremath{\mathbb S}\xspace}
\newcommand{\NB}{\ensuremath{\mathbb N}\xspace}
\newcommand{\EB}{\ensuremath{\mathbb E}\xspace}
\newcommand{\SF}{\ensuremath\ensuremath{\mathfrak S}\xspace}
\newcommand{\PF}{\ensuremath\ensuremath{\mathfrak P}\xspace}
\renewcommand{\AA}{\ensuremath{\mathcal A}\xspace}
\newcommand{\BB}{\ensuremath{\mathcal B}\xspace}
\newcommand{\CC}{\ensuremath{\mathcal C}\xspace}
\newcommand{\DD}{\ensuremath{\mathcal D}\xspace}
\newcommand{\EE}{\ensuremath{\mathcal E}\xspace}
\newcommand{\FF}{\ensuremath{\mathcal F}\xspace}
\newcommand{\GG}{\ensuremath{\mathcal G}\xspace}
\newcommand{\HH}{\ensuremath{\mathcal H}\xspace}
\newcommand{\II}{\ensuremath{\mathcal I}\xspace}
\newcommand{\LL}{\ensuremath{\mathcal L}\xspace}
\newcommand{\NN}{\ensuremath{\mathcal N}\xspace}
\newcommand{\OO}{\ensuremath{\mathcal O}\xspace}
\newcommand{\PP}{\ensuremath{\mathcal P}\xspace}
\newcommand{\QQ}{\ensuremath{\mathcal Q}\xspace}
\newcommand{\RR}{\ensuremath{\mathcal R}\xspace}
\renewcommand{\SS}{\ensuremath{\mathcal S}\xspace}
\newcommand{\TT}{\ensuremath{\mathcal T}\xspace}
\newcommand{\UU}{\ensuremath{\mathcal U}\xspace}
\newcommand{\VV}{\ensuremath{\mathcal V}\xspace}
\newcommand{\WW}{\ensuremath{\mathcal W}\xspace}
\newcommand{\XX}{\ensuremath{\mathcal X}\xspace}
\newcommand{\YY}{\ensuremath{\mathcal Y}\xspace}
\newcommand{\ppp}{\ensuremath{\mathfrak p}\xspace}
\newcommand{\qqq}{\ensuremath{\mathfrak q}\xspace}
\newcommand{\vvv}{\ensuremath{\mathfrak v}\xspace}
\newcommand{\xxx}{\ensuremath{\mathfrak x}\xspace}
\newcommand{\bCC}{\overline{\mathcal{C}}\xspace}
\newcommand{\nfrac}{\nicefrac}
\newcommand{\defparproblem}[4]{
  \vspace{1mm}
\begin{center}
\noindent\fbox{

  \begin{minipage}{.9\linewidth}
  \begin{tabular*}{\linewidth}{@{\extracolsep{\fill}}lr} \textsc{#1}  & {\bf{Parameter:}} #3 \\ \end{tabular*}
  {\bf{Input:}} #2  \\
  {\bf{Question:}} #4
  \end{minipage}
 
  }
\end{center}
  \vspace{1mm}
}
\newcommand{\eps}{\varepsilon}
\renewcommand{\epsilon}{\eps}
\newtheorem{proposition}{\bf Proposition}
\newtheorem{observation}{\bf Observation}
\newtheorem{theorem}{\bf Theorem}
\newtheorem{lemma}{\bf Lemma}
\newtheorem{claim}{\bf Claim}
\newtheorem{corollary}{\bf Corollary}
\newtheorem{definition}{\bf Definition}
\newtheorem{reductionrule}{\bf Reduction rule}
\newtheorem{example}{\bf Example}
\crefname{example}{Example}{Example}
\crefname{theorem}{Theorem}{Theorem}
\crefname{observation}{Observation}{Observation}
\crefname{lemma}{Lemma}{Lemma}
\crefname{corollary}{Corollary}{Corollary}
\crefname{proposition}{Proposition}{Proposition}
\crefname{definition}{Definition}{Definition}
\crefname{claim}{Claim}{Claim}
\crefname{reductionrule}{Reduction rule}{Reduction rule}
\crefname{chapter}{Chapter}{Chapter}
\crefname{ineq}{inequality}{Inequality}
\numberwithin{example}{chapter}
\numberwithin{proposition}{chapter}
\numberwithin{claim}{chapter}
\numberwithin{theorem}{chapter}
\numberwithin{corollary}{chapter}
\numberwithin{observation}{chapter}
\numberwithin{lemma}{chapter}
\numberwithin{definition}{chapter}
\numberwithin{reductionrule}{chapter}
\newcommand{\blankpage}{
\newpage
\thispagestyle{empty}
\mbox{}
\newpage
}
\crefname{observation}{observation}{observations}
\crefname{algorithm}{algorithm}{algorithms}
\crefname{align}{equation}{equations}
\crefname{eqnarray}{equation}{equations}
\newcommand{\thesistitle}{Resolving the Complexity of Some Fundamental Problems in Computational Social Choice}
\begin{document}

\title{\thesistitle}	

\submitdate{August 2016} 
\phd
\dept{Computer Science and Automation}
\faculty{Faculty of Engineering}
\author{Palash Dey}


\maketitle

%
%
%
%
%
%
%
%
%
%
		
\blankpage

\vspace*{\fill}
\begin{center}
\large\bf \textcopyright \ Palash Dey\\
\large\bf August 2016\\
\large\bf All rights reserved
\end{center}
\vspace*{\fill}
\thispagestyle{empty}

\blankpage

\vspace*{\fill}
\begin{center}
DEDICATED TO \\[2em]
\Large\it My teachers
\end{center}
\vspace*{\fill}
\thispagestyle{empty}


\setcounter{secnumdepth}{3}
\setcounter{tocdepth}{3}
\frontmatter 
\blankpage
\pagenumbering{roman}
\prefacesection{Acknowledgements}
I am grateful to my Ph.D. advisors Prof. Y. Narahari and Prof. Arnab Bhattacharyya for providing me a golden opportunity to work with them. I thank them for the all encompassing support they ushered upon me throughout my Ph.D. I convey my special thanks to Prof. Y. Narahari for all his academic and nonacademic supports which made my Ph.D. life very easy and enjoyable. I have throughly enjoyed working with Prof. Arnab Bhattacharyya. I convey my very special thanks to Prof. Neeldhara Misra for many excellent and successful research collaborations, useful discussions, and playing an all encompassing super-active role in my Ph.D. life. Prof. Neeldhara Misra was always my constant source of inspiration and she guided me throughout my Ph.D. as a supervisor, advisor, friend, and elder sister. I am indebted to Prof. Dilip P. Patil for teaching me how the practice of writing statements rigorously and precisely in mathematics automatically clarifies many doubts. I am extremely grateful to have a teacher like him at the very early stage of my stay in IISc who not only significantly improved my mathematical thinking process but also acted like a friend, philosopher, and guide throughout my stay in IISc. I throughly enjoyed all the courses that he taught during my stay in IISc and learned a lot from those courses. Specially his courses on linear algebra, Galois theory, commutative algebra have immensely influenced my thinking process. 

I thank Prof. Sunil Chandran for teaching me the art of proving results in graph theory in particular and theoretical computer science in general. I am grateful to Dr. Deeparnab Chakrabarty and Prof. Arnab Bhattacharyya for teaching me approximation algorithms and randomized algorithms, Prof. Manjunath Krishnapur for teaching me probability theory and martingales, Prof. Chandan Saha for teaching me complexity theory and algebraic geometry, Prof. Y. Narahari for teaching me game theory. I thank all the faculty members of Department of Computer Science and Automation (CSA) for all of their supports. I convey my special thank to all the staffs of CSA specially Ms. Suguna, Ms. Meenakshi, Ms. Kushael for their helping hands. I thank Prof. Ashish Goel for giving me an excellent opportunity to work with him and for hosting me in Stanford University for three months. I thank Dr. David Woodruff for collaborating with me on important problems.

I feel myself extremely lucky to get such a wonderful and friendly lab members in the Game Theory lab. I want to thank all the members of the Game Theory lab with special mention to Dr. Rohith D. Vallam, Satyanath Bhat, Ganesh Ghalme, Shweta Jain, Swapnil Dhamal, Divya Padmanabhan, Dr. Pankaj Dayama, Praful Chandra, Shourya Roy, Tirumala Nathan, Debmalya Mandal, Arupratan Ray, and Arpita Biswas. I want to thank Dr. Minati De, Abhiruk Lahiri, Achintya Kundu, Aniket Basu Roy, Prabhu Chandran, Prasenjit Karmakar, R Kabaleeshwaran, Rohit Vaish, Sayantan Mukherjee, Srinivas Karthik, Suprovat Ghoshal, Vineet Nair for all the important discussions. I greatly acknowledge support from Ratul Ray and Uddipta Maity during my internship in Stanford University.

I thank Google India for providing me fellowship during second and third years of my Ph.D. I acknowledge financial support from Ministry of Human Resource Development (MHRD) in India during my first year of Ph.D.

Last but not the least, I express my gratitude to my parents, brother, wife, and all my friends for their love, blessing, encouragement, and support.
\prefacesection{Abstract}
In many real world situations, especially involving multiagent systems and
artificial intelligence, participating agents often need to agree upon a
common alternative even if they have differing preferences over the
available alternatives. Voting is one of the tools of choice in these
situations. Common and classic applications of voting in modern
applications include collaborative filtering and recommender systems,
metasearch engines, coordination and planning among multiple automated
agents etc. Agents in these applications usually have computational power
at their disposal. This makes the study of computational aspects of voting
crucial. This thesis is devoted to a study of computational complexity of
several fundamental algorithmic and complexity-theoretic problems arising in the context of voting theory.

The typical setting for our work is an ``election''; an election consists of
a set of voters or agents, a set of alternatives, and a voting rule. The
vote of any agent can be thought of as a ranking (more precisely, a
complete order) of the set of alternatives. A voting profile comprises a
collection of votes of all the agents. Finally, a voting rule is a mapping
that takes as input a voting profile and outputs an alternative, which is
called  the ``winner'' or ``outcome'' of the election. Our contributions in
this thesis can be categorized into three parts and are described below.

\paragraph*{Part I: Preference Elicitation.} In the first part of the
thesis, we study the problem of eliciting the preferences of a set of
voters by asking a small number of comparison queries (such as who a voter
prefers between two given alternatives) for various interesting
domains of preferences.

We commence with considering the domain of single peaked preferences on
trees in \Cref{chap:pref_elicit_peak}. This domain is a significant
generalization of the classical well studied domain of single peaked
preferences. The domain of single peaked preferences and its
generalizations are hugely popular among political and social
scientists. We show tight dependencies between query complexity of
preference elicitation and various parameters of the single peaked tree,
for example, number of leaves, diameter, path width, maximum degree of a
node etc.

We next consider preference elicitation for the domain of single crossing
preference profiles in \Cref{chap:pref_elicit_cross}. This domain has also been
studied extensively by political scientists, social choice theorists, and
computer scientists. We establish that the query complexity of preference
elicitation in this domain crucially depends on how the votes are accessed
and on whether or not  any single crossing ordering is a priori known.

\paragraph*{Part II: Winner Determination.} In the second part of the
thesis, we undertake a study of the computational complexity of several
important problems related to determining winner of an election.

We begin with a study of the following problem: Given an election, predict
the winner of the election under some fixed voting rule by sampling as few
votes as possible. We establish optimal or almost optimal bounds on the
number of votes that one needs to sample for many commonly used voting rules
when the margin of victory is at least $\eps n$ ($n$ is the number of
voters and $\eps$ is a parameter). We next study efficient sampling based algorithms for
estimating the margin of victory of a given election for many common voting
rules. The margin of victory of an election is a useful measure that captures the robustness
of an election outcome. The above two works are presented in \Cref{chap:winner_prediction_mov}.

In \Cref{chap:winner_stream}, we design an optimal algorithm for
determining the plurality winner of an election when the votes are arriving
one-by-one in a streaming fashion. This resolves an intriguing question on
finding heavy hitters in a stream of items, that has remained open for more than $35$ years in
the data stream literature. We also provide near optimal algorithms for determining the winner of a stream
of votes for other popular voting rules, for example, veto, Borda, maximin
etc.

Voters' preferences are often partial orders instead of complete orders.
This is known as the incomplete information setting in computational social
choice theory. In an incomplete information setting, an extension of the
winner determination problem which has been studied extensively is the
problem of determining possible winners. We study the kernelization
complexity (under the complexity-theoretic framework of parameterized
complexity) of the possible winner problem in \Cref{chap:kernel}. We show
that there do not exist kernels of size that is polynomial in the number of
alternatives for this problem for commonly used voting rules under a plausible 
complexity theoretic assumption. However, we also
show that the problem of coalitional manipulation which is an important
special case of the possible winner problem admits a kernel whose size is
polynomially bounded in the number of alternatives for common voting rules.

\paragraph{Part III: Election Control.} In the final part of the thesis, we study the computational complexity of
various interesting aspects of strategic behavior in voting.

First, we consider the impact of partial information in the context of
strategic manipulation in \Cref{chap:partial}. We show that lack of
complete information makes the computational problem of manipulation
intractable for many commonly used voting rules.

In \Cref{chap:detection}, we initiate the study of the computational
problem of detecting possible instances of election manipulation. We show
that detecting manipulation may be computationally easy under certain
scenarios even when manipulation is intractable.

The computational problem of bribery is an extensively studied problem in
computational social choice theory. We study computational complexity of
bribery when the briber is ``frugal'' in nature.
We show for many common voting rules
that the bribery problem remains intractable even when the briber's
behavior is
restricted to be frugal,  thereby strengthening the
intractability results from the literature. This forms the subject of
\Cref{chap:frugal_bribery}.
\blankpage
\prefacesection{Publications based on this Thesis}
\begin{enumerate}

 \item \textit{Palash Dey}, Neeldhara Misra, and Yadati Narahari. ``\textit{Kernelization Complexity of Possible Winner and Coalitional Manipulation Problems in Voting}". In \textit{Theoretical Computer Science}, volume 616, pages 111-125, February 2016.
\\Remark: A preliminary version of this work was presented in the $14^{th}$ \textit{International Conference on Autonomous Systems and Multiagent Systems (AAMAS-15), 2015.}

 \item \textit{Palash Dey}, Neeldhara Misra, and Yadati Narahari. ``\textit{Frugal Bribery in Voting}''. Accepted in \textit{Theoretical Computer Science}, March 2017.
 \\Remark: A preliminary version of this work was presented in the \textit{$30^{th}$ AAAI Conference on Artificial Intelligence (AAAI-16)}, 2016.

 \item Arnab Bhattacharyya, \textit{Palash Dey}, and David P. Woodruff. ``\textit{An Optimal Algorithm for $\ell_1$-Heavy Hitters in Insertion Streams and Related Problems}''. Proceedings of the $35^{th}$ \textit{ACM SIGMOD-SIGACT-SIGAI Symposium on Principles of Database Systems (PODS-16)}, pages 385-400, San Francisco, USA, 26 June - 1 July, 2016.
 
 \item \textit{Palash Dey} and Neeldhara Misra. ``\textit{Elicitation for Preferences Single Peaked on Trees}''. Proceedings of the \textit{$25^{th}$ International Joint Conference on Artificial Intelligence (IJCAI-16)}, pages 215-221, New York, USA, 9-15 July, 2016.
 
 \item \textit{Palash Dey} and Neeldhara Misra. ``\textit{Preference Elicitation For Single Crossing Domain}''. Proceedings of the \textit{$25^{th}$ International Joint Conference on Artificial Intelligence (IJCAI-16)}, pages 222-228, New York, USA, 9-15 July, 2016.
 
 \item \textit{Palash Dey}, Neeldhara Misra, and Yadati Narahari. ``\textit{Complexity of Manipulation with Partial Information in Voting}''. Proceedings of the \textit{$25^{th}$ International Joint Conference on Artificial Intelligence (IJCAI-16)}, pages 229-235, New York, USA, 9-15 July, 2016.
 
 \item \textit{Palash Dey}, Neeldhara Misra, and Yadati Narahari. ``\textit{Frugal Bribery in Voting}''. Proceedings of the \textit{$30^{th}$ AAAI Conference on Artificial Intelligence (AAAI-16)}, vol. 4, pages 2466-2672, Phoenix, Arizona, USA, 2016.
 
 \item \textit{Palash Dey} and Yadati Narahari. ``\textit{Estimating the Margin of Victory of Elections using Sampling}". Proceedings of the \textit{$24^{th}$ International Joint Conference on Artificial Intelligence (IJCAI-15)}, pages 1120-1126, Buenos Aires, Argentina, 2015.
 
 \item \textit{Palash Dey} and Arnab Bhattacharyya. ``\textit{Sample Complexity for Winner Prediction in Elections}". Proceedings of the \textit{$14^{th}$ International Conference on Autonomous Systems and Multiagent Systems (AAMAS-15)}, pages 1421-1430, Istanbul, Turkey, 2015.
 
 \item \textit{Palash Dey}, Neeldhara Misra, and Yadati Narahari. ``\textit{Detecting Possible Manipulators in Elections}". Proceedings of the \textit{$14^{th}$ International Conference on Autonomous Systems and Multiagent Systems(AAMAS-15)}, pages 1441-1450, Istanbul, Turkey, 2015.
 
 \item \textit{Palash Dey}, Neeldhara Misra, and Yadati Narahari ``\textit{Kernelization Complexity of Possible Winner and Coalitional Manipulation Problems in Voting}". Proceedings of the \textit{$14^{th}$ International Conference on Autonomous Systems and Multiagent Systems (AAMAS-15)}, pages 87-96, Istanbul, Turkey, 2015.
 
 
 \end{enumerate}

\tableofcontents
\listoffigures
\listoftables
\mainmatter 
\blankpage
\setcounter{page}{1}
\chapter{Introduction}
\label{chap:intro}

\begin{quote}
{\small In this chapter, we provide an informal introduction to the area called social choice theory and motivate its computational aspects. We then provide an overview of the thesis along with further motivations for the problems we study in this thesis.} 
\end{quote}

People always have different opinions on almost everything and hence they often have to come to a decision collectively on things that affect all of them. There are various ways for people with different opinions to come to a decision and the use of {\em voting} is the most natural and common method for this task. Indeed, the use of voting can be traced back to Socrates (470-399 BC) who was sentenced to death by taking a majority vote. We also see the mention of voting rules in Plato's (424-348 BC) unfinished book {\em ``Laws''}. Then, in the Middle Ages, the Catalan philosopher, poet, and missionary Ramon Llull (1232-1316) mentioned Llull voting in his manuscripts Ars notandi, Ars eleccionis, and Alia ars eleccionis, which were lost until 2001. The German philosopher, theologian, jurist, and astronomer Nicholas of Cusa (1401-1464) suggested the method for electing the Holy Roman Emperor in 1433. The first systematic study of the theory of voting was carried out by two French mathematicians, political scientists, and philosophers Marie Jean Antoine Nicolas de Caritat, marquis de Condorcet (1743-1794) and Jean-Charles, chevalier de Borda (1733-1799). A more involved study of voting was pioneered by Kenneth Arrow (a co-recipient of the {\em Nobel Memorial Prize in Economics} in 1972). The celebrated {\em Arrow's impossibility theorem} in 1951 along with {\em Gibbard-Satterthwaite's impossibility theorem} in 1973 and 1977 laid the foundation stone of modern voting theory.

\section{Important Aspects of Voting}

In a typical setting for voting, we have a set of alternatives or {\em candidates}, a set of agents called {\em voters} each of whom has an ordinal preference called {\em votes} over the candidates, and a {\em voting rule} which chooses one or a set of candidates as {\em winner(s)}. We call a set of voters along with their preferences over a set of candidates and a voting rule an {\em election}. In \Cref{tbl:simple_example_election} we exhibit an election with a set of voters \{Satya, Shweta, Swapnil, Divya, Ganesh\} and a set of five ice cream flavors \{Chocolate, Butterscotch, Pesta, Vanilla, Kulfi\} as a set of candidates. Suppose we use the plurality voting rule which chooses the candidate that is placed at the first positions of the votes most often as winner. Then the winner of the plurality election in \Cref{tbl:simple_example_election} is Chocolate. Plurality voting rule is among the simplest voting rules and only observes the first position of every vote to determine the winner. Mainly due to its simplicity, the plurality voting rule frequently finds its applications in voting scenarios where human beings are involved, for example, political elections. Indeed many countries including India, Singapore, United Kingdom, Canada use plurality voting at various levels of elections~\cite{plurality_wiki}.

\begin{table}[!htbp]
\begin{center}
{\renewcommand*{\arraystretch}{1.5}
\begin{tabular}{|c|c|}\hline
 Voters & Preferences\\\hline\hline
 Satya & Chocolate $\succ$ Kulfi $\succ$ Butterscotch $\succ$ Vanilla $\succ$ Pesta\\\hline
 Shweta & Butterscotch $\succ$ Kulfi $\succ$ Chocolate $\succ$ Vanilla $\succ$ Pesta\\\hline
 Swapnil & Pesta $\succ$ Butterscotch $\succ$ Kulfi $\succ$ Vanilla $\succ$ Chocolate\\\hline
 Divya & Chocolate $\succ$ Vanilla $\succ$ Kulfi $\succ$ Pesta $\succ$ Butterscotch\\\hline
 Ganesh & Kulfi $\succ$ Butterscotch $\succ$ Chocolate $\succ$ Vanilla $\succ$ Pesta\\\hline
\end{tabular}
}
\end{center}
\caption{Example of an election. Chocolate is the winner if the plurality voting rule is used. Kulfi is the winner if the Borda voting rule is used. Butterscotch is the Condorcet winner.}\label{tbl:simple_example_election}
\end{table}

However, there are more sophisticated voting rules which takes into account the entire preference ordering of the voters to choose a winner. One such voting rule which finds its widespread use specially in sports and elections in many educational institutions is the Borda voting rule~\cite{borda_wiki}. The Borda voting rule, first proposed by Jean-Charles, chevalier de Borda in 1770, gives a candidate a score equal to the number of other candidates it is preferred over for every vote. In \Cref{tbl:simple_example_election} for example, the candidate Kulfi receives a score of 3 from Satya, 3 from Shweta, 2 from Swapnil, 2 from Divya, and 4 from Ganesh. Hence the Borda score of Kulfi is 14. It can be verified that the Borda score of every other candidate is less than 14 which makes Kulfi the Borda winner of the election. We can already observe how two widely used and intuitively appealing voting rules select two different winners for the same set of votes.

A closer inspection of \Cref{tbl:simple_example_election} reveals something peculiar about what the plurality and Borda voting rules choose as winners. The plurality voting rule chooses Chocolate as the winner whereas a majority (three out of five) of the voters prefer Butterscotch over Chocolate. The same phenomenon is true for the Borda winner (Kulfi) too --- a majority of the voters prefer Butterscotch over Kulfi. More importantly Butterscotch is preferred over every other alternative (ice cream flavors) by a majority of the voters. Such an alternative, if it exists, which is preferred over every other alternative by a majority of the voters, is called the {\em Condorcet winner} of the election, named after the French mathematician and philosopher Marie Jean Antoine Nicolas de Caritat, marquis de Condorcet. We provide a more detailed discussion to all these aspects of voting including others in \Cref{chap:prelim}.

Now let us again consider the example in \Cref{tbl:simple_example_election}. Suppose Shweta misreports her preference as ``Butterscotch $\succ$ Pesta $\succ$ Chocolate $\succ$ Vanilla $\succ$ Kulfi.'' Observe that the misreported vote of Shweta when tallied with the other votes makes Butterscotch the Borda winner with a Borda score of 12. Moreover, misreporting her preference results in an outcome (that is Butterscotch) which Shweta prefers over the honest one (that is Kulfi). This example demonstrate a fundamental notion in voting theory called {\em manipulation} --- there can be instances where an individual voter may have an outcome by misreporting her preference which she prefers over the outcome of honest voting (according to her preference). Obviously manipulation is undesirable since it leads to suboptimal societal outcome. Hence one would like to use voting rules that do not create such possibilities of manipulation. Such voting rules are called {\em strategy-proof voting rules}. When we have only two candidates, it is known that the majority voting rule, which selects the majority candidate (breaking ties arbitrarily) as the winner, is a strategy-proof voting rule. There exist strategy-proof voting rules even if we have more than two candidates. For example, consider the voting rule $f_\vvv$ that always outputs the candidate which is preferred most by some fixed voter say \vvv. Obviously $f_\vvv$ is strategy-proof since any voter other than \vvv cannot manipulate simply because they can never change the outcome of the election and the winner is already the most preferred candidate of \vvv. However, voting rules like $f_\vvv$ are undesirable intuitively because it is not performing the basic job of voting which is aggregating preferences of all the voters. The voting rule $f_\vvv$ is called a {\em dictatorship} since it always selects the candidate preferred most by some fixed voter \vvv; such a voter \vvv is called a {\em dictator}. Hence an important question to ask is the following: do there exist voting rules other than dictatorship that are strategy-proof with more than two candidates? Unfortunately classical results in social choice theory by Gibbard and Satterthwaite prove that there does not exist any such {\em onto} voting rule~\cite{gibbard1973manipulation,satterthwaite1975strategy} if we have more than two candidates. A voting rule is onto if every candidate is selected for some set of votes. We formally introduce all these fundamental aspects of voting including others in \Cref{chap:prelim}.

\section{Applications and Challenges of Voting Theory Relevant to Computer Science}

With rapid improvement of computational power, the theory of voting found its applications not only in decision making among human beings but also in aggregating opinions of computational agents. Indeed, voting is commonly used whenever any system with multiple autonomous agents wish to take a common decision. Common and classic applications of voting in multiagent systems in particular and artificial intelligence in general include collaborative filtering and recommender systems~\cite{PennockHG00}, planning among multiple automated agents~\cite{Ephrati}, metasearch engines~\cite{Dwork}, spam detection~\cite{Cohen}, computational biology~\cite{JacksonSA08}, winner determination in sports competitions~\cite{BetzlerBN14}, similarity search and classification of high dimensional data~\cite{Fagin}. The extensive use of voting by computational agents makes the study of computational aspects of voting extremely important. In many applications of voting in artificial intelligence, one often has a huge number of voters and candidates. For example, voting has been applied in the design of metasearch engines~\cite{Dwork} where the number of alternatives the agents are voting for is in the order of trillions if not more --- the alternatives are the set of web pages in this case. In a commercial recommender system, for example Amazon.com, we often have a few hundred millions of users and items. In these applications, we need computationally efficient algorithms for quickly finding a winner of an election with so many candidates and voters.

Moreover, the very fact that the agents now have computational power at their disposal allows them to easily strategize their behavior instead of acting truthfully. For example, the problem of manipulation is much more severe in the presence of computational agents since the voters can easily use their computational power to find a manipulative vote. The field known as {\em computational social choice theory (COMSOC)} studies various computational aspects in the context of voting. We refer the interested readers to a recent book on computational social choice theory for a more elaborate exposition~\cite{moulin2016handbook}. 

Other than practical applications, the theory of voting turns out to be crucial in proving various fundamental theoretical results in computer science. Indeed, a notion known as {\em noise stability} of majority voting has been beautifully utilized for proving a lower bound on the quality of any polynomial time computable approximate solution of the maximum cut problem in a undirected graph~\cite{Khot,MosselOO05}. We refer interested reader to~\cite{ODonnell} for a more detailed discussion.

\section{Thesis Overview}

In this thesis, we resolve computational questions of several fundamental problems arising in the context of voting theory. We provide a schematic diagram of the contribution of the thesis in \Cref{fig:thesis_overview}. We begin our study with the first step in any election system -- eliciting the preferences of the voters. We consider the problem of preference elicitation in the first part of the thesis and develop (often) optimal algorithms for learning the preferences of a set of voters. In the second part of the thesis, we develop (often) optimal sublinear time algorithms for finding the winner of an election. The problem of determining winner of an election is arguably the most fundamental problem that comes to our mind once the preferences of all the voters have been elicited. In the third part of the thesis, we exhibit complexity-theoretic results for various strategic aspects of voting.

\tikzstyle{textbox} = [rectangle, rounded corners, minimum width=3cm, minimum height=1cm,text centered, draw=black, thick]
\tikzstyle{arrow} = [thick,->,>=stealth]

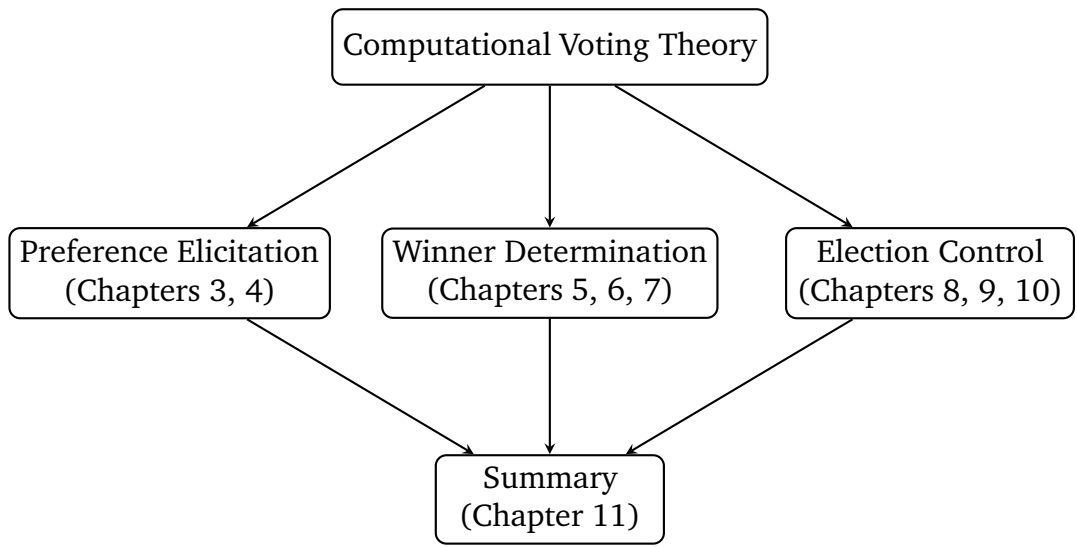
\begin{figure}
\centering
\begin{tikzpicture}[node distance=3cm]

\node (root) [textbox] {Computational Voting Theory};

\node (elicit) [textbox, below of=root, xshift=-5cm, align=center] {Preference Elicitation\\(Chapters 3, 4)};
\node (winner) [textbox, below of=root, align=center] {Winner Determination\\(Chapters 5, 6, 7)};
\node (control) [textbox, below of=root, xshift=5cm, align=center] {Election Control\\(Chapters 8, 9, 10)};

\node (sum) [textbox, below of=winner, align=center] {Summary\\(Chapter 11)};

\draw [arrow] (root) -- (elicit);
\draw [arrow] (root) -- (winner);
\draw [arrow] (root) -- (control);

\draw [arrow] (elicit) -- (sum);
\draw [arrow] (winner) -- (sum);
\draw [arrow] (control) -- (sum);

\end{tikzpicture}
\caption{Structural overview of this dissertation}\label{fig:thesis_overview}
\end{figure}

\subsection{Part I: Preference Elicitation}

The first step in any election system is to receive preferences of the voters as input. However, it may often be impractical to simply ask the voters for their preferences due to the presence of a huge number of alternatives. For example, asking users of Amazon.com to provide a complete ranking of all the items does not make sense. In such scenarios, one convenient way to know these agents' preferences is to ask the agents to compare (a manageable number of) pairs of items. However, due to the existence of a strong query complexity lower bound (from sorting), we can hope to reduce the burden of eliciting preference from the agents only by assuming additional structure on the preferences. Indeed, in several application scenarios commonly considered, it is rare that preferences are entirely arbitrary, demonstrating no patterns whatsoever. For example, the notion of \textit{single peaked preferences} forms the basis of several studies in the analytical political sciences. Intuitively, preferences of a set of agents is single peaked if the alternatives and the agents can be arranged in a linear order and the preferences ``respect'' this order. We defer the formal definition of single peaked profiles to \Cref{chap:pref_elicit_peak}.

We show in \Cref{chap:pref_elicit_peak,chap:pref_elicit_cross} that the query complexity for preference elicitation can be reduced substantially by assuming practically appealing restrictions on the domain of the preferences.

\subsubsection*{\Cref{chap:pref_elicit_peak}: Preference Elicitation for Single Peaked Preferences on Trees}
 Suppose we have $n$ agents and $m$ alternatives. Then it is known that if the preferences are single peaked, then elicitation can be done using $\Theta(mn)$ comparisons~\cite{Conitzer09}. The notion of single peaked preferences has been generalized to single peaked preferences in a tree --- intuitively, the preferences should be single peaked on every path of the tree. We show that if the single peaked tree has $\ell$ leaves, then elicitation can be done using $\Theta(mn\log\ell)$ comparisons~\cite{deypeak}. We also show that other natural parameters of the single peaked tree, for example, its diameter, path width, minimum degree, maximum degree do not decide the query complexity for preference elicitation.

\subsubsection*{\Cref{chap:pref_elicit_cross}: Preference Elicitation for Single Crossing Preferences}

Another well studied domain is the domain of single crossing profiles. Intuitively, a set of preferences is single crossing if the preferences can be arranged in a linear order such that all the preferences who prefer an alternative $x$ over another alternative $y$ are consecutive, for every $x$ and $y$. We consider in~\cite{deycross} two distinct scenarios: when an ordering of the voters with respect to which the profile is single crossing is {\em known} a priori versus when it is {\em unknown}. We also consider two different access models: when the votes can be accessed at random, as opposed to when they arise in any arbitrary sequence. In the sequential access model, we distinguish two cases when the ordering is known: the first is that the sequence in which the votes appear is also a single-crossing order, versus when it is not.
The main contribution of our work is to provide polynomial time algorithms with low query complexity for preference elicitation in all the above six cases. Further, we show that the query complexities of our algorithms are optimal up to constant factor for all but one of the above six cases.

\subsection{Part II: Winner Determination} 

Once we have elicited the preferences of the voters, the next important task in an election is to determine the winner of this election. We show interesting complexity theoretic results on determining the winner of an election under a variety of  application scenarios.

\subsubsection*{\Cref{chap:winner_prediction_mov}: Winner Prediction}

We begin by studying the following problem: Given an election, predict the winner of the election under some fixed voting rule by sampling as few preferences as possible. Our results show that the winner of an election can be determined (with high probability) by observing only a few preferences that are picked uniformly at random from the set of preferences. We show this result  for many common voting rules when the margin of victory is at least $\epsilon n$~\cite{DeyB15}, where $n$ is the number of voters and $\epsilon$ is a parameter. The margin of victory of an election is the smallest number of preferences that must be modified to change the winner of an election. We also establish optimality (in terms of the number of preferences our algorithms sample) of most of our algorithms by proving tight lower bounds on the sample complexity of this problem.

Another important aspect of an election is the robustness of the election outcome. A popular measure of robustness of an election outcome is its margin of victory. We develop efficient sampling based algorithms for estimating the margin of victory of a given election for many common voting rules~\cite{DeyN15}. We also show optimality of our algorithms for most of the cases under appealing practical scenarios.

\subsubsection*{\Cref{chap:winner_stream}: Winner Determination in Streaming}

An important, realistic model of data in big data research is the streaming model. In the setting that we consider, the algorithms are allowed only one pass over the data. We give the first optimal bounds for returning the $\ell_1$-heavy hitters in an insertion only data stream, together with their approximate frequencies, thus settling a long line of work on this problem. For a stream of $m$ items in $\{1, 2, \ldots, n\}$ and parameters $0 < \epsilon < \phi \leq 1$, let $f_i$ denote the frequency of item $i$, i.e., the number of times item $i$ occurs in the stream in~\cite{deystream}. With arbitrarily large constant probability, our algorithm returns all items $i$ for which $f_i \geq \phi m$, returns no items $j$ for which $f_j \leq (\phi -\epsilon)m$, and returns approximations $\tilde{f}_i$ with $|\tilde{f}_i - f_i| \leq \epsilon m$ for each item $i$ that it returns. Our algorithm uses $O(\epsilon^{-1} \log\phi^{-1} + \phi^{-1} \log n + \log \log m)$ bits of space, processes each stream update in $O(1)$ worst-case time, and can report its output in time linear in the output size. We also prove a lower bound, which implies that our algorithm is optimal up to a constant factor in its space complexity. A modification of our algorithm can be used to estimate the maximum frequency up to an additive $\epsilon m$ error in the above amount of space, resolving an open question on algorithms for data streams for the case of $\ell_1$-heavy hitters. We also introduce several variants of the heavy hitters and maximum frequency problems, inspired by rank aggregation and voting schemes, and show how our techniques can be applied in such settings. Unlike the traditional heavy hitters problem, some of these variants look at comparisons between items rather than numerical values to determine the frequency of an item.

\subsubsection*{\Cref{chap:kernel}: Kernelization Complexity for Determining Possible Winners}

Voters' preferences are often partial orders instead of complete orders. This is known as the incomplete information setting in computational social choice theory. In an incomplete information setting, an extension of the winner determination problem which has been studied extensively is the problem of determining possible winners. An alternative $x$ is a possible winner in a set of incomplete preferences if there exists a completion of these incomplete preferences where the alternative $x$ wins. Previous work has provided, for many
common voting rules, fixed parameter tractable algorithms 
for the {\em Possible winner\/} problem, with
number of candidates as the parameter. 
However, the corresponding kernelization question is still
open, and in fact, has been mentioned as a key research challenge~\cite{ninechallenges}. 
In this work, we settle this open question for many common voting rules.

We show in~\cite{DeyMN15,journalsDeyMN16} that the {\em Possible winner\/} problem for
maximin, Copeland, Bucklin, ranked pairs, and a class of scoring rules that
includes the Borda voting rule do not admit a polynomial kernel with the number
of candidates as the parameter. We show however that the {\em Coalitional manipulation\/}
problem which is an important special case of the {\em Possible winner\/} problem
does admit a polynomial kernel for maximin, Copeland, ranked pairs, and a class of
scoring rules that includes the Borda voting rule, when the number of
manipulators is polynomial in the number of candidates. 
A significant conclusion of this work is that the {\em Possible winner\/} problem 
is computationally harder than the {\em Coalitional manipulation\/} problem since the {\em Coalitional manipulation\/} problem admits a polynomial kernel whereas the {\em Possible winner\/} problem does not admit 
a polynomial kernel.

\subsection{Part III: Election Control} 

When agents in a multiagent system have conflicting goals, they can behave strategically to make the system produce an outcome that they favor. For example, in the context of voting, there can be instances where an agent, by misreporting her preference, can cause the election to result in an alternative which she prefers more than the alternative that would have been the outcome in case she had reported her preference truthfully.  This phenomenon is called {\em manipulation}. Certainly, we would like to design systems that are robust to such manipulation. Unfortunately, classical results in social choice theory establish that any reasonable voting rule will inevitably create such opportunities for manipulation. In this context, computer scientists provide some hope by showing that the computational problem of manipulation can often be intractable. Indeed, if a computationally bounded agent has to solve an intractable problem to manipulate the election system, then the agent would, in all likelihood, fail to manipulate.

\subsubsection*{\Cref{chap:partial}: Manipulation with Partial Information}

The computational problem of manipulation has classically been studied in a complete information setting -- the manipulators know the preference of every other voter. We extend this line of work to the more practical, incomplete information setting where agents have only a partial knowledge about the preferences of other voters~\cite{deypartial}.  In our framework,
the manipulators know a partial order for each voter that is consistent with
the true preference of that voter. 
We say that an extension of a partial order is
\textit{viable} if there exists a manipulative vote for that extension. We propose the following notions of manipulation when manipulators have incomplete information about the votes of other voters.

\begin{enumerate}
\item \textsc{Weak Manipulation}: the manipulators seek to
vote in a way that makes their preferred candidate win in \textit{at least one
extension} of the partial votes of the non-manipulators.
\item \textsc{Opportunistic Manipulation}: the manipulators seek to
vote in a way that makes their preferred candidate win\textit{ in every
viable extension} of the partial votes of the non-manipulators.
\item \textsc{Strong Manipulation}: the manipulators seek to
vote in a way that makes their preferred candidate win \textit{in every
extension} of the partial votes of the non-manipulators.
\end{enumerate}

We consider several scenarios for which the traditional manipulation problems are easy (for instance, the Borda voting rule with a single manipulator). For many of them, the corresponding manipulative questions that we propose turn out to be computationally intractable. Our hardness results often hold even when very little information is missing, or in other words, even when the instances are very close to the complete information setting. Our overall conclusion is that computational hardness continues to be a valid obstruction or deterrent to manipulation, in the context of the more realistic setting of incomplete information.

\subsubsection*{\Cref{chap:detection}: Manipulation Detection}

Even if the computational problem of manipulation is almost always intractable, there have been many instances of manipulation in real elections. In this work, we initiate the study of the computational problem of detecting possible instances of manipulation in an election~\cite{DeyMN15a}. We formulate two pertinent computational 
problems in the context of manipulation detection - Coalitional Possible Manipulators (CPM) and 
Coalitional Possible Manipulators given Winner (CPMW), where a 
suspect group of voters is provided as input and we have to determine whether 
they can form a potential coalition of manipulators.
In the absence of any suspect group, we formulate two more computational 
problems namely Coalitional Possible Manipulators Search (CPMS) and Coalitional Possible Manipulators Search 
given Winner (CPMSW). We provide polynomial time algorithms
for these problems, for several popular voting rules. For a few other voting rules,
we show that these problems are NP-complete. We observe that 
detecting possible instances of manipulation may be easy even when the actual manipulation problem is computationally intractable, as seen
for example, in the case of the Borda voting rule.

\subsubsection*{\Cref{chap:frugal_bribery}: Frugal Bribery}

Another fundamental problem in the context of social choice theory is bribery. Formally, the computational problem of bribery is as follows: given (i) a set of votes, (ii) a cost model for changing the votes, (iii) a budget, and (iv) a candidate $x$, is it possible for an external agent to bribe the voters to change their votes (subject to the budget constraint) so that $x$ wins the election? We introduce and study two important special cases of the classical \textsc{\$Bribery} problem, namely, \textsc{Frugal-bribery} and \textsc{Frugal-\$bribery} where the briber is frugal in nature in~\cite{frugalDeyMN16}. By this, we mean that the briber is 
only able to influence voters who benefit from the suggestion of the briber.
More formally, a voter is {\em vulnerable} if the outcome of the election  
improves according to her own preference when she accepts the suggestion of the 
briber. In the \textsc{Frugal-bribery} problem, the goal of the briber is to make a certain
candidate win the election by changing {\em only} the vulnerable votes. In the
\textsc{Frugal-\$bribery} problem, the vulnerable votes have prices and the
goal is to make a certain candidate win the election by changing only the 
vulnerable votes, subject to a budget constraint. We further
formulate two natural variants of the \textsc{Frugal-\$bribery} problem namely
\textsc{Uniform-frugal-\$bribery} and \textsc{Nonuniform-frugal-\$bribery}
where the prices of the vulnerable votes are, respectively, all same or different.

We observe that, even if we have only a small number of
candidates, the problems are intractable for all voting rules studied here for
weighted elections, with the sole exception of the \textsc{Frugal-bribery}
problem for the plurality voting rule. In contrast, we have polynomial time
algorithms for the \textsc{Frugal-bribery} problem for plurality, veto,
$k$-approval, $k$-veto, and plurality with runoff voting rules for unweighted
elections. However, the \textsc{Frugal-\$bribery} problem is intractable for
all the voting rules studied here barring the plurality and the veto voting
rules for unweighted elections.
These intractability results demonstrate that bribery is a hard computational 
problem, in the sense that several special cases of this problem continue to be
computationally intractable. This strengthens the view that bribery, although a possible attack on an election in principle, may be infeasible in practice.

We believe that the thesis work has attempted to break fresh ground by resolving the computational complexity of many long-standing canonical problems in computational social choice. We finally conclude in \Cref{chap:summary} with future directions of research.
\chapter{Background}
\label{chap:prelim}

\begin{quote}
{\small This chapter provides an overview of selected topics required to understand the technical content in this thesis.
We cover these topics, up to the requirement of following this thesis, in the following order: \nameref{back:voting}, \nameref{bck:cc}, \nameref{bck:prob}, \nameref{bck:inf}, \nameref{bck:tree}, \nameref{bck:hash}.} 
\end{quote}

We present the basic preliminaries in this chapter. We denote, for any natural number $\el\in \NB$, the set $\{1, 2, \ldots, \el\}$ by $[\el]$. We denote the set of permutations of $[\el]$ by $\SF_\el$. For a set $\XX$, we denote the set of subsets of \XX of size $k$ by $\PF_k(\XX)$ and the power set of \XX by $2^\XX$.

\section{Voting and Elections}\label{back:voting}

In this section, we introduce basic terminologies of voting.

\subsection{Basic Setting}

Let $\CC = \{c_1, c_2, \ldots, c_m\}$ be a set of candidates or alternatives and $\VV = \{v_1, v_2, \ldots, v_n\}$ a set of voters. If not mentioned otherwise, we denote the set of candidates by \CC, the set of voters by \VV, the number of candidates by $m$, and the number of voters by $n$. Every voter $v_i$ has a preference or vote $\suc_i$ which is a complete order over \CC. A complete order over any set \XX is a relation on \XX which is reflexive, transitive, anti-symmetric, and total. A relation \RR on \XX is a subset of $\XX\times\XX$. A relation \RR is called reflexive if $(x,x)\in\RR$ for every $x\in\XX$, transitive if $(x,y)\in\RR \text{ and } (y,z)\in\RR$ implies $(x,z)\in\RR$ for every $x,y,z\in\XX$, anti-symmetric is $(x,y)\in\RR \text{ and } (y,x)\in\RR$ implies $x=y$ for every $x,y\in\XX$, and total is either $(x,y)\in\RR$ or $(y,x)\in\RR$ for every $x,y\in\XX$. We denote the set of complete orders over \CC by $\LL(\CC)$. We call a tuple of $n$ preferences $(\suc_1, \suc_2, \cdots, \suc_n)\in\LL(\CC)^n$ an $n$-voter preference profile. Given a preference $\suc = c_1 \suc c_2 \suc \cdots \suc c_m \in\LL(\CC)$, we call the order $c_m \suc c_{m-1} \suc \cdots \suc c_2 \suc c_1$ the reverse order of \suc and denote it by $\overleftarrow{\suc}$. We say that a candidate $x\in\CC$ is at the $i^{th}$ position of a preference \suc if there exists exactly $i-1$ other candidates in \CC who are preferred over $x$ in \suc, that is $|\{y\in\CC\setminus\{x\}:y\suc x\}|=i-1$. It is often convenient to view a preference as a subset of $\CC\times\CC$ --- a preference \suc corresponds to the subset $\AA = \{(x, y)\in\CC\times\CC: x\suc y\}$. For a preference \suc and a subset $\AA\subseteq\CC$ of candidates, we define $\suc(\AA)$ be the preference \suc restricted to \AA, that is $\suc(\AA) = \suc \cap (\AA\times\AA)$.

A voting correspondence is a function $\tilde{r}: \cup_{n\in\NN} \LL(\CC)^n \longrightarrow 2^\CC\setminus\{\emptyset\}$ which selects, from a preference profile, a nonempty set of candidates as the winners. A tie breaking rule $\tau: 2^\CC\setminus\{\emptyset\} \longrightarrow \CC$ with $\tau(\AA)\in \AA$ for every $\AA\in2^\CC\setminus\{\emptyset\}$ selects one candidate from a nonempty set of (tied) candidates. A voting rule $r: \cup_{n\in\NN} \LL(\CC)^n \longrightarrow \CC$ selects one candidate as the winner from a preference profile. An important class of tie breaking rules is the class of {\it lexicographic} tie breaking rules where the ties are broken in a fixed complete order of candidates. More formally, given a complete order $\suc\in\LL(\CC)$, we can define a tie breaking rule $\tau_\suc$ as follows: for every nonempty subset $A\subseteq\CC$ of \CC, $\tau_\suc(A) = x$ such that $x\in A$ and $x\suc y$ for every $y\in A\setminus\{x\}$. A natural way for defining a voting rule is to compose a voting correspondence $\tilde{r}$ with a tie breaking rule $\tau$; that is $r = \tau\circ\tilde{r}$ is a voting rule. We call a preference profile over a set of candidates along with a voting rule an election. The winner (or winners) of an election is often called the outcome of the election.

\subsection{Axioms of Voting Rules}

Let us now look at important axiomatic properties of voting rules that one may wish to satisfy.

\begin{itemize}
 \item {\it Onto:} A voting rule $r$ is called onto if every candidate wins for at least one voting profile. More formally, a voting rule $r$ is called onto if for every $x\in\CC$ there exists a $(\suc_1, \suc_2, \ldots, \suc_n)\in\LL(\CC)^n$ such that $r(\suc_1, \suc_2, \ldots, \suc_n)=x$.
 
 \item {\it Anonymity:} A voting rule is anonymous if the ``names'' of the voters does not affect the outcome. More formally, a voting rule $r$ is said to satisfy anonymity if for every $(\suc_1, \suc_2, \ldots, \suc_n)\in\LL(\CC)^n$, we have $r(\suc_1, \suc_2, \ldots, \suc_n) = r(\suc_{\sigma(1)}, \suc_{\sigma(2)}, \ldots, \suc_{\sigma(n)})$ for every permutation $\sigma\in\SF_n$.
 
 \item {\it Neutrality:} A voting rule is called neutral if the ``names'' of the candidates are immaterial for determining election outcome. That is, a voting rule $r$ is called neutral if for every $(\suc_1, \suc_2, \ldots, \suc_n)\in\LL(\CC)^n$, we have $r(\suc_1, \suc_2, \ldots, \suc_n) = \sigma(r(\sigma(\suc_1), \sigma(\suc_2), \ldots, \sigma(\suc_n)))$ for every permutation $\sigma\in\SF_m$. Given a preference $\suc = c_1 \suc c_2 \suc \cdots \suc c_m$ and a permutation $\sigma\in\SF_m$, we denote the preference $c_{\sigma(1)} \suc c_{\sigma(2)} \suc \cdots \suc c_{\sigma(m)}$ by $\sigma(\suc)$.
 
 \item {\it Homogeneity:} A voting rule is called homogeneous if the outcome of the election solely depends on the fraction of times (of the number of voters $n$) every complete order $\suc\in\LL(\CC)$ appears in the preference profile.
 
 \item {\it Dictatorship:} A voting rule $r$ is called a dictatorship if there exists an integer $i\in[n]$ such that $r(\suc_1, \suc_2, \ldots, \suc_n) = \suc_i\hspace{-.7ex}(1)$ for every $(\suc_1, \suc_2, \ldots, \suc_n)\in\LL(\CC)^n$, where $\suc_i\hspace{-.7ex}(1)$ the candidate placed at the first position of the preference $\suc_i$.
 
 \item {\it Manipulability:} A voting rule $r$ is called manipulable if there exists a preference profile $(\suc_1, \suc_2, \ldots, \suc_n)\in\LL(\CC)^n$ and a preference $\suc\in\LL(\CC)$ such that the following holds.
 
 \[r(\suc_1, \suc_2, \ldots, \suc_{i-1},\suc,\suc_{i+1}, \ldots, \suc_n) \suc_i r (\suc_1, \suc_2, \ldots, \suc_n)\]
 
 That it, voter $v_i$ prefers the election outcome if she reports $\suc$ to be her preference than the election outcome if she reports her true preference $\suc_i$.
 
 \item {\it Condorcet consistency:} A candidate is called the Condorcet winner of an election if it defeats every other candidate in pairwise election. More formally, given an election \EE, let us define $N_\EE(x,y) = |\{ i: x \suc_i y \}|$ for any two candidates $x, y\in\CC$. A candidate $c$ is called the Condorcet winner of an election if $N_\EE(c,x) > \nfrac{n}{2}$ for every candidate $x\in\CC\setminus\{c\}$ other than $c$. A voting rule is called Condorcet consistent is it selects the Condorcet winner as the outcome of the election whenever such a candidate exists.
 
 \item {\it Weak Condorcet consistency: } A candidate is called a weak Condorcet winner of an election if it does not lose to any other candidate in pairwise election. More formally, given an election \EE, let us define $N_\EE(x,y) = |\{ i: x \suc_i y \}|$ for any two candidates $x, y\in\CC$. A candidate $c$ is called the Condorcet winner of an election if $N_\EE(c,x) \ge \nfrac{n}{2}$ for every candidate $x\in\CC\setminus\{c\}$ other than $c$. A voting rule is called weak Condorcet consistent is it selects a weak Condorcet winner as the outcome of the election whenever one such candidate exists.
\end{itemize}

Given an election $\EE = (\suc, \CC)$ and two candidates $x, y\in\CC$, we say that the candidate $x$ defeats the candidate $y$ in {\em pairwise election} if $N_\EE(x,y) > N_\EE(y,x)$.

\subsection{Majority Graph}

Given an election $\EE = (\suc = (\suc_1, \suc_2, \ldots, \suc_n), \CC)$, we can construct a weighted directed graph $\GG_\EE = (U = \CC, E)$ as follows. The vertex set of the graph $\GG_\EE$ is the set of candidates $\CC$. For any two candidates $x, y\in\CC$ with $x\ne y$, let us define the margin $\DD_\EE(x, y)$ of $x$ from $y$ to be $N_\EE(x,y) - N_\EE(y,x)$. We have an edge from $x$ to $y$ in $\GG_\EE$ if $\DD_\EE(x,y)>0$. Moreover, in that case, the weight $w(x,y)$ of the edge from $x$ to $y$ is $\DD_\EE(x,y)$. Observe that, a candidate $c$ is the Condorcet winner of an election \EE if and only if there is an edge from $c$ to every other vertices in the weighted majority graph $\GG_\EE$.

\subsection{Condorcet Paradox}

Given an election $\EE = (\suc, \CC)$, there may exist candidates $c_1, c_2, \ldots, c_k \in \CC$ for some integer $k\ge 3$ such that the candidate $c_i$ defeats $c_{i+1\mod k}$ in pairwise election for every $i\in[k]$. This phenomenon was discovered by Marie Jean Antoine Nicolas de Caritat, marquis de Condorcet and is called Condorcet paradox or Condorcet cycles. \Cref{fig:cond_paradox} shows an example of a Condorcet paradox where $a$ defeats $b$, $b$ defeats $c$, and $c$ defeats $a$.

\begin{figure}[!htbp]
\begin{center}
\begin{tikzpicture}
 \node[draw,align=left] at (0,0) {$a$ \suc $b$ \suc $c$ \\ $b$ \suc $c$ \suc $a$ \\ $c$ \suc $a$ \suc $b$};
\end{tikzpicture}
\end{center}
\caption{Condorcet paradox.}\label{fig:cond_paradox}
\end{figure}

\subsection{Incomplete Information Setting}

A more general setting is an {\it election} where the votes are only 
\emph{partial orders} over candidates. A \emph{partial order} is a relation that is \emph{reflexive, 
antisymmetric}, and \emph{transitive}. A partial vote can be extended to possibly more than one linear votes depending on how we fix the order for the unspecified pairs of candidates. For example, in an election with the set of candidates $\mathcal{C} = \{a, b, c\}$, 
a valid partial vote can be $a \succ b$. This partial vote can be extended to three linear votes namely, $a \succ b \succ c$, $a \succ c \succ b$, $c \succ a \succ b$. However, the voting rules always take as input a set of votes that are complete orders over the candidates.

\subsection{Voting Rules}

Examples of common voting correspondences are as follows. We can use any tie breaking rule, for example, any lexicographic tie breaking rule, with these voting correspondences to get a voting rule.

\begin{itemize}
 \item {\bf Positional scoring rules:} A collection of $m$-dimensional vectors $\overrightarrow{s_m}=\left(\alpha_1,\alpha_2,\dots,\alpha_m\right)\in\mathbb{R}^m$ 
 with $\alpha_1\ge\alpha_2\ge\dots\ge\alpha_m$ and $\alpha_1>\alpha_m$ for every $m\in \mathbb{N}$ naturally defines a voting rule --- a candidate gets score $\alpha_i$ from a vote if it is placed at the $i^{th}$ position, and the  score of a candidate is the sum of the scores it receives from all the votes. 
 The winners are the candidates with maximum score. Scoring rules remain unchanged if we multiply every $\alpha_i$ by any constant $\lambda>0$ and/or add any constant $\mu$. Hence, we assume without loss of generality that for any score vector $\overrightarrow{s_m}$, there exists a $j$ such that $\alpha_j - \alpha_{j+1}=1$ and $\alpha_k = 0$ for all $k>j$. We call such a $\overrightarrow{s_m}$ a normalized score vector. A scoring rule is called {\em strict} if $\alpha_1 > \alpha_2 > \cdots > \alpha_m$ for every natural number $m$. If $\alpha_i$ is $1$ for $i\in [k]$ and $0$ otherwise, then we get the $k$-approval voting rule. For the $k$-veto voting rule, $\alpha_i$ is $0$ for $i\in [m-k]$ and $-1$ otherwise. $1$-approval is called the plurality voting rule and $1$-veto is called the veto voting rule. For the Borda voting rule, we have $\alpha_i=m-i$ for every $i\in[m]$.
 
 \item {\bf Bucklin and simplified Bucklin:} Let \el be the minimum integer such that at least one candidate gets majority within top \el positions of the votes. The winners under the simplified Bucklin voting rule are the candidates having more than $\nfrac{n}{2}$ votes within top \el positions. The winners under the Bucklin voting rule are the candidates appearing within top \el positions of the votes highest number of times. However, for brevity, other than \Cref{chap:partial}, we use Bucklin to mean simplified Bucklin only.
 
 \item {\bf Fallback and simplified Fallback:} For these voting rules, each voter $v$ ranks a subset $\XX_v\subset\CC$ of candidates and disapproves the rest of the candidates~\cite{brams2009voting}. Now for the Fallback and simplified Fallback voting rules, we apply the Bucklin and simplified Bucklin voting rules respectively to define winners. If there is no integer \el for which at least one candidate gets more than $\nfrac{n}{2}$ votes, both the Fallback and simplified Fallback voting rules output the candidates with most approvals as winners. We assume, for simplicity, that the number of candidates each partial vote approves is known.
 
 \item {\bf Maximin:} The maximin score of a candidate $x$ in an election $E$ is $\min_{y\ne x} D_E(x,y)$. The winners are the candidates with maximum maximin score.
 
 \item {\bf Copeland$^{\alpha}$:} Given $\alpha\in[0,1]$, the Copeland$^{\alpha}$ score of a candidate $x$ is $|\{y\ne x:D_E(x,y)>0\}|+\alpha|\{y\ne x:D_E(x,y)=0\}|$. The winners are the candidates with maximum Copeland$^{\alpha}$ score. If not mentioned otherwise, we will assume $\alpha$ to be zero.
 
 \item {\bf Ranked pairs:} Given an election $E$, we pick a pair $(c_i, c_j) \in \CC \times \CC$ such that $D_E(c_i,c_j)$ is maximum. We fix the ordering between $c_i$ and $c_j$ to be 
 $c_i \succ c_j$ unless it contradicts previously fixed orders. We continue this process 
 until all pairwise elections are considered. At this point, we have a complete order $\succ$ over the candidates. Now the top candidate of $\succ$ is chosen as the winner.
 
 \item \textbf{Plurality with runoff:} The top two candidates according to
 plurality score are selected first. The pairwise winner of these two
 candidates is selected as the winner of the election. This rule is
 often called the {\em runoff} voting rule.
 
 \item {\bf Single transferable vote:} In single transferable vote (STV), 
 a candidate with least plurality score is dropped out of the election and its votes 
 are transferred to the next preferred candidate. If two or more candidates receive least plurality score, then some predetermined tie breaking rule is used. The candidate that remains after $(m-1)$ rounds is the winner. The single transferable vote is also called the instant runoff vote.
\end{itemize}

We use the parallel-universes tie breaking~\cite{conitzer2009preference,brill2012price} 
to define the winning candidate for
the ranked pairs voting rule. In this setting, 
a candidate $c$ is a winner if and only if there exists a way to 
break ties in all of the steps such that $c$ is the winner.

\section{Computational Complexity}\label{bck:cc}

\subsection{Parameterized Complexity}

Preprocessing, as a strategy for coping with hard problems, is 
universally applied in practice. The main goal here is \emph{instance compression} - the objective is to output a smaller instance while maintaining equivalence. In the classical setting, \NPH problems are unlikely to have efficient compression algorithms (since repeated application would lead to an efficient solution for the entire problem, which is unexpected). However, the breakthrough notion of \emph{kernelization} in parameterized complexity provides a mathematical framework for analyzing the quality of preprocessing strategies. In parameterized 
complexity, each problem instance comes 
with a parameter $k$, and the central notion is \emph{fixed parameter 
tractability} (FPT) which means, for a 
given instance $(x,k)$, solvability in time $f(k) \cdot p(|x|)$, 
where $f$ is an arbitrary function of $k$ and 
$p$ is a polynomial in the input size $|x|$. 
We use the notation $O^*(f(k))$ to denote $O(f(k)poly(|x|))$.
A parameterized problem $\Pi$ is a 
subset of $\Gamma^{*}\times
\mathbb{N}$, where $\Gamma$ is a finite alphabet. An instance of a
parameterized problem is a tuple $(x,k)$, where $k$ is the
parameter. We refer the reader to the books~\cite{guo2007invitation,downey1999parameterized,flum2006parameterized} for a detailed introduction to this paradigm, and below we state only the definitions that are relevant to our work.

A {\em kernelization} algorithm is a set of preprocessing rules that runs in 
polynomial time and reduces the instance size with a guarantee on the output instance size. This notion is 
formalized below.
\begin{definition}{\rm \bf[Kernelization]}~\cite{niedermeier2006invitation,flum2006parameterized}
A kernelization algorithm for a parameterized problem   $\Pi\subseteq \Gamma^{*}\times \mathbb{N}$ is an 
algorithm that, given $(x,k)\in \Gamma^{*}\times \mathbb{N} $, outputs, in time polynomial in $|x|+k$, a pair 
$(x',k')\in \Gamma^{*}\times
  \mathbb{N}$ such that (a) $(x,k)\in \Pi$ if and only if
  $(x',k')\in \Pi$ and (b) $|x'|,k'\leq g(k)$, where $g$ is some
  computable function.  The output instance $x'$ is called the
  kernel, and the function $g$ is referred to as the size of the
  kernel. If $g(k)=k^{O(1)}$, then we say that
  $\Pi$ admits a polynomial kernel.
\end{definition}

For many parameterized problems, it is well established that the existence of a polynomial kernel would 
imply the collapse of the polynomial hierarchy to the third level (or more precisely, \caveat{}). 
Therefore, it is considered unlikely that these problems would admit polynomial-sized kernels. 
For showing kernel lower bounds, we simply establish reductions from these problems. 

\begin{definition}{\rm \bf[Polynomial Parameter Transformation]}
\label{def:ppt-reduction} {\rm\cite{BodlaenderThomasseYeo2009}}
Let $\Gamma_1$ and $\Gamma_2$ be parameterized problems. We say that $\Gamma_1$ is
polynomial time and parameter reducible to $\Gamma_2$, written
$\Gamma_1\le_{Ptp} \Gamma_2$, if there exists a polynomial time computable
function $f:\Sigma^{*}\times\mathbb{N}\to\Sigma^{*}\times\mathbb{N}$, and a
polynomial $p:\mathbb{N}\to\mathbb{N}$, and for all
$x\in\Sigma^{*}$ and $k\in\mathbb{N}$, if
$f\left(\left(x,k\right)\right)=\left(x',k'\right)$, then
$\left(x,k\right)\in \Gamma_1$ if and only if $\left(x',k'\right)\in \Gamma_2$,
and $k'\le p\left(k\right)$. We call $f$ a polynomial parameter
transformation (or a PPT) from $\Gamma_1$ to $\Gamma_2$.
\end{definition}

This notion of a reduction is useful in showing kernel lower bounds
because of the following theorem.

\begin{theorem}\label{thm:ppt-reduction}~{\rm\cite[Theorem 3]{BodlaenderThomasseYeo2009}}
  Let $P$ and $Q$ be parameterized problems whose derived
  classical problems are $P^{c},Q^{c}$, respectively. Let $P^{c}$
  be $\NPCb{}$, and $Q^{c}\in$ \NPshort{}. Suppose there exists a PPT from
  $P$ to $Q$.  Then, if $Q$ has a polynomial kernel, then $P$ also
  has a polynomial kernel. 
\end{theorem}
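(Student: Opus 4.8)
The plan is to construct a kernelization algorithm for $P$ by chaining together three ingredients: the given PPT $f$ from $P$ to $Q$, the hypothesized polynomial kernelization of $Q$, and a classical reduction that lets us climb back from $Q$ to $P$. First I would take an arbitrary instance $(x,k)$ of $P$ and apply $f$ to obtain an instance $(x_1,k_1)$ of $Q$ with $(x,k)\in P \Leftrightarrow (x_1,k_1)\in Q$ and $k_1 \le p(k)$. Feeding $(x_1,k_1)$ into the assumed polynomial kernelization of $Q$ yields an equivalent instance $(x_2,k_2)$ of $Q$ with $|x_2| + k_2 \le g(k_1) \le g(p(k)) = k^{\BigO(1)}$. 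At this point I have, in polynomial time, produced an instance of $Q$ whose total size is polynomial in $k$ and which is a yes-instance exactly when $(x,k)$ is. The only defect is that it is an instance of the wrong problem: a kernel for $P$ must output an instance of $P$.

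To repair this I would invoke the classical-complexity hypotheses. Since $P^{c}$ is \NPH and $Q^{c}\in\NPshort$, there is a polynomial-time many-one reduction $h$ from $Q^{c}$ to $P^{c}$. Encode $(x_2,k_2)$ as the string $w = x_2 \# 1^{k_2}$ (the standard derived classical instance, with the parameter in unary so that $|w| = \BigO(|x_2|+k_2) = k^{\BigO(1)}$), so that $w\in Q^{c} \Leftrightarrow (x_2,k_2)\in Q$. Applying $h$ to $w$ runs in time polynomial in $|w|$ and outputs a string $z$ with $|z| = |w|^{\BigO(1)} = k^{\BigO(1)}$ and $z\in P^{c} \Leftrightarrow w\in Q^{c}$. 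Chaining all the equivalences gives $z\in P^{c} \Leftrightarrow (x,k)\in P$, with $z$ of length polynomial in $k$.

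Finally I would turn $z$ back into a parameterized instance of $P$. Test in polynomial time whether $z$ is a syntactically valid derived instance, i.e.\ of the form $x_3 \# 1^{k_3}$. If it is, output $(x_3,k_3)$; by definition of $P^{c}$ we then have $(x_3,k_3)\in P \Leftrightarrow z\in P^{c} \Leftrightarrow (x,k)\in P$, and the parameter satisfies $k_3 \le |z| = k^{\BigO(1)}$, so $|x_3|+k_3$ is polynomial in $k$. If $z$ is not a valid encoding then $z\notin P^{c}$, hence $(x,k)$ is a no-instance, and I output a fixed constant-size no-instance of $P$. In either case the output is an instance of $P$, equivalent to $(x,k)$, of size polynomial in $k$, so $P$ admits a polynomial kernel.

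The step I expect to be the crux is the return trip from $Q$ to $P$: a PPT runs only from $P$ to $Q$, so there is no direct way to convert the small $Q$-instance into a $P$-instance. This is precisely what the \NPCb{}/\NPshort{} hypotheses buy us, supplying the classical reduction $h\colon Q^{c}\to P^{c}$; but one must be careful that (i) encoding the parameter in unary keeps the derived classical instance polynomially bounded, so that $h$'s output stays polynomial in $k$, and (ii) the parameter of the recovered $P$-instance is controlled by $|z|$ and hence polynomial in $k$, with malformed outputs of $h$ handled by falling back to a trivial no-instance.
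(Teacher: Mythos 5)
Your proof is correct. Note that the paper does not actually prove this statement -- it imports it verbatim as Theorem~3 of Bodlaender, Thomass\'e, and Yeo -- so there is no in-paper argument to compare against; your chain (PPT, then the kernel for $Q$, then a classical \NPshort{}-hardness reduction from $Q^{c}$ back to $P^{c}$, with the unary parameter encoding and the fallback to a trivial no-instance for malformed outputs) is exactly the standard proof of that cited result, and you correctly identify the return trip from $Q$ to $P$ as the step that consumes the \NPCb{}/\NPshort{} hypotheses.
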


\subsection{Approximation Factor of an Algorithm}

For a minimization problem $\mathcal{P}$, we say an algorithm $\mathcal{A}$ archives an approximation factor of $\alpha$ if $\mathcal{A}(\mathcal{I}) \le \alpha OPT(\mathcal{I})$ for every problem instance $\mathcal{I}$ of $\mathcal{P}$. In the above, $OPT(\mathcal{I})$ denotes the value of the optimal solution of the problem instance $\mathcal{I}$.

\subsection{Communication Complexity}

Communication complexity of a function measures the number of bits that need to be exchanged between two players to compute a function whose input is split among those two players~\cite{yao1979some}. In a more restrictive {\it one-way communication model}, Alice, the first player, sends only one message to Bob, the second player, and Bob outputs the result. A protocol is a method that the players follow to compute certain functions of their input. Also the protocols can be randomized; in that case, the protocol needs to output correctly with probability at least $1-\delta$, for $\delta\in(0,1)$ (the probability is taken over the random coin tosses of the protocol). The randomized one-way communication complexity of a function $f$ with error probability $\delta$ is denoted by $\mathcal{R}_\delta^{\text{1-way}}(f)$. \cite{Kushilevitz} is a standard reference for communication complexity.

\section{Probability Theory}\label{bck:prob}

\subsection{Statistical Distance Measures}

Given a finite set $X$, a distribution $\mu$ on $X$ is defined as a function $\mu : X \longrightarrow [0,1]$, such that $\sum_{x\in X} \mu(x) = 1$. 
The finite set $X$ is called the base set of the distribution
$\mu$. We use the following distance measures among distributions in
our work.

\begin{definition}
 The {\em KL divergence}~\cite{kullback1951information} and the {\em Jensen-Shannon divergence}~\cite{lin1991divergence} between two distributions $\mu_1$ and $\mu_2$ on $X$ are defined as follows.
 \[ D_{KL}(\mu_1 || \mu_2) = \sum_{x\in X} \mu_1(x) \ln \frac{\mu_1(x)}{\mu_2(x)} \]
 \[ JS(\mu_1, \mu_2) = \frac{1}{2} \left( D_{KL}\left(\mu_1 || \frac{\mu_1 + \mu_2}{2} \right) + D_{KL}\left(\mu_2 || \frac{\mu_1 + \mu_2}{2} \right) \right) \]
\end{definition}

The Jensen-Shannon divergence has subsequently been generalized to measure the mutual distance among more than two distributions as follows.

\begin{definition}
 Given $n$ distributions $\mu_1, \ldots, \mu_n$ over the same base
 set, the {\em generalized Jensen-Shannon divergence}\footnote{The
   generalized Jensen-Shannon divergence is often formulated with
   weights on each of the $n$ distributions. The definition here puts equal
 weight on each distribution and is sufficient for our purposes.}
among them is:
 \[ JS (\mu_1, \ldots, \mu_n) = \frac{1}{n} \sum_{i=1}^n D_{KL}\left(\mu_i || \frac{1}{n}\sum_{j=1}^n \mu_j\right) \]
\end{definition}

\subsection{Concentration Inequalities of Probability}

Suppose we have $n$ events namely $A_1, A_2, \ldots, A_n$ in a probability space. Then the probability of occurring any of these $n$ events can be bounded by what is known as union bound.

\begin{theorem}(Union Bound)\\
 The probability of happening any of a given $n$ events $A_1, A_2, \ldots, A_n$ can be upper bounded as follows.
 \[ \Pr[\cup_{i=1}^n A_i] \le \sum_{i=1}^n \Pr[A_i] \]
\end{theorem}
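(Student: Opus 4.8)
The plan is to reduce the given union, whose events may overlap, to a union of pairwise disjoint events, on which the probability measure is additive. Concretely, I would ``disjointify'' the family by setting $B_1 = A_1$ and, for $2 \le i \le n$,
\[ B_i = A_i \setminus \left( A_1 \cup A_2 \cup \cdots \cup A_{i-1} \right). \]
The first step is to verify the two structural facts that make this construction work: (i) the $B_i$ are pairwise disjoint, since for $i < j$ every point of $B_j$ lies by definition outside $A_i \supseteq B_i$; and (ii) $\bigcup_{i=1}^n B_i = \bigcup_{i=1}^n A_i$, where the inclusion $\subseteq$ is immediate from $B_i \subseteq A_i$, and the reverse inclusion follows because any point of $\bigcup_i A_i$ belongs to some smallest-index $A_i$ and hence to the corresponding $B_i$.

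Next I would invoke the basic properties of a probability measure. Finite additivity on the disjoint family $\{B_i\}$ gives $\Pr[\bigcup_{i=1}^n B_i] = \sum_{i=1}^n \Pr[B_i]$, while monotonicity together with $B_i \subseteq A_i$ gives $\Pr[B_i] \le \Pr[A_i]$ for each $i$. Combining these with the set equality from the first step,
\[ \Pr\left[ \bigcup_{i=1}^n A_i \right] = \Pr\left[ \bigcup_{i=1}^n B_i \right] = \sum_{i=1}^n \Pr[B_i] \le \sum_{i=1}^n \Pr[A_i], \]
which is exactly the claimed bound.

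An alternative route is induction on $n$: the base case $n = 1$ holds with equality, and the inductive step applies the two-set identity $\Pr[A \cup B] = \Pr[A] + \Pr[B] - \Pr[A \cap B] \le \Pr[A] + \Pr[B]$ with $A = \bigcup_{i=1}^{n-1} A_i$ and $B = A_n$. In either approach the only genuinely substantive ingredient is additivity of the measure on disjoint sets; everything else is bookkeeping. Accordingly, the ``hard part'' is really just the disjointification in the first step, namely checking that the carved-out events $B_i$ are disjoint while still preserving the union. Once that is established, the inequality drops out immediately from monotonicity and finite additivity.
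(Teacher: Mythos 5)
Your proof is correct and complete: the disjointification $B_1 = A_1$, $B_i = A_i \setminus (A_1 \cup \cdots \cup A_{i-1})$ is the standard argument, and both the disjointness check and the preservation of the union are verified properly before invoking finite additivity and monotonicity. The paper states the union bound as a background fact without giving any proof, so there is nothing to compare against; either of your two routes (disjointification or induction via $\Pr[A \cup B] \le \Pr[A] + \Pr[B]$) would serve as a valid proof here.
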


Given a probability space $(\Omega, \FF, \mu)$, a random variable $X: \Omega \longrightarrow \RB$ is a function from the sample space $\Omega$ to the set of real numbers $\RB$ such that $\{\omega\in\Omega: X(\omega) \le c\} \in \FF$ for every $c\in\RB$. We refer to \cite{durrett2010probability} for elementary notions of probability theory.

Given a positive random variable $X$ with finite mean $\EB[X]$, the classical Markov inequality bounds the probability by which the random variable $X$ deviates from its mean.

\begin{theorem}(Markov Inequality)\\
 For a positive random variable $X$ with mean $\EB[X]$, the probability that the random variable $X$ takes value more than $c\EB[X]$ can be upper bounded as follows.
 \[ \Pr[X \ge c\EB[X]] \le \frac{1}{c} \]
\end{theorem}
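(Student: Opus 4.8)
The plan is to derive Markov's inequality directly from the definition of expectation, using only the hypothesis that $X$ is non-negative. The guiding idea is that the mean $\EB[X]$ already ``pays'' for the event that $X$ is large: every outcome on which $X$ exceeds a threshold $a$ contributes at least $a$ to the expectation, so the probability of that event cannot be large without forcing $\EB[X]$ to be large in turn.

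First I would fix the threshold $a = c\EB[X]$ and introduce the indicator random variable $\mathbf{1}_{\{X \ge a\}}$, which equals $1$ on the event $\{X \ge a\}$ and $0$ elsewhere. The heart of the proof is the pointwise inequality $a\,\mathbf{1}_{\{X \ge a\}} \le X$, valid at every sample point $\omega\in\Omega$: when $X(\omega) \ge a$ the left-hand side is $a \le X(\omega)$, and when $X(\omega) < a$ the left-hand side is $0 \le X(\omega)$ because $X$ is non-negative. This single step is where non-negativity is used, and it is essentially the entire content of the argument.

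Next I would take expectations on both sides. By monotonicity of expectation together with $\EB[\mathbf{1}_{\{X \ge a\}}] = \Pr[X \ge a]$, this gives $a\,\Pr[X \ge a] \le \EB[X]$. Substituting $a = c\EB[X]$ and dividing by $c\EB[X]$ yields $\Pr[X \ge c\EB[X]] \le 1/c$, as required. For a discrete $X$ one may instead argue directly from $\EB[X] = \sum_x x\,\Pr[X = x] \ge \sum_{x \ge a} a\,\Pr[X = x] = a\,\Pr[X \ge a]$, where the first inequality drops the non-negative terms below $a$ and the second replaces each surviving $x$ by the smaller value $a$; this is merely the same computation unpacked.

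There is no genuine obstacle: the statement is elementary, and once the pointwise bound is in place the proof is two lines. The only points needing a word of care are the degenerate cases. The division step tacitly assumes $\EB[X] > 0$; if $\EB[X] = 0$ then $X = 0$ almost surely and the conclusion holds trivially in the meaningful regime $c \ge 1$. One should also note that ``monotonicity of expectation'' is being invoked when passing from the pointwise inequality to the inequality of means, which is immediate from the definition of $\EB[\cdot]$ as an integral over $\Omega$.
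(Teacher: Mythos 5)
Your proof is correct and is the standard argument for Markov's inequality: the pointwise bound $a\,\mathbf{1}_{\{X \ge a\}} \le X$ followed by monotonicity of expectation, with the degenerate case $\EB[X]=0$ properly noted. The paper states this result only as a background fact without proof, so there is nothing to compare against; your argument is exactly the canonical one and needs no changes.
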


If the variance of a random variable $X$ is known, the Chebyshev inequality provides a sharper concentration bound for a random variable.

\begin{theorem}(Chebyshev Inequality)\\
 For a random variable with mean $\EB[X]$ and variance $\sigma^2$, the probability that the random variable $X$ takes value more than $c\sigma$ away from its mean $\EB[X]$ is upper bounded as follows.
 \[ \Pr[|X-\EB[X]| \ge c\sigma] \le \frac{1}{c^2} \]
\end{theorem}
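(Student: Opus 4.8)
The plan is to derive Chebyshev's inequality as a direct consequence of the Markov inequality stated immediately above, by applying it to a carefully chosen nonnegative random variable. The key idea is that squaring turns the two-sided deviation event $|X - \EB[X]| \ge c\sigma$ into a one-sided tail event of the kind that the Markov inequality is designed to control.

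First I would introduce the auxiliary random variable $Y = (X - \EB[X])^2$. Being a square, $Y$ is nonnegative, so the Markov inequality applies to it. Furthermore, by the very definition of variance, its mean is $\EB[Y] = \EB[(X - \EB[X])^2] = \sigma^2$, so $Y$ has a known and finite expectation.

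Next I would rewrite the target event. Since both sides of $|X - \EB[X]| \ge c\sigma$ are nonnegative, squaring is order-preserving and this event coincides exactly with $\{Y \ge c^2\sigma^2\}$. Expressing the threshold as $c^2\sigma^2 = c^2\,\EB[Y]$ and invoking the Markov inequality with the constant $c^2$ playing the role of $c$ then gives
\[ \Pr[|X - \EB[X]| \ge c\sigma] = \Pr[Y \ge c^2\,\EB[Y]] \le \frac{1}{c^2}, \]
which is precisely the claimed bound.

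The argument is short and presents no genuine obstacle. The only points requiring mild care are the event equivalence under squaring, which hinges on both sides being nonnegative, and the implicit assumption $\sigma > 0$ so that the scaled deviation $c\sigma$ is meaningful; in the degenerate case $\sigma = 0$ the variable $X$ equals $\EB[X]$ almost surely and the inequality holds trivially for every $c > 0$.
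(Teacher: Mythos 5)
Your proof is correct and is the standard derivation of Chebyshev's inequality from Markov's inequality applied to the nonnegative variable $(X-\EB[X])^2$; the paper states this result as background without proof, and your argument is exactly the canonical one it implicitly relies on. One small caveat: your closing remark about the degenerate case is backwards --- when $\sigma = 0$ the event $|X-\EB[X]|\ge c\sigma$ is the sure event $|X-\EB[X]|\ge 0$, so the stated bound actually \emph{fails} for $c>1$ rather than holding trivially, which is why the inequality should be read as asserted only for $\sigma>0$.
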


For a sum of independent bounded random variables $X$, the Chernoff bound gives a much tighter concentration around the mean of $X$.

\begin{theorem}(Chernoff Bound)\\\label{thm:chernoff}
Let $X_1, \dots, X_\ell$ be a sequence of $\ell$ independent
random variables in $[0,1]$ (not necessarily identical). Let $S = \sum_i X_i$ and
let $\mu = \E{S}$. Then, for any $0 \leq \delta \leq 1$:
$$\Pr[|S - \mu| \geq \delta \mu] < 2\exp(-\delta^2\mu/3)$$
\end{theorem}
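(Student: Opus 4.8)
The plan is to prove the two-sided bound by establishing one-sided tail estimates for $S$ above and below its mean separately, and then combining them via the Union Bound stated earlier. Throughout I write $p_i = \E{X_i}$, so that $\sum_i p_i = \mu$. The main tool is the exponential moment (Chernoff) method, which converts a tail probability into a question about the moment generating function of $S$.

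First I would handle the upper tail $\Pr[S \ge (1+\delta)\mu]$. For any $t > 0$ the event $\{S \ge (1+\delta)\mu\}$ coincides with $\{e^{tS} \ge e^{t(1+\delta)\mu}\}$, so applying the Markov Inequality to the positive random variable $e^{tS}$ gives $\Pr[S \ge (1+\delta)\mu] \le \E{e^{tS}}/e^{t(1+\delta)\mu}$. Since the $X_i$ are independent, the moment generating function factorizes, $\E{e^{tS}} = \prod_i \E{e^{tX_i}}$. To bound each factor I would use that $e^{tx}$ is convex and hence lies below its chord on $[0,1]$, namely $e^{tx} \le 1 + x(e^t - 1)$ for $x \in [0,1]$; taking expectations and then using $1+y \le e^y$ yields $\E{e^{tX_i}} \le 1 + p_i(e^t-1) \le \exp(p_i(e^t-1))$, whence $\E{e^{tS}} \le \exp(\mu(e^t-1))$. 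Optimizing the resulting bound $\exp(\mu(e^t-1) - t(1+\delta)\mu)$ over $t>0$, the minimizer is $t = \ln(1+\delta)$, which produces the classical form
\[
\Pr[S \ge (1+\delta)\mu] \le \left( \frac{e^{\delta}}{(1+\delta)^{1+\delta}} \right)^{\mu}.
\]
The symmetric argument for the lower tail, applying Markov to $e^{-tS}$ and optimizing at $t = -\ln(1-\delta) > 0$, gives $\Pr[S \le (1-\delta)\mu] \le \left( e^{-\delta}/(1-\delta)^{1-\delta} \right)^{\mu}$.

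The technical heart of the proof, and the step I expect to be the main obstacle, is reducing these sharp but unwieldy bounds to the clean Gaussian-type form $\exp(-\delta^2\mu/3)$. For the upper tail this amounts to the scalar inequality $(1+\delta)\ln(1+\delta) - \delta - \delta^2/3 \ge 0$ on $0 \le \delta \le 1$, and for the lower tail to $(1-\delta)\ln(1-\delta) + \delta - \delta^2/3 \ge 0$. I would verify each by defining the left-hand side as a function $g(\delta)$, noting $g(0)=0$, and analyzing its derivatives. The lower-tail inequality follows quickly from the power series $(1-\delta)\ln(1-\delta) = -\delta + \sum_{k\ge 2}\delta^k/(k(k-1))$, whose leading term $\delta^2/2$ already dominates $\delta^2/3$ and whose remaining terms are nonnegative. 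The upper-tail inequality is more delicate, since the naive series comparison is not termwise favorable; here I would instead check $g'(\delta) = \ln(1+\delta) - 2\delta/3$ satisfies $g'(0)=0$ and remains nonnegative throughout $[0,1]$ (its only interior critical point, at $\delta = 1/2$, is a maximum, and $g'(1) = \ln 2 - 2/3 > 0$), so that $g$ is nondecreasing and hence nonnegative.

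Finally, with both one-sided bounds in hand, the Union Bound gives $\Pr[|S-\mu| \ge \delta\mu] \le 2\exp(-\delta^2\mu/3)$. The strict inequality asserted in the statement follows because the scalar gaps $g(\delta)$ are in fact strictly positive for $\delta \in (0,1]$, making each one-sided bound strict there, while at $\delta = 0$ the claim is trivially strict since the left-hand side is $1$ and the right-hand side is $2$. This completes the argument.
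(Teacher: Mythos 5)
Your proof is correct: it is the standard exponential-moment (MGF) argument, with the convexity bound $e^{tx}\le 1+x(e^t-1)$ on $[0,1]$, the optimal choice $t=\ln(1+\delta)$, the scalar inequalities $(1+\delta)\ln(1+\delta)-\delta\ge\delta^2/3$ and $(1-\delta)\ln(1-\delta)+\delta\ge\delta^2/2$ verified soundly (your derivative analysis of $g'(\delta)=\ln(1+\delta)-2\delta/3$ and the power-series argument both check out), and the union bound to combine the two tails. The paper states this bound as a background fact without supplying any proof, so there is no in-paper argument to compare against; your write-up fills that gap with the canonical derivation.
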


\section{Information Theory}\label{bck:inf}

For a discrete random variable $X$ with possible values $\{x_1,x_2,\ldots,x_n\}$, the
Shannon entropy of $X$ is defined as
$H(X)=-\sum_{i=1}^{n}\Pr(X=x_i)\log_2 \Pr(X=x_i)$. Let $H_b(p)=-p\log_2 p-(1-p)\log_2
(1-p)$ denote the binary entropy function when $p \in (0,1)$.
For two random variables $X$ and $Y$ with possible values $\{x_1,x_2,\ldots,x_n\}$ and
$\{y_1,y_2,\ldots,y_m\}$, respectively, the conditional entropy of $X$ given $Y$ is defined
as $H(X\ |\ Y)=\sum_{i,j}\Pr(X=x_i,Y=y_j)\log_2\frac{\Pr(Y=y_j)}{\Pr(X=x_i,Y=y_j)}$.
Let $I(X; Y) = H(X) - H(X\ |\ Y) = H(Y) - H(Y\ |\ X)$
denote the mutual information between two random variables $X, Y$.
Let $I(X; Y\ |\ Z)$ denote the mutual information between two random variables $X, Y$
conditioned on $Z$, i.e.,
$I(X ; Y\ |\ Z) = H(X\ |\ Z) - H(X\ |\ Y, Z)$.
The following summarizes several basic properties of entropy and mutual information.
\begin{proposition}\label{prop:mut}
Let $X, Y, Z, W$ be random variables.
\begin{enumerate}
\item If $X$ takes value in $\{1,2, \ldots, m\}$, then $H(X) \in [0, \log m]$.
\item $H(X) \ge H(X\ |\ Y)$ and $I(X; Y) = H(X) - H(X\ |\ Y) \ge 0$.
\item If $X$ and $Z$ are independent, then we have $I(X; Y\ |\ Z) \ge I(X; Y)$.
Similarly, if $X, Z$ are independent given $W$, then $I(X; Y\ |\ Z, W) \ge I(X; Y\ |\ W)$.
\item (Chain rule of mutual information)
$I(X, Y; Z) = I(X; Z) + I(Y; Z\ |\ X).$
More generally, for any random variables $X_1, X_2, \ldots, X_n, Y$,
$\textstyle I(X_1, \ldots, X_n; Y) = \sum_{i = 1}^n I(X_i; Y\ |\ X_1, \ldots, X_{i-1})$.
Thus, $I(X, Y; Z\ |\ W) \ge I(X; Z\ |\ W)$.
\item (Fano's inequality) Let $X$ be a random variable chosen from domain $\mathcal{X}$
according to distribution $\mu_X$, and $Y$ be a random variable chosen from domain
$\mathcal{Y}$ according to distribution $\mu_Y$. For any reconstruction function $g :
\mathcal{Y} \to \mathcal{X}$ with error $\delta_g$,
$$H_b(\delta_g) + \delta_g \log(|\mathcal{X}| - 1) \ge H(X\ |\ Y).$$
\end{enumerate}
\end{proposition}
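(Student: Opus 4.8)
The plan is to derive all five parts from a small common toolkit: the non-negativity of the Kullback--Leibler divergence (Gibbs' inequality), which is itself an application of Jensen's inequality to the concave function $\log$, together with the chain rules for entropy and mutual information. I would first establish that $D_{KL}(\mu_1 || \mu_2) \ge 0$ for any two distributions on a common base set, with equality exactly when $\mu_1 = \mu_2$; this single inequality drives parts~1--3.

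For part~1, the bound $H(X) \ge 0$ is immediate, since each summand $-\Pr(X = x_i)\log\Pr(X = x_i)$ is non-negative when $\Pr(X = x_i) \in [0,1]$. The upper bound follows by writing $\log m - H(X) = D_{KL}(\mu_X || u)$, where $u$ is the uniform distribution on $\{1, \ldots, m\}$, and invoking non-negativity of the divergence. For part~2, I would observe that $I(X;Y) = D_{KL}(p_{XY} || p_X p_Y) \ge 0$; combined with the identity $I(X;Y) = H(X) - H(X \mid Y)$ from the definition, this yields both claims at once. Part~3 is then a two-line consequence of the chain rule: expanding $I(X;Y,Z)$ in the two orders gives $I(X;Z) + I(X;Y \mid Z) = I(X;Y) + I(X;Z \mid Y)$, and since independence of $X$ and $Z$ forces $I(X;Z) = 0$ while $I(X;Z \mid Y) \ge 0$ by part~2, we obtain $I(X;Y \mid Z) \ge I(X;Y)$; the version conditioned on $W$ is identical, with every term carrying an extra conditioning on $W$.

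Part~4 is purely algebraic. Writing $I(X,Y;Z) = H(Z) - H(Z \mid X,Y)$ and telescoping through $H(Z \mid X)$ gives $I(X,Y;Z) = [H(Z) - H(Z \mid X)] + [H(Z \mid X) - H(Z \mid X,Y)] = I(X;Z) + I(Y;Z \mid X)$. The $n$-variable statement follows by a straightforward induction on $n$, peeling off $X_n$ at each step, and the final consequence $I(X,Y;Z \mid W) \ge I(X;Z \mid W)$ is the chain rule $I(X,Y;Z \mid W) = I(X;Z \mid W) + I(Y;Z \mid X,W)$ together with non-negativity of the conditional mutual information.

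The one genuinely substantive piece is Fano's inequality (part~5), and this is where I would spend the most care. I would introduce the estimate $\hat{X} = g(Y)$ and the error indicator $E = \mathbf{1}[\hat{X} \ne X]$, so that $\Pr(E = 1) = \delta_g$. The key step is to expand $H(E, X \mid Y)$ by the chain rule in two ways: since $E$ is a deterministic function of $X$ and $Y$, one expansion gives $H(E, X \mid Y) = H(X \mid Y)$, while the other gives $H(E \mid Y) + H(X \mid E, Y)$. I would then bound $H(E \mid Y) \le H(E) = H_b(\delta_g)$ and split $H(X \mid E, Y)$ over the value of $E$: conditioned on $E = 0$ the value $X = g(Y)$ is determined, so that entropy is zero, whereas conditioned on $E = 1$ the variable $X$ ranges over at most $|\mathcal{X}| - 1$ values, contributing at most $\delta_g \log(|\mathcal{X}| - 1)$. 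Assembling these inequalities yields the claimed bound. The main obstacle, such as it is, lies not in any deep difficulty but in keeping the conditioning bookkeeping of this last step correct.
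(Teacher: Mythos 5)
Your proof is correct. Note that the paper itself offers no proof of this proposition: it is stated as standard background and deferred to the cited reference (Cover and Thomas), so there is nothing to compare against. Your argument — non-negativity of the KL divergence driving parts~1--3, the telescoping chain rule for part~4, and the error-indicator expansion of $H(E, X \mid Y)$ for Fano's inequality — is exactly the standard textbook treatment, and each step as you describe it (in particular the identity $H(E \mid X, Y) = 0$ and the case split on $E$ in part~5) is sound.
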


We refer readers to \citep{cover2012elements} for a nice introduction to information theory.

\section{Relevant Terminology for Trees}\label{bck:tree}

A tree $\TT = (\VV, \EE)$ is a set of vertices \VV along with a set of edges $\EE\subset\PF_2(\VV)$ such that for every two vertices $x, y\in\VV$, there exists exactly one path between $x$ and $y$ in \TT. The following definitions pertaining to the structural aspects of trees will be useful.
\begin{itemize}
	\item The {\em pathwidth} of \TT is the minimum width of a {\em path decomposition} of \TT \cite{heinrich1993path}. 
	\item A set of disjoint paths $\QQ = \{Q_1 = (\XX_1, \EE_1), \ldots, Q_k = (\XX_k, \EE_k)\}$ is said to cover a tree $\TT = (\XX, \EE)$ if $\XX = \cup_{i\in[k]} \XX_i, \EE_i\subseteq \EE, \XX_i \cap \XX_j = \emptyset, \EE_i \cap \EE_j = \emptyset$ for every $i, j\in[k]$ with $i\ne j$. The {\em path cover number} of \TT is the cardinality of the smallest set \QQ of {\em disjoint} paths that cover \TT.
	\item The {\em distance of a tree \TT from a path} is the smallest number of nodes whose removal makes the tree a path.
	\item The {\em diameter} of a tree \TT is the number of edges in the longest path in \TT.
\end{itemize}

We also list some definitions of \textit{subclasses} of trees (which are special types of trees, see also~\Cref{fig:tree-classes}). 
\begin{itemize}
\item A tree is a {\em star} if there is a {\em center vertex} and every other vertex is a neighbor of this vertex.
\item A tree is a {\em subdivision of a star} if it can be constructed by replacing each edge of a star by a path. 
\item A subdivision of a star is called {\em balanced} if there exists an integer \el such that the distance of every leaf node from the center is \el. 
\item A tree is a {\em caterpillar} if there is a {\em central path} and every other vertex is at a distance of one it. 
\item A tree is a {\em complete binary tree} rooted at $r$ if every nonleaf node has exactly two children and there exists an integer $h$, called the height of the tree, such that every leaf node is at a distance of either $h$ or $h-1$ from the root node $r$.	
\end{itemize}

\begin{figure}[t]
	\centering
	\includegraphics[scale=0.35]{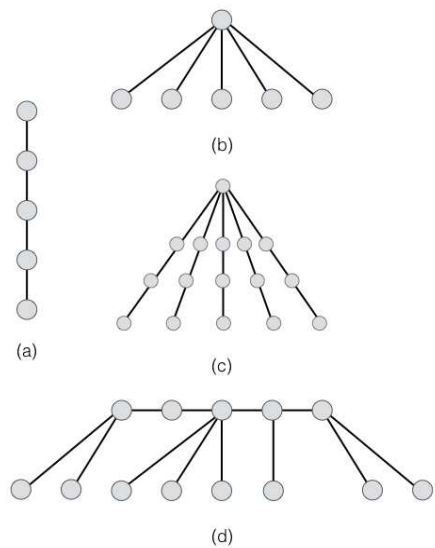}
	\caption{Depicting classes of trees: (a) a path, (b) a star, (c) a (balanced) subdivision of a star, (d) a caterpillar.}
	\label{fig:tree-classes}
\end{figure}

\section{Universal Family of Hash Functions}\label{bck:hash}

In this section, we discuss universal family of hash functions. Intuitively, a family of hash functions is universal if a hash function picked uniformly at random from the family behaves like a completely random function. The formal definition of universal family of hash functions is as follows. We use this notion crucially in \Cref{chap:winner_stream}.

\begin{definition}(Universal family of hash functions)\\
 A family of functions $\mathcal{H} = \{ h | h:A\rightarrow B \}$ is called a {\em universal family of hash functions} if for all $a \ne b \in A$, $\Pr\{ h(a)= h(b) \} = \nfrac{1}{|B|}$, where $h$ is picked uniformly at random from $\mathcal{H}$.
 
 \[ \Pr_{h \text{ picked uniformly at random from } \mathcal{H}}\{ h(a)= h(b) \} = \nfrac{1}{|B|} \]
\end{definition}

We know that there exists a universal family of hash functions $\mathcal{H}$ from $[k]$ to $[\ell]$ for every positive integer $\ell$ and every prime $k$~\cite{cormen2009introduction}. Moreover, $|\mathcal{H}|$, the size of $\mathcal{H}$, is $O(k^2)$.
\blankpage
\part{Preference Elicitation}
\vspace{5ex}
This part of the thesis consists of the following two chapters.

\begin{itemize}
 \item In \Cref{chap:pref_elicit_peak} -- \nameref{chap:pref_elicit_peak} -- we present efficient algorithms for preference elicitation when the preferences belong to the domain of single peaked preferences on a tree. We show interesting connections between query complexity for preference elicitation and various common parameters of trees.
 \item In \Cref{chap:pref_elicit_cross} -- \nameref{chap:pref_elicit_cross} -- we show optimal algorithms (except for one scenario) for eliciting single crossing preference profiles. Our results show that the query complexity for preference elicitation for single crossing preference profiles crucially depends on the way the votes are allowed to access.
\end{itemize}

\blankpage
\chapter{Preference Elicitation for Single Peaked Profiles on Trees}
\label{chap:pref_elicit_peak}

\blfootnote{A preliminary version of the work in this chapter was published as \cite{deypeak}: Palash Dey and Neeldhara Misra. Elicitation for preferences single peaked on trees. In
Proc. Twenty-Fifth International Joint Conference on Artificial Intelligence,
IJCAI 2016, New York, NY, USA, 9-15 July 2016, pages 215-221, 2016.}

\begin{quotation}
{\small In multiagent systems, we often have a set of agents each of which has a preference ordering over a set of items. One would often like to know these preference orderings for various tasks like data analysis, preference aggregation, voting etc. However, we often have a large number of items which makes it impractical to directly ask the agents for their complete preference ordering. In such scenarios, one convenient way to know these agents' preferences is to ask the agents to compare (hopefully a small number of) pairs of items.

Prior work on preference elicitation focuses on unrestricted domain and the domain of single peaked preferences. They show that the preferences in single peaked domain can be elicited with much less number of queries compared to unrestricted domain. We extend this line of research and study preference elicitation for single peaked preferences on trees which is a strict superset of the domain of single peaked preferences. We show that the query complexity for preference elicitation crucially depends on the number of leaves, the path cover number, and the minimum number of nodes that should be removed to turn the single peaked tree into a path. We also observe that the other natural parameters of the single peaked tree like maximum and minimum degrees of a node, diameter, pathwidth do not play any direct role in determining the query complexity of preference elicitation. We then investigate the query complexity for finding a weak Condorcet winner (we know that at least one weak Condorcet winner always exists for profiles which are single peaked on a tree) for preferences single peaked on a tree and show that this task has much less query complexity than preference elicitation. Here again we observe that the number of leaves in the underlying single peaked tree and the path cover number of the tree influence the query complexity of the problem.}
\end{quotation}

\section{Introduction}
In multiagent systems, we often have scenarios where a set of agents have to arrive at a consensus although they possess different opinions over the alternatives available. Typically, the agents have preferences over a set of items, and the problem of aggregating these preferences in a suitable manner is one of the most well-studied problems in social choice theory~\cite{brandt2015handbook}. There are many ways of expressing preferences over a set of alternatives. One of the most comprehensive ways is to specify a complete ranking over the set of alternatives. However, one of the downsides of this model is the fact that it can be expensive to solicit the preferences when a large number of alternatives and agents are involved. 

Since asking agents to provide their complete rankings is impractical, a popular notion is one of \textit{elicitation}, where we ask agents simple \textit{comparison queries}, such as if they prefer an alternative $x$ over another alternative $y$. This naturally gives rise to the problem of \textit{preference elicitation}, where we hope to recover the complete ranking (or possibly the most relevant part of the ranking) based on a small number of comparison queries. 

In the context of a fixed voting rule, we may also want to query the voters up to the point of determining the winner (or the aggregate ranking, as the case may be). Yet another refinement in this setting is when we have prior information about how agents are likely to vote, and we may want to determine which voters to query first, to be able to quickly rule out a large number of alternatives, as explored by~\cite{conitzer2002vote}. 

When our goal is to elicit preferences that have no prior structure, one can demonstrate scenarios where it is imperative to ask each agent (almost) as many queries as would be required to determine an arbitrary ranking. However, in recent times, there has been considerable interest in voting profiles that are endowed with additional structure. The motivation for this is two-fold. The first is that in several application scenarios commonly considered, it is rare that votes are ad-hoc, demonstrating no patterns whatsoever. For example, the notion of \textit{single-peaked preferences}, which we will soon discuss at length, forms the basis of several studies in the analytical political sciences~\cite{hinich1997analytical}. In his work on eliciting preferences that demonstrate the ``single-peaked'' structure, Conitzer argues that the notion of single-peakedness is also a reasonable restriction in applications outside of the domain of political elections~\cite{Conitzer09}.

The second motivation for studying restricted preferences is somewhat more technical, but is just as compelling. To understand why structured preferences have received considerable attention from social choice theorists, we must first take a brief detour into some of the foundational ideas that have shaped the landscape of voting theory as we understand them today. As it turns out, the axiomatic approach of social choice involves defining certain ``properties'' that formally capture the quality of a voting rule. For example, we would not want a voting rule to be, informally speaking, a \textit{dictatorship}, which would essentially mean that it discards all but one voter's input. Unfortunately, a series of cornerstone results establish that it is impossible to devise voting rules which respect some of the simplest desirable properties. Indeed, the classic work of Arrow \cite{arrow1950difficulty} and Gibbard-Satterthwaite \cite{gibbard1973manipulation,satterthwaite1975strategy} show that there is no straight forward way to simultaneously deal with properties like voting paradoxes, strategy-proofness, nondictatorship, unanimity etc. We refer to \cite{moulin1991axioms} for a more elaborate discussion on this. Making the matter worse, many classical voting rules, for example the Kemeny voting rule~\cite{kemeny1959mathematics,black1958theory,bartholdi1989voting,hemaspaandra2005complexity}, the Dodgson voting rule~\cite{dodgson1876method,hemaspaandra1997exact}, the Young voting rule~\cite{young1977extending} etc., turn out to be computationally intractable.

This brings us to the second reason for why structured preferences are an important consideration. The notion of single-peakedness that we mentioned earlier is an excellent illustration (we refer the reader to~\Cref{peak:sec:prelim} for the formal definition). Introduced in~\cite{black1948rationale}, it not only captures the essence of structure in political elections, but also turns out to be extremely conducive to many natural theoretical considerations. To begin with, one can devise voting rules that are ``nice'' with respect to several properties, when preferences are single-peaked. Further, they are structurally elegant from the point of view of winner determination, since they always admit a {\em weak Condorcet winner} --- a candidate which is not defeated by any other candidate in pairwise election --- thus working around the Condorcet paradox which is otherwise a prominent concern in the general scenario. In a landmark contribution, Brandt et al. \cite{brandt2015bypassing} show that several computational problems in social choice theory that are intractable in the general setting become polynomial time solvable when we consider single-peaked preferences. 

A natural question at this point is if the problem of elicitation becomes any easier --- that is, if we can get away with fewer queries --- by taking advantage of the structure provided by single-peakedness. It turns out that the answer to this is in the affirmative, as shown in a detailed study by Conitzer~\cite{Conitzer09}. The definition of single-peakedness involves an ordering over the candidates (called the \textit{harmonious ordering} by some authors). The work of Conitzer~\cite{Conitzer09} shows that $\BigO(mn)$ queries suffice, assuming either that the harmonious ordering is given, or one of the votes is known, where $m$ and $n$ are the number of candidates and voters respectively.

We now return to the theme of structural restrictions on preferences. As it turns out, the single peaked preference domain has subsequently been generalized to single peakedness on trees (roughly speaking, these are profiles that are single peaked on every path on a given tree) \cite{demange1982single,trick1989recognizing}. This is a class that continues to exhibit many desirable properties of single peaked domain. For example, there always exists a weak Condorcet winner and further, many voting rules that are intractable in an unrestricted domain are polynomial time computable if the underlying single peaked tree is ``nice'' \cite{yu2013multiwinner,peters2016preferences}. We note the class of profiles that are single peaked on trees are substantially more general than the class of single peaked preferences. Note that the latter is a special case since a path is, in particular, a tree. Our work here addresses the issue of elicitation on profiles that are single-peaked on trees, and can be seen as a significant generalization of the results in~\cite{Conitzer09}. We now detail the specifics of our contributions. 

\begin{table*}[htbp]
 \begin{center}
 \renewcommand{\arraystretch}{1.1}
  \begin{tabular}{|c|c|c|}\hline
   Parameter & Upper Bound & Lower Bound \\\hline\hline
   
   Path width ($w$) & \specialcell{$\BigO(mn\log m)$\\\relax[\Cref{obs:trivial}]} & \specialcell{$\Omega(mn\log m)$ even for $w = 1, \log m$\\\relax[\Cref{cor:pwidth_elicit_lb}]} \\\hline
      
   Maximum degree ($\Delta$) & \specialcell{$\BigO(mn\log m)$\\\relax\Cref{obs:trivial}]} & \specialcell{$\Omega(mn\log m)$ even for $\Delta = 3, m-1$\\\relax[\Cref{cor:maxdeg_elicit_lb}]} \\\hline
   
   Path cover number ($k$) & \specialcell{$\BigO(mn\log k)$\\\relax[\Cref{thm:pcover_elicit_ub}]} & \specialcell{$\Omega(mn\log k)$ \\\relax[\Cref{cor:pcover_elicit_lb}]} \\\hline
   
   Number of leaves (\el) & \specialcell{$\BigO(mn\log \el)$\\\relax[\Cref{cor:leaves_elicit_ub}]} & \specialcell{$\Omega(mn\log \el)$\\\relax[\Cref{thm:leaves_elicit_lb}]} \\\hline
   
   Distance from path ($d$) & \specialcell{$\BigO(mn + nd\log d)$\\\relax[\Cref{thm:pdist_elicit_ub}]} & \specialcell{$\Omega(mn + nd\log d)$ \\\relax[\Cref{thm:pdist_elicit_lb}]} \\\hline
   
   Diameter ($\omega$) & \specialcell{$\BigO(mn\log m)$\\\relax[\Cref{obs:trivial}]} & \specialcell{$\Omega(mn\log m)$ even for $\omega = 2, \nfrac{m}{2}$\\\relax[\Cref{cor:dia_elicit_lb}]} \\\hline
  \end{tabular}
 \end{center}
\caption{Summary of query complexity bounds for eliciting preferences which are single peaked on trees.}\label{tbl:elicit}
\end{table*}

\subsection{Our Contribution}

We study the query complexity for preference elicitation when the preference profile is single peaked on a tree. We provide {\em tight} connections between various parameters of the underlying single peaked tree and the query complexity for preference elicitation. Our broad goal is to provide a suitable generalization of preference elicitation for single peaked profiles to profiles that are single peaked on trees. Therefore, we consider various ways of quantifying the ``closeness'' of a tree to a path, and reflect on how these measures might factor into the query complexity of an algorithm that is actively exploiting the underlying tree structure.

We summarize our results for preference elicitation in \Cref{tbl:elicit}, where the readers will note that most of the parameters (except diameter) chosen are small constants (typically zero, one or two) when the tree under consideration is a path. Observe that in some cases --- such as the number of leaves, or the path cover number --- the dependence on the parameter is transparent (and we recover the results of \cite{Conitzer09} as a special case), while in other cases, it is clear that the perspective provides no additional mileage (the non-trivial results here are the matching lower bounds).

In terms of technique, our strategy is to ``scoop out the paths from the tree'' and use the algorithm form \cite{Conitzer09} to efficiently elicit the preference on the parts of the trees that are paths. We then efficiently merge this information across the board, and that aspect of the algorithm varies depending on the parameter under consideration. The lower bounds typically come from trees that provide large ``degrees of freedom'' in reordering candidates, typically these are trees that don't have too many long paths (such as stars). The arguments are often subtle but intuitive. 

We then study the query complexity for finding a weak Condorcet winner of a preference profile which is single peaked on a tree. Here, we are able to show that a weak Condorcet winner can be found with far fewer queries than the corresponding elicitation problem. In particular, we establish that a weak Condorcet winner can be found using $\BigO(mn)$ many queries for profiles that are single peaked on trees [\Cref{{thm:cw_gen_ub}}], and we also show that this bound is the best that we can hope for [\Cref{thm:cw_gen_lb}]. We also consider the problem for the special case of single peaked profiles. While Conitzer in~\cite{Conitzer09} showed that $\Omega(mn)$ queries are necessary to determine the \textit{aggregate ranking}, we show that only $\BigO(n\log m)$ queries suffice if we are just interested in (one of the) weak Condorcet winners. Moreover, we show this bound is tight under the condition that the algorithm does not interleave queries to different voters [\Cref{thm:con_sp_lb}] (our algorithm indeed satisfies this condition). 

Finally, for expressing the query complexity for determining a weak Condorcet winner in terms of a measure of closeness to a path, we show an algorithm with query complexity $\BigO(nk\log m)$ where $k$ is the path cover number of \TT [\Cref{thm:cw_pc_ub}] or the number of leaves in \TT [\Cref{cor:cw_leaf_ub}]. We now elaborate further on our specific contributions for preference elicitation.

\begin{itemize}
 \item We design novel algorithms for preference elicitation for profiles which are single peaked on a tree with \el leaves with query complexity $\BigO(mn\log \el)$ [\Cref{cor:leaves_elicit_ub}]. Moreover, we prove that there exists a tree \TT with \el leaves such that any preference elicitation algorithm for profiles which are single peaked on tree \TT has query complexity $\Omega(mn\log\el)$ [\Cref{thm:leaves_elicit_lb}]. We show similar results for the parameter path cover number of the underlying tree [\Cref{thm:pcover_elicit_ub,cor:pcover_elicit_lb}]. 
 
 \item We provide a preference elicitation algorithm with query complexity $\BigO(mn + nd\log d)$ for single peaked profiles on trees which can be made into a path by deleting at most $d$ nodes [\Cref{thm:pdist_elicit_ub}]. We show that our query complexity upper bound is tight up to constant factors [\Cref{thm:pdist_elicit_lb}]. These results show that the query complexity for preference elicitation tightly depends on the number of leaves, the path cover number, and the distance from path of the underlying tree.
 
 \item We then show that there exists a tree \TT with pathwidth one or $\log m$ [\Cref{cor:pwidth_elicit_lb}] or maximum degree is $3$ or $m-1$ [\Cref{cor:maxdeg_elicit_lb}] or  diameter is $2$ or $\nfrac{m}{2}$ [\Cref{cor:dia_elicit_lb}] such that any preference elicitation algorithm for single peaked profiles on the tree \TT has query complexity $\Omega(mn\log m)$. These results show that the query complexity for preference elicitation does not directly depend on the parameters above.
\end{itemize}

We next investigate the query complexity for finding a weak Condorcet winner for profiles which are single peaked on trees and we have the following results.

\begin{itemize}
 \item We show that a weak Condorcet winner can be found using $\BigO(mn)$ queries for profiles that are single peaked on trees [\Cref{{thm:cw_gen_ub}}] which is better than the query complexity for preference elicitation. Moreover, we prove that this bound is tight in the sense that any algorithm for finding a weak Condorcet winner for profiles that are single peaked on stars has query complexity $\Omega(mn)$ [\Cref{thm:cw_gen_lb}].
 
 \item On the other hand, we can find a weak Condorcet winner using only $\BigO(n\log m)$ queries for single peaked profiles [\Cref{thm:con_sp_ub}]. Moreover, we show that this bound is tight under the condition that the algorithm does not interleave queries to different voters [\Cref{thm:con_sp_lb}] (our algorithm indeed satisfies this condition). For any arbitrary underlying single peaked tree \TT, we provide an algorithm for finding a weak Condorcet winner with query complexity $\BigO(nk\log m)$ where $k$ is the path cover number of \TT [\Cref{thm:cw_pc_ub}] or the number of leaves in \TT [\Cref{cor:cw_leaf_ub}].
\end{itemize}

To summarize, we remark that our results non-trivially generalize earlier works on query complexity for preference elicitation in \cite{Conitzer09}. We believe revisiting the preference elicitation problem in the context of profiles that are single peaked on trees is timely, and that this work also provides fresh algorithmic and structural insights on the domain of preferences that are single peaked on trees.

\subsection{Related Work}

We have already mentioned the work in~\cite{Conitzer09} addressing the question of eliciting preferences in single-peaked profiles, which is the closest predecessor to our work. Before this, Conitzer and Sandholm addressed the computational hardness for querying minimally for winner determination \cite{conitzer2002vote}. They also prove that one would need to make $\Omega(mn\log m)$ queries even to decide the winner for many commonly used voting rules \cite{conitzer2005communication} which matches with the trivial $\BigO(mn\log m)$ upper bound for preference elicitation in unrestricted domain (due to sorting lower bound). Ding and Lin study preference elicitation under partial information setting and show interesting properties of what they call a deciding set of queries \cite{ding2013voting}. Lu and Boutilier provide empirical study of preference elicitation under probabilistic preference model \cite{lu2011vote} and devise several novel heuristics which often work well in practice \cite{lu2011robust}.

\section{Domain of Single Peaked Profiles on Trees}\label{peak:sec:prelim}

A preference $\succ\thinspace\in\LL(\CC)$ over a set of candidates \CC is called {\em single peaked} with respect to an order $\succ^\pr\in\LL(\CC)$ if, for every candidates $x, y\in\CC$, we have $x\succ y$ whenever we have either $c\succ^\pr x\succ^\pr y$ or $y\succ^\pr x\succ^\pr c$, where $c\in\CC$ is the candidate at the first position of $\succ$. A profile $\PP=(\succ_i)_{i\in[n]}$ is called single peaked with respect to an order $\succ^\pr\in\LL(\CC)$ if $\succ_i$ is single peaked with respect to $\succ^\pr$ for every $i\in[n]$. Notice that if a profile \PP is single peaked with respect to an order $\succ^\pr\in\LL(\CC)$, then \PP is also single peaked with respect to the order $\overleftarrow{\succ}^\pr$. \Cref{ex:single_peak} exhibits an example of a single peaked preference profile.

\begin{example}(Example of single peaked preference profile)\label{ex:single_peak}
 Consider a set \CC of $m$ candidates, corresponding $m$ distinct points on the Real line, a set \VV of $n$ voters, also corresponding $n$ points on the Real line, and the preference of every voter are based on their distance to the candidates -- given any two candidates, every voter prefers the candidate nearer to her (breaking the tie arbitrarily). Then the set of $n$ preferences is single peaked with respect to the ordering of the candidates according to the ascending ordering of their positions on the Real line.
\end{example}

Given a path $\QQ = (x_1, x_2, \ldots, x_\el)$ from a vertex $x_1$ to another vertex $x_\el$ in a tree \TT, we define the order induced by the path \QQ to be $x_1\succ x_2\succ \cdots\succ x_\el$. Given a tree $\TT = (\CC, \EE)$ with the set of nodes as the set of candidates \CC, a profile \PP is called single peaked on the tree \TT if \PP is single peaked on the order induced by every path of the tree \TT; that is for every two candidates $x, y\in\CC$, the profile $\PP(\XX)$ is single peaked with respect to the order $\succ$ on the set of candidates \XX induced by the unique path from $x$ to $y$ in \TT. We call the tree \TT the {\em underlying single peaked tree} of the preference profile \PP. It is known (c.f.~\cite{demange1982single}) that there always exists a weakly Condorcet winner for a profile \PP which is single peaked on a tree \TT.

\section{Problem Definitions and Known Results}

Suppose we have a profile \PP with $n$ voters and $m$ candidates. For any pair of distinct candidates $x$ and $y$, and a voter $\el \in [n]$, we introduce the boolean-valued function $\text{\Query}(x \succ_\el y)$ as follows. The output of this function is \true if the voter \el prefers the candidate $x$ over the candidate $y$ and \false otherwise. We now formally state the two problems that we consider in this work.

\begin{definition}\PE\\\label{def:query_model}
 Given a tree $\TT = (\CC, \EE)$ and an oracle access to the function \Query($\cdot$) for a profile \PP which is single peaked on \TT, find \PP.
\end{definition}

Suppose we have a set of candidates $\CC = \{c_1, \ldots, c_m\}$. We say that an algorithm \AA makes $q$ queries if there are exactly $q$ distinct tuples $(\el, c_i, c_j)\in[n]\times\CC\times\CC$ with $i<j$ such that \AA calls \Query($c_i\succ_\el c_j$) or \Query($c_j\succ_\el c_i$). We call the maximum number of queries made by an algorithm \AA for any input its {\em query complexity}. 

We state some known results that we will appeal to later. The first observation employs a sorting algorithm like merge sort to elicit every vote with $\BigO(m\log m)$ queries, while the second follows from the linear-time merge subroutine of merge sort (\cite{cormen2009introduction}).

\begin{observation}\label{obs:trivial}
 There is a \PE algorithm with query complexity $\BigO(mn\log m)$.
\end{observation}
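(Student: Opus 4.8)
The plan is to elicit each voter's preference independently using a standard comparison-based sorting algorithm, making no use whatsoever of the single-peaked-on-tree structure. First I would fix a voter $\el \in [n]$. Her preference $\succ_\el$ is a complete order over the $m$ candidates in $\CC$, and the oracle $\Query(\cdot \succ_\el \cdot)$ provides exactly the pairwise-comparison primitive that a comparison-based sorting routine requires: a call $\Query(c_i \succ_\el c_j)$ returns \true precisely when $c_i$ is ranked above $c_j$ by voter $\el$. Running merge sort on the candidate set $\CC$, using $\Query$ as the comparator for voter $\el$, therefore recovers the full order $\succ_\el$.

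The cost accounting is the only quantitative step. Merge sort sorts $m$ items using $\BigO(m\log m)$ comparisons in the worst case~\cite{cormen2009introduction}, and each such comparison is realized as a single $\Query$ call, so eliciting one voter costs $\BigO(m\log m)$ queries. I would then repeat the procedure for each of the $n$ voters. Since the voters are handled independently and the query sets for distinct voters are disjoint (queries are indexed by triples $(\el, c_i, c_j)$), the total number of distinct queries is $n \cdot \BigO(m\log m) = \BigO(mn\log m)$, which bounds the query complexity of the algorithm by definition. Correctness is immediate: once every $\succ_\el$ has been recovered, the entire profile $\PP$ is fully determined.

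There is no real obstacle here, and in particular the argument does not invoke the tree $\TT$ at all, so the bound holds uniformly for every profile that is single peaked on a tree (indeed for unrestricted profiles as well). The one point worth stating explicitly is that comparison-based sorting suffices because a complete order is determined by its pairwise comparisons; any correct comparison-sort whose worst-case comparison count is $\BigO(m\log m)$ would serve equally well in place of merge sort.
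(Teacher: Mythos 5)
Your proposal is correct and matches the paper's argument exactly: the paper likewise obtains \Cref{obs:trivial} by running a comparison sort such as merge sort independently on each voter's preference, using $\BigO(m\log m)$ queries per voter and hence $\BigO(mn\log m)$ in total, without using the tree structure at all.
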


\begin{observation}\label{obs:merge_single_peak}
 Suppose $\CC_1, \CC_2\subseteq\CC$ form a partition of \CC and $\succ$ is a ranking of the candidates in \CC. Then there is a linear time algorithm that finds $\succ$ given $\succ(\CC_1)$ and $\succ(\CC_2)$ with query complexity $\BigO(|\CC|)$.
\end{observation}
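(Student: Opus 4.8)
The plan is to recognize this as exactly the \emph{merge} subroutine from merge sort, recast in terms of \Query calls. Since $\succ(\CC_1) = \succ \cap (\CC_1 \times \CC_1)$ and $\succ(\CC_2) = \succ \cap (\CC_2 \times \CC_2)$ are complete orders on $\CC_1$ and $\CC_2$ respectively, each can be viewed as a sorted list: write the elements of $\CC_1$ as $a_1 \succ a_2 \succ \cdots \succ a_{|\CC_1|}$ and those of $\CC_2$ as $b_1 \succ b_2 \succ \cdots \succ b_{|\CC_2|}$, where the orderings are read off directly from the given restrictions and require no queries. The goal is to interleave these two sorted sequences into a single sequence that realizes $\succ$ on all of $\CC = \CC_1 \cup \CC_2$.

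First I would run the standard two-pointer merge: maintain an index $i$ into the $a$-list and an index $j$ into the $b$-list, both starting at the front. At each step, compare the current front candidates $a_i$ and $b_j$ using a single call to \Query, output whichever is preferred under $\succ$, and advance the corresponding pointer; when one list is exhausted, append the remainder of the other list without further queries. The output is claimed to be the sorted order of $\CC$ under $\succ$.

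For correctness, the key points to verify are that (i) the relative order of elements coming from the same part is preserved, which holds because we emit them in the order given by $\succ(\CC_1)$ (resp.\ $\succ(\CC_2)$), and these are precisely the restrictions of the target order $\succ$; and (ii) each cross-part pair $(a_i, b_j)$ ends up correctly ordered. Point (ii) follows from the usual merge invariant together with transitivity of the complete order $\succ$: at the moment we compare $a_i$ and $b_j$, every already-output candidate is $\succ$ both of them, so placing the preferred of the two next respects $\succ$ against all remaining candidates as well. This invariant argument is the only substantive part, and it is entirely standard.

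Finally, for the complexity bound, each \Query call advances exactly one of the two pointers, and the pointers advance a combined total of at most $|\CC_1| + |\CC_2| - 1 = |\CC| - 1$ times before one list empties, so the algorithm makes at most $|\CC| - 1 = \BigO(|\CC|)$ queries; all bookkeeping (reading off the two sorted lists, advancing pointers, appending the tail) is linear in $|\CC|$, giving linear running time. I do not expect any genuine obstacle here: the entire content is the textbook merge step, and the only thing worth stating carefully is the transitivity-based invariant guaranteeing that a single front-to-front comparison per step suffices to reconstruct the global order.
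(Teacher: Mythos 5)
Your proof is correct and is exactly the argument the paper has in mind: the paper simply attributes this observation to the linear-time merge subroutine of merge sort, which is precisely the two-pointer merge with the transitivity-based invariant you spell out. No issues.
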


\begin{theorem}\label{thm:con}\cite{Conitzer09} There is a \PE algorithm with query complexity $\BigO(mn)$ for single peaked profiles.
\end{theorem}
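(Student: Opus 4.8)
The plan is to exploit the fact that, for single peaked profiles, the given tree $\TT$ is a path, and this path furnishes the harmonious order $\succ^\pr\in\LL(\CC)$ with respect to which every vote is single peaked. Write the axis as $a_1 \succ^\pr a_2 \succ^\pr \cdots \succ^\pr a_m$. The algorithm will elicit each voter's preference independently using only $m-1$ queries, so that the total is $n(m-1) = \BigO(mn)$; since the votes are handled one at a time, it suffices to describe how to recover a single single peaked vote $\succ$ with $\BigO(m)$ comparison queries.

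The structural heart of the argument is the following claim: \emph{if $\succ$ is single peaked on the axis, then the least preferred candidate of $\succ$ is one of the two endpoints $a_1$ or $a_m$}. I would prove this directly from the definition. Let $c$ be the peak (top candidate) of $\succ$ and let $b$ be its bottom candidate; assume $m \ge 2$ so that $c \ne b$. If $c$ lies to the left of $b$ on the axis, i.e.\ $c \succ^\pr b$, then for every candidate $y$ with $b \succ^\pr y$ we have $c \succ^\pr b \succ^\pr y$, so single peakedness forces $b \succ y$; as $b$ is the bottom, no such $y$ can exist and hence $b = a_m$. Symmetrically, if $b \succ^\pr c$ then $b = a_1$. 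Either way $b$ is an endpoint.

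Given the claim, the elicitation of one vote proceeds by peeling. Maintain the contiguous sub-interval of the axis consisting of the not-yet-placed candidates, initially all of $\CC$, with left and right endpoints $L$ and $R$. At each step I would issue a single query comparing $L$ and $R$; the candidate that loses is, by the claim applied to the current interval, the least preferred among the remaining candidates, so I assign it the lowest open rank and delete it from the interval. After $m-1$ such steps a single candidate (the peak) remains and its rank is forced. To see that the claim may be re-applied at every step, I would observe that single peakedness is inherited: the restriction of $\succ$ to any axis sub-interval is again single peaked on that sub-interval, because the defining betweenness condition is preserved under restriction and the peak of the restriction remains $c$ until $c$ itself is reached. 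Thus the peeling correctly reconstructs $\succ$ from the bottom up, and because one endpoint changes after each deletion, the $m-1$ comparisons are on pairwise distinct pairs, as required by the query-counting convention.

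I expect the main obstacle to be pinning down the endpoint claim and its inductive reuse cleanly from the paper's betweenness-style definition of single peakedness, rather than the bookkeeping of the query count, which is immediate once the peeling step is justified. One may equivalently phrase the claim as: every top-$k$ prefix of a single peaked vote is a contiguous interval of the axis, a formulation that makes both the ``loser is an endpoint'' step and the inheritance under restriction especially transparent.
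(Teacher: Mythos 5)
Your proposal is correct. Note that the paper does not prove this statement at all---it is imported verbatim from \cite{Conitzer09}---so there is no internal proof to compare against; your peeling argument (the bottom candidate of a single peaked order is an endpoint of the axis, so one comparison of the two extremes of the surviving contiguous interval identifies the next-lowest-ranked candidate, giving $m-1$ queries per vote) is exactly the standard argument behind Conitzer's result, and all the steps you flag as needing care (the endpoint claim, inheritance of single peakedness under restriction to a subinterval containing the peak, and distinctness of the queried pairs) go through as you describe.
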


We now state the \CW problem, which asks for eliciting only up to the point of determining a weak Condercet winner (recall that at least one such winner is guaranteed to exist on profiles that are single-peaked on trees). 
\begin{definition}\CW\\
 Given a tree $\TT = (\CC, \EE)$ and an oracle access to the function \Query($\cdot$) for a profile \PP which is single peaked on \TT, find a weak Condorcet winner of \PP.
\end{definition}

\section{Results for Preference Elicitation}
In this section, we present our results for \PE for profiles that are single peaked on trees. Recall that we would like to generalize~\Cref{thm:con} in a way to profiles that are single peaked on trees. Since the usual single peaked profiles can be viewed as profiles single peaked with respect to a path, we propose the following measures of how much a tree resembles a path.

 \begin{itemize}
 	\item \textit{Leaves.} Recall any tree has at least two leaves, and paths are the trees that have exactly two leaves. We consider the class of trees that have $\el$ leaves, and show an algorithm with query complexity of $\BigO(mn\log \el)$. 
 	\item \textit{Path Cover.} Consider the notion of a \textit{path cover number} of a tree, which is the smallest number of disjoint paths that the tree can be partitioned into. Clearly, the path cover number of a path is one; and for trees that can be covered with $k$ paths, we show an algorithm with query complexity $\BigO(mn\log k)$.
 	\item \textit{Distance from Paths.} Let $d$ be the size of the smallest set of vertices whose removal makes the tree a path. Again, if the tree is a path, then the said set is simply the empty set. For trees that are at a distance $d$ from being a path (in the sense of vertex deletion), we provide an algorithm with query complexity $\BigO(mn\log d)$.
 	\item \textit{Pathwidth and Maximum Degree.} Finally, we note that paths are also trees that have pathwidth one, and maximum degree two. These perspectives turn out to be less useful: in particular, there are trees where these parameters are constant, for which we show that elicitation is as hard as it would be on an arbitrary profile, and therefore the easy algorithm from~\Cref{obs:trivial} is actually the best that we can hope for. 
 \end{itemize}

For the first three perspectives that we employ, that seemingly capture an appropriate aspect of paths and carry it forward to trees, the query complexities that we obtain are tight --- we have matching lower bounds in all cases. Also, while considering structural parameters, it is natural to wonder if there is a class of trees that are incomparable with paths but effective for elicitation. Our attempt in this direction is to consider trees of bounded diameter. However, again, we find that this is not useful, as we have examples to show that there exist trees of diameter two that are as hard to elicit as general profiles. 

We remark at this point that all these parameters are polynomially computable for trees, making the algorithmic results viable. For example, the distance from path can be found by finding the longest path in the single peaked tree which can be computed in polynomial amount of time \cite{cormen2009introduction}. Also, for the parameters of pathwidth, maximum degree and diameter, we show lower bounds on trees where these parameters are large (such as trees with pathwidth $O(\log m)$, maximum degree $m-1$, and diameter $m/2$), which --- roughly speaking --- also rules out the possibility of getting a good inverse dependence. As a concrete example, motivated by the $\BigO(mn)$ algorithm for paths, which have diameter $m$, one might wonder if there is an algorithm with query complexity $\BigO(\frac{mn \log m}{\log \omega})$ for single peaked profiles on a tree with diameter $\omega$. This possibility, in particular, is ruled out. 
We are now ready to discuss the results in~\Cref{tbl:elicit}. 

We begin with showing a structural result about trees: any tree with \el leaves can be partitioned into \el paths. The idea is to fix some non-leaf node as root and iteratively find a path from low depth nodes (depth of a node is its distance from root) to some leaf node which is disjoint from all the paths chosen so far. We formalize this idea below.

\begin{lemma}\label{lem:path_decom}
 Let $\TT = (\XX, \EE)$ be a tree with \el leaves. Then there is a polynomial time algorithm which partitions \TT into \el disjoint paths $\QQ_i = (\XX_i, \EE_i), i\in[\el]$; that is we have $\XX_i \cap \XX_j = \emptyset, \EE_i \cap \EE_j = \emptyset$ for every $i, j\in[\el]$ with $i\ne j$, $\XX = \cup_{i\in[\el]} \XX_i$, $\EE = \cup_{i\in[\el]} \EE_i$, and $\QQ_i$ is a path in \TT for every $i\in[\el]$.
\end{lemma}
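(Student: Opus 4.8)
My plan is to prove the statement by induction on the number of leaves $\el$, at each step \emph{scooping out} a single pendant path that ends at a leaf and recursing on what remains; this formalizes the greedy ``root the tree and repeatedly pull a path down to a fresh leaf'' strategy sketched just before the lemma, and it is evidently constructive and runs in polynomial time. Throughout, the goal is to write $\XX$ as a disjoint union of $\el$ sets $\XX_1,\ldots,\XX_\el$, each inducing a path $\QQ_i$ in $\TT$ with edge set $\EE_i\subseteq\EE$, with the $\XX_i$ (and hence the $\EE_i$) pairwise disjoint.

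For the base case $\el=2$, the tree $\TT$ is itself a path on at least two vertices; I would simply cut it at any one of its edges into two nonempty subpaths $\QQ_1,\QQ_2$, giving the required two vertex-disjoint paths that together cover $\XX$. For the inductive step $\el\ge 3$, the tree $\TT$ is not a path, so it contains a vertex of degree at least $3$. I would fix an arbitrary leaf $\lambda$ and walk along the unique path leaving $\lambda$ until reaching the first vertex $b$ of degree at least $3$. Such a $b$ is necessarily encountered before any other leaf, since otherwise the traversed $\lambda$-to-leaf path, all of whose internal vertices have degree $2$, would constitute the whole of $\TT$, forcing $\el=2$. Let $\QQ$ consist of $\lambda$ together with the intervening degree-two vertices, up to but not including $b$, and set $\TT'=\TT-V(\QQ)$. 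Since $\QQ$ is a pendant path hanging off $b$, the graph $\TT'$ is connected and is again a tree containing $b$. Applying the induction hypothesis to $\TT'$ yields $\el-1$ vertex-disjoint paths covering $V(\TT')$, and adjoining $\QQ$ gives $\el$ pairwise vertex-disjoint paths whose vertex sets partition $\XX$.

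The crux, and the one place needing genuine care, is verifying that the number of leaves drops by \emph{exactly} one in passing from $\TT$ to $\TT'$, since this is precisely what pins the final count at $\el$ (neither more nor fewer). Deleting $V(\QQ)$ removes the single leaf $\lambda$ together with some degree-two vertices, which are not leaves, so the only way the count could fail to decrease by exactly one is if $b$ were to become a new leaf in $\TT'$. But deleting $\QQ$ lowers $\deg(b)$ by exactly one, and from $\deg_\TT(b)\ge 3$ we obtain $\deg_{\TT'}(b)\ge 2$, so $b$ remains internal; hence $\TT'$ has exactly $\el-1$ leaves and the induction closes cleanly. Finally, each round deletes at least one vertex via a single leaf-to-branch traversal, so the procedure halts within $|\XX|$ rounds and runs in polynomial time, as required.
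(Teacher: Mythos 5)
Your proof is correct, but it takes a genuinely different route from the paper's. The paper roots \TT at an arbitrary non-leaf, then greedily and top-down repeatedly picks the unmarked node $u$ closest to the root and carves out a path from $u$ down to a leaf of the subtree $\TT_u$, arguing disjointness via the uniqueness of paths in a tree and termination via the marking of leaves. You instead induct on the number of leaves, peeling off the pendant path from a leaf $\lambda$ up to (but excluding) the first branch vertex $b$ of degree at least $3$, and recursing on the remainder. The two arguments buy slightly different things: the paper's version is a single explicit sweep of the tree and needs no case analysis on degrees, whereas your induction makes the count of \emph{exactly} $\el$ paths completely transparent --- the observation that $\deg_{\TT'}(b)\ge 2$, so no new leaf is created, is precisely the point the paper's proof glosses over when it only asserts ``at most $\el$ iterations'' (to get exactly $\el$ one must additionally note that each of its paths contains exactly one leaf). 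One remark on the statement rather than on your argument: as literally written the lemma demands $\EE=\cup_{i\in[\el]}\EE_i$, which is unsatisfiable for $\el\ge 2$ since a partition of $|\XX|$ vertices into $\el$ paths accounts for only $|\XX|-\el$ of the $|\XX|-1$ edges; your reading $\EE_i\subseteq\EE$, matching the paper's definition of a path cover, is the intended one, and the paper's own proof establishes no more than that either.
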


\begin{proof}
 We first make the tree \TT rooted at any arbitrary nonleaf node $r$. We now partition the tree \TT into paths iteratively as follows. Initially every node of the tree \TT is {\em unmarked} and the set of paths \QQ we have is empty. Let $Q_1 = (\XX_1, \EE_1)$ be the path in \TT from the root node to any leaf node. We put $Q_1$ into \QQ and {\em mark} all the nodes in $Q_1$. More generally, in the $i^{th}$ iteration we pick an unmarked node $u$ that is closest to the root $r$, breaking ties arbitrarily, and add any path $\QQ_i$ in \TT from $u$ to any leaf node in the subtree $\TT_u$ rooted at $u$, and {\em mark} all the nodes in $\QQ_i = (\XX_i, \EE_i)$. Since $u$ is unmarked, we claim that every node in $\TT_u$ is unmarked. Indeed, otherwise suppose there is a node $w$ in $\TT_u$ which is already marked. Then there exists two paths from $r$ to $w$ one including $u$ and another avoiding $u$ since $u$ is currently unmarked and $w$ is already marked. This contradicts the fact that \TT is a tree. Hence every node in $\TT_u$ is unmarked. We continue until all the leaf nodes are marked and return \QQ. Since \TT has \el leaves and in every iteration at least one leaf node is marked (since every $\QQ_i$ contains a leaf node), the algorithm runs for at most \el iterations. Notice that, since the algorithm always picks a path consisting of unmarked vertices only, the set of paths in \QQ are pairwise disjoint. We claim that \QQ forms a partition of \TT. Indeed, otherwise there must be a node $x$ in \TT that remains unmarked at the end. From the claim above, we have all the nodes in the subtree $\TT_x$ rooted at $x$ unmarked which includes at least one leaf node of \TT. This contradicts the fact that the algorithm terminates when all the leaf nodes are marked.
\end{proof}

Using \Cref{lem:path_decom}, and the fact that any path can account for at most two leaves, we have that the path cover number of a tree is the same as the number of leaves up to a factor of two.

\begin{lemma}\label{lem:path_leaf}
 Suppose the path cover number of a tree \TT with \el leaves is $k$. Then we have $\nfrac{\el}{2}\le k\le \el$.
\end{lemma}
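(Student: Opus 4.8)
The plan is to prove the two inequalities separately, each by a short argument. The upper bound $k \le \el$ is essentially already in hand: \Cref{lem:path_decom} exhibits an explicit partition of $\TT$ into exactly $\el$ disjoint paths, and since the path cover number $k$ is by definition the \emph{smallest} number of disjoint paths into which $\TT$ can be partitioned, we immediately conclude $k \le \el$. So the only real work is the lower bound.

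For the lower bound $\nfrac{\el}{2} \le k$, I would use a counting argument based on path endpoints. Fix any optimal path cover $\QQ = \{Q_1, \ldots, Q_k\}$ of $\TT$ into $k$ vertex-disjoint paths. The key observation is that every leaf of $\TT$ must be an \emph{endpoint} (as opposed to an internal vertex) of the unique path in $\QQ$ that contains it. Indeed, a leaf $x$ has degree one in $\TT$, and since each $Q_i$ is a subgraph of $\TT$ (its edges are edges of $\TT$), the degree of $x$ inside its path is at most its degree in $\TT$, namely one; an internal vertex of a path has degree two, so $x$ cannot be internal and is therefore an endpoint.

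The counting then finishes the job. Each path $Q_i$ has at most two endpoints (exactly two when it has at least two vertices, and at most one when it is a single vertex). Because the paths are vertex-disjoint, the $\el$ leaves of $\TT$ are distributed among the $k$ paths as endpoints, with each path absorbing at most two of them. Hence $\el \le 2k$, which rearranges to $k \ge \nfrac{\el}{2}$, giving the desired lower bound.

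I do not expect a serious obstacle here; the argument is elementary once the two bounds are separated. The one point that warrants a little care is the justification that each leaf must be an endpoint of its path, which rests crucially on the fact that the covering paths are subgraphs of $\TT$ using only tree edges, so that vertex degrees within any $Q_i$ are bounded above by the corresponding degrees in $\TT$. With that observation in place, both inequalities follow immediately.
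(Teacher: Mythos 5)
Your proof is correct and follows essentially the same route as the paper: the upper bound $k\le\el$ is read off from \Cref{lem:path_decom}, and the lower bound comes from the observation that each path in a cover can contain at most two leaves of $\TT$ (your endpoint/degree argument is just a more detailed justification of that fact). Nothing further is needed.
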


\begin{proof}
 The inequality $\nfrac{\el}{2}\le k$ follows from the fact that any path in \TT can involve at most two leaves in \TT and there exists $k$ paths covering all the leaf nodes. The inequality $k\le \el$ follows from the fact from \Cref{lem:path_decom} that any tree \TT with \el leaves can be partitioned into \el paths.
\end{proof}

\subsection{Algorithmic Results for \PE}

We now present our main algorithmic results. We begin with generalizing the result of \Cref{thm:con} to any single peaked profiles on trees whose path cover number is at most $k$. The idea is to partition the tree into $k$ disjoint paths, use the algorithm from \Cref{thm:con} on each paths to obtain an order of the candidates on each path of the partition, and finally merge these suborders intelligently. We now formalize this idea as follows.

\begin{theorem}\label{thm:pcover_elicit_ub}
 There is a \PE algorithm with query complexity $\BigO(mn\log k)$ for profiles that are single peaked on trees with path cover number at most $k$.
\end{theorem}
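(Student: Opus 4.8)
The plan is to carry out the three-step strategy sketched just before the statement: decompose $\TT$ into $\BigO(k)$ disjoint paths, elicit each voter's preference restricted to each path by invoking \Cref{thm:con}, and then stitch each voter's partial rankings together by repeated binary merging. First I would compute a partition of $\TT$ into disjoint paths $Q_1 = (\XX_1, \EE_1), \ldots, Q_t = (\XX_t, \EE_t)$ with $t = \BigO(k)$; this is done in polynomial time and costs no queries. A clean way is to apply \Cref{lem:path_decom} to obtain a partition into $\el$ paths, where $\el$ is the number of leaves of $\TT$, and then invoke \Cref{lem:path_leaf} to conclude $\el \le 2k$, so that $t = \el$ and $\log t = \BigO(\log k)$.

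For the elicitation step, the crucial observation is that, by the definition of single peakedness on a tree, the restricted profile $\PP(\XX_i)$ is single peaked with respect to the linear order induced by the path $Q_i$. Hence for each $i$ I would run the algorithm of \Cref{thm:con} on $\PP(\XX_i)$, supplying the order induced by $Q_i$ as the known single-peaked axis. This elicits $\succ_j(\XX_i)$ for every voter $j\in[n]$ using $\BigO(|\XX_i|\,n)$ queries, and since the $\XX_i$ partition $\CC$ we have $\sum_i |\XX_i| = m$, so this whole step uses $\BigO(mn)$ queries.

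It remains to recover, for each voter $j$, the full order $\succ_j$ from its restrictions $\succ_j(\XX_1), \ldots, \succ_j(\XX_t)$. Here I would merge the $t$ suborders along a balanced binary tree of depth $\lceil \log t\rceil$, using \Cref{obs:merge_single_peak} at each internal node to reconstruct, from the two already-computed suborders of voter $j$ on disjoint sets $\CC_1$ and $\CC_2$, the order of voter $j$ on $\CC_1\cup\CC_2$ with $\BigO(|\CC_1\cup\CC_2|)$ queries. Across any fixed level of the merge tree the sets being merged are pairwise disjoint subsets of $\CC$, so the per-level query cost for one voter is $\BigO(m)$; with $\BigO(\log t) = \BigO(\log k)$ levels this gives $\BigO(m\log k)$ queries per voter and $\BigO(mn\log k)$ in total, which dominates the $\BigO(mn)$ elicitation cost.

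The main thing to get right is the accounting in the merge step: one must argue that the binary merge tree incurs only a $\log k$ overhead rather than a factor of $k$, which rests precisely on \Cref{obs:merge_single_peak} being linear in the combined size together with the pairwise disjointness of the subsets merged at any fixed level. I expect this level-by-level charging to be the only delicate point. Notably, single peakedness is used only in the per-path elicitation via \Cref{thm:con}; the merging is a purely combinatorial reconstruction of a total order from two of its restrictions and requires no structural assumption on the profile.
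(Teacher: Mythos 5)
Your proposal is correct and follows essentially the same route as the paper's proof: partition the tree into $\BigO(k)$ disjoint paths, elicit each restriction via \Cref{thm:con} for a total of $\BigO(mn)$ queries, and then recover each full order by a balanced binary merge using \Cref{obs:merge_single_peak}, with the level-by-level disjointness argument giving $\BigO(m\log k)$ per voter. The only cosmetic difference is that you obtain the decomposition through \Cref{lem:path_decom} and \Cref{lem:path_leaf} (yielding at most $2k$ paths), whereas the paper takes a partition into at most $k$ paths directly from the definition of path cover number; both give the same asymptotic bound.
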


\begin{proof}
 Since the path cover number is at most $k$, we can partition the tree $\TT = (\CC, \EE)$ into $t$ disjoint paths $\PP_i = (\CC_i, \EE_i), i\in[t]$, where $t$ is at most $k$. We now show that we can elicit any preference $\succ$ which is single peaked on the tree \TT by making $\BigO(m\log t)$ queries which in turn proves the statement. We first find the preference ordering restricted to $\CC_i$ using \Cref{thm:con} by making $\BigO(|\CC_i|)$ queries for every $i\in[t]$. This step needs $\sum_{i\in[t]} \BigO(|\CC_i|) = \BigO(m)$ queries since $\CC_i, i\in[t]$ forms a {\em partition} of \CC. We next merge the $t$ orders $\succ(\CC_i), i\in[t],$ to obtain the complete preference $\succ$ by using a standard divide and conquer approach for $t$-way merging as follows which makes $O(m \log t)$ queries \cite{hopcroft1983data}. 
 
 Initially we have $t$ orders to merge. We arbitrarily pair the $t$ orders into $\lceil \nfrac{t}{2}\rceil$ pairs with at most one of them being singleton (when $t$ is odd). By renaming, suppose the pairings are as follows: $(\succ(\CC_{2i-1}), \succ(\CC_{2i})), i\in[\lfloor\nfrac{t}{2}\rfloor]$. Let us define $\CC_i^\pr = \CC_{2i-1} \cup \CC_{2i}$ for every $i\in[\lfloor\nfrac{t}{2}\rfloor]$ and $\CC_{\lceil\nfrac{t}{2}\rceil}^\pr = \CC_t$ if $t$ is an odd integer. We merge $\succ(\CC_{2i-1})$ and $\succ(\CC_{2i})$ to get $\succ(\CC_i^\pr)$ for every $i\in[\lfloor\nfrac{t}{2}\rfloor]$ using \Cref{obs:merge_single_peak}. The number queries the algorithm makes in this iteration is $\sum_{i\in[\lfloor\nfrac{t}{2}\rfloor]} \BigO(|\CC_{2i-1}| + |\CC_{2i}|) = \BigO(m)$ since $(\CC_i)_{i\in[t]}$ forms a partition of \CC. At the end of the first iteration, we have $\lceil\nfrac{t}{2}\rceil$ orders $\succ(\CC_i^\pr), i\in[\lceil\nfrac{t}{2}\rceil]$ to merge to get $\succ$. The algorithm repeats the step above $\BigO(\log t)$ times to obtain $\succ$ and the query complexity of each iteration is $\BigO(m)$. Thus the query complexity of the algorithms is $\BigO(m + m\log t) = \BigO(m\log k)$.
\end{proof}

Using the fact from \Cref{lem:path_leaf} that a tree with $\el$ leaves can be partitioned into at most $\el$ paths, we can use the algorithm from~\Cref{thm:pcover_elicit_ub} to obtain the following bound on query complexity for preference elicitation in terms of leaves.
 
\begin{corollary}\label{cor:leaves_elicit_ub}
 There is a \PE algorithm with query complexity $\BigO(mn\log \el)$ for profiles that are single peaked on trees with at most $\el$ leaves.
\end{corollary}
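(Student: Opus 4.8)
The plan is to obtain this corollary as an essentially immediate consequence of the path-cover elicitation algorithm of \Cref{thm:pcover_elicit_ub}, by relating the number of leaves to the path cover number. The single observation driving everything is that a tree with few leaves can be covered by few disjoint paths, and the query complexity already established in \Cref{thm:pcover_elicit_ub} is monotone in the path cover number.

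First I would take an arbitrary tree $\TT$ with at most $\el$ leaves on the candidate set, together with a profile \PP that is single peaked on $\TT$. I would invoke \Cref{lem:path_decom} to partition $\TT$ into at most $\el$ pairwise disjoint paths; equivalently, by \Cref{lem:path_leaf}, the path cover number $k$ of $\TT$ satisfies $k\le\el$. Since such a partition is computable in polynomial time, the preconditions of \Cref{thm:pcover_elicit_ub} are met with this value of $k$.

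Next I would simply run the elicitation algorithm of \Cref{thm:pcover_elicit_ub} on $\TT$ and \PP. That algorithm has query complexity $\BigO(mn\log k)$, and because $k\le\el$ we have $\log k\le\log\el$, so the query complexity is bounded by $\BigO(mn\log\el)$, as required. There is no real obstacle here: the corollary is a direct composition of the structural bound $k\le\el$ (from \Cref{lem:path_decom}/\Cref{lem:path_leaf}) with the already-proved path-cover algorithm, and the only thing to check is the harmless monotonicity $\log k \le \log \el$. The one point worth stating explicitly is that the partition into $\el$ paths is produced constructively and in polynomial time, so the resulting elicitation procedure remains an honest algorithm rather than a mere existence statement.
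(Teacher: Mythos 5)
Your proposal is correct and follows exactly the paper's own route: the corollary is derived by combining the bound $k\le\el$ on the path cover number (\Cref{lem:path_decom}, \Cref{lem:path_leaf}) with the $\BigO(mn\log k)$ algorithm of \Cref{thm:pcover_elicit_ub}. Nothing is missing.
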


Finally, if we are given a subset of candidates whose removal makes the single peaked tree a path, then we have an elicitation algorithm that makes $\BigO(mn + nd\log d)$ queries. As before, we first determine the ordering among the candidates on the path (after removing those $d$ candidates) with $\BigO(m-d)$ queries. We then determine the ordering among the rest in $\BigO(d\log d)$ queries using~\Cref{obs:trivial} and merge using~\Cref{obs:merge_single_peak} these two orderings. This leads us to the following result. 

\begin{theorem}\label{thm:pdist_elicit_ub}
 There is a \PE algorithm with query complexity $\BigO(mn + nd\log d)$ for profiles that are single peaked on trees with distance $d$ from path.
\end{theorem}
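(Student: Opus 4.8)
The plan is to split the candidate set into the $m-d$ candidates lying on the path and the $d$ deleted candidates, elicit each part separately with the appropriate subroutine, and then stitch the two suborders together for each voter. First I would obtain (in polynomial time, e.g.\ by computing a longest path of $\TT$ and discarding the off-path vertices, as noted earlier) a set $D\subseteq\CC$ with $|D|=d$ whose deletion from $\TT$ leaves a path $P$ on the surviving $m-d$ candidates. Since ``distance from path'' guarantees that what remains is a single path rather than a forest, $P$ is connected.

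The crux --- and the only place single-peakedness enters --- is the claim that the restricted profile $\PP(\CC\setminus D)$ is single peaked on the \emph{path} $P$. This holds because $P$ is a connected, acyclic subgraph of the tree $\TT$, hence is itself the unique path in $\TT$ between its two endpoints; and single-peakedness of $\PP$ on $\TT$ means, by definition, that the restriction of $\PP$ to the vertex set of every path of $\TT$ is single peaked with respect to the order that path induces. Granting this, I would run the algorithm of \Cref{thm:con} on the path $P$ to elicit $\suc_i(\CC\setminus D)$ for each voter $i\in[n]$, spending $\BigO(m-d)=\BigO(m)$ queries per voter and $\BigO(mn)$ queries overall.

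Next I would elicit each voter's ranking over the small set $D$ directly with the sorting-based procedure of \Cref{obs:trivial}; as $|D|=d$, this costs $\BigO(d\log d)$ queries per voter, i.e.\ $\BigO(nd\log d)$ in total. Finally, for each voter the two known suborders $\suc_i(\CC\setminus D)$ and $\suc_i(D)$ rank the two blocks of a partition of \CC, so I would merge them into the full vote $\suc_i$ using the linear-query merge of \Cref{obs:merge_single_peak}, at a cost of $\BigO(m)$ queries per voter and $\BigO(mn)$ overall. Adding the three phases yields $\BigO(mn)+\BigO(nd\log d)+\BigO(mn)=\BigO(mn+nd\log d)$, as required.

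The individual phases are routine applications of the three cited results; the one point that must be argued carefully is the structural claim of the second paragraph, namely that deleting $D$ leaves a genuine path of $\TT$ (connected, and an induced path whose traversal order is a legitimate path order of $\TT$), since it is exactly this fact that licenses applying \Cref{thm:con} to $P$ with the correct harmonious order.
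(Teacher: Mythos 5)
Your proposal is correct and follows essentially the same route as the paper's proof: delete the $d$ off-path candidates, elicit the path part with \Cref{thm:con}, elicit the deleted part with \Cref{obs:trivial}, and merge via \Cref{obs:merge_single_peak}, giving $\BigO(mn + nd\log d)$ in total. The only difference is that you explicitly justify why the restricted profile is single peaked on the remaining path (via the definition of single-peakedness on trees applied to the path between its endpoints), a point the paper leaves implicit.
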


\begin{proof}
 Let \XX be the smallest set of nodes of a tree $\TT = (\CC, \EE)$ such that $\TT\setminus\XX$, the subgraph of \TT after removal of the nodes in \XX, is a path. We have $|\XX|\le d$. For any preference $\succ$, we make $\BigO(d\log d)$ queries to find $\succ(\XX)$ using \Cref{obs:trivial}, make $\BigO(|\CC\setminus\XX|) = \BigO(m-d)$ queries to find $\succ(\CC\setminus\XX)$ using \Cref{thm:con}, and finally make $\BigO(m)$ queries to find $\succ$ by merging $\succ(\XX)$ and $\succ(\CC\setminus\XX)$ using \Cref{obs:merge_single_peak}. This gives an overall query complexity of $\BigO(mn + nd\log d)$.
\end{proof}

\subsection{Lower Bounds for \PE}

We now turn to query complexity lower bounds for preference elicitation. Our first result is based on a counting argument, showing that the query complexity for preference elicitation in terms of the number of leaves, given by \Cref{cor:leaves_elicit_ub}, is tight up to constant factors. Indeed, let us consider a subdivision of a star \TT with $\el$ leaves and let $t$ denote the distance from the center to the leaves, so that we have a total of $t\el + 1$ candidates. One can show that if the candidates are written out in level order, the candidates that are distance $i$ from the star center can be ordered arbitrarily within this ordering. This tells us that the number of possible preferences $\succ$ that are single peaked on the tree \TT is at least $(\el!)^t$. We obtain the lower bound by using a decision tree argument, wherein we are able to show that it is always possible for an oracle to answer the comparison queries asked by the algorithm in such a way that the total number of possibilities for the preference of the current voter decreases by at most a factor of two. Since the decision tree of the algorithm must entertain at least $(\el!)^t$ leaves to account for all possibilities, we obtain the claimed lower bound.

\begin{theorem}\label{thm:leaves_elicit_lb}
 Let $\TT = (\CC, \EE)$ be a balanced subdivision of a star on $m$ nodes with $\el$ leaves. Then any deterministic \PE algorithm for single peaked profiles on \TT has query complexity $\Omega(mn\log \el)$.
\end{theorem}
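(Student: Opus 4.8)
The plan is to combine an explicit count of the single-peaked preferences on $\TT$ with a decision-tree (adversary) argument. First I would fix coordinates on the tree: call the star center $c_0$, and for each arm $j$ write the vertices of that arm as $c_{j,1}, \ldots, c_{j,t}$ in order of increasing distance from $c_0$, where $t$ is the common leaf-to-center distance, so that $m = t\el + 1$ and each arm contributes one candidate to each of the levels $1, 2, \ldots, t$.

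Next I would identify a large family $S$ of preferences that are single peaked on $\TT$, namely the level-order preferences whose most preferred candidate is $c_0$: place $c_0$ first, then the $\el$ candidates at distance $1$ in an arbitrary order, then the $\el$ candidates at distance $2$ in an arbitrary order, and so on out to distance $t$. The structural claim is that every such preference is single peaked on $\TT$ and that there are exactly $(\el!)^t$ of them (one free permutation of the $\el$ candidates within each of the $t$ positive levels). For the single-peakedness check I would use that every path of a subdivided star either lies inside a single arm or runs from one arm to another through $c_0$. Restricted to such a path, the preference has its peak either at the arm-vertex nearest the center (inside-one-arm case) or at $c_0$ itself (two-arm case); in both cases the level order decreases monotonically as one moves away from the peak along the path, which is exactly single-peakedness. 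The freedom comes from the fact that two candidates on different arms always straddle $c_0$ on every path containing both, so the ``between the peak and the other candidate'' clause of the definition never forces an order between them, which is precisely what gives the $\el!$ independent orderings per level.

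With this family in hand, I would run the standard information-theoretic lower bound for deterministic adaptive algorithms. Allow each of the $n$ voters to hold, independently, an arbitrary preference from $S$; then there are $(\el!)^{tn}$ distinct profiles, all single peaked on $\TT$, and the algorithm must output the correct one on each input. Since every call to \Query has a Boolean answer, a deterministic elicitation algorithm is a binary decision tree whose leaves are the reported profiles; to separate $(\el!)^{tn}$ possibilities it must have at least that many leaves, so its height, which equals the worst-case number of queries, is at least $\log_2\!\big((\el!)^{tn}\big) = tn\log_2(\el!)$. Equivalently, phrased as an adversary, the oracle can answer each comparison so that at most half of the preferences currently consistent for the queried voter are eliminated, and no voter's preference is pinned down until its level-order ambiguity is resolved.

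Finally I would convert this into the stated bound using Stirling's estimate $\log_2(\el!) = \Omega(\el\log\el)$ together with $t\el = m-1 = \Theta(m)$, which yields a query complexity of $\Omega(tn\,\el\log\el) = \Omega(mn\log\el)$. I expect the only genuinely delicate step to be the single-peakedness verification in the second paragraph: one must argue over \emph{all} paths of $\TT$, not merely the arm-to-arm paths through the center, that permuting candidates within a level never violates the condition, and the cleanest route is to track what the peak of the \emph{restricted} preference is on each path and reduce to the ``between the peak and the other candidate'' formulation rather than arguing path by path.
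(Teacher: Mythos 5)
Your proposal is correct and follows essentially the same route as the paper: both identify the $(\el!)^t$ level-order preferences (arbitrary permutation within each distance class from the center) as a family of profiles single peaked on $\TT$, and then apply an information-theoretic lower bound, with your global decision-tree count of $(\el!)^{tn}$ profiles being interchangeable with the paper's per-voter majority-answering adversary that halves the consistent set on each query. Your single-peakedness verification via the peak of the restricted preference on each path is, if anything, spelled out more carefully than in the paper.
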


\begin{proof}
 Suppose the number of candidates $m$ be $(t\el + 1)$ for some integer $t$. Let $c$ be the center of \TT. We denote the shortest path distance between any two nodes $x, y\in\CC$ in \TT by $d(x, y)$. We consider the partition $(\CC_0, \ldots, \CC_t)$ of the set of candidates \CC where $\CC_i = \{ x\in\CC : d(x, c) = i\}$. We claim that the preference $\succ = \pi_0 \succ \pi_1 \succ \cdots \succ \pi_t$ of the set of candidates \CC is single peaked on the tree \TT where $\pi_i$ is any arbitrary order of the candidates in $\CC_i$ for every $0\le i\le t$. Indeed; consider any path $\QQ = (\XX, \EE^\pr)$ in the tree \TT. Let $y$ be the candidate closest to $c$ among the candidates in \XX; that is $y = \argmin_{x\in\XX} d(x,c)$. Then clearly $\succ(\XX)$ is single peaked with respect to the path \QQ having peak at $y$. We have $|\CC_i| = \el$ for every $i\in[t]$ and thus the number of possible preferences $\succ$ that are single peaked on the tree \TT is at least $(\el!)^t$. 
 
 Let \AA be any \PE algorithm for single peaked profiles on the tree \TT. We now describe our oracle to answer the queries that the algorithm \AA makes. For every voter $v$, the oracle maintains the set $\RR_v$ of possible preferences of the voter $v$ which is consistent with the answers to all the queries that the algorithm \AA have already made for the voter $v$. At the beginning, we have $|\RR_v| \ge ( \el!)^t$ for every voter $v$ as argued above. Whenever the oracle receives a query on $v$ for any two candidates $x$ and $y$, it computes the numbers $n_1$ and $n_2$ of orders in $\RR_v$ which prefers $x$ over $y$ and $y$ over $x$ respectively; the oracle can compute the integers $n_1$ and $n_2$ since the oracle has infinite computational power. The oracle answers that the voter $v$ prefers the candidate $x$ over $y$ if and only if $n_1\ge n_2$ and updates the set $\RR_v$ accordingly. Hence, whenever the oracle is queried for a voter $v$, the size of the set $\RR_v$ decreases by a factor of at most two. On the other hand, we must have, from the correctness of the algorithm, $\RR_v$ to be a singleton set when the algorithm terminates for every voter $v$ -- otherwise there exists a voter $v$ (whose corresponding $\RR_v$ is not singleton) for which there exist two possible preferences which are single peaked on the tree \TT and are consistent with all the answers the oracle has given and thus the algorithm \AA fails to output the preference of the voter $v$ correctly. Hence every voter must be queried at least $\Omega(\log((\el!)^t)) = \Omega(t \el\log \el) = \Omega(m\log \el)$ times.
\end{proof}

Since the path cover number of a subdivided star on $\el$ leaves is at least 
$\el/2$, we also obtain the following.

\begin{corollary}\label{cor:pcover_elicit_lb}
 There exists a tree \TT with path cover number $k$ such that any deterministic \PE algorithm for single peaked profiles on \TT has query complexity $\Omega(mn\log k)$.
\end{corollary}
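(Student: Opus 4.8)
The plan is to obtain this lower bound as a direct consequence of \Cref{thm:leaves_elicit_lb}, simply by re-reading the hardness instance constructed there through the lens of the path cover number rather than the number of leaves. First I would take \TT to be exactly the hardness instance of \Cref{thm:leaves_elicit_lb}: a balanced subdivision of a star on $m$ nodes with \el leaves. The entire technical content --- the counting argument giving at least $(\el!)^t$ preferences single peaked on \TT, together with the decision-tree/oracle adversary that halves the number of consistent preferences per query --- is already established there, yielding the bound $\Omega(mn\log\el)$ for this tree. So no new elicitation argument is needed.

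The one quantity I would need to pin down is the path cover number $k$ of a subdivided star with \el leaves, and relate it to \el. On one hand, by \Cref{lem:path_decom} any tree with \el leaves can be partitioned into \el disjoint paths, so $k\le\el$. On the other hand, any collection of disjoint paths covering \TT can place each leaf in at most one path, and a single path has only two endpoints, so it can absorb at most two of the \el leaves; hence any path cover needs at least $\lceil\el/2\rceil$ paths, giving $k\ge\el/2$. This is precisely the content of \Cref{lem:path_leaf}, so I would just invoke it to conclude $\el/2\le k\le\el$.

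Finally I would translate the bound across the change of parameter. Since $\el/2\le k\le\el$, we have $\log k = \Theta(\log\el)$ for $k\ge 2$, so the $\Omega(mn\log\el)$ lower bound of \Cref{thm:leaves_elicit_lb} is the same as $\Omega(mn\log k)$ up to constant factors. For a given target path cover number $k$, choosing a balanced subdivision with $\el = 2k$ leaves (equivalently selecting $t$, since $m = t\el+1$) then furnishes a tree \TT whose path cover number is $\lceil\el/2\rceil = k$ and on which elicitation requires $\Omega(mn\log k)$ queries, which completes the proof.

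There is essentially no serious obstacle here: the corollary is a reparametrization of \Cref{thm:leaves_elicit_lb}, and the only thing to verify is the elementary two-sided bound between the number of leaves and the path cover number, which \Cref{lem:path_leaf} already supplies. The mild subtlety worth stating carefully is just that $\log\el$ and $\log k$ agree up to a constant factor, so that the asymptotic lower bound is genuinely preserved when we switch from the leaf parameter to the path cover parameter.
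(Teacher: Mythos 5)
Your proposal is correct and matches the paper's own argument: the paper derives this corollary from \Cref{thm:leaves_elicit_lb} by observing that the path cover number of a subdivided star with \el leaves is at least $\nfrac{\el}{2}$ (and at most \el by \Cref{lem:path_leaf}), so $\log k = \Theta(\log \el)$ and the bound transfers. Nothing further is needed.
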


Mimicking the level order argument above on a generic tree with $\el$ leaves, and using the connection between path cover and leaves, we obtain lower bounds that are functions of $(n,\el)$ and $(n,k)$, as given below. This will be useful for our subsequent results.

\begin{theorem}\label{thm:pcover_elicit_lb_any}
 Let $\TT = (\CC, \EE)$ be any arbitrary tree with $\el$ leaves and path cover number $k$. Then any deterministic \PE algorithm for single peaked profiles on \TT has query complexity $\Omega(n\el\log \el)$ and $\Omega(nk\log k)$.
\end{theorem}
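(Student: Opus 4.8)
The plan is to reduce both claims to a single counting statement and then reuse the adversary argument from \Cref{thm:leaves_elicit_lb}. First note that the two bounds are equivalent up to constants: by \Cref{lem:path_leaf} we have $\nfrac{\el}{2}\le k\le\el$, so $k\log k = \Theta(\el\log\el)$ and hence $\Omega(n\el\log\el)=\Omega(nk\log k)$. It therefore suffices to prove $\Omega(n\el\log\el)$, and since the oracle can treat each voter independently (maintaining a separate set of surviving preferences per voter), it is enough to show that any algorithm must make $\Omega(\el\log\el)$ queries to a single voter.

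The heart of the argument is to exhibit at least $\el!$ distinct preferences that are single peaked on $\TT$. The crude ``level order'' count $\prod_i|\CC_i|!$ used in \Cref{thm:leaves_elicit_lb} does not suffice on a generic tree (a caterpillar, for instance, can have all levels of size at most two), so I would instead place all the leaves at the bottom of the order. Concretely, root $\TT$ at an arbitrary internal node $r$, let $\sigma$ be a fixed ordering of the internal nodes by non-decreasing depth (ties broken arbitrarily), and for each permutation $\tau$ of the $\el$ leaves consider the preference $\succ_\tau$ that ranks the internal nodes first according to $\sigma$ and then the leaves according to $\tau$. I claim every $\succ_\tau$ is single peaked on $\TT$. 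Fix a nontrivial path $\QQ$ of $\TT$ and let $c$ be its shallowest node; since the shallowest node of a nontrivial path has a child on the path and the root is internal, $c$ is internal, and being the shallowest internal node on $\QQ$ and preferred to every leaf, it is the $\succ_\tau$-peak among the nodes of $\QQ$. Moving away from $c$ along either side of $\QQ$ the depth strictly increases, so the internal nodes on that side appear in $\sigma$ in strictly decreasing order of preference (no ties arise within a single side), while the only possible leaf on that side is an endpoint of $\QQ$, which lies in the $\tau$-block and is thus least preferred on its side. Crucially, a path in a tree meets the leaf set in at most its two endpoints, which fall on opposite sides of the peak $c$, and single-peakedness never constrains the relative order of two candidates on opposite sides of the peak; this is precisely what lets $\tau$ range over all $\el!$ permutations. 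Hence each $\succ_\tau$ is single peaked on $\TT$, giving $\el!$ distinct such preferences.

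With this family in hand the bound follows as in \Cref{thm:leaves_elicit_lb}. For each voter $v$ the oracle maintains the set $\RR_v$ of preferences single peaked on $\TT$ that are consistent with the answers so far; initially $|\RR_v|\ge\el!$. On each query the oracle answers according to the majority of the surviving preferences, so $|\RR_v|$ shrinks by a factor of at most two per query, whereas correctness forces $\RR_v$ to become a singleton at termination. Thus each voter must be queried at least $\log_2(\el!)=\Omega(\el\log\el)$ times, for a total of $\Omega(n\el\log\el)$; by the first paragraph this is also $\Omega(nk\log k)$.

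The main obstacle, and the only genuinely new ingredient beyond \Cref{thm:leaves_elicit_lb}, is the counting step: on an arbitrary tree one cannot rely on any single level being large, so the $\el!$ single peaked preferences must be produced differently. The observation that makes it work is that every path of a tree meets the leaves only at its (at most two) endpoints, which always lie on opposite sides of that path's peak; this decouples the ordering of the leaves from all single-peaked constraints and thereby yields the full $\el!$ degrees of freedom.
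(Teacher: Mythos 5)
Your proposal is correct and follows essentially the same route as the paper: the paper also constructs $\el!$ single peaked preferences by ranking the internal nodes in level order from an arbitrary internal root and placing the leaves at the bottom in an arbitrary order, then invokes the halving oracle from \Cref{thm:leaves_elicit_lb} and derives the path-cover bound via \Cref{lem:path_leaf}. Your verification of single-peakedness (in particular the observation that a path meets the leaf set only at its endpoints, which lie on opposite sides of the peak) is spelled out more carefully than in the paper, but the underlying construction and argument are the same.
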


\begin{proof}
 Let \XX be the set of leaves in \TT. We choose any arbitrary nonleaf node $r$ as the root of \TT. We denote the shortest path distance between two candidates $x, y\in\CC$ in the tree \TT by $d(x,y)$. Let $t$ be the maximum distance of a node from $r$ in \TT; that is $t = \max_{y\in\CC\setminus\XX} d(r, x)$. We partition the candidates in $\CC\setminus\XX$ as $(\CC_0, \CC_1, \ldots, \CC_t)$ where $\CC_i = \{y\in\CC\setminus\XX : d(r, y) = i\}$ for $0\le i\le t$. We claim that the preference $\succ = \pi_0 \succ \pi_1 \succ \cdots \succ \pi_t \succ \pi$ of the set of candidates \CC is single peaked on the tree \TT where $\pi_i$ is any arbitrary order of the candidates in $\CC_i$ for every $0\le i\le t$ and $\pi$ is an arbitrary order of the candidates in $\CC\setminus\XX$. Indeed, otherwise consider any path $\QQ = (\YY, \EE^\pr)$ in the tree \TT. Let $y$ be the candidate closest to $r$ among the candidates in \YY; that is $y = \argmin_{x\in\YY} d(x,r)$. Then clearly $\succ(\YY)$ is single peaked with respect to the path \QQ having peak at $y$. We have the number of possible preferences $\succ$ that are single peaked on the tree \TT is at least $|\XX|! = \el!$. Again using the oracle same as in the proof of \Cref{thm:leaves_elicit_lb}, we deduce that any \PE algorithm \AA for profiles that are single peaked on the tree \TT needs to make $\Omega(n\el\log \el)$ queries. The bound with respect to the path cover number now follows from \Cref{lem:path_leaf}.
\end{proof}

The following results can be obtained simply by applying~\Cref{thm:pcover_elicit_lb_any} on particular graphs. For instance, we use the fact that stars have $(m-1)$ leaves and have pathwidth one to obtain the first part of \Cref{cor:pwidth_elicit_lb}, while appealing to complete binary trees that have $O(m)$ leaves and pathwidth $O(\log m)$ for the second part. These examples also work in the context of maximum degree, while for diameter we use stars and caterpillars with a central path of length $m/2$ in \Cref{cor:dia_elicit_lb}.

\begin{corollary}\label{cor:maxdeg_elicit_lb}
 There exist two trees \TT and $\TT^\pr$ with maximum degree $\Delta=3$ and $m-1$ respectively such that any deterministic \PE algorithm for single peaked profiles on \TT and $\TT^\pr$ respectively has query complexity $\Omega(mn\log m)$.
\end{corollary}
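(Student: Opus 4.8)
The plan is to exhibit explicit trees realizing the two extreme values of the maximum degree while still keeping $\Theta(m)$ leaves, and then to invoke the leaf-based lower bound of \Cref{thm:pcover_elicit_lb_any}. Recall that \Cref{thm:pcover_elicit_lb_any} gives a query complexity lower bound of $\Omega(n\el\log\el)$ for any tree with $\el$ leaves. Hence it suffices to find, for each prescribed value of $\Delta$, a tree with $\el = \Theta(m)$ leaves and that maximum degree, since in that regime $\Omega(n\el\log\el) = \Omega(mn\log m)$.

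For the case $\Delta = m-1$, I would take $\TT^\pr$ to be the star on $m$ nodes. The center vertex is adjacent to all $m-1$ other vertices, so its degree is $m-1$ (hence $\Delta = m-1$), and each of the remaining $m-1$ vertices is a leaf. Applying \Cref{thm:pcover_elicit_lb_any} with $\el = m-1$ then yields a lower bound of $\Omega(n(m-1)\log(m-1)) = \Omega(mn\log m)$.

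For the case $\Delta = 3$, I would take $\TT$ to be a complete binary tree on $m$ nodes. Every internal node has exactly two children together with at most one parent, so its degree is at most three; the root is the sole exception, having only two children and hence degree two, which does not affect the maximum. Thus $\Delta = 3$. A complete binary tree on $m$ nodes has $\Theta(m)$ leaves (roughly $(m+1)/2$ of them). Applying \Cref{thm:pcover_elicit_lb_any} with $\el = \Theta(m)$ then gives $\Omega(n\el\log\el) = \Omega(mn\log m)$.

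Both parts reduce to a single application of the already-established lower bound, so I do not anticipate any substantial obstacle. The only points to verify are the (routine) bookkeeping of the leaf count and the maximum degree for the two canonical trees, both of which are immediate from their definitions; the slight care needed is merely to observe that the exceptional low-degree root of the complete binary tree does not disturb the claimed maximum degree, and that the star's leaves number $m-1 = \Theta(m)$ so the logarithmic factor is genuinely $\log m$.
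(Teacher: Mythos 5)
Your proof is correct and follows essentially the same route as the paper: both invoke \Cref{thm:pcover_elicit_lb_any} on a complete binary tree (maximum degree $3$, $\Theta(m)$ leaves) and on a star (maximum degree $m-1$, with $m-1$ leaves) to obtain the $\Omega(mn\log m)$ bound in each case.
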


\begin{proof}
 Using \Cref{thm:pcover_elicit_lb_any}, we know that any \PE algorithm for profiles which are single peaked on a complete binary tree has query complexity $\Omega(mn\log m)$ since a complete binary tree has $\Omega(m)$ leaves. The result now follows from the fact that the maximum degree $\Delta$ of a node is three for any binary tree. The case of $\Delta = m-1$ follows immediately from \Cref{thm:pcover_elicit_lb_any} applied on stars.
\end{proof}

\begin{corollary}\label{cor:dia_elicit_lb}
 There exists two trees \TT and $\TT^\pr$ with diameters $\omega=2$ and $\omega=\nfrac{m}{2}$ respectively such that any deterministic \PE algorithm for profiles which are single peaked on \TT and $\TT^\pr$ respectively has query complexity $\Omega(mn\log m)$.
\end{corollary}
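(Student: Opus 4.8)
The plan is to derive both bounds directly from \Cref{thm:pcover_elicit_lb_any}, which asserts that any deterministic \PE algorithm on a tree with $\el$ leaves has query complexity $\Omega(n\el\log\el)$. Consequently it suffices, in each of the two cases, to exhibit a tree whose diameter takes the prescribed value but which still has $\el = \Omega(m)$ leaves; substituting $\el = \Omega(m)$ into the theorem then gives $\Omega(n\el\log\el) = \Omega(mn\log m)$, exactly as in the proof of \Cref{cor:maxdeg_elicit_lb}.

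For the case $\omega = 2$, I would let \TT be the star on $m$ vertices, consisting of a center together with $m-1$ leaves. Every path in a star has at most two edges (leaf--center--leaf), so its diameter is exactly $2$, whereas it has $m-1 = \Omega(m)$ leaves. Applying \Cref{thm:pcover_elicit_lb_any} with $\el = m-1$ then yields the claimed $\Omega(mn\log m)$ lower bound.

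For the case $\omega = \nfrac{m}{2}$, I would construct $\TT^\pr$ as a caterpillar. Take a central path $v_0 - v_1 - \cdots - v_p$ with $p = \nfrac{m}{2}$ edges, and attach the remaining $m - (p+1) = \nfrac{m}{2} - 1$ vertices as leaves, one to each of the internal spine vertices $v_1, \ldots, v_{p-1}$. These attached leaves together with the two path endpoints $v_0, v_p$ account for at least $\nfrac{m}{2} + 1 = \Omega(m)$ leaves, so \Cref{thm:pcover_elicit_lb_any} again gives $\Omega(mn\log m)$, provided we confirm that the diameter is as claimed.

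The single point needing verification is that this caterpillar has diameter exactly $\nfrac{m}{2}$. Each attached leaf lies at distance one from an internal spine vertex, so a longest leaf-to-leaf route through leaves hanging off $v_i$ and $v_j$ (with $i<j\le p-1$) has length $(j-i)+2 \le p$, and the routes that involve an endpoint $v_0$ or $v_p$ are no longer than this; hence the spine length $p = \nfrac{m}{2}$ governs the diameter. This bookkeeping is the only, and entirely routine, obstacle --- everything else is a direct invocation of \Cref{thm:pcover_elicit_lb_any}.
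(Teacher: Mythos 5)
Your proposal is correct and is essentially identical to the paper's own proof, which likewise invokes \Cref{thm:pcover_elicit_lb_any} on a star (for $\omega=2$) and on a caterpillar with a central path of length $\nfrac{m}{2}$ (for $\omega=\nfrac{m}{2}$), both of which have $\Omega(m)$ leaves. The only difference is that you explicitly verify the diameter of the caterpillar, a routine check the paper leaves implicit.
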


\begin{proof}
 The $\omega=2$ and $\omega=\nfrac{m}{2}$ cases follow from \Cref{thm:pcover_elicit_lb_any} applied on star and caterpillar graphs with a central path of length $\nfrac{m}{2}$ respectively.
\end{proof}

We next consider the parameter pathwidth of the underlying single peaked tree. We immediately get the following result for \PE on trees with pathwidths one or $\log m$ from \Cref{thm:pcover_elicit_lb_any} and the fact that the pathwidths of a star and a complete binary tree are one and $\log m$ respectively.

\begin{corollary}\label{cor:pwidth_elicit_lb}
 There exist two trees \TT and $\TT^\pr$ with pathwidths one and $\log m$ respectively such that any deterministic \PE algorithm for single peaked profiles on \TT and $\TT^\pr$ respectively has query complexity $\Omega(mn\log m)$.
\end{corollary}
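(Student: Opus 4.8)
The plan is to derive both bounds directly from \Cref{thm:pcover_elicit_lb_any} by exhibiting, for each prescribed pathwidth value, a concrete tree on $m$ candidates whose number of leaves is $\Theta(m)$. Recall that \Cref{thm:pcover_elicit_lb_any} gives, for any tree with \el leaves, a lower bound of $\Omega(n\el\log\el)$ on the query complexity of \PE. Hence it suffices to locate trees with the desired pathwidth that have $\Omega(m)$ leaves; substituting $\el = \Theta(m)$ then yields $\Omega(n\cdot\Theta(m)\cdot\log\Theta(m)) = \Omega(mn\log m)$, as required. In effect the entire difficulty of the lower bound (the counting argument together with the adversarial decision-tree oracle) has already been discharged in \Cref{thm:leaves_elicit_lb,thm:pcover_elicit_lb_any}, so this corollary is purely an instantiation of that theorem on two well-chosen families.

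For the pathwidth-one case, I would take \TT to be the star on $m$ nodes, that is, a center $c$ adjacent to all $m-1$ remaining candidates. A star has exactly $m-1$ leaves, and it admits a path decomposition whose bags are the pairs $\{c, v\}$ ranging over each leaf $v$ in turn; every edge and vertex is covered and each bag has size two, so the pathwidth is one. Plugging $\el = m-1$ into \Cref{thm:pcover_elicit_lb_any} gives query complexity $\Omega(n(m-1)\log(m-1)) = \Omega(mn\log m)$.

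For the pathwidth-$\log m$ case, I would take $\TT^\pr$ to be a complete binary tree on $m$ nodes. Such a tree has $\Theta(m)$ leaves (roughly half of its nodes are leaves), and it is a standard fact that the pathwidth of a complete binary tree of height $h$ is $\Theta(h) = \Theta(\log m)$. Applying \Cref{thm:pcover_elicit_lb_any} with $\el = \Theta(m)$ again produces $\Omega(mn\log m)$. The only step that is not entirely mechanical is verifying these two pathwidth values, but both are classical: the star decomposition above is immediate, and the logarithmic pathwidth of a balanced binary tree follows from well-known results on the path (interval) width of complete trees. I therefore do not anticipate any genuine obstacle, since the substantive argument is inherited wholesale from \Cref{thm:pcover_elicit_lb_any}.
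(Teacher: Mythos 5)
Your proposal is correct and follows exactly the paper's own route: the paper likewise instantiates \Cref{thm:pcover_elicit_lb_any} on a star (pathwidth one, $m-1$ leaves) and on a complete binary tree (pathwidth $\Theta(\log m)$, $\Theta(m)$ leaves) to obtain the $\Omega(mn\log m)$ bound in both cases. Nothing further is needed.
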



Our final result, which again follows from \Cref{thm:pcover_elicit_lb_any} applied of caterpillar graphs with a central path of length $m-d$, shows that the bound in~\Cref{thm:pdist_elicit_ub} is tight. 

\begin{theorem}\label{thm:pdist_elicit_lb}
 For any integers $m$ and $d$ with $1\le d\le\nfrac{m}{4}$, there exists a tree \TT with distance $d$ from path such that any deterministic \PE algorithm for profiles which are single peaked on \TT has query complexity $\Omega(mn + nd\log d)$.
\end{theorem}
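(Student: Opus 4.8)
The plan is to exhibit a single caterpillar $\TT$ on $m$ vertices that is simultaneously ``path-like enough'' to force the $\Omega(mn)$ term and ``bushy enough'' to force the $\Omega(nd\log d)$ term, and then run the adversary argument of \Cref{thm:leaves_elicit_lb} essentially verbatim. Concretely, I would take a spine consisting of a path on the vertices $s_1, s_2, \ldots, s_{m-d}$ and attach all $d$ pendant leaves $\ell_1, \ldots, \ell_d$ to the single internal spine vertex $s_2$. Since $d \le \nfrac{m}{4}$, the spine has $m-d \ge \nfrac{3m}{4}$ vertices, so $s_2$ is genuinely internal. The first routine check is that the distance from path is \emph{exactly} $d$: deleting the $d$ legs leaves the spine, which is a path, so the distance is at most $d$; conversely, if a deletion set $S$ yields a connected path and $s_2 \notin S$, then $s_2$ retains degree at most two, so at least $d$ of its $d+2$ neighbours lie in $S$, whence $|S| \ge d$; and if $s_2 \in S$, removing $s_2$ splits $\TT$ into $d+2$ components, forcing strictly more than $d$ deletions. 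Hence the distance is exactly $d$.

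The heart of the argument is to count the preferences single peaked on $\TT$. I would use the characterization that $\succ$ is single peaked on $\TT$ if and only if it is a linear extension of $\TT$ rooted at its top candidate (the peak): on any path, the vertex closest to the global peak is the local peak, and single-peakedness on that path is equivalent to preferring every ancestor to its descendants. Using this, I build an explicit family by decoupling the two sources of freedom. Take any of the $2^{m-d-1}$ preferences that are single peaked on the spine viewed as a path, and append the $d$ legs at the very bottom in any of the $d!$ possible orders. Each resulting preference is single peaked on $\TT$, because the only additional rooted-tree constraints are $s_2 \succ \ell_i$, which hold since the legs are the least preferred candidates. These $2^{m-d-1}\cdot d!$ preferences are pairwise distinct (they differ either on the spine part or on the leg part), so the number $N$ of preferences single peaked on $\TT$ satisfies $N \ge 2^{m-d-1}\cdot d!$.

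Finally I would invoke the oracle/decision-tree argument exactly as in \Cref{thm:leaves_elicit_lb}: for each voter the oracle maintains the set $\RR_v$ of single-peaked preferences consistent with the answers given so far, initially of size at least $N$, answers each comparison query by whichever side is consistent with the majority of $\RR_v$ so that $|\RR_v|$ shrinks by a factor of at most two per query, and correctness of any \PE algorithm forces $\RR_v$ to be a singleton at termination. Thus each voter must be queried at least $\log_2 N \ge (m-d-1) + \log_2(d!)$ times. Summing over the $n$ voters and using $m-d-1 = \Omega(m)$ (since $d \le \nfrac{m}{4}$) together with $\log_2(d!) = \Omega(d\log d)$ yields the claimed bound $\Omega(mn + nd\log d)$.

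I expect the main obstacle to be the counting step, specifically ensuring the lower bound is a \emph{product} $2^{\Omega(m)}\cdot d!$ rather than a sum of the two separate contributions. The $d!$ factor alone would already follow from \Cref{thm:pcover_elicit_lb_any}, since this caterpillar has $d+2 = \Theta(d)$ leaves (giving $\Omega(nd\log d)$), and the $2^{\Omega(m)}$ factor alone is the single-peaked-on-path bound coming from the long spine; the delicate point is realizing both simultaneously within one family of preferences, which is what the ``legs at the bottom'' construction achieves. Verifying the exact ``$=d$'' distance-from-path and checking distinctness are comparatively routine.
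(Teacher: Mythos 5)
Your proof is correct, and it uses the same hard instance as the paper — a caterpillar with a spine of $m-d$ vertices and $d$ legs, together with the family of preferences that order the spine in a single-peaked way and then list the legs arbitrarily at the bottom — but it extracts the lower bound differently. The paper simply asserts that eliciting the spine part costs $\Omega(m-d)$ queries by the single-peaked path lower bound of Conitzer and that eliciting the leg part costs $\Omega(d\log d)$ queries by the sorting lower bound, and adds the two; this leaves implicit the claim that the two kinds of queries cannot help each other (e.g., that a comparison between a spine candidate and a leg cannot do double duty). Your route instead counts the family — at least $2^{m-d-1}\cdot d!$ preferences single peaked on $\TT$, which is exactly the right product structure — and then runs the halving oracle of \Cref{thm:leaves_elicit_lb} once, obtaining $\log_2\bigl(2^{m-d-1}d!\bigr) = \Omega(m+d\log d)$ queries per voter in a single stroke. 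This is more self-contained (it does not invoke the external path lower bound), handles mixed queries automatically, and also verifies two details the paper glosses over: that appending the legs at the bottom really does satisfy every path constraint of single-peakedness on $\TT$ (via the rooted-linear-extension characterization), and that the constructed tree has distance \emph{exactly} $d$ from a path. The only cost is that your argument is tied to the specific caterpillar with all legs at one spine vertex, whereas the paper's statement works for any caterpillar with central path of length $m-d$; this makes no difference to the theorem, which only requires existence of one such tree.
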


\begin{proof}
 Consider the caterpillar graph where the length of the central path \QQ is $m-d$; there exists such a caterpillar graph since $d\le\nfrac{m}{4}$. Consider the order $\succ = \pi \succ \sigma$ of the set of candidates \CC where $\pi$ is an order of the candidates in \QQ which is single peaked on \QQ and $\sigma$ is any order of the candidates in $\CC\setminus\QQ$. Clearly, $\succ$ is single peaked on the tree \TT. Any elicitation algorithm \AA needs to make $\Omega(m-d)$ queries involving only the candidates in \QQ to elicit $\pi$ due to \cite{Conitzer09} and $\Omega(d\log d)$ queries to elicit $\sigma$ due to sorting lower bound for every voter. This proves the statement.
\end{proof}

\section{Results for Weak Condorcet Winner}

In this section, we present our results for the query complexity for finding a weak Condorcet winner in profiles that are single peaked on trees. 

\subsection{Algorithmic Results for Weak Condorcet Winner}

We now show that we can find a weak Condorcet winner of profiles that are single peaked on trees using fewer queries than the number of queries needed to find the profile itself. We note that if a Condorcet winner is guaranteed to exist for a profile, then it can be found using $\BigO(mn)$ queries --- we pit an arbitrary pair of candidates $x,y$ and use $\BigO(n)$ queries to determine if $x$ defeats $y$. We push the winning candidate forward and repeat the procedure, clearly requiring at most $m$ rounds. Now, if a profile is single peaked with respect to a tree, and there are an odd number of voters, then we have a Condorcet winner and the procedure that we just described would work. Otherwise, we simply find a Condorcet winner among the first $(n-1)$ voters. It can be easily shown that such a winner is one of the weak Condorcet winners for the overall profile, and we therefore have the following results. We begin with the following general observation.

\begin{observation}\label{obs:cond_trivial_ub}
 Let \PP be a profile where a Condorcet winner is guaranteed to exist. Then we can find the Condorcet winner of \PP by making $\BigO(mn)$ queries.
\end{observation}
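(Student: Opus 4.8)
The plan is to run a single-elimination ``king of the hill'' tournament over the $m$ candidates, spending only $\BigO(n)$ queries per pairwise contest. First I would fix an arbitrary enumeration $c_1, \ldots, c_m$ of \CC and maintain a \emph{current frontrunner}, initialized to $c_1$. In round $i$ (for $i = 2, \ldots, m$) I compare the frontrunner $f$ against the challenger $c_i$: I call \Query($f \succ_\el c_i$) for every voter $\el \in [n]$, count the number $N_\EE(f, c_i)$ of voters who prefer $f$, and observe that $N_\EE(c_i, f) = n - N_\EE(f, c_i)$ because every vote is a complete order. I declare that $c_i$ defeats $f$ in pairwise election precisely when $N_\EE(c_i, f) > N_\EE(f, c_i)$, and in that case I replace the frontrunner by $c_i$; otherwise I retain $f$. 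After round $m$ I return the surviving frontrunner.

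For correctness, I would argue that the survivor is exactly the Condorcet winner $c$, which is guaranteed to exist by hypothesis. The key invariant is that $c$ defeats every other candidate in pairwise election, by definition of a Condorcet winner. Hence, once $c$ becomes the frontrunner in some round it is never displaced, and in the (unique) round in which $c$ appears as the challenger it necessarily defeats whatever the current frontrunner happens to be and takes over. Since every candidate other than $c_1$ serves as a challenger exactly once, $c$ is installed as the frontrunner no later than its own round and remains so until termination; therefore the algorithm outputs $c$.

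For the query bound, I would note that the algorithm conducts $m-1$ contests, each consuming exactly $n$ queries, and that the pairs of candidates compared across different rounds are distinct, since every contest pits the current frontrunner against a fresh challenger. Consequently all the $(\el, x, y)$ tuples queried are distinct, and the total is $(m-1)n = \BigO(mn)$, as claimed.

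There is essentially no hard step here; the only points requiring care are the observation that $N_\EE(f, c_i) + N_\EE(c_i, f) = n$ (so a single batch of $n$ queries decides a pairwise contest, with no need to query both directions) and the invariance argument guaranteeing that the Condorcet winner, once reached, is never dethroned regardless of the enumeration order.
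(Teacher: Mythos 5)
Your proposal is correct and is essentially the paper's own argument: the paper maintains a shrinking set of potential Condorcet winners and eliminates at least one candidate per $\BigO(n)$-query pairwise contest, which is exactly your frontrunner/challenger tournament with $m-1$ contests and the same invariant that the Condorcet winner, defeating every opponent, can never be eliminated. The only cosmetic difference is that the paper discards whichever of the two compared candidates fails to defeat the other (possibly both, in a tie), whereas you always retain the incumbent on a tie; both variants are sound for the same reason.
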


\begin{proof}
 For any two candidates $x, y\in\CC$ we find whether $x$ defeats $y$ or not by simply asking all the voters to compare $x$ and $y$; this takes $\BigO(n)$ queries. The algorithms maintains a set \SS of candidates which are {\em potential} Condorcet winners. We initialize \SS to \CC. In each iteration we pick any two candidates $x, y\in\SS$ from \SS, remove $x$ from \SS if $x$ does not defeat $y$ and vice versa using $\BigO(n)$ query complexity until \SS is singleton. After at most $m-1$ iterations, the set \SS will be singleton and contain only Condorcet winner since we find a candidate which is not a Condorcet winner in every iteration and thus the size of the set \SS decreases by at least one in every iteration. This gives a query complexity bound of $\BigO(mn)$.
\end{proof}

Using \Cref{obs:cond_trivial_ub} we now develop a \CW algorithm with query complexity $\BigO(mn)$ for profiles that are single peaked on trees.

\begin{theorem}\label{thm:cw_gen_ub}
 There is a \CW algorithm with query complexity $\BigO(mn)$ for single peaked profiles on trees.
\end{theorem}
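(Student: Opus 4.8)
The plan is to reduce the weak Condorcet winner problem to the Condorcet winner problem already solved in \Cref{obs:cond_trivial_ub}, exploiting the guaranteed existence of a weak Condorcet winner for profiles single peaked on trees together with a parity argument. I would split into two cases according to the parity of $n$.

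First, suppose $n$ is odd. Since \PP is single peaked on the tree \TT, a weak Condorcet winner is guaranteed to exist (c.f.~\cite{demange1982single}). For every pair of distinct candidates $x,y$ we have $N_\PP(x,y)+N_\PP(y,x)=n$, which is odd, so $N_\PP(x,y)\ne\nfrac{n}{2}$ for every such pair; hence any weak Condorcet winner $c$, which satisfies $N_\PP(c,x)\ge\nfrac{n}{2}$, in fact satisfies $N_\PP(c,x)>\nfrac{n}{2}$ for every $x\ne c$ and is therefore a (strong) Condorcet winner. Consequently \PP has a Condorcet winner, and \Cref{obs:cond_trivial_ub} finds it using $\BigO(mn)$ queries.

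Next, suppose $n$ is even. Consider the sub-profile $\PP^\pr$ consisting of the first $n-1$ voters. Single-peakedness on \TT is a per-voter condition, so $\PP^\pr$ is also single peaked on \TT and now has an odd number of voters; by the previous case it has a Condorcet winner $c$, which \Cref{obs:cond_trivial_ub} finds using $\BigO(m(n-1))=\BigO(mn)$ queries. It then remains to argue that $c$ is a weak Condorcet winner of the full profile \PP. For every $x\ne c$ we have $N_{\PP^\pr}(c,x)>\nfrac{(n-1)}{2}$; since this count is an integer and $n-1$ is odd, it follows that $N_{\PP^\pr}(c,x)\ge\nfrac{n}{2}$. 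Adding back the $n$-th voter can only keep or increase this count, so $N_\PP(c,x)\ge\nfrac{n}{2}$ for every $x\ne c$, which is precisely the condition for $c$ to be a weak Condorcet winner of \PP.

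The core of the argument is routine; the step that most needs care is the parity bookkeeping in the even case, specifically converting the strict inequality $N_{\PP^\pr}(c,x)>\nfrac{(n-1)}{2}$ over the $n-1$ (odd) voters into $N_\PP(c,x)\ge\nfrac{n}{2}$ over all $n$ voters. I would also remark that the queries asked of the first $n-1$ voters while invoking \Cref{obs:cond_trivial_ub} are reused, so the total query complexity remains $\BigO(mn)$, and that every step (splitting off one voter, running the Condorcet procedure) is clearly carried out in polynomial time.
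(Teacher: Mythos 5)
Your proposal is correct and follows essentially the same route as the paper's proof: handle odd $n$ via the parity observation that a weak Condorcet winner must be a Condorcet winner and invoke \Cref{obs:cond_trivial_ub}, and for even $n$ drop one voter, find the Condorcet winner of the remaining odd-size profile, and argue it remains a weak Condorcet winner of the full profile. The only cosmetic difference is in the final parity step, where the paper reasons about the pairwise margin being an even integer of absolute value at least two while you reason about the integer count $N_{\PP^\pr}(c,x)$ exceeding the half-integer $\nfrac{(n-1)}{2}$; these are equivalent.
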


\begin{proof}
 Let $\PP = (\succ_i)_{i\in[n]}$ be a profile which is single peaked on a tree \TT. If $n$ is an odd integer, then we know that there exists a Condorcet winner in \PP since no two candidates can tie and there always exists at least one weak Condorcet winner in every single peaked profile on trees. Hence, if $n$ is an odd integer, then we use \Cref{obs:cond_trivial_ub} to find a weak Condorcet winner which is the Condorcet winner too. Hence let us now assume that $n$ is an even integer. Notice that $\PP_{-1} = (\succ_2, \ldots, \succ_n)$ is also single peaked on \TT and has an odd number of voters and thus has a Condorcet winner. We use \Cref{obs:cond_trivial_ub} to find the Condorcet winner $c$ of $\PP_{-1}$ and output $c$ as a weak Condorcet winner of \PP. We claim that $c$ is a weak Condorcet winner of \PP. Indeed otherwise there exists a candidate $x$ other than $c$ who defeats $c$ in \PP. Since $n$ is an even integer, $x$ must defeat $c$ by a margin of at least two (since all pairwise margins are even integers) in \PP. But then $x$ also defeats $c$ by a margin of at least one in $\PP_{-1}$. This contradicts the fact that $c$ is the Condorcet winner of $\PP_{-1}$.
\end{proof}

For the special case of single peaked profiles, we can do even better. Here we take advantage of the fact that a ``median candidate''~\cite{mas1995microeconomic}  is guaranteed to be a weak Condorcet winner. We make $\OO(\log m)$ queries per vote to find the candidates placed at the first position of all the votes using the algorithm in \cite{Conitzer09} and find a median candidate to have an algorithm for finding a weak Condorcet winner. If a profile is single peaked (on a path), then there is a \CW algorithm with query complexity $\BigO(n\log m)$ as shown below. Let us define the frequency $f(x)$ of a candidate $x\in\CC$ to be the number of votes where $x$ is placed at the first position. Then we know that a median candidate according to the single peaked ordering of the candidates along with their frequencies as defined above is a weak Condorcet winner for single peaked profiles \cite{mas1995microeconomic}.

\begin{theorem}\label{thm:con_sp_ub}
 There is a \CW algorithm with query complexity $\BigO(n\log m)$ for single peaked profiles (on a path).
\end{theorem}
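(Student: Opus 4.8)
The plan is to combine the known single-peaked axis with the median-voter characterisation of weak Condorcet winners. Since the input tree \TT is a path, it directly supplies a single-peaked order $\succ^\pr = c_1 \succ^\pr c_2 \succ^\pr \cdots \succ^\pr c_m$ of the candidates, so the harmonious ordering is available for free. The only query-intensive step is to locate, for each voter $v$, the \emph{peak} (top-ranked candidate) of $\succ_v$ using $\BigO(\log m)$ queries. The key structural fact is that in a single-peaked vote, for any two axis-adjacent candidates $c_i$ and $c_{i+1}$ the voter prefers whichever lies closer to her peak: if the peak sits at a position $\le i$ then $c\succ^\pr c_i\succ^\pr c_{i+1}$ forces $c_i \succ_v c_{i+1}$ by the definition of single-peakedness (with $c$ the peak), and symmetrically if the peak sits at a position $\ge i+1$ then $c_{i+1}\succ_v c_i$. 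Hence a single comparison \Query($c_i \succ_v c_{i+1}$) reveals on which side of the pair the peak lies, and I would run a binary search over the axis: maintain a window $[l,r]$ known to contain the peak, query the midpoint pair, and halve the window each step. After $\BigO(\log m)$ comparisons the window collapses to a single position, the peak. Summed over the $n$ voters this is $\BigO(n\log m)$ distinct queries, matching the claimed bound.

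Given the peaks, I would compute the frequency function $f(x)$ (the number of voters whose peak is $x$) with no further queries, and then identify a \emph{median candidate} $c^\ast$: scanning $c_1,\ldots,c_m$ along the axis and accumulating frequencies, $c^\ast$ is the first candidate at which the running total reaches at least $\nfrac{n}{2}$ (when $n$ is even there may be a short median interval, and any candidate in it serves). The algorithm outputs $c^\ast$.

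Correctness rests on the cited fact \cite{mas1995microeconomic} that a median candidate is a weak Condorcet winner for single-peaked profiles, which I may invoke directly; for completeness I would include the short argument. Fix any $y \ne c^\ast$ and suppose $y$ lies to the right of $c^\ast$ on the axis. For every voter whose peak is at or to the left of $c^\ast$, the candidate $c^\ast$ lies weakly between her peak and $y$, so single-peakedness gives $c^\ast \succ_v y$; by the choice of $c^\ast$ there are at least $\nfrac{n}{2}$ such voters, so at least $\nfrac{n}{2}$ voters prefer $c^\ast$ to $y$. The symmetric argument handles $y$ to the left of $c^\ast$. Thus $c^\ast$ does not lose any pairwise contest and is a weak Condorcet winner.

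The main obstacle is making the $\BigO(\log m)$-per-vote peak search fully rigorous: one must verify the ``closer-to-the-peak wins'' property for every axis-adjacent pair (it follows directly from the single-peakedness definition, taking $c$ to be the peak) and then check that this property keeps the binary-search invariant --- that the window always contains the peak --- sound at every step. Once this invariant is established the remaining steps are routine counting, and the validity of the output follows entirely from the median-candidate characterisation taken as a known result.
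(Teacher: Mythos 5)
Your proposal is correct and follows essentially the same route as the paper's proof: find each voter's peak with $\BigO(\log m)$ comparisons and output a median candidate, invoking the median-voter characterisation of weak Condorcet winners. The only difference is that you spell out the binary search for the peak (which is exactly the subroutine the paper imports from \cite{Conitzer09}) and the median-voter argument (which the paper cites from \cite{mas1995microeconomic}), and both of your expanded arguments are sound.
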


\begin{proof}
 Let \PP be a profile that is single peaked with respect to an ordering $\succ\in\LL(\CC)$ of candidates. Then we find, for every voter $v$, the candidate the voter $v$ places at the first position using $\BigO(\log m)$ queries using the algorithm in \cite{Conitzer09} and return a median candidate.
\end{proof}

The next result uses~\Cref{thm:con_sp_ub} on paths in a path cover of the single peaked tree eliminating the case of even number of voters by the idea of setting aside one voter that was used in~\Cref{thm:cw_gen_ub}.

\begin{theorem}\label{thm:cw_pc_ub}
 Let \TT be a tree with path cover number at most $k$. Then there is an algorithm for \CW for profiles which are single peaked on \TT with query complexity $\BigO(nk\log m)$.
\end{theorem}

Recalling that the number of leaves bounds the path cover number, we have the following consequence. 

\begin{corollary}\label{cor:cw_leaf_ub}
 Let \TT be a tree with \el leaves. Then there is an algorithm for \CW for profiles which are single peaked on \TT with query complexity $\BigO(n\el\log m)$.
\end{corollary}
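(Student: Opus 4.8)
The plan is to obtain this corollary directly from the path-cover version of the result, namely Theorem~\ref{thm:cw_pc_ub}, by controlling the path cover number via the number of leaves. The only tool I need beyond Theorem~\ref{thm:cw_pc_ub} is the structural relationship between these two parameters established in Lemma~\ref{lem:path_leaf}.

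First I would invoke Lemma~\ref{lem:path_leaf}, which guarantees that the path cover number $k$ of a tree \TT with $\el$ leaves satisfies $\nfrac{\el}{2} \le k \le \el$; for the present purpose only the upper bound $k \le \el$ is relevant. Next, I would apply Theorem~\ref{thm:cw_pc_ub} to \TT, obtaining a \CW algorithm whose query complexity is $\BigO(nk\log m)$ for profiles that are single peaked on \TT. Finally, since this bound is monotone nondecreasing in the path cover number $k$, substituting the inequality $k \le \el$ yields $\BigO(nk\log m) = \BigO(n\el\log m)$, which is precisely the claimed bound, completing the argument.

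The main obstacle here is essentially nonexistent: the corollary is an immediate specialization of Theorem~\ref{thm:cw_pc_ub} through the leaf-to-path-cover comparison. All of the genuine algorithmic content lives in Theorem~\ref{thm:cw_pc_ub}, which itself rests on partitioning \TT into at most $k$ disjoint paths (Lemma~\ref{lem:path_decom}), running the single-peaked-on-a-path weak Condorcet procedure of Theorem~\ref{thm:con_sp_ub} on each path, and handling an even number of voters by the ``set aside one voter'' device introduced in the proof of Theorem~\ref{thm:cw_gen_ub}. Once those ingredients are in place, the present corollary is just a bookkeeping step replacing $k$ by its upper bound $\el$.
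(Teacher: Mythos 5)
Your proposal is correct and matches the paper's own derivation: the corollary is obtained by combining Theorem~\ref{thm:cw_pc_ub} with the bound $k \le \el$ from Lemma~\ref{lem:path_leaf}, exactly as you describe. No gaps.
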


From \Cref{thm:cw_gen_ub,thm:con_sp_ub} we have the following result for any arbitrary tree.

\begin{theorem}\label{thm:cw_pc_ub}
 Let \TT be a tree with path cover number at most $k$. Then there is an algorithm for \CW for profiles which are single peaked on \TT with query complexity $\BigO(nk\log m)$.
\end{theorem}

\begin{proof}
 Let \PP be the input profile and $\QQ_i = (\XX_i, \EE_i) ~i\in[t]$ be $t(\le k)$ disjoint paths that cover the tree \TT. Here again, if the number of voters is even, then we remove any arbitrary voter and the algorithm outputs the Condorcet winner of the rest of the votes. The correctness of this step follows from the proof of \Cref{thm:cw_gen_ub}. Hence we assume, without loss of generality, that we have an odd number of voters. The algorithm proceeds in two stages. In the first stage, we find the Condorcet winner $w_i$ of the profile $\PP(\XX_i)$ for every $i\in[t]$ using \Cref{thm:con_sp_ub}. The query complexity of this stage is $\BigO(n\sum_{i\in[t]} \log|\XX_i|) = \BigO(nt\log (\nfrac{m}{t}))$. In the second stage, we find the Condorcet winner $w$ of the profile $\PP(\{w_i : i\in[t]\})$ using \Cref{thm:cw_gen_ub} and output $w$. The query complexity of the second stage is $\BigO(nt\log t)$. Hence the overall query complexity of the algorithm is $\BigO(nt\log(\nfrac{m}{t})) + \BigO(nt\log t) = \BigO(nk\log m)$.
\end{proof}

\subsection{Lower Bounds for Weak Condorcet Winner}

We now state the lower bounds pertaining to \CW. First, we show that any algorithm for \CW for single peaked profiles on stars has query complexity $\Omega(mn)$, showing that the bound of \Cref{thm:cw_gen_ub} is tight. 

\begin{theorem}\label{thm:cw_gen_lb}
 Any deterministic \CW algorithm for single peaked profiles on stars has query complexity $\Omega(mn)$.
\end{theorem}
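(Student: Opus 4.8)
The plan is to combine a structural characterization of votes single peaked on a star with an adversary (decision-tree) argument of the same flavour as in \Cref{thm:leaves_elicit_lb}. First I would pin down exactly which votes are single peaked on the star with center $c$ and leaves $l_1,\ldots,l_{m-1}$. Looking at each length-two path $l_i - c - l_j$, single peakedness forbids precisely the orders in which $c$ is ranked last among $\{l_i,c,l_j\}$; equivalently, at most one leaf may beat $c$ in any vote. Hence every admissible vote is of one of two forms: either $c$ is at the top (call this \emph{type $0$}), or a single leaf $l_i$ is at the top with $c$ immediately below it and the remaining leaves in arbitrary order (call this \emph{type $i$}). From this dichotomy it follows that $N(l_i,c)$ equals the number of voters of type $i$, so $c$ is a weak Condorcet winner exactly when no leaf occupies the top of a strict majority of votes, i.e.\ $f(l_j)\le n/2$ for all $j$; and if some leaf $l_i$ has $f(l_i)>n/2$, then $l_i$ strictly beats $c$ and every other leaf, making $l_i$ the \emph{unique} weak Condorcet winner. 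Resolving this either/or is what any correct algorithm must do.

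Next I would build an adversary. Fix once and for all a linear order $l_1\succ l_2\succ\cdots\succ l_{m-1}$ on the leaves. To a query comparing $c$ with a leaf $l_j$ the adversary always answers $c\succ l_j$; to a query comparing two leaves it answers according to the fixed order. The two invariants to verify are: (i) the answer $c\succ l_j$ is inconsistent only with type $j$, and a leaf--leaf answer is inconsistent only with the type of the leaf the adversary ranks lower, so each query eliminates at most one of the $m-1$ leaf-types from a voter's set of still-possible top candidates; and (ii) type $0$ is never eliminated, so $c$ remains a possible top for every voter throughout (the adversary never concedes a leaf over $c$, and the fixed leaf order is automatically transitive, so every surviving type extends to a full vote consistent with all answers). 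Consequently a voter asked fewer than $q$ queries still admits at least $(m-1)-q$ distinct leaves as a possible top.

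Finally I would run the counting step. Suppose for contradiction the algorithm makes fewer than $mn/100$ queries. Then fewer than $n/10$ voters receive at least $m/10$ queries, so more than $9n/10$ voters are ``light,'' each retaining at least $0.9m-1$ possible leaf-tops. Summing these incidences over the light voters gives more than $0.72\,mn$ (voter, possible-leaf-top) pairs (for $m\ge 10$), and averaging over the $m-1$ leaves produces a single leaf $l_i$ that is a possible top for strictly more than $n/2$ voters. I then exhibit two completions consistent with every answer: in the first, every voter is of type $0$, making $c$ the unique weak Condorcet winner; in the second, the $>n/2$ voters that admit $l_i$ are set to type $i$ and the rest to type $0$, so $f(l_i)>n/2$ and $l_i$ is the unique weak Condorcet winner while $c$ is not one. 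Both instances are single peaked on the star and indistinguishable to the algorithm, so its single deterministic output is wrong on at least one of them, giving the desired contradiction and the bound $\Omega(mn)$.

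The main obstacle is the quantitative part: a careless averaging yields a common possible top shared by only about $n/2$ voters, which is not enough to force a strict majority. Pushing strictly past $n/2$ requires choosing the light/heavy threshold and the query budget with a margin to spare (the constants $1/10$ and $1/100$ above), so that each light voter keeps roughly a $0.9$ fraction of the leaves rather than merely half. The second delicate point is global consistency: I must ensure both completions are simultaneously realizable by genuine single-peaked-on-star votes, which is exactly why the adversary commits to a fixed leaf order and always answers $c\succ l_j$ on center--leaf queries, thereby keeping type $0$ alive for every voter and keeping type $i$ alive for every voter it never decisively ruled out.
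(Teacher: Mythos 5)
Your proof is correct, and while it shares the paper's overall skeleton (an adversary that lets each query rule out at most one candidate from the top of a vote, followed by an incidence-counting argument), the implementation and especially the endgame are genuinely different. The paper's adversary is adaptive: for each voter it maintains a set of ``marked'' candidates excluded from the top position and answers each query so as to mark at most one new candidate, treating the center and the leaves symmetrically; it then argues that if too few queries are made, \emph{two} candidates $x$ and $y$ each remain unmarked in more than $\lfloor \nfrac{n}{2}\rfloor$ votes, and finishes with a case split on the algorithm's output (if it outputs $x$, promote $y$ to a majority of tops, and vice versa). You instead begin from an explicit characterization of the domain (a vote single peaked on a star either has $c$ on top, or one leaf on top with $c$ second), use an oblivious adversary with fixed answers ($c$ over every leaf, leaves in a fixed linear order), and exploit the special role of the center: since the all-type-$0$ completion making $c$ the unique winner is always consistent, you need only \emph{one} leaf that remains a possible top for a strict majority, rather than two symmetric candidates. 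Your route buys a cleaner consistency argument (the fixed leaf order makes global transitivity of the adversary's answers immediate, a point the paper's marking scheme leaves implicit) and a simpler final dichotomy; the paper's symmetric marking scheme is arguably more portable to trees without a distinguished high-degree vertex. Your quantitative bookkeeping (thresholds $\nfrac{m}{10}$, $\nfrac{n}{10}$, budget $\nfrac{mn}{100}$, and the $0.72mn$ incidence count for $m \geq 10$) checks out and correctly pushes the majority strictly past $\nfrac{n}{2}$.
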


\begin{proof}
 Let \TT be a star with center vertex $c$. We now design an oracle that will ``force'' any \CW algorithm \AA for single peaked profiles on \TT to make $\Omega(mn)$ queries. For every voter $v$, the oracle maintains a set of {\em ``marked''} candidates which can not be placed at the first position of the preference of $v$. Suppose the oracle receives a query to compare two candidates $x$ and $y$ for a voter \el. If the order between $x$ and $y$ for the voter \el follows from the answers the oracle has already provided to all the queries for the voter \el, then the oracle answers accordingly. Otherwise it answers $x\succ_\el y$ if $y$ is unmarked and marks $y$; otherwise the oracle answers $y\suc_\el x$ and marks $x$. Notice that the oracle marks at most one unmarked candidate every time it is queried. We now claim that there must be at least $\nfrac{n}{10}$ votes which have been queried at least $\nfrac{m}{4}$ times. If not, then there exists $n-\nfrac{n}{10} = \nfrac{9n}{10}$ votes each of which has at least $m-\nfrac{m}{4} = \nfrac{3m}{4}$ candidates unmarked. In such a scenario, there exists a constant $N_0$ such that for every $m, n>N_0$, we have at least two candidates $x$ and $y$ who are unmarked in at least $(\lfloor\nfrac{n}{2}\rfloor + 1)$ votes each. Now if the algorithm outputs $x$, then we put $y$ at the first position in at least $(\lfloor\nfrac{n}{2}\rfloor + 1)$ votes and at the second position in the rest of the votes and this makes $y$ the (unique) Condorcet winner. If the algorithm does not output $x$, then we put $x$ at the first position in at least $(\lfloor\nfrac{n}{2}\rfloor + 1)$ votes and at the second position in the rest of the votes and this makes $x$ the (unique) Condorcet winner. Hence the algorithm fails to output correctly in both the cases contradicting the correctness of the algorithm. Also the resulting profile is single peaked on \TT with center at $y$ in the first case and at $x$ in the second case. Therefore the algorithm \AA must have query complexity $\Omega(mn)$.
\end{proof}

Our next result uses an adversary argument, and shows that the query complexity for \CW for single peaked profiles in \Cref{thm:con_sp_ub} is essentially optimal, provided that the queries to different voters are not interleaved, as is the case with our algorithm.

\begin{theorem}\label{thm:con_sp_lb}
 Any deterministic \CW algorithm for single peaked profiles which does not interleave the queries to different voters has query complexity $\Omega(n\log m)$.
\end{theorem}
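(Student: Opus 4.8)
The plan is to recast \CW on a path as the problem of locating the \emph{median peak}. Since the single-peaked order of the candidates is given, a median candidate is guaranteed to be a weak Condorcet winner~\cite{mas1995microeconomic}, so it suffices to prove an $\Omega(n\log m)$ bound for pinning down the median of the voters' top candidates. The key observation is that a single call to \Query comparing $c_a$ and $c_b$ for voter $\el$ reveals only on which side of the midpoint of $c_a,c_b$ the peak of voter $\el$ lies; hence locating one voter's peak is exactly binary search over the $m$ positions of the axis. I would run the standard adversary that always answers so as to retain the larger of the two candidate sub-intervals, so that after $q_i$ queries have been asked to voter $i$ its peak is still consistent with an interval $J_i$ of at least $m/2^{q_i}$ candidates; in particular, pinning any single voter's peak down to one candidate costs $\Omega(\log m)$, and no voter can be pinned ``for free'' at an end of the axis.

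First I would record the exact termination condition: to be correct, the algorithm may halt and output $c_j$ only when $c_j$ is forced to be a weak Condorcet winner in every completion $p_i\in J_i$, i.e.\ at most $n/2$ intervals extend strictly to the left of $j$ and at most $n/2$ extend strictly to the right, which means every $J_i$ must lie entirely among positions $\le j$ or entirely among positions $\ge j$, with roughly half on each side. The non-interleaving hypothesis enters here: since the algorithm never revisits a voter, when it finishes voter $i$ it has committed all of $i$'s queries without seeing the voters it processes later, and the adversary retains complete freedom over those voters. I would use this to show that an \emph{under-queried} voter can be made the pivot: if some voter $i$ is left with $|J_i|\ge 2$, the adversary places about half of the remaining voters with peaks strictly below $\min J_i$ and about half strictly above $\max J_i$, making $p_i$ the median; since $p_i$ may then be either endpoint of $J_i$, the weak Condorcet winner is ambiguous and the single output $c_j$ is wrong for one completion. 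Consequently every voter, up to a minority, must be queried $\log_2 m - \BigO(1)$ times, and summing over voters yields $\Omega(n\log m)$.

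The hard part will be making the pivot construction and the ``keep the larger half'' rule coexist so that the adversary never paints itself into a corner. Keeping intervals large is what forbids cheap pinning at the ends, but on its own it permits the cheap strategy of asking each voter a single central query and outputting the centre; the adversary must therefore break ties at central splits so as to force the left/right counts out of balance, guaranteeing that no candidate is yet a confirmed weak Condorcet winner. I would maintain, as an invariant, that more than $n/2$ of the intervals still straddle whatever candidate the algorithm could currently output and that the not-yet-resolved voters retain the freedom to be pushed to either side of a prospective pivot. Showing that this invariant survives each query until $\Omega(n\log m)$ queries have been spent---in particular that the one-dimensional order always admits a consistent completion realising the claimed ambiguity---is the crux, and it is exactly the step that exploits both non-interleaving and the single-peaked structure. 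This matches the $\BigO(n\log m)$ upper bound of~\Cref{thm:con_sp_ub}, establishing optimality in this access model.
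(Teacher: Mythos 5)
Your proposal follows essentially the same route as the paper's proof: an adversary that maintains, for each voter, an interval of candidates still eligible to be that voter's peak, answers each comparison so as to keep the larger of the two resulting intervals (so the interval shrinks by at most a factor of two per query), and then turns any voter left with an interval of size at least two into the ambiguous median by splitting the remaining voters' peaks evenly to the two sides of that interval. The one step you flag as the crux --- ensuring the adversary can actually realize the balanced split without ``painting itself into a corner'' --- is handled in the paper not by maintaining a global straddling invariant, but by a simple prefix-counting device: the claim is only that the first $\lfloor \nfrac{n}{5}\rfloor$ voters the algorithm processes must each receive at least $\log m - 1$ queries. If $v^\pr$ is the first of these released with two candidates $c_t, c_{t+1}$ still eligible as its peak, then (by non-interleaving) at least $\lceil\nfrac{4n}{5}\rceil$ voters are still completely untouched at that moment, and the adversary fixes their peaks so that $\lfloor\nfrac{n}{2}\rfloor$ voters lie strictly left of $c_t$ and $\lfloor\nfrac{n}{2}\rfloor$ strictly right of $c_{t+1}$; whichever of $c_t, c_{t+1}$ the algorithm does not output is then made the unique median by choosing $v^\pr$'s peak accordingly. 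This yields $\lfloor\nfrac{n}{5}\rfloor\cdot(\log m - 1) = \Omega(n\log m)$ without needing more than half the intervals to straddle the prospective output at every step, so your proposed invariant is stronger than what is required. One small imprecision worth fixing: a query comparing $c_a$ and $c_b$ ($a<b$) does not reveal which side of their midpoint the peak lies on; the two possible answers only certify ``peak left of $b$'' or ``peak right of $a$'' respectively (the ranges overlap on $(a,b)$), but since the adversary keeps whichever restriction leaves the larger interval, the factor-two shrinkage bound still holds.
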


\begin{proof}
 Let a profile \PP be single peaked with respect to the ordering of the candidates $\suc = c_1\succ c_2\succ \cdots \succ c_m$. The oracle maintains two indices $\el$ and $r$ for every voter such that any candidate from $\{c_\el, c_{\el+1}, \ldots, c_r\}$ can be placed at the first position of the preference of the voter $v$ and still be consistent with all the answers provided by the oracle for $v$ till now and single peaked with respect to \suc. The algorithm initializes $\el$ to one and $r$ to $m$ for every voter. The oracle answers any query in such a way that maximizes the new value of $r-\el$. More specifically, suppose the oracle receives a query to compare candidates $c_i$ and $c_j$ with $i< j$ for a voter $v$. If the ordering between $c_i$ and $c_j$ follows, by applying transitivity, from the answers to the queries that have already been made so far for this voter, then the oracle answers accordingly. Otherwise the oracle answers as follows. If $i < \el$, then the oracle answers that $c_j$ is preferred over $c_i$; else if $j > r$, then the oracle answers that $c_i$ is preferred over $c_j$. Otherwise (that is when $\el \le i < j\le r$), if $j-\el > r-i$, then the oracle answers that $c_i$ is preferred over $c_j$ and changes $r$ to $j$; if $j-\el \le r-i$, then the oracle answers that $c_j$ is preferred over $c_i$ and changes $\el$ to $i$. Hence whenever the oracle answers a query for a voter $v$, the value of $r-\el$ for that voter $v$ decreases by a factor of at most two. Suppose the election instance has an odd number of voters. Let \VV be the set of voters. Now we claim that the first $\lfloor\nfrac{n}{5}\rfloor$ voters must be queried $(\log m - 1)$ times each. Suppose not, then consider the first voter $v^\pr$ that is queried less than $(\log m - 1)$ times. Then there exist at least two candidates $c_t$ and $c_{t+1}$ each of which can be placed at the first position of the vote $v^\pr$. The oracle fixes the candidates at the first positions of all the votes that have not been queried till $v^\pr$ is queried (and there are at least $\lceil\nfrac{4n}{5}\rceil$ such votes) in such a way that $\lfloor\nfrac{n}{2}\rfloor$ voters in $\VV\setminus\{v^\pr\}$ places some candidate in the left of $c_t$ at the first positions and $\lfloor\nfrac{n}{2}\rfloor$ voters in $\VV\setminus\{v^\pr\}$ places some candidate in the right of $c_{t+1}$. If the algorithm outputs $c_t$ as the Condorcet winner, then the oracle makes $c_{t+1}$ the (unique) Condorcet winner by placing $c_{t+1}$ at the top position of $v^\pr$, because $c_{t+1}$ is the unique median in this case. If the algorithm does not output $c_t$ as the Condorcet winner, then the oracle makes $c_t$ the (unique) Condorcet winner by placing $c_t$ at the top position of $v^\pr$, because $c_t$ is the unique median in this case. Hence the algorithm fails to output correctly in both the cases thereby contradicting the correctness of the algorithm.
\end{proof}

\section{Conclusion}

In this work, we present algorithms for eliciting preferences of a set of voters when the preferences are single peaked on a tree thereby significantly extending the work of \cite{Conitzer09}. Moreover, we show non-trivial lower bounds on the number of comparison queries any preference elicitation algorithm would need to ask in a domain of single peaked profiles on a tree. From this, we conclude that our algorithms asks optimal number of comparison queries up to constant factors. Our main finding in this work is the interesting dependencies between the number of comparison queries any preference elicitation algorithm would ask and various parameters of the single peaked tree. For example, our results show that the query complexity for preference elicitation is a monotonically increasing function on the number of leaf nodes in the single peaked tree. On the other hand, the query complexity for preference elicitation does not directly depend on other tree parameters like, maximum degree, minimum degree, path width, diameter etc.

We then move on to study query complexity for finding a weak Condorcet winner of a set of votes which is single peaked on a tree. Here, our results show that a weak Condorcet winner can be found with much less number of queries compared to preference elicitation.

In the next chapter, we explore the preference elicitation problem again, but for another well-studied domain known as single crossing profiles.
\chapter{Preference Elicitation for Single Crossing Profiles}
\label{chap:pref_elicit_cross}

\blfootnote{A preliminary version of the work in this chapter was published as \cite{deycross}: Palash Dey and Neeldhara Misra. Preference elicitation for single crossing domain. In
Proc. Twenty-Fifth International Joint Conference on Artificial Intelligence, IJCAI 2016,
New York, NY, USA, 9-15 July 2016, pages 222-228, 2016.}

\begin{quotation}
{\small In this chapter, we consider the domain of single crossing preference profiles and study the query complexity of preference elicitation under various situations. We consider two distinct scenarios: when an ordering of the voters with respect to which the profile is single crossing is {\em known} a priori versus when it is {\em unknown}. We also consider two different access models: when the votes can be accessed at random, as opposed to when they are coming in a predefined sequence. In the sequential access model, we distinguish two cases when the ordering is known: the first is that the sequence in which the votes appear is also a single-crossing order, versus when it is not. The main contribution of our work is to provide polynomial time algorithms with low query complexity for preference elicitation in all the above six cases. Further, we show that the query complexities of our algorithms are optimal up to constant factors for all but one of the above six cases.}
\end{quotation}

\section{Introduction}

Eliciting the preferences of a set of agents is a nontrivial task since we often have a large number of candidates (ranking restaurants for example) and it will be infeasible for the agents to rank all of them. Hence it becomes important to elicit the preferences of the agents by asking them (hopefully a small number of) comparison queries only - ask an agent $i$ to compare two candidates $x$ and $y$. 

Unfortunately, it turns out that one would need to ask every voter $\Omega(m\log m)$ queries to know her preference (due to sorting lower bound). However, if the preferences are not completely arbitrary, and admit additional structure, then possibly we can do better. Indeed, an affirmation of this thought comes from the work of \cite{Conitzer09,deypeak}, who showed that we can elicit preferences using only $O(m)$ queries per voter if the preferences are single peaked. The domain of single peaked preferences has been found to be very useful in modeling preferences in political elections. We, in this work, study the problem of eliciting preferences of a set of agents for another popular domain namely the domain of single crossing profiles. A profile is called {\em single crossing} if the voters can be arranged in a complete order $\suc$ such that for every two candidates $x$ and $y$, all the voters who prefer $x$ over $y$ appear consecutively in $\suc$ \cite{mirrlees1971exploration,RePEc:eee:pubeco:v:8:y:1977:i:3:p:329-340}. Single crossing profiles have been extensively used to model income of a set of agents \cite{RePEc:eee:pubeco:v:8:y:1977:i:3:p:329-340,RePEc:ucp:jpolec:v:89:y:1981:i:5:p:914-27}. The domain is single crossing profiles are also popular among computational social scientists since many computationally hard voting rules become tractable if the underlying preference profile is single crossing \cite{moulin2016handbook}.

\subsection{Related Work}

Conitzer and Sandholm show that determining whether we have enough information at any point of the elicitation process for finding a winner under some common voting rules is computationally intractable \cite{conitzer2002vote}. They also prove in their classic paper \cite{conitzer2005communication} that one would need to make $\Omega(mn\log m)$ queries even to decide the winner for many commonly used voting rules which matches with the trivial $\BigO(mn\log m)$ upper bound (based on sorting) for preference elicitation in unrestricted domain.

A natural question at this point is if these restricted domains allow for better elicitation algorithms as well. The answer to this is in the affirmative, and one can indeed elicit the preferences of the voters using only $\BigO(mn)$ many queries for the domain of single peaked preference profiles \cite{Conitzer09}. Our work belongs to this kind of research-- we investigate the number of queries one has to ask for preference elicitation in single crossing domains. When some partial information is available about the preferences, Ding and Lin prove interesting properties of what they call a deciding set of queries \cite{ding2013voting}. Lu and Boutilier empirically show that several heuristics often work well \cite{lu2011vote,lu2011robust}.

\subsection{Single Crossing Domain}

A preference profile $\PP = (\succ_1, \ldots, \succ_n)$ of $n$ agents or voters over a set \CC of candidates is called a single crossing profile if there exists a permutation $\sigma\in\SB_n$ of $[n]$ such that, for every two distinct candidates $x, y\in\CC$, whenever we have $x\succ_{\sigma(i)} y$ and $x\succ_{\sigma(j)} y$ for two integers $i$ and $j$ with $1\le i< j\le n$, we have $x\succ_{\sigma(k)} y$ for every $i\le k\le j$. \Cref{ex:single_cross} exhibits an example of a single crossing preference profile.

\begin{example}(Example of single crossing preference profile)\label{ex:single_cross}
 Consider a set \CC of $m$ candidates, corresponding $m$ distinct points on the Real line, a set \VV of $n$ voters, also corresponding $n$ points on the Real line, and the preference of every voter are based on their distance to the candidates -- given any two candidates, every voter prefers the candidate nearer to her (breaking the tie arbitrarily). Then the set of $n$ preferences is single crossing with respect to the ordering of the voters according to the ascending ordering of their positions on the Real line.
\end{example}

The following observation is immediate from the definition of single crossing profiles.

\begin{observation}
 Suppose a profile $\PP$ is single crossing with respect to an ordering $\sigma\in\SB_n$ of votes. Then \PP is single crossing with respect to the ordering $\overleftarrow{\sigma}$ too.
\end{observation}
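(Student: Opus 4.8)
The plan is to unwind the definition of single crossing and observe that the defining requirement is purely a statement about the \emph{consecutiveness} of voter positions, a property that is manifestly invariant under reversing the order of the voters. Recall that $\overleftarrow{\sigma}$ denotes the reverse of $\sigma$, i.e.\ the permutation of $[n]$ given by $\overleftarrow{\sigma}(k) = \sigma(n+1-k)$ for every $k \in [n]$.

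First I would fix two arbitrary distinct candidates $x, y \in \CC$ and reformulate the single crossing hypothesis for this pair as an interval condition: the set of positions $I_{xy} = \{\, k \in [n] : x \succ_{\sigma(k)} y \,\}$ is an interval of $[n]$. This is exactly the content of the defining sentence, which says that whenever $x \succ_{\sigma(i)} y$ and $x \succ_{\sigma(j)} y$ with $i < j$, then $x \succ_{\sigma(k)} y$ for all $i \le k \le j$; phrased contrapositively for all pairs, it asserts precisely that $I_{xy}$ contains no ``gap'', hence is an interval.

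Next I would transport this interval through the reversal map $k \mapsto n+1-k$. The set of positions witnessing the preference $x \succ y$ under the ordering $\overleftarrow{\sigma}$ is $\{\, k : x \succ_{\overleftarrow{\sigma}(k)} y \,\} = \{\, k : x \succ_{\sigma(n+1-k)} y \,\} = \{\, n+1-j : j \in I_{xy} \,\}$. Since $j \mapsto n+1-j$ is an order-reversing bijection of $[n]$, it carries the interval $I_{xy}$ onto another interval of $[n]$; therefore the voters preferring $x$ over $y$ occupy consecutive positions in the order $\overleftarrow{\sigma}$ as well. As the pair $x, y$ was arbitrary, this establishes that $\PP$ is single crossing with respect to $\overleftarrow{\sigma}$.

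Because every step is a direct rewriting, there is no genuine obstacle here; the only point requiring a little care is faithfulness to the verbatim definition, which is stated via two witnesses $i < j$ and an intermediate index $k$ rather than via intervals. I would therefore either record the one-line equivalence between the ``no intermediate gap'' formulation and the interval formulation at the outset, or alternatively carry the three quantified indices $i, j, k$ through the substitution $k \mapsto n+1-k$ explicitly, checking that the induced triple $n+1-j < n+1-i$ with intermediate $n+1-k$ still satisfies the required betweenness. Both routes are routine, and the latter avoids invoking the interval shorthand altogether.
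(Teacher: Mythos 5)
Your proof is correct and is exactly the argument the paper has in mind: the paper states this observation as ``immediate from the definition'' and gives no proof, and your reformulation of the single crossing condition as an interval condition on $\{k : x \succ_{\sigma(k)} y\}$, followed by the remark that the reversal $k \mapsto n+1-k$ carries intervals of $[n]$ to intervals of $[n]$, is the standard one-line justification. Nothing is missing.
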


\subsection{Single Crossing Width}

A preference $\PP = (\succ_1, \ldots, \succ_n)$ of $n$ agents or voters over a set \CC of candidates is called a single crossing profile with width $w$ if the set of candidates \CC can be partitioned into $(\CC_i)_{i\in[k]}$ such that $|\CC_i|\le w$ for every $i$ and for every two candidates $x\in\CC_\el$ and $y\in\CC_t$ with $\el\ne t$ from two different subsets of the partition, whenever we have $x\succ_{\sigma(i)} y$ and $x\succ_{\sigma(j)} y$ for two integers $i$ and $j$ with $1\le i< j\le n$, we have $x\succ_{\sigma(k)} y$ for every $i\le k\le j$.

\subsection{Our Contribution}

In this work we present novel algorithms for preference elicitation for the domain of single crossing profiles in various settings.  We consider two distinct situations: when an ordering of the voters with respect to which the profile is single crossing is {\em known} versus when it is {\em unknown}. We also consider different access models: when the votes can be accessed at random, as opposed to when they are coming in a pre-defined sequence. In the sequential access model, we distinguish two cases when the ordering is known: the first is that sequence in which the votes appear is also a single-crossing order, versus when it is not.  We also prove lower bounds on the query complexity of preference elicitation for the domain of single crossing profiles; these bounds match the upper bounds up to constant factors (for a large number of voters) for all the six scenarios above except the case when we know a single crossing ordering of the voters and we have a random access to the voters; in this case, the upper and lower bounds match up to a factor of $\OO(m)$. We summarize our results in \Cref{tbl:summary}.

\begin{table*}[!htbp]
 \begin{center}
  \begin{tabular}{|c|c|c|c|}\hline
   \multirow{2}{*}{Ordering} & \multirow{2}{*}{Access model} & \multicolumn{2}{c|}{Query Complexity}\\\cline{3-4}
    & & Upper Bound & Lower Bound \\\hline\hline
   
   \multirow{3}{*}{Known} & Random & \makecell{$\BigO(m^2 \log n)$\\\relax[\Cref{lem:sc_random_ub}]} & \makecell{$\Omega(m\log m + m\log n)$\\\relax[\Cref{thm:sc_random_lb}]} \\\cline{2-4}
   
   & Sequential single crossing order & \makecell{$\BigO(mn + m^2)$\\\relax[\Cref{thm:sc_seq_known_ub}]} & \multirow{3}{*}{\makecell{$\Omega(m\log m + mn)$\\\relax[\Cref{thm:sc_seq_known_lb}]}} \\\cline{2-3}
   
   & Sequential any order & \makecell{$\BigO(mn + m^2\log n)$\\\relax[\Cref{thm:sc_seq_any_ub}]} & \\\cline{1-3}
   
   \multirow{2}{*}{Unknown} &  Sequential any order & \makecell{$\BigO(mn + m^3\log m)$\\\relax[\Cref{thm:sc_seq_unknown_ub}]} & \\\cline{2-4}
   
   & Random & \makecell{$\BigO(mn + m^3\log m)$\\\relax[\Cref{cor:sc_random_unknown_ub}]} & \makecell{$\Omega(m\log m + mn)$\\\relax[\Cref{thm:sc_random_unknown_lb}]} \\\hline
   
  \end{tabular}
 \end{center}
\caption{Summary of Results for preference elicitation for single crossing profiles.}\label{tbl:summary}
\end{table*}

We then extend our results to domains which are single crossing of width $w$ in \Cref{subsec:sc_width}. We also prove that a weak Condorcet winner and the Condorcet winner (if it exists) of a single crossing preference profile can be found out with much less number of queries in \Cref{sec:sc_cw}.

\section{Problem Formulation}

The query and cost model is the same as \Cref{def:query_model} in \Cref{chap:pref_elicit_peak}. We recall it below for ease of access.

Suppose we have a profile \PP with $n$ voters and $m$ candidates. Let us define a function $\text{\Query}(x \succ_\el y)$ for a voter \el and two different candidates $x$ and $y$ to be \true if the voter \el prefers the candidate $x$ over the candidate $y$ and \false otherwise. We now formally define the problem.

\begin{definition}\PE\\
 Given an oracle access to \Query($\cdot$) for a single crossing profile \PP, find \PP.
\end{definition}

For two distinct candidates $x, y\in \CC$ and a voter \el, we say a \PE algorithm \AA {\em compares} candidates $x$ and $y$ for voter \el, if \AA makes a call to either $\text{\Query}(x \succ_\el y)$ or $\text{\Query}(y \succ_\el x)$. We define the number of queries made by the algorithm \AA, called the {\em query complexity} of \AA, to be the number of distinct tuples $(\el, x, y)\in \VV\times\CC\times\CC$ with $x\ne y$ such that the algorithm \AA compares the candidates $x$ and $y$ for voter \el. Notice that, even if the algorithm \AA makes multiple calls to \Query($\cdot$) with same tuple $(\el, x, y)$, we count it only once in the query complexity of \AA. This is without loss of generality since we can always implement a wrapper around the oracle which memorizes all the calls made to the oracle so far and whenever it receives a duplicate call, it replies from its memory without ``actually'' making a call to the oracle. We say two query complexities $\qqq(m,n)$ and $\qqq^\pr(m,n)$ are tight up to a factor of \el ~{\em for a large number of voters} if $\nfrac{1}{\el} \le \lim_{n\to\infty} \nfrac{\qqq(m,n)}{\qqq^\pr(m,n)} \le \el$.

Note that by using a standard sorting routine like merge sort, we can fully elicit an unknown preference using $\BigO(m \log m)$ queries. We state this explicitly below, as it will be useful in our subsequent discussions. 

\begin{observation}\label{obs:naive_single_crossing}
 There is a \PE algorithm with query complexity $\BigO(m\log m)$ for eliciting one vote from a single crossing preference profile.
\end{observation}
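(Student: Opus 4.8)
The plan is to observe that eliciting a single vote from the profile amounts to recovering one complete order $\succ_\el\thinspace\in\LL(\CC)$ over the $m$ candidates, and that this is exactly the classical comparison-based sorting problem in disguise. First I would note that for a fixed voter \el, a single call to \Query$(x\succ_\el y)$ returns precisely the outcome of the pairwise comparison between $x$ and $y$ under the order $\succ_\el$. Since $\succ_\el$ is by definition reflexive, transitive, antisymmetric, and total, it is a genuine total order on \CC, so the oracle's answers are guaranteed to be mutually consistent and to correspond to sorting the candidates according to a well-defined (though unknown) linear order.

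Given this reduction, the second step is simply to run any optimal comparison-sorting routine, for instance merge sort~\cite{cormen2009introduction}, using the oracle as the comparator: whenever the sorting algorithm needs to compare two candidates $x$ and $y$, we issue the corresponding \Query call. Merge sort sorts $m$ elements using $\BigO(m\log m)$ comparisons in the worst case, and each comparison triggers at most one oracle call on a distinct tuple $(\el, x, y)$. By the wrapper/memoization convention described above (repeated calls on the same pair are counted once), the number of distinct comparisons made for voter \el is therefore $\BigO(m\log m)$, which gives the claimed query complexity for eliciting the single vote $\succ_\el$.

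There is essentially no substantive obstacle here, since this is the trivial direction that merely repackages the upper bound from sorting; the only point worth flagging is that the single-crossing structure of the surrounding profile plays no role whatsoever in eliciting an \emph{individual} vote. The bound $\BigO(m\log m)$ holds for any single complete order regardless of domain restrictions, and this observation is stated only because it will serve as a convenient black-box subroutine in the later algorithms that do exploit the single-crossing property across \emph{multiple} votes.
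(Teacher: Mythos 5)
Your proposal is correct and matches the paper's own justification exactly: the paper introduces this observation with the remark that a standard comparison-sorting routine such as merge sort elicits a single complete order with $\BigO(m\log m)$ queries, which is precisely your reduction. Your added note that the single-crossing structure is irrelevant for eliciting an individual vote is accurate and consistent with how the paper uses the observation as a black-box subroutine.
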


\subsection{Model of Input}

We study two models of input for \PE for single crossing profiles.

\begin{itemize}
 \item {\bf Random access to voters:} In this model, we have a set of voters and we are allowed to ask any voter to compare any two candidates at any point of time. Moreover, we are also allowed to interleave the queries to different voters. Random access to voters is the model of input for elections within an organization where every voter belongs to the organization and can be queried any time.
 
 \item {\bf Sequential access to voters:} In this model, voters are arriving in a sequential manner one after another to the system. Once a voter \el arrives, we can query voter \el as many times as we like and then we ``release'' the voter \el from the system to access the next voter in the queue. Once voter \el is released, it can never be queried again. Sequential access to voters is indeed the model of input in many practical elections scenarios such as political elections, restaurant ranking etc.
\end{itemize}

\section{Results for Preference Elicitation}

In this section, we present our technical results. We first consider the (simpler) situation when one single crossing order is known, and then turn to the case when no single crossing order is a priori known. In both cases, we explore all the relevant access models.

\subsection{Known Single Crossing Order}

We begin with a simple \PE algorithm when we are given a random access to the voters and one single crossing ordering is known.

\begin{lemma}\label{lem:sc_random_ub}
 Suppose a profile \PP is single crossing with respect to a known permutation of the voters. Given a random access to voters, there is a \PE algorithm with query complexity $\BigO(m^2 \log n)$.
\end{lemma}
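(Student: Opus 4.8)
The plan is to exploit the defining feature of single crossing profiles. Since the single crossing order of the voters is known to us, we may assume it is the identity permutation, so that the voters are $v_1, v_2, \ldots, v_n$ in single crossing order. The key structural observation is that for any fixed pair of candidates $\{x,y\}$, the set of voters preferring $x$ over $y$ is a contiguous block of this order. Because this block and its complement partition $\{v_1, \ldots, v_n\}$ and both are contiguous, there is a single threshold index $t_{x,y}\in\{0,1,\ldots,n\}$ such that (up to swapping the roles of $x$ and $y$) exactly the voters $v_1, \ldots, v_{t_{x,y}}$ prefer $x$ over $y$ while $v_{t_{x,y}+1}, \ldots, v_n$ prefer $y$ over $x$. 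Equivalently, the indicator ``does $v_i$ prefer $x$ over $y$?'' is a monotone step function of $i$.

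First I would, for each of the $\binom{m}{2}$ unordered pairs $\{x,y\}$, query the two extreme voters $v_1$ and $v_n$ using \Query. If they return the same answer, then by the contiguous block property every voter agrees and the pair never crosses; I record this common preference for all voters. Otherwise the step function changes value exactly once, so I would perform a binary search over the index range $\{1, \ldots, n\}$ to locate $t_{x,y}$, i.e.\ the largest index at which voter $v_i$ agrees with $v_1$ on the pair $\{x,y\}$. Each such binary search asks \Query for the pair $\{x,y\}$ on $\BigO(\log n)$ distinct voters, and monotonicity guarantees its correctness.

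Having determined $t_{x,y}$ for every pair, I can reconstruct the whole profile with no further queries: for each voter $v_i$ and each pair $\{x,y\}$, the threshold $t_{x,y}$ dictates whether $v_i$ prefers $x$ or $y$, and since the true vote of $v_i$ is a complete order these recovered pairwise comparisons are automatically consistent and assemble into $\succ_i$. For the query complexity, each of the $\binom{m}{2} = \BigO(m^2)$ pairs contributes $\BigO(\log n)$ distinct query tuples $(\el, x, y)$, for a total of $\BigO(m^2 \log n)$, as claimed.

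The one point requiring care, rather than being a genuine obstacle, is justifying the monotone step structure directly from the single crossing definition so that binary search is valid: one must verify that the truth value of ``$v_i$ prefers $x$ over $y$'' cannot oscillate as $i$ increases, which is precisely the statement that the set of such voters is an interval in the single crossing order. Everything else is bookkeeping, and the binary searches for different pairs may freely interleave queries across voters because we are in the random access model.
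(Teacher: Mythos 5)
Your proposal is correct and follows essentially the same route as the paper's proof: for each of the $\binom{m}{2}$ pairs of candidates, binary-search over the known single crossing order for the unique index at which the pairwise preference flips, then reconstruct every vote from these thresholds. The extra details you supply (checking the two extreme voters first, and justifying monotonicity of the indicator) are just elaborations of the same argument.
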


\begin{proof}
 By renaming, we assume that the profile is single crossing with respect to the identity permutation of the votes. Now, for every ${m\choose 2}$ pair of candidates $\{x, y\}\subset\CC$, we perform a binary search over the votes to find the index $i(\{x, y\})$ where the ordering of $x$ and $y$ changes. We now know how any voter $j$ orders any two candidates $x$ and $y$ from $i(\{x, y\})$ and thus we have found $\mathcal{P}$.
\end{proof}

Interestingly, the simple algorithm in \Cref{lem:sc_random_ub} turns out to be optimal up to a multiplicative factor of $\OO(m)$ as we prove next. The idea is to ``pair up'' the candidates and design an oracle which ``hides'' the vote where the ordering of the two candidates in any pair $(x, y)$ changes unless it receives at least $(\log m - 1)$ queries involving only these two candidates $x$ and $y$. We formalize this idea below. Observe that, when a single crossing ordering of the voters in known, we can assume without loss of generality, by renaming the voters, that the preference profile is single crossing with respect to the identity permutation of the voters.

\begin{theorem}\label{thm:sc_random_lb}
 Suppose a profile \PP is single crossing with respect to the identity permutation of votes. Given random access to voters, any deterministic \PE algorithm has query complexity $\Omega(m\log m + m\log n)$.
\end{theorem}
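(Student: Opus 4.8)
The plan is to establish the two summands separately, since the worst-case query complexity is at least the maximum of the two bounds and $\max(a,b)\ge\tfrac12(a+b)$. The term $\Omega(m\log m)$ is a sorting lower bound: I would consider the family of profiles in which all $n$ voters share a common but arbitrary preference $\succ\in\LL(\CC)$ (such a profile is trivially single crossing with respect to the identity permutation). A query $\text{\Query}(x\succ_\el y)$ then merely reveals whether $x\succ y$ in the common order, independently of $\el$. Running the standard comparison-tree adversary, each query reduces the number of still-consistent permutations by a factor of at most two; since there are $m!$ candidate orders and the algorithm must pin down $\succ$ to report the profile, at least $\log_2(m!)=\Omega(m\log m)$ queries are forced.

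The substantive part is the $\Omega(m\log n)$ bound, which I would obtain by embedding $\lfloor m/2\rfloor$ independent binary searches. First I would fix the pairing $(c_1,c_2),(c_3,c_4),\ldots$ and, for a tuple of crossing indices $(i_1,\ldots,i_{\lfloor m/2\rfloor})$ with each $i_j\in\{0,1,\ldots,n\}$, define voter $\el$'s vote to be the concatenation of blocks $B_1(\el)\succ B_2(\el)\succ\cdots$, where block $B_j(\el)$ lists $c_{2j-1}\succ c_{2j}$ if $\el\le i_j$ and $c_{2j}\succ c_{2j-1}$ otherwise (a leftover candidate, when $m$ is odd, is appended at the bottom of every vote). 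The key verification is that every such profile is single crossing with respect to the identity permutation: for two candidates in the same pair the order flips exactly once, at index $i_j$, so the voters preferring one over the other form a prefix (or suffix) of $[n]$; for two candidates in distinct pairs the order agrees for all voters, so the agreeing set is all of $[n]$. In both cases the agreeing voters are consecutive.

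Next I would run an adversary maintaining, for each pair $j$, the set $S_j\subseteq\{0,1,\ldots,n\}$ of crossing indices still consistent with the answers given so far, initialized with $|S_j|=n+1$. A cross-pair query is answered by the fixed block order and reveals nothing about any $i_j$. A same-pair query $\text{\Query}(c_{2j-1}\succ_\el c_{2j})$ splits $S_j$ into $\{s\in S_j:s\ge\el\}$ and $\{s\in S_j:s<\el\}$; the oracle answers so as to retain the larger part, so each such query shrinks $|S_j|$ by a factor of at most two. If the algorithm halts with $|S_j|\ge 2$ for some $j$, then two profiles differing only in pair $j$ remain consistent with every answer, so the algorithm must misreport one of them, contradicting correctness; hence each $|S_j|$ must reach $1$, which requires at least $\log_2(n+1)$ same-pair queries for each $j$. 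Summing over the $\lfloor m/2\rfloor$ pairs yields $\Omega(m\log n)$ queries.

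The main obstacle is making the crossing-index family genuinely single crossing while keeping the pairs informationally independent; the block construction is exactly what decouples the pairs, since cross-pair orders never flip, so each pair really is an independent binary search over $\{0,\ldots,n\}$ and a query to one pair leaks nothing about another. Once this independence is in place, the adversary's ``keep the larger half'' rule together with the maximum-of-two-bounds observation finishes the argument.
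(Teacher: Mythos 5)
Your proposal is correct and follows essentially the same route as the paper: the $\Omega(m\log m)$ term from the sorting lower bound on a constant profile, and the $\Omega(m\log n)$ term by pairing the candidates, letting each pair cross independently at an unknown index, and running a per-pair binary-search adversary that halves the set of consistent crossing indices on each same-pair query while cross-pair queries are uninformative. The only cosmetic difference is bookkeeping — the paper's oracle tracks an interval $[\theta_i,\beta_i]$ of still-possible crossing positions rather than your set $S_j$ — but the underlying argument and the final bound are identical.
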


\begin{proof}
 The $\Omega(m\log m)$ bound follows from the query complexity lower bound of sorting and the fact that any profile consisting of only one preference $\succ\in\LL(\CC)$ is single crossing. Let $\CC = \{c_1, \ldots, c_m\}$ be the set of $m$ candidates where $m$ is an even integer. Consider the ordering $Q = c_1 \succ c_2 \succ \cdots \succ c_m \in\LL(\CC)$ and the following pairing of the candidates: $\{c_1, c_2\}, \{c_3, c_4\}, \ldots, \{c_{m-1}, c_m\}$. Our oracle answers \Query($\cdot$) as follows. The oracle fixes the preferences of the voters one and $n$ to be $Q$ and $\overleftarrow{Q}$ respectively. For every odd integer $i\in[m]$, the oracle maintains $\theta_i$ (respectively $\beta_i$) which corresponds to the largest (respectively smallest) index of the voter for whom $(c_i, c_{i+1})$ has already been queried and the oracle answered that the voter prefers $c_i$ over $c_{i+1}$ ($c_{i+1}$ over $c_i$ respectively). The oracle initially sets $\theta_i = 1$ and $\beta_i = n$ for every odd integer $i\in[m]$. Suppose oracle receives a query to compare candidates $c_i$ and $c_j$ for $i,j\in[m]$ with $i<j$ for a voter $\ell$. If $i$ is an even integer or $j-i\ge 2$ (that is, $c_i$ and $c_j$ belong to different pairs), then the oracle answers that the voter \el prefers $c_i$ over $c_j$. Otherwise we have $j=i+1$ and $i$ is an odd integer. The oracle answers the query to be $c_i \succ c_{i+1}$ and updates $\theta_i$ to \el keeping $\beta_i$ fixed if $|\el-\theta_i| \le |\el-\beta_i|$ and otherwise answers $c_{i+1} \succ c_i$ and updates $\beta_i$ to \el keeping $\theta_i$ fixed (that is, the oracle answers according to the vote which is closer to the voter \el between $\theta_i$ and $\beta_i$ and updates $\theta_i$ or $\beta_i$ accordingly). If the pair $(c_i, c_{i+1})$ is queried less than $(\log n - 2)$ times, then we have $\beta_i - \theta_i \ge 2$ at the end of the algorithm since every query for the pair $(c_i, c_{i+1})$ decreases $\beta_i - \theta_i$ by at most a factor of two and we started with $\beta_i - \theta_i = n-1$. Consider a voter $\kappa$ with $\theta_i < \kappa < \beta_i$. If the elicitation algorithm outputs that the voter $\kappa$ prefers $c_i$ over $c_{i+1}$ (respectively $c_{i+1}$ over $c_i$), then the oracle sets all the voters $\kappa^\pr$ with $\theta_i < \kappa^\pr < \beta_i$ to prefer $c_{i+1}$ over $c_i$ (respectively $c_i$ over $c_{i+1}$). Clearly, the algorithm does not elicit the preference of the voter $\kappa$ correctly. Also, the profile is single crossing with respect to the identity permutation of the voters and consistent with the answers of all the queries made by the algorithm. Hence, for every odd integer $i\in[m]$, the algorithm must make at least $(\log n - 1)$ queries for the pair $(c_i, c_{i+1})$ thereby making $\Omega(m\log n)$ queries in total.
\end{proof}

We now present our \PE algorithm when we have a sequential access to the voters according to a single crossing order. We elicit the preference of the first voter using \Cref{obs:naive_single_crossing}. From second vote onwards, we simply use the idea of {\em insertion sort} relative to the previously elicited vote \cite{cormen2009introduction}. Since we are using insertion sort, any particular voter may be queried $\OO(m^2)$ times. However, we are able to bound the query complexity of our algorithm due to two fundamental reasons: (i) consecutive preferences will often be almost similar in a single crossing ordering, (ii) our algorithm takes only $\BigO(m)$ queries to elicit the preference of the current voter if its preference is indeed the same as the preference of the voter preceding it. In other words, every time we have to ``pay'' for shifting a candidate further back in the current vote, the relative ordering of that candidate with all the candidates that it jumped over is now fixed, because for these pairs, the one permitted crossing is now used up. We begin with presenting an important subroutine called Elicit($\cdot$) which finds the preference of a voter \el given another preference \RR by performing an insertion sort using \RR as the order of insertion.

\begin{algorithm}[ht]
 \caption{Elicit(\CC, \RR, \el)}\label{alg:elicit}
\begin{algorithmic}[1]
 \Require{A set of candidates $\CC = \{c_i : i\in[m]\}$, an ordering $\RR = c_1 \succ \cdots \succ c_m$ of \CC, a voter \el}
 \Ensure{Preference ordering $\succ_\el$ of voter \el on \CC}
 \State $\QQ\leftarrow c_1$ \Comment{\QQ will be the preference of the voter \el}
 \For{$i \gets 2 \textrm{ to } m$}\label{elicit_for}\Comment{$c_i$ is inserted in the $i^{th}$ iteration}
  \State Scan \QQ linearly {\em from index $i-1$ to $1$} to find the index $j$ where $c_i$ should be inserted according to the preference of voter \el and insert $c_i$ in \QQ at $j$\label{elicit_scan}
 \EndFor
 \Return \QQ
\end{algorithmic}
\end{algorithm}

For the sake of the analysis of our algorithm, let us introduce a few terminologies. Given two preferences $\succ_1$ and $\succ_2$, we call a pair of candidates $(x, y)\in\CC\times\CC, x\ne y,$ {\em good} if both $\succ_1$ and $\succ_2$ order them in a same way; a pair of candidates is called {\em bad} if it is not good. We divide the number of queries made by our algorithm into two parts: goodCost($\cdot$) and badCost($\cdot$) which are the number of queries made between good and respectively bad pair of candidates. In what follows, we show that goodCost($\cdot$) for Elicit($\cdot$) is small and the total badCost($\cdot$) across all the runs of Elicit($\cdot$) is small.

\begin{lemma}\label{lem:good}
 The goodCost(Elicit($\CC, \RR, \el$)) of Elicit($\CC, \RR, \el$) is $\OO(m)$ (good is with respect to the preferences \RR and $\succ_\el$).
\end{lemma}

\begin{proof}
 Follows immediately from the observation that in any iteration of the for loop at line \ref{elicit_for} in \Cref{alg:elicit}, only one good pair of candidates are compared.
\end{proof}

We now use \Cref{alg:elicit} iteratively to find the profile. We present the pseudocode in \Cref{alg:final} which works for the more general setting where a single crossing ordering is known but the voters are arriving in any arbitrary order $\pi$. We next compute the query complexity of \Cref{alg:final} when voters are arriving in a single crossing order.

\begin{algorithm}[ht]
\caption{PreferenceElicit($\pi$)}\label{alg:final}
 \begin{algorithmic}[1]
  \Require{$\pi\in\SB_n$}
  \Ensure{Profile of all the voters}
  \State $\QQ[\pi(1)] \leftarrow$ Elicit $\succ_{\pi(1)}$ using \Cref{obs:naive_single_crossing} \Comment{\QQ stores the profile}
  \State $\XX \leftarrow \{\pi(1)\}$ \Comment{Set of voters' whose preferences have already been elicited}
  \For{$i \gets 2 \textrm{ to } n$} \Comment{Elicit the preference of voter $\pi(i)$ in iteration $i$}
   \State $k \leftarrow \min_{j\in\XX} |j-i|$\label{final_k} \Comment{Find the closest known preference}
   \State $\RR \leftarrow \QQ[k], \XX \leftarrow \XX\cup\{\pi(i)\}$
   \State $\QQ[\pi(i)] \leftarrow \text{ Elicit}(\CC, \RR, \pi(i))$
  \EndFor
  \Return \QQ
 \end{algorithmic}
\end{algorithm}

\begin{theorem}\label{thm:sc_seq_known_ub}
 Assume that the voters are arriving sequentially according to an order with respect to which a profile \PP is single crossing. Then there is a \PE algorithm with query complexity $\BigO(mn + m^2)$.
\end{theorem}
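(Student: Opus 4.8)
The plan is to analyze Algorithm~\ref{alg:final} directly via the decomposition of its cost into good and bad comparisons. First I would observe that, since the voters arrive in an order with respect to which $\PP$ is single crossing, we may rename the voters so that this arrival order is $1, 2, \ldots, n$ and $\PP$ is single crossing with respect to the identity permutation. Then at iteration $i$ of the for loop the set $\XX$ of already-elicited voters is $\{1, \ldots, i-1\}$, so line~\ref{final_k} selects $k = i-1$, and hence the reference ordering $\RR$ passed to Elicit is exactly the preference $\succ_{i-1}$ of the immediately preceding voter. Eliciting the first voter costs $\BigO(m\log m)$ by Observation~\ref{obs:naive_single_crossing}, which is absorbed into the $\BigO(m^2)$ term.

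Next I would bound the good comparisons. By Lemma~\ref{lem:good}, each call Elicit$(\CC, \succ_{i-1}, i)$ incurs a goodCost of $\BigO(m)$, so summing over all $n$ voters gives a total good cost of $\BigO(mn)$.

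The heart of the argument is bounding the total bad cost, and this is where I expect the main difficulty to lie. The key claim is that a bad comparison made while inserting a candidate corresponds exactly to that candidate ``jumping over'' another during the scan of line~\ref{elicit_scan}, and such a jump occurs precisely for a pair $\{x,y\}$ that $\succ_i$ and the reference $\succ_{i-1}$ order differently, i.e.\ a pair that crosses between voters $i-1$ and $i$. I would argue that within a single Elicit call each disagreeing pair is compared at most once (the pair $\{c_t, c_{t'}\}$, with $c_t$ preceding $c_{t'}$ in $\RR$, can only be touched while inserting the later candidate $c_{t'}$), so the badCost of Elicit$(\CC, \succ_{i-1}, i)$ is at most the number of pairs on which $\succ_{i-1}$ and $\succ_i$ disagree. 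It then remains to bound $\sum_{i=2}^{n}\lvert\{\{x,y\} : x,y \text{ ordered differently by } \succ_{i-1} \text{ and } \succ_i\}\rvert$. Here I invoke the single crossing property: for each unordered pair $\{x,y\}$, both the set of voters preferring $x$ over $y$ and its complement are intervals of $\{1,\ldots,n\}$, which forces the relative order of $x$ and $y$ to flip at most once as we pass from voter $1$ to voter $n$. Consequently each of the $\binom{m}{2}$ pairs contributes to at most one consecutive disagreement, so the total bad cost is $\BigO(m^2)$.

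Adding the three contributions, $\BigO(m\log m) + \BigO(mn) + \BigO(m^2) = \BigO(mn + m^2)$, yields the claimed query complexity. The delicate point to get right is the amortization in the bad-cost step: namely, that charging each bad comparison to a distinct crossing pair is legitimate, and that the single crossing hypothesis, combined with always comparing against the \emph{immediately preceding} voter, guarantees that every pair is charged at most once across the entire run.
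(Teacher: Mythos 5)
Your proposal is correct and follows essentially the same route as the paper's proof: decompose the cost into good and bad comparisons, bound the good cost by $\BigO(m)$ per voter via Lemma~\ref{lem:good}, and amortize the bad comparisons over the $\binom{m}{2}$ pairs using the fact that single crossing forces each pair to flip at most once across consecutive voters. Your version merely spells out more explicitly why each pair is charged at most once per Elicit call and at most once over the whole run, which the paper compresses into the observation that $\sum_{i=2}^{n} b(x,y,i) \le 1$.
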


\begin{proof}
 By renaming, let us assume, without loss of generality, that the voters are arriving according to the identity permutation $id_n$ of the voters and the profile \PP is single crossing with respect to $id_n$. Let the profile $\PP$ be $(P_1, P_2, \ldots, P_n) \in \LL(\CC)^n$. For two candidates $x, y\in \CC$ and a voter $i\in\{2, \ldots, n\}$, let us define a variable $b(x,y,i)$ to be one if $x$ and $y$ are compared for the voter $i$ by Elicit(\CC,$P_{i-1}$, $i$) and $(x, y)$ is a bad pair of candidates with respect to the preferences of voter $i$ and $i-1$; otherwise $b(x,y,i)$ is defined to be zero. Then we have the following. 
 \begin{eqnarray*}
  &&\text{CostPreferenceElicit}(id_n)\\ &=& \BigO(m\log m) +\sum_{i=2}^n\mathlarger{\mathlarger{(}}\text{goodCost(\Query}(\CC, P_{i-1}, i)) + \text{badCost(\Query}(\CC, P_{i-1}, i))\mathlarger{\mathlarger{)}}\\
  &\le& \BigO(m\log m + mn) + \sum_{i=2}^n\text{badCost(\Query}(\CC, P_{i-1}, i))\\
  &=& \BigO(m\log m + mn) + \mathlarger{\sum}_{(x, y)\in\CC\times\CC} \left(\sum_{i=2}^n b(x, y, i)\right)\\
  &\le& \BigO(m\log m + mn) + \sum_{(x, y)\in\CC\times\CC} 1\\
  &=& \BigO(mn + m^2)
 \end{eqnarray*} 
 The first inequality follows from \Cref{lem:good}, the second equality follows from the definition of $b(x, y, i)$, and the second inequality follows from the fact that $\sum_{i=2}^n b(x, y, i) \le 1$ for every pair of candidates $(x,y)\in\CC$ since the profile \PP is single crossing.
\end{proof}

We show next that, when the voters are arriving in a single crossing order, the query complexity upper bound in \Cref{thm:sc_seq_known_lb} is tight for a large number of voters up to constant factors. The idea is to pair up the candidates in a certain way and argue that the algorithm must compare the candidates in every pair for every voter thereby proving a $\Omega(mn)$ lower bound on query complexity.

\begin{theorem}\label{thm:sc_seq_known_lb}
 Assume that the voters are arriving sequentially according to an order with respect to which a profile \PP is single crossing. Then any deterministic \PE algorithm has query complexity $\Omega(m\log m + mn)$.
\end{theorem}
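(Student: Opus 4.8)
The plan is to prove the two summands $\Omega(m\log m)$ and $\Omega(mn)$ separately and take their maximum, using that $\max(a,b)=\Theta(a+b)$. The $\Omega(m\log m)$ term is immediate: a profile consisting of a single vote is (vacuously) single crossing, so the instance reduces to recovering one unknown linear order, which needs $\Omega(m\log m)$ comparisons by the standard sorting lower bound (exactly as in the proof of \Cref{thm:sc_random_lb}). The real work is the $\Omega(mn)$ term, which I would obtain from an adversary argument tailored to the sequential access model.

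For the construction I would assume $m$ is even, rename the voters so that the arrival/single-crossing order is the identity, and group the candidates into $m/2$ disjoint pairs $\{c_1,c_2\},\{c_3,c_4\},\dots,\{c_{m-1},c_m\}$. I fix all \emph{inter-pair} comparisons to be constant across voters: every candidate of block $i$ beats every candidate of block $j$ for $i<j$, in every vote. This forces the two candidates of each pair to occupy two consecutive positions in every vote, so the relative order inside a pair can never be recovered by transitivity and must come from a direct query or from interpolation across voters. Since for each ordered pair the set of voters preferring one candidate over the other must be contiguous, and its complement must be contiguous too, each pair flips at most once along the identity order, and the adversary may independently assign to each pair a prefix-type or suffix-type flip point; any such assignment yields a valid single-crossing profile, which is the freedom the adversary needs.

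The counting then rests on one structural fact: in the sequential model the queries issued for a fixed pair necessarily arrive in increasing voter index, and a released voter can never be revisited. For a given pair the adversary answers $c_{2i-1}\succ c_{2i}$ by default in increasing voter order, but the moment the algorithm queries the pair for some voter $s_j$ leaving a gap of at least one un-queried voter since its previous pair-query $s_{j-1}$ (that is $s_j\ge s_{j-1}+2$), the adversary instead answers $c_{2i}\succ c_{2i-1}$ there and plants the unique crossing of that pair inside the gap. I would then show that (a) if the algorithm skips voter $1$ or voter $n$ for the pair, or leaves any such gap, then at least one released voter's intra-pair order is left genuinely ambiguous between two consistent single-crossing completions, contradicting correctness; hence (b) a correct algorithm must query each pair for voter $1$, for voter $n$, and with every gap at most $1$, i.e. for at least $(n+1)/2$ voters. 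Summing over the $m/2$ independent pairs gives $\Omega(mn)$, and combining with the sorting bound yields $\Omega(m\log m+mn)$.

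The hard part will be verifying the two halves of step (a)/(b) precisely: that a voter sandwiched between two adjacent queried voters carrying equal answers is \emph{forced} by contiguity (so it need not be queried, which is why singleton skips are safe and only $\Omega(n)$ rather than exactly $n$ queries per pair can be forced), whereas a voter strictly inside a gap of length $\ge 2$ straddling the planted crossing admits two single-crossing completions differing on it. I would need to check that the adversary's lazy placement of a single crossing per pair stays consistent with every answer already given. This is exactly the point where the impossibility of revisiting a released voter is used; a random-access algorithm would instead binary-search the crossing and escape the bound, which is consistent with the weaker $\Omega(m\log m+m\log n)$ lower bound of \Cref{thm:sc_random_lb}.
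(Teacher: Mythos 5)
Your proposal is correct and is essentially the paper's own argument: the paper also pairs the candidates as $\{c_1,c_2\},\dots,\{c_{m-1},c_m\}$, keeps all inter-pair comparisons fixed, and has the adversary spend the one allowed crossing of a pair on the first voter the algorithm skips for that pair, so that the skipped (and already released) voter admits two single-crossing completions and the algorithm errs. One small correction to your bookkeeping: the claim that ``singleton skips are safe'' because a voter sandwiched between equal answers is forced does not help the algorithm in the adaptive sequential model --- by the time the algorithm learns whether the answers on either side of the gap agree, the skipped voter has been released, and the adversary will always make them disagree at the first skip; hence the full $n$ queries per pair (not just $(n+1)/2$) are forced, which is exactly the paper's cleaner accounting and only strengthens the $\Omega(mn)$ bound you derive.
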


\begin{proof}
 The $\Omega(m\log m)$ bound follows from the fact that any profile consisting of only one preference $P\in\LL(\CC)$ is single crossing. By renaming, let us assume without loss of generality that the profile \PP is single crossing with respect to the identity permutation of the voters. Suppose we have an even number of candidates and $\CC = \{c_1, \ldots, c_m\}$. Consider the order $\QQ = c_1 \succ c_2 \succ \cdots \succ c_m$ and the pairing of the candidates $\{c_1, c_2\}, \{c_3, c_4\}, \ldots, \{c_{m-1}, c_m\}$. The oracle answers all the query requests consistently according to the order $Q$ till the first voter $\kappa$ for which there exists at least one odd integer $i\in[m]$ such that the pair $(c_i, c_{i+1})$ is not queried. If there does not exist any such $\kappa$, then the algorithm makes at least $\nfrac{mn}{2}$ queries thereby proving the statement. Otherwise, let $\kappa$ be the first vote such that the algorithm does not compare $c_i$ and $c_{i+1}$ for some odd integer $i\in[m]$. The oracle answers the queries for the rest of the voters $\{\kappa+1, \ldots, n\}$ according to the order $Q^\pr = c_1 \succ c_2 \succ \cdots \succ c_{i-1} \succ c_{i+1} \succ c_i \succ c_{i+2} \succ \cdots \succ c_m$. If the algorithm orders $c_i \succ_\kappa c_{i+1}$ in the preference of the voter $\kappa$, then the oracle sets the preference of the voter $\kappa$ to be $\QQ^\pr$. On the other hand, if the algorithm orders $c_{i+1} \succ_\kappa c_i$ in the preference of voter $\kappa$, then the oracle sets the preference of voter $\kappa$ to be \QQ. Clearly, the elicitation algorithm fails to correctly elicit the preference of the voter $\kappa$. However, the profiles for both the cases are single crossing with respect to the identity permutation of the voters and are consistent with the answers given to all the queries made by the algorithm. Hence, the algorithm must make at least $\nfrac{mn}{2}$ queries.
\end{proof}

We next move on to the case when we know a single crossing order of the voters; however, the voters arrive in an arbitrary order $\pi\in\SB_n$. The idea is to call the function Elicit($\CC, \RR, i$) where the current voter is the voter $i$ and \RR is the preference of the voter which is closest to $i$ according to a single crossing ordering and whose preference has already been elicited by the algorithm.

\begin{theorem}\label{thm:sc_seq_any_ub}
 Assume that a profile \PP is known to be single crossing with respect to a known ordering of voters $\sigma\in\SB_n$. However, the voters are arriving sequentially according to an arbitrary order $\pi\in\SB_n$ which may be different from $\sigma$. Then there is a \PE algorithm with query complexity $\BigO(mn + m^2\log n)$.
\end{theorem}

\begin{proof}
 By renaming, let us assume, without loss of generality, that the profile \PP is single peaked with respect to the identity permutation of the voters. Let the profile $\PP$ be $(P_1, P_2, \ldots, P_n) \in \LL(\CC)^n$. Let $f:[n]\longrightarrow [n]$ be the function such that $f(i)$ is the $k$ corresponding to the $i$ at line \ref{final_k} in \Cref{alg:final}. For candidates $x, y\in\CC$ and voter $\ell$, we define $b(x, y, \el)$ analogously as in the proof of \Cref{thm:sc_seq_known_ub}. We claim that $ B(x, y) = \sum_{i=2}^n b(x, y, i) \le \log n$. To see this, we consider any arbitrary pair $(x, y)\in\CC\times\CC$. Let the set of indices of the voters that have arrived immediately after the first time $(x,y)$ contributes to $B(x, y)$ be $\{i_1, i_2, \ldots, i_t\}$. Without loss of generality, let us assume $i_1 < i_2 < \cdots < i_t$. Again, without loss of generality, let us assume that voters $i_1, i_2, \ldots, i_j$ prefer $x$ over $y$ and voters $i_{j+1}, \ldots, i_t$ prefer $y$ over $x$. Let us define $\Delta$ to be the difference between smallest index of the voter who prefers $y$ over $x$ and the largest index of the voter who prefers $x$ over $y$. Hence, we currently have $\Delta = i_{j+1} - i_j$. A crucial observation is that if a new voter \el contributes to $B(x, y)$ then we must necessarily have $i_j < \el < i_{j+1}$. Another crucial observation is that whenever a new voter contributes to $B(x, y)$, the value of $\Delta$ gets reduced at least by a factor of two by the choice of $k$ at line \ref{final_k} in \Cref{alg:final}. Hence, the pair $(x,y)$ can contribute at most $(1+\log \Delta) = \BigO(\log n)$ to $B(x, y)$ since we have $\Delta\le n$ to begin with. Then we have the following. 
 \sloppypar
 \begin{eqnarray*}
  &&\text{CostPreferenceElicit}(\pi)\\  &=& \BigO(m\log m) +\sum_{i=2}^n\text{goodCost(\Query}(\CC, P_{f(i)}, i)) +  \text{badCost(\Query}(\CC, P_{f(i)}, i))\\
  &\le& \BigO(m\log m + mn) + \sum_{i=2}^n\text{badCost(\Query}(\CC, P_{f(i)}, i))\\
  &=& \BigO(m\log m + mn) + \sum_{(x, y)\in\CC\times\CC} \sum_{i=2}^n b(x, y, i)\\
  &\le& \BigO(m\log m + mn) + \sum_{(x, y)\in\CC\times\CC} \log n\\
  &=& \BigO(mn + m^2\log n)
 \end{eqnarray*} 
 The first inequality follows from \Cref{lem:good}, the second equality follows from the definition of $b(x, y, i)$, and the second inequality follows from the fact that $\sum_{i=2}^n b(x, y, i) \le \log n$.
\end{proof}

\subsection{Unknown Single Crossing Order}

We now turn our attention to \PE for single crossing profiles when no single crossing ordering is known. Before we present our \PE algorithm for this setting, let us first prove a few structural results about single crossing profiles which we will use crucially later. We begin with showing an upper bound on the number of distinct preferences in any single crossing profile.

\begin{lemma}\label{lem:bound}
 Let \PP be a profile on a set \CC of candidates which is single crossing. Then the number of distinct preferences in \PP is at most ${m\choose 2}+1$.
\end{lemma}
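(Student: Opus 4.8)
The plan is to exploit the defining feature of single-crossing profiles, namely that each pair of candidates ``crosses'' at most once along the single-crossing order, and to charge every change in the vote to one such crossing. By renaming the voters I would assume, as is done throughout this chapter, that \PP is single crossing with respect to the identity permutation, so the votes appear as $P_1, P_2, \ldots, P_n$ and the single-crossing property holds along this order.

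First I would record the basic structural consequence of the definition. Fix an unordered pair $\{x, y\}$ of candidates and let $A = \{i : x \succ_i y\}$. By definition $A$ is an interval of $[n]$; applying the definition with the roles of $x$ and $y$ interchanged shows that $[n]\setminus A = \{i : y \succ_i x\}$ is also an interval. Two complementary nonempty intervals of $[n]$ can only be a prefix and a suffix, so there is a single threshold index at which the relative order of $x$ and $y$ flips (with the degenerate case that one of the two sets is empty, i.e.\ no flip occurs). In particular, the relative order of $x$ and $y$ changes in at most one of the $n-1$ gaps between consecutive voters.

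Next I would show that equal votes occur in contiguous blocks: if $P_i = P_j$ with $i < j$, then $P_k = P_i$ for every $i \le k \le j$. Indeed, for each pair $\{x, y\}$ the interval $A$ above contains both or neither of $i,j$, and therefore contains or excludes every $k$ strictly between them; hence $P_k$ orders every pair exactly as $P_i$ does, forcing $P_k = P_i$. Consequently the distinct votes partition $[n]$ into maximal runs, and the number of distinct preferences in \PP equals the number of such runs.

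It then remains to count the runs. If there are $d$ distinct preferences, there are exactly $d-1$ gaps at which the vote changes between consecutive voters. At each such gap the relative order of at least one pair of candidates flips, and by the first step each pair flips in at most one gap; hence the flips occurring at these $d-1$ distinct gaps involve pairwise disjoint, nonempty sets of candidate pairs. Summing over the gaps gives $d-1 \le \binom{m}{2}$, which is precisely the claimed bound $\binom{m}{2}+1$. The only mildly delicate point is the contiguity-of-runs claim, since it is exactly what lets me equate ``number of distinct preferences'' with ``number of change gaps plus one''; once the single-threshold-per-pair observation is established, the rest is just bookkeeping.
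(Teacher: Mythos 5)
Your proof is correct and follows essentially the same route as the paper: count the gaps where consecutive votes differ, charge each such gap to a pair of candidates whose relative order flips there, and use the fact that each pair flips at most once to bound the number of gaps by $\binom{m}{2}$. The contiguity-of-runs step you flag as delicate is actually not needed for the upper bound, since the number of distinct values in any sequence is at most the number of maximal runs, i.e.\ the number of change gaps plus one.
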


\begin{proof}
 By renaming, let us assume, without loss of generality, that the profile \PP is single crossing with respect to the identity permutation of the voters. We now observe that whenever the $i^{th}$ vote is different from the $(i+1)^{th}$ vote for some $i\in[n-1]$, there must exist a pair of candidates $(x, y)\in \CC\times\CC$ whom the $i^{th}$ vote and the $(i+1)^{th}$ vote order differently. Now the statement follows from the fact that, for every pair of candidates $(a,b)\in\CC\times\CC$, there can exist at most one $i\in[n-1]$ such that the $i^{th}$ vote and the $(i+1)^{th}$ vote order $a$ and $b$ differently.
\end{proof}

We show next that in every single crossing preference profile \PP where all the preferences are {\em distinct}, there exists a pair of candidates $(x, y)\in\CC\times\CC$ such that nearly half of the voters in \PP prefer $x$ over $y$ and the other voters prefer $y$ over $x$.

\begin{lemma}\label{lem:divide}
 Let \PP be a preference profile of $n$ voters such that all the preferences are distinct. Then there exists a pair of candidates $(x, y)\in\CC$ such that $x$ is preferred over $y$ in at least $\lfloor\nfrac{n}{2}\rfloor$ preferences and $y$ is preferred over $x$ in at least $\lfloor\nfrac{n}{2}\rfloor$ preferences in \PP.
\end{lemma}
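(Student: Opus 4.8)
The plan is to exploit the single crossing structure directly. Note first that distinctness of the preferences alone is not sufficient for the conclusion: the profile consisting of the $m-1$ distinct adjacent transpositions of a fixed order $c_1 \succ \cdots \succ c_m$ orders every pair the same way in all but at most one vote, so no pair splits the electorate anywhere near evenly. Hence I will use that $\PP$ is single crossing, as in the surrounding discussion. By renaming the voters, assume $\PP = (\succ_1, \ldots, \succ_n)$ is single crossing with respect to the identity permutation. The key observation I want is that, for a single crossing profile, each pair of candidates induces a clean ``prefix/suffix'' split of the voters; the desired pair is then simply the one that crosses at the median gap.

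First I would fix, for an unordered pair $\{x,y\}$, the set $S_{xy} = \{i \in [n] : x \succ_i y\}$. The single crossing condition applied to $(x,y)$ says precisely that $S_{xy}$ is an interval of consecutive indices, and applied to $(y,x)$ that its complement $[n] \setminus S_{xy}$ is also such an interval. The only way two intervals can partition $\{1, \ldots, n\}$ is for one of them to be a prefix $\{1, \ldots, t\}$ and the other the complementary suffix $\{t+1, \ldots, n\}$; indeed, if $S_{xy} = \{a, \ldots, b\}$ had both $a > 1$ and $b < n$, its complement would be the disconnected set $\{1, \ldots, a-1\} \cup \{b+1, \ldots, n\}$, contradicting contiguity. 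Thus $S_{xy}$ is either a prefix or a suffix of the voter order. This is the one structural step that needs care, and it is the crux of the argument.

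With this in hand, set $k = \lfloor \nfrac{n}{2} \rfloor$ and consider the two voters $\succ_k$ and $\succ_{k+1}$ (here I take $n \ge 2$; the case $n = 1$ is trivial since $\lfloor \nfrac{n}{2} \rfloor = 0$). Since all preferences are distinct, $\succ_k \ne \succ_{k+1}$, so there is a pair $(x,y)$ on which they disagree, say $x \succ_k y$ and $y \succ_{k+1} x$. Then $k \in S_{xy}$ while $k+1 \notin S_{xy}$, so the boundary of the prefix-or-suffix $S_{xy}$ falls exactly between positions $k$ and $k+1$; together with the previous paragraph this forces $S_{xy} = \{1, \ldots, k\}$. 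Consequently $x$ is preferred over $y$ by exactly $k = \lfloor \nfrac{n}{2} \rfloor$ voters and $y$ over $x$ by the remaining $n - k = \lceil \nfrac{n}{2} \rceil \ge \lfloor \nfrac{n}{2} \rfloor$ voters, which exhibits the required pair. The main obstacle is the partition-into-two-intervals step; once that prefix/suffix structure is established, picking the crossing located at the median gap makes the rest immediate.
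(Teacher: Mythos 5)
Your proof is correct and follows essentially the same route as the paper's: pick the pair on which voters $\lfloor n/2\rfloor$ and $\lfloor n/2\rfloor+1$ disagree and use single crossing to conclude the electorate splits as a prefix and a suffix at that gap. Your preliminary observation that the single-crossing hypothesis (implicit in the lemma's context but omitted from its statement) is genuinely needed, with the adjacent-transposition counterexample, is a worthwhile addition, and your explicit justification of the prefix/suffix structure fills in a step the paper leaves to the reader.
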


\begin{proof}
 Without loss of generality, by renaming, let us assume that the profile \PP is single crossing with respect to the identity permutation of the voters. Since all the preferences in \PP are distinct, there exists a pair of candidates $(x, y)\in\CC\times\CC$ such that the voter $\lfloor\nfrac{n}{2}\rfloor$ and the voter $\lfloor\nfrac{n}{2}\rfloor + 1$ order $x$ and $y$ differently. Let us assume, without loss of generality, that the voter $\lfloor\nfrac{n}{2}\rfloor$ prefers $x$ over $y$. Now, since the profile \PP is single crossing, every voter in $[\lfloor\nfrac{n}{2}\rfloor]$ prefer $x$ over $y$ and every voter in $\{\lfloor\nfrac{n}{2}\rfloor + 1, \ldots, n\}$ prefer $y$ over $x$.
\end{proof}

Using \Cref{lem:bound,lem:divide} we now design a \PE algorithm when no single crossing ordering of the voters is known. The overview of the algorithm is as follows. At any point of time in the elicitation process, we have the set \QQ of all the distinct preferences that we have already elicited completely and we have to elicit the preference of a voter $\ell$. We first search the set of votes \QQ for a preference which is {\em possibly} same as the preference $\succ_\el$ of the voter $\ell$. It turns out that we can find a possible match $\succ\in\QQ$ using $\OO(\log|\QQ|)$ queries due to \Cref{lem:divide} which is $\OO(\log m)$ due to \Cref{lem:bound}. We then check whether the preference of the voter \el is indeed the same as $\succ$ or not using $\OO(m)$ queries. If $\succ$ is the same as $\succ_\el$, then we have elicited $\succ_\el$ using $\OO(m)$ queries. Otherwise, we elicit $\succ_\el$ using $\OO(m\log m)$ queries using \Cref{obs:naive_single_crossing}. Fortunately, \Cref{lem:bound} tells us that we would use the algorithm in \Cref{obs:naive_single_crossing} at most $\OO(m^2)$ times. We present the pseudocode of our \PE algorithm in this setting in \Cref{alg:unknown}. It uses \Cref{alg:same} as a subroutine which returns \true if the preference of any input voter is same as any given preference.

\begin{algorithm}[ht]
 \caption{Same(\RR, \el)}\label{alg:same}
 \begin{algorithmic}[1]
  \Require{$\RR = c_1\succ c_2\succ \cdots \succ c_m \in\LL(\CC) ,\el\in[n]$}
  \Ensure{\true if the preference of the $\el^{th}$ voter is \RR; \false otherwise}
  \For{$i \gets 1 \textrm{ to } m-1$}
   \If{\Query($c_i \succ_\el c_{i+1}$) = \false}
    \Return \false \Comment{We have found a mismatch.}
   \EndIf
  \EndFor
  \Return \true
 \end{algorithmic} 
\end{algorithm}

\begin{algorithm}[ht]
\caption{PreferenceElicitUnknownSingleCrossingOrdering($\pi$)}\label{alg:unknown}
 \begin{algorithmic}[1]
  \Require{$\pi\in\SB_n$}
  \Ensure{Profile of all the voters}
  \State $\RR, \QQ \leftarrow \emptyset$ \Comment{\QQ stores all the votes seen so far without duplicate. \RR stores the profile.}
  \For{$i \gets 1 \textrm{ to } n$} \Comment{Elicit preference of the $i^{th}$ voter in $i^{th}$ iteration of this for loop.}
   \State $\QQ^\pr \leftarrow \QQ$ 
   \While{$|\QQ^\pr|>1$} \Comment{Search \QQ to find a vote potentially same as the preference of $\pi(i)$}
    \State Let $x, y\in \CC$ be two candidates such that at least $\lfloor\nfrac{|\QQ^\pr|}{2}\rfloor$ votes in $\QQ^\pr$ prefer $x$ over $y$ and at least $\lfloor\nfrac{|\QQ^\pr|}{2}\rfloor$ votes in $\QQ^\pr$ prefer $y$ over $x$.
    \If{\Query($x \succ_{\pi(i)} y$) = \true}
     \State $\QQ^\pr \leftarrow \{ v\in\QQ^\pr : v \text{ prefers } x \text{ over } y \}$
    \Else
     \State $\QQ^\pr \leftarrow \{ v\in\QQ^\pr : v \text{ prefers } y \text{ oer } x \}$
    \EndIf
   \EndWhile
   \State Let $w$ be the only vote in $\QQ^\pr$\label{potential_match} \Comment{$w$ is potentially same as the preference of $\pi(i)$}
   \If{Same($w, \pi(i)$) = \true} \Comment{Check whether the vote $\pi(i)$ is potentially same as $w$}
    \State $\RR[\pi(i)] \leftarrow w$
   \Else 
    \State $\RR[\pi(i)] \leftarrow$ Elicit using \Cref{obs:naive_single_crossing}
    \State $\QQ \leftarrow \QQ\cup\{\RR[\pi(i)]\}$
   \EndIf
  \EndFor
  \Return \RR
 \end{algorithmic}
\end{algorithm}

\begin{theorem}\label{thm:sc_seq_unknown_ub}
 Assume that a profile \PP is known to be single crossing. However, no ordering of the voters with respect to which \PP is single crossing is known a priori. The voters are arriving sequentially according to an arbitrary order $\pi\in\SB_n$. Then there is a \PE algorithm with query complexity $\BigO(mn + m^3\log m)$.
\end{theorem}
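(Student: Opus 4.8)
The plan is to establish the correctness and the claimed query bound of the algorithm already described in \Cref{alg:unknown}, which invokes \Cref{alg:same} as a subroutine. The strategy for the complexity analysis is to partition every query the algorithm makes into three disjoint groups — the \emph{search} queries made inside the \textbf{while} loop while scanning \QQ for a candidate match, the \emph{verification} queries made by \textsc{Same}, and the \emph{full-elicitation} queries made via \Cref{obs:naive_single_crossing} when a genuinely new vote is discovered — and to bound each group separately, leaning on the structural facts in \Cref{lem:bound,lem:divide}.

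First I would argue correctness. The key invariant while processing voter $\pi(i)$ is that $\QQ^\pr$ always contains the true preference $\succ_{\pi(i)}$, provided that preference already lies in \QQ: each iteration of the \textbf{while} loop queries $\pi(i)$ on a pair $(x,y)$ and retains in $\QQ^\pr$ exactly the votes that agree with $\pi(i)$ on $(x,y)$, so $\succ_{\pi(i)}$ is never discarded. Since \Cref{lem:divide} guarantees, at every stage, a pair $(x,y)$ splitting $\QQ^\pr$ into two parts each of size at least $\lfloor |\QQ^\pr|/2\rfloor$, the loop terminates with a singleton $\{w\}$, and $w=\succ_{\pi(i)}$ whenever the vote was already present. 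The call \textsc{Same}$(w,\pi(i))$ then confirms this exactly; if it refutes (or if \QQ is empty, as for the first voter), the vote is new and is elicited in full and added to \QQ. Hence the output is always correct.

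For the query complexity, I would first invoke \Cref{lem:bound} to bound $|\QQ|\le{m\choose 2}+1=\BigO(m^2)$ at all times, so that each search halves $\QQ^\pr$ and terminates within $\BigO(\log|\QQ|)=\BigO(\log m)$ iterations, each costing one query; summed over all voters this is $\BigO(n\log m)$ search queries. Each \textsc{Same} call costs $\BigO(m)$ queries and is invoked at most once per voter, contributing $\BigO(mn)$. Finally, a full elicitation costs $\BigO(m\log m)$ queries and is triggered only when a new distinct vote is found; by \Cref{lem:bound} there are at most $\BigO(m^2)$ distinct votes, so this path is taken at most $\BigO(m^2)$ times, contributing $\BigO(m^3\log m)$. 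Adding the three groups gives $\BigO(n\log m)+\BigO(mn)+\BigO(m^3\log m)=\BigO(mn+m^3\log m)$, as claimed.

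The main obstacle is the analysis of the search loop rather than any single calculation: one must check that the balanced pair promised by \Cref{lem:divide} continues to exist on every \emph{subset} $\QQ^\pr$ encountered, using that any sub-profile of a single crossing profile is again single crossing and that the votes stored in \QQ are pairwise distinct, so that the search genuinely halves $\QQ^\pr$ at each step and therefore costs only $\BigO(\log m)$. The remainder is bookkeeping, namely charging the expensive $\BigO(m\log m)$ sorting cost of each full elicitation to one of the at most $\BigO(m^2)$ distinct preferences guaranteed by \Cref{lem:bound}.
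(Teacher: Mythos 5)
Your proof is correct and follows essentially the same route as the paper's: it analyzes \Cref{alg:unknown} by splitting queries into the binary-search phase over \QQ (bounded via \Cref{lem:bound,lem:divide} by $\BigO(\log m)$ per voter), the $\BigO(m)$ verification by \textsc{Same}, and the $\BigO(m\log m)$ full elicitation charged to each of the at most $\BigO(m^2)$ distinct preferences. Your explicit remark that \Cref{lem:divide} must be applied to the sub-profiles $\QQ^\pr$ (which remain single crossing and consist of distinct votes) is a point the paper's proof leaves implicit, but the argument is otherwise identical.
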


\begin{proof}
 We present the pesudocode in \Cref{alg:unknown}. We maintain two arrays in the algorithm. The array \RR is of length $n$ and the $j^{th}$ entry stores the preference of voter $j$. The other array \QQ stores all the votes seen so far after removing duplicate votes; more specifically, if some specific preference $\suc$ has been seen \el many times for any $\el>0$, \QQ stores only one copy of $\suc$. Upon arrival of voter $i$, we first check whether there is a preference in \QQ which is ``potentially'' same as the preference of voter $i$. At the beginning of the search, our search space $\QQ^\pr=\QQ$ for a potential match in \QQ is of size $|\QQ|$. We next iteratively keep halving the search space as follows. We find a pair of candidates $(x, y)\in\CC\times\CC$ such that at least $\lfloor\nfrac{|\QQ^\pr|}{2}\rfloor$ preferences in $\QQ^\pr$ prefer $x$ over $y$ and at least $\lfloor\nfrac{|\QQ^\pr|}{2}\rfloor$ preferences prefer $y$ over $x$. The existence of such a pair of candidates is guaranteed by \Cref{lem:divide} and can be found in $\OO(m^2)$ time by simply going over all possible pairs of candidates. By querying how voter $i$ orders $x$ and $y$, we reduce the search space $\QQ^\pr$ for a potential match in \QQ to a set of size at most $\lfloor\nfrac{|\QQ^\pr|}{2}\rfloor+1$. Hence, in $\BigO(\log m)$ queries, the search space reduces to only one preference since we have $|\QQ|\le m^2$ by \Cref{lem:bound}. Once we find a potential match $w$ in \QQ (line \ref{potential_match} in \Cref{alg:unknown}), we check whether the preference of voter $i$ is the same as $w$ or not using $\BigO(m)$ queries. If the preference of voter $i$ is indeed same as $w$, then we output $w$ as the preference of voter $i$. Otherwise, we use \Cref{obs:naive_single_crossing} to elicit the preference of voter $i$ using $\BigO(m\log m)$ queries and put the preference of voter $i$ in \QQ. Since the number of times we need to use the algorithm in \Cref{obs:naive_single_crossing} is at most the number of distinct votes in \PP which is known to be at most $m^2$ by \Cref{lem:bound}, we get the statement.
\end{proof}

\Cref{thm:sc_seq_unknown_ub} immediately gives us the following corollary in the random access to voters model when no single crossing ordering is known.

\begin{corollary}\label{cor:sc_random_unknown_ub}
 Assume that a profile \PP is known to be single crossing. However, no ordering of the voters with respect to which \PP is single crossing is known. Given a random access to voters, there is a \PE algorithm with query complexity $\BigO(mn + m^3\log m)$.
\end{corollary}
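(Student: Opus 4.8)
The plan is to observe that the corollary is an immediate consequence of \Cref{thm:sc_seq_unknown_ub}, because the random access model is at least as powerful as the sequential access model. In the sequential model, voters arrive in some externally imposed order $\pi\in\SB_n$, and the algorithm must finish all of its queries to the current voter before that voter is permanently released. In the random access model, by contrast, we are free to query any voter at any time and to interleave queries to different voters. Thus any algorithm that is feasible under the sequential constraint is, a fortiori, feasible under random access, with exactly the same set of queries and hence the same query complexity.

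Concretely, I would first note that \Cref{alg:unknown} (the algorithm underlying \Cref{thm:sc_seq_unknown_ub}) has the crucial property that during iteration $i$ it only ever queries the current voter $\pi(i)$, and it never returns to a previously processed voter once it has moved on. This is precisely what makes it a valid sequential-access algorithm, and it is exactly this property that lets us transplant it verbatim into the random access setting. Given random access to the voters, I would simply fix an arbitrary processing order of the $n$ voters (say the identity order) and run \Cref{alg:unknown} against that order, treating each voter in turn as the ``current'' voter. Since random access permits this access pattern (and indeed strictly more), the simulation is legitimate.

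The query complexity analysis then carries over unchanged: the bound $\BigO(mn + m^3\log m)$ established in the proof of \Cref{thm:sc_seq_unknown_ub} depends only on the number of queries issued per voter and on the number of times \Cref{obs:naive_single_crossing} is invoked (at most $\OO(m^2)$ times, by \Cref{lem:bound}), neither of which changes when we relax the access constraint. Therefore the same $\BigO(mn + m^3\log m)$ bound holds in the random access model, establishing the corollary.

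I expect there to be no substantive obstacle here; the entire content of the argument is the model-inclusion observation that random access subsumes sequential access. The only point that warrants care is verifying that \Cref{alg:unknown} genuinely never re-queries a released voter, so that it qualifies as a sequential-access algorithm in the first place; once this is checked, the transfer to random access and the preservation of the query complexity bound are automatic.
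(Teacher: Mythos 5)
Your proposal is correct and matches the paper's own argument: the paper likewise observes that \Cref{alg:unknown} works verbatim in the random access setting with the exact same query complexity bound. Your additional care in verifying that the algorithm never re-queries a released voter is a welcome elaboration but does not change the route.
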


\begin{proof}
 \Cref{alg:unknown} works for this setting also and exact same bound on the query complexity holds.
\end{proof}

We now show that the query complexity upper bound of \Cref{cor:sc_random_unknown_ub} is tight up to constant factors for large number of voters.

\begin{theorem}\label{thm:sc_random_unknown_lb}
 Given a random access to voters, any deterministic \PE algorithm which do not know any ordering of the voters with respect to which the input profile is single crossing has query complexity $\Omega(m\log m + mn)$.
\end{theorem}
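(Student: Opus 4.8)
The plan is to establish the two summands separately and then combine them, mirroring the adversary strategy of \Cref{thm:sc_seq_known_lb} but now exploiting the fact that no single crossing order is known to the algorithm. The $\Omega(m\log m)$ term is immediate: a profile consisting of a single arbitrary preference $\succ\in\LL(\CC)$ is trivially single crossing, and eliciting one such preference is equivalent to sorting $m$ items, which requires $\Omega(m\log m)$ comparisons in the worst case.

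For the $\Omega(mn)$ term I would set up an adversary. Take $m$ even, fix the order $Q = c_1 \succ c_2 \succ \cdots \succ c_m$, and pair the candidates as $\{c_1,c_2\}, \{c_3,c_4\}, \ldots, \{c_{m-1},c_m\}$. The oracle answers every query $(\el, x, y)$ consistently with $Q$, regardless of interleaving, so the random access model poses no extra difficulty for the adversary. The key structural observation is that for any odd $i$ and any voter $\el$, swapping only the adjacent pair $(c_i, c_{i+1})$ in voter $\el$'s preference yields a preference $Q^\pr$ in which every third candidate $c_k$ (with $k\ne i, i+1$) sits on the same side of both $c_i$ and $c_{i+1}$ as it does in $Q$. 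Consequently, the relative order of $c_i$ and $c_{i+1}$ for voter $\el$ cannot be deduced by transitivity from any queries other than a direct comparison of $c_i$ and $c_{i+1}$ for that voter.

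Now I would count. There are $n\cdot(\nfrac{m}{2})$ distinct (voter, pair) combinations, and covering each requires a distinct query tuple $(\el, c_i, c_{i+1})$. If the algorithm makes fewer than $\nfrac{mn}{2}$ queries, then some combination (voter $\el$, pair $\{c_i, c_{i+1}\}$) is left uncovered. The adversary then exhibits two profiles --- the all-$Q$ profile, and the profile obtained by flipping only that one pair for voter $\el$ --- both of which are single crossing (each uses at most two distinct preferences, so some ordering of the voters witnesses the single crossing property) and both consistent with every answer already given. These two profiles disagree on voter $\el$'s ordering of $c_i$ and $c_{i+1}$, so the deterministic algorithm's output is wrong on at least one of them, contradicting correctness. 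Hence any correct algorithm makes at least $\nfrac{mn}{2} = \Omega(mn)$ queries.

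Finally, since both lower bounds hold over the worst case and $m\log m + mn \le 2\max(m\log m, mn)$, the two bounds combine to give $\Omega(m\log m + mn)$. I expect the main subtlety to be the transitivity argument of the second paragraph: one must verify carefully that no indirect chain of comparisons can pin down the order within an unqueried pair, which is precisely why \emph{adjacent} pairs in $Q$ are chosen. The random access model, perhaps counterintuitively, adds no obstacle, because the adversary's commitment to $Q$ is oblivious to the order in which the queries arrive.
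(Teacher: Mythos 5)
Your proposal is correct and follows essentially the same route as the paper's proof: the same sorting argument for the $\Omega(m\log m)$ term, and the same adversary that answers according to $Q$ and flips one unqueried adjacent pair $\{c_i, c_{i+1}\}$ for a single voter to produce two indistinguishable single crossing profiles, yielding $\Omega(mn)$. Your explicit remarks on why transitivity cannot rescue an unqueried adjacent pair and why a two-preference profile is single crossing are details the paper leaves implicit, but the argument is the same.
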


\begin{proof}
 The $\Omega(m\log m)$ bound follows from sorting lower bound and the fact that any profile consisting of only one preference $P\in\LL(\CC)$ is single crossing. Suppose we have an even number of candidates and $\CC = \{c_1, \ldots, c_m\}$. Consider the ordering $\QQ = c_1 \succ c_2 \succ \cdots \succ c_m$ and the pairing of the candidates $\{c_1, c_2\}, \{c_3, c_4\}, \ldots, \{c_{m-1}, c_m\}$. The oracle answers all the query requests consistently according to the ordering $Q$. We claim that any \PE algorithm \AA must compare $c_i$ and $c_{i+1}$ for every voter and for every odd integer $i\in[m]$. Indeed, otherwise, there exist a voter $\kappa$ and an odd integer $i\in[m]$ such that the algorithm \AA does not compare $c_i$ and $c_{i+1}$. Suppose the algorithm outputs a profile $\PP^\pr$. If the voter $\kappa$ prefers $c_i$ over $c_{i+1}$ in $\PP^\pr$, then the oracle fixes the preference $\suc_\kappa$ to be $c_1 \succ c_2 \succ \cdots \succ c_{i-1} \succ c_{i+1} \succ c_i \succ c_{i+2} \succ \cdots \succ c_m$; otherwise the oracle fixes $\suc_\kappa$ to be \QQ. The algorithm fails to correctly output the preference of the voter $\kappa$ in both the cases. Also the final profile with the oracle is single crossing with respect to any ordering of the voters that places the voter $\kappa$ at the end. Hence, \AA must compare $c_i$ and $c_{i+1}$ for every voter and for every odd integer $i\in[m]$ and thus has query complexity $\Omega(mn)$.
\end{proof}

\subsection{Single Crossing Width}\label{subsec:sc_width}

We now consider preference elicitation for profiles which are nearly single crossing. We begin with profiles with bounded single crossing width.

\begin{proposition}\label{prop:width}
 Suppose a profile \PP is single crossing with width $w$. Given a \PE algorithm \AA with query complexity $\qqq(m,n)$ for random (or sequential) access to the voters when a single crossing order is known (or unknown), there exists a \PE algorithm $\AA^\pr$ for the single crossing profiles with width $w$ which has query complexity $\OO(\qqq(\nfrac{m}{w},n) + mn\log w)$ under same setting as \AA.
\end{proposition}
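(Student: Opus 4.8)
The plan is to reduce the width-$w$ problem to the ordinary (width-$1$) single crossing problem on only $k=\lceil\nfrac{m}{w}\rceil$ candidates, paying a separate $\BigO(mn\log w)$ overhead to recover the fine structure inside each block. Fix a partition $(\CC_i)_{i\in[k]}$ witnessing that \PP is single crossing with width $w$ (this partition is given, or can be searched for in polynomial time), and pick one representative candidate $r_i\in\CC_i$ from each block. The first observation is that the subprofile $\PP(\{r_1,\dots,r_k\})$ is an \emph{ordinary} single crossing profile on the $k$ representatives with respect to the very same voter ordering $\sigma$: every pair $(r_i,r_j)$ with $i\ne j$ is a cross-block pair, and the definition of single crossing width $w$ forces exactly the cross-block pairs to be single crossing with respect to $\sigma$.

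First I would run \AA on this representative subprofile. Each comparison that \AA issues, namely comparing two representatives $r_i,r_j$ for some voter \el, is answered by a single call to \Query($r_i\succ_\el r_j$) in the real instance, so simulating \AA costs exactly $\qqq(k,n)=\qqq(\nfrac{m}{w},n)$ queries. Crucially, the access model is preserved by this simulation: a known $\sigma$ for \PP is a known single crossing order for the representative subprofile, an unknown order stays unknown, and querying a representative of voter \el is possible precisely when \AA is allowed to touch voter \el (whether the voters are presented at random or sequentially). Hence $\AA^\pr$ inherits the same setting as \AA, and at the end of this phase we know, for every voter, the relative order of the $k$ blocks.

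Second, for every voter \el and every block $\CC_i$ I would elicit the internal order $\succ_\el(\CC_i)$ by sorting, which by \Cref{obs:naive_single_crossing} costs $\BigO(|\CC_i|\log|\CC_i|)$ queries; summing over the blocks gives $\BigO(m\log w)$ per voter and $\BigO(mn\log w)$ in total, so that putting the two phases together yields the claimed $\BigO(\qqq(\nfrac{m}{w},n)+mn\log w)$ bound. It remains to combine the block-level order with the within-block orders into the full preference of each voter, and this is where the single-crossing-width structure must be used and is the main obstacle. The clean case is when the blocks occupy contiguous segments of every vote, for then each $\succ_\el$ is recovered with no further queries by writing the blocks in the order found in phase one and filling each block with its sorted internal order. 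The heart of the argument is therefore a structural lemma showing that the block-level single crossing order indeed governs the interleaving of the blocks inside each individual vote, so that the representative comparisons together with the within-block rankings determine all remaining cross-block comparisons; establishing this (and, should the interleaving fail to be perfectly contiguous, bounding the few extra cross-block comparisons that must then be resolved so as to stay within the $\qqq(\nfrac{m}{w},n)$ budget) is the delicate step, the rest of the accounting being routine.
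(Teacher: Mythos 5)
Your construction is exactly the one the paper uses: pick one representative per block, simulate \AA on the representative subprofile (which is an ordinary single crossing profile in the same access model, since every pair of representatives is a cross-block pair), at cost $\qqq(\nfrac{m}{w},n)$, and then sort each block separately for each voter at total cost $\sum_i \BigO(|\CC_i|\log|\CC_i|) = \BigO(m\log w)$ per voter, i.e., $\BigO(mn\log w)$ overall. The accounting and the observation that the simulation preserves the access model both match the paper's argument.

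The ``delicate step'' you flag --- recombining the block-level order with the within-block orders into a full preference --- is genuinely the crux, and the paper does not address it either: its proof simply asserts that finding $\PP(\CC^\pr)$ and each $\PP(\CC_i)$ ``thereby'' yields \PP. The resolution is that single crossing width is standardly defined so that each block $\CC_i$ is a clone set, i.e., an interval of every vote, and the contracted profile on the blocks is single crossing; under that reading each vote is literally a concatenation of blocks in the order revealed by the representatives, so recombination needs no further queries and your ``clean case'' is the only case. Under the literal definition printed in the chapter (only cross-block pairs are required to be single crossing, with no contiguity requirement), your worry is justified and the argument is incomplete: with $\CC_1=\{a_1,a_2\}$, $\CC_2=\{b\}$, representative $a_1$, and a single voter, both $a_1\succ a_2\succ b$ and $a_1\succ b\succ a_2$ are consistent with the representative comparison $a_1\succ b$ and the internal order $a_1\succ a_2$, so the remaining cross-block pairs are not determined by the two phases. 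In short: same approach as the paper, correct once the blocks are intervals of every vote (the intended meaning of width), and you were right to isolate the recombination step as the one place where that structural property must actually be invoked.
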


\begin{proof}
 Let the partition of the set of candidates \CC with respect to which the profile \PP is single crossing be $\bCC_i, i\in[\lceil\nfrac{m}{w}\rceil]$. Hence, $\CC = \cup_{i\in[\lceil\nfrac{m}{w}\rceil]} \bCC_i$ and $\bCC_i \cap \bCC_j = \emptyset$ for every $i, j\in[\lceil\nfrac{m}{w}\rceil]$ with $i\ne j$. Let $\CC^\pr$ be a subset of candidates containing exactly one candidate from $\bCC_i$ for each $i\in[\lceil\nfrac{m}{w}\rceil]$. We first find $\PP(\CC^\pr)$ using \AA. The query complexity of this step is $\OO(\qqq(\nfrac{m}{w},n))$. Next we find $\PP(\bCC_i)$ using \Cref{obs:naive_single_crossing} for every $i\in[\nfrac{m}{w}]$ thereby finding \PP. The overall query complexity is $\OO(\qqq(\nfrac{m}{w},n) + (\nfrac{m}{w})nw\log w) = \OO(\qqq(\nfrac{m}{w},n) + mn\log w)$.
\end{proof}

From \Cref{prop:width,lem:sc_random_ub,thm:sc_seq_known_ub,thm:sc_seq_any_ub,thm:sc_seq_unknown_ub,cor:sc_random_unknown_ub} we get the following.

\begin{corollary}
 Let a profile $\PP$ be single crossing with width $w$. Then there exists a \PE algorithm with query complexity $\BigO((\nfrac{m^2}{w}) \log (\nfrac{n}{w}) + mn\log w)$ for known single crossing order and random access to votes, $\BigO(\nfrac{m^2}{w^2} + mn\log w)$ for sequential access to votes according to a single crossing order, $\BigO((\nfrac{m^2}{w^2})\log (\nfrac{n}{w}) + mn\log w)$ for known single crossing order but arbitrary sequential access to votes, $\BigO(\nfrac{m^3}{w^3}\log (\nfrac{m}{w}) + mn\log w)$ for unknown single crossing order and arbitrary sequential access to votes or random access to votes.
\end{corollary}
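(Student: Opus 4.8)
The plan is to derive every bound in the corollary as a one-line consequence of \Cref{prop:width} together with the single-crossing elicitation results established earlier in this chapter. Recall that \Cref{prop:width} converts any \PE algorithm \AA of query complexity $\qqq(m,n)$, in a given knowledge/access setting, into a \PE algorithm for width-$w$ profiles in the \emph{same} setting with query complexity $\OO(\qqq(\nfrac{m}{w},n)+mn\log w)$. Thus for each row of the corollary I would pick the algorithm \AA whose guarantee is exactly the matching upper bound from the earlier results, and read off $\qqq(\nfrac{m}{w},n)$ by the substitution $m\mapsto\nfrac{m}{w}$; here the number of voters stays $n$, since \Cref{prop:width} shrinks only the effective candidate set to one representative per block while retaining all $n$ voters.

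Concretely, for the known order with random access I would instantiate $\qqq(m,n)=\BigO(m^2\log n)$ from \Cref{lem:sc_random_ub}; for the sequential single-crossing order, $\qqq(m,n)=\BigO(mn+m^2)$ from \Cref{thm:sc_seq_known_ub}; for the known order with arbitrary sequential access, $\qqq(m,n)=\BigO(mn+m^2\log n)$ from \Cref{thm:sc_seq_any_ub}; and for both unknown-order cases (sequential or random access), $\qqq(m,n)=\BigO(mn+m^3\log m)$ from \Cref{thm:sc_seq_unknown_ub} and \Cref{cor:sc_random_unknown_ub} respectively. After substituting and adding the $mn\log w$ overhead, in each case the linear-in-$n$ piece coming from $\qqq(\nfrac{m}{w},n)$, namely $\nfrac{mn}{w}$, is dominated by $mn\log w$ for $w\ge 2$, so it drops out, and what remains is precisely the tabulated bound.

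There is no substantive obstacle: the reduction (\Cref{prop:width}) and the component algorithms are already in hand, so the work is purely the bookkeeping of the five substitutions together with the observation that $\nfrac{mn}{w}=\OO(mn\log w)$. The only point that needs care is tracking the logarithmic factors correctly under the substitution $m\mapsto\nfrac{m}{w}$: because $n$ is unchanged the $\log n$ factors persist, while the $\log m$ factor in the unknown-order bounds becomes $\log(\nfrac{m}{w})$, and the quadratic/cubic prefactors rescale to $\nfrac{m^2}{w^2}$ and $\nfrac{m^3}{w^3}$ respectively. Getting these logarithms and powers of $w$ right is what distinguishes the stated bounds from a naive rescaling, and it is the one place I would verify each row by hand rather than quoting the substitution blindly.
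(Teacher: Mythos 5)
Your proposal is exactly the paper's own argument: the paper proves this corollary in one line by citing \Cref{prop:width} together with \Cref{lem:sc_random_ub,thm:sc_seq_known_ub,thm:sc_seq_any_ub,thm:sc_seq_unknown_ub,cor:sc_random_unknown_ub}, which is precisely the substitution-and-bookkeeping you describe, including the absorption of the $\nfrac{mn}{w}$ term into $mn\log w$. One caveat: your own (correct) observation that $n$ is unchanged under the reduction means the honest substitution yields $\BigO((\nfrac{m^2}{w^2})\log n)$ for the two known-order cases with $\log n$ factors, not the $\log(\nfrac{n}{w})$ (and, in the random-access case, $\nfrac{m^2}{w}$) appearing in the printed statement; those discrepancies appear to be typographical in the corollary rather than something your derivation should reproduce, so you should not claim to obtain "precisely the tabulated bound" there but rather the $\log n$ form that the cited results actually support.
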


\section{Results for Condorcet Winner}\label{sec:sc_cw}

\begin{lemma}\label{lem:weak_cond}
 Let an $n$ voter profile $\PP$ is single crossing with respect to the ordering $v_1, v_2, \ldots, v_n$. Then the candidate which is placed at the top position of the vote $v_{\lceil \nfrac{n}{2}\rceil}$ is a weak Condorcet winner.
\end{lemma}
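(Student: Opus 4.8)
As in the lemmas preceding this one, I would assume without loss of generality (by renaming voters) that $\PP$ is single crossing with respect to the identity permutation $v_1, v_2, \ldots, v_n$. Write $p = \lceil \nfrac{n}{2}\rceil$ for the median index and let $w$ be the candidate at the top position of $v_p$. Recall that, by the \WCW definition given earlier, it suffices to show that $N_\EE(w,x) \ge \nfrac{n}{2}$ for every candidate $x \in \CC \setminus \{w\}$; that is, at least half of the voters prefer $w$ to $x$.

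The crucial structural step is to pin down the shape of the set $S = \{k : w \succ_{v_k} x\}$. Applying the single crossing condition to the \emph{ordered} pair $(w,x)$ shows that $S$ is an interval of consecutive indices, and applying it to the ordered pair $(x,w)$ shows that the complementary set $\{k : x \succ_{v_k} w\}$ is also an interval. Since these two intervals partition $\{1,\ldots,n\}$, the only possibility is that one of them is a prefix $\{1,\ldots,t\}$ and the other is the complementary suffix $\{t+1,\ldots,n\}$, for some threshold $t \in \{0,1,\ldots,n\}$. Thus $S$ is either a prefix or a suffix of the voter order. Because $w$ sits at the top of $v_p$'s preference, the median voter $v_p$ prefers $w$ to $x$, so $p \in S$; this is exactly the lever that forces $S$ to be large.

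I would then finish with a short case analysis on whether $S$ is a prefix or a suffix. If $S = \{1,\ldots,t\}$, then $p \in S$ gives $t \ge p = \lceil \nfrac{n}{2}\rceil$, so $|S| = t \ge \nfrac{n}{2}$. If instead $S = \{t+1,\ldots,n\}$, then $p \in S$ gives $t \le p-1$, whence $|S| = n - t \ge n - p + 1$; substituting $p = \lceil \nfrac{n}{2}\rceil$ yields $|S| \ge \lfloor \nfrac{n}{2}\rfloor + 1 \ge \nfrac{n}{2}$. In both cases $N_\EE(w,x) = |S| \ge \nfrac{n}{2}$, and since $x$ was arbitrary, $w$ is a weak Condorcet winner. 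The only place demanding care is the counting in the second case, where the ceiling in $p$ must be tracked correctly for both even and odd $n$; the prefix/suffix dichotomy itself is the conceptual heart of the argument, and once it is in hand the rest is bookkeeping.
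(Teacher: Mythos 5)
Your proposal is correct and follows essentially the same route as the paper: the paper's one-line proof asserts exactly the prefix/suffix dichotomy you derive (either all of $v_1,\ldots,v_{\lceil n/2\rceil}$ or all of $v_{\lceil n/2\rceil},\ldots,v_n$ prefer the median voter's top candidate to any rival), and your interval-plus-complement argument together with the careful count $|S|\ge\lceil n/2\rceil$ or $|S|\ge\lfloor n/2\rfloor+1$ simply makes explicit what the paper leaves implicit.
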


\begin{proof}
 Let $c$ be the candidate which is placed at the top position of the vote $v_{\lceil \nfrac{n}{2}\rceil}$ and $w$ be any other candidate. Now the result follows from the fact that either all the votes in $\{v_i: 1\le i\le \lceil \nfrac{n}{2}\rceil\}$ prefer $c$ to $w$ or all the votes in $\{v_i: \lceil \nfrac{n}{2}\rceil \le i \le n\}$ prefer $c$ to $w$.
\end{proof}

\Cref{lem:weak_cond} immediately gives the following.

\begin{corollary}
 Given either sequential or random access to votes, there exists a \WCW algorithm with query complexity $\BigO(m)$ when a single crossing ordering is known.
\end{corollary}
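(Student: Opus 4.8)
The plan is to invoke \Cref{lem:weak_cond} directly. Since a single crossing ordering $v_1, v_2, \ldots, v_n$ of the voters is known, I can identify the median voter $v_{\lceil n/2\rceil}$ without making a single query. By the lemma, the candidate sitting at the top position of this voter's preference is a weak Condorcet winner, so it suffices to determine the top candidate of one fixed, known voter.

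To find the top candidate of $v_{\lceil n/2\rceil}$ I would run a standard max-finding procedure on the comparison oracle: maintain a running ``champion'', initialized to an arbitrary candidate, and for each of the remaining $m-1$ candidates $c$ issue a single call to $\Query$ comparing $c$ against the current champion, replacing the champion by $c$ whenever the voter prefers $c$. This uses exactly $m-1 = \BigO(m)$ queries and returns the candidate that $v_{\lceil n/2\rceil}$ ranks first. Outputting this candidate and appealing to \Cref{lem:weak_cond} establishes correctness.

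The only point needing a little care is the distinction between the two access models, and even this is routine. Under random access the argument is immediate: I simply query $v_{\lceil n/2\rceil}$ directly. Under sequential access the voters may arrive in an order different from the single crossing order, but since the single crossing ordering is known I can recognize the median voter $v_{\lceil n/2\rceil}$ when it arrives, spend the $\BigO(m)$ queries on it at that moment, and release every other voter without querying it; no voter is ever revisited after release. In both cases the total query complexity is $\BigO(m)$.

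I do not expect any genuine obstacle here: the corollary is an immediate consequence of \Cref{lem:weak_cond} combined with the elementary fact that the maximum of $m$ elements can be found using $m-1$ comparisons. The substantive content lies entirely in the lemma; the corollary only has to observe that locating the median voter is free once a single crossing order is in hand.
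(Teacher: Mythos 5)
Your proposal is correct and follows essentially the same route as the paper: both identify the median voter $v_{\lceil n/2\rceil}$ in the known single crossing order, find its top-ranked candidate with $\BigO(m)$ comparisons, and conclude via \Cref{lem:weak_cond}. Your write-up merely makes explicit the champion-style max-finding procedure and the handling of the sequential access model, which the paper leaves implicit.
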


\begin{proof}
 Since we know a single crossing ordering $v_1, v_2, \ldots, v_n$, we can find the candidate at the first position of the vote $v_{\lceil \nfrac{n}{2}\rceil}$ (which is a weak Condorcet winner by \Cref{lem:weak_cond}) by making $\BigO(m)$ queries.
\end{proof}

We now move on to finding the Condorcet winner. We begin with the following observation regarding existence of the Condorcet winner of a single crossing preference profile.

\begin{lemma}\label{lem:cond}
 Let an $n$ voter profile $\PP$ is single crossing with respect to the ordering $v_1, v_2, \ldots, v_n$. If $n$ is odd, then the Condorcet winner is the candidate placed at the first position of the vote $v_{\lceil \nfrac{n}{2}\rceil}$. If $n$ is even, then there exists a Condorcet winner if and only there exists a candidate which is placed at the first position of both the votes $v_{\nfrac{n}{2}}$ and $v_{\nfrac{n}{2}+1}$.
\end{lemma}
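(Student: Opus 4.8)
The plan is to exploit the single crossing structure in the most basic way: for any two distinct candidates $x, y \in \CC$, the set of voters preferring $x$ to $y$ forms a contiguous block in the ordering $v_1, \ldots, v_n$, and so does its complement (the voters preferring $y$ to $x$). Since these two blocks partition the linearly ordered voter set, one of them must be a prefix $\{v_1, \ldots, v_t\}$ and the other the complementary suffix. This single observation, which already underlies \Cref{lem:weak_cond}, drives every case below, and the whole argument reduces to counting the size of the relevant prefix or suffix against the threshold $\nfrac{n}{2}$.

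For the odd case I would simply invoke \Cref{lem:weak_cond} together with a parity remark. When $n$ is odd, for any two distinct candidates $x, y$ the number of voters preferring $x$ to $y$ plus the number preferring $y$ to $x$ equals the odd integer $n$, so these two counts can never coincide; there are no pairwise ties. Hence a weak Condorcet winner (one losing to nobody) is automatically a genuine Condorcet winner (one strictly beating everybody). Since \Cref{lem:weak_cond} identifies the top candidate $c$ of $v_{\lceil \nfrac{n}{2}\rceil}$ as a weak Condorcet winner, $c$ is the Condorcet winner, and uniqueness is the standard fact that two Condorcet winners would each have to defeat the other.

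For the even case I would prove both implications by prefix/suffix counting. For the easy direction, suppose a candidate $c$ sits at the top of both $v_{\nfrac{n}{2}}$ and $v_{\nfrac{n}{2}+1}$. Fix any other candidate $w$; the block of voters preferring $c$ to $w$ contains both indices $\nfrac{n}{2}$ and $\nfrac{n}{2}+1$ and is a prefix or a suffix, so in either case its size is at least $\nfrac{n}{2}+1 > \nfrac{n}{2}$, whence $c$ defeats $w$ and is a Condorcet winner. For the converse, suppose $c$ is a Condorcet winner and let $a, b$ be the top candidates of $v_{\nfrac{n}{2}}$ and $v_{\nfrac{n}{2}+1}$ respectively; I would show $c = a = b$. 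If $c \ne a$, the block preferring $a$ to $c$ contains index $\nfrac{n}{2}$: were it a prefix it would leave $c$ at most $\nfrac{n}{2}$ supporting voters against $a$ (so $c$ would fail to strictly defeat $a$), and were it a suffix it would itself have size at least $\nfrac{n}{2}+1$ (so $a$ would defeat $c$); either way contradicting that $c$ is a Condorcet winner. The symmetric argument on index $\nfrac{n}{2}+1$ forces $c = b$, completing the proof.

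The main obstacle I anticipate is purely in getting the boundary counts exactly right in the even case: since $\nfrac{n}{2}$ is exactly half the electorate, a block of size $\nfrac{n}{2}$ yields only a tie, not a win, so the argument hinges on squeezing out the one extra voter that pushes the count to $\nfrac{n}{2}+1$. Keeping careful track of which of the two median indices lies inside the prefix versus the suffix, and of strict versus non-strict inequalities in the definition of ``defeats,'' is where the proof must be written most carefully; the single crossing structure itself does all the conceptual work.
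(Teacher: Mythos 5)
Your proof is correct and follows essentially the same route as the paper's: the odd case via \Cref{lem:weak_cond} plus the no-ties parity observation, and the even forward direction via the prefix/suffix block containing both median indices. The only cosmetic difference is in the even converse, where the paper invokes \Cref{lem:weak_cond} on the orderings $v_1,\ldots,v_n$ and $v_n,\ldots,v_1$ to conclude that the top candidates of $v_{\nfrac{n}{2}}$ and $v_{\nfrac{n}{2}+1}$ are weak Condorcet winners (hence cannot be strictly defeated by a distinct Condorcet winner), whereas you inline the same prefix/suffix count directly; both arguments rest on the identical structural fact.
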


\begin{proof}
 If $n$ is odd, the result follows from \Cref{lem:weak_cond} and the fact that a candidate $c$ is a Condorcet winner if and only if $c$ is a weak Condorcet winner.
 
 Now suppose $n$ is even. Let a candidate $c$ is placed at the first position of the votes $v_{\nfrac{n}{2}}$ and $v_{\nfrac{n}{2}+1}$. Then for every candidate $x\in\CC\setminus\{c\}$, either all the votes in $\{v_i: 1\le i\le \nfrac{n}{2} + 1\}$ or all the votes in $\{v_i: \nfrac{n}{2}\le i\le n\}$ prefer $c$ over $x$. Hence, $c$ is a Condorcet winner. Now suppose a candidate $w$ is a Condorcet winner. Then we prove that $w$ must be placed at the first position of both the votes $v_{\nfrac{n}{2}}$ and $v_{\nfrac{n}{2}+1}$. If not, then both the two candidates placed at the first positions of the votes $v_{\nfrac{n}{2}}$ and $v_{\nfrac{n}{2}+1}$ are weak Condorcet winners according to \Cref{lem:weak_cond} applied to single crossing orders $v_1, v_2, \ldots, v_n$ and $v_n, v_{n-1}, \ldots, v_1$.
\end{proof}

\Cref{lem:cond} immediately gives us the following.

\begin{corollary}
 Given either sequential or random access to votes, there exists a \CW algorithm with query complexity $\BigO(m)$ when a single crossing ordering is known.
\end{corollary}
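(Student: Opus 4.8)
The plan is to invoke \Cref{lem:cond} directly. Since a single crossing ordering $v_1, v_2, \ldots, v_n$ of the voters is known, both the existence and the identity of the Condorcet winner are completely determined by the top choices of the middle vote(s). Thus the entire task reduces to extracting the candidate placed at the first position of one or two specific votes, together with a single equality check in the even case.

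First I would describe a subroutine that finds the top candidate of a single fixed vote $v$ using only $\BigO(m)$ queries. We maintain a running ``best'' candidate, initialized to an arbitrary candidate, and for each of the remaining $m-1$ candidates $x$ we call \Query to test whether $v$ prefers $x$ to the current best, updating the best candidate whenever it does. After this linear scan the best candidate is exactly the one that $v$ ranks first, and the subroutine has made exactly $m-1 = \BigO(m)$ comparisons. Importantly, this procedure works verbatim under both the random and the sequential access models, since it only ever queries the single fixed voter $v$ (so there is no interleaving issue and, in the sequential model, $v$ can be fully processed before being released).

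Next I would split into the two cases of \Cref{lem:cond}. If $n$ is odd, I run the subroutine on $v_{\lceil \nfrac{n}{2}\rceil}$ and output its top candidate, which \Cref{lem:cond} guarantees is the Condorcet winner, at a cost of $\BigO(m)$ queries. If $n$ is even, I run the subroutine on both $v_{\nfrac{n}{2}}$ and $v_{\nfrac{n}{2}+1}$, spending $\BigO(m)$ queries in total, and compare the two candidates returned. If they coincide, that candidate is the Condorcet winner; otherwise, by the ``only if'' direction of \Cref{lem:cond}, no Condorcet winner exists and the algorithm reports this. In every case the total query complexity is $\BigO(m)$.

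There is essentially no hard step: correctness is immediate from \Cref{lem:cond}, and the only thing to verify is that the top of a vote can be extracted in $\BigO(m)$ queries, which the linear max-finding scan achieves. The one point I would state carefully is the even case, where the algorithm must be prepared to report the \emph{absence} of a Condorcet winner rather than a candidate, and where correctness relies on the full biconditional of \Cref{lem:cond} rather than on a single direction.
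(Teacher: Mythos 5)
Your proposal matches the paper's proof: both invoke \Cref{lem:cond} and simply extract the top candidate(s) of the middle vote(s) with $\BigO(m)$ comparisons. You spell out the linear max-finding scan and the odd/even case split more explicitly than the paper does, but the argument is the same.
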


\begin{proof}
 Since we know a single crossing ordering $v_1, v_2, \ldots, v_n$, we can find the candidates which are placed at the first positions of the votes $v_{\nfrac{n}{2}}$ and $v_{\nfrac{n}{2}+1}$ by making $\BigO(m)$ queries. Now the result follows from \Cref{lem:cond}.
\end{proof}

\section{Conclusion}

In this work, we have presented \PE algorithms with low query complexity for single crossing profiles under various settings. Moreover, we have proved that the query complexity of our algorithms are tight for a large number of voters up to constant factors for all but one setting namely when the voters can be accessed randomly but we do not know any ordering with respect to which the voters are single crossing. We then move on to show that a weak Condorcet winner and the Condorcet winner (if one exists) can be found from a single crossing preference profile using much less number of queries.

With this, we conclude the first part of the thesis. In the next part of the thesis, we study the problem of finding a winner of an election under various real world scenarios.
\part{Winner Determination}
\vspace{5ex}
In the second part of the thesis, we present our work on determining the winner of an election under various circumstances. This part consists of the following chapters.

\begin{itemize}
 \item In \Cref{chap:winner_prediction_mov} -- \nameref{chap:winner_prediction_mov} -- we present efficient algorithms based on sampling to predict the winner of an election and its robustness. We prove that both the problems of winner prediction and robustness estimation can be solved simultaneously by sampling only a few votes uniformly at random.
 
 \item In \Cref{chap:winner_stream} -- \nameref{chap:winner_stream} -- we develop (often) optimal algorithms for determining the winner of an election when the votes are arriving in a streaming fashion. Our results show that an approximate winner can be determined fast with a small amount of space.
 
 \item In \Cref{chap:kernel} -- \nameref{chap:kernel} -- we present interesting results on kernelization for determining possible winners from a set of incomplete votes. Our results prove that the problem of determining possible winners with incomplete votes does not have any efficient preprocessing strategies under plausible complexity theoretic assumptions even when the number of candidates is relatively small. We also present efficient kernelization algorithms for the problem of manipulating an election.
\end{itemize}

\blankpage
\chapter{Winner Prediction and Margin of Victory Estimation}
\label{chap:winner_prediction_mov}

\blfootnote{A preliminary version of the work on winner prediction in this chapter was published as \cite{DeyB15}: Palash Dey and Arnab Bhattacharyya. Sample complexity for winner prediction in elections. In Proceedings of the 2015 International Conference on Autonomous Agents and Multiagent Systems, AAMAS 2015, Istanbul, Turkey, May 4-8, 2015, pages 1421-1430, 2015.

A preliminary version of the work on the estimation of margin of victory in this chapter was published as \cite{DeyN15}: Palash Dey and Y. Narahari. Estimating the margin of victory of an election using sampling. In Proceedings of the Twenty-Fourth International Joint Conference on Artificial Intelligence, IJCAI 2015, Buenos Aires, Argentina, July 25-31, 2015, pages 1120-1126, 2015.}

\begin{quotation}
{\small Predicting the winner of an election and estimating the margin of victory of that election are favorite problems both for news media pundits and computational social choice theorists. Since it is often infeasible to elicit the preferences of all the voters in a typical prediction scenario, a common algorithm used for predicting the winner and estimating the margin of victory is to run the election on a small sample of randomly chosen votes and predict accordingly. We analyze the performance of this algorithm for many commonly used voting rules.

More formally, for predicting the winner of an election, we introduce the $(\epsilon, \delta)$-\WD problem, where given an election \EE on $n$ voters and $m$ candidates in which the margin of victory is at least $\epsilon n$ votes, the goal is to determine the winner with probability at least $1-\delta$ where $\eps$ and $\delta$ are parameters with $0< \eps, \delta<1$. The margin of victory of an election is the smallest number of votes that need to be modified in order to change the election winner. We show interesting lower and upper bounds on the number of samples needed to solve the $(\epsilon, \delta)$-\WD problem for many common voting rules, including all scoring rules, approval, maximin, Copeland, Bucklin, plurality with runoff, and single transferable vote. Moreover, the lower and upper bounds match for many common voting rules up to constant factors.

For estimating the margin of victory of an election, we introduce the \textsc{$(c, \epsilon, \delta)$--Margin of Victory} problem, where given an election $\mathcal{E}$ on $n$ voters, the goal is to estimate the margin of victory $M(\mathcal{E})$ of $\mathcal{E}$ within an additive error of $cM(\mathcal{E})+\eps n$ with probability of error at most $\delta$ where $\eps, \delta,$ and $c$ are the parameters with $0<\eps, \delta<1$ and $c>0$. We exhibit interesting bounds on the sample complexity of the \textsc{$(c, \epsilon, \delta)$--Margin of Victory} problem for many commonly used voting rules including all scoring rules, approval, Bucklin, maximin, and Copeland$^{\alpha}$. We observe that even for the voting rules for which computing the margin of victory is $\NPshort$-hard, there may exist efficient sampling based algorithms for estimating the margin of victory, as observed in the cases of maximin and Copeland$^{\alpha}$ voting rules.}
\end{quotation}

\section{Introduction}

In many situations, one wants to predict the winner without
holding the election for the entire population of voters. The
most immediate such example is an {\em election poll}. Here, the
pollster wants to quickly gauge public opinion in order to predict the
outcome of a full-scale election. For political elections,
exit polls (polls conducted on voters after they have
voted) are widely used by news media to predict the winner before 
official results are announced. In {\em surveys}, a full-scale
election is never conducted, and the goal is to determine the winner,
based on only a few sampled votes, for a hypothetical election on all the voters. For instance, it is
not possible to force all the residents of a city to fill out an
online survey to rank the local Chinese restaurants, and so only
those voters who do participate have their preferences aggregated. 

If the result of the poll or the survey has to reflect the true
election outcome, it is obviously necessary that the number of sampled
votes not be too small. Here, we investigate this fundamental
question: 

\begin{quote}
What is the minimum number of votes that need to be sampled
so that the winner of the election on the sampled votes is the
same as the winner of the election on all the votes?
\end{quote}

This question can be posed for any voting rule. The most immediate rule
to study is  the {\em plurality} voting rule, where each voter votes
for a single candidate and the candidate with most votes
wins. Although the plurality rule is the most common voting rule used
in political elections, it is important to extend the 
analysis to other popular voting rules. For example, the {\em single
  transferable vote} is used in political elections in Australia,
India and Ireland, and it was the subject of a nationwide referendum
in the UK in 2011. The {\em Borda} voting rule is used in the
Icelandic parliamentary elections. Outside politics, in private
companies and competitions, a wide variety of voting rules are
used. For example, the {\em approval} voting rule has been used by the
Mathematical Association of America, the American Statistical
Institute, and the Institute of Electrical and Electronics
Engineers, and {\em Condorcet consistent} voting rules are used
by many free software organizations.

A voting rule is called {\em anonymous} if the winner does not change after 
any renaming of the voters. All popular voting rules including the ones mentioned 
before are anonymous. For any anonymous voting rule, the question of finding the
minimum number of vote samples required becomes trivial if a single
voter in the election can change the winning candidate. In this
case, all the votes need to be counted, because otherwise that single
crucial vote may not be sampled. We get around this problem by
assuming that in the elections we consider, the winning candidate wins
by a considerable {\em margin of victory}. Formally, the margin of
victory of an election is defined as the minimum number of votes that
must be changed in order to change the election winner. Note that the
margin of victory depends not only on the votes cast but also on the
voting rule used in the election.

Other than predicting the winner of an election, one may also like to know how robust the election outcome is with respect to the changes in votes~\cite{shiryaev2013elections,caragiannis2014modal,regenwetter2006behavioral}. One way to capture robustness of an election outcome is the margin of victory of that election. An election outcome is considered to be robust if the margin of victory of that election is large.

In addition to formalizing the notion of robustness of an election outcome, the margin of victory of an election plays a crucial role in many practical applications. One such example is post election audits --- methods to observe a certain number of votes (which is often selected randomly) after an election to detect an incorrect outcome. There can be a variety of reasons for an incorrect election outcome, for example, software or hardware bugs in voting machine~\cite{norden2007post}, machine output errors, use of various clip-on devices that can tamper with the memory of the voting machine~\cite{wolchok2010security}, human errors in counting votes. Post election audits have nowadays become common practice to detect problems in electronic voting machines in many countries, for example, USA. As a matter of fact, at least thirty states in the USA have reported such problems by 2007~\cite{norden2007post}. Most often, the auditing process involves manually observing some sampled votes. Researchers have subsequently proposed various {\em risk limiting auditing} methodologies that not only minimize the cost of manual checking, but also limit the risk of making a human error by sampling as few votes as possible~\cite{stark2008conservative,stark2008sharper,stark2009efficient,sarwate2011risk}. The sample size in a risk limiting audit critically depends on the margin of victory of the election. 

Another important application where the margin of victory plays an important role is polling. One of the most fundamental questions in polling is: how many votes should be sampled to be able to correctly predict the outcome of an election? It turns out that the {\em sample complexity} in polling too crucially depends on the  margin of victory of the election from which the pollster is sampling~\cite{canetti1995lower,DeyB15}. Hence, computing (or at least approximating sufficiently accurately) the margin of victory of an election is often a necessary task in many practical applications. However, in many applications including the ones discussed above, one cannot observe all the votes. For example, in a survey or polling, one cannot first observe all the votes to compute the margin of victory and then sample the required number of votes based on the margin of victory computed. Hence, one often needs a {\it ``good enough''} estimate of the margin of victory by observing a small number votes. We precisely address this problem: estimate the margin of victory of an election by sampling as few votes as possible.

\subsection{Our Contribution}

In this work, we show nontrivial bounds for the sample complexity of predicting the winner of an election and estimating the margin of victory of an election for many commonly used voting rules.

\subsubsection{Winner Prediction}

Let $n$ be the number of voters, $m$ the number of candidates, and $r$ a voting rule.
We introduce and study the following problem in the context of winner prediction:
  
\begin{definition}{\sc($(\epsilon, \delta)$-\WD)}\\
 Given a $r$-election $\mathcal{E}$ whose margin of victory is at least $\epsilon n$,  
 determine the winner of the election with probability at least
 $1-\delta$. (The probability is taken over the internal coin tosses of
 the algorithm.) 
\end{definition}

We remind the reader that there is no assumption about the
distribution of votes in this problem. Our goal is to solve the
$(\eps, \delta)$-\WD problem by a randomized
algorithm that is allowed to query the votes of arbitrary voters. Each
query reveals the full vote of the voter. The minimum number of votes
queried by any algorithm that solves the $(\eps, \delta)$-\WD problem is called the {\em   sample complexity} of this problem. The
sample complexity  can of course depend on $\eps$, $\delta$, $n$, $m$,
and the voting rule in use. 

A standard result in \cite{canetti1995lower} implies that solving the above
problem for the majority rule on $2$ candidates requires at least
$(\nfrac{1}{4\epsilon^2})\ln (\frac{1}{8e\sqrt{\pi}\delta})$ samples (\Cref{thm:lb}). Also, a straightforward argument (Theorem \ref{thm:gen}) using Chernoff bounds shows that for any homogeneous voting rule, the sample complexity is at most $(\nfrac{9m!^2}{2\eps^2}) \ln(\nfrac{2m!}{\delta})$. So, when $m$ is a constant, the sample complexity is of the order $\Theta((\nfrac{1}{\eps^2}) \ln (\nfrac{1}{\delta}))$ for any homogeneous voting rule that reduces to the majority voting rule on $2$ candidates and this bound is tight up to constant factors. We note that all the commonly used voting rules including the ones we study here, are homogeneous and they reduce to the majority voting rule when we have only $2$ candidates. Note that this bound is independent of $n$ if $\eps$ and $\delta$ are independent of $n$.

Our main technical contribution is in understanding the dependence of
the sample complexity on the number of candidates $m$. Note
that the upper bound cited above has very bad dependence on $m$ and is
clearly unsatisfactory in situations when $m$ is large (such as in
online surveys about restaurants). 
\begin{itemize}
 \item We show that the sample complexity of the   $(\epsilon,
   \delta)$-\WD problem is at most
   $(\nfrac{9}{2\epsilon^2})\ln(\nfrac{2k}{\delta})$ for the
   $k$-approval voting rule (\Cref{thm:kapp}) and at most $(\nfrac{27}{\epsilon^2})\ln (\nfrac{4}{\delta})$ for the plurality with runoff voting rule (\Cref{thm:runoff}). In
   particular, for the plurality rule, the
   sample complexity is independent of $m$ as well as $n$.

 \item We show that the sample complexity of the $(\epsilon,
   \delta)$-\WD problem is at most $(\nfrac{9}{2\eps^2})\ln (\nfrac{2m}{\delta})$ and $\Omega\left((1-\delta)(\nfrac{1}{\epsilon^2})\ln m\right)$ for the Borda (\Cref{thm:strlwb}), approval (\Cref{thm:app}), maximin (\Cref{thm:maximin}), and
   Bucklin (\Cref{thm:bucklin}) voting rules. Note that when $\delta$
   is a constant, the upper and lower bounds match up to
   constant factors. 
   
 \item We show a sample complexity upper bound of $(\nfrac{25}{2\epsilon^2})\ln^3 (\nfrac{m}{\delta})$ for the $(\epsilon, \delta)$-\WD problem for the Copeland$^\alpha$ voting rule (\Cref{thm:copeland}) and $(\nfrac{3m^2}{\epsilon^2})(m\ln 2+\ln (\nfrac{2m}{\delta}))$ for the STV  voting rule (\Cref{thm:stv}).
\end{itemize}

We summarize these results in \Cref{table:wd}.

\begin{table}[htbp]
  \begin{center}
  {\renewcommand{\arraystretch}{1.7}
 \begin{tabular}{|c|c|c| }\hline
  \textbf{\multirow{2}{*}{Voting rules}}	& \multicolumn{2}{c|}{\textbf{Sample complexity for $(\epsilon, \delta)$-\WD}} \\\cline{2-3}
  & Upper bounds & Lower bounds \\\hline\hline
  
  $k$-approval	& $(\nfrac{9}{2\epsilon^2})\ln(\nfrac{2k}{\delta})$ [\Cref{thm:kapp}] & \specialcell{$\Omega\left((\nfrac{\ln (k+1)}{\epsilon^2})\left( 1 - \delta \right)\right)^\S$\\~[\Cref{thm:strlwb}]}   \\\hline
  
  $k$-veto & $(\nfrac{9}{2\epsilon^2})\ln(\nfrac{2k}{\delta})$ [\Cref{thm:kveto}] & \specialcell{$(\nfrac{1}{4\epsilon^2})\ln (\nfrac{1}{8e\sqrt{\pi}\delta})^\ast$\\~[\Cref{cor:lb}]}\\\hline
  
  Scoring rules	& \multirow{2}{*}{$(\nfrac{9}{2\eps^2})\ln (\nfrac{2m}{\delta})$ [\Cref{thm:scr}]} &  \\\cline{1-1}
  Borda	&  & \multirow{4}{*}{\specialcell{$\Omega\left((\nfrac{\ln m}{\epsilon^2})\left( 1 - \delta \right) \right) ^\dagger$\\~[\Cref{thm:strlwb}]}} \\\cline{1-2}
    
  Approval	& $(\nfrac{9}{2\eps^2})\ln (\nfrac{2m}{\delta})$ [\Cref{thm:app}] &  \\\cline{1-2}
  
  Maximin	& $(\nfrac{9}{2\epsilon^2})\ln (\nfrac{2m}{\delta})$ [\Cref{thm:maximin}]&  \\\cline{1-2}
  
  Copeland	& $(\nfrac{25}{2\epsilon^2}) \ln^3 (\nfrac{2m}{\delta})$ [\Cref{thm:copeland}]&		\\\cline{1-2}
  
  Bucklin	& $(\nfrac{9}{2\epsilon^2})\ln (\nfrac{2m}{\delta})$ [\Cref{thm:bucklin}]	& 	\\ \hline
  
  Plurality with runoff & $(\nfrac{27}{\epsilon^2})\ln (\nfrac{4}{\delta})$ [\Cref{thm:runoff}] & \multirow{3}{*}{\specialcell{$(\nfrac{1}{4\epsilon^2})\ln (\nfrac{1}{8e\sqrt{\pi}\delta})^\ast$\\~[\Cref{cor:lb}]}}\\\cline{1-2}
  
  STV		& $(\nfrac{3m^2}{\epsilon^2})(m\ln 2+\ln (\nfrac{2m}{\delta}))$ [\Cref{thm:stv}]&		\\\cline{1-2}
    
  Any homogeneous voting rule		& $(\nfrac{9m!^2}{2\eps^2}) \ln(\nfrac{2m!}{\delta})$ [\Cref{thm:gen}]&		\\\hline
 \end{tabular}
 }
  \caption{\normalfont Sample complexity of the $(\epsilon,
    \delta)$-\WD problem for common voting
    rules. $\dagger$--The lower bound of $\Omega ( (\nfrac{\ln
      m}{\epsilon^2})\left( 1 - \delta \right) )$ also applies to any voting rule that is 
    Condorcet consistent. ${\ast}$-- The lower bound of $(\nfrac{1}{4\epsilon^2})\ln (\nfrac{1}{8e\sqrt{\pi}\delta})$ holds for any voting rule
  that reduces to the plurality voting rule for elections with two candidates. $\S$-- The lower bound holds for $k\le .999m$.}
  \label{table:wd}
  \end{center}
\end{table} 

\subsubsection{Estimating Margin of Victory}

The margin of victory of an election is defined as follows. 

\begin{definition}Margin of Victory (MOV)\\
 Given an election \EE, the margin of victory of \EE is defined as the smallest number of votes that must be changed to change the winner of the election \EE.\label{def:margin of victory}
\end{definition}

We abbreviate margin of victory as \textsf{MOV}. We denote the \textsf{MOV} of an election \EE by $M(\EE)$. We introduce and study the following computational problem for estimating the margin of victory of an election: 

\begin{definition}\textsc{($(c, \epsilon, \delta)$--Margin of Victory(}\textsf{MOV}\textsc{))}\\\label{def:prob}
 Given a $r$-election $\mathcal{E}$, determine the margin of victory $\MOV$ of $\mathcal{E}$, within an additive error of at most $c\MOV + \epsilon n$ with probability at least $1-\delta$. The probability is taken over the internal coin tosses of the algorithm.
\end{definition}

Our goal here is to solve the $(c, \epsilon, \delta)$--\MV problem by observing as few sample votes as possible. Our main technical contribution is to come up with efficient sampling based {\em polynomial time} randomized algorithms to solve the $(c, \epsilon, \delta)$--\MV problem for common voting rules. Each sample reveals the entire preference order of the sampled vote. We summarize the results on the $(c, \epsilon, \delta)$--\MV problem in \Cref{table:margin of victory_summary}.

\begin{table}[htbp]
\centering
\renewcommand{\arraystretch}{1.7}
\begin{tabular}{|c|c|c|}\hline
  \textbf{\multirow{2}{*}{Voting rules}}	& \multicolumn{2}{c|}{\textbf{\multirow{1}{*}{Sample complexity for $(c, \epsilon, \delta)$--\MV}}} \\\cline{2-3}
  & Upper bounds & Lower bounds \\\hline\hline
  
  \multirow{1}{*}{Scoring rules}	& \multirow{1}{*}{$(\nfrac{1}{3}, \epsilon, \delta)$--\textsf{MOV}, $(\nfrac{12}{\epsilon ^2})\ln (\nfrac{2m}{\delta})$ [\Cref{thm:scr_mov}]} & \multirow{6}{*}{\specialcell{$(c, \epsilon, \delta)$--\textsf{MOV}$^\dagger$,\\$(\nfrac{(1-c)^2}{36\epsilon^2})\ln \left(\nfrac{1}{8e\sqrt{\pi}\delta}\right)$,\\~[\Cref{thm:lb_mov}]\\~[\Cref{cor:lb_mov}]}} \\\cline{1-2}
  
  \multirow{1}{*}{$k$-approval}	& \multirow{1}{*}{$(0, \epsilon, \delta)$--\textsf{MOV}, $(\nfrac{12}{\epsilon^2})\ln (\nfrac{2k}{\delta})$, [\Cref{thm:kapp_mov}]} &  \\\cline{1-2}
  
  \multirow{1}{*}{Approval}	& \multirow{1}{*}{$(0, \epsilon, \delta)$--\textsf{MOV}, $(\nfrac{12}{\epsilon^2})\ln (\nfrac{2m}{\delta})$, [\Cref{thm:app_mov}]} &  \\\cline{1-2}
  
  \multirow{1}{*}{Bucklin}	& \multirow{1}{*}{$(\nfrac{1}{3}, \epsilon, \delta)$--\textsf{MOV}, $(\nfrac{12}{\epsilon^2})\ln (\nfrac{2m}{\delta})$, [\Cref{thm:bucklin_mov}]}	& 	\\\cline{1-2}
  
  \multirow{1}{*}{Maximin}	& \multirow{1}{*}{$(\nfrac{1}{3}, \epsilon, \delta)$--\textsf{MOV}, $(\nfrac{24}{\epsilon^2})\ln (\nfrac{2m}{\delta})$, [\Cref{thm:maximin_mov}]} &  \\\cline{1-2}
  
  Copeland$^\alpha$	& \makecell{$\left(1-O\left(\nfrac{1}{\ln m}\right), \epsilon, \delta\right)$--\textsf{MOV}, $(\nfrac{96}{\epsilon^2})\ln (\nfrac{2m}{\delta})$,\\\relax[\Cref{thm:copeland_mov}]}  &		\\\hline
 \end{tabular}
  \caption{\normalfont Sample complexity for the $(c, \epsilon, \delta)$--\MV problem for common voting rules. $\dagger$The result holds for any $c \in [0,1).$}
  \label{table:margin of victory_summary}
\end{table} 

\Cref{table:margin of victory_summary} shows a practically appealing positive result --- {\em the sample complexity of all the algorithms presented here is independent of the number of voters}. Our lower bounds on the sample complexity of the $(c, \epsilon, \delta)$--\MV problem for all the voting rules studied here match with the upper bounds up to constant factors when we have a constant number of candidates. Moreover, the lower and upper bounds on the sample complexity for the $k$-approval voting rule match up to constant factors irrespective of number of candidates, when $k$ is a constant. 

\begin{itemize}
 \item We show a sample complexity lower bound of $(\nfrac{(1-c)^2}{36\epsilon^2})\ln \left(\nfrac{1}{8e\sqrt{\pi}\delta}\right)$ for the $(c, \epsilon, \delta)$--\MV problem for all the commonly used voting rules, where $c\in [0,1)$ (\Cref{thm:lb_mov} and \Cref{cor:lb_mov}).
 \item We show a sample complexity upper bound of $(\nfrac{12}{\epsilon ^2})\ln (\nfrac{2m}{\delta})$ for the $(\nfrac{1}{3}, \epsilon, \delta)$--\MV problem for arbitrary scoring rules (\Cref{thm:scr_mov}). However, for a special class of scoring rules, namely, the $k$-approval voting rules, we prove a sample complexity upper bound of $(\nfrac{12}{\epsilon^2})\ln (\nfrac{2k}{\delta})$ for the $(0, \epsilon, \delta)$--\MV problem (\Cref{thm:kapp_mov}).
\end{itemize}

One key finding of our work is that, there may exist efficient sampling based polynomial time algorithms for estimating the margin of victory, even if computing the margin of victory is $\NPshort$-hard for a voting rule~\cite{xia2012computing}, as observed in the cases of maximin and Copeland$^\alpha$ voting rules.  

\subsection{Related Work}

The subject of voting is at the heart of (computational) social choice
theory, and there is a vast amount of literature in this area.  Elections take place not
only in human societies but also in man made social networks
\cite{boldi2009voting,rodriguez2007smartocracy} and, generally, in
many multiagent systems \cite{ephrati1991clarke,PennockHG00}. The winner determination
problem is the task of finding the winner in an election, given the
voting rule in use and the set of all votes cast. It is known that
there are natural voting rules, e.g., Kemeny's rule and Dodgson's
method, for which the winner determination problem is
\textsf{NP}-hard \cite{bartholdi1989voting,hemaspaandra2005complexity,hemaspaandra1997exact}. 

The basic model of election has been generalized in several other ways to capture
real world situations. One important consideration is that the votes may be incomplete
rankings of the candidates and not a complete ranking. There can also
be uncertainty over which voters and/or candidates will eventually
turn up. The uncertainty may additionally come up from the voting rule that will be used eventually to select the winner. In these incomplete information settings, several winner models have been proposed, for example, robust winner~\cite{boutilier2014robust,lu2011robust,shiryaev2013elections}, 
multi winner~\cite{lu2013multi}, stable winner~\cite{falik2012coalitions}, approximate winner~\cite{doucetteapproximate}, 
probabilistic winner~\cite{bachrach2010probabilistic}, possible winner~\cite{moulin2016handbook,DeyMN15,journalsDeyMN16}. 
Hazon et al.~\cite{hazon2008evaluation} proposed useful methods to
evaluate the outcome of an election under various uncertainties. We do
not study the role of uncertainty in this work.

The general question of whether the outcome of an election can be
determined by less than the full set of votes is the subject of {\em
  preference elicitation}, a central category of problems in AI. The
$(\eps, \delta)$-\WD problem also falls in this area
when the elections are restricted to those having margin of victory
at least $\eps n$. For general elections, the preference elicitation
problem was studied by Conitzer and Sandholm \cite{conitzer2002vote},
who defined an elicitation policy as an adaptive sequence of questions
posed to voters. They proved that finding an efficient elicitation
policy is \textsf{NP}-hard for many common voting rules. Nevertheless,
several elicitation policies have been developed in later work
\cite{Conitzer09,lu2011robust,lu2011vote,ding2012voting,oren2013efficient,deypeak,deycross} that
work well in practice and have formal guarantees under various
assumptions on the vote distribution. Another related work is that of
Dhamal and Narahari \cite{dhamal2013scalable} who show that if the
voters are members of a social network where neighbors in the network
have similar candidate votes, then it is possible to elicit the
votes of only a few voters to determine the outcome of the full
election. 

In contrast, in our work on winner prediction, we posit no assumption on the vote
distribution other than that the votes create a substantial margin of
victory for the winner. Under this assumption, we show that even for
voting rules in which winner determination is \textsf{NP}-hard in the
worst case, it is possible to sample a small number of votes to
determine the winner. Our work falls inside the larger framework of
{\em property testing} \cite{ron2001property}, a class of problems studied in theoretical
computer science, where the inputs are promised to either satisfy some
property or have a ``gap'' from instances satisfying the property. In
our case, the instances are elections which either have some candidate $w$ as the
winner or are ``far'' from having $w$ being the winner (in the sense
that many votes need to be changed).

There have been quite a few work on computing the margin of victory of an election. Most prominent among them is the work of Xia~\cite{xia2012computing}. Xia presents polynomial time algorithms for computing the margin of victory of an election for various voting rules, for example the scoring rules, and proved intractability results for several other voting rules, for example the maximin and Copeland$^\alpha$ voting rules. Magrino et al.~\cite{magrino2011computing} present approximation algorithms to compute the margin of victory for the instant runoff voting (IRV) rule. Cary~\cite{cary2011estimating} provides algorithms to estimate the margin of victory of an IRV election. Endriss et al.~\cite{endriss2014margin} compute the complexity of exact variants of the margin of victory problem for Schulze, Cup, and Copeland voting rules. However, all the existing algorithms to either compute or estimate the margin of victory {\em need to observe all the votes,} which defeats the purpose in many applications including the ones we discussed. We, in this work, show that we can estimate the margin of victory of an election for many commonly used voting rules quite accurately by sampling a few votes only. Moreover, the accuracy of our estimation algorithm is good enough for many practical scenarios. For example, \Cref{table:margin of victory_summary} shows that it is enough to select only $3600$ many votes uniformly at random to estimate $\frac{\text{\textsf{MOV}}}{n}$ of a plurality election within an additive error of $0.1$ with probability at least $0.99$, where $n$ is the number of votes. We note that in all the sampling based applications we discussed, the sample size is inversely proportional to $\frac{\text{\textsf{MOV}}}{n}$~\cite{canetti1995lower} and thus it is enough to estimate $\frac{\text{\textsf{MOV}}}{n}$ accurately (see \Cref{table:wd}).

The problem of finding the margin of victory in an election is the same as the optimization version of the destructive bribery problem introduced by Faliszewski et al.~\cite{faliszewski2006complexity,faliszewski2009hard}. However, to the best of our knowledge, there is no prior work on estimating the cost of bribery by sampling votes.

\section{Results for Winner Prediction}\label{sec:wd}

In this section, we present our results for the $(\eps, \delta)$-\WD problem.

\subsection{Results on Lower Bounds}\label{sec:lwb}

We begin with presenting our lower bounds for the $(\eps, \delta)$-\WD problem for various voting rules. Our lower bounds for the sample complexity of the $(\eps, \delta)$-\WD problem are derived from information-theoretic lower bounds
for distinguishing distributions.

We start with the following basic observation.
Let $X$ be a random variable taking value $1$ with probability
$\frac{1}{2}-\epsilon$ and $0$ with probability
$\frac{1}{2}+\epsilon$; $Y$ be a random variable taking value $1$ with
probability $\frac{1}{2}+\epsilon$ and $0$ with probability
$\frac{1}{2}-\epsilon$.  Then, it is known that every algorithm needs at least
$\frac{1}{4\epsilon^2}\ln \frac{1}{8e\sqrt{\pi}\delta}$ many samples to
distinguish between $X$ and $Y$ with probability of making an error
being at most $\delta$~\cite{canetti1995lower,bar2001sampling}. We immediately have the following:

\begin{theorem}\label{thm:lb}
 The sample complexity of the $(\epsilon, \delta)$-\WD problem for the plurality voting rule is at least $\frac{1}{4\epsilon^2}\ln \frac{1}{8e\sqrt{\pi}\delta}$ even when the number of candidates is $2$.
\end{theorem}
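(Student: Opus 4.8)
The plan is to reduce the elementary statistical task of distinguishing two biased coins to the $(\eps,\delta)$-\WD problem, so that the lower bound quoted just above (from \cite{canetti1995lower,bar2001sampling}) transfers directly. Recall that the quoted fact asserts: if $X$ equals $1$ with probability $\nfrac{1}{2}-\eps$ and $Y$ equals $1$ with probability $\nfrac{1}{2}+\eps$, then any procedure that, given i.i.d.\ samples, decides whether the samples came from $X$ or from $Y$ with error probability at most $\delta$ must use at least $\nfrac{1}{4\eps^2}\ln\nfrac{1}{8e\sqrt{\pi}\delta}$ samples. The whole proof then amounts to packaging a two-candidate plurality instance so that a correct winner-prediction algorithm is exactly such a decision procedure.

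First I would set up the two hard instances. Take $\CC=\{a,b\}$, on which plurality coincides with the majority rule, and assume for convenience that $\eps n$ is an integer. In the instance $\mathcal{E}_a$, let $(\nfrac{1}{2}+\eps)n$ voters report $a\succ b$ and the remaining $(\nfrac{1}{2}-\eps)n$ voters report $b\succ a$; the winner is $a$. The instance $\mathcal{E}_b$ is the mirror image, with the roles of $a$ and $b$ swapped, so its winner is $b$. In either instance the winner leads the loser by $2\eps n$ votes, and since flipping a single vote changes this lead by exactly two, fewer than $\eps n$ flips leave the winner's lead strictly positive and hence the winner unchanged. Therefore $M(\mathcal{E}_a), M(\mathcal{E}_b)\ge \eps n$, so both are legitimate inputs to the $(\eps,\delta)$-\WD problem.

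Next comes the reduction itself. Encode each vote by the bit $1$ if it ranks $a$ above $b$ and $0$ otherwise. Then a uniformly random vote drawn from $\mathcal{E}_a$ is distributed exactly as $Y$, while a uniformly random vote drawn from $\mathcal{E}_b$ is distributed exactly as $X$. Any algorithm that solves the $(\eps,\delta)$-\WD problem outputs the correct winner with probability at least $1-\delta$; on these two instances, naming the winner is the same as deciding whether the underlying instance is $\mathcal{E}_a$ or $\mathcal{E}_b$, i.e.\ whether the revealed votes were generated according to $Y$ or according to $X$. Consequently such an algorithm, viewed through this encoding, is a valid $X$-versus-$Y$ distinguisher whose sample count equals the number of voters it queries, and the quoted bound forces this number to be at least $\nfrac{1}{4\eps^2}\ln\nfrac{1}{8e\sqrt{\pi}\delta}$.

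The one point that needs care, and the main obstacle, is the passage from ``i.i.d.\ samples'' in the statistical bound to ``queries to voters'' in the election model, since the algorithm may pick voters adaptively and queries are made without replacement. I would handle this by assigning the multiset of votes to the $n$ voters in a uniformly random order before the algorithm runs; then, because the voters carry no a priori distinguishing labels, each query to a previously unqueried voter reveals a fresh draw and adaptivity confers no advantage. Sampling without replacement from these fixed vote totals is at least as hard to distinguish as sampling with replacement, and letting $n\to\infty$ makes the two sampling models coincide, so the i.i.d.\ lower bound applies verbatim and yields the claimed bound.
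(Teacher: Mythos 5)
Your proposal is correct and follows essentially the same route as the paper's proof: construct the two mirror-image two-candidate elections with a $2\eps n$ lead, verify the margin-of-victory promise, and observe that any $(\eps,\delta)$-\WD algorithm is a distinguisher for the two vote distributions, so the lower bound of \cite{canetti1995lower,bar2001sampling} applies. Your added care about the margin-of-victory calculation and the passage from i.i.d.\ sampling to (possibly adaptive) queries without replacement only makes explicit what the paper leaves implicit.
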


\begin{proof}
 Consider an election with two candidates $a$ and $b$. Consider two
 vote distributions $X$ and $Y$. In $X$, exactly $\frac{1}{2} +
 \epsilon$ fraction of voters prefer $a$ to $b$ and thus $a$ is the
 plurality winner of the election. In $Y$, exactly $\frac{1}{2} +
 \epsilon$ fraction of voters prefer $b$ to $a$ and thus $b$ is the
 plurality winner of the election. Also, the margin of victory of both
 the elections corresponding to the vote distributions $X$ and $Y$ is
 $\epsilon n$, since each vote change can change the plurality score
 of any candidate by at most one. Observe that any $(\epsilon, \delta)$-\WD algorithm for plurality will give us a distinguisher between the
 distributions $X$ and $Y$ with probability of error at most $\delta$.
 and hence will need $\frac{1}{4\epsilon^2}\ln \frac{1}{8e\sqrt{\pi}\delta}$ samples. 
\end{proof}

\Cref{thm:lb} immediately gives us the following sample complexity lower bounds for the $(\epsilon, \delta)$-\WD problem for other voting rules.

\begin{corollary}\label{cor:lb}
 Every $(\epsilon, \delta)$-\WD algorithm needs at least
 $\frac{1}{4\epsilon^2}\ln \frac{1}{8e\sqrt{\pi}\delta}$ samples for
 any voting rule which reduces to the plurality rule for two
 candidates. In particular, the lower bound holds for approval,
 scoring rules, maximin, Copeland,  Bucklin, plurality with runoff,
 and STV voting rules.  
\end{corollary}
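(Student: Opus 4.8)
The plan is to derive the corollary directly from \Cref{thm:lb}, exploiting the fact that the hard instance constructed there already lives on only two candidates. First I would recall the heart of \Cref{thm:lb}: it exhibits two vote distributions $X$ and $Y$ over a two-candidate election $\{a,b\}$ such that $a$ is the plurality winner under $X$, $b$ is the plurality winner under $Y$, both elections have margin of victory exactly $\epsilon n$, and by the information-theoretic bound of~\cite{canetti1995lower,bar2001sampling} any procedure distinguishing $X$ from $Y$ with error at most $\delta$ must draw at least $\frac{1}{4\epsilon^2}\ln\frac{1}{8e\sqrt{\pi}\delta}$ samples.

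The key step is the reduction itself. Let $r$ be any voting rule that coincides with the plurality (equivalently, majority) rule on two-candidate elections. On the two instances above, the $r$-winner is therefore identical to the plurality winner, namely $a$ under $X$ and $b$ under $Y$; moreover, since a single vote change alters any candidate's plurality tally by at most one, the margin of victory under $r$ remains $\epsilon n$, so both instances are legitimate inputs to the $(\epsilon,\delta)$-\WD problem for $r$. Consequently, any $(\epsilon,\delta)$-\WD algorithm for $r$, when fed a sample of votes drawn from the unknown distribution, must output $a$ with probability at least $1-\delta$ under $X$ and $b$ with probability at least $1-\delta$ under $Y$, and hence doubles as a distinguisher between $X$ and $Y$ with error at most $\delta$. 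The lower bound of $\frac{1}{4\epsilon^2}\ln\frac{1}{8e\sqrt{\pi}\delta}$ is thus inherited verbatim by $r$.

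The only remaining step is to verify that each voting rule named in the statement really does reduce to plurality on two candidates. This is routine: a normalized scoring vector on two candidates is, up to the allowed affine scaling, $(1,0)$, which is precisely the plurality rule, so this covers all scoring rules; maximin, Copeland, Bucklin, plurality with runoff, and STV each select the candidate preferred by a strict majority when only two candidates are present; and for approval it suffices to restrict the construction to profiles in which every voter approves exactly one of $a,b$, so that the approval scores coincide with the plurality scores. I expect no genuine obstacle here — the entire content of the argument is the observation that \Cref{thm:lb} never used more than two candidates, so every rule that is faithful to plurality on two candidates inherits the same bound. The only point requiring mild care is the phrasing of the approval case, where one explicitly fixes the promise that each ballot approves a single candidate.
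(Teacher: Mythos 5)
Your proposal is correct and follows essentially the same route as the paper: invoke the two-candidate hard instance from \Cref{thm:lb}, observe that any rule coinciding with plurality on two candidates inherits the same winner and the same margin of victory there, and handle approval by restricting to ballots approving exactly one candidate. The only difference is that you spell out the (easily verified) preservation of the margin of victory under the rule $r$, which the paper leaves implicit.
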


\begin{proof}
All the voting rules mentioned in the statement except the approval voting rule is same as the plurality voting rule for elections with two candidates. 
Hence, the result follows immediately from \Cref{thm:lb} for the above voting rules except the approval voting rule. 
The result for the approval voting rule follows from the fact that any arbitrary plurality election is also a valid approval election where 
every voter approves exactly one candidate.
\end{proof}

 We derive stronger lower bounds in terms of $m$ by explicitly viewing
 the $(\eps,\delta)$-\WD problem as a {\em statistical
   classification} problem. In this problem, we are given a black box
 that contains a distribution $\mu$ which is guaranteed to be one of
 $\ell$ known distributions $\mu_1, \dots, \mu_\ell$. A {\em
   classifier} is a randomized oracle which has to determine the
 identity of $\mu$, where each oracle call produces a sample from
 $\mu$. At the end of its execution, the classifier announces a guess
 for the identity of $\mu$, which has to be correct with probability
 at least $1-\delta$. Using information-theoretic methods, Bar-Yossef
 \cite{bar2003sampling} showed the following:
 
\begin{lemma}\label{lem:kldiv}
 The worst case sample complexity $q$ of a classifier $C$ for \el probability distributions $\mu_1,
 \ldots, \mu_\ell$ which does not make error with probability more
 than $\delta$ satisfies  following.
 \[  q \ge \Omega\left( \frac{\ln \ell}{JS\left( \mu_1, \ldots, \mu_\ell \right)} . \left( 1 - \delta \right) \right) \]
\end{lemma}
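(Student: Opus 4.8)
The plan is to recast the classification task as a Bayesian hypothesis-testing problem and then run the standard Fano-versus-data-processing argument. First I would introduce a latent index $I$ uniform on $[\ell]$, let the oracle's $q$ samples $S_1, \ldots, S_q$ be drawn i.i.d.\ from $\mu_I$, and let $\hat I = g(S_1, \ldots, S_q)$ be the classifier's guess. Since the classifier errs with probability at most $\delta$ on every fixed hypothesis, its average error $\Pr[\hat I \ne I]$ under the uniform prior is also at most $\delta$, so Fano's inequality applies. Working in natural-log units throughout (to match the convention of the $D_{KL}$ and $JS$ definitions), applying Fano's inequality (item 5 of \Cref{prop:mut}) with $X = I$ — whose domain has size $\ell$ — and $Y = (S_1, \ldots, S_q)$ gives $H_b(\delta) + \delta \ln(\ell - 1) \ge H(I \mid S_1, \ldots, S_q)$. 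Since $H(I) = \ln \ell$ for the uniform prior and $H(I \mid S_{1:q}) = \ln \ell - I(I; S_1, \ldots, S_q)$, rearranging yields the information lower bound
\[
 I(I; S_1, \ldots, S_q) \;\ge\; \ln \ell - \delta \ln(\ell-1) - H_b(\delta) \;\ge\; (1-\delta)\ln \ell - H_b(\delta).
\]

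The second half of the argument is to upper bound this same mutual information by $q\, JS(\mu_1, \ldots, \mu_\ell)$. The key identity is that for a single sample, $I(I; S_1)$ equals \emph{exactly} the generalized Jensen--Shannon divergence: since $I$ is uniform and $S_1 \mid (I = i) \sim \mu_i$, the marginal of $S_1$ is the mixture $\bar\mu = \frac{1}{\ell}\sum_j \mu_j$, whence $I(I; S_1) = \frac{1}{\ell}\sum_i D_{KL}(\mu_i \,\|\, \bar\mu)$, which is precisely the definition of $JS(\mu_1, \ldots, \mu_\ell)$. To pass from one sample to $q$ samples I would exploit that the $S_t$ are conditionally independent given $I$, so $H(S_1, \ldots, S_q \mid I) = \sum_t H(S_t \mid I)$, together with subadditivity of entropy $H(S_1, \ldots, S_q) \le \sum_t H(S_t)$ (itself a consequence of the chain rule and ``conditioning reduces entropy'' in \Cref{prop:mut}); subtracting the former from the latter gives
\[
 I(I; S_1, \ldots, S_q) \;\le\; \sum_{t=1}^q I(I; S_t) \;=\; q\, JS(\mu_1, \ldots, \mu_\ell),
\]
where the last equality uses that every $S_t$ has the same marginal $\bar\mu$.

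Chaining the two bounds produces $q \ge \left((1-\delta)\ln\ell - H_b(\delta)\right)/JS(\mu_1, \ldots, \mu_\ell)$, and for $\ell$ large enough the additive $H_b(\delta)$ term is dominated by $(1-\delta)\ln \ell$, delivering the claimed $\Omega\!\left( \frac{\ln \ell}{JS(\mu_1, \ldots, \mu_\ell)}(1-\delta) \right)$ bound (any change of logarithm base is harmless inside the $\Omega$). The main obstacle I anticipate is the multi-sample step: one must resist applying the chain rule naively, since conditioning can \emph{increase} mutual information and $I(I; S_t \mid S_{<t}) \le I(I; S_t)$ is false in general; the correct route is to bound through the two entropy terms as above, which is valid precisely because the samples are conditionally i.i.d.\ given $I$. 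A secondary subtlety is the worst-case-to-average-case reduction for the error probability, and the bookkeeping that absorbs the $H_b(\delta)$ slack into the hidden constant once $\ell$ is sufficiently large.
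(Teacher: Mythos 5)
The paper does not actually prove this lemma: it is imported verbatim from Bar-Yossef's work on sampling lower bounds, so there is no in-paper argument to compare against. Your proof is correct and is, in essence, the standard derivation of that result: a uniform prior on the hypothesis index, Fano's inequality (item 5 of \Cref{prop:mut}) to lower-bound $I(I;S_1,\dots,S_q)$ by $(1-\delta)\ln\ell - H_b(\delta)$, the identity $I(I;S_1)=JS(\mu_1,\dots,\mu_\ell)$ for the mixture marginal, and the tensorization $I(I;S_1,\dots,S_q)\le \sum_t I(I;S_t)=q\,JS(\mu_1,\dots,\mu_\ell)$ obtained from subadditivity of entropy together with conditional independence given $I$. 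Your caution about the chain rule is well placed: the inequality $I(I;S_t\mid S_{<t})\le I(I;S_t)$ does hold here, but only because of conditional independence, and the entropy-difference route you take is the clean way to see it. Two small points would need to be made explicit in a full write-up: (i) Fano's inequality is stated with the classifier's \emph{actual} error probability, so substituting the upper bound $\delta$ requires the monotonicity of $p\mapsto H_b(p)+p\ln(\ell-1)$ on $[0,1-1/\ell]$; and (ii) the resulting bound $q\ge\bigl((1-\delta)\ln\ell - H_b(\delta)\bigr)/JS(\mu_1,\dots,\mu_\ell)$ is vacuous when $(1-\delta)\ln\ell\le H_b(\delta)$ (e.g.\ $\ell=2$, $\delta=1/2$), so the claimed $\Omega(\cdot)$ form only holds once $(1-\delta)\ln\ell$ exceeds a constant multiple of $H_b(\delta)\le\ln 2$ --- which is the regime in which the paper applies the lemma (many candidates, constant $\delta$).
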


The connection with our problem is the following.   A
set $V$ of $n$ votes on a candidate set $\mathcal{C}$ generates a
probability distribution $\mu_V$ on $\mathcal{L}(\mathcal{C})$, where 
$\mu_V(\succ)$ is proportional to the number of voters who vote
$\succ$. Querying a random vote from $V$ is then equivalent to
sampling from the distribution $\mu_V$. The margin of victory is proportional to the minimum statistical distance between $\mu_V$ and $\mu_W$,
over all the voting profiles $W$ having a different winner than the
winner of $V$. 

Now suppose we have $m$ voting profiles $V_1, \dots, V_m$ having
different winners such that each $V_i$ has margin of 
victory at least $\eps n$. Any $(\eps,\delta)$-\WD
algorithm must also be a statistical classifier for $\mu_{V_1}, \dots,
\mu_{V_m}$ in the above sense. It then remains to construct such
voting profiles for various voting rules which we do in the proof of
the following theorem:

\begin{theorem}\label{thm:strlwb}
 Every $(\epsilon, \delta)$-\WD algorithm needs $\Omega \left( \frac{\ln m}{\epsilon^2}.\left( 1 - \delta \right) \right)$ samples for approval, Borda, Bucklin, and any Condorcet consistent voting rules, and $\Omega \left( \frac{\ln k}{\epsilon^2}.\left( 1 - \delta \right) \right)$ samples for the $k$-approval voting rule for $k\le cm$ for any constant $c\in(0,1)$.
\end{theorem}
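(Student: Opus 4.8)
The plan is to instantiate the statistical-classification framework set up immediately before the theorem. For each voting rule I will exhibit $\ell$ voting profiles $V_1,\dots,V_\ell$ (with $\ell=m$ for approval, Borda, Bucklin and Condorcet consistent rules, and $\ell=\Theta(k)$ for $k$-approval) such that (i) candidate $c_i$ is the unique winner of $V_i$, (ii) the margin of victory of each $V_i$ is at least $\eps n$, and (iii) the induced distributions $\mu_{V_1},\dots,\mu_{V_\ell}$ on $\mathcal{L}(\mathcal{C})$ all lie within total variation $O(\eps)$ of a single common distribution $\mu_0$ that assigns at least constant probability mass to every vote in its support. By the reduction explained just above the theorem, any $(\eps,\delta)$-\WD algorithm is a classifier for $\mu_{V_1},\dots,\mu_{V_\ell}$, so \Cref{lem:kldiv} gives $q=\Omega\!\big((\ln\ell/JS(\mu_{V_1},\dots,\mu_{V_\ell}))(1-\delta)\big)$, and it only remains to show $JS=O(\eps^2)$.

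For the divergence bound I will use the standard chain $D_{KL}(\mu\Vert\nu)\le\chi^2(\mu\Vert\nu)=\sum_x (\mu(x)-\nu(x))^2/\nu(x)$. Writing $\bar\mu=\tfrac1\ell\sum_j\mu_{V_j}$, property (iii) gives $\bar\mu(x)\ge\gamma$ for a constant $\gamma>0$ on the common support and $\Vert\mu_{V_i}-\bar\mu\Vert_1=O(\eps)$, whence $D_{KL}(\mu_{V_i}\Vert\bar\mu)\le \gamma^{-1}\Vert\mu_{V_i}-\bar\mu\Vert_1^2=O(\eps^2)$ for every $i$. Averaging over $i$ yields $JS(\mu_{V_1},\dots,\mu_{V_\ell})=O(\eps^2)$, and substituting into \Cref{lem:kldiv} produces $\Omega((\ln m/\eps^2)(1-\delta))$ for the first group of rules and $\Omega((\ln k/\eps^2)(1-\delta))$ for $k$-approval. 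The hypothesis $k\le cm$ for $k$-approval is precisely what guarantees enough room for $\Theta(k)$ distinct possible winners among the $m$ candidates.

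The real work---and the part I expect to be the main obstacle---is constructing, for each rule, a common base profile $V_0$ (inducing $\mu_0$) that is essentially tied among all of $c_1,\dots,c_\ell$ yet is supported on a constant number of vote types each with constant frequency, together with $O(\eps n)$-sized perturbations that promote each $c_i$ to a winner with margin at least $\eps n$. For the scoring rules (Borda, and $k$-approval with the $c_i$ placed among the top-$k$ slots) I would take $V_0$ to be a balanced mixture of cyclic shifts of a fixed order so that all candidates receive equal score, then redirect an $\eps$-fraction of the votes to raise $c_i$'s score by $\Theta(\eps n)$ above the rest; since a single vote changes any score by a bounded amount, the resulting margin of victory is $\Theta(\eps n)$. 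For Bucklin I would similarly balance the top-position counts and shift $\eps n$ votes to push $c_i$ across the majority threshold first. For Condorcet consistent rules I would begin from a symmetric near-cyclic profile in which all pairwise margins are close to zero and perturb $\eps n$ votes so that $c_i$ defeats every other candidate by more than $\eps n$, making $c_i$ the Condorcet winner (hence the winner under every Condorcet consistent rule) with margin $\ge\eps n$. In each case the perturbation touches only an $O(\eps)$ fraction of votes, so property (iii) is automatic; the delicate point is simultaneously certifying the $\ge\eps n$ margin and keeping $\mu_0$ concentrated on few vote types with constant mass, which is exactly what makes the $\chi^2$ estimate above go through.
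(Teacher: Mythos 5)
Your high-level framework is the same as the paper's: reduce to statistical classification and invoke \Cref{lem:kldiv}, so everything hinges on exhibiting $m$ (resp.\ $\Theta(k)$) profiles with distinct winners, margin $\geq \eps n$, and mutual Jensen--Shannon divergence $O(\eps^2)$. The fatal problem is your property (iii). For \emph{any} voting rule, if $V_i$ has winner $c_i$ with margin of victory at least $\eps n$ and $V_j$ has a different winner, then transforming $V_i$ into $V_j$ changes the winner and hence requires changing at least $\eps n$ votes; since the minimum number of vote changes needed to turn one profile into another is exactly $n$ times the total variation distance between the induced distributions, this forces $\|\mu_{V_i}-\mu_{V_j}\|_{TV}\geq \eps$ for every pair. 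You therefore need $m$ points that are pairwise $\eps$-separated in total variation, all lying in an $O(\eps)$-ball around $\mu_0$ inside the simplex over a common support of size $C$. A standard packing bound caps the number of such points at $2^{O(C)}$, so with $C=O(1)$ (which is what ``constant mass on every support element'' forces, since $C\leq 1/\gamma$) you can accommodate only $O(1)$ distinct winners and the $\ln m$ factor is unobtainable; and if you instead enlarge the support to $C=\Omega(\log m)$ to make room, then $\gamma\leq 1/C$ and your bound $D_{KL}(\mu_{V_i}\Vert\bar\mu)\leq \gamma^{-1}\|\mu_{V_i}-\bar\mu\|_1^2$ degrades to $O(\eps^2\log m)$, which cancels the $\ln\ell$ numerator in \Cref{lem:kldiv} and again destroys the bound. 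The paper escapes this tension by going in the opposite direction: it uses enormous supports (all $m!$ orders, or all $\binom{m}{k}$ top-$k$ sets) perturbed \emph{multiplicatively} by $(1\pm O(\eps))$, for which $D_{KL}=O(\eps^2)$ follows directly from $\ln(1+x)\leq x$ with no $1/\gamma$ loss, while the high dimension leaves room for $m$ pairwise-separated distributions.

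Two further concrete errors in your instantiation. First, your proposed base profile for scoring rules, a ``balanced mixture of cyclic shifts,'' already has $m$ atoms of mass $1/m$ each, contradicting your own requirement (iii). Second, for Borda the claim that ``a single vote changes any score by a bounded amount'' is false: one vote can shift the score difference between two candidates by up to $2(m-1)$, and indeed \Cref{lem:scr} shows that margin of victory $\geq \eps n$ forces a winner--runner-up score gap of $\Omega(\eps mn)$, not $\Theta(\eps n)$; a gap of $\Theta(\eps n)$ yields margin only $\Theta(\eps n/m)$. (For $k$-approval and for pairwise margins the per-vote effect really is $O(1)$, so that part of your sketch is fine.) To repair the argument you would have to abandon the constant-support/$\chi^2$ route and adopt, as the paper does, large-support multiplicatively perturbed distributions whose KL divergence to their average is computed directly.
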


\begin{proof}
 For each voting rule mentioned in the theorem, we will show $d$ ($d=k+1$ for the $k$-approval voting rule and $d=m$ for the rest of the voting rules) distributions $\mu_1, \ldots, \mu_d$ on the votes with the following properties. Let $V_i$ be an election where each vote $v\in \mathcal{L(C)}$ occurs exactly $\mu_i(v)\cdot n$ many times. Let $\mu = \frac{1}{d}\sum_{i=1}^d \mu_i$.
 \begin{enumerate}
  \item For every $i \ne j$, the winner in $V_i$ is different from the winner in $V_j$.
  \item For every $i$, the margin of victory of $V_i$ is $\Omega(\epsilon n)$.
  \item $D_{KL}(\mu_i || \mu) = O(\epsilon^2)$
 \end{enumerate}
 The result then follows from \Cref{lem:kldiv}. The distributions for different voting rules are as follows. Let the candidate set be $\mathcal{C} = \{ c_1, \ldots, c_m \}$. 
 
 \begin{itemize}

 \item \textbf{$k$-approval voting rule for $k\le cm$ for any constant $c\in(0,1)$:} Fix any arbitrary $M:= k+1$ many candidates $c_1, \ldots, c_M$. For $i \in [M]$, we define a distribution $\mu_i$ on all $k$ sized subsets of $\mathcal{C}$ (for the $k$-approval voting rule, each vote is a $k$-sized subset of $\mathcal{C}$) as follows. Each $k$ sized subset corresponds to top $k$ candidates in a vote.
 $$ \mu_i(x) = \begin{cases}
                (\nfrac{\epsilon}{{M-1 \choose k-1}}) + (\nfrac{1-\epsilon}{{M \choose k}})& \text{if } c_i\in x \text{ and } x\subseteq \{c_1, \ldots, c_{M}\}\\
                \nfrac{(1-\epsilon)}{{M \choose k}}& c_i\notin x \text{ and } x\subseteq \{c_1, \ldots, c_{M}\}\\
                0& \text{else}
               \end{cases}
 $$
 The score of $c_i$ in $V_i$ is $n\left( \eps + \nfrac{\left( 1 - \eps \right){M-1 \choose k-1}}{{M \choose k}} \right)$, the
 score of any other candidate $c_j \in \{c_1, \ldots, c_M\}\setminus\{c_i\}$ is $\nfrac{n\left( 1 - \eps \right){M-1 \choose k-1}}{{M \choose k}}$, and the score of the rest of the candidates is zero. Hence,
 the margin of victory is $\Omega(\epsilon n)$, since each
 vote change can reduce the score of $c_i$ by at most one and increase
 the score of any other candidate by at most one and $k\le cm$ for constant $c\in(0,1)$. This proves the
 result for the $k$-approval voting rule. Now we show that $D_{KL}(\mu_i || \mu)$ to be $O(\epsilon^2)$. 
 \begin{eqnarray*}
  D_{KL}(\mu_i || \mu) 
  &=& \left( \eps+\left(1-\eps\right)\frac{k}{M} \right)\ln\left( 1-\eps+\eps\frac{M}{k} \right) + \left( 1-\eps \right)\left( 1-\frac{k}{M} \right)\ln\left(1-\eps\right)\\
  &\le& \left( \eps+\left(1-\eps\right)\frac{k}{M} \right)\left( \eps\frac{M}{k} - \eps \right) - \left( 1-\eps \right)\left( 1-\frac{k}{M} \right)\eps\\
  &=& \eps^2 \left( \frac{M}{k}-1 \right)\\
  &\le& 2\eps^2
 \end{eqnarray*}
 
 \item \textbf{Approval voting rule:} The result follows from the fact that every $\frac{m}{2}$-approval election is also a valid approval election and \Cref{lem:approval}.
 
 \item \textbf{Borda and any Condorcet consistent voting rule:} The score vector for the Borda voting rule which we use in this proof is $(m, m-1, \ldots, 1)$. For $i \in [m]$, we define a distribution $\mu_i$ on all possible linear orders over $\mathcal{C}$ as follows.
 
 $$ \mu_i(x) = \begin{cases}
                \nfrac{2\epsilon}{m!} + \nfrac{(1-\epsilon)}{m!}& \text{if } c_i \text{ is within top } \frac{m}{2} \text{ positions in }x.\\
                \nfrac{(1-\epsilon)}{m!}& \text{else}
               \end{cases}
 $$
 
 The score of $c_i$ in $V_i$ is at least $n(\nfrac{3\eps m}{5}+\nfrac{(1-\eps)m}{2}) = n(\nfrac{m}{2}+\nfrac{\eps m}{10})$ whereas the score of any other candidate $c_j \ne c_i$ is $\nfrac{mn}{2}$. Hence, the margin of victory is at least $\nfrac{\epsilon n}{8}$, since each vote change can reduce the score of $c_i$ by at most $m$ and increase the score of any other candidate by at most $m$. Also, in the weighted majority graph for the election $V_i$, $w(c_i, c_j) \ge \nfrac{\epsilon n}{10}$. Hence, the margin of victory is at least $\nfrac{\epsilon n}{4}$, since each vote change can change the weight of any edge in the weighted majority graph by at most two. Now we show that $D_{KL}(\mu_i || \mu)$ to be $O(\epsilon^2)$.
 
 \begin{eqnarray*}
  D_{KL}(\mu_i || \mu) 
  &=& \frac{1+\eps}{2}\ln\left( 1+\eps \right) + \frac{1-\eps}{2}\ln\left( 1-\eps \right)\\
  &\le& \frac{1+\eps}{2}\eps - \frac{1-\eps}{2}\eps\\
  &=& \eps^2
 \end{eqnarray*}
 
 \item \textbf{Bucklin:} For $i \in [m]$, we define a distribution $\mu_i$ on all $\nfrac{m}{4}$ sized subsets of $\mathcal{C}$ as follows. Each $\nfrac{m}{4}$ sized subset corresponds to the top $\frac{m}{4}$ candidates in a vote.
 
 $$ \mu_i(x) = \begin{cases}
                (\nfrac{\eps}{{m-1 \choose \frac{m}{4}-1}}) + (\nfrac{(1-\epsilon)}{{m \choose \frac{m}{4}}})& \text{if } c_i\in x\\
                \nfrac{(1-\epsilon)}{{m \choose \frac{m}{4}}}& \text{else}
               \end{cases}
 $$
 
 The candidate $c_i$ occurs within the top $m(\frac{1}{2}-\frac{\eps}{10})$ positions at least $n(\frac{1}{2}-\frac{\eps}{10}+\eps) = n(\frac{1}{2}+\frac{9\eps}{10})$ times. On the other hand any candidate $c_j \ne c_i$ occurs within the top $m(\frac{1}{2}-\frac{\eps}{10})$ positions at most $n(\frac{1}{2}-\frac{\eps}{10})$ times. Hence the margin of victory is at least $\frac{\eps n}{30} = \Omega(\eps n)$. Now we show that $D_{KL}(\mu_i || \mu)$ to be $O(\epsilon^2)$. 
 \begin{eqnarray*}
  D_{KL}(\mu_i || \mu) 
  &=& \left(\eps + \frac{1-\eps}{4}\right)\ln\left(1+3\eps\right) + \frac{3}{4}\left(1-\eps\right)\ln\left(1-\eps\right)\\
  &=& \frac{1}{4}\left(3\eps\left(1+3\eps\right)-3\eps\left(1-\eps\right)\right)\\
  &=& 3\eps^2
 \end{eqnarray*}

%
%
 \end{itemize}
\end{proof}

\subsection{Results on Upper Bounds}\label{sec:upbd}

In this section, we present the upper bounds on the sample complexity
of the $(\epsilon, \delta)$-\WD problem for various
voting rules. The general framework for proving the upper bounds is as
follows. For each voting rule, we first prove a useful structural
property about the election when the margin of victory is known to be
at least $\epsilon n$. Then, we sample a few votes uniformly
at random to estimate either the  score of the candidates for score
based voting rules or weights of the edges in the weighted majority
graph for the voting rules which are defined using weighted majority graph (maximin and Copeland for example). Finally,  appealing to the structural
property that has been established, we argue that, the winner of the
election on the sampled votes will be the  same as the winner of the
election, if we are able to estimate either the scores of the candidates or
the weights of the edges in the weighted majority graph to a certain level of accuracy.  

Before getting into specific voting rules, we prove a straightforward
bound on the sample  complexity for the $(\epsilon, \delta)$-winner
determination problem for {\em any homogeneous} voting rule.  

\begin{theorem}\label{thm:gen}
 There is a $(\epsilon, \delta)$-\WD algorithm for every homogeneous voting rule with sample complexity $(\nfrac{9m!^2}{2\eps^2}) \ln(\nfrac{2m!}{\delta})$.
\end{theorem}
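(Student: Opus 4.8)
The plan is to exploit homogeneity to reduce winner prediction to \emph{estimating the frequency vector} of the profile. Since the rule $r$ is homogeneous, its outcome depends only on the vector $(p_\sigma)_{\sigma\in\LL(\CC)}$, where $p_\sigma$ is the fraction of voters casting the order $\sigma$ and $|\LL(\CC)|=m!$. The algorithm I would use is the natural one: query $q$ voters chosen independently and uniformly at random, record the empirical frequency $\hat p_\sigma = (\text{number of sampled voters casting }\sigma)/q$ for each $\sigma\in\LL(\CC)$, and output the candidate that $r$ selects on the sampled profile. The first design choice is the target accuracy $\gamma = \nfrac{\eps}{3m!}$ to which each coordinate $p_\sigma$ must be estimated; as shown below, this is precisely the value that yields the stated bound.

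For the concentration step, fix an order $\sigma$ and note that each sampled vote contributes an independent indicator $\mathbf 1\{\text{voter casts }\sigma\}\in[0,1]$ of mean $p_\sigma$, so $\hat p_\sigma$ is an average of $q$ i.i.d.\ bounded variables with expectation $p_\sigma$. Applying a Chernoff/Hoeffding bound (cf.\ \Cref{thm:chernoff}) in its additive form gives $\Pr[\,|\hat p_\sigma - p_\sigma|\ge\gamma\,]\le 2\exp(-2\gamma^2 q)$. Taking the union bound over all $m!$ orders, the probability that \emph{some} coordinate is estimated with error exceeding $\gamma$ is at most $m!\cdot 2\exp(-2\gamma^2 q)$, and I would choose $q$ to make this at most $\delta$:
\[
m!\cdot 2\exp(-2\gamma^2 q)\le\delta \quad\Longleftrightarrow\quad q\ge \frac{1}{2\gamma^2}\ln\frac{2\,m!}{\delta} = \frac{9\,m!^2}{2\eps^2}\ln\frac{2\,m!}{\delta},
\]
where the last equality substitutes $\gamma=\nfrac{\eps}{3m!}$. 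This is exactly the claimed sample complexity, so on an event of probability at least $1-\delta$ every coordinate of the frequency vector is known to within $\gamma$.

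It remains to argue that, on this good event, the sampled profile has the same winner as the true profile. Conditioned on $|\hat p_\sigma-p_\sigma|\le\gamma$ for all $\sigma$, the total variation distance satisfies $\frac12\sum_\sigma|\hat p_\sigma - p_\sigma|\le \frac12\, m!\,\gamma = \nfrac{\eps}{6}<\eps$. Hence an $n$-voter profile realizing the frequencies $\hat p$ can be obtained from the true profile by changing fewer than $\eps n$ votes; since the margin of victory is assumed to be at least $\eps n$ and changing fewer than $M(\EE)$ votes cannot change the winner (\Cref{def:margin of victory}), this profile---and therefore, by homogeneity, the sampled profile itself---selects the same winner as the original election. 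I expect the only delicate point to be \emph{rounding}: the vector $\hat p$ is realizable exactly on the $q$ sampled votes but need not be realizable on exactly $n$ votes, so strictly one must round $\hat p$ to an $n$-voter frequency vector, perturbing the total variation distance by at most $\nfrac{m!}{2n}$. This is harmless in the large-$n$ regime of interest and is comfortably absorbed by the slack between $\nfrac{\eps}{6}$ and $\eps$; making this last bit of bookkeeping clean, rather than any genuine probabilistic difficulty, is the main obstacle.
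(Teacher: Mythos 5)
Your proposal is correct and follows essentially the same route as the paper's proof: sample votes uniformly, estimate the empirical frequency of each of the $m!$ linear orders to within additive $\nfrac{\eps}{3m!}$ via Hoeffding plus a union bound, and conclude from the margin-of-victory assumption and homogeneity that the sampled profile has the same winner. Your explicit total-variation step (and the remark about rounding $\hat p$ to an $n$-voter profile) merely spells out what the paper asserts in one sentence, so there is nothing further to add.
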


\begin{proof}
 We sample $\ell$ votes uniformly at random from the set of votes with
 replacement. For $x\in\LL(\CC)$, let $X_i^x$ be an indicator random variable that is $1$
 exactly when $x$ is the $i$'th sample, and let $g(x)$ be the total
 number of voters whose vote is $x$. Define $\hat{g}(x) =
 \frac{n}{\ell}\sum_{i=1}^{\ell}X_i^x$.  Using the Chernoff bound (\Cref{thm:chernoff}), we have the following:
 $$ \Pr\left[ |\hat{g}(x) - g(x)| \ge \frac{\epsilon n}{3m!} \right] \le
 2\cdot {\exp\left(- \frac{2\eps^2 \ell}{9m!^2}\right)}$$
 By using the union bound, we have the following,
 \begin{eqnarray*}
  \Pr\left[\exists x\in \mathcal{L(C)}, |\hat{g}(x) - g(x)| >
    \frac{\epsilon n}{3m!} \right] &\le& 2m!\cdot \exp\left(-\frac{2\eps^2 \ell}{9m!^2}\right)
 \end{eqnarray*}
 Since the margin of victory is $\epsilon n$ and the voting rule is anonymous, the winner of the $\ell$
 sample votes will be same as the winner of the election if
 $|\hat{g}(x) - g(x)| \le  \nfrac{\epsilon n}{3m!}$ for every linear
 order $x\in \mathcal{L(C)}$. Hence, it is enough to take $\ell =
 (\nfrac{9m!^2}{2\eps^2}) \ln(\nfrac{2m!}{\delta})$. 
\end{proof}

\subsubsection{Approval Voting Rule}

We derive the upper bound on the sample complexity for the $(\epsilon, \delta)$-\WD problem for the approval voting rule.

\begin{lemma}\label{lem:approval}
 If $ \textsf{\textsf{MOV}} \ge \epsilon n $ and $w$ be the winner of an approval election, then, 
 $ s(w) - s(x) \ge \epsilon n, $ for every candidate $ x \ne w $, where $s(y)$ is the number of approvals that a candidate $y$ receives.
\end{lemma}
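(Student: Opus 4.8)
The plan is to prove the stronger statement $\textsf{MOV} \le s(w) - s(x)$ for every candidate $x \ne w$, from which the lemma follows at once: combining it with the hypothesis $\textsf{MOV} \ge \eps n$ gives $s(w) - s(x) \ge \textsf{MOV} \ge \eps n$. So the whole task reduces to exhibiting, for an arbitrary $x \ne w$, an explicit set of at most $d := s(w) - s(x)$ vote changes that dethrones $w$, thereby upper bounding the margin of victory by $d$.

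First I would record the basic sensitivity of approval scores to a single vote change. A vote is an approval set $A \subseteq \CC$, and replacing $A$ by $A' = (A \setminus \{w\}) \cup \{x\}$ is a single vote modification that decreases $s(w)$ by at most one and increases $s(x)$ by at most one; hence one change alters the gap $s(w) - s(x)$ by at most two, and it realizes the full shift of two exactly when the modified vote approves $w$ but not $x$. The next ingredient is a counting step: since the number of votes approving both $w$ and $x$ is at most $s(x)$, the number of votes that approve $w$ but not $x$ is at least $s(w) - s(x) = d$. Thus there are at least $d$ votes available on which the ``double duty'' modification above can be performed.

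Then I would carry out the modification on exactly $d$ such votes (assuming $d \ge 1$). After these changes the scores become $s'(w) = s(w) - d = s(x)$ and $s'(x) = s(x) + d = s(w)$, so that $s'(x) = s(w) > s(x) = s'(w)$, i.e. $x$ now strictly outscores $w$. Since $x$ strictly beats $w$ on score, $w$ can no longer be selected by any tie-breaking rule, so the winner has changed. This certifies $\textsf{MOV} \le d = s(w) - s(x)$, completing the argument. The degenerate case $d = 0$, i.e. $s(w) = s(x)$, cannot arise under the hypothesis: a single change --- dropping $w$ from one vote, or adding $x$ to one vote --- would then already make some candidate strictly beat $w$, forcing $\textsf{MOV} \le 1 < \eps n$.

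The step needing the most care is the last one: making sure that ``changing the winner'' is genuinely achieved independently of the tie-breaking rule $\tau$. Working with the \emph{strict} inequality $s'(x) > s'(w)$ --- rather than merely creating a tie --- is what sidesteps any dependence on $\tau$, and it is precisely to guarantee strictness that I modify all $d$ available votes rather than the $\lfloor d/2 \rfloor + 1$ that would in fact suffice. The counting bound of the previous paragraph is exactly what ensures that enough such votes exist to make this construction legal.
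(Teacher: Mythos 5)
Your proof is correct, but it follows a genuinely different route from the paper's. The paper's proof is a one-sided argument by contradiction: assuming $s(w)-s(x)<\eps n$, it adds approvals of $x$ to roughly $\eps n-1$ votes that do not already approve $x$, concluding that $w$ is ``not the unique winner'' and hence that $\textsf{MOV}<\eps n$. You instead prove the direct intermediate bound $\textsf{MOV}\le s(w)-s(x)$ by a two-sided swap --- deleting $w$ and inserting $x$ in votes that approve $w$ but not $x$ --- backed by the counting observation that at least $s(w)-s(x)$ such votes exist. Your version buys two things. First, you end with the strict inequality $s'(x)>s'(w)$, so the winner changes under \emph{any} tie-breaking rule; the paper's modification only forces a tie, which does not dethrone $w$ (and so does not contradict $\textsf{MOV}\ge\eps n$) when ties are broken in $w$'s favour. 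Second, your existence claim is airtight, whereas the paper's assertion that $\eps n-1$ votes fail to approve $x$ can break down when $s(x)$ is close to $n$. Your swap is essentially the same device the paper deploys later in \Cref{lem:kapp_mov} for $k$-approval, where the two-per-swap effect yields the sharper bound $2(\textsf{MOV}-1)<s(w)-s(z)$; as you note, you could likewise stop after $\lfloor d/2\rfloor+1$ swaps. The only caveat is that your dismissal of the degenerate case $s(w)=s(x)$ via $\textsf{MOV}\le 1<\eps n$ tacitly assumes $\eps n>1$; this assumption is genuinely required for the lemma to hold at all (with two voters approving $\{w\}$ and $\{x\}$ respectively and $\eps=1/2$, one has $\textsf{MOV}=1\ge\eps n$ yet $s(w)-s(x)=0$), and the paper's own proof does not treat this case either, so it is not a defect specific to your argument.
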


\begin{proof}
 Suppose there is a candidate $ x \ne w $ such that $ s(w) - s(x) < \epsilon n$. Then there must exist 
 $\epsilon n - 1$ votes which does not approve the candidate $x$. We modify these votes to make it 
 approve $x$. This makes $w$ not the unique winner in the modified election. This contradicts the fact that the \textsf{MOV} 
 is at least $\epsilon n$.
\end{proof}

\begin{theorem}\label{thm:app}
 There is a $(\epsilon, \delta)$-\WD algorithm for the approval voting rule with sample complexity at most $\nfrac{9\ln (\nfrac{2m}{\delta})}{2\eps^2}$.
\end{theorem}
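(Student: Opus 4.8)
The plan is to mirror the sampling-and-estimation strategy used for \Cref{thm:gen}, but to estimate the approval scores candidate-by-candidate rather than reconstructing the full vote distribution; this is exactly what lets us replace the $m!$ factors by $m$. Concretely, I would sample $\ell$ votes uniformly at random with replacement, form an estimate $\hat{s}(x)$ of the approval score $s(x)$ of each candidate $x$, and output the candidate maximizing $\hat{s}(\cdot)$. The structural fact that makes this work is \Cref{lem:approval}: since the margin of victory is at least $\epsilon n$, the true winner $w$ beats every other candidate $x$ in approval score by $s(w) - s(x) \geq \epsilon n$, so a sufficiently accurate estimate of every score will preserve the identity of the top candidate.

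For the concentration step, fix a candidate $x$, let $X_i^x$ be the indicator that the $i$-th sampled vote approves $x$, and set $\hat{s}(x) = (\nfrac{n}{\ell})\sum_{i=1}^{\ell} X_i^x$, so that $\E{\hat{s}(x)} = s(x)$. The $X_i^x$ are independent and take values in $[0,1]$, so applying the concentration bound of \Cref{thm:chernoff} exactly as in the proof of \Cref{thm:gen} gives $\Pr[\,|\hat{s}(x) - s(x)| \geq \nfrac{\epsilon n}{3}\,] \leq 2\exp(-\nfrac{2\epsilon^2\ell}{9})$. A union bound over the $m$ candidates then bounds the probability that some score estimate is off by $\nfrac{\epsilon n}{3}$ by $2m\exp(-\nfrac{2\epsilon^2\ell}{9})$, and requiring this to be at most $\delta$ yields $\ell \geq (\nfrac{9}{2\epsilon^2})\ln(\nfrac{2m}{\delta})$, which is precisely the claimed sample complexity.

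Finally I would assemble the two pieces. Condition on the event---of probability at least $1-\delta$---that $|\hat{s}(x) - s(x)| < \nfrac{\epsilon n}{3}$ holds simultaneously for every candidate. Then for the true winner $w$ and any other candidate $x$, \Cref{lem:approval} gives $\hat{s}(w) - \hat{s}(x) \geq (s(w) - s(x)) - \nfrac{2\epsilon n}{3} \geq \epsilon n - \nfrac{2\epsilon n}{3} = \nfrac{\epsilon n}{3} > 0$, so $w$ is also the unique candidate with the largest estimated score and the algorithm returns $w$ correctly. The only real design choice---and the step to get right---is the accuracy threshold: it must be small enough that twice the per-candidate error stays strictly below the $\epsilon n$ margin guaranteed by \Cref{lem:approval}, yet large enough that the union bound over the $m$ candidates still closes at exactly the target sample size; the choice $\nfrac{\epsilon n}{3}$ is what balances these two constraints and produces the constant $\nfrac{9}{2}$.
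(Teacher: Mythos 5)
Your proposal is correct and follows essentially the same route as the paper's proof: sample $\ell$ votes, estimate each candidate's approval score by $\hat{s}(x) = (\nfrac{n}{\ell})\sum_i X_i^x$, apply the Chernoff-plus-union-bound argument from \Cref{thm:gen} to get all estimates within $\nfrac{\epsilon n}{3}$, and invoke \Cref{lem:approval} to conclude that the true winner retains the largest estimated score. The paper states this more tersely (deferring to the argument of \Cref{thm:gen}), but the decomposition, the accuracy threshold $\nfrac{\epsilon n}{3}$, and the resulting constant are identical.
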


\begin{proof}
Suppose $w$ is the winner. We sample $\ell$ votes uniformly at random
from the set of votes with replacement. For a candidate $x$, let $X_i^x$
be a random variable  indicating whether the $i$'th vote sampled
approved $x$. Define $\hat{s}(x) =
\frac{n}{\ell}\sum_{i=1}^{\ell}X_i^x$. Then,  by an argument analogous to the
proof of \Cref{thm:gen}, $\Pr[\exists x \in \mathcal{C}, |\hat{s}(x)
-s(x)| \ge \nfrac{\eps n}{3}]\leq  2m\cdot \exp\left(-\nfrac{2\eps^2\ell}{9}\right)$. Thus
since \textsf{MOV}$\geq \eps n$ and by \Cref{lem:approval}, if we  take $\ell = \nfrac{9\ln (\nfrac{2m}{\delta})}{2	\eps^2}$, $\hat{s}(w)$ is
greater than $\hat{s}(x)$ for  all $x \neq w$.
\end{proof}

\subsubsection{Scoring Rules}

Now we move on to the scoring rules. Again, we first establish a
structural consequence of having large \textsf{MOV}.

\begin{lemma}\label{lem:scr}
 Let $\alpha = (\alpha_1, \dots, \alpha_m)$ be any normalized score vector (hence, $\alpha_m=0$). If $w$ and $z$ are the candidates that receive highest and second highest score respectively in a $\alpha$--scoring rule election instance $\mathcal{E}=(V, C)$ and $\MOV[\alpha]$ is the margin of victory of \EE, then,
 $$ \alpha_1(\MOV[\alpha]-1) \le s(w) - s(z) \le 2\alpha_1\MOV[\alpha] $$
\end{lemma}

\begin{proof}
 We claim that there must be at least $\MOV[\alpha]-1$ many votes $v\in V$ where $w$ is preferred over $z$. Indeed, otherwise, we swap $w$ and $z$ in  all the votes where $w$ is preferred over $z$. This makes $z$ win the election. However, we have changed at most $\MOV[\alpha]-1$ votes only. This contradicts the definition of margin of victory (see \Cref{def:margin of victory}). Let $v\in V$ be a vote where $w$ is preferred over $z$. Let $\alpha_i$ and $\alpha_j(\le \alpha_i)$ be the scores received by the candidates $w$ and $z$ respectively from the vote $v$. We replace the vote $v$ by $v^{\prime} = z \succ \cdots \succ c$. This vote change reduces the value of $s(w)-s(z)$ by $\alpha_1 + \alpha_i - \alpha_j$ which is at least $\alpha_1$. Hence, $\alpha_1(\MOV[\alpha]-1) \le s(w) - s(z)$. Each vote change reduces the value of $s(w)-s(z)$ by at most $2\alpha_1$ since $\alpha_m=0$. Hence, $s(w) - s(z) \le 2\alpha_1\MOV[\alpha]$.
\end{proof}

Using \Cref{lem:scr}, we prove the following sample complexity upper bound for the $(\epsilon, \delta)$-\WD problem for the scoring rules.

\begin{theorem}\label{thm:scr}
 Suppose $\alpha = (\alpha_1, \dots, \alpha_m)$ be a normalized score
 vector. There is a $(\epsilon, \delta)$-\WD
 algorithm for the  $\alpha$-scoring rule with sample complexity at most
 $\nfrac{9\ln (\nfrac{2m}{\delta})}{2\eps^2}$. 
\end{theorem}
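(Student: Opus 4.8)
The plan is to follow the same template as the approval and generic upper bounds (\Cref{thm:app,thm:gen}): use the large margin of victory to produce a score gap between the winner and every other candidate, sample votes uniformly at random, estimate each candidate's score, and argue that the empirical scores preserve the winner. The structural input is already available in \Cref{lem:scr}: if $w$ is the winner and $z$ the runner-up, then $s(w)-s(z) \ge \alpha_1(M-1)$, where $M$ is the margin of victory. Since $M \ge \eps n$ and every non-winner $x$ satisfies $s(x) \le s(z)$ (as $z$ has the second-highest score), this yields the uniform gap $s(w)-s(x) \ge \alpha_1(\eps n - 1)$ for every $x \ne w$.

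First I would set up an unbiased, \emph{bounded} estimator. Sample $\ell$ votes uniformly at random with replacement. For a candidate $x$, let $X_i^x = \alpha_{p_i(x)}/\alpha_1$ be the score that $x$ receives in the $i$-th sampled vote (where $p_i(x)$ is its position), rescaled by $\alpha_1$; since $0 \le \alpha_j \le \alpha_1$ for all $j$, we have $X_i^x \in [0,1]$, which is exactly the range required by \Cref{thm:chernoff}. Define $\hat s(x) = \nfrac{\alpha_1 n}{\ell}\sum_{i=1}^\ell X_i^x$, so $\hat s(x)$ is an unbiased estimate of $s(x)$. The key observation is that dividing by $\alpha_1$ turns each per-vote contribution into a $[0,1]$ variable, and because the gap from \Cref{lem:scr} \emph{also} scales with $\alpha_1$, this normalizing factor cancels in the end, leaving a sample bound that is independent of the particular score vector.

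Next I would run the concentration-plus-union-bound step exactly as in \Cref{thm:app}. Applying \Cref{thm:chernoff} to $\sum_i X_i^x$ and then the union bound over the $m$ candidates gives
\[ \Pr\left[ \exists x \in \CC,\ |\hat s(x) - s(x)| \ge \nfrac{\alpha_1 \eps n}{3} \right] \le 2m\exp\left( -\nfrac{2\eps^2 \ell}{9} \right). \]
Setting the right-hand side to at most $\delta$ yields $\ell = \nfrac{9\ln(\nfrac{2m}{\delta})}{2\eps^2}$, the claimed sample complexity. Finally, whenever all $m$ estimates are within $\alpha_1\eps n/3$ of the true scores, the winner is recovered: the two-sided error on a pair $(w,x)$ totals $\nfrac{2\alpha_1\eps n}{3}$, which is strictly smaller than the gap $\alpha_1(\eps n - 1)$ provided $\eps n > 3$, so $\hat s(w) > \hat s(x)$ for every $x \ne w$ and the algorithm outputs $w$.

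I expect the only real subtlety---rather than a genuine obstacle---to be the bookkeeping around $\alpha_1$: one must rescale the positional scores so that the variables fed to \Cref{thm:chernoff} lie in $[0,1]$, and then verify that the $\alpha_1$ factors coming from the estimator and from the \Cref{lem:scr} gap cancel, so that neither $\alpha_1$ nor the actual spread of the score vector appears in the final count. The harmless $-1$ in the gap forces the mild regime assumption $\eps n > 3$ (equivalently, the margin of victory exceeds a small constant), which is the only interesting regime in any case, since for smaller $\eps n$ the trivial strategy of reading all votes already dominates.
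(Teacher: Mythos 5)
Your proposal is correct and follows essentially the same route as the paper: rescale the positional scores by $\alpha_1$ to obtain $[0,1]$-valued estimators, apply the Chernoff bound and a union bound over the $m$ candidates to get all scores within $\alpha_1\eps n/3$, and invoke \Cref{lem:scr} for the $\alpha_1(\eps n - 1)$ score gap, exactly as the paper does by deferring to the argument of \Cref{thm:app}. The only addition is your explicit note that the gap argument needs $\eps n > 3$, a mild regime assumption the paper leaves implicit.
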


\begin{proof}
 We sample $\ell$ votes uniformly
 at random from the set of votes with replacement. For a candidate
 $x$, define $X_i = \nfrac{\alpha_i}{\alpha_1}$ if
 $x$ gets a score of $\alpha_i$ from the  $i${th} sample vote, and let
 $\hat{s}(x) = \frac{n\alpha_1}{\ell}\sum_{i=1}^\ell X_i$.  Now using
 Chernoff bound (\Cref{thm:chernoff}), we have:
 $$ \Pr\left[ \left|\hat{s}(x) - s(x)\right| \ge \alpha_1 \epsilon n/3\right] \le
 2\exp\left(-\frac{2\epsilon^2  \ell}{9}\right)$$
The rest of the proof follows from  an argument analogous to the proof of
 \Cref{thm:app} using \Cref{lem:scr}. 
\end{proof}

From \Cref{thm:scr}, we have a $(\epsilon, \delta)$-winner
determination algorithm for the $k$-approval voting rule which needs
$\nfrac{9\ln (\nfrac{2m}{\delta})}{2\eps^2}$ many samples for any $k$. We now improve this bound to $\nfrac{9\ln(\nfrac{2k}{\delta})}{2\epsilon^2}$ for the $k$-approval voting rule. Before embarking  on the proof of the above fact, we prove
the following lemma which we will use crucially in \Cref{thm:kapp}.

\begin{lemma}\label{lem:funmax}
 Let $f : \mathbb{R} \longrightarrow \mathbb{R}$ be a function defined by $f(x) = e^{-\nfrac{\lambda}{x}}$. Then, 
 \[  f(x) + f(y) \le f(x+y), \text{ for } x,y > 0, \frac{\lambda}{x+y} > 2, x < y \]
\end{lemma}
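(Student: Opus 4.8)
The plan is to recognize the claimed inequality as a \emph{superadditivity} statement for $f$ and to reduce it to the convexity of $f$ on the region pinned down by the hypothesis $\lambda/(x+y) > 2$. First I would extend $f$ to the point $0$ by setting $f(0) = \lim_{x\to 0^+} e^{-\lambda/x} = 0$; since $f'(x) = \lambda x^{-2} e^{-\lambda/x}\to 0$ as $x\to 0^+$, this makes $f$ continuous on $[0,\infty)$ and differentiable on $(0,\infty)$, which is all the regularity the argument needs. (Note that the hypotheses $x+y>0$ and $\lambda/(x+y)>2$ force $\lambda>0$, so these limits behave as stated.)

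Next I would locate the interval on which $f$ is convex. Differentiating twice gives $f''(x) = \lambda x^{-4} e^{-\lambda/x}(\lambda - 2x)$, so $f''(x) > 0$ precisely when $x < \lambda/2$; hence $f$ is convex on $(0, \lambda/2)$. The role of the hypothesis $\lambda/(x+y) > 2$ is exactly to guarantee $x+y < \lambda/2$, so that all three points $x$, $y$, and $x+y$ (together with $0$, via the extension) lie in the convex region $[0,\lambda/2)$.

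The final step is the superadditivity argument. Writing $x$ and $y$ as the convex combinations $x = \tfrac{x}{x+y}(x+y) + \tfrac{y}{x+y}\cdot 0$ and $y = \tfrac{y}{x+y}(x+y) + \tfrac{x}{x+y}\cdot 0$, convexity of $f$ on $[0,x+y]$ together with $f(0)=0$ yields $f(x) \le \tfrac{x}{x+y} f(x+y)$ and $f(y) \le \tfrac{y}{x+y} f(x+y)$; adding these two bounds gives $f(x) + f(y) \le f(x+y)$, as required.

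I expect the only delicate point to be the convexity computation and verifying that the hypothesis places every relevant point inside $(0,\lambda/2)$; once that is in hand the rest is immediate. It is worth remarking that this route does not use the assumption $x < y$, which suggests that hypothesis is present only to fix a canonical orientation of the pair. An alternative but essentially equivalent derivation would apply the mean value theorem to the two differences $f(x+y)-f(y)$ and $f(x)-f(0)$, obtaining intermediate points $\xi_1 \in (y, x+y)$ and $\xi_2 \in (0, x)$, and then compare using that $f'$ is increasing on $(0,\lambda/2)$; that version does genuinely use $x < y$ to ensure $\xi_2 < x \le y < \xi_1$, so that $f'(\xi_2) \le f'(\xi_1)$.
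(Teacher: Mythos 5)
Your proof is correct. It rests on the same computation as the paper's own argument, namely that $f''(x) = \lambda x^{-4}e^{-\lambda/x}(\lambda-2x)$ is positive exactly on $(0,\lambda/2)$, and that the hypothesis $\lambda/(x+y)>2$ places $x$, $y$, and $x+y$ inside that interval. Where you diverge is in how convexity is converted into superadditivity: you extend $f$ by $f(0)=0$ and apply Jensen's inequality to the decompositions $x=\tfrac{x}{x+y}(x+y)+\tfrac{y}{x+y}\cdot 0$ and $y=\tfrac{y}{x+y}(x+y)+\tfrac{x}{x+y}\cdot 0$, whereas the paper argues by an infinitesimal mass-shifting step, $f(x)+f(y)\le f(x-\delta)+f(y+\delta)$, iterated until the smaller argument reaches $0$. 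Your version is the cleaner of the two: it is fully rigorous, it makes explicit why $f(0^+)=0$ matters, and it correctly observes that the hypothesis $x<y$ is not actually needed (it is only needed for the monotone-derivative/mean-value formulation, which is the one the paper gestures at). Note also that the paper's displayed intermediate inequality $\frac{f(x-\delta)-f(x)}{\delta}\ge\frac{f(y)-f(y-\delta)}{\delta}$ is garbled as written (the right-hand side should involve $f(y+\delta)-f(y)$, and the comparison should come from $f'(x)\le f'(y)\le f'(y+\delta)$); your convex-combination route avoids this entirely.
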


\begin{proof} For the function $f(x)$, we have following.
 \begin{eqnarray*}
  f(x) &=& e^{-\nfrac{\lambda}{x}} \\
  \Rightarrow f^{\prime}(x) &=& \frac{\lambda}{x^2} e^{-\nfrac{\lambda}{x}}\\
  \Rightarrow f^{\prime\prime}(x) &=& \frac{\lambda^2}{x^4} e^{-\nfrac{\lambda}{x}} - \frac{2\lambda}{x^3} e^{-\nfrac{\lambda}{x}}
 \end{eqnarray*}
 Hence, for $x,y > 0, \frac{\lambda}{x+y} > 2, x < y$ we have $f^{\prime\prime}(x), f^{\prime\prime}(y), f^{\prime\prime}(x+y) > 0$. This implies following for $x < y$ and an infinitesimal positive $\delta$.
 \begin{eqnarray*}
  f^{\prime}(x) &\le& f^{\prime}(y)\\
  \Rightarrow \frac{f(x-\delta) - f(x)}{\delta} &\ge& \frac{ f(y) - f(y-\delta)}{\delta} \\
  \Rightarrow f(x) + f(y) &\le& f(x-\delta) + f(y+\delta)\\
  \Rightarrow f(x) + f(y) &\le& f(x+y)
 \end{eqnarray*}
\end{proof}

We now present our $(\epsilon, \delta)$-\WD algorithm for the $k$-approval voting rule.

\begin{theorem}\label{thm:kapp}
 There is a $(\epsilon, \delta)$-\WD algorithm for the $k$-approval voting rule with sample complexity at most $\nfrac{9\ln(\nfrac{2k}{\delta})}{2\epsilon^2}$.
\end{theorem}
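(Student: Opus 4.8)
The plan is to refine the union-bound argument of \Cref{thm:scr} so that the number of candidates that genuinely contribute to the failure probability is $\BigO(k)$ instead of $m$. First I would record the structural consequence of a large margin of victory. Applying \Cref{lem:scr} with the normalized score vector of the $k$-approval rule (so that $\alpha_1 = 1$), if $w$ is the winner and the margin of victory is at least $\epsilon n$, then $s(w) - s(z) \ge \epsilon n - 1$ for the runner-up $z$, and hence $s(w) - s(x) \ge \epsilon n - 1$ for \emph{every} $x \ne w$. As in the previous proofs, I would sample $\ell$ votes uniformly at random with replacement and set $\hat s(x) = \nfrac{n}{\ell}\sum_{i=1}^{\ell} X_i^x$, where $X_i^x$ indicates that the $i$-th sampled vote approves $x$. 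The winner on the sample then coincides with $w$ provided no $\hat s(x)$ overtakes $\hat s(w)$, which is guaranteed once every estimate lies within $\nfrac{\epsilon n}{3}$ of its true value.

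The improvement from $\ln m$ to $\ln k$ exploits that in a $k$-approval election every vote casts exactly $k$ approvals, so $\sum_{x\in\CC} s(x) = kn$; equivalently, writing $p_x = \nfrac{s(x)}{n}$ for the approval probability of $x$, we have $\sum_x p_x = k$ with each $p_x \in [0,1]$. The point is that a candidate with small score is extremely unlikely to be overestimated enough to cross $\hat s(w)$. Rather than charging every candidate the same failure probability $\exp(-\nfrac{2\epsilon^2\ell}{9})$ (which forces the factor $m$ in the union bound), I would use the multiplicative form of the Chernoff bound (\Cref{thm:chernoff}) to obtain a \emph{variance-sensitive} tail estimate of the shape $\Pr[x \text{ fails}] \le f(p_x)$ with $f(p) = e^{-\nfrac{\lambda}{p}}$ and $\lambda = \Theta(\epsilon^2 \ell)$, so that the per-candidate failure probability decays rapidly as $p_x \to 0$ and $f(0) = 0$.

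Now I would invoke \Cref{lem:funmax}: the function $f(p) = e^{-\nfrac{\lambda}{p}}$ is superadditive, $f(p) + f(q) \le f(p+q)$, whenever $p + q < \nfrac{\lambda}{2}$, which holds for every merge we perform once $\ell$ is large enough that $\lambda > 4$. Superadditivity (together with the fact that $f$ is convex in this regime) implies that, subject to the budget $\sum_{x \ne w} p_x \le k$ and the cap $p_x \le 1$, the sum $\sum_{x \ne w} f(p_x)$ is maximised by concentrating the mass onto as few coordinates as the cap permits, namely $k$ coordinates each equal to $1$. Hence $\sum_{x \ne w} f(p_x) \le k\, f(1) = k\, e^{-\lambda}$, so the effective size of the union bound is $k$ rather than $m$. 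Combining this with the single winner-underestimate term and requiring the total failure probability to be at most $\delta$ gives a bound of the form $2k\, e^{-\nfrac{2\epsilon^2\ell}{9}} \le \delta$, i.e.\ $\ell = \nfrac{9\ln(\nfrac{2k}{\delta})}{2\epsilon^2}$, exactly as claimed.

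The main obstacle is making the variance-sensitive tail bound land in precisely the form $e^{-\nfrac{\lambda}{p_x}}$ with a constant that reproduces the target $\nfrac{9}{2\epsilon^2}\ln(\nfrac{2k}{\delta})$: one must control the comparison event $\hat s(x) \ge \hat s(w)$, in which both sides are random, and split the gap $s(w)-s(x)\ge \epsilon n$ between the overestimate of $x$ and the underestimate of $w$ so that the resulting exponent scales as $\nfrac{1}{p_x}$ rather than merely as $\ln(\nfrac{1}{p_x})$. A secondary but necessary check is to verify the side conditions of \Cref{lem:funmax} (the requirement $p+q < \nfrac{\lambda}{2}$ throughout the merging, and convexity of $f$ on the range of interest) for the chosen $\ell$, and to confirm that the box constraint $p_x \le 1$ forces the concentrated configuration, rather than an unbounded spike, to be the true maximiser of $\sum_{x\ne w} f(p_x)$.
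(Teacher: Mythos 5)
Your proposal is correct and follows essentially the same route as the paper: the multiplicative Chernoff bound yields the per-candidate tail $2\exp(-\Theta(\epsilon^2 \ell n / s(x)))$, and \Cref{lem:funmax} together with the budget constraint $\sum_x s(x) = kn$ and the cap $s(x)\le n$ shows the union-bound sum is maximised by $k$ candidates of full score, giving the factor $2k$ and hence $\ell = \nfrac{9\ln(\nfrac{2k}{\delta})}{2\epsilon^2}$. The "main obstacle" you flag is handled exactly as you anticipate, by applying the two-sided multiplicative bound to each candidate's estimate separately at deviation $\nfrac{\epsilon n}{3}$ and appealing to \Cref{lem:scr} for the gap, just as in \Cref{thm:gen,thm:scr}.
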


\begin{proof}
We sample $\ell$ votes uniformly at random from the set of votes with
replacement. For a candidate $x$, let $X_i^x$ be a random variable
indicating whether $x$ is among the top $k$ candidates for the $i^{th}$
vote sample. Define $\hat{s}(x) = \frac{n}{\ell}\sum_{i=1}^{\ell}X_i^x$,
and let $s(x)$ be the actual score of $x$. Then by the multiplicative Chernoff bound
(\Cref{thm:chernoff}), we have:
 $$ \Pr\left[ |\hat{s}(x) - s(x)| > \nfrac{\epsilon n}{3} \right] \le
 2\exp\left(-\frac{2\epsilon^2 \ell n}{9 s(x)}\right)$$
 By union bound, we have the following,
 \begin{eqnarray*}
  && \Pr[ \exists x\in \mathcal{C}, |\hat{s}(x) - s(x)| > \nfrac{\epsilon n}{3} ]\\
  &\le& \sum_{x\in \mathcal{C}} 2\exp\left(-\nfrac{2\epsilon^2 \ell n}{9
      s(x)}\right) \\
  &\le& 2k\exp\left(-\nfrac{2\epsilon^2 \ell}{9}\right)
 \end{eqnarray*}
 Let the candidate $w$ be the winner of the election. The second
 inequality in the above derivation follows from the fact that, the
 function $\sum_{x\in \mathcal{C}} 2 {\exp\left(-\nfrac{2\epsilon^2 \ell n}{9 s(x)}\right)}$ is maximized in the domain, defined by the constraint: for every candidate $x\in \mathcal{C}$, $s(x) \in [0,n]$ and $\sum_{x\in\mathcal{C}} s(x) = kn$, by setting $s(x)=n$ for every $x \in \mathcal{C}^\prime$ and $s(y)=0$ for every $y \in \mathcal{C}\setminus\mathcal{C}^\prime$, for any arbitrary subset $\mathcal{C}^\prime \subset \mathcal{C}$ of cardinality $k$ (due to \Cref{lem:funmax}). The rest of the proof follows by an argument analogous to the proof of \Cref{thm:gen} using \Cref{lem:scr}.
\end{proof}

Notice that, the sample complexity upper bound in \Cref{thm:kapp} is independent of $m$ for the plurality voting rule. \Cref{thm:kapp} in turn implies the following Corollary which we consider to be of independent interest.

\begin{corollary}\label{cor:linfty}
 There is an algorithm to estimate the $\ell_\infty$ norm $\ell_\infty(\mu)$ of a distribution $\mu$ within an additive factor of $\eps$ by querying only $\nfrac{9\ln(\nfrac{2}{\delta})}{2\epsilon^2}$ many samples, if we are allowed to get i.i.d. samples from the distribution $\mu$.
\end{corollary}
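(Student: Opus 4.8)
The plan is to realize the $\ell_\infty$-norm estimation problem as an instance of the plurality winner-score estimation already carried out in \Cref{thm:kapp}. Given i.i.d. sample access to a distribution $\mu$ on a base set $\XX$, I would treat $\XX$ as a candidate set and interpret each draw $x \sim \mu$ as a single plurality ballot approving $x$. Under this correspondence the normalized plurality score of a candidate $x$ equals $\mu(x)$, and hence $\ell_\infty(\mu) = \max_{x \in \XX} \mu(x)$ is exactly the normalized plurality score of the plurality winner. Estimating $\ell_\infty(\mu)$ within additive error $\eps$ is therefore the same as estimating the top normalized score within additive error $\eps$.

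First I would run the sampling procedure from the proof of \Cref{thm:kapp} specialized to $k = 1$: draw $\ell = \nfrac{9\ln(\nfrac{2}{\delta})}{2\eps^2}$ i.i.d. samples, and for each $x$ set $\hat\mu(x) = \frac{1}{\ell}\sum_{i=1}^{\ell} X_i^x$, where $X_i^x$ indicates that the $i$-th sample equals $x$. The concentration analysis in \Cref{thm:kapp} shows that with probability at least $1 - \delta$ we have $|\hat\mu(x) - \mu(x)| \le \nfrac{\eps}{3}$ simultaneously for \emph{every} $x$. I would then output $\max_{x} \hat\mu(x)$ as the estimate of $\ell_\infty(\mu)$. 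Since the maximum functional is $1$-Lipschitz with respect to the sup-norm, on this good event the output differs from $\ell_\infty(\mu) = \max_x \mu(x)$ by at most $\nfrac{\eps}{3} \le \eps$, which gives the claimed guarantee (in fact with a little room to spare).

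The only genuinely delicate point is that the union bound controlling the \emph{upper} deviation $\hat\mu(x) > \mu(x) + \nfrac{\eps}{3}$ ranges over all of $\XX$, whose size may be large or even unbounded, so a naive union bound would introduce a dependence on the support size. This is exactly the obstacle that the $k=1$ case of \Cref{thm:kapp} has already dispatched: by \Cref{lem:funmax}, the sum $\sum_{x} 2\exp\!\left(-\nfrac{2\eps^2\ell}{9\,\mu(x)}\right)$ subject to $\sum_x \mu(x) = 1$ is maximized by concentrating all the mass on a single point, so it collapses to $2\exp\!\left(-\nfrac{2\eps^2\ell}{9}\right)$, independent of $|\XX|$. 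Plugging in $\ell = \nfrac{9\ln(\nfrac{2}{\delta})}{2\eps^2}$ makes this at most $\delta$, and the remainder is routine. I therefore expect no new technical difficulty beyond correctly setting up the distribution-to-election dictionary and invoking the support-size-free tail bound from \Cref{thm:kapp}.
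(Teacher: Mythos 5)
Your proposal is correct and follows essentially the same route as the paper: the corollary is obtained by specializing the sampling and concentration analysis in the proof of \Cref{thm:kapp} to $k=1$, viewing each i.i.d.\ draw as a plurality ballot so that $\ell_\infty(\mu)=\max_x \mu(x)$ is the normalized plurality winner's score, and using \Cref{lem:funmax} to make the union bound independent of the support size. Your explicit remarks about the $1$-Lipschitzness of the max and the support-size-free tail bound simply make precise what the paper leaves implicit when it states that the corollary follows from \Cref{thm:kapp}.
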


Such a statement seems to be folklore in the statistics community \cite{DKW}. Recently in an independent and nearly simultaneous work, Waggoner \cite{Waggoner2015} obtained a sharp bound of $\frac{4}{\eps^2} \ln(\frac{1}{\delta})$ for the sample complexity in \Cref{cor:linfty}.

We now turn our attention to the $k$-veto voting rule. For the $k$-veto voting rule, we have the following result for the $(\epsilon, \delta)$-\WD problem.

\begin{theorem}\label{thm:kveto}
 There is a $(\epsilon, \delta)$-\WD algorithm for the $k$-veto voting rule with sample complexity at most $\nfrac{9\ln(\nfrac{2k}{\delta})}{2\epsilon^2}$.
\end{theorem}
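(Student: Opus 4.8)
The plan is to mirror the sampling-and-estimate strategy of \Cref{thm:kapp}, working with ``veto counts'' in place of approval counts. Recall that under the $k$-veto rule a voter assigns $-1$ to the $k$ candidates in the bottom $k$ positions of her vote and $0$ to the remaining $m-k$ candidates; equivalently, after adding the constant $1$ to every coordinate of the score vector (which leaves every scoring rule unchanged), $k$-veto is the normalized scoring rule with score vector $(1,\dots,1,0,\dots,0)$ having $m-k$ leading ones. For a candidate $x$ let $d(x)$ denote the number of voters who rank $x$ among their bottom $k$ positions; then the (normalized) score of $x$ is $s(x) = n - d(x)$, so the winner $w$ is precisely the candidate minimizing $d(\cdot)$, and crucially $\sum_{x \in \mathcal{C}} d(x) = kn$ because every vote vetoes exactly $k$ candidates.

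First I would establish the structural gap. Since $k$-veto is a normalized scoring rule with $\alpha_1 = 1$, \Cref{lem:scr} applies and gives, for the winner $w$ and any other candidate $x$, $s(w) - s(x) \ge \alpha_1(\textsf{MOV} - 1) \ge \epsilon n - 1$, which in terms of veto counts reads $d(x) - d(w) \ge \epsilon n - 1$. Thus for $n$ large enough the winner's veto count is smaller than every competitor's by strictly more than $\nfrac{2\epsilon n}{3}$, so it suffices to estimate each $d(x)$ to within an additive $\nfrac{\epsilon n}{3}$ and then output the candidate of smallest estimated veto count. Next I would set up the estimator exactly as in \Cref{thm:kapp}: sample $\ell$ votes uniformly at random with replacement, let $X_i^x$ indicate whether $x$ lies in the bottom $k$ of the $i$-th sampled vote, and put $\hat d(x) = \frac{n}{\ell}\sum_{i=1}^{\ell} X_i^x$, an unbiased estimator of $d(x)$. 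The multiplicative Chernoff bound (\Cref{thm:chernoff}) then yields
\[ \Pr\!\left[\, |\hat d(x) - d(x)| > \nfrac{\epsilon n}{3} \,\right] \le 2\exp\!\left(-\nfrac{2\epsilon^2 \ell n}{9\, d(x)}\right), \]
and a union bound over the $m$ candidates bounds the failure probability by $\sum_{x \in \mathcal{C}} 2\exp(-\nfrac{2\epsilon^2 \ell n}{9\, d(x)})$.

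The decisive step --- and the one that produces the $\ln k$ rather than a $\ln m$ (or $\ln(m-k)$) dependence --- is to maximize this sum subject to the constraints $d(x) \in [0,n]$ for every $x$ and $\sum_{x} d(x) = kn$. Invoking \Cref{lem:funmax} (with $f(t) = \exp(-\nfrac{\lambda}{t})$ and $\lambda = \nfrac{2\epsilon^2 \ell n}{9}$) shows that the sum is maximized at an extreme point of this polytope, namely by placing $d(x) = n$ on exactly $k$ candidates and $d(x)=0$ on the rest; this caps the bound at $2k\exp(-\nfrac{2\epsilon^2 \ell}{9})$. Choosing $\ell = \nfrac{9\ln(\nfrac{2k}{\delta})}{2\epsilon^2}$ drives the failure probability below $\delta$, and combined with the gap from \Cref{lem:scr} the minimizer of $\hat d(\cdot)$ equals $w$ with probability at least $1-\delta$.

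The only points needing care are the usual caveats already implicit in \Cref{thm:kapp}: verifying that the multiplicative-Chernoff deviation parameter stays in the valid range (the terms with small $d(x)$ are harmless since their exponentials are negligible, and the $d(x)=0$ term contributes a vanishing quantity), and checking that \Cref{lem:funmax} indeed forces the optimum of the exponential sum to the claimed vertex of the constraint polytope. I expect this constrained maximization to be the main technical obstacle, exactly as it was for the $k$-approval rule; the rest of the argument is a routine transcription of the $k$-approval proof with ``top $k$'' replaced by ``bottom $k$'' and maximization of score replaced by minimization of veto count.
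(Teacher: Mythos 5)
Your proposal is correct and follows essentially the same route as the paper's proof: the same bottom-$k$ indicator estimator, the same multiplicative Chernoff bound, and the same use of \Cref{lem:funmax} with the constraint $\sum_x d(x)=kn$, $d(x)\in[0,n]$ to cap the union bound at $2k\exp(-\nfrac{2\epsilon^2\ell}{9})$, with the winner-gap supplied by \Cref{lem:scr} applied to the normalized score vector. The only cosmetic difference is that the paper opens with the (unused in the subsequent chain of inequalities) observation that every non-winner receives at least $\epsilon n$ vetoes, hence $m-1\le \nfrac{k}{\epsilon}$, which your argument does not need.
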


\begin{proof}
 We first observe that, since the margin of victory of the input election is at least $\eps n$, every candidate other than the winner must receive at least $\eps n$ vetoes. Hence we have the following.
 
 $$\eps n (m-1) \ge kn \text{ i.e. } m-1\le \nfrac{k}{\eps}$$
 
 Let us sample $\ell$ votes uniformly at random from the set of votes with
replacement. For a candidate $x$, let $X_i^x$ be a random variable
indicating whether $x$ is among the bottom $k$ candidates for the $i^{th}$
vote sample. Define $\hat{s}(x) = \frac{n}{\ell}\sum_{i=1}^{\ell}X_i^x$,
and let $s(x)$ be the actual score of $x$. Then by the multiplicative Chernoff bound
(\Cref{thm:chernoff}), we have:
 $$ \Pr\left[ |\hat{s}(x) - s(x)| > \nfrac{\epsilon n}{3} \right] \le
 2\exp\left(-\frac{2\epsilon^2 \ell n}{9 s(x)}\right)$$
 By union bound, we have the following,
 \begin{eqnarray*}
  \Pr[ \exists x\in \mathcal{C}, |\hat{s}(x) - s(x)| > \nfrac{\epsilon n}{3} ]
  &\le& \sum_{x\in \mathcal{C}} 2\exp\left(-\nfrac{2\epsilon^2 \ell n}{9
      s(x)}\right) \\
  &\le& 2k\exp\left(-\nfrac{2\epsilon^2 \ell}{9}\right)
 \end{eqnarray*}
 Let the candidate $w$ be the winner of the election. The second
 inequality in the above derivation follows from the fact that, the
 function $\sum_{x\in \mathcal{C}} 2 {\exp\left(-\nfrac{2\epsilon^2 \ell n}{9 s(x)}\right)}$ is maximized in the domain, defined by the constraint: for every candidate $x\in \mathcal{C}$, $s(x) \in [0,n]$ and $\sum_{x\in\mathcal{C}} s(x) = kn$, by setting $s(x)=n$ for every $x \in \mathcal{C}^\prime$ and $s(y)=0$ for every $y \in \mathcal{C}\setminus\mathcal{C}^\prime$, for any arbitrary subset $\mathcal{C}^\prime \subset \mathcal{C}$ of cardinality $k$ (due to \Cref{lem:funmax}). The rest of the proof follows by an argument analogous to the proof of \Cref{thm:gen} using \Cref{lem:scr}.
\end{proof}

\subsubsection{Maximin Voting Rule}

We now turn our attention to the maximin voting rule. The idea here is to sample enough number of votes such that we are able to estimate the weights of the edges in the weighted majority graph with certain level of accuracy which in turn 
leads us to predict winner.

\begin{lemma}\label{lem:maximin}
 Let $\mathcal{E}=(V,C)$ be any instance of a maximin election. If $w$ and $z$ are the candidates that receive highest and second highest maximin score respectively in $\mathcal{E}$ and $\MOV[maximin]$ is the margin of victory of \EE, then,
 $$ 2\MOV[maximin] \le s(w) - s(z) \le 4\MOV[maximin] $$
\end{lemma}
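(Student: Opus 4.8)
The plan is to mirror the structure of the scoring-rule argument in Lemma~\ref{lem:scr}, establishing the two inequalities separately, and to lean throughout on a single elementary fact: replacing one vote perturbs every pairwise margin $D_E(\cdot,\cdot)$ by at most $2$. Write $M$ for $\MOV[maximin]$ and recall that the maximin score is $s(x)=\min_{y\ne x}D_E(x,y)$.

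First I would prove the upper bound $s(w)-s(z)\le 4M$. Replacing a single vote changes each of $N_E(a,b)$ and $N_E(b,a)$ by at most one, so $D_E(a,b)$ changes by at most $2$; since a maximin score is a minimum of such margins, each candidate's maximin score changes by at most $2$ per modified vote, hence by at most $2M$ after an optimal sequence of $M$ modifications. By definition of $M$ there is a way to modify $M$ votes so that $w$ is no longer the unique winner, i.e.\ some $x\ne w$ attains $s'(x)\ge s'(w)$ in the modified election, where primes denote modified scores. Then $s(w)-2M\le s'(w)\le s'(x)\le s(x)+2M\le s(z)+2M$, using $s(x)\le s(z)$ because $z$ has the second-highest score. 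Rearranging gives $s(w)-s(z)\le 4M$.

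Second, for the lower bound $2M\le s(w)-s(z)$ I would exhibit an explicit modification of at most $(s(w)-s(z))/2$ votes that destroys $w$'s unique-winner status, which forces $M\le (s(w)-s(z))/2$. Let $y^\ast$ be a candidate with $D_E(w,y^\ast)=s(w)$. Among the $(n+s(w))/2$ votes in which $w\succ y^\ast$, I would repeatedly pick one and move $w$ to the bottom of that vote. Each such modification decreases $D_E(w,y^\ast)$ by exactly $2$ and, crucially, leaves $D_E(z,y)$ unchanged for every $y\ne w$ (moving only $w$ cannot alter the relative order of any pair avoiding $w$), while weakly increasing $D_E(z,w)$; hence $z$'s maximin score does not drop. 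After $(s(w)-s(z))/2$ such modifications $D_E(w,y^\ast)$ has been lowered to $s(z)$, so $s'(w)\le s(z)\le s'(z)$ and $w$ is no longer the unique winner. There are enough votes to do this, since $(n+s(w))/2\ge (s(w)-s(z))/2$.

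The only genuinely delicate step is the bookkeeping for the lower bound: I must drive $w$'s score down while never lowering $z$'s score, which is precisely why moving $w$ (rather than lifting $y^\ast$) is the right operation. I would also record the parity argument that makes the count exact: since $D_E(a,b)=2N_E(a,b)-n\equiv n\pmod 2$, both $s(w)$ and $s(z)$ have the same parity, so $s(w)-s(z)$ is even and $(s(w)-s(z))/2$ is an integer. Together with the convention, used already in Lemma~\ref{lem:approval}, that creating a tie already counts as changing the winner, this yields $M\le (s(w)-s(z))/2$ with no off-by-one slack, completing the proof.
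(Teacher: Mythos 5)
Your proof is correct and follows essentially the same route as the paper: the upper bound comes from the fact that each modified vote perturbs any maximin score by at most $2$, and the lower bound from an explicit sequence of $(s(w)-s(z))/2$ vote changes, each lowering $w$'s score by exactly $2$ while never lowering $z$'s. The only (immaterial) difference is in the lower-bound operation --- you demote $w$ to the bottom of votes where $w\succ y^\ast$, whereas the paper replaces such a vote wholesale by $z\succ x\succ\cdots\succ w$ --- and your write-up is if anything more careful, making explicit the parity of $s(w)-s(z)$ and the availability of enough votes, both of which the paper leaves implicit.
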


\begin{proof}
 Each vote change can increase the value of $s(z)$ by at most two and decrease the value of $s(w)$ by at most two. Hence, we have $s(w) - s(z) \le 4\MOV[maximin]$. Let $x$ be the candidate that minimizes $D_{\mathcal{E}}(w,x)$, that is, $x\in \argmin_{x\in C\setminus \{w\}}\{D_{\mathcal{E}}(w,x)\}$. Let $v\in V$ be a vote where $w$ is preferred over $x$. We replace the vote $v$ by the vote $v^{\prime} = z \succ x \succ \cdots \succ w$. This vote change reduces the score of $w$ by two and does not reduce the score of $z$. Hence, $s(w) - s(z) \ge 2\MOV[maximin]$.
\end{proof}

We now present our $(\epsilon, \delta)$-\WD algorithm for the maximin voting rule.

\begin{theorem}\label{thm:maximin}
 There is a $(\epsilon, \delta)$-\WD algorithm for the maximin voting rule with sample complexity $(\nfrac{9}{2\epsilon^2})\ln (\nfrac{2m}{\delta})$.
\end{theorem}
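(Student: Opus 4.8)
The plan is to mirror the sampling-and-estimation framework used for the scoring and approval rules (\Cref{thm:scr,thm:app}), but to estimate the pairwise margins $D_\EE(x,y)$ --- equivalently, the edge weights of the weighted majority graph $\GG_\EE$ --- rather than positional scores, since the maximin score $s(x) = \min_{y\ne x} D_\EE(x,y)$ is assembled out of these margins. The structural input is \Cref{lem:maximin}: because the margin of victory is at least $\eps n$, the winner $w$ and the runner-up $z$ satisfy $s(w) - s(z) \ge 2\MOV[maximin] \ge 2\eps n$, and hence $s(w) - s(x) \ge 2\eps n$ for every $x \ne w$. So it suffices to estimate the relevant margins accurately enough that the estimated maximin scores preserve a positive fraction of this $2\eps n$ gap.

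Concretely, I would sample $\ell$ votes uniformly at random with replacement. For an ordered pair $(x,y)$ let $X_i^{xy}\in\{0,1\}$ indicate whether the $i$-th sampled vote ranks $x$ above $y$, set $\hat N(x,y) = \frac{n}{\ell}\sum_{i=1}^\ell X_i^{xy}$, and define $\hat D(x,y) = 2\hat N(x,y) - n$, using that in any complete order exactly one of $x\succ y$, $y\succ x$ holds, so $N_\EE(x,y)+N_\EE(y,x)=n$ and $D_\EE(x,y) = 2N_\EE(x,y)-n$. By the Hoeffding bound in the form of \Cref{thm:chernoff}, $\Pr[|\hat N(x,y)-N_\EE(x,y)|\ge \nfrac{\eps n}{3}] \le 2\exp(-\nfrac{2\eps^2\ell}{9})$, which gives $|\hat D(x,y) - D_\EE(x,y)| \le \nfrac{2\eps n}{3} < \eps n$. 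Writing $\hat s(x) = \min_{y\ne x}\hat D(x,y)$, accurate margins yield $|\hat s(x) - s(x)| \le \nfrac{2\eps n}{3}$, so $\hat s(w) - \hat s(x) \ge 2\eps n - \nfrac{4\eps n}{3} = \nfrac{2\eps n}{3} > 0$, whence $w$ is the unique maximizer of $\hat s$, i.e.\ the winner on the sampled votes.

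The one subtlety --- and the step I expect to be the main obstacle --- is handling the union bound so as to land on exactly $\ln(\nfrac{2m}{\delta})$ rather than a bound quadratic in $m$: a naive union bound over all $\binom{m}{2}$ pairs would cost an extra $\ln m$ factor. To avoid this I would charge only the $\BigO(m)$ events that actually drive correctness. First, the $m-1$ lower-tail events $\hat D(w,y) \ge D_\EE(w,y) - \nfrac{2\eps n}{3}$ over the edges incident to $w$, which guarantee $\hat s(w) \ge s(w) - \nfrac{2\eps n}{3}$. Second, for each $x \ne w$, the single upper-tail event on the fixed edge $(x, y_x^\ast)$, where $y_x^\ast$ is the true minimizer achieving $s(x) = D_\EE(x,y_x^\ast)$; this guarantees $\hat s(x) \le \hat D(x,y_x^\ast) \le s(x) + \nfrac{2\eps n}{3}$. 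The algorithm need not know $y_x^\ast$, but the analysis may fix it, and each of these $2(m-1) < 2m$ events is one-sided with probability at most $\exp(-\nfrac{2\eps^2\ell}{9})$.

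Putting these together, a union bound gives failure probability at most $2m\exp(-\nfrac{2\eps^2\ell}{9})$, and setting this to be at most $\delta$ yields $\ell = (\nfrac{9}{2\eps^2})\ln(\nfrac{2m}{\delta})$, as required. The delicate point throughout is the accounting in the previous paragraph: proving correctness while invoking only a linear-in-$m$ collection of one-sided deviation events, rather than bounding every pairwise margin simultaneously.
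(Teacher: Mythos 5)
Your proposal is correct and follows the same route as the paper's proof: sample $\ell$ votes, estimate the pairwise margins $D_\EE(x,y)$ by the empirical $\pm 1$ averages, and conclude via \Cref{lem:maximin} that the $2\eps n$ gap between $s(w)$ and every other maximin score survives the estimation error. The one place you depart from the paper is the union bound, and your version is in fact the tighter one: the paper simply estimates $\hat D(x,y)$ within $\eps n/2$ \emph{for every pair} $x,y\in\CC$ and asserts the rest is analogous to \Cref{thm:app}, which taken literally charges $\Theta(m^2)$ two-sided deviation events and would yield $\ell = \Theta\bigl(\eps^{-2}\ln(m^2/\delta)\bigr)$ rather than the stated $(\nfrac{9}{2\eps^2})\ln(\nfrac{2m}{\delta})$. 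Your accounting --- the $m-1$ lower-tail events on edges incident to $w$ plus, for each $x\ne w$, a single upper-tail event on the edge realizing $s(x)$ --- is exactly what is needed to justify the constant as written, so the ``delicate point'' you flagged is a genuine repair of a looseness in the paper rather than an obstacle. (Like the paper, you cite \Cref{thm:chernoff} for what is really the additive Hoeffding bound $2\exp(-2\ell t^2)$; this imprecision is inherited from the source and does not affect correctness of the argument.)
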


\begin{proof}
 Let $x$ and $y$ be any two arbitrary candidates. We sample $\ell$
 votes uniformly at random from the set of votes with
 replacement. Let $X_i$ be a random variable  defined as follows.
 $$ X_i = \begin{cases}
           1,& \text{if } x\succ y \text{ in the } i^{th} \text{ sample}\\
           -1,& \text{else}
          \end{cases}
 $$
Define $\hat{D}(x,y) = \frac{n}{\ell}\sum_{i=1}^{\ell}X_i$. 
 We estimate $\hat{D}(x,y)$ within the closed ball of radius $\epsilon
 n/2$ around $D(x,y)$ for every candidates $x, y\in \mathcal{C}$ and
 the rest of the proof  follows from  by an argument analogous to the
 proof of \Cref{thm:app} using \Cref{lem:maximin}. 
\end{proof}

\subsubsection{Copeland Voting Rule}

Now we move on to the Copeland$^\alpha$ voting rule. The approach for the Copeland$^\alpha$ voting rule is similar
to the maximin voting rule. However, it turns out that we need to
estimate the edge weights of the weighted majority graph more
accurately for the Copeland$^\alpha$ voting rule. Xia introduced the quantity called the {\em relative margin of victory} (see Section 5.1 in \cite{xia2012computing}) which we will use crucially for showing sample complexity upper bound for the Copeland$^\alpha$ voting rule. Given an election, a candidate $x\in C$, and an integer (may be negative also) $t$, $s^\prime_t(V, x)$ is defined as follows. 

$$s^\prime_t(V, x) = |\{ y\in C: y\ne x, D(y,x)<2t \}| + \alpha|\{ y\in C: y\ne x, D(y,x)=2t \}|$$

For every two distinct candidates $x$ and $y$, the relative margin of victory, denoted by $RM(x,y)$, between $x$ and $y$ is defined as the minimum integer $t$ such that, $s^\prime_{-t}(V, x) \le s^\prime_t(V, y)$. Let $w$ be the winner of the election $\mathcal{E}$. We define a quantity $\Gamma(\mathcal{E})$ to be $\min_{x\in C\setminus\{w\}} \{RM(w,x)\}$. Notice that, given an election $\mathcal{E}$, $\Gamma(\mathcal{E})$ can be computed in polynomial amount of time. Now we have the following lemma.

\begin{lemma}\label{lem:copeland}
Suppose $ \textsf{MOV} \ge \epsilon n $ and $w$ be the winner of a Copeland$^\alpha$ election. Then, 
 $ RM(w, x) \ge \nfrac{\epsilon n}{(2(\ceil*{\ln m} +1))}, $ for every candidate $ x \ne w $.
\end{lemma}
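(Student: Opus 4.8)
The plan is to derive \Cref{lem:copeland} from a sandwiching of \textsf{MOV} by the quantity $\Gamma(\EE)=\min_{x\in C\setminus\{w\}}RM(w,x)$ and then read off the per-candidate bound. Since $\Gamma(\EE)$ is a minimum, we have $RM(w,x)\ge\Gamma(\EE)$ for every $x\ne w$, so it suffices to prove the single inequality $\Gamma(\EE)\ge\nfrac{\epsilon n}{(2(\ceil*{\ln m}+1))}$; the statement for an arbitrary $x\ne w$ is then immediate. As we are promised $\textsf{MOV}\ge\epsilon n$, this reduces further to the purely structural upper bound
\[ \textsf{MOV}\ \le\ 2(\ceil*{\ln m}+1)\,\Gamma(\EE), \]
since rearranging gives $\Gamma(\EE)\ge\nfrac{\textsf{MOV}}{(2(\ceil*{\ln m}+1))}\ge\nfrac{\epsilon n}{(2(\ceil*{\ln m}+1))}$. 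The easy direction $\Gamma(\EE)\le\textsf{MOV}$ (which we do not need) justifies viewing $\Gamma(\EE)$ as a proxy for \textsf{MOV}; the direction we actually use is Xia's upper bound from Section~5.1 of~\cite{xia2012computing}, which I would either invoke directly or reprove by the construction sketched next.

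All the content is therefore in the bound $\textsf{MOV}\le2(\ceil*{\ln m}+1)\Gamma(\EE)$, which I would establish by an explicit bribery. Let $x^\ast\in\argmin_{x\ne w}RM(w,x)$ and set $t=RM(w,x^\ast)=\Gamma(\EE)$, so that $s^\prime_{-t}(V,w)\le s^\prime_t(V,x^\ast)$ by definition of $RM$. I would modify a set $B$ of votes, moving $x^\ast$ to the top and $w$ to the bottom of each vote in $B$. This modification is monotone in the right way: it never changes $D(a,b)$ for $a,b\notin\{x^\ast,w\}$, it only increases the margins $D(x^\ast,\cdot)$, and it only decreases the margins $D(w,\cdot)$. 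Moving $x^\ast$ to the top of a vote raises $D(x^\ast,y)$ by $2$ for each $y$ that was above $x^\ast$ there, and symmetrically moving $w$ to the bottom lowers $D(w,y)$ by $2$ for each $y$ that was below $w$. Hence if every candidate $y$ with $D(y,x^\ast)<2t$ lies above $x^\ast$ in at least $t$ votes of $B$, then after the modification $x^\ast$ beats (or ties, matching the $\alpha$ term) all of them, so its Copeland$^\alpha$ score is at least $s^\prime_t(V,x^\ast)$; symmetrically, if every $y$ with $D(w,y)\le2t$ lies below $w$ in at least $t$ votes of $B$, then $w$'s score drops to at most $s^\prime_{-t}(V,w)$. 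Combining with $s^\prime_{-t}(V,w)\le s^\prime_t(V,x^\ast)$, candidate $x^\ast$ ends with score at least that of $w$, so $w$ is no longer the unique winner and the outcome has changed; thus $\textsf{MOV}\le|B|$.

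The crux, and the source of the $\ceil*{\ln m}$ factor, is to show that one set $B$ of at most $2(\ceil*{\ln m}+1)t$ votes can simultaneously satisfy both covering requirements: each of the (at most $m$) relevant candidates must be ranked above $x^\ast$ in $\ge t$ votes of $B$, and each relevant candidate must be ranked below $w$ in $\ge t$ votes of $B$. Feasibility per candidate holds because any $y$ that $x^\ast$ does not already beat has $D(y,x^\ast)\ge0$, whence $y$ is ranked above $x^\ast$ in $N(y,x^\ast)\ge\nfrac{n}{2}$ votes (and symmetrically for $w$), so the requested $t$-fold coverage is available whenever $t\le\nfrac{n}{2}$, the remaining regime being trivial. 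The difficulty is coordinating the $t$-fold coverage of all candidates at once with few votes. I would handle this with a harmonic (greedy/doubling) argument: repeatedly select votes so that the number of still-undercovered candidate-requirements shrinks by a constant factor, charging $t$ votes per ``doubling level'' and yielding a $t(\ceil*{\ln m}+1)$ count for each of the two phases, hence $|B|\le2(\ceil*{\ln m}+1)t$; a probabilistic selection of $B$ with a Chernoff and union bound over the $\le2m$ requirements gives the same existence statement. Extracting the precise constant $2(\ceil*{\ln m}+1)$ from this covering step, rather than merely an $O(t\log m)$ bound, is the main obstacle; the rest is bookkeeping around $s^\prime_t$ and the one-line rearrangement from the first paragraph.
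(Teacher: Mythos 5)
Your proposal is correct and takes essentially the same route as the paper: the paper's proof of this lemma is simply to invoke Theorem~11 of~\cite{xia2012computing} (the bound $\textsf{MOV}\le 2(\ceil*{\ln m}+1)\Gamma(\EE)$), and your first paragraph performs exactly that reduction together with the observation that $RM(w,x)\ge\Gamma(\EE)$ for every $x\ne w$. Your sketched reproof of Xia's covering bound is extra work the paper does not attempt, and for the purposes of this lemma the direct citation suffices.
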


\begin{proof}
 Follows from Theorem 11 in \cite{xia2012computing}.
\end{proof}

\begin{theorem}\label{thm:copeland}
 There is a $(\epsilon, \delta)$-\WD algorithm for the Copeland$^\alpha$ voting rule with sample complexity $(\nfrac{25}{2\epsilon^2}) \ln^3 (\nfrac{2m}{\delta})$.
\end{theorem}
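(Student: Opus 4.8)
The plan is to follow the same sampling framework used for the other weighted-majority-graph rules (as in \Cref{thm:maximin}): sample a small number of votes uniformly at random, use them to estimate every pairwise margin $D(x,y)$ of the weighted majority graph, and then invoke a structural robustness property to conclude that the Copeland$^\alpha$ winner of the sampled sub-election coincides with the true winner $w$. The starting point is \Cref{lem:copeland}: since the input election has $\textsf{MOV} \ge \epsilon n$, we have $RM(w,x) \ge \gamma$ for every $x \ne w$, where $\gamma := \nfrac{\epsilon n}{2(\lceil \ln m \rceil + 1)}$, and hence $\Gamma(\mathcal{E}) \ge \gamma$. Thus it is the relative margin of victory, rather than the ordinary margin, that governs how much slack we have in estimating edge weights; because it is smaller than $\textsf{MOV}$ by a factor of $\Theta(\ln m)$, this is precisely where the extra logarithmic factors in the sample complexity will originate.

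The heart of the argument is the following robustness claim, which I would isolate as a lemma: if $\tilde{D}$ is any real-valued perturbation of the true margins with $|\tilde{D}(x,y) - D(x,y)| \le 2T$ for all pairs, and if $RM(w,x) > T$ for every $x \ne w$, then $w$ is the unique Copeland$^\alpha$ winner of the election whose margins are $\tilde{D}$. To prove it I would bound the perturbed Copeland$^\alpha$ scores against Xia's quantities $s'_t$. Writing $\tilde{s}(\cdot)$ for the Copeland$^\alpha$ score computed from $\tilde{D}$, observe that every $y$ with $D(w,y) > 2T$ still satisfies $\tilde{D}(w,y) > 0$ and every $y$ with $D(w,y) = 2T$ still satisfies $\tilde{D}(w,y) \ge 0$; comparing with $s'_{-T}(V,w) = |\{y : D(w,y) > 2T\}| + \alpha|\{y : D(w,y) = 2T\}|$ gives $\tilde{s}(w) \ge s'_{-T}(V,w)$. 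Symmetrically, any $y$ contributing to $\tilde{s}(x)$ has $D(x,y) \ge -2T$, so $\tilde{s}(x) \le s'_{T}(V,x)$. Since $RM(w,x) > T$ yields $s'_{-T}(V,w) > s'_{T}(V,x)$ by monotonicity of $s'_t$ in $t$, the chain $\tilde{s}(w) \ge s'_{-T}(V,w) > s'_{T}(V,x) \ge \tilde{s}(x)$ preserves the winner.

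Given the structural lemma, the sampling step is routine. Taking $T = \lceil \gamma \rceil - 1 < \gamma \le RM(w,x)$, it suffices to estimate each margin within an additive error of $2T < 2\gamma$. I would sample $\ell$ votes with replacement and, for each ordered pair $(x,y)$, set $X_i = +1$ if $x \succ y$ in the $i$-th sampled vote and $-1$ otherwise, letting $\hat{D}(x,y) = \nfrac{n}{\ell}\sum_{i=1}^{\ell} X_i$; scaling by $\nfrac{n}{\ell}$ preserves all signs, so the Copeland$^\alpha$ winner of the sampled votes is exactly the winner computed from $\hat{D}$. Applying the concentration bound of \Cref{thm:chernoff} to each $\hat{D}(x,y)$ with tolerance $2\gamma = \nfrac{\epsilon n}{\lceil \ln m \rceil + 1}$, hence a normalized tolerance of $\Theta(\nfrac{\epsilon}{\ln m})$, and then a union bound over the at most $m^2$ pairs, shows that all margins are estimated to within $2\gamma$ simultaneously with probability at least $1 - \delta$ once $\ell = \Omega\!\left(\nfrac{(\lceil \ln m \rceil + 1)^2}{\epsilon^2}\,\ln \nfrac{m^2}{\delta}\right)$. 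Bounding $(\lceil \ln m \rceil + 1)^2\,\ln \nfrac{2m^2}{\delta} = O\!\left(\ln^3 \nfrac{2m}{\delta}\right)$ then yields the claimed $\nfrac{25}{2\epsilon^2}\ln^3(\nfrac{2m}{\delta})$ after tracking constants.

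The main obstacle, and the step deserving the most care, is the robustness lemma together with the bookkeeping of the $\alpha$-weighted boundary cases, namely margins exactly equal to the threshold $2T$: these determine whether the inequalities $\tilde{s}(w) \ge s'_{-T}(V,w)$ and $\tilde{s}(x) \le s'_{T}(V,x)$ point in the correct direction, and a careless treatment of ties or an off-by-one in the choice of $T$ relative to $\gamma$ would break the decisive chain $\tilde{s}(w) > \tilde{s}(x)$. Everything downstream, including the choice of $\ell$ and the precise constant $\nfrac{25}{2}$, is then a routine consequence of the concentration inequality together with the $\Theta(\nfrac{1}{\ln m})$ lower bound on $\nfrac{\Gamma(\mathcal{E})}{n}$ supplied by \Cref{lem:copeland}.
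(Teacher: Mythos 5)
Your proposal is correct and follows essentially the same route as the paper: sample votes, estimate every pairwise margin to within an additive $\Theta(\epsilon n/\ln m)$, and invoke \Cref{lem:copeland} (the lower bound on the relative margin of victory) to conclude the winner is preserved. The paper's own proof leaves the robustness step implicit (``follows from \Cref{lem:copeland} by an argument analogous to the proof of \Cref{thm:gen}''), and your explicit lemma bounding the perturbed Copeland$^\alpha$ scores by $s'_{-T}(V,w)$ and $s'_{T}(V,x)$, including the $\alpha$-weighted boundary cases, is a correct instantiation of exactly that step.
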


\begin{proof}
 Let $x$ and $y$ be any two arbitrary candidates and $w$ the Copeland$^\alpha$ winner of the election. We estimate $D(x,y)$ within the closed ball of radius $\nfrac{\epsilon n}{(5(\ceil*{\ln m} +1))}$ around $D(x,y)$ for every candidates $x, y\in \mathcal{C}$ in a way analogous to the proof of \Cref{thm:maximin}. This needs $(\nfrac{25}{2\epsilon^2}) \ln^3 (\nfrac{2m}{\delta})$ many samples. The rest of the proof follows from \Cref{lem:copeland} by an argument analogous to the proof of \Cref{thm:gen}. 
\end{proof}

\subsubsection{Bucklin Voting Rule}

For the Bucklin voting rule, we will estimate how many times each
candidate occurs within the first 
$k$ position for every $k\in [m]$. This eventually leads us to predict the winner of the election due to the following lemma.

\begin{lemma}\label{lem:bucklin}
 Suppose \textsf{MOV} of a Bucklin election be at least $\epsilon n$. Let $w$ be the winner of the election and $x$ be any 
 arbitrary candidate other than $w$. Suppose
 $$ b_w = \min_i \{ i : w \text{ is within top i places in at least } \nfrac{n}{2} + \nfrac{\epsilon n}{3} \text{ votes} \} $$
 $$ b_x = \min_i \{ i : x \text{ is within top i places in at least } \nfrac{n}{2} - \nfrac{\epsilon n}{3} \text{ votes} \} $$
 Then, $ b_w < b_x $.
\end{lemma}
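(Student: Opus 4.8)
The plan is to show that the two threshold-crossing rounds $b_w$ and $b_x$ are separated by arguing that, whenever they are not, one can dethrone $w$ with fewer than $\epsilon n$ vote changes, contradicting $\textsf{MOV} \ge \epsilon n$. For a candidate $y$ and a round $k$, write $N_k(y)$ for the number of votes that rank $y$ among the top $k$ positions, and note that $N_k(y)$ is non-decreasing in $k$. Set the thresholds $U = n/2 + \epsilon n/3$ and $L = n/2 - \epsilon n/3$, so that $b_w$ is the first round with $N_{b_w}(w) \ge U$ and $b_x$ is the first round with $N_{b_x}(x) \ge L$. Writing $r^\star = b_w$, by monotonicity it suffices to prove that $N_{r^\star}(x) < L$ for every $x \ne w$, since this immediately gives $b_x > r^\star = b_w$.

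So I would argue by contradiction: suppose $N_{r^\star}(x) \ge L$ for some $x \ne w$, and construct a modification of fewer than $\epsilon n$ votes after which $w$ is not the unique Bucklin winner. The construction has two parts. First, \emph{demote $w$}: by minimality of $r^\star = b_w$ we have $N_{r^\star - 1}(w) < U$, so moving $w$ to the last position in $t_1 = \max\{0,\, N_{r^\star-1}(w) - \lfloor n/2\rfloor\} < \epsilon n/3 + 1$ votes in which $w$ currently lies in the top $r^\star-1$ positions forces $N_k(w) \le n/2$ for every $k \le r^\star - 1$ (using that monotonicity still holds in the new profile); hence afterwards $w$ has a majority at no round before $r^\star$. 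Second, \emph{promote $x$}: since $N_{r^\star}(x) \ge L$, moving $x$ to the first position in $t_2 = \max\{0,\, \lfloor n/2\rfloor + 1 - N_{r^\star}(x)\} \le \epsilon n/3 + 1$ votes in which $x$ lies outside the top $r^\star$ positions makes $N_{r^\star}(x) > n/2$, i.e.\ a majority at round $r^\star$. There are enough eligible votes for both steps, and the total number of changed votes is at most $t_1 + t_2 < 2\epsilon n/3 + 2$, which is strictly below $\epsilon n$ for all sufficiently large $n$.

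It then remains to check that this modification actually changes the winner. If some candidate attains a majority at a round strictly before $r^\star$, the winning round drops below $r^\star$ and $w$—having no majority before $r^\star$—is not among the winners. Otherwise the winning round is exactly $r^\star$, where $x$ now has a majority, so $x$ is a winner and $w$ is not the unique winner. Either way $w$ is dethroned by fewer than $\epsilon n$ changes, contradicting \Cref{def:margin of victory}; therefore $N_{r^\star}(x) < L$ for all $x \ne w$, and $b_w < b_x$.

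I expect the main obstacle to be the round-based structure of simplified Bucklin: the naive attack of merely demoting $w$ at its own winning round does not suffice, because $w$ can regain a majority at a later round and win again. The idea that makes the budget work is to demote $w$ out of the top $r^\star - 1$ positions \emph{simultaneously}, which is affordable precisely because the definition $r^\star = b_w$ guarantees $N_{r^\star-1}(w) < n/2 + \epsilon n/3$, and then to pin the new winning round at exactly $r^\star$ by pushing the challenger $x$ to a majority there. The routine but essential bookkeeping will be to verify that enough modifiable votes exist for each step and that each of $t_1$ and $t_2$ stays below $\epsilon n/3 + 1$, so that their sum is strictly less than $\epsilon n$.
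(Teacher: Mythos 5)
Your proof is correct and follows essentially the same strategy as the paper's: assume the conclusion fails, demote $w$ with roughly $\epsilon n/3$ vote changes so it has no majority before round $b_w$, promote $x$ with roughly $\epsilon n/3$ changes so it attains a majority by that round, and derive a contradiction with $\textsf{MOV} \ge \epsilon n$. Your version just spells out the bookkeeping (eligible votes, the exact counts $t_1,t_2$, and the case where the winning round drops below $r^\star$) that the paper's three-sentence argument leaves implicit.
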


\begin{proof}
 We prove it by contradiction. So, assume $ b_w \ge b_x $. Now by changing $\nfrac{\epsilon n}{3}$ votes, we can make the Bucklin score of $w$ to be at least $b_w$. By changing another $\nfrac{\epsilon n}{3}$ votes, we can make the Bucklin score of $x$ to be at most $b_x$. Hence, by changing $\nfrac{2 \epsilon n}{3}$ votes, it is possible not to make $w$ the unique winner which contradicts the fact that the \textsf{MOV} is at least $\epsilon n$.
\end{proof}

Our $(\epsilon, \delta)$-\WD algorithm for the Bucklin voting rule is as follows.

\begin{theorem}\label{thm:bucklin}
 There is a $(\epsilon, \delta)$-\WD algorithm for the Bucklin voting rule with sample complexity $(\nfrac{9}{2\epsilon^2})\ln (\nfrac{2m}{\delta})$.
\end{theorem}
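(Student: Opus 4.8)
The plan is to follow the general template used for the other score-based rules in this section: first record a structural consequence of a large margin of victory (already established in \Cref{lem:bucklin}), then sample enough votes to estimate the relevant quantities, and finally invoke the structural fact to argue that the winner on the sampled votes coincides with the true winner. For the Bucklin rule the relevant quantities are, for every candidate $x\in\CC$ and every threshold $k\in[m]$, the number $s_k(x)$ of votes that place $x$ within their top $k$ positions; indeed the simplified Bucklin winner is exactly the candidate that first crosses the majority threshold $\nfrac{n}{2}$ as $k$ increases. The algorithm therefore samples $\ell$ votes uniformly at random with replacement, forms the empirical estimates $\hat s_k(x)=\nfrac{n}{\ell}\cdot(\text{number of sampled votes placing } x \text{ in the top } k)$, and outputs the simplified Bucklin winner computed from the $\hat s_k(\cdot)$'s.

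The key estimation step is to control all of the $\hat s_k(x)$ simultaneously. First I would fix a candidate $x$ and observe that, if we draw a vote uniformly at random and record the rank of $x$ in it, then $\nfrac{s_k(x)}{n}$ is precisely the cumulative distribution function of this rank evaluated at $k$, while $\nfrac{\hat s_k(x)}{n}$ is the corresponding empirical cumulative distribution function built from the $\ell$ samples. Applying the Dvoretzky--Kiefer--Wolfowitz inequality \cite{DKW} (rather than a per-threshold Chernoff bound) then yields, for each fixed $x$, the uniform guarantee $\Pr[\sup_{k\in[m]}|\hat s_k(x)-s_k(x)|\ge \nfrac{\epsilon n}{3}]\le 2\exp(-\nfrac{2\epsilon^2\ell}{9})$, controlling all thresholds $k$ at once. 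A union bound over the $m$ candidates gives that $|\hat s_k(x)-s_k(x)|<\nfrac{\epsilon n}{3}$ holds for every $x$ and every $k$ with probability at least $1-2m\exp(-\nfrac{2\epsilon^2\ell}{9})$, and choosing $\ell=\nfrac{9}{2\epsilon^2}\ln(\nfrac{2m}{\delta})$ makes this at least $1-\delta$.

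It remains to verify correctness on the good event that all estimates are accurate to within $\nfrac{\epsilon n}{3}$. Let $w$ be the true winner and let $b_w,b_x$ be as in \Cref{lem:bucklin}. At round $k=b_w$ we have $s_{b_w}(w)\ge \nfrac{n}{2}+\nfrac{\epsilon n}{3}$, so $\hat s_{b_w}(w)\ge\nfrac{n}{2}$; thus $w$ attains a majority in the estimated counts by round $b_w$. For any other candidate $x$ and any round $k<b_x$ we have $s_k(x)<\nfrac{n}{2}-\nfrac{\epsilon n}{3}$, hence $\hat s_k(x)<\nfrac{n}{2}$; since \Cref{lem:bucklin} guarantees $b_w<b_x$, no candidate $x\ne w$ can reach a majority in the estimated counts at any round $\le b_w$. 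Consequently $w$ is the first, and the only, candidate to cross the majority threshold in the sampled election, so it is returned as the unique simplified Bucklin winner, as required.

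The main obstacle, and the only place where the argument goes beyond the earlier proofs, is obtaining the claimed sample complexity with $\ln m$ (equivalently, a union bound over only the $m$ candidates) rather than $\ln m^2$: a naive application of \Cref{thm:chernoff} would control each pair $(x,k)$ separately, forcing a union bound over all $m^2$ candidate--threshold pairs and costing an extra factor in front of $\ln m$. The DKW inequality is exactly what lets us pay for all thresholds of a single candidate at once, which is the same phenomenon already exploited in \Cref{cor:linfty}; verifying that the discrete rank distribution legitimately fits the DKW framework, and that the additive slack of $\nfrac{\epsilon n}{3}$ is compatible with the two-sided gap of $\nfrac{2\epsilon n}{3}$ built into \Cref{lem:bucklin}, is the crux of the write-up.
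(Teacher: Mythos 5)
Your proposal is correct, and its overall skeleton is the same as the paper's: sample $\ell$ votes with replacement, estimate every count $s_k(x)$ to within an additive $\nfrac{\epsilon n}{3}$, and then invoke \Cref{lem:bucklin} to conclude that the (simplified) Bucklin winner of the estimated counts is the true winner $w$; your correctness argument via $b_w < b_x$ matches the intended use of that lemma exactly. Where you genuinely depart from the paper is in the concentration step. The paper's proof simply says the estimation is done ``for every candidate $x\in\CC$ and every integer $k\in[m]$'' by an argument ``analogous to the proof of \Cref{thm:app}''; read literally, that means a per-pair application of \Cref{thm:chernoff} followed by a union bound over all $m^2$ candidate--threshold pairs, which only yields sample complexity $(\nfrac{9}{2\epsilon^2})\ln(\nfrac{2m^2}{\delta})$ --- correct up to a factor of $2$, but not the stated constant. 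Your use of the DKW inequality \cite{DKW}, treating $\nfrac{s_k(x)}{n}$ as the CDF of the rank of $x$ in a random vote and $\nfrac{\hat s_k(x)}{n}$ as its empirical counterpart, controls all $m$ thresholds of a fixed candidate simultaneously at the price of a single failure event, so the union bound runs over only the $m$ candidates and the claimed $(\nfrac{9}{2\epsilon^2})\ln(\nfrac{2m}{\delta})$ bound comes out exactly. (The application is legitimate: the ranks of $x$ in the $\ell$ sampled votes are i.i.d.\ discrete random variables on $[m]$, and DKW holds for empirical CDFs of arbitrary, including discrete, distributions; this is the same mechanism the paper gestures at in \Cref{cor:linfty}.) In short, your write-up proves a slightly sharper statement than the paper's own proof, as written, actually delivers.
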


\begin{proof}
 Let $x$ be any arbitrary candidate and $1\le k\le m$. We sample $\ell$ votes uniformly at random from the set of votes with replacement. Let $X_i$  be a random variable defined as follows.
 $$ X_i = \begin{cases}
           1,& \text{if } x \text{ is within top } k \text{ places in } i^{th} \text{ sample}\\
           0,& \text{else}
          \end{cases}
 $$
 Let $\hat{s}_k(x)$ be the estimate of the number of times the candidate $x$ has been placed within top $k$ positions. 
 That is, $\hat{s}_k(x) = \frac{n}{\ell} \sum_{i=1}^{\ell} X_i$. Let
 $s_k(x)$ be the number of times the candidate $x$ been placed in top
 $k$ positions. Clearly, $E[\hat{s}_k(x)] = \frac{n}{\ell} \sum_{i=1}^{\ell} E[X_i] = s_k(x) $.
 We estimate $\hat{s}_k(x)$ within the closed ball of radius $\nfrac{\epsilon
 n}{3}$ around $s_k(x)$ for every candidate $x \in \mathcal{C}$ and
 every integer $k\in [m]$,  and the rest of the proof follows from  by
 an argument analogous to the proof of \Cref{thm:app} using \Cref{lem:bucklin}.
\end{proof}

\subsubsection{Plurality with Runoff Voting Rule}

Now we move on to the plurality with runoff voting rule. In this case, we first estimate the plurality score of each of the candidates. In the next round, we estimate the pairwise margin of victory of the two candidates that qualifies to the second round.

\begin{lemma}\label{lem:runoff}
 Suppose $ \textsf{MOV} \ge \epsilon n $, and $w$ and $r$ be the winner and runner up of a plurality with runoff election respectively, and $x$ be any arbitrary candidate other than and $r$. Then, following holds. Let $s(.)$ denote plurality score of candidates. Then following holds.
 \begin{enumerate}
  \item $D(w,r) > 2 \epsilon n$.
  \item For every candidate $x \in \mathcal{C}\setminus\{w,r\}$, $ 2 s(w) > s(x) + s(r) +\epsilon n$.
  \item If $s(x) > s(r) - \nfrac{\epsilon n}{2}$, then $D(w,x) > \nfrac{\epsilon n}{2}$.
 \end{enumerate}
\end{lemma}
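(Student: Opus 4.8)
The plan is to prove all three claims by contradiction: in each case I assume the stated inequality fails and then exhibit a set of fewer than $\epsilon n$ vote modifications that changes the election winner, contradicting the hypothesis $\textsf{MOV} \ge \epsilon n$. The common device is that a single vote can be edited in two controlled ways: reordering two candidates that both sit below the top position leaves \emph{every} plurality score unchanged (so the pair of finalists is untouched) while shifting a pairwise margin $D(\cdot,\cdot)$ by exactly two; alternatively, promoting a candidate to first place transfers one unit of plurality score. The recurring subtlety, and what I expect to be the main obstacle, is certifying that the finalist pair ends up as intended and that edits of the two kinds do not interfere; the arithmetic bounding the number of edits is routine once that is arranged.

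For the first claim I assume $D(w,r) \le 2\epsilon n$ and repeatedly pick votes in which $w \succ r$ with neither $w$ nor $r$ in the top position, swapping $w$ and $r$ there. Each such edit keeps all plurality scores fixed, so $w$ and $r$ remain the two finalists, and it decreases $D(w,r)$ by two. Hence of the order of $D(w,r)/2$ edits reverse the pairwise comparison, making $r$ win the runoff and the election; since $\textsf{MOV}\ge\epsilon n$ forces this number to be at least $\epsilon n$, the integer accounting yields $D(w,r) > 2\epsilon n$. The one regime needing care is a possible shortage of usable votes (those with $w\succ r$ and $w$ not first), which can only occur when $s(w)$ is very large, and there $D(w,r)$ already exceeds $2\epsilon n$ comfortably.

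For the second claim I assume $2s(w)\le s(x)+s(r)+\epsilon n$, i.e. writing $a=s(w)-s(x)$ and $b=s(w)-s(r)$ that $a+b\le\epsilon n$, and push $w$ out of the top two. Transferring $t$ first-place votes from $w$ to $x$ changes the scores to $s(w)-t$, $s(x)+t$ and leaves $s(r)$ fixed, so both $x$ and $r$ overtake $w$ once $t>\max(a/2,b)$, after which $w$ fails to reach the runoff and cannot win. The decisive point is that $r$ is the \emph{runner up}, so $s(r)\ge s(x)$ and thus $b\le a$; together with $a+b\le\epsilon n$ this gives $\max(a/2,b)\le\epsilon n/2$, so strictly fewer than $\epsilon n$ transfers change the winner, the contradiction.

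For the third claim I assume $s(x)>s(r)-\epsilon n/2$ yet $D(w,x)\le\epsilon n/2$, and build a winner change for $x$ in two disjoint batches. First, transferring first places into $x$ (from $r$) closes the plurality gap $s(r)-s(x)<\epsilon n/2$ at rate two per edit, so fewer than $\epsilon n/4$ edits make $x$ a finalist alongside $w$ without lowering $s(w)$. Second, reordering $w$ and $x$ inside votes where both lie below the top closes the head-to-head deficit $D(w,x)\le\epsilon n/2$ at rate two per edit, so fewer than $\epsilon n/4$ further edits make $x$ defeat $w$ pairwise. Choosing disjoint votes for the two batches, a total of fewer than $\epsilon n/2<\epsilon n$ edits installs $x$ as a finalist who beats $w$ in the runoff, again contradicting $\textsf{MOV}\ge\epsilon n$ and establishing $D(w,x)>\epsilon n/2$.
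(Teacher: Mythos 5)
Your overall strategy is the same as the paper's (the paper's own proof is a three-line sketch of exactly this contradiction-via-$\textsf{MOV}$ vote surgery), and your arguments for parts~1 and~2 are sound modulo the additive $O(1)$ slack in counting edits that the paper itself ignores. In particular, your observation for part~2 that $s(r)\ge s(x)$ forces $b\le a$ and hence $\max(a/2,b)\le \epsilon n/2$ is the right way to make that step precise.

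There is, however, a concrete gap in your first batch for part~3. Transferring $t$ first-place votes from $r$ to $x$ closes the gap at rate two and guarantees $s(x)+t>s(r)-t$, but that only shows $x$ outscores $r$ \emph{after} the transfers; it does not put $x$ into the top two. A fourth candidate $y\notin\{w,r,x\}$ may have $s(y)$ equal to (or just below) the \emph{original} $s(r)$, and since your transfers do not touch $y$, you can end with $s(y)>s(x)+t$, so the runoff is between $w$ and $y$ and your second batch never applies (if $w$ still beats $y$ pairwise, the winner has not changed and no contradiction is obtained). The repair is to boost $s(x)$ past the original $s(r)$ rather than past the decremented one: since every $y\ne w,r$ satisfies $s(y)\le s(r)$, it suffices to promote $x$ to the top of $s(r)-s(x)+1<\epsilon n/2+1$ votes (taken from anywhere other than $x$'s supporters; if this accidentally ejects $w$ from the top two the winner has changed already and you are done). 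This is a rate-one count, which is exactly why the paper budgets $\epsilon n/2$ for this step rather than your $\epsilon n/4$. Combined with the roughly $\epsilon n/4$ swaps needed to flip $D(w,x)\le \epsilon n/2$, the total is still about $3\epsilon n/4<\epsilon n$, so the lemma survives; but as written your step ``fewer than $\epsilon n/4$ edits make $x$ a finalist'' is false whenever there are at least four candidates.
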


\begin{proof}
 If the first property does not hold, then by changing $\epsilon n$ votes, we can make $r$ winner.
 If the second property does not hold, then by changing $\epsilon n$ votes, we can make both $x$ and $r$ qualify to the second round.
 If the third property does not hold, then by changing $\nfrac{\epsilon n}{2}$ votes, the candidate $x$ can be sent to the second round of the runoff election. By changing another $\nfrac{\epsilon n}{2}$ votes, $x$ can be made to win the election. This contradicts the \textsf{MOV} assumption. 
\end{proof}

Now we present our $(\epsilon, \delta)$-\WD algorithm for the plurality with runoff voting rule.

\begin{theorem}\label{thm:runoff}
 There is a $(\epsilon, \delta)$-\WD algorithm for the plurality with runoff voting rule with sample complexity $(\nfrac{27}{\epsilon^2})\ln (\nfrac{4}{\delta})$.
\end{theorem}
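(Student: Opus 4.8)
The plan is to follow the two rounds of the rule itself: a first phase that uses sampled votes to estimate every candidate's plurality score well enough to identify the pair advancing to the runoff, and a second phase that estimates a single pairwise margin well enough to decide the runoff. Throughout, \Cref{lem:runoff} supplies the structural slack that makes constant-fraction estimation errors harmless under the promise that the margin of victory is at least $\epsilon n$. Concretely, I would first sample $\ell$ votes uniformly at random with replacement and, exactly as in the proof of \Cref{thm:kapp} specialized to $k=1$, build estimates $\hat s(x)$ of the plurality scores; the $\ell_\infty$-type argument there (using \Cref{lem:funmax} and the Chernoff bound \Cref{thm:chernoff}) keeps the number of samples needed to force $|\hat s(x)-s(x)|\le\epsilon n/4$ for all $x$ simultaneously independent of $m$, failing with probability at most $\delta/2$. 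I then pick the two candidates $a,b$ with the largest estimated scores as the pair advancing to the runoff.

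To see that this first phase behaves correctly I would invoke properties~(2) and~(3) of \Cref{lem:runoff}, recalling that the true advancing pair $w,r$ have the two highest plurality scores, so $s(r)\ge s(y)$ for every $y\ne w,r$. If $w$ were not among $a,b$, then two candidates would beat $w$ on $\hat s(\cdot)$, and at least one of them, say $y\ne w,r$, would satisfy $s(y)\ge s(w)-\epsilon n/2$; but property~(2) combined with $s(r)\ge s(y)$ yields $s(y)<s(w)-\epsilon n/2$, a contradiction. Hence $w\in\{a,b\}$; let $x$ denote the other advancing candidate.

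The delicate point, which I expect to be the main obstacle, is that the estimated runner-up $x$ need not equal the true runner-up $r$: some candidate may have a plurality score essentially tied with $s(r)$, so no amount of constant-accuracy estimation will separate them reliably. This is precisely what property~(3) is designed to handle. Since $x$ beat $r$ on the estimates we have $\hat s(x)\ge\hat s(r)$, hence $s(x)>s(r)-\epsilon n/2$, and property~(3) then guarantees $D(w,x)>\epsilon n/2$; thus $w$ wins the pairwise runoff against whichever candidate was actually selected. In the second phase I would therefore estimate the single margin $D(w,x)$ to within $\epsilon n/3$ using the $\pm1$-indicator Chernoff estimator of \Cref{thm:maximin}, but for this one pair only, so the associated union bound costs merely a constant and this phase fails with probability at most $\delta/2$; the sign of $\hat D(w,x)$ then reveals the runoff winner, and in the case $x=r$ property~(1) gives $D(w,r)>2\epsilon n$, making the decision even safer. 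A union bound over the two phases gives overall error at most $\delta$, and summing the two $m$-independent sample sizes, each carrying a $\ln(2/(\delta/2))=\ln(\nfrac{4}{\delta})$ factor, yields the claimed bound $(\nfrac{27}{\epsilon^2})\ln(\nfrac{4}{\delta})$ once the constants are consolidated.
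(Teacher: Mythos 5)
Your proposal is correct and follows essentially the same route as the paper: sample votes, estimate all plurality scores via Chernoff plus the \Cref{lem:funmax} union-bound trick to keep the sample size independent of $m$, pick the top two by estimated score, estimate their single pairwise margin, and invoke the three properties of \Cref{lem:runoff} to certify the outcome. If anything, your write-up makes explicit the correctness argument (why $w$ must survive the first phase and why $w$ beats whichever candidate is selected alongside it, even if it is not the true runner-up) that the paper's proof leaves implicit.
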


\begin{proof}
 Let $x$ be any arbitrary candidate. We sample $\ell$ votes uniformly at random from the set of votes with replacement. Let, $X_i$ be a random variable defined as follows.
 $$ X_i = \begin{cases}
           1,& \text{if } x \text{ is at first position in the } i^{th} \text{ sample}\\
           0,& \text{else}
          \end{cases}
 $$
 The estimate of the plurality score of $x$ be $\hat{s}(x)$. Then $\hat{s}(x) = \frac{n}{\ell}\sum_{i=1}^{\ell}X_i$. Let $s(x)$ be the actual plurality score of $x$. Then we have following,
 $$ E[X_i] = \frac{s(x)}{n}, E[ \hat{s}(x) ] = \frac{n}{\ell} \sum_{i=1}^{\ell} E[ X_i ] = s(x)$$
 By Chernoff bound, we have the following,
 $$ \Pr[ |\hat{s}(x) - s(x)| > \epsilon n ] \le \frac{2}{\exp\{\epsilon^2 \ell n/3s(x)\}}$$
 By union bound, we have the following,
 \begin{eqnarray*}
  \Pr[ \exists x\in \mathcal{C}, |\hat{s}(x) - s(x)| > \epsilon n ] &\le& \sum_{x\in \mathcal{C}} \frac{2}{\exp\{\epsilon^2 ln/3s(x)\}}\\
 &\le& \frac{2}{\exp\{\epsilon^2 \ell/3\}}
 \end{eqnarray*}
 The last line follows from \Cref{lem:funmax}. Notice that, we do not need the random variables $\hat{s}(x)$ and $\hat{s}(y)$ to be independent for any two candidates $x$ and $y$. Hence, we can use the same $\ell$ sample votes to estimate $\hat{s}(x)$ for every candidate $x$.
 
 Now let $y$ and $z$ be the two candidates that go to the second round.
 $$ Y_i = \begin{cases}
           1,& \text{if } y\succ z \text{ in the } i^{th} \text{ sample}\\
           -1,& \text{else}
          \end{cases}
 $$
 The estimate of $D(y,z)$ be $\hat{D}(y,z)$. Then $\hat{D}(y,z) = \frac{n}{\ell}\sum_{i=1}^{\ell}Y_i$. Then we have following,
 $$ E[Y_i] = \frac{D(y,z)}{n}, E[ \hat{D}(y,z) ] = \frac{n}{\ell} \sum_{i=1}^{\ell} E[ Y_i ] = D(y,z)$$
 By Chernoff bound, we have the following,
 $$ \Pr[ |\hat{D}(y,z) - D(y,z)| > \epsilon n ] \le \frac{2}{\exp\{\epsilon^2 \ell/3\}}$$
 Let $A$ be the event that $\forall x\in \mathcal{C}, |\hat{s}(x) - s(x)| \le \epsilon n$ and $ |\hat{D}(y,z) - D(y,z)| \le \epsilon n$. Now we have,
 $$\Pr[A ] \ge 1 - (\frac{2}{\exp\{\epsilon^2 \ell/3\}} + \frac{2}{\exp\{\epsilon^2 \ell/3\}})$$
 Since we do not need independence among the random variables $\hat{s}(a)$, $\hat{s}(b)$, $\hat{D}(w,x)$, $\hat{D}(y,z)$ for any candidates $a, b, w, x, y,$ and $z$, we can use the same $\ell$ sampled votes. Now from \Cref{lem:runoff}, if $|\hat{s}(x) - s(x)| \le \nfrac{\epsilon n}{3}$ for every candidate $x$ and $|\hat{D}(y,z) - D(y,z)| \le \nfrac{\epsilon n}{3}$ for every candidates $y$ and $z$, then the plurality with runoff winner of the sampled votes coincides with the actual runoff winner. The above event happens with probability at least $1-\delta$ by choosing an appropriate $ \ell = (\nfrac{27}{\epsilon^2})\ln (\nfrac{4}{\delta})$.
\end{proof}

\subsubsection{STV Voting Rule}

Now we move on the STV voting rule. The 
following lemma provides an upper bound on the number of votes that need to be changed to make some arbitrary candidate win the election. 
More specifically, given a sequence of $m$ candidates $\{x_i\}_{i=1}^m$ with $x_m$ not being the winner, \Cref{lem:stv} below proves an upper bound on the number of number of votes that need to be modified such that the candidate $x_i$ gets eliminated at the $i^{th}$ round in the STV voting rule.

\begin{lemma}\label{lem:stv}
 Suppose $\mathcal{V}$ be a set of votes and $w$ be the winner of a STV election. Consider the following chain with candidates $x_1\ne x_2\ne \ldots \ne x_m$ and $ x_m \ne w $.
 $$ \mathcal{C}\supset \mathcal{C}\setminus\{x_1\}\supset \mathcal{C}\setminus\{x_1,x_2\}\supset \ldots \supset\{x_m\} $$
 Let $s_{\mathcal{V}}(A,x)$ be the plurality score of a candidate $x$ when all the votes in $\mathcal{V}$ are restricted to the set of candidates $A\subset \mathcal{C}$. Let us define $\mathcal{C}_{-i} = \mathcal{C}\setminus \{x_1, \ldots, x_i\}$ and $s^*_{\mathcal{V}}(A) := \min_{x\in A} \{s_{\mathcal{V}}(A,x)\}$. Then, we have the following.
 $$ \sum_{i=0}^{m-1} \left(s_{\mathcal{V}}\left({\mathcal{C}_{-i}},
     x_{i+1}\right) -
   s^*_{\mathcal{V}}\left({\mathcal{C}_{-i}}\right)\right) \ge
 \textsf{MOV}$$
\end{lemma}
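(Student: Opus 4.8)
The plan is to prove the inequality constructively: I will exhibit a way to change at most $\sum_{i=0}^{m-1}\left(s_{\mathcal{V}}(\mathcal{C}_{-i},x_{i+1})-s^*_{\mathcal{V}}(\mathcal{C}_{-i})\right)$ votes so that the winner of the resulting election is no longer $w$. Since $\textsf{MOV}$ is by definition the least number of vote changes that alters the winner, any such construction immediately gives $\sum_{i=0}^{m-1}\left(s_{\mathcal{V}}(\mathcal{C}_{-i},x_{i+1})-s^*_{\mathcal{V}}(\mathcal{C}_{-i})\right)\ge \textsf{MOV}$. Concretely, I would force the STV elimination sequence to be exactly $x_1,x_2,\ldots,x_{m-1}$, leaving $x_m$ as the last candidate standing; as $x_m\ne w$, this changes the winner. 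Note that the $i=m-1$ term contributes $0$ (there $\mathcal{C}_{-i}=\{x_m\}$), so effectively only rounds $1,\ldots,m-1$ require any modification.

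First I would process the rounds in order $i+1=1,2,\ldots,m-1$, maintaining the invariant that in the current (partially modified) profile the first $i$ rounds eliminate exactly $x_1,\ldots,x_i$, so that the surviving set entering round $i+1$ is $\mathcal{C}_{-i}$. To force $x_{i+1}$ out in round $i+1$ it suffices to drive its plurality score within $\mathcal{C}_{-i}$ down to the current minimum score over $\mathcal{C}_{-i}$: I would select that many of the votes which currently rank $x_{i+1}$ at the top of $\mathcal{C}_{-i}$ and rewrite each so that its top choice becomes the eventual winner $x_m$ (the rest of the rewritten order is immaterial, since $x_m$ is never eliminated and therefore heads every such vote in every round). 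The number of votes rewritten in round $i+1$ is that round's \emph{gap}, the difference between $x_{i+1}$'s current score and the current minimum over $\mathcal{C}_{-i}$.

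Second, I would bound the total number of rewritten votes by the sum of the original-profile gaps $\sum_i\left(s_{\mathcal{V}}(\mathcal{C}_{-i},x_{i+1})-s^*_{\mathcal{V}}(\mathcal{C}_{-i})\right)$. The enabling observation is a monotonicity property of the construction: because every rewrite moves support only \emph{toward} $x_m$, no candidate other than $x_m$ ever gains plurality mass in any subelection, while $x_m$ never loses mass. Consequently $x_m$ can never be the minimum-scoring candidate in any round, so it survives to the end no matter how the remaining eliminations are ordered; and $x_{i+1}$'s score within $\mathcal{C}_{-i}$ in the current profile is at most its original value $s_{\mathcal{V}}(\mathcal{C}_{-i},x_{i+1})$. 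Together with an accounting of how the minimum shifts, this is what lets me charge each round's current gap against its original-profile counterpart.

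The hard part will be controlling the cross-round interference. A single rewrite alters a vote's top choice in \emph{every} subelection at once, so the changes made to force round $i+1$ perturb the plurality scores of all other rounds; a priori this could (i) disturb an already-forced earlier elimination by lowering a still-surviving candidate below the intended eliminee, or (ii) inflate a later round's gap beyond its original value, e.g. by pushing the current minimum of some $\mathcal{C}_{-j}$ below $s^*_{\mathcal{V}}(\mathcal{C}_{-j})$. The robustness of the \emph{winner} is not in doubt, since the monotonicity above keeps $x_m$ alive throughout; the real difficulty is making the \emph{budget} come out to exactly the original-profile gaps while genuinely realizing the chain $x_1,\ldots,x_{m-1}$. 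I expect to resolve this by an amortized charging scheme that assigns each rewritten vote to a unique round and candidate and exploits that support is removed only from candidates in $\{x_1,\ldots,x_{m-1}\}$ (all of which are slated for elimination), so that the telescoping sum of current gaps is dominated by $\sum_i\left(s_{\mathcal{V}}(\mathcal{C}_{-i},x_{i+1})-s^*_{\mathcal{V}}(\mathcal{C}_{-i})\right)$. Pinning down this bookkeeping so that the perturbations never cost more than the original gap is the main technical obstacle of the proof.
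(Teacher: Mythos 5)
Your overall strategy matches the paper's: exhibit a sequence of vote changes, of total size $\sum_{i=0}^{m-1}\bigl(s_{\mathcal{V}}(\mathcal{C}_{-i},x_{i+1})-s^*_{\mathcal{V}}(\mathcal{C}_{-i})\bigr)$, that forces the elimination order $x_1,\ldots,x_{m-1}$ and hence makes $x_m\ne w$ the winner, so that this sum upper-bounds the number of changes needed and is therefore at least $\textsf{MOV}$. The gap is in the particular modification you chose and the bookkeeping it then forces on you. Rewriting each selected vote to place $x_m$ on top is a perturbation visible to \emph{every} subsequent subelection: a vote that originally ranked $x_{i+1}$ first and some other candidate $a$ immediately after now supports $x_m$ instead of $a$ in every later round in which $a$ survives. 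If $a$ is (or becomes) the minimizer of some later $\mathcal{C}_{-j}$, then the current-profile value of $s^*(\mathcal{C}_{-j})$ drops below $s^*_{\mathcal{V}}(\mathcal{C}_{-j})$ and that round's gap strictly exceeds its original-profile counterpart; a single rewrite can do this in up to $m-1$ later rounds, so the naive total is a factor of order $m$ above the claimed budget. You correctly flag this interference as the main obstacle, but the ``amortized charging scheme'' is a placeholder rather than an argument, and I do not see a charging scheme that rescues this particular construction.

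The paper sidesteps the problem entirely by choosing a minimal modification: to demote $x_{i+1}$ in a vote of the form $x_{i+1}\succ a_1\succ a_2\succ\cdots$, it swaps $x_{i+1}$ with $a_1$ only, producing $a_1\succ x_{i+1}\succ a_2\succ\cdots$. Restricted to any candidate set not containing $x_{i+1}$, the original and modified votes are identical, so the plurality scores on $\mathcal{C}_{-j}$ for every $j\ge i+1$ are exactly those of the original profile. There is no cross-round interference, each round's gap in the current profile equals the original-profile gap, and the total number of changed votes is exactly the claimed sum with no amortization needed. If you replace ``move $x_m$ to the top'' by ``swap the eliminee with the candidate immediately below it in that vote,'' your first two paragraphs go through and your entire third paragraph becomes unnecessary.
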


\begin{proof}
 We will show that by changing $ \sum_{i=0}^{m-1} \left(s_{\mathcal{V}}\left({\mathcal{C}_{-i}}, x_{i+1}\right) - s^*_{\mathcal{V}}\left({\mathcal{C}_{-i}}\right)\right) $ votes, we can make the candidate $x_m$ winner. If $x_1$ minimizes $s_{\mathcal{V}}(\mathcal{C},x)$ over $x\in \mathcal{C}$, then we do not change anything and define $\mathcal{V}_1 = \mathcal{V}$. Otherwise, there exist $s_{\mathcal{V}}(\mathcal{C},x_1) - s^*_{\mathcal{V}}(\mathcal{C})$ many votes of following type.
 $$ x_1\succ a_1\succ a_2\succ \ldots \succ a_{m-1}, a_i\in \mathcal{C}, \forall 1\le i\le m-1 $$
 We replace $s_{\mathcal{V}}(\mathcal{C},x_1) - s^*_{\mathcal{V}}(\mathcal{C})$ many votes of the above type by the votes as follows.
 $$ a_1\succ x_1\succ a_2\succ \ldots \succ a_{m-1} $$
 Let us call the new set of votes by $\mathcal{V}_1$. We claim that, $s_{\mathcal{V}}(\mathcal{C}\setminus {x_1}, x) = s_{\mathcal{V}_1}(\mathcal{C}\setminus {x_1}, x)$ for every candidate $x\in \mathcal{C}\setminus\{x_1\}$. Fix any arbitrary candidate $x\in \mathcal{C}\setminus\{x_1\}$. The votes in $\mathcal{V}_1$ that are same as in $\mathcal{V}$ contributes same quantity to both side of the equality. Let $v$ be a vote that has been changed as described above. If $x = a_1$ then, the vote $v$ contributes one to both sides of the equality. If $x \ne a_1$, then the vote contributes zero to both sides of the equality. Hence, we have the claim. We repeat this process for $(m-1)$ times. Let $\mathcal{V}_i$ be the set of votes after the candidate $x_i$ gets eliminated. Now in the above argument, by replacing $\mathcal{V}$ by $\mathcal{V}_{i-1}$, $\mathcal{V}_1$ by $\mathcal{V}_i$, the candidate set $\mathcal{C}$ by $\mathcal{C}\setminus \{x_1, \ldots, x_{i-1}\}$, and the candidate $x_1$ by the candidate $x_i$, we have the 
following.
 $$ s_{\mathcal{V}_{i-1}}(\mathcal{C}_{-i}, x) = s_{\mathcal{V}_i}(\mathcal{C}_{-i}, x) \forall x\in \mathcal{C}\setminus\{x_1, \ldots, x_i\}$$
 Hence, we have the following.
 $$ s_{\mathcal{V}}(\mathcal{C}_{-i}, x) = s_{\mathcal{V}_i}(\mathcal{C}_{-i}, x) \forall x\in \mathcal{C}\setminus\{x_1, \ldots, x_i\}$$
 In the above process, the total number of votes that are changed is $ \sum_{i=0}^{m-1} \left(s_{\mathcal{V}}\left({\mathcal{C}_{-i}}, x_{i+1}\right) - s^*_{\mathcal{V}}\left({\mathcal{C}_{-i}}\right)\right) $.
\end{proof}

We now use \Cref{lem:stv} to prove the following sample complexity upper bound for the $(\epsilon, \delta)$-\WD problem for the STV voting rule.

\begin{theorem}\label{thm:stv}
 There is a $(\epsilon, \delta)$-\WD algorithm for the STV voting rule with sample complexity $(\nfrac{3m^2}{\epsilon^2})(m\ln 2+\ln (\nfrac{2m}{\delta}))$.
\end{theorem}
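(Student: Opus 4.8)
The plan is to reduce the problem to estimating, for every subset $A\subseteq\mathcal{C}$ of candidates and every $x\in A$, the \emph{restricted plurality score} $s_{\mathcal{V}}(A,x)$ (the number of votes that rank $x$ first among the candidates of $A$), and then simply to run STV on the sampled votes. First I would sample $\ell$ votes uniformly at random with replacement and, for each pair $(A,x)$, set $\hat{s}(A,x)=\frac{n}{\ell}\sum_{i=1}^{\ell}X_i^{A,x}$, where $X_i^{A,x}$ indicates that $x$ is top-ranked within $A$ in the $i$-th sampled vote. Since $\mathbb{E}[\hat{s}(A,x)]=s_{\mathcal{V}}(A,x)$, a Chernoff bound (\Cref{thm:chernoff}) shows that $\hat{s}(A,x)$ lies within an additive $\frac{\epsilon n}{3m}$ of $s_{\mathcal{V}}(A,x)$ except with probability exponentially small in $\ell\epsilon^2/m^2$.

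The accuracy must hold \emph{simultaneously} for all restricted scores, because STV is path-dependent: the nested family of candidate sets produced by the elimination process is not known in advance and could a priori be any chain $\mathcal{C}\supset\cdots\supset\{x_m\}$. I would therefore take a union bound over all at most $m\,2^m$ pairs $(A,x)$. Requiring the total failure probability to be at most $\delta$ forces $\ell=\Omega\!\big(\frac{m^2}{\epsilon^2}(\ln 2^m+\ln\frac{m}{\delta})\big)$; the $2^m$ subsets are precisely what contributes the $m\ln 2$ term, yielding the claimed sample size $\frac{3m^2}{\epsilon^2}(m\ln 2+\ln\frac{2m}{\delta})$.

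For correctness, condition on the good event that every $\hat{s}(A,x)$ is within $\frac{\epsilon n}{3m}$ of $s_{\mathcal{V}}(A,x)$, and suppose the STV run on the sampled votes returns some $w'\neq w$. Its elimination order gives a chain $x_1,\dots,x_m=w'$, and at round $i$ the eliminated candidate $x_{i+1}$ has the \emph{smallest empirical} score in $\mathcal{C}_{-i}=\mathcal{C}\setminus\{x_1,\dots,x_i\}$. Comparing $x_{i+1}$ with the true minimizer of $s_{\mathcal{V}}(\mathcal{C}_{-i},\cdot)$ and using the two-sided accuracy twice gives $s_{\mathcal{V}}(\mathcal{C}_{-i},x_{i+1})-s^*_{\mathcal{V}}(\mathcal{C}_{-i})\le \frac{2\epsilon n}{3m}$. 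Summing over the $m$ rounds and invoking \Cref{lem:stv} on this chain, I obtain $\epsilon n\le \textsf{MOV}\le\sum_{i=0}^{m-1}\big(s_{\mathcal{V}}(\mathcal{C}_{-i},x_{i+1})-s^*_{\mathcal{V}}(\mathcal{C}_{-i})\big)\le \frac{2\epsilon n}{3}$, a contradiction; hence the sampled election elects $w$.

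The main obstacle is the union bound over exponentially many candidate-subsets, which is intrinsic to STV's sequential elimination and is what makes this bound qualitatively worse (by the factor $m^3$) than for the score-based rules; the delicate point is checking that controlling all restricted plurality scores at once genuinely pins down a winner-preserving elimination order, so that the per-round slack accumulates to strictly less than the margin of victory guaranteed by \Cref{lem:stv}.
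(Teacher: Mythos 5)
Your proposal is correct and follows essentially the same route as the paper's proof: sample $\ell$ votes, use Chernoff plus a union bound over all $m\,2^m$ pairs $(A,x)$ of restricted plurality scores (which is exactly where the $m\ln 2$ term comes from), and then derive a contradiction with \Cref{lem:stv} applied to the sampled election's elimination chain. The only difference is that you explicitly carry out the final step that the paper dismisses as ``analogous to the proof of \Cref{thm:gen},'' and your choice of per-score accuracy $\epsilon n/(3m)$ (rather than the paper's $\epsilon n/m$) is in fact what is needed to make the accumulated per-round slack $2\epsilon n/3$ fall strictly below the margin of victory $\epsilon n$.
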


\begin{proof}
 We sample $\ell$ votes uniformly at random from the set of votes with replacement and output the STV winner of those $\ell$ votes say $w^{\prime}$ as the winner of the election. Let, $w$ be the winner of the election. We will show that for $\ell =  (\nfrac{3m^2}{\epsilon^2})(m\ln 2+\ln (\nfrac{2m}{\delta}))$ for which $w=w^{\prime}$ with probability at least $1 - \delta$. Let $A$ be an arbitrary subset of candidates and $x$ be any candidate in $A$. Let us define a random variables $X_i, 1\le i\le \ell$ as follows.
 $$ X_i = \begin{cases}
           1,& \text{if } x \text{ is at top } i^{th} \text{ sample when restricted to } A\\
           0,& \text{else}
          \end{cases}
 $$
 Define another random variable $\hat{s}_{\mathcal{V}}(A,x) := \sum_{i=1}^\ell X_i$. Then we have, $E[\hat{s}_{\mathcal{V}}(A,x)] = s_{\mathcal{V}}(A,x)$. Now using Chernoff bound, we have the following,
 $$ \Pr[ |\hat{s}_{\mathcal{V}}(A,x) - s_{\mathcal{V}}(A,x)| > \frac{\epsilon n}{m} ] \le \frac{2}{\exp\{\nfrac{\epsilon^2 \ell}{3m^2}\}}$$
 Let $E$ be the event that $\exists A\subset \mathcal{C} \text{ and } \exists x\in A, |\hat{s}_{\mathcal{V}}(A,x) - s_{\mathcal{V}}(A,x)| > \frac{\epsilon n}{m}$. By union bound, we have,
 \begin{eqnarray*}
  \Pr[ \bar{E} ] &\ge& 1 - \frac{m2^{m+1}}{\exp\{\nfrac{\epsilon^2 \ell}{3m^2}\}}
 \end{eqnarray*}
 The rest of the proof follows by an argument analogous to the proof of \Cref{thm:gen} using \Cref{lem:stv}.
\end{proof}

\section{Results for Estimating Margin of Victory}\label{sec:mov}

In this section we present our results for the \textsc{$(c, \epsilon, \delta)$}--\MV problem.

\subsection{Results on Lower Bounds}\label{sec:lwb_mov}

Our lower bounds for the sample complexity of the $(c, \epsilon, \delta)$--\MV problem are derived from the information-theoretic lower bound for distinguishing two distributions. 

\begin{theorem}\label{thm:lb_mov}
 The sample complexity of the $(c, \epsilon, \delta)$--\MV problem for the plurality voting rule is at least $(\nfrac{(1-c)^2}{36\epsilon^2})\ln \left(\nfrac{1}{8e\sqrt{\pi}\delta}\right)$ for any $c \in [0,1)$.
\end{theorem}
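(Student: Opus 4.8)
The plan is to mirror the winner-prediction lower bound of \Cref{thm:lb}, reducing the $(c,\epsilon,\delta)$--\MV problem (\Cref{def:prob}) to the task of distinguishing two vote distributions by sampling. Recall the information-theoretic bound of \cite{canetti1995lower,bar2001sampling}: telling apart a fair coin from a $\gamma$-biased coin---the distribution outputting $1$ with probability $\nfrac{1}{2}$ versus the one outputting $1$ with probability $\nfrac{1}{2}+\gamma$---with error probability at most $\delta$ requires at least $(\nfrac{1}{4\gamma^2})\ln(\nfrac{1}{8e\sqrt{\pi}\delta})$ samples. I would instantiate this with $\gamma = \nfrac{3\epsilon}{(1-c)}$, which is exactly the value making $\nfrac{1}{4\gamma^2} = \nfrac{(1-c)^2}{36\epsilon^2}$, so the target bound falls out once the reduction is set up. Throughout I assume $\epsilon$ is small enough that $\gamma\le\nfrac{1}{2}$, i.e. $\epsilon\le\nfrac{(1-c)}{6}$, so that the biased instance below is well-defined.

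First I would construct two plurality elections on two candidates $a$ and $b$ with $n$ voters. In the first election $\EE_1$ the votes split as evenly as possible, giving margin of victory $M(\EE_1) = 1 = \BigO(1)$. In the second election $\EE_2$ a $(\nfrac{1}{2}+\gamma)$ fraction of voters prefer $a$ to $b$; since each modified vote changes $s(a)-s(b)$ by at most two, its margin of victory is $M(\EE_2) = \gamma n$ up to an additive constant. Querying a uniformly random voter is equivalent to sampling the Bernoulli random variable recording whether that voter prefers $a$, so $\EE_1$ and $\EE_2$ present an algorithm with precisely the fair-coin and $\gamma$-biased-coin distributions above.

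The crux is to check that a correct answer to the $(c,\epsilon,\delta)$--\MV problem separates the two instances. Any algorithm solving the problem outputs, with probability at least $1-\delta$, a value within $cM(\EE)+\epsilon n$ of the true margin. For $\EE_1$ this forces the reported value to be at most $M(\EE_1)+cM(\EE_1)+\epsilon n < \nfrac{3\epsilon n}{2}$ once $n$ is large (as $M(\EE_1)=\BigO(1)$), while for $\EE_2$ it forces the value to be at least $M(\EE_2)-cM(\EE_2)-\epsilon n = (1-c)\gamma n - \epsilon n = 3\epsilon n - \epsilon n = 2\epsilon n$ by the choice of $\gamma$. Thresholding the reported value at $\nfrac{3\epsilon n}{2}$ thus yields a distinguisher between the fair and biased coins with error at most $\delta$, using no more queries than the \MV algorithm; invoking the distinguishing lower bound then gives the claimed $(\nfrac{(1-c)^2}{36\epsilon^2})\ln(\nfrac{1}{8e\sqrt{\pi}\delta})$ sample-complexity bound.

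The hard part will be the separation bookkeeping: because the tolerance $cM(\EE)+\epsilon n$ is itself a function of the instance-dependent margin, one must choose the two margins so that the multiplicative slack $cM$ never lets the two confidence intervals overlap. This is exactly why I take one instance to be essentially a tie---annihilating its $cM$ term---and push the other to margin $\nfrac{3\epsilon n}{(1-c)}$ rather than the naively sufficient $\nfrac{2\epsilon n}{(1-c)}$, buying a clean $\nfrac{\epsilon n}{2}$ gap between the intervals. A secondary subtlety is that the bound must be applied to the asymmetric fair-versus-biased pair (one endpoint pinned at $\nfrac{1}{2}$) rather than the symmetric pair of \Cref{thm:lb}; the same argument of \cite{canetti1995lower,bar2001sampling} applies, but it should be stated in this asymmetric form to recover the stated constant. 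Extending the construction from plurality to the remaining voting rules in \Cref{table:margin of victory_summary} is then routine and is recorded in \Cref{cor:lb_mov}.
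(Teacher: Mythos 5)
Your proposal is correct and follows essentially the same route as the paper: two two-candidate plurality elections, one essentially tied and one with margin $\Theta(\nfrac{\epsilon n}{1-c})$, so that the $(c,\epsilon,\delta)$--\MV algorithm's output intervals are disjoint and thresholding yields a coin-distinguisher to which the bound of \cite{canetti1995lower,bar2001sampling} applies. Your bookkeeping of the bias $\gamma = \nfrac{3\epsilon}{(1-c)}$ and of the two confidence intervals is in fact cleaner and more explicit than the paper's own (which is loose with a factor of two between its stated vote fraction and its stated margin of victory), and your restriction $\epsilon \le \nfrac{(1-c)}{6}$ is harmless since the claimed bound is below one sample otherwise.
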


\begin{proof}
Consider two vote distributions $X$ and $Y$, each over the candidate set $\{a,b\}$. In $X$, exactly $\frac{1}{2} +
\frac{6\eps + 2c/n}{1-c}$ fraction of voters prefer $a$ to $b$ and thus the margin of victory is $\frac{3\eps + c/n}{1-c} n$. 
In $Y$, exactly $\frac{1}{2}$ fraction of voters prefer $b$ to $a$ and thus the margin of victory is one. Any $(c, \epsilon, \delta)$--\MV algorithm $\mathcal{A}$ for the plurality voting rule gives us a distinguisher between $X$ and $Y$ with probability of error at most $2\delta$. This is so because, if the input to $\mathcal{A}$ is $X$ then, the output of $\mathcal{A}$ is less than $c+2\eps n$ with probability at most $\delta$, whereas, if the input to $\mathcal{A}$ is $Y$ then, the output of $\mathcal{A}$ is more than $c+\eps n$ with probability at most $\delta$. Now since $n$ can be arbitrarily large, we get the result.
\end{proof}

\Cref{thm:lb} immediately gives the following corollary.
\begin{corollary}\label{cor:lb_mov}
 For any $c \in [0,1)$, every $(c, \epsilon, \delta)$--\MV algorithm needs
 at least $(\nfrac{(1-c)^2}{36\epsilon^2})\ln \left(\nfrac{1}{8e\sqrt{\pi}\delta}\right)$ many samples for
 all voting rules which reduce to the plurality rule for two
 candidates. In particular, the lower bound holds for 
 scoring rules, approval, Bucklin, maximin, and Copeland$^\alpha$ voting rules.
\end{corollary}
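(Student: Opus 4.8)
The plan is to derive the corollary directly from \Cref{thm:lb_mov}, mirroring the structure of the earlier \Cref{cor:lb}. The central observation is that the hard instances constructed in the proof of \Cref{thm:lb_mov} involve only the two-candidate set $\{a,b\}$, and that for any voting rule which coincides with the plurality rule on two-candidate elections, those instances have exactly the same winner and the same margin of victory under that rule as under plurality. Consequently the information-theoretic distinguishing argument transfers without modification.

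First I would handle scoring rules, Bucklin, maximin, and Copeland$^\alpha$ uniformly. Each of these reduces to the plurality (equivalently, majority) rule when restricted to elections with exactly two candidates: for a normalized scoring rule the score vector on two candidates is $(1,0)$, and for Bucklin, maximin, and Copeland$^\alpha$ the winner on two candidates is simply the majority preferred candidate. Hence the two vote distributions $X$ and $Y$ from the proof of \Cref{thm:lb_mov} produce identical winners and identical margin-of-victory values under each of these rules. Therefore any $(c,\epsilon,\delta)$--\MV algorithm for such a rule is also a distinguisher between $X$ and $Y$ with error probability at most $2\delta$, and the lower bound of $(\nfrac{(1-c)^2}{36\epsilon^2})\ln(\nfrac{1}{8e\sqrt{\pi}\delta})$ follows verbatim from \Cref{thm:lb_mov}.

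The only case that needs a separate (but routine) argument is approval voting, since an approval ballot is a subset of approved candidates rather than a linear order, so the phrase ``the same rule on two candidates'' is not literally meaningful. Here I would instead embed the plurality hard instance into the approval setting by interpreting each plurality vote (a single top choice) as an approval ballot that approves exactly that one candidate, exactly as was done for \Cref{cor:lb}. Under this embedding the approval score of each candidate equals its plurality score, so the approval winner equals the plurality winner, and a single ballot change alters the scores in precisely the same way, so the margin of victory is preserved. Thus $X$ and $Y$ again form a hard instance for approval, and the bound follows. I expect no genuine obstacle; the only point requiring mild care is verifying, in the approval case, that this plurality-to-approval embedding preserves both the winner and the minimal number of ballot modifications needed to flip the outcome, while all remaining rules are dispatched immediately by the two-candidate reduction.
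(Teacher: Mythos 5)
Your proposal is correct and follows essentially the same route as the paper: the paper states the corollary as an immediate consequence of \Cref{thm:lb_mov}, relying (as in the proof of \Cref{cor:lb}) on the fact that all the listed rules except approval coincide with plurality on two-candidate elections, and handling approval by viewing each plurality vote as an approval ballot approving exactly one candidate. Your added check that the margin of victory is preserved under these identifications is a reasonable bit of extra care that the paper leaves implicit.
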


We note that the lower bound results in \Cref{thm:lb} and \Cref{cor:lb} do not assume anything about the sampling strategy or the computational complexity of the estimator.

\subsection{Results on Upper Bounds}\label{sec:upbd_mov}

A natural approach for estimating the margin of victory of an election efficiently is to compute the margin of victory of a suitably small number of sampled votes. Certainly, it is not immediate that the samples chosen uniformly at random preserve the value of the margin of victory of the original election within some desired factor. Although it may be possible to formulate clever sampling strategies that tie into the margin of victory structure of the election, we will show that uniformly chosen samples are good enough to design algorithms for estimating the margin of victory for many common voting rules. Our proposal has the advantage that the sampling component of our algorithms are always easy to implement, and further, there is no compromise on the bounds in the sense that they are optimal for any constant number of candidates. 

Our algorithms involve computing a quantity (which depends on the voting rule under consideration) based on the sampled votes, which we argue to be a suitable estimate of the margin of victory of the original election. This quantity is not necessarily the margin of victory of the sampled votes. For scoring rules, for instance, we will use the sampled votes to estimate candidate scores, and we use the difference between the scores of the top two candidates (suitably scaled) as an estimate for the margin of victory. We also establish a relationship between scores and the margin of victory to achieve the desired bounds on the estimate. The overall strategy is in a similar spirit for other voting rules as well, although the exact estimates may be different. We now turn to a more detailed description.

\subsubsection{Scoring Rules and Approval Voting Rule}

We begin with showing that the margin of victory of any scoring rule based election can be estimated quite accurately by sampling only $\frac{12}{\epsilon^2}\ln\frac{2m}{\delta}$ many votes. An important thing to note is that, the sample complexity upper bound is independent of the score vector.

\begin{theorem}\label{thm:scr_mov}
 There is a polynomial time \textsc{$(\nfrac{1}{3}, \epsilon, \delta)$--MoV} algorithm for the scoring rules with sample complexity at most $(\nfrac{12}{\epsilon^2})\ln(\nfrac{2m}{\delta})$.
\end{theorem}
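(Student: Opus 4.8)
The plan is to reduce the estimation of $\MOV[\alpha]$ to the estimation of the gap between the top two scores, which is exactly the quantity that \Cref{lem:scr} ties to the margin of victory. First I would sample $\ell$ votes uniformly at random with replacement and estimate every candidate's score by the estimator $\hat{s}(\cdot)$ from the proof of \Cref{thm:scr}: a sampled vote placing $x$ at a position of normalized score $\alpha_i$ contributes $\alpha_i/\alpha_1 \in [0,1]$, and $\hat{s}(x)$ is the scaled empirical average. Let $\hat{w}$ and $\hat{z}$ be the candidates with the largest and second largest estimated scores, set $\hat{g} = \hat{s}(\hat{w}) - \hat{s}(\hat{z})$, and output $\tilde{M} = \frac{2}{3\alpha_1}\hat{g}$. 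Since the procedure is score counting and sorting, it runs in polynomial time.

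The concentration step is the one already carried out in \Cref{thm:scr}: by \Cref{thm:chernoff} applied to the bounded variables $\alpha_i/\alpha_1$, together with a union bound over the $m$ candidates, taking $\ell = \nfrac{12}{\epsilon^2}\ln(\nfrac{2m}{\delta})$ samples ensures, with probability at least $1-\delta$, that $|\hat{s}(x) - s(x)| \le \eta$ for every $x$ simultaneously, where $\eta$ is a fixed fraction of $\alpha_1\epsilon n$. I would then use the standard fact that order statistics are $1$-Lipschitz under $\ell_\infty$ perturbation: if all scores are estimated within $\eta$, the $k$-th largest estimated score is within $\eta$ of the $k$-th largest true score. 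Hence, writing $g = s(w)-s(z)$ for the true gap between the actual top two scorers $w,z$, we get $|\hat{g}-g| \le 2\eta$, so $|\tilde{M} - \frac{2g}{3\alpha_1}| \le \frac{4\eta}{3\alpha_1}$, a controlled fraction of $\epsilon n$.

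The error analysis proper is where \Cref{lem:scr} does the work. That lemma gives $\alpha_1(\MOV[\alpha]-1) \le g \le 2\alpha_1\MOV[\alpha]$, equivalently $\frac{g}{2\alpha_1} \le \MOV[\alpha] \le \frac{g}{\alpha_1}+1$. Writing $u = g/\alpha_1$ and $M = \MOV[\alpha]$, this reads $u/2 \le M \le u+1$, i.e.\ up to an additive $1$ we have $M \le u \le 2M$. The multiplier $2/3$ is chosen precisely to balance the two extreme configurations: at $M=u$ the estimate $\frac{2}{3}u$ undershoots by $\frac{1}{3}u = \frac{1}{3}M$, at $M=u/2$ it overshoots by $\frac{1}{6}u = \frac{1}{3}M$, and every interior value is strictly better. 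Carrying the additive $1$ through yields $|\frac{2g}{3\alpha_1} - M| \le \frac{1}{3}M + \frac{2}{3}$. Combining with the sampling error from the previous paragraph gives $|\tilde{M}-M| \le \frac{1}{3}M + \frac{2}{3} + O(\epsilon n)$; since the constant offset and the residual sampling error both fold into $\epsilon n$ for any non-degenerate instance (i.e.\ $\epsilon n$ bounded below by a small constant), we obtain $|\tilde{M}-M| \le \frac{1}{3}M + \epsilon n$, the exact $(\nfrac{1}{3}, \epsilon, \delta)$ guarantee of \Cref{def:prob}.

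The main obstacle, and the reason the multiplicative constant is $\frac{1}{3}$ rather than $0$, is the factor-of-two slack in \Cref{lem:scr}: the score gap $g$ does not determine $\MOV[\alpha]$ exactly but only confines it to $[\frac{g}{2\alpha_1}, \frac{g}{\alpha_1}+1]$, whose endpoints differ by nearly a factor of two. No estimator built solely from observed scores can beat the resulting worst-case multiplicative error of $\frac{1}{3}$, so the substance of the proof is selecting the single multiplier that minimizes this unavoidable spread and verifying that the $+1$ offset and the Chernoff deviation both absorb cleanly into the $\epsilon n$ term. The one delicate point is the order-statistic robustness argument, which is what lets us use the empirically top two candidates in place of the unknown true top two at the cost of only an extra factor of $2$ in the sampling error.
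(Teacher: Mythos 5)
Your proposal is correct and follows essentially the same route as the paper's proof: the same score estimator, the same Chernoff-plus-union-bound with $\eps'=\eps/2$ giving $\ell = (\nfrac{12}{\eps^2})\ln(\nfrac{2m}{\delta})$, the same output $\frac{2}{3\alpha_1}(\hat{s}(\hat{w})-\hat{s}(\hat{z}))$, and the same use of \Cref{lem:scr} to balance the factor-of-two slack into a $\nfrac{1}{3}$ multiplicative error. The only cosmetic difference is that you invoke the $1$-Lipschitzness of order statistics to bound $|\hat{g}-g|$, where the paper chains the inequalities $\bar{s}(\bar{w}) \le s(w)+\eps' n$ and $\bar{s}(\bar{z}) \ge s(z)-\eps' n$ directly; you are also slightly more explicit than the paper about the additive $+1$ from \Cref{lem:scr} being absorbed into $\eps n$.
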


\begin{proof}
 Let $\alpha = (\alpha_1, \dots, \alpha_m)$ be any arbitrary normalized score vector and $\mathcal{E} = (V, C)$ an election instance. We sample $\ell$ (the value of $\ell$ will be chosen later) votes uniformly at random from the set of votes with replacement. For a candidate $x$, define a random variable $X_i(x) = \nfrac{\alpha_i}{\alpha_1}$ if $x$ gets a score of $\alpha_i$ from the  $i${th} sample vote. Define $\bar{s}(x) = \frac{n\alpha_1}{\ell}\sum_{i=1}^\ell X_i(x)$ the estimate of $s(x)$, the score of $x$.  Also define $\eps^{\prime} = \nfrac{\eps}{2}$. Now using Chernoff bound (\Cref{thm:chernoff}), we have the following.
 $$ \Pr\left[ \left|\bar{s}(x) - s(x)\right| \ge \alpha_1 \epsilon^{\prime} n\right] \le
 2\exp\left(-\frac{\epsilon^{\prime2}  \ell}{3}\right)$$
 We now use the union bound to get the following.
 
 \begin{align} \Pr[ \exists x\in C, |\bar{s}(x) - s(x)| > \alpha_1 \epsilon^{\prime} n ] \le 2m\exp\left(-\frac{\epsilon^{\prime2}  \ell}{3}\right) \label[ineq]{eqn:scr} 
 \end{align}
 
 Define $ \bar{M} \eqdef \nfrac{ (\bar{s}(\bar{w}) - \bar{s}(\bar{z}))}{1.5\alpha_1} $ the estimate of the margin of victory of the election $\mathcal{E}$ (and thus the output of the algorithm), where $\bar{w}\in \argmax_{x\in C}\{\bar{s}(x)\}$ and $\bar{z}\in \argmax_{x\in C\setminus\{\bar{w}\}}\{\bar{s}(x)\}$. We claim that, if $\forall x\in C, |\bar{s}(x) - s(x)| \le \epsilon^{\prime} n$, then $|\bar{M}-\MOV[\alpha]| \le \nfrac{\MOV[\alpha]}{3} + \eps n$. This can be shown as follows.
 
 \begin{align*}
  \bar{M} - \MOV[\alpha] &= \frac{ \bar{s}(\bar{w}) - \bar{s}(\bar{z}) }{1.5\alpha_1} - \MOV[\alpha]\\
  &\le \frac{ s(w) - s(z) }{1.5\alpha_1} + \frac{2\eps^{\prime}n}{1.5} - \MOV[\alpha]\\
  &\le \frac{1}{3}\MOV[\alpha] + \eps n
 \end{align*}
 
  The second inequality follows from the fact that, $\bar{s}(\bar{w}) \le s(\bar{w})+\eps^\prime n \le s(w)+\eps^\prime n$ and $\bar{s}(\bar{z}) \ge \bar{s}(z) \ge s(z) - \eps^\prime n$. The third inequality follows from \Cref{lem:scr}. Similarly, we bound $\MOV[\alpha] - \bar{M}$ as follows.
  
 \begin{align*}
  \MOV[\alpha] - \bar{M} &= \MOV[\alpha] - \frac{ \bar{s}(w) - \bar{s}(z) }{1.5\alpha_1}\\
  &\le \MOV[\alpha] - \frac{ s(w) - s(z) }{1.5\alpha_1} + \frac{2\eps^{\prime}n}{1.5}\\
  &\le \frac{1}{3}\MOV[\alpha] + \eps n
 \end{align*}
 
  This proves the claim. Now we bound the success probability of the algorithm as follows. Let $A$ be the event that $\forall x\in C, |\bar{s}(x) - s(x)| \le \epsilon^{\prime} n$.
  
 \begin{align*}
  & \Pr\left[ |\bar{M}-\MOV[\alpha]| \le \frac{1}{3}\MOV[\alpha] + \eps n \right] \\
  &\ge \Pr\left[ |\bar{M}-\MOV[\alpha]| \le \frac{1}{3}\MOV[\alpha] + \eps n \middle| A \right]  \Pr[ A ]\\
  &= \Pr[ A ] \\
  &\ge 1 - 2m\exp\left(-\epsilon^{\prime2} \ell/3\right)
 \end{align*}
 
  The third equality follows from \Cref{lem:scr} and the fourth inequality follows from \cref{eqn:scr}. Now by choosing $ \ell = (\nfrac{12}{\epsilon^2})\ln(\nfrac{2m}{\delta})$, we get a \textsc{$(\nfrac{1}{3}, \epsilon, \delta)$--MoV} algorithm for the scoring rules.
\end{proof}

Now we show an algorithm for the \textsc{$(0, \epsilon, \delta)$--MoV} problem for the $k$-approval voting rule which not only provides more accurate estimate of the margin of victory, but also has a lower sample complexity. The following structural result will be used subsequently.

\begin{lemma}\label{lem:kapp_mov}
 Let $\mathcal{E}=(V,C)$ be an arbitrary instance of a $k$-approval election. If $w$ and $z$ are the candidates that receive highest and second highest score respectively in $\mathcal{E}$ and $\MOV[k-approval]$ is the margin of victory of \EE, then,
 $$ 2(\MOV[k-approval] - 1) < s(w) - s(z) \le 2\MOV[k-approval] $$
\end{lemma}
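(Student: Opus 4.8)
The plan is to establish the two inequalities separately, in each case translating a statement about the margin of victory into a statement about how a single vote modification can perturb candidate scores. Throughout, write $M = \MOV[k-approval]$ and $d = s(w) - s(z)$, and recall that in $k$-approval every vote contributes $1$ to exactly $k$ candidates and $0$ to the rest, so replacing one vote by another can change any fixed candidate's score by at most $1$.

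For the upper bound $d \le 2M$, I would start from an optimal modification of $M$ votes that dethrones $w$. After this modification $w$ is no longer the unique winner, so some candidate $w' \ne w$ satisfies $s'(w') \ge s'(w)$, where $s'$ denotes scores in the modified profile. Since each of the $M$ modified votes can raise $s(w')$ by at most $1$ and lower $s(w)$ by at most $1$, we have $s'(w') \le s(w') + M$ and $s'(w) \ge s(w) - M$; combining these gives $s(w') \ge s(w) - 2M$. Finally, because $z$ has the second highest score, $s(w') \le s(z)$, and hence $d = s(w) - s(z) \le s(w) - s(w') \le 2M$.

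For the lower bound $d > 2(M-1)$, I would exhibit a cheap modification that dethrones $w$ by promoting $z$, which yields $M \le \lceil d/2 \rceil$ and therefore $d \ge 2M - 1 > 2(M-1)$. The crucial counting observation is that the number of votes approving $w$ but not $z$ is at least $d$: writing $a, b, c$ for the number of votes approving both, only $w$, and only $z$ respectively, we have $s(w) = a+b$ and $s(z) = a+c$, so $b = c + d \ge d$. Each such vote can be rewritten to approve $z$ in place of $w$, lowering $s(w)$ by $1$ and raising $s(z)$ by $1$, i.e.\ reducing the gap by $2$; a single such vote can instead be rewritten to merely drop $w$, reducing the gap by $1$. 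Using $\lfloor d/2 \rfloor$ gap-$2$ modifications and, when $d$ is odd, one extra gap-$1$ modification, we drive the gap to at most $0$ with $\lceil d/2 \rceil$ changes, which makes $z$ weakly beat $w$ and so makes $w$ not the unique winner; there are enough eligible votes since $\lceil d/2 \rceil \le d \le b$. Hence $M \le \lceil d/2 \rceil$.

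The main obstacle is getting the parity and tie-handling exactly right so that the lower bound comes out strict. The factor of two (the improvement over the generic scoring-rule bound of \Cref{lem:scr}) hinges entirely on the counting fact $b \ge d$, which guarantees enough ``swap'' votes that simultaneously hurt $w$ and help $z$. One must also be careful that the notion of ``changing the winner'' used here is making $w$ a non-unique winner (exactly as in the proof of \Cref{lem:approval}); with that convention a tie produced when $d$ is even already counts, and the bound $M \le \lceil d/2 \rceil$ --- rather than $\lceil d/2 \rceil + 1$ --- is precisely what produces the strict inequality $d > 2(M-1)$.
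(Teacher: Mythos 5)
Your proof is correct and follows essentially the same route as the paper: the upper bound is argued identically (an optimal dethroning with $M$ modified votes perturbs $s(w)$ and the rival's score by at most $M$ each), and the lower bound rests on the same key object, the votes approving $w$ but not $z$, in which swapping $w$ for $z$ closes the gap by exactly two. The only difference is the direction of the lower-bound argument---the paper observes that $M-1$ such swaps cannot dethrone $w$ (hence $s(w)-s(z) > 2(M-1)$), while you exhibit $\lceil (s(w)-s(z))/2\rceil$ swaps that do dethrone $w$ (hence $M \le \lceil (s(w)-s(z))/2\rceil$)---and these are logically interchangeable given your counting fact that there are at least $s(w)-s(z)$ such votes, which the paper also uses implicitly.
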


\begin{proof}
We call a vote $v\in V$ {\em favorable} if $w$ appears within the top $k$ positions and $z$ does not appear within top the $k$ positions in $v$. We claim that the number of favorable votes must be at least $\MOV[k-approval]$. Indeed, otherwise, we swap the positions of $w$ and $z$ in all the favorable votes while keeping the other candidates fixed. This makes the score of $z$ at least as much as the score of $w$ which contradicts the fact that the margin of victory is $\MOV[k-approval]$. Now notice that the score of $z$ must remain less than the score of $w$ even if we swap the positions of $w$ and $z$ in $\MOV[k-approval]-1$ many favorable votes, since the margin of victory is $\MOV[k-approval]$. Each such vote change increases the score of $z$ by one and reduces the score of $w$ by one. Hence, $2(\MOV[k-approval]-1) < s(w) - s(z)$. 
Again, since the margin of victory is $\MOV[k-approval]$, there exists a candidate $x$ other than $w$ and $\MOV[k-approval]$ many votes in $V$ which can be modified such that $x$ becomes a winner of the modified election. Now each vote change can reduce the score of $w$ by at most one and increase the score of $x$ by at most one. Hence, $ s(w) - s(x) \le 2\MOV[k-approval] $ and thus $ s(w) - s(z) \le 2\MOV[k-approval] $ since $s(z) \ge s(x)$.
\end{proof}

With \Cref{lem:kapp_mov,lem:funmax} at hand, we now describe our margin of victory estimation algorithm for the $k$-approval voting rule.

\begin{theorem}\label{thm:kapp_mov}
 There is a polynomial time \textsc{$(0, \epsilon, \delta)$--MoV} algorithm for the $k$-approval voting rule with sample complexity at most $(\nfrac{12}{\epsilon^2})\ln(\nfrac{2k}{\delta})$.
\end{theorem}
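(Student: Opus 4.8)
The plan is to follow the sampling-and-score-estimation template of \Cref{thm:scr_mov}, but to graft onto it the sharper, candidate-score-dependent concentration argument of \Cref{thm:kapp} so as to replace the $\ln m$ dependence by $\ln k$, and to exploit the tighter structural bound of \Cref{lem:kapp_mov} so as to drive the multiplicative slack down to $c=0$. Concretely, I would sample $\ell$ votes uniformly at random with replacement, form for each candidate $x$ the estimate $\hat{s}(x) = \frac{n}{\ell}\sum_{i=1}^\ell X_i^x$, where $X_i^x$ indicates that $x$ lies among the top $k$ positions of the $i^{th}$ sampled vote, and then output $\bar{M} = \frac{\hat{s}(\bar{w}) - \hat{s}(\bar{z})}{2}$, where $\bar{w}$ and $\bar{z}$ are the candidates with highest and second-highest estimated scores. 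I would set $\epsilon^\prime = \frac{\epsilon}{2}$.

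First I would establish the concentration. Since $\EB[\hat{s}(x)] = s(x)$, the multiplicative Chernoff bound (\Cref{thm:chernoff}) gives $\Pr[\,|\hat{s}(x) - s(x)| > \epsilon^\prime n\,] \le 2\exp(-\frac{2\epsilon^{\prime 2}\ell n}{3 s(x)})$, an exponent that sharpens as $s(x)$ shrinks. Taking a union bound over all $m$ candidates and then invoking \Cref{lem:funmax} together with the $k$-approval identity $\sum_{x\in\CC} s(x) = kn$ (each vote approves exactly $k$ candidates), the sum $\sum_{x\in\CC} 2\exp(-\frac{2\epsilon^{\prime 2}\ell n}{3 s(x)})$ is maximized by concentrating all the mass on $k$ candidates of score $n$, which collapses the factor $m$ to $k$ and yields a failure probability of at most $2k\exp(-\frac{2\epsilon^{\prime 2}\ell}{3})$. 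This is exactly the mechanism of \Cref{thm:kapp} and should transfer verbatim.

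Next I would condition on the good event $A$ that $|\hat{s}(x) - s(x)| \le \epsilon^\prime n$ for every candidate $x$, and convert the score estimate into a margin-of-victory estimate using \Cref{lem:kapp_mov}, which pins $\frac{s(w) - s(z)}{2} \in (\MOV[k-approval] - 1, \MOV[k-approval]]$ for the true top two candidates $w, z$. For the upper direction, $\hat{s}(\bar{w}) \le s(w) + \epsilon^\prime n$ and $\hat{s}(\bar{z}) \ge s(z) - \epsilon^\prime n$ (the latter because at least one of the two true leaders survives in $\CC\setminus\{\bar{w}\}$), so $\bar{M} \le \frac{s(w)-s(z)}{2} + \epsilon^\prime n \le \MOV[k-approval] + \epsilon n$; the lower direction is symmetric, using $\hat{s}(\bar{w}) \ge s(w) - \epsilon^\prime n$ and the analogous bound on $\hat{s}(\bar{z})$. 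Because the normaliser is exactly $2$ and \Cref{lem:kapp_mov} is tight up to an additive constant, no multiplicative term survives and I obtain the pure additive guarantee $|\bar{M} - \MOV[k-approval]| \le \epsilon n$, i.e. $c = 0$. Finally, bounding $\Pr[A] \ge 1 - 2k\exp(-\frac{2\epsilon^{\prime 2}\ell}{3})$ and forcing the right-hand side to be at least $1-\delta$ gives $\ell = (\frac{12}{\epsilon^2})\ln(\frac{2k}{\delta})$.

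The main obstacle I anticipate is the error bookkeeping in the third paragraph rather than the concentration: one has to handle the cases where the estimated leaders $\bar{w}, \bar{z}$ differ from the true leaders $w, z$, and verify that in every such case the two-sided bound relating $\hat{s}(\bar{w}) - \hat{s}(\bar{z})$ to $s(w) - s(z)$ loses only $2\epsilon^\prime n$, so that combining with \Cref{lem:kapp_mov} keeps the total additive error at $\epsilon n$ with no residual multiplicative term. The Chernoff-plus-\Cref{lem:funmax} step, by contrast, is essentially identical to \Cref{thm:kapp} and is the routine part.
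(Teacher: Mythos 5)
Your proposal is correct and follows essentially the same route as the paper's proof: the same estimator $\bar{M} = (\hat{s}(\bar{w})-\hat{s}(\bar{z}))/2$ with $\epsilon'=\epsilon/2$, the same score-dependent Chernoff bound combined with \Cref{lem:funmax} and the constraint $\sum_x s(x)=kn$ to replace the union-bound factor $m$ by $k$, and the same use of \Cref{lem:kapp_mov} to get the purely additive guarantee. The error bookkeeping for the case $\bar{w},\bar{z}\neq w,z$ that you flag as the delicate step is exactly the argument the paper defers to the proof of \Cref{thm:scr_mov}, and it goes through as you describe.
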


\begin{proof}
Let $\mathcal{E} = (V, C)$ be an arbitrary $k$-approval election. We sample $\ell$ votes uniformly at random from $V$ with
replacement. For a candidate $x$, define a random variable $X_i(x)$ which takes value $1$ if $x$ appears among the top $k$ candidates in the $i^{th}$
sample vote, and $0$ otherwise. Define $\bar{s}(x) \eqdef \frac{n}{\ell}\sum_{i=1}^{\ell}X_i(x)$ the estimate of the score of the candidate $x$,
and let $s(x)$ be the actual score of $x$. Also define $\eps^{\prime} = \frac{\eps}{2}$. Then by the Chernoff bound
(\Cref{thm:chernoff}), we have:

 $$ \Pr\left[ |\bar{s}(x) - s(x)| > \epsilon^{\prime} n \right] \le
 2\exp\left(-\frac{\epsilon^{\prime2} \ell n}{3 s(x)}\right)$$
 
Now we apply the union bound to get the following.

 \begin{align*}
  & \Pr[ \exists x\in C, |\bar{s}(x) - s(x)| > \epsilon^{\prime} n ]\\
  &\le \sum_{x\in C} 2\exp\left(-\frac{\epsilon^{\prime2} \ell n}{3s(x)}\right) \\
  &\le 2k\exp\left(-\epsilon^{\prime2} \ell/3\right)\numberthis \label[ineq]{eqn:kapp}
 \end{align*}
 
 The second inequality follows from \Cref{lem:funmax} : The expression $\sum_{x\in C} 2\exp\left(-\frac{\epsilon^{\prime2} \ell n}{3
 s(x)}\right)$ is maximized subject to the constraints that $ 0\le s(x)\le n, \forall x\in C$ and $ \sum_{x\in C} s(x) = kn $, when $ s(x) = n \forall x\in C^{\prime} $ for any subset of candidates $C^{\prime}\subseteq C$ with $|C^{\prime}|=k$ and $s(x)=0 \forall x\in C\setminus C^{\prime}$. 
 
 Now to estimate the margin of victory of the given election $\mathcal{E}$, let $\bar{w}$ and $\bar{z}$ be candidates with maximum and second maximum estimated score respectively. That is, $\bar{w} \in \argmax_{x\in C} \{ \bar{s}(x) \} \text{ and } \bar{z} \in \argmax_{x\in C\setminus \{\bar{w}\}} \{ \bar{s}(x) \}$. We define $ \bar{M} \eqdef \nfrac{(\bar{s}(\bar{w}) - \bar{s}(\bar{z}))}{2} $ the estimate of the margin of victory of the election $\mathcal{E}$ (and thus the output of the algorithm). Let $A$ be the event that $\forall x\in C, |\bar{s}(x) - s(x)| \le \epsilon^{\prime} n$. 
 We bound the success probability of the algorithm as follows.
 
 \begin{eqnarray*}
  && \Pr\left[ |\bar{M}-\MOV[k-approval]| \le \eps n \right] \\
  &\ge& \Pr\left[ |\bar{M}-\MOV[k-approval]| \le \eps n \middle| A \right] \Pr[ A ]\\
  &=& \Pr[ A ] \\
  &\ge& 1 - 2k\exp\left(-\epsilon^{\prime2} \ell/3\right)
 \end{eqnarray*} 
 
 The second equality follows from \Cref{lem:kapp_mov} and an argument analogous to the proof of \Cref{thm:scr_mov}. The third inequality follows from \cref{eqn:kapp}. Now by choosing $ \ell = (\nfrac{12}{\epsilon^2})\ln(\nfrac{2k}{\delta})$, we get a \textsc{$(0, \epsilon, \delta)$--MoV} algorithm.
\end{proof}

Note that, the sample complexity upper bound matches with the lower bound proved in \Cref{cor:lb_mov} for the $k$-approval voting rule when $k$ is a constant, irrespective of the number of candidates. 
Next, we estimate the margin of victory of an approval election.

\begin{theorem}\label{thm:app_mov}
 There is a polynomial time \textsc{$(0, \epsilon, \delta)$--MoV} algorithm for the approval rule with sample complexity at most $(\nfrac{12}{\epsilon^2})\ln(\nfrac{2m}{\delta})$.
\end{theorem}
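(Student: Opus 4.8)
The plan is to follow the template of \Cref{thm:kapp_mov}, since this is again a $c=0$ estimation result and the only genuinely new ingredient is the right structural relationship between the margin of victory and the gap between the top two approval scores. For a candidate $y$ let $s(y)$ denote its number of approvals. First I would establish the approval analogue of \Cref{lem:kapp_mov}: if $w$ and $z$ are the candidates with the highest and second highest approval scores in $\EE$, then
\begin{align*}
 2(\MOV[approval] - 1) < s(w) - s(z) \le 2\MOV[approval].
\end{align*}
Given this lemma, the sampling algorithm and its analysis are essentially identical to the $k$-approval case, so the bulk of the real work is the lemma itself.

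For the upper bound $s(w)-s(z)\le 2\MOV[approval]$ I would argue that a single vote change in an approval election alters $s(w)-s(x)$ by at most $2$ (it can drop $w$, lowering $s(w)$ by one, and add $x$, raising $s(x)$ by one). Since $\MOV[approval]$ vote changes suffice to dethrone $w$, some candidate $x\ne w$ must reach $w$'s score, i.e. $s(w)-s(x)$ falls from its initial nonnegative value to at most $0$; hence $s(w)-s(x)\le 2\MOV[approval]$, and as $s(z)\ge s(x)$ we obtain $s(w)-s(z)\le 2\MOV[approval]$. For the lower bound I would call a vote \emph{favorable} if it approves $w$ but not $z$, and observe by a direct count (writing $s(w)=A+B$ and $s(z)=A+C$, where $A,B,C$ count the votes approving both, only $w$, and only $z$ respectively) that there are $B\ge s(w)-s(z)$ favorable votes. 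Removing $w$ from and adding $z$ to $\lceil (s(w)-s(z))/2\rceil$ favorable votes drops the gap to at most $0$ and makes $z$ a (co-)winner, so $\MOV[approval]\le \lceil (s(w)-s(z))/2\rceil$, which rearranges to $2(\MOV[approval]-1)<s(w)-s(z)$.

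With the lemma in hand, I would sample $\ell$ votes uniformly with replacement, let $X_i(x)$ indicate whether the $i$-th sampled vote approves $x$, and use $\bar{s}(x)=\frac{n}{\ell}\sum_{i=1}^\ell X_i(x)$ as the estimate of $s(x)$, setting $\eps^\prime=\eps/2$. The Chernoff bound (\Cref{thm:chernoff}) gives $\Pr[|\bar{s}(x)-s(x)|\ge \eps^\prime n]\le 2\exp(-\eps^{\prime 2}\ell/3)$, and a union bound over all $m$ candidates yields $\Pr[\exists x,\ |\bar{s}(x)-s(x)|>\eps^\prime n]\le 2m\exp(-\eps^{\prime 2}\ell/3)$. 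Defining $\bar{M}=(\bar{s}(\bar{w})-\bar{s}(\bar{z}))/2$ for the top two estimated candidates $\bar{w},\bar{z}$ and conditioning on the event that every score is estimated within $\eps^\prime n$, the lemma propagates to $|\bar{M}-\MOV[approval]|\le \eps n$ exactly as in \Cref{thm:kapp_mov}. Choosing $\ell=(\nfrac{12}{\eps^2})\ln(\nfrac{2m}{\delta})$ drives the failure probability to at most $\delta$.

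I expect the main obstacle to be the structural lemma rather than the concentration step, and specifically the tie-breaking subtlety in its lower bound: making $z$ a mere co-winner need not flip the declared winner under lexicographic tie-breaking, so one must check (as in the $k$-approval proof) that the favorable-vote count still certifies $\MOV[approval]\le\lceil(s(w)-s(z))/2\rceil$. I would also flag the one real departure from \Cref{thm:kapp_mov}: because approval ballots have no fixed cardinality, the total-score constraint that powers the \Cref{lem:funmax} trick (which tightens the union bound to $\ln(\nfrac{2k}{\delta})$ for $k$-approval) is unavailable here, so the union bound must range over all $m$ candidates and the sample complexity carries $\ln(\nfrac{2m}{\delta})$.
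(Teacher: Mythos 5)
Your proposal is correct and follows exactly the route the paper takes: the paper's proof of this theorem is a one-line reference to estimating every approval score within $\eps n/2$ using $(\nfrac{12}{\eps^2})\ln(\nfrac{2m}{\delta})$ samples and arguing "analogously to \Cref{lem:kapp_mov} and \Cref{thm:kapp_mov}", which is precisely the structural lemma plus Chernoff-and-union-bound argument you spell out. Your added remark that the \Cref{lem:funmax} tightening is unavailable for approval ballots (forcing the union bound over all $m$ candidates and hence the $\ln(\nfrac{2m}{\delta})$ factor) is accurate and consistent with the stated sample complexity.
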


\begin{proof}
 We estimate the approval score of every candidate within an additive factor of $\nfrac{\eps n}{2}$ by sampling $(\nfrac{12}{\epsilon^2})\ln(\nfrac{2m}{\delta})$ many votes uniformly at random with replacement and the result follows from an argument analogous to the proofs of \Cref{lem:kapp_mov} and \Cref{thm:kapp_mov}.
\end{proof}

\subsubsection{Bucklin Voting Rule}

Now we consider the Bucklin voting rule. Given an election $\mathcal{E}=(V,C)$, a candidate $x\in C$, and an integer $\ell \in [m]$, we denote the number of votes in $V$ in which $x$ appears within the top $\ell$ positions by $n_\ell(x)$. We prove useful bounds on the margin of victory of any Bucklin election in \Cref{lem:bucklin_mov}.

\begin{lemma}\label{lem:bucklin_mov}
 Let $\mathcal{E}=(V,C)$ be an arbitrary instance of a Bucklin election, $w$ the winner of $\mathcal{E}$, and $\MOV[Bucklin]$ the margin of victory of \EE. Let us define a quantity $\Delta(\mathcal{E})$ as follows.
 $$ \Delta(\mathcal{E}) \eqdef \min_{\substack{\ell \in [m-1] : n_\ell(w) > n/2,\\ x\in C\setminus\{w\} : n_\ell(x) \le n/2}} \{ n_\ell(w) - n_\ell(x) +1 \} $$
 Then,
 $$ \frac{\Delta(\mathcal{E})}{2} \le \MOV[Bucklin] \le \Delta(\mathcal{E}) $$
\end{lemma}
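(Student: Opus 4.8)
The plan is to establish the two inequalities separately: the upper bound $\MOV[Bucklin]\le\Delta(\mathcal{E})$ by exhibiting an explicit modification of a few votes that dethrones $w$, and the lower bound $\Delta(\mathcal{E})/2\le\MOV[Bucklin]$ by a perturbation (sensitivity) argument. The elementary fact driving both directions is that replacing a single vote changes the count $n_\ell(c)$ of any candidate $c$ at any level $\ell$ by at most one; consequently, for a fixed level $\ell$ and two candidates, the difference $n_\ell(w)-n_\ell(x)$ can move by at most two per modified vote. This ``sensitivity of two'' is exactly what accounts for the factor $2$ separating the two bounds.

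For the upper bound it suffices to show, for \emph{every} admissible pair $(\ell,x)$ in the minimization (that is, with $n_\ell(w)>n/2\ge n_\ell(x)$), that $w$ can be removed from the winner set using at most $n_\ell(w)-n_\ell(x)+1$ modified votes; minimizing over admissible pairs then gives $\MOV[Bucklin]\le\Delta(\mathcal{E})$ since $\MOV[Bucklin]$ is a fixed number bounded above by every element of the set. First I would set $a=n_\ell(w)-\lfloor n/2\rfloor$ (the number of votes in which $w$ must be pushed out of the top $\ell$ so that $n'_\ell(w)\le n/2$) and $b=\lfloor n/2\rfloor+1-n_\ell(x)$ (the number of votes in which $x$ must be pushed into the top $\ell$ so that $n'_\ell(x)>n/2$), observing that $a+b=n_\ell(w)-n_\ell(x)+1$. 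Using votes that contain $w$ but not $x$ within the top $\ell$ (there are at least $n_\ell(w)-n_\ell(x)$ of them), each such ``double-duty'' modification simultaneously decrements $n_\ell(w)$ and increments $n_\ell(x)$, so only $\max(a,b)\le a+b$ distinct votes need to be touched. After the modification $x$ exceeds $n/2$ at level $\ell$ while $w$ does not, so the first level at which some candidate reaches a majority is at most $\ell$ and $w$ fails to reach a majority there; hence $w$ is no longer a winner. I would close by checking the routine edge cases (parity of $n$, $w$ sitting exactly at position $\ell$, and the availability of enough single-duty votes for the residual $|a-b|$ changes).

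For the lower bound I would take an optimal modification using $t=\MOV[Bucklin]$ votes, so that in the modified profile $w$ is not the unique winner; let $x\ne w$ be a resulting (co-)winner and let $\ell'$ be the new first-majority level, whence $n'_{\ell'}(x)>n/2$ and $w$ does not reach a majority at $\ell'$. By the sensitivity bound, $n_{\ell'}(x)>n/2-t$ and $n_{\ell'}(w)\le n/2+t$ in the original election $\mathcal{E}$. In the principal case where $\ell'$ is at least the original winning round (so that $n_{\ell'}(w)>n/2$ in $\mathcal{E}$) and $n_{\ell'}(x)\le n/2$ in $\mathcal{E}$, the pair $(\ell',x)$ is admissible, giving $\Delta(\mathcal{E})\le n_{\ell'}(w)-n_{\ell'}(x)+1\le 2t$, which rearranges to $t\ge\Delta(\mathcal{E})/2$. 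The hard part will be the boundary case: reconciling the level $\ell'$ at which $w$ is \emph{actually} dethroned with the level attaining the minimum in $\Delta(\mathcal{E})$, because a successful attack may move the first-majority level \emph{earlier} than the original winning round, at which $w$ need not hold a majority and $(\ell',x)$ is no longer admissible. Resolving this will require showing that such an early takeover still forces comparably many changes (tracking $x$'s deficit below the majority threshold jointly with the original gap $n_\ell(w)-n_\ell(x)$), together with the usual care over parity and over the tie-breaking convention that governs whether a co-winner already counts as a changed outcome.
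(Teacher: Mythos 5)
Your upper bound is, modulo the $\max(a,b)$ packing refinement, exactly the paper's argument: for any admissible pair $(\ell,x)$, demote $w$ below position $\ell$ in $n_\ell(w)-\lfloor n/2\rfloor$ votes and promote $x$ into the top $\ell$ in $\lfloor n/2\rfloor+1-n_\ell(x)$ votes, for a total of $n_\ell(w)-n_\ell(x)+1$ changes; that direction is fine. Your ``principal case'' of the lower bound is also exactly the paper's proof: extract $(\ell',x)$ from an optimal attack, derive $\MOV[Bucklin]\ge n_{\ell'}(w)-\lfloor n/2\rfloor$ and $\MOV[Bucklin]\ge \lfloor n/2\rfloor+1-n_{\ell'}(x)$, and bound the maximum by the average.

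The genuine gap is the boundary case you explicitly flag and leave unresolved, and it cannot be closed: when the optimal attack installs $x$ at a level $\ell'$ \emph{earlier} than the original winning round, $n_{\ell'}(w)\le n/2$, the pair $(\ell',x)$ is infeasible for the minimization defining $\Delta(\mathcal{E})$, and an early takeover can genuinely be cheaper than $\Delta(\mathcal{E})/2$. Concretely, take $n=100$ and $C=\{w,y_1,\dots,y_{10}\}$, with $w$ in position $2$ of every vote, each $y_i$ in position $1$ of exactly $10$ votes, and the lower positions arranged so that at every level $\ell\in\{2,\dots,10\}$ each $y_i$ has either $n_\ell(y_i)=10$ or $n_\ell(y_i)\ge 60$ (this is realizable by letting the ten candidates absorb the positions $3,\dots,7$ two at a time, fifty votes each). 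Then $w$ wins at round $2$ with $n_2(w)=100$, every feasible pair in the definition of $\Delta(\mathcal{E})$ has value $100-10+1=91$, so $\Delta(\mathcal{E})=91$; yet moving $y_1$ to the top of $41$ votes gives $y_1$ a round-$1$ majority of $51$ while no one else reaches $51$, so $\MOV[Bucklin]\le 41<45.5=\Delta(\mathcal{E})/2$. Note that the paper's own proof has the identical hole: its final step $\frac{n_{\ell'}(w)-n_{\ell'}(x)+1}{2}\ge\frac{\Delta(\mathcal{E})}{2}$ silently assumes $(\ell',x)$ is feasible for the minimum, which fails exactly in your boundary case. So the obstruction you identified is not a shortcoming of your route but of the claimed inequality itself (as $\Delta$ is defined); no amount of parity or tie-breaking care will rescue the lower bound without changing the definition of $\Delta(\mathcal{E})$ to also account for takeovers at levels where $w$ holds no majority.
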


\begin{proof}
 Pick any $\ell\in[m-1]$ and $x\in C\setminus\{w\}$ such that, $n_\ell(w)>n/2$ and $n_\ell(x)\le n/2$. Now by changing $n_\ell(w)-\floor{n/2}$ many votes, we can ensure that $w$ is not placed within the top $\ell$ positions in more than $n/2$ votes: choose $n_\ell(w)-\floor{n/2}$ many votes where $w$ appears within top $\ell$ positions and swap $w$ with candidates placed at the last position in those votes. Similarly, by changing $\floor{n/2}+1-n_\ell(x)$ many votes, we can ensure that $x$ is placed within top $\ell$ positions in more than $n/2$ votes. Hence, by changing at most $n_\ell(w)-\floor{n/2}+\floor{n/2}+1-n_\ell(x) = n_\ell(w) - n_\ell(x) +1$ many votes, we can make $w$ not win the election. Hence, $\MOV[Bucklin]\le n_\ell(w) - n_\ell(x) +1$. Now since we have picked an arbitrary $\ell$ and an arbitrary candidate $x$, we have $\MOV[Bucklin] \le \Delta(\mathcal{E})$.
 
 For the other inequality, since the margin of victory is $\MOV[Bucklin]$, there exists an $\ell^\prime\in[m-1]$, a candidate $x\in C\setminus\{w\}$, and $\MOV[Bucklin]$ many votes in $V$ such that, we can change those votes in such a way that in the modified election, $w$ is not placed within top $\ell^\prime$ positions in more than $n/2$ votes and $x$ is placed within top $\ell^\prime$ positions in more than $n/2$ votes. Hence, we have the following.
 
 $ \MOV[Bucklin] \ge n_\ell^\prime(w)-\floor*{\frac{n}{2}}, \MOV[Bucklin]\ge \floor*{\frac{n}{2}}+1-n_\ell^\prime(x) $

 \begin{align*}
  \Rightarrow \MOV[Bucklin] &\ge \max\{n_{\ell^\prime}(w)-\floor*{\frac{n}{2}}, \floor*{\frac{n}{2}}+1-n_{\ell^\prime}(x)\}\\ 
  \Rightarrow  \MOV[Bucklin] &\ge\frac{n_{\ell^\prime}(w)-\floor*{\frac{n}{2}}+\floor*{\frac{n}{2}}+1-n_{\ell^\prime}(x)}{2} \\
  &\ge \frac{\Delta(\mathcal{E})}{2}
 \end{align*}
 
\end{proof}
 Notice that, given an election $\mathcal{E}$, $\Delta(\mathcal{E})$ can be computed in a polynomial amount of time. \Cref{lem:bucklin} leads us to the following result for the Bucklin voting rule.
 
\begin{theorem}\label{thm:bucklin_mov}
 There is a polynomial time \textsc{$(\nfrac{1}{3}, \epsilon, \delta)$--MoV} algorithm for the Bucklin rule with sample complexity $(\nfrac{12}{\epsilon^2})\ln(\nfrac{2m}{\delta})$.
\end{theorem}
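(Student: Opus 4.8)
The plan is to mirror the structure of the scoring-rules algorithm in \Cref{thm:scr_mov}, but now estimating the quantity $\Delta(\mathcal{E})$ that \Cref{lem:bucklin_mov} relates to $\MOV[Bucklin]$ up to a factor of two. First I would sample $\ell$ votes uniformly at random with replacement, where $\ell = (\nfrac{12}{\epsilon^2})\ln(\nfrac{2m}{\delta})$ will be fixed at the end, and set $\eps^\prime = \nfrac{\eps}{2}$. For each candidate $x\in C$ and each position threshold $k\in[m]$, I would define an indicator $X_i(x,k)$ that is $1$ exactly when $x$ appears within the top $k$ positions of the $i$th sampled vote, and use $\bar{n}_k(x) \eqdef \frac{n}{\ell}\sum_{i=1}^\ell X_i(x,k)$ as the estimate of $n_k(x)$. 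The algorithm then outputs $\bar{M} \eqdef \nfrac{2}{3}\,\bar\Delta(\mathcal{E})$, where $\bar\Delta(\mathcal{E})$ is the quantity from \Cref{lem:bucklin_mov} computed using the estimated counts $\bar{n}_k(\cdot)$ in place of the true counts (with the estimated Bucklin winner $\bar w$ in place of $w$).

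Next I would establish the concentration. By the Chernoff bound (\Cref{thm:chernoff}), for any fixed pair $(x,k)$ we have $\Pr[\,|\bar{n}_k(x) - n_k(x)| \ge \eps^\prime n\,] \le 2\exp(-\eps^{\prime2}\ell/3)$. Taking a union bound over the $m$ candidates and $m$ thresholds — and noting that a careful accounting (as in the Bucklin winner-prediction argument of \Cref{thm:bucklin}) really only needs this over the relevant thresholds so that the $\ln(\nfrac{2m}{\delta})$ rather than $\ln(\nfrac{2m^2}{\delta})$ bound suffices — I would show that with probability at least $1-\delta$ the event $A$ that all estimates are within $\eps^\prime n$ of their true values holds. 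Conditioned on $A$, each term $\bar{n}_k(w) - \bar{n}_k(x) + 1$ differs from $n_k(w) - n_k(x) + 1$ by at most $2\eps^\prime n = \eps n$, and one must check the minimization defining $\Delta$ is stable: the thresholds satisfying $n_k(w) > n/2$ and $n_k(x)\le n/2$ may differ slightly from those satisfying the estimated inequalities, but the additive $\eps^\prime n$ slack absorbs this, giving $|\bar\Delta(\mathcal{E}) - \Delta(\mathcal{E})| \le \eps n$ up to lower-order terms.

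Finally I would combine the estimate with \Cref{lem:bucklin_mov}. Since $\frac{\Delta(\mathcal{E})}{2} \le \MOV[Bucklin] \le \Delta(\mathcal{E})$, the midpoint scaling factor $\nfrac{2}{3}$ is chosen so that $\bar M = \nfrac{2}{3}\bar\Delta$ estimates $\MOV[Bucklin]$ within the multiplicative-plus-additive bound $\nfrac{1}{3}\MOV[Bucklin] + \eps n$. Concretely, on the event $A$,
\begin{align*}
 \bar{M} - \MOV[Bucklin] &\le \frac{2}{3}(\Delta(\mathcal{E}) + \eps n) - \MOV[Bucklin] \le \frac{2}{3}\cdot 2\MOV[Bucklin] + \frac{2}{3}\eps n - \MOV[Bucklin],
\end{align*}
and symmetrically $\MOV[Bucklin] - \bar{M} \le \MOV[Bucklin] - \frac{2}{3}(\Delta(\mathcal{E}) - \eps n)$, each of which one checks reduces to at most $\nfrac{1}{3}\MOV[Bucklin] + \eps n$ after using both inequalities of \Cref{lem:bucklin_mov}; I would tune the constant exactly as in \Cref{thm:scr_mov}. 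The success probability bound then follows from $\Pr[A] \ge 1 - 2m\exp(-\eps^{\prime2}\ell/3) \ge 1-\delta$ by the choice of $\ell$. The main obstacle I anticipate is the stability of the argmin defining $\Delta(\mathcal{E})$ under the sampling perturbation — specifically ensuring that the constraint set of admissible thresholds $k$ (those with $n_k(w) > n/2$) does not change so drastically under estimation that $\bar\Delta$ jumps far from $\Delta$; handling the boundary thresholds where $n_k(w)$ is near $n/2$ requires the same two-sided slack reasoning that makes the factor $c = \nfrac{1}{3}$ (rather than $0$) necessary here, exactly as in the scoring-rule and maximin cases.
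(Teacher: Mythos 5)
Your proposal is correct and follows essentially the same route as the paper: sample $(\nfrac{12}{\eps^2})\ln(\nfrac{2m}{\delta})$ votes, estimate every $n_k(x)$ to within an additive $\nfrac{\eps n}{2}$, form $\bar\Delta(\mathcal{E})$ from the estimated counts, and output $\nfrac{2}{3}\bar\Delta(\mathcal{E})$ (identical to the paper's $\bar\Delta(\mathcal{E})/1.5$), concluding via \Cref{lem:bucklin_mov} exactly as in \Cref{thm:scr_mov}. The two issues you flag — the union-bound accounting over the $m\times m$ candidate--threshold pairs and the stability of the constrained minimization defining $\Delta(\mathcal{E})$ under the $\eps^\prime n$ perturbation — are real but are glossed over in the paper's own (very terse) proof as well, so your treatment is, if anything, slightly more careful than the original.
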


\begin{proof}
 Similar to the proof of \Cref{thm:kapp_mov}, we estimate, for every candidate $x\in C$ and for every integer $\ell\in[m]$, the number of votes where $x$ appears within top $\ell$ positions within an approximation factor of $(0, \nfrac{\eps}{2})$. Next, we compute an estimate of $\bar{\Delta}(\mathcal{E})$ from the sampled votes and output the estimate for the margin of victory as $\bar{\Delta}(\mathcal{E})/1.5$. Using \Cref{lem:bucklin_mov}, we can argue the rest of the proof in a way that is analogous to the proofs of \Cref{thm:kapp,thm:scr}.
\end{proof}

\subsubsection{Maximin Voting Rule}

Next, we present our \textsc{$(\nfrac{1}{3}, \epsilon, \delta)$--MoV} algorithm for the maximin voting rule.

\begin{theorem}\label{thm:maximin_mov}
 There is a polynomial time \textsc{$(\nfrac{1}{3}, \epsilon, \delta)$--MoV} algorithm for the maximin rule with sample complexity $(\nfrac{24}{\epsilon^2})\ln(\nfrac{2m}{\delta})$.
\end{theorem}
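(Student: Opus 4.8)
The plan is to mirror the strategy used for scoring rules in \Cref{thm:scr_mov} and \Cref{thm:kapp_mov}, working with the weighted majority graph in place of candidate scores exactly as in the winner-prediction result \Cref{thm:maximin}. The crucial structural ingredient is already available: \Cref{lem:maximin} states that if $w$ and $z$ are the candidates with the highest and second-highest maximin scores $s(\cdot)$, then $2\MOV[maximin] \le s(w) - s(z) \le 4\MOV[maximin]$. Since this sandwiches the score gap between two fixed multiples of the margin of victory, the quantity $\nfrac{(s(w)-s(z))}{3}$ already lies within $\nfrac{\MOV[maximin]}{3}$ of $\MOV[maximin]$; the divisor $3$ is chosen precisely so that the interval $[\nfrac{2\MOV[maximin]}{3}, \nfrac{4\MOV[maximin]}{3}]$ is centred at $\MOV[maximin]$ with radius $\nfrac{\MOV[maximin]}{3}$. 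The whole task therefore reduces to estimating the maximin scores accurately from few samples.

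First I would sample $\ell = (\nfrac{24}{\epsilon^2})\ln(\nfrac{2m}{\delta})$ votes uniformly at random with replacement. For every ordered pair of candidates $(x,y)$ I would set $X_i = 1$ if the $i$-th sampled vote ranks $x$ above $y$ and $X_i = -1$ otherwise, and define $\hat D(x,y) = \frac{n}{\ell}\sum_{i=1}^\ell X_i$ as the estimate of $D(x,y)$. A Chernoff bound (\Cref{thm:chernoff}) applied to the shifted $\{0,1\}$ indicators, followed by a union bound over candidates as in \Cref{thm:maximin}, guarantees that with probability at least $1-\delta$ we have $|\hat D(x,y) - D(x,y)| \le \epsilon' n$ simultaneously for all pairs, where $\epsilon'$ is a fixed constant multiple of $\epsilon$. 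Setting $\bar s(x) = \min_{y \ne x}\hat D(x,y)$ for the estimated maximin score, the estimates inherit the same accuracy, $|\bar s(x) - s(x)| \le \epsilon' n$, since the estimation error propagates through the $\min$ without amplification.

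Next I would let $\bar w, \bar z$ be the candidates with the largest and second-largest estimated maximin scores and output $\bar M = \nfrac{(\bar s(\bar w) - \bar s(\bar z))}{3}$. Conditioned on the good event $A = \{\forall x, |\bar s(x) - s(x)| \le \epsilon' n\}$, the upper direction is routine: $\bar s(\bar w) \le s(w) + \epsilon' n$ and $\bar s(\bar z) \ge s(z) - \epsilon' n$, which together with the right inequality of \Cref{lem:maximin} give $\bar M - \MOV[maximin] \le \nfrac{\MOV[maximin]}{3} + \nfrac{2\epsilon' n}{3}$. The success-probability bookkeeping then follows the template of \Cref{thm:scr_mov}, yielding $\Pr[|\bar M - \MOV[maximin]| \le \nfrac{\MOV[maximin]}{3} + \epsilon n] \ge \Pr[A] \ge 1 - \delta$ once $\ell$ is as above and $\epsilon'$ is chosen so that $\nfrac{2\epsilon' n}{3} \le \epsilon n$.

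I expect the main obstacle to be the lower-direction estimate $\MOV[maximin] - \bar M \le \nfrac{\MOV[maximin]}{3} + \epsilon n$, that is, showing that the gap between the top two \emph{estimated} scores cannot collapse well below the true gap $s(w)-s(z)$. The subtle point is bounding the second-largest estimated score by $\bar s(\bar z) \le s(z) + \epsilon' n$: if some candidate $x \ne \bar w$ had $\bar s(x) > s(z) + \epsilon' n$, then its true score would exceed $s(z)$, forcing $x = w$, and hence, since then $\bar s(\bar w) \ge \bar s(w) > s(z)+\epsilon' n$, forcing $\bar w = w$, a contradiction. With this claim in hand, $\bar s(\bar w) - \bar s(\bar z) \ge (s(w)-s(z)) - 2\epsilon' n \ge 2\MOV[maximin] - 2\epsilon' n$, so $\bar M \ge \nfrac{2\MOV[maximin]}{3} - \nfrac{2\epsilon' n}{3}$, completing the two-sided bound. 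Everything is polynomial time, since $\hat D$, the estimated scores, and the top two candidates are all computable directly from the sample.
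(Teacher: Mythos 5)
Your proposal is correct and follows essentially the same route as the paper's proof: sample votes uniformly with replacement, estimate every pairwise margin $D(x,y)$ via $\pm 1$ indicators with a Chernoff--union bound, output $\bar{M} = \nfrac{(\bar{s}(\bar{w})-\bar{s}(\bar{z}))}{3}$, and sandwich the answer using \Cref{lem:maximin}. The only difference is that you explicitly work out the two-sided error analysis -- in particular the bound $\bar{s}(\bar{z}) \le s(z)+\epsilon' n$ via the argument that a second-place estimated score cannot secretly belong to $w$ -- which the paper leaves implicit as ``analogous to the proof of \Cref{thm:scr_mov}.''
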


\begin{proof}
 Let $\mathcal{E} = (V, C)$ be an instance of maximin election. Let $x$ and $y$ be any two candidates. We sample $\ell$
 votes uniformly at random from the set of all votes with replacement. 
 Let $X_i(x,y)$ be a random variable  defined as follows.
 $$ X_i(x,y) = \begin{cases}
           1,& \text{if } x\succ y \text{ in the } i^{th} \text{ sample vote}\\
           -1,& \text{else}
          \end{cases}
 $$
 Define $\bar{D_{\mathcal{E}}}(x,y) = \frac{n}{\ell}\sum_{i=1}^{\ell}X_i(x,y)$. By using the Chernoff bound and union bound, we have the following.
 \begin{equation*}
 \Pr\left[ \exists x,y\in C, |\bar{D_{\mathcal{E}}}(x,y) - D_{\mathcal{E}}(x,y)| > \epsilon n \right] \le 2m^2\exp\left(-\frac{\epsilon^2  \ell}{3}\right)
 \end{equation*}
 We define $\bar{M} \eqdef \nfrac{(\bar{s}(\bar{w})-\bar{s}(\bar{z}))}{3}$, the estimate of the margin of victory of $\mathcal{E}$, where 
 $\bar{w}\in \argmax_{x\in C}\{\bar{s}(x)\}$ and $\bar{z}\in \argmax_{x\in C\setminus\{\bar{w}\}}\{\bar{s}(x)\}$. 
 Now using \Cref{lem:maximin}, we can complete the rest of the proof in a way that is analogous to the proof of \Cref{thm:scr_mov}.
\end{proof}

\subsubsection{Copeland$^\alpha$ Voting Rule}

Now we present our result for the Copeland$^\alpha$ voting rule. The following lemma is immediate from Theorem 11 in \cite{xia2012computing}.

\begin{lemma}\label{lem:copeland_mov}
 $\Gamma(\mathcal{E}) \le \MOV[Copeland^\alpha] \le 2(\ceil*{\ln m} +1)\Gamma(\mathcal{E}).$
\end{lemma}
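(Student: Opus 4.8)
The plan is to prove the two inequalities separately. The upper bound turns out to be essentially a restatement of \Cref{lem:copeland}, whereas the lower bound admits a short self-contained argument directly from the definitions of $s'_t(V,\cdot)$, $RM(\cdot,\cdot)$, and $\Gamma(\mathcal{E})$; both are ultimately instances of Theorem~11 in \cite{xia2012computing}.

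First I would dispatch the upper bound $\MOV[Copeland^\alpha] \le 2(\ceil*{\ln m}+1)\Gamma(\mathcal{E})$ by specializing \Cref{lem:copeland}. Setting $\epsilon = \MOV[Copeland^\alpha]/n$, the hypothesis $\textsf{MOV} \ge \epsilon n$ holds with equality, so \Cref{lem:copeland} yields $RM(w,x) \ge \epsilon n / (2(\ceil*{\ln m}+1)) = \MOV[Copeland^\alpha]/(2(\ceil*{\ln m}+1))$ for every candidate $x \ne w$. Taking the minimum over $x$ gives $\Gamma(\mathcal{E}) \ge \MOV[Copeland^\alpha]/(2(\ceil*{\ln m}+1))$, which rearranges to the claimed bound.

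For the lower bound $\Gamma(\mathcal{E}) \le \MOV[Copeland^\alpha]$, I would fix an optimal modification of $t := \MOV[Copeland^\alpha]$ votes that dethrones $w$, and let $x \ne w$ be a candidate whose Copeland$^\alpha$ score in the modified profile is at least that of $w$. The key observation is that changing a single vote alters any pairwise margin $D(a,b)$ by at most $2$, so after $t$ changes every margin has moved by at most $2t$. I would then verify that the modified score of $w$ is at least $s'_{-t}(V,w)$ (each opponent $y$ with $D(w,y) > 2t$ is still beaten, and each with $D(w,y) = 2t$ is still at least tied) and, symmetrically, that the modified score of $x$ is at most $s'_{t}(V,x)$ (a modified win $D'(x,y) > 0$ forces $D(x,y) > -2t$, and a modified tie forces $D(x,y) \ge -2t$). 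Chaining these through the assumed inequality of modified scores gives $s'_{-t}(V,w) \le s'_{t}(V,x)$, so by the definition of the relative margin of victory $RM(w,x) \le t$, whence $\Gamma(\mathcal{E}) = \min_{y \ne w} RM(w,y) \le RM(w,x) \le \MOV[Copeland^\alpha]$.

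The main obstacle is the upper-bound direction, which is where the genuine content of Theorem~11 of \cite{xia2012computing} resides: realizing the abstract margin shift of $2\Gamma(\mathcal{E})$ on the relevant edges of the weighted majority graph using only $2(\ceil*{\ln m}+1)\Gamma(\mathcal{E})$ actual vote modifications requires the logarithmic dyadic-grouping construction of that theorem, so I would invoke it (through \Cref{lem:copeland}) rather than reprove it. The only real care needed is to confirm that the definitions of $s'_t$, $RM$, and $\Gamma(\mathcal{E})$ as stated in this chapter match Xia's conventions, in particular the factor-of-two scaling inside $s'_t$ and the $\alpha$-weighting of exact ties, so that the cited theorem applies verbatim; the tie terms are also the only place where the elementary lower-bound argument above requires a careful (but routine) case check.
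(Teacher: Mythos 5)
Your proposal is correct. The paper itself gives no argument at all for this lemma — its proof is the single line ``Follows from Theorem 11 in \cite{xia2012computing}'' — so any comparison is necessarily with Xia's theorem rather than with a proof in the text. Your treatment of the upper bound $\MOV[Copeland^\alpha] \le 2(\ceil*{\ln m}+1)\Gamma(\mathcal{E})$ is the same in substance: you route it through \Cref{lem:copeland} with $\eps = \nfrac{\MOV[Copeland^\alpha]}{n}$, and \Cref{lem:copeland} is itself just a citation of the same Theorem 11, so nothing new is proved there (which is fine, since that is where the genuinely hard dyadic construction lives). Where you add value is the lower bound $\Gamma(\mathcal{E}) \le \MOV[Copeland^\alpha]$: your elementary argument is complete and correct. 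The chain $s^\prime_{-t}(V,w) \le \text{score}'(w) \le \text{score}'(x) \le s^\prime_t(V,x)$ with $t = \MOV[Copeland^\alpha]$ checks out against the definitions in the chapter — each vote change moves any $D(\cdot,\cdot)$ by at most $2$, the tie terms contribute $\alpha$ consistently on both sides of each comparison, and the monotonicity of $s^\prime_{-t}(V,w)$ (non-increasing in $t$) and $s^\prime_t(V,x)$ (non-decreasing in $t$) justifies concluding $RM(w,x) \le t$ from the inequality holding at that particular $t$. The one point worth stating explicitly is the existence of a candidate $x \ne w$ with modified score at least that of $w$; this follows because the modified profile has a winner other than $w$, and that winner must have maximum Copeland$^\alpha$ score. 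In short: same source for the nontrivial direction, plus a self-contained and verifiable proof of the easy direction that the paper omits.
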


\begin{proof}
 Follows from Theorem 11 in \cite{xia2012computing}.
\end{proof}

\begin{theorem}\label{thm:copeland_mov}
 For the Copeland$^\alpha$ voting rule, there is a polynomial time \textsc{$\left(1-O\left(\nfrac{1}{\ln m}\right), \epsilon, \delta\right)$--MoV} algorithm whose sample complexity is $(\nfrac{96}{\epsilon^2})\ln(\nfrac{2m}{\delta})$.
\end{theorem}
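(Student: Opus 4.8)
The plan is to mirror the estimation strategy used for the maximin rule in Theorem~\ref{thm:maximin_mov}, but to replace the score gap by Xia's relative margin of victory quantity $\Gamma(\mathcal{E})$, exploiting the two-sided bound of \Cref{lem:copeland_mov}. First I would sample $\ell = (\nfrac{96}{\eps^2})\ln(\nfrac{2m}{\delta})$ votes uniformly at random with replacement, and for every ordered pair of candidates $(x,y)$ I would form the usual $\pm 1$ estimator $\bar{D}_\EE(x,y) = \frac{n}{\ell}\sum_{i=1}^\ell X_i(x,y)$ of the pairwise margin $D_\EE(x,y)$, exactly as in the maximin argument. By the Chernoff bound (\Cref{thm:chernoff}) applied with deviation $\nfrac{\eps n}{4}$, together with a union bound over the at most $m^2$ pairs, the choice of $\ell$ guarantees that, with probability at least $1-\delta$, every estimate satisfies $|\bar{D}_\EE(x,y) - D_\EE(x,y)| \le \nfrac{\eps n}{4}$ simultaneously; call this good event $A$.

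Next I would compute, from the estimated margins, the estimated threshold functions $\bar{s}^\prime_t(V,x)$, the estimated relative margins $\overline{RM}(x,y)$, and finally the output $\bar{M} = \bar{\Gamma}(\mathcal{E}) = \min_{x\ne \bar{w}}\overline{RM}(\bar{w}, x)$, where $\bar{w}$ is the Copeland$^\alpha$ winner on the estimated margins; this is polynomial time since $\Gamma(\mathcal{E})$ is polynomially computable. The key step is to show that on the event $A$ we have $|\bar{\Gamma}(\mathcal{E}) - \Gamma(\mathcal{E})| \le \nfrac{\eps n}{2}$. This follows because $s^\prime_t(V,x)$ counts (with the $\alpha$-weighted boundary term) candidates $y$ according to the threshold $D_\EE(y,x) < 2t$, so perturbing every margin by at most $\nfrac{\eps n}{4}$ merely shifts the effective threshold by $\nfrac{\eps n}{4}$ in $2t$; monotonicity of $s^\prime_t$ in $t$ then sandwiches each $\overline{RM}(x,y)$ within an additive $O(\eps n)$ of $RM(x,y)$, and taking a minimum preserves the bound.

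Finally, I would combine this accuracy with \Cref{lem:copeland_mov}, which states $\Gamma(\mathcal{E}) \le \MOV[Copeland^\alpha] \le 2(\ceil{\ln m}+1)\Gamma(\mathcal{E})$. Writing $M = \MOV[Copeland^\alpha]$ and assuming $A$, the upper bound $\bar{M} = \bar{\Gamma}(\mathcal{E}) \le \Gamma(\mathcal{E}) + \nfrac{\eps n}{2} \le M + \nfrac{\eps n}{2}$ controls $\bar{M} - M$, while the lower bound $\bar{M} \ge \Gamma(\mathcal{E}) - \nfrac{\eps n}{2} \ge \nfrac{M}{2(\ceil{\ln m}+1)} - \nfrac{\eps n}{2}$ gives $M - \bar{M} \le M\left(1 - \frac{1}{2(\ceil{\ln m}+1)}\right) + \nfrac{\eps n}{2}$. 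Hence $|\bar{M} - M| \le \left(1 - \frac{1}{2(\ceil{\ln m}+1)}\right) M + \eps n$, which is exactly a $\left(1 - O(\nfrac{1}{\ln m}), \eps, \delta\right)$ guarantee, since $A$ holds with probability at least $1-\delta$. The step I expect to be the main obstacle is the propagation bound $|\bar{\Gamma}(\mathcal{E}) - \Gamma(\mathcal{E})| \le \nfrac{\eps n}{2}$: one must argue carefully that the combinatorial definition of $RM$ through the threshold functions $s^\prime_t$ is $O(\eps n)$-Lipschitz in the pairwise margins, handling both the $\alpha$-weighted ties at the boundary $D_\EE(y,x) = 2t$ and the possibility that the estimated winner $\bar{w}$ differs from the true winner (which can only occur when $M$ is small, where the additive $\eps n$ slack absorbs the discrepancy).
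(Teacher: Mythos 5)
Your proposal follows essentially the same route as the paper's proof: the same uniform sampling and $\pm1$ pairwise-margin estimator as in the maximin case, the same propagation of the $\nfrac{\eps n}{4}$ margin error through $\bar{s}^\prime_t$, $\overline{RM}$, and $\bar{\Gamma}(\mathcal{E})$, and the same final appeal to \Cref{lem:copeland_mov}. The only substantive difference is that the paper outputs the rescaled estimate $\frac{4(\ln m+1)}{2\ln m+3}\bar{\Gamma}(\mathcal{E})$ to balance the two one-sided errors while you output $\bar{\Gamma}(\mathcal{E})$ directly; both yield a $\left(1-O\left(\nfrac{1}{\ln m}\right),\eps,\delta\right)$ guarantee, and your slightly optimistic intermediate bound $|\bar{\Gamma}(\mathcal{E})-\Gamma(\mathcal{E})|\le \nfrac{\eps n}{2}$ (the paper proves $\eps n$, paying an extra $2\eps^\prime n$ for the case $\bar{w}\ne w$) is harmless because your final inequality already allows slack $\eps n$.
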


\begin{proof}
 Let $\mathcal{E} = (V, C)$ be an instance of a Copeland$^\alpha$ election. For every $x, y\in C$, we compute $\bar{D_{\mathcal{E}}}(x, y)$, which is an estimate of $D_{\mathcal{E}}(x, y)$, within an approximation factor of $(0, \eps^\prime)$, where $\eps^\prime = \nfrac{\eps}{4}$. This can be achieved with an error probability at most $\delta$ by sampling $(\nfrac{96}{\epsilon^2})\ln(\nfrac{2m}{\delta})$ many votes uniformly at random with replacement (the argument is same as the proof of \Cref{thm:scr}). We define $\bar{s}^\prime_t(V, x) = |\{ y\in C: y\ne x, D_{\mathcal{E}}(y,x)<2t \}| + \alpha|\{ y\in C: y\ne x, D_{\mathcal{E}}(y,x)=2t \}|$. 
 We also define $\overline{RM}(x,y)$ between $x$ and $y$ to be the minimum integer $t$ such that, $\bar{s}^\prime_{-t}(V, x) \le s^\prime_t(V, y)$. 
 Let $\bar{w}$ be the winner of the sampled election, $\bar{z} = \argmin_{x\in C\setminus\{\bar{w}\}} \{\overline{RM}(w,x)\}$, $w$ the winner of $\mathcal{E}$, and $z = \argmin_{x\in C\setminus\{w\}} \{RM(w,x)\}$. Since, $\bar{D_{\mathcal{E}}}(x, y)$ is an approximation of $D_{\mathcal{E}}(x, y)$ within a factor of $(0, \eps^\prime)$, we have the following for every candidate $x, y\in C$.
 $$ s^\prime_t(V, x) -\eps^\prime n \le \bar{s}^\prime_t(V, x) \le s^\prime_t(V, x) +\eps^\prime n$$ 
 \begin{align*}
 RM(x,y) - 2\eps^\prime n \le \overline{RM}(x,y) \le RM(x,y) + 2\eps^\prime n \numberthis \label[ineq]{eqn:copeland}
 \end{align*}
 Define $\bar{\Gamma}(\mathcal{E}) = \overline{RM}(\bar{w}, \bar{z})$ to be the estimate of $\Gamma(\mathcal{E})$. We show the following claim.
 \begin{claim}\label{claim:copeland_mov}
  With the above definitions of $w, z, \bar{w},$ and $\bar{z}$, we have the following.
  $$ \Gamma(\mathcal{E}) - 4\eps^\prime n \le \bar{\Gamma}(\mathcal{E}) \le \Gamma(\mathcal{E}) + 4\eps^\prime n $$
 \end{claim}
 \begin{proof}
  Below, we show the upper bound for $\bar{\Gamma}(\mathcal{E})$.
  \begin{align*}
   \bar{\Gamma}(\mathcal{E}) = \overline{RM}(\bar{w}, \bar{z}) &\le \overline{RM}(w, \bar{z}) + 2\eps^\prime n\\
   &\le \overline{RM}(w, z) + 2\eps^\prime n\\
   &\le RM(w, z) + 4\eps^\prime n\\
   &= \Gamma(\mathcal{E}) + 4\eps^\prime n
  \end{align*}
  The second inequality follows from the fact that $\bar{D_{\mathcal{E}}}(x,y)$ is an approximation of $D_{\mathcal{E}}(x,y)$ by a factor of $(0, \eps^\prime)$. The third inequality follows from the definition of $\bar{z}$, and the fourth inequality uses \cref{eqn:copeland}. Now we show the lower bound for $\bar{\Gamma}(\mathcal{E})$.
  \begin{align*}
   \bar{\Gamma}(\mathcal{E}) = \overline{RM}(\bar{w}, \bar{z}) &\ge \overline{RM}(w, \bar{z}) - 2\eps^\prime n\\
   &\ge RM(w,\bar{z}) - 4\eps^\prime n\\
   &\ge RM(w,z) - 4\eps^\prime n\\
   &= \Gamma(\mathcal{E}) - 4\eps^\prime n
  \end{align*}
  The third inequality follows from \cref{eqn:copeland} and the fourth inequality follows from the definition of $z$.
 \end{proof}
  We define $\bar{M}$, the estimate of $\MOV[Copeland^\alpha]$, to be $\frac{4(\ln m + 1)}{2\ln m + 3}\bar{\Gamma}(\mathcal{E})$. The following argument shows that $\bar{M}$ is a $\left(1-O\left(\frac{1}{\ln m}\right), \epsilon, \delta\right)$--estimate of $\MOV[Copeland^\alpha]$.
  
 \begin{align*}
  &\bar{M} - \MOV[Copeland^\alpha]\\ 
  &= \frac{4(\ln m + 1)}{2\ln m + 3}\bar{\Gamma}(\mathcal{E}) - \MOV[Copeland^\alpha]\\
  &\le \frac{4(\ln m + 1)}{2\ln m + 3}\Gamma(\mathcal{E}) - \MOV[Copeland^\alpha] + \frac{16(\ln m + 1)}{2\ln m + 3}\eps^\prime n\\
  &\le \frac{4(\ln m + 1)}{2\ln m + 3}\MOV[Copeland^\alpha] - \MOV[Copeland^\alpha] + \eps n\\
  &\le \frac{2\ln m +1}{2\ln m +3}\MOV[Copeland^\alpha] + \eps n\\
  &\le \left(1-O\left(\frac{1}{\ln m}\right) \right)\MOV[Copeland^\alpha] + \eps n
 \end{align*}
 
 The second inequality follows from \Cref{claim:copeland_mov} and the  third inequality follows from \Cref{lem:copeland_mov}. Analogously, we have:

 \begin{align*} 
  &\MOV[Copeland^\alpha]-\bar{M}\\
  &= \MOV[Copeland^\alpha] - \frac{4(\ln m + 1)}{2\ln m + 3}\bar{\Gamma}(\mathcal{E})\\
  &\le \MOV[Copeland^\alpha] - \frac{4(\ln m + 1)}{2\ln m + 3}\Gamma(\mathcal{E}) + \frac{16(\ln m + 1)}{2\ln m + 3}\eps^\prime n\\
  &\le \MOV[Copeland^\alpha] - \frac{2(\ln m + 1)}{2\ln m + 3}\MOV[Copeland^\alpha] + \eps n\\
  &\le \frac{2\ln m +1}{2\ln m +3}\MOV[Copeland^\alpha] + \eps n\\
  &\le \left(1-O\left(\frac{1}{\ln m}\right) \right)\MOV[Copeland^\alpha] + \eps n
 \end{align*}
 
 The second line follows \Cref{claim:copeland_mov} and the  third line follows from \Cref{lem:copeland_mov}.
\end{proof}

 The approximation factor in \Cref{thm:copeland_mov} is weak when we have a large number of candidates. The main difficulty for showing a better approximation factor for the Copeland$^\alpha$ voting rule is to find a polynomial time computable quantity (for example, $\Gamma(\mathcal{E})$ in \Cref{lem:copeland_mov}) that exhibits tight bounds with margin of victory. We remark that, existence of such a quantity will not only imply a better estimation algorithm, but also, a better approximation algorithm (the best known approximation factor for finding the margin of victory for the Copeland$^\alpha$ voting rule is $O(\ln m)$ and it uses the quantity $\Gamma(\mathcal{E})$). 
However, we remark that \Cref{thm:copeland_mov} will be useful in applications, for example, post election audit and polling, where the number of candidates is often small.

\section{Conclusion}\label{sec:con}

In this work, we introduced the $(\eps,\delta)$-\WD
problem and showed (often tight) bounds for the sample complexity for
many common voting rules. We have also introduced the $(c, \epsilon, \delta)$--\MV problem and presented efficient sampling based algorithms for solving it for many commonly used voting rules which are also often observes an optimal number of sample votes. We observe that predicting the winner of an elections needs least number of queries, whereas more involved voting rules like Borda and maximin need significantly more queries. 

In the next chapter, we study the problem of finding a winner of an election when votes are arriving one by one in a sequential manner.
\chapter{Streaming Algorithms for Winner Determination}
\label{chap:winner_stream}

\blfootnote{A preliminary version of the work in this chapter was published as \cite{deystream}: Arnab Bhattacharyya, Palash Dey, and David P. Woodruff. An optimal algorithm for l1-heavy hitters in insertion streams and related problems. In Proc. 35th ACM SIGMOD-SIGACT-SIGAI Symposium on Principles of Database Systems, PODS ’16, pages 385-400, New York, NY, USA, 2016. ACM.}

\begin{quotation}
 {\small We give the first optimal bounds for returning the $\ell_1$-heavy hitters in a data stream of insertions, together with their approximate frequencies, closing a long line of work on this problem. For a stream of $m$ items in $\{1, 2, \ldots, n\}$ and parameters $0 < \epsilon < \phi \leq 1$, let $f_i$ denote the frequency of item $i$, i.e., the number of times item $i$ occurs in the stream. With arbitrarily large constant probability, our algorithm returns all items $i$ for which $f_i \geq \phi m$, returns no items $j$ for which $f_j \leq (\phi -\epsilon)m$, and returns approximations $\tilde{f}_i$ with $|\tilde{f}_i - f_i| \leq \epsilon m$ for each item $i$ that it returns. Our algorithm uses $O(\epsilon^{-1} \log\phi^{-1} + \phi^{-1} \log n + \log \log m)$ bits of space, processes each stream update in $O(1)$ worst-case time, and can report its output in time linear in the output size. We also prove a lower bound, which implies that our algorithm is optimal up to a constant factor in its space complexity. A modification of our algorithm can be used to estimate the maximum frequency up to an additive $\epsilon m$ error in the above amount of space, resolving Question 3 in the IITK 2006 Workshop on Algorithms for Data Streams for the case of $\ell_1$-heavy hitters. We also introduce several variants of the heavy hitters and maximum frequency problems, inspired by rank aggregation and voting schemes, and show how our techniques can be applied in such settings. Unlike the traditional heavy hitters problem, some of these variants look at comparisons between items rather than numerical values to determine the frequency of an item.}
\end{quotation}

\section{Introduction}
The data stream model has emerged as a standard model for
processing massive data sets. Because of the sheer size
of the data, traditional algorithms are no longer
feasible, e.g., it may be hard or impossible to store the
entire input, and algorithms need to run in linear or even
sublinear time. Such algorithms typically need to be both
randomized and approximate. Moreover, the data may not
physically reside on any device, e.g., if it is internet
traffic, and so if the data is not stored by the algorithm,
it may be impossible to recover it. Hence, many algorithms
must work 
given only a single pass over the data. Applications
of data streams include data warehousing \cite{HSST05,br99,fsgmu98,HPDW01},
network measurements \cite{ABW03,GKMS08,demaine2002frequency,ev03}, 
sensor networks \cite{BGS01,SBAS04},
and compressed sensing \cite{GSTV07,CRT05}. We refer the reader
to recent surveys on the data stream model 
\cite{muthukrishnan2005data,nelson2012sketching,Cormode2012}.

One of the oldest and most fundamental problems in the area of data
streams is the problem of finding the $\ell_1$-heavy hitters (or simply, ``heavy hitters''), also known as the
top-$k$, most popular items, frequent items, elephants, or iceberg queries.
Such algorithms can be used as subroutines in network
flow identification at IP routers \cite{ev03}, association
rules and frequent itemsets \cite{as94,son95,toi96,hid99,hpy00}, 
iceberg queries 
and iceberg datacubes \cite{fsgmu98,br99,HPDW01}. The survey \cite{cormode2008finding} presents 
an overview of the state-of-the-art for this problem, from both theoretical and practical standpoints.

We now formally define the heavy hitters problem that we
focus on in this work: 
\begin{definition}{\bf ($(\epsilon, \phi)$-Heavy Hitters Problem)}\label{def:hh}
In the $(\epsilon, \phi)$-Heavy Hitters Problem, we are given parameters
$0 < \epsilon < \phi \leq 1$ and 
a stream $a_1, \ldots, a_m$ of items $a_j \in \{1, 2, \ldots, n\}.$
Let $f_i$ denote the number of occurrences of item $i$, i.e., its 
frequency.  
The algorithm should make one pass over the stream and at the
end of the stream output a set $S \subseteq \{1, 2, \ldots, n\}$
for which if $f_i \geq \phi m$, then $i \in S$, while if 
$f_i \leq (\phi - \epsilon)m$, then $i \notin S$. Further, for
each item $i \in S$, the algorithm should output an estimate
$\tilde{f}_i$ of the frequency $f_i$ which satisfies
$|f_i - \tilde{f}_i| \leq \epsilon m$. 
\end{definition}
Note that other natural definitions of heavy hitters are possible and sometimes used. For example, {\em $\ell_2$-heavy hitters} are those items $i$ for which $f_i^2 \geq \phi^2 \sum_{j=1}^n f_j^2$, and more generally, {\em $\ell_p$-heavy hitters} are those items $i$ for which $f_i^p \geq \phi^p \sum_{j=1}^n f_j^p$. It is in this sense that Definition \ref{def:hh} corresponds to { $\ell_1$-heavy hitters}. While $\ell_p$-heavy hitters for $p>1$ relax $\ell_1$-heavy hitters and algorithms for them have many interesting applications, we focus on the most direct and common formulation of the heavy hitters notion.

We are
interested in algorithms which use as little space in bits
as possible to solve the {\bf $(\epsilon, \phi)$-Heavy Hitters Problem}.
Further, we are also interested in minimizing the {\it update time}
and {\it reporting time} of such algorithms. Here, the update time
is defined to be the time the algorithm needs to update its data
structure when processing a stream insertion. The reporting time is the
time the algorithm needs to report the answer after having 
processed the stream. We allow the algorithm to be randomized and
to succeed with probability at least $1-\delta$ for $0<\delta<1$.
We do not make any assumption on the ordering of the stream
$a_1, \ldots, a_m$. This
is desirable as often in applications one cannot assume a best-case or even
a random order. We are also interested in the case when the length $m$ of the stream is
not known in advance, and give algorithms in this more general 
setting. 

The first algorithm for the {\bf $(\epsilon, \phi)$-Heavy Hitters Problem}
was given by Misra and Gries \cite{misra82}, who achieved $O(\epsilon^{-1} (\log n + \log m))$
bits of space for any $\phi > \epsilon$. This algorithm was rediscovered
by Demaine et al. \cite{demaine2002frequency}, and again by Karp et al. \cite{karp2003simple}. 
Other than these 
algorithms, which are deterministic, there are also a number of randomized
algorithms, such as the CountSketch \cite{charikar2004finding}, Count-Min sketch \cite{cormode2005improved}, sticky sampling \cite{mm02},
lossy counting \cite{mm02}, space-saving \cite{MetwallyAA05}, sample and hold \cite{ev03}, multi-stage bloom filters
\cite{cfm09}, and sketch-guided sampling \cite{kx06}. 
Berinde
et al. \cite{bics10} show that using $O(k \epsilon^{-1} \log(mn))$ bits of space, 
one can achieve the stronger guarantee
of reporting, for each item $i \in S$, $\tilde{f}_i$ with
$|\tilde{f}_i - f_i| \leq \nfrac{\epsilon}{k} F^{res(k)}_1$, 
where $F^{res(k)}_1 < m$ denotes
the sum of frequencies of items in $\{1, 2, \ldots, n\}$ excluding the frequencies
of the $k$ most frequent items. 

We emphasize that prior to our work the
best known algorithms for the {\bf $(\epsilon, \phi)$-Heavy Hitters Problem}
used $O(\epsilon^{-1} (\log n + \log m))$ bits of space. 
Two previous lower bounds were known. The first is a lower bound of $\log({n \choose 1/\phi})
= \Omega(\phi^{-1} \log(\phi n))$ bits, which  comes from the fact that the output set $S$ can contain $\phi^{-1}$
items and it takes this many bits to encode them. 
The second lower bound is
$\Omega(\epsilon^{-1})$ 
which follows from a folklore 
reduction from the randomized communication complexity of the {\sf Index} problem. 
In this problem, there are two players, Alice and Bob. Alice has a bit string $x$ of length $(2\epsilon)^{-1}$,
while Bob has an index $i$. Alice creates a stream of length $(2\eps)^{-1}$ consisting of one copy of each $j$ for which $x_j = 1$ and copies of a dummy item to fill the rest of the stream. She runs the heavy hitters streaming algorithm on her stream and sends the state of the
algorithm to Bob. Bob appends $(2\epsilon)^{-1}$ 
copies of the item $i$ to the stream 
and continues the execution
of the algorithm. For $\phi = \nfrac{1}{2}$, it holds that $i \in S$. Moreover, $f_i$ differs
by an additive $\epsilon m$ factor depending on whether $x_i = 1$ or $x_i = 0$. 
Therefore by the randomized communication complexity of the {\sf Index} problem 
\cite{kremer1999randomized}, the 
$(\epsilon, \nfrac{1}{2})$-heavy hitters problem requires $\Omega(\epsilon^{-1})$ bits of space.
Although this proof was for $\phi = \nfrac{1}{2}$, no better lower bound is known for any
$\phi > \epsilon$. 

Thus, while the upper bound for the {\bf $(\epsilon, \phi)$-Heavy Hitters Problem} 
is $O(\epsilon^{-1} (\log n + \log m))$ bits, the best known 
lower bound is only $\Omega(\phi^{-1} \log n + \epsilon^{-1})$ bits. For constant
$\phi$, and $\log n \approx \epsilon^{-1}$, this represents a nearly quadratic gap
in upper and lower bounds. Given the limited resources of devices which typically
run heavy hitters algorithms, such as internet routers, this quadratic gap can 
be critical in applications. 

A problem related to the {\bf $(\epsilon, \phi)$-Heavy Hitters Problem} is estimating
the {\it maximum frequency} in a data stream, also known as the $\ell_{\infty}$-norm. In
the IITK 2006 Workshop on Algorithms for Data Streams, Open Question 3 asks for an algorithm
to estimate the maximum frequency of any item up to an additive $\epsilon m$ error using
as little space as possible. The best known space bound is still $O(\epsilon^{-1} \log n)$
bits, as stated in the original formulation of the question (note that the ``$m$'' 
in the question there corresponds to the ``$n$'' here). Note that, if one can find an 
item whose frequency is the largest, up to an additive $\epsilon m$ error, then one can solve
this problem. The latter problem is independently interesting and corresponds to finding approximate
plurality election winners in voting streams \cite{DeyB15}. 
We refer to this problem as the \textbf{$\epsilon$-Maximum} problem. 

Finally, we note that there are many other variants of the 
{\bf $(\epsilon, \phi)$-Heavy Hitters Problem} that one can consider. One simple variant
of the above is to output an item of frequency within $\epsilon m$ of the {\it minimum frequency} 
of any item in the universe. We refer to this as the 
\textbf{$\epsilon$-Minimum} problem. 
This only makes sense for small universes, as otherwise outputting
a random item typically works. This is useful when one wants to count
the ``number of dislikes'', or in anomaly detection; see more motivation
below. In other settings, 
one may not have numerical scores associated with the items, but rather, 
each stream update consists of a ``ranking'' or ``total ordering'' of all stream items. 
This may be the case in ranking aggregation on the web (see, e.g., \cite{mbg04,MYCC07}) 
or in voting streams (see, e.g., \cite{conitzer2005communication, caragiannis2011voting, DeyB15, xia2012computing}). 
One may consider a variety of aggregation measures, such as the {Borda score} of 
an item $i$, which asks for the sum, over rankings, of the number of items $j \neq i$ for
which $i$ is ranked ahead of $j$ in the ranking. Alternatively, 
one may consider the { Maximin score}
of an item $i$, which asks for the minimum, over items $j \neq i$, of the number
of rankings for which $i$ is ranked ahead of $j$. For these aggregation measures, one may
be interested in finding an item whose score is an approximate maximum. 
This is the analogue of the 
\textbf{$\epsilon$-Maximum} problem above. Or, one may be interested in listing
all items whose score is above a threshold, which is the analogue of the 
{\bf $(\epsilon, \phi)$-Heavy Hitters Problem}. 

We give more motivation of these variants of heavy hitters in this section below, and more precise definitions in Section \ref{sec:prob_def_heavy_hitters}. 

\begin{table*}[t]
 \begin{center}
  \resizebox{\textwidth}{!}{
   \begin{tabular}{|c|c|c|}\hline
   
      \multirow{2}{*}{\textbf{Problem}}	& \multicolumn{2}{c|}{\textbf{Space complexity}} \\\cline{2-3}
       & Upper bound & Lower bound \\\hline\hline
      
      &&\\[-10pt]
      
      $(\epsilon, \phi)$-Heavy Hitters & \makecell{$O\left( \epsilon^{-1} \log \phi^{-1} + \phi^{-1} \log n + \log \log m \right )$ \\~[\Cref{thm:heavy_hitters2,thm:UbUnknownMax}]}
      & 
      \makecell{$\Omega \left (\epsilon^{-1} \log \phi^{-1} + \phi^{-1} \log n + \log \log m \right )$\\~[\Cref{thm:eps_eps,thm:loglogn}]}
      
      \\\hline
      
      &&\\[-10pt]
      
      $\epsilon$-Maximum and $\ell_{\infty}$-approximation& \makecell{$O\left(\epsilon^{-1} \log \epsilon^{-1} + \log n + \log \log m \right )$\\~[\Cref{thm:heavy_hitters,thm:UbUnknownMax}]} &
      \makecell{$\Omega \left (\epsilon^{-1} \log \epsilon^{-1} + \log n + \log \log m \right )$\\~[\Cref{thm:eps_maximum,thm:loglogn}]}
      \\\hline
      
      &&\\[-10pt]
      
      $\epsilon$-Minimum &  \makecell{$O\left(\epsilon^{-1} \log \log \epsilon^{-1} + \log \log m \right)$ \\~[\Cref{thm:rare,thm:UbUnknownMin}]}
      & \makecell{$\Omega \left (\epsilon^{-1} + \log \log m \right )$\\~[\Cref{thm:veto_lb,thm:loglogn}]}\\ \hline
      &&\\[-10pt]
      
  $\epsilon$-Borda &  \makecell{$O\left(n(\log \epsilon^{-1} + \log n) + \log \log m \right)$ \\~[\Cref{thm:borda,thm:UbUnknownMin}]}
      & \makecell{$\Omega \left (n (\log\epsilon^{-1} + \log n) + \log \log m \right )$\\~[\Cref{thm:lwb_borda,thm:loglogn}]}\\ \hline
      
  $\epsilon$-Maximin &  \makecell{$O\left(n \epsilon^{-2}\log^2 n + \log \log m \right)$ \\~[\Cref{thm:maximin,thm:UbUnknownMin}]}
      & \makecell{$\Omega \left (n (\epsilon^{-2} + \log n) + \log \log m \right )$\\~[\Cref{thm:mmlb}]}\\ \hline
   \end{tabular}
  }
  \caption{\small 
  The bounds hold for constant success probability algorithms and for $n$ sufficiently large in terms of $\eps$. For the {\bf $(\epsilon, \phi)$-Heavy Hitters}
problem and the {\bf $\epsilon$-Maximum} problem, we also
achieve $O(1)$ update time and reporting time which is linear in the size
of the output. The upper bound for
{\bf $\epsilon$-Borda} (resp. {\bf $\epsilon$-Maximin})
is for returning every item's Borda score (resp. Maximin score)
up to an additive $\epsilon mn$ (resp. additive $\epsilon m$), 
while the lower bound for {\bf $\epsilon$-Borda} 
(resp. {\bf $\epsilon$-Maximin}) is for
returning only the approximate Borda score (resp. Maximin score) 
of an approximate maximum.}\label{tbl:summary}
 \end{center}
\end{table*}

\subsection{Our Contribution}
Our results are summarized in Table \ref{tbl:summary}. We note that independently of this work and nearly parallelly, there have been improvements to the space complexity of the $\ell_2$-heavy hitters problem in insertion streams \cite{BCIW16} and to the time complexity of the $\ell_1$-heavy hitters problem in turnstile\footnote{In a turnstile stream, updates modify an underlying $n$-dimensional vector $x$ initialized at the zero vector; each update is of the form $x \leftarrow x+ e_i$ or $x \leftarrow x-e_i$ where $e_i$ is the $i$'th standard unit vector. In an insertion stream, only updates of the form $x \leftarrow x+ e_i$  are allowed.} streams \cite{LNNT16}. These works use very different techniques.

Our first contribution is an optimal algorithm and lower bound 
for the {\bf $(\epsilon, \phi)$-Heavy Hitters Problem}. 
Namely, we show that there
is a randomized algorithm with constant probability of success which solves this problem using 
$$O(\epsilon^{-1} \log \phi^{-1} + \phi^{-1} \log n + \log \log m)$$ 
bits of space, and we prove a lower bound matching up to constant factors. 
In the unit-cost RAM model with $O(\log n)$ bit words, 
our algorithm has $O(1)$ update time
and reporting time linear in the output size, under the
standard assumptions that the length of the
stream and universe size are at least $\text{poly}(\eps^{-1} \log(1/\phi))$. Furthermore, we can achieve nearly the optimal space complexity even when the
length $m$ of the stream is {\it not known in advance}.
Although the results of \cite{bics10} achieve stronger error bounds in terms of the tail, which are useful for skewed streams, here we focus on the original formulation of the problem.

Next, we turn to the problem of estimating the maximum frequency in a data
stream up to an additive $\epsilon m$. We give an algorithm using
$$O(\epsilon^{-1} \log \epsilon^{-1} + \log n + \log \log m)$$
bits of space, improving the previous best algorithms which required
space at least $\Omega(\epsilon^{-1} \log n)$ bits, and show that our bound is tight. As an example setting of parameters, if $\epsilon^{-1} = \Theta(\log n)$ and $\log \log m = O(\log n)$, our space
complexity is $O(\log n \log \log n)$ bits, improving the previous 
$\Omega(\log^2 n)$ bits of space algorithm. We also prove a lower bound
showing our algorithm is optimal up to constant factors. 
This resolves Open Question 3 from the IITK 2006 Workshop on Algorithms for Data Streams in the case of insertion streams, for the case of ``$\ell_1$-heavy hitters''. Our algorithm also returns the identity of the item with the 
approximate maximum frequency, solving the \textbf{$\epsilon$-Maximum} 
problem.

We then focus on a number of variants of these problems. 
We first give nearly tight bounds for finding an item whose frequency
is within $\epsilon m$ of the minimum possible frequency. While this can be solved
using our new algorithm for the {\bf $(\epsilon, \epsilon)$-Heavy Hitters Problem},
this would incur $\Omega(\epsilon^{-1} \log \epsilon^{-1} + \log\log m)$ bits of space, whereas we give
an algorithm using only $O(\epsilon^{-1} \log \log (\epsilon^{-1}) + \log\log m)$ bits of space.  We
also show a nearly matching $\Omega(\epsilon^{-1} + \log\log m)$ bits of space lower bound. We note
that for this problem, a dependence on $n$ is not necessary since if the number of
possible items is sufficiently large, then outputting the identity
of a random item among the first say, $10\epsilon^{-1}$ items, is a correct solution
with large constant probability. 

Finally, we study variants of heavy hitter problems that are ranking-based. In this setting, each
stream update consists of a total ordering of the $n$ universe items. For the $\epsilon$-{\bf Borda}
problem, we give an algorithm using $O(n (\log \epsilon^{-1} + \log \log n) + \log \log m)$
bits of space to report the Borda score of every item up to an additive $\epsilon m n$. 
We also show this is nearly optimal by proving an $\Omega(n \log \epsilon^{-1} + \log\log m)$ bit lower bound for the problem, even in the case when one is only interested in outputting an item maximum Borda score up to an additive $\epsilon m n$. For the $\epsilon$-{\bf Maximin} problem, we give an algorithm using $O(n \epsilon^{-2} \log^2n + \log \log m)$ bits of space to report the maximin score of every item up to an additive $\epsilon m$, and prove an $\Omega(n \epsilon^{-2} + \log\log m)$ bits of space lower bound even in the case when one is only interested in outputting the maximum maximin
score up to an additive $\epsilon m$. 
This shows that finding heavy hitters
with respect to the maximin score is significantly more expensive than with respect to
the Borda score.

\subsection{Motivation for Variants of Heavy Hitters}
While the {\bf $(\epsilon, \phi)$-Heavy Hitters} and
{\bf $\epsilon$-Maximum} problem are very well-studied in the data stream
literature, the other variants introduced are not. We provide additional
motivation for them here. 

For the {\bf $\epsilon$-Minimum} problem, in our formulation, an item with
frequency zero, i.e., one that does not occur in the stream, is a valid solution
to the problem. In certain scenarios, this might not make sense, e.g., if a stream containing only a small fraction of IP addresses. However, in other scenarios we argue this is a natural problem.
For instance, consider an online portal where users register complaints
about products. Here, minimum frequency items correspond to the ``best'' items.
That is, such frequencies arise in the context of voting or more generally
making a choice: in cases for which one does not have a strong preference
for an item, but definitely does not like certain items, this problem
applies, since the frequencies correspond to ``number of dislikes''. 

The {\bf $\epsilon$-Minimum} problem may also be useful for anomaly detection.
Suppose one has a known set of sensors broadcasting information and one
observes the ``From:'' field in the broadcasted packets. Sensors which send
a small number of packets may be down or defective, and an algorithm
for the {\bf $\epsilon$-Minimum} problem could find such sensors. 
%

Finding items with maximum and minimum frequencies in a stream correspond to finding winners under plurality and veto voting rules respectively in the context of voting\footnote{In fact, the first work \cite{Moore81} to formally pose the heavy hitters problem couched it in the context of voting.} \cite{brandt2015handbook}. The streaming aspect of voting could be crucial in applications like online polling~\citep{kellner2011polling}, recommender systems~\citep{resnick1997recommender,herlocker2004evaluating,adomavicius2005toward} where the voters are providing their votes in a streaming fashion and at every point in time, we would like to know the popular items.
   While in some elections, such as for political positions, the scale of
the election may not be large enough to require a streaming algorithm, one
key aspect of these latter voting-based problems is that they are rank-based
which is useful when numerical scores are not available. Orderings naturally
arise in several applications - for instance, if a website has multiple
parts, the order in which a user visits the parts given by its clickstream
defines a voting, and for data mining and recommendation purposes the
website owner may be interested in aggregating the orderings across users. 
 Motivated by this connection, we define similar problems for two other important voting rules, namely Borda and maximin. 
 The Borda scoring method finds its applications in a wide range of areas of artificial intelligence, for example, machine learning~\citep{ho1994decision,carterette2006learning,volkovs2014new,prasad2015distributional}, image processing~\citep{lumini2006detector,monwar2009multimodal}, information retrieval~\citep{li2014learning,aslam2001models,nuray2006automatic}, etc. The Maximin score is often used when the spread between the best and worst outcome is very large (see, e.g., p. 373 of \cite{mr91}). The maximin scoring method also has been used frequently in machine learning~\citep{wang2004feature,jiang2014diverse}, human computation~\citep{mao_hcomp12_2,Mao_aaai13}, etc.

\section{Problem Definitions}\label{sec:prob_def_heavy_hitters}

We now formally define the problems we study here. Suppose we have $0 < \eps < \varphi <1$.
\begin{definition}\textsc{$(\eps, \varphi)$-List heavy hitters}\\
 Given an insertion-only stream of length $m$ over a universe $\mathcal{U}$ of size $n$, find all items in $\mathcal{U}$ with frequency more than $\varphi m$, along with their frequencies up to an additive error of $\eps m$, and report no items with frequency less than $(\varphi-\eps)m$.
\end{definition}
 
\begin{definition}\textsc{$\eps$-Maximum}\\
 Given an insertion-only stream of length $m$ over a universe $\mathcal{U}$ of size $n$, find the maximum frequency up to an additive error of $\eps m$.
\end{definition}

Next we define the {\em minimum} problem for $0 < \eps <1$.

\begin{definition}\textsc{$\eps$-Minimum}\\
 Given an insertion-only stream of length $m$ over a universe $\mathcal{U}$ of size $n$, find the minimum frequency up to an additive error of $\eps m$.
\end{definition}

Next we define related heavy hitters problems in the context of rank aggregation. The input is a stream of rankings (permutations) over an item set $\UU$ for the problems below. The {\em Borda score} of 
an item $i$ is the sum, over all rankings, of the number of items $j \neq i$ for
which $i$ is ranked ahead of $j$ in the ranking.

\begin{definition}{\sc $(\eps, \varphi)$-List borda}\label{def:listborda}\\
 Given an insertion-only stream over a universe $\LL(\UU)$ where $|\UU|=n$, find all items with Borda score more than $\varphi mn$, along with their Borda score up to an additive error of $\eps mn$, and report no items with Borda score less than $(\varphi-\eps)mn$.
\end{definition}

\begin{definition}{\sc $\eps$-Borda}\label{def:borda}\\
 Given an insertion-only stream over a universe $\LL(\UU)$ where $|\UU|=n$, find the maximum Borda score up to an additive error of $\eps mn$.
\end{definition}

The {\em maximin score} of an item $i$ is the minimum, over all items $j \neq i$, of the number
of rankings for which $i$ is ranked ahead of $j$.

\begin{definition}{\sc $(\eps, \varphi)$-List maximin}\label{def:listmaximin}\\
 Given an insertion-only stream over a universe $\LL(\UU)$ where $|\UU|=n$, find all items with maximin score more than $\varphi m$ along with their maximin score up to an additive error of $\eps m$, and report no items with maximin score less than $(\varphi-\eps)m$.
\end{definition}

\begin{definition}{\sc $\eps$-maximin}\label{def:maximin}\\
 Given an insertion-only stream over a universe $\LL(\UU)$ where $|\UU|=n$, find the maximum maximin score up to an additive error of $\eps m$.
\end{definition}

Notice that the maximum possible Borda score of an item is $m(n-1) = \Theta(mn)$ and the maximum possible maximin score of an item is $m$. This justifies the approximation factors in \Cref{def:listborda,def:borda,def:listmaximin,def:maximin}. We note that finding an item with maximum Borda score within additive $\eps m n$ or maximum maximin score within additive $\eps m$ corresponds to finding an approximate winner of an election (more precisely, what is known as an $\eps$-winner)~\citep{DeyB15}.

\section{Our Algorithms}

In this section, we present our upper bound results. All omitted proofs are in Appendix B. 
Before describing specific algorithms, we record some claims for later use. We begin with the following space efficient algorithm for picking an item uniformly at random from a universe of size $n$ below.

\begin{lemma}\label{lem:sampling_ub}
 Suppose $m$ is a power of two\footnote{In all our algorithms, whenever we pick an item with probability $p>0$, we can assume, without loss of generality, that $\nfrac{1}{p}$ is a power of two. If not, then we replace $p$ with $p'$ where $\nfrac{1}{p'}$ is the largest power of two less than $\nfrac{1}{p}$. This does not affect correctness and performance of our algorithms.}. Then there is an algorithm $\mathcal{A}$ for choosing an item with probability $\nfrac{1}{m}$ that has space complexity of  $O(\log\log m)$ bits and time complexity of $O(1)$ in the unit-cost RAM model.
\end{lemma}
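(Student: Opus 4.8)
The plan is to reduce sampling with probability $\nfrac{1}{m}$ to flipping fair coins, exploiting that $m = 2^k$ is a power of two (so $k = \log_2 m$ is an integer, as guaranteed by the convention in the footnote). The basic observation is that an event of probability $2^{-k}$ occurs exactly when $k$ independent fair coins all come up heads. First I would store only the exponent $k$, which as an integer occupies $\lceil \log_2(k+1)\rceil = O(\log\log m)$ bits; it is crucial here to store $k$ and never $m$ itself, since $m$ would already cost $\log m$ bits.

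Next I would describe the sampling procedure. Maintain a counter $c \in \{0, 1, \ldots, k\}$ initialized to $0$. Repeatedly draw one uniform random bit: on a $0$ bit, immediately declare failure (the item is not chosen) and halt; on a $1$ bit, increment $c$, and if $c = k$ declare success (the item is chosen) and halt. By construction the procedure succeeds precisely when the first $k$ drawn bits are all $1$, which happens with probability $2^{-k} = \nfrac{1}{m}$, giving correctness. The only persistent state is the counter $c$ together with $k$, so the space is $O(\log\log m)$ bits, and each drawn bit is consumed and discarded using $O(1)$ additional bits.

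For the time bound, I would invoke the unit-cost RAM model. Under the standard assumption that a uniformly random machine word can be produced in $O(1)$ time and that bitwise-AND, shifts, and zero-tests on words are $O(1)$-time operations, the item should be chosen iff the low $k$ bits of a freshly drawn random word $R$ are all zero, i.e. iff $R \wedge (2^k - 1) = 0$. This is a single constant-time test: the mask $2^k - 1$ is recomputed on the fly from the stored value $k$, and $R$ together with the mask are held only transiently for the duration of the $O(1)$ update, never persisting between stream elements. Hence the procedure runs in $O(1)$ worst-case time while keeping the persistent footprint at $O(\log\log m)$ bits. If one instead prefers to avoid word-level tricks, the sequential coin-flipping version above already runs in $O(1)$ expected time, since the expected number of bits drawn before the first $0$ is at most two.

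The main obstacle is the apparent tension between the $O(\log\log m)$ space bound and the $O(1)$ time bound: a single decision nominally depends on $\log m$ random bits, which must not be materialized in persistent memory. The key points to nail down carefully are therefore (i) that only the exponent $k$ (not $m$, and not the full mask) is stored between updates, costing $O(\log\log m)$ bits, and (ii) that the $\log m$-bit random word used in the constant-time variant is genuinely scratch space, reused across invocations and charged to the transient work of one RAM step rather than to the persistent space of the data structure.
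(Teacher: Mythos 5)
Your proposal is correct and follows essentially the same approach as the paper's proof, which generates a $(\log_2 m)$-bit random integer, accepts iff it is zero, and keeps only an $O(\log\log m)$-bit running summary rather than the integer itself. Your writeup is more explicit than the paper's one-line argument about why only the exponent $k$ (not $m$ or the random word) persists in memory and about the distinction between worst-case and expected $O(1)$ time, but the underlying idea is identical.
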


\begin{proof}
 We generate a $(\log_2 m)$-bit integer $C$ uniformly at random and record the sum of the digits in $C$. Choose the item only if the sum of the digits is $0$, i.e. if $C=0$.
\end{proof}
We remark that the algorithm in \Cref{lem:sampling_ub} has optimal space complexity as shown in \Cref{lem:sampling_lb} in Appendix B.

We remark that the algorithm in \Cref{lem:sampling_ub} has optimal space complexity as shown in \Cref{lem:sampling_lb} below which may be of independent interest. We also note that every algorithm needs to toss a fair coin at least $\Omega(\log m)$ times to perform any task with probability at least $\nfrac{1}{m}$.

\begin{proposition}\label{lem:sampling_lb}
 Any algorithm that chooses an item from a set of size $n$ with probability $p$ for $0< p \le \frac{1}{n}$, in unit cost RAM model must use $\Omega(\log\log m)$ bits of memory. 
\end{proposition}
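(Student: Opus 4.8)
The plan is to model any randomized algorithm that uses $s$ bits of memory as a branching process over its memory configurations, to show that such a process cannot realize an event whose probability is smaller than $2^{-(2^s-1)}$, and then to compare this floor against the target probability $p$ to force $s=\Omega(\log\log m)$.

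First I would fix the configuration graph. An algorithm using $s$ bits has at most $N:=2^s$ distinct memory configurations. Without loss of generality I would assume the algorithm consumes its randomness one fair coin at a time: a unit-cost read of a $w$-bit random word is replaced by $w$ consecutive single-bit reads, where the partially assembled word is kept in memory (it still occupies at most $s$ bits). I then build a directed graph on the at most $N$ configurations in which a deterministic step contributes a single edge of probability $1$ and a coin read contributes two edges, each of probability $\tfrac12$; I mark as accepting those configurations in which the algorithm commits to choosing the designated item. Writing $q_i$ for the probability of eventually reaching an accepting configuration from configuration $i$, the quantity of interest is $p=q_{s_0}$ for the start configuration $s_0$.

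The key step is to lower bound the least positive value of $p$. If $p>0$, then some accepting configuration is reachable from $s_0$; I take a shortest directed path from $s_0$ to such a configuration. Being shortest, this path is simple, so its length is at most $N-1$. Every edge has probability at least $\tfrac12$, hence the probability of traversing exactly this path is at least $2^{-(N-1)}$, and since this is just one way for the event to occur, $p\ge 2^{-(N-1)}=2^{-(2^s-1)}$. Combining this with the hypothesis $p\le \nfrac{1}{m}$ gives $2^{2^s-1}\ge m$, i.e. $2^s\ge \log_2 m+1$, so $s\ge \log_2(\log_2 m+1)=\Omega(\log\log m)$; the identical computation with a bound $p\le \nfrac{1}{n}$ on a size-$n$ universe yields $s\ge \log_2(\log_2 n+1)$.

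I expect the only real obstacle to be this modeling reduction in the unit-cost RAM model. Naively, a single step that reads a whole $w$-bit random word splits a configuration into up to $2^w$ successors, each of probability $2^{-w}$, and a shortest-path bound through such edges would only give the far weaker estimate $p\ge 2^{-w(N-1)}$. The point to argue carefully is therefore that any multi-bit read can be serialized into single fair-coin branches without increasing the memory beyond $s$ bits, so that every transition probability in the flattened graph is exactly $1$ or $\tfrac12$. Once this flattening is justified, the shortest-path argument and the subsequent arithmetic are immediate, and the resulting bound matches the $O(\log\log m)$ upper bound of \Cref{lem:sampling_ub} up to constant factors.
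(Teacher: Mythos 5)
Your argument is correct and is essentially the paper's proof in graph-theoretic clothing: the paper considers the accepting run that uses the fewest random bits and shows by a splice-out-the-repetition argument that the memory contents after successive coin flips must all be distinct, which is exactly your ``a shortest path to an accepting configuration is simple'' step, and both proofs then conclude from the probability of that single run being at least $2^{-(2^s-1)}$ (the paper phrases this as $p\ge 2^{-t}$, hence $t\ge\log_2 n$, with the number of distinct states $2^s\ge t$). The paper also implicitly assumes randomness is consumed one fair bit at a time, so your serialization discussion addresses a modeling point the paper glosses over rather than a genuine difference in the argument.
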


\begin{proof}
 The algorithm generates $t$ bits uniformly at random (the number of bits it generates uniformly at random may also depend on the outcome of the previous random bits) and finally picks an item from the say $x$. Consider a run $\mathcal{R}$ of the algorithm where it chooses the item $x$ with {\em smallest number of random bits getting generated}; say it generates $t$ random bits in this run $\mathcal{R}$. This means that in any other run of the algorithm where the item $x$ is chosen, the algorithm must generate at least $t$ many random bits. Let the random bits generated in $\mathcal{R}$ be $r_1, \cdots, r_t$. Let $s_i$ be the memory content of the algorithm immediately after it generates $i^{th}$ random bit, for $i\in [t]$, in the run $\mathcal{R}$. First notice that if $t < \log_2 n$, then the probability with which the item $x$ is chosen is more than $\frac{1}{n}$, which would be a contradiction. Hence, $t \ge \log_2 n$. Now we claim that all the $s_i$'s must be different. Indeed otherwise, let us assume $s_i = s_j$ for some $i<j$. Then the algorithm chooses the item $x$ after generating $t-(j-i)$ many random bits (which is strictly less than $t$) when the random bits being generated are $r_1, \cdots, r_i, r_{j+1}, \cdots, r_t$. This contradicts the assumption that the run $\mathcal{R}$ we started with chooses the item $x$ with smallest number of random bits generated.
\end{proof}

Our second claim is a standard result for universal families of hash functions.
\begin{lemma}\label{lem:hash}
 For $S\subseteq A$, $\delta \in (0,1)$, and universal family of hash functions $\mathcal{H} = \{ h | h: A \rightarrow [\lceil\nfrac{ |S|^2 }{\delta}\rceil]\}$:
 \[ \Pr_{{h \in_U \mathcal{H}}} [\exists i\ne j\in S, h(i) = h(j) ] \le \delta \]
\end{lemma}

\begin{proof}
 For every $i\ne j \in S$, since $\mathcal{H}$ is a universal family of hash functions, we have
$\Pr_{h \in_\text{U} \mathcal{H}} [h(i) = h(j)] \le \frac{1}{\lceil{ |S|^2 }/{\delta}\rceil}$.
 Now we apply the union bound to get
$ \Pr_{h \in_{{U}}\mathcal{H}} [\exists i\ne j\in S, h(i) = h(j)] \le \frac{|S|^2}{\lceil{ |S|^2 }/{\delta}\rceil} \le \delta$
 \end{proof}

Our third claim is folklore and also follows from the celebrated DKW inequality \cite{DKW}. 
We provide a simple proof here that works for constant $\delta$.
\begin{restatable}{lemma}{LemDKW}\label{lem:dkw}
 Let $f_i$ and $\hat{f}_i$ be the frequencies of an item $i$ in a stream $\SS$ and in a random sample $\TT$  of size $r$ from $\SS$, respectively. Then for $r \geq  2\eps^{-2}\log(2\delta^{-1})$, with probability $1-\delta$, 
for every universe item $i$ simultaneously,  
$$\left |\frac{\hat{f}_i}{r} - \frac{f_i}{m} \right | \leq \epsilon.$$
\end{restatable}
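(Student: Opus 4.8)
The plan is to reduce the simultaneous guarantee over all $n$ universe items to a single one-dimensional uniform convergence statement, thereby avoiding a naive union bound over the universe (which would cost a spurious $\log n$ factor and contradict the claimed sample size). First I would fix an arbitrary total order on the universe, identifying items with $1, \ldots, n$, and define the true and empirical cumulative distribution functions $F(k) = \frac{1}{m}\sum_{i\le k} f_i$ and $F_r(k) = \frac{1}{r}\sum_{i\le k}\hat f_i$. The key elementary observation is that for every item $i$,
\[
\frac{\hat f_i}{r} - \frac{f_i}{m} = \bigl(F_r(i) - F(i)\bigr) - \bigl(F_r(i-1) - F(i-1)\bigr),
\]
so that $\left|\frac{\hat f_i}{r} - \frac{f_i}{m}\right| \le 2\,\|F_r - F\|_\infty$ simultaneously for all $i$. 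Hence it suffices to show $\|F_r - F\|_\infty \le \eps/2$ with probability $1-\delta$.

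Assuming the sample $\mathcal{T}$ is drawn with replacement, each of the $r$ sampled items independently lands in the prefix $\{1, \ldots, k\}$ with probability exactly $F(k)$, so $r F_r(k)$ is a sum of $r$ i.i.d.\ indicators with mean $r F(k)$. For a single cut point $k$, Hoeffding's inequality (equivalently, the bound of \Cref{thm:chernoff}) gives $\Pr[|F_r(k) - F(k)| > \eps/2] \le 2\exp(-r\eps^2/2)$. The full DKW inequality \cite{DKW} upgrades this pointwise estimate to the uniform bound $\Pr[\|F_r - F\|_\infty > \eps/2] \le 2\exp(-r\eps^2/2)$ with no dependence on $n$; setting the right-hand side at most $\delta$ yields exactly $r \ge 2\eps^{-2}\log(2\delta^{-1})$, which proves the lemma in full.

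For a self-contained argument covering the case of constant $\delta$, I would replace the black-box DKW step by a monotonicity/bucketing reduction: choose $s = \BigO(1/\eps)$ quantile points $k_0 \le k_1 \le \cdots \le k_s$ so that $F$ increases by at most $\eps/4$ across each consecutive pair, and control the $\BigO(1/\eps)$ deviations $|F_r(k_j) - F(k_j)|$ by $\eps/4$ via Hoeffding together with a union bound over just these grid points. Since both $F$ and $F_r$ are non-decreasing, any cut point $k$ lies in some bucket $[k_{j-1}, k_j)$ and satisfies $|F_r(k) - F(k)| \le \max_j |F_r(k_j) - F(k_j)| + \max_j \bigl(F(k_j) - F(k_{j-1})\bigr) \le \eps/2$, giving the desired uniform bound everywhere.

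The main obstacle is precisely this uniformity over all $n$ items: a direct union bound would force the sample size to grow like $\eps^{-2}\log(n/\delta)$, whereas the lemma demands independence from $n$. The CDF reduction is what removes $n$ from the picture---completely in the DKW route, and down to $\BigO(1/\eps)$ points in the elementary route (whose union bound then costs only an extra $\log\eps^{-1}$ factor, harmless for constant $\delta$). A minor technical point to settle is the sampling model: if $\mathcal{T}$ is a reservoir sample taken without replacement, then $r F_r(k)$ is hypergeometric rather than binomial, but Hoeffding's inequality still applies to sampling without replacement, so the same concentration bounds go through unchanged.
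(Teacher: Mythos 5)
Your proposal is correct, and its self-contained part takes a genuinely different route from the paper's. Like the paper, you ultimately appeal to the DKW inequality for the full statement, but you add the explicit reduction the paper omits: telescoping each $\hat f_i/r - f_i/m$ as a difference of two CDF deviations, so that $\bigl|\hat f_i/r - f_i/m\bigr| \le 2\sup_k|F_r(k)-F(k)|$, which makes the citation airtight and recovers the stated threshold $r \ge 2\eps^{-2}\log(2\delta^{-1})$ exactly. Where you diverge is the elementary constant-$\delta$ argument. You discretize the CDF into $O(1/\eps)$ quantile buckets and union-bound Hoeffding over that grid; the paper instead applies Chebyshev to each item separately and exploits the fact that the resulting per-item failure probability is proportional to $f_i/m$, so the union bound over all $n$ items telescopes to $\sum_i 4f_i/(Cm) = 4/C$, with no dependence on $n$ and no grid at all. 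That frequency-weighted union bound is the one idea absent from your writeup, and it buys something concrete: the paper's elementary argument needs only $r = O(\eps^{-2})$ for constant $\delta$, whereas your bucketing needs $r = \Omega(\eps^{-2}\log\eps^{-1})$ to absorb the union bound over $\Theta(1/\eps)$ grid points --- a loss you acknowledge but slightly undersell, since the downstream algorithms sample exactly $\Theta(\eps^{-2})$ items. Conversely, your CDF reduction is the only one of the two elementary arguments with any hope of extending to small $\delta$ (Chebyshev cannot give exponential tails), and your care about sampling with versus without replacement is a legitimate point the paper glosses over.
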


\begin{proof}[Proof for constant $\delta$]
 This follows by Chebyshev's inequality and a union bound. Indeed, consider a given $i \in [n]$ with frequency $f_i$ and suppose
we sample each of its occurrences pairwise-independently with probability $r/m$, for a parameter $r$.  
Then the expected number ${\bf E}[\hat{f_i}]$ of sampled occurrences is 
$f_i \cdot r/m$ and the variance ${\bf Var}[\hat{f_i}]$ is $f_i \cdot r/m (1-r/m) \leq f_ir/m$. Applying Chebyshev's inequality, 
$$
\Pr \left [\left | \hat{f_i} - {\bf E}[\hat{f_i}] \right | \geq \frac{r \epsilon}{2} \right ] \leq \frac{{\bf Var}[\hat{f_i}]}{(r \epsilon/2)^2}
\leq \frac{4f_ir}{m r^2 \epsilon^2}.$$
Setting $r = \frac{C}{\epsilon^2}$ for a constant $C > 0$ makes this probability at most $\frac{4f_i}{C m}$. By the union bound, if we
sample each element in the stream independently with probability $\frac{r}{m}$, then the probability there exists an $i$ for which
$|\hat{f_i} - {\bf E}[\hat{f_i}]| \geq \frac{r \epsilon}{2}$ is at most $\sum_{i = 1}^n \frac{4 f_i}{C m} \leq \frac{4}{C}$, which for
$C \geq 400$ is at most $\frac{1}{100}$, as desired. 
\end{proof}

For now, assume that the length of the stream is known in advance; we show in \cref{sec:unknown} how to remove this assumption.

\subsection{List Heavy Hitters Problem}\label{sec:lhh}
For the {\sc List heavy hitters} problem, we present two algorithms. The first is slightly suboptimal, but simple conceptually and already constitutes a very large improvement in the space complexity over known algorithms. We expect that this algorithm could be useful in practice as well. The second algorithm is more complicated, building on ideas from the first algorithm, and achieves the optimal space complexity upto constant factors.

We note that both algorithms proceed by sampling $O(\epsilon^{-2} \ln (1/\delta))$ stream items and updating a data structure as the stream progresses. In
both cases, the time to update the data structure is bounded
by $O(1/\epsilon)$, and so, under the standard assumption that the
length of the stream is at least
$\textrm{poly}(\ln(1/\delta) \epsilon)$, 
the time to perform this update can be
spread out across the next $O(1/\epsilon)$ stream updates, 
since with large probability there will be no items sampled among these 
next $O(1/\epsilon)$ stream updates. Therefore, we achieve worst-case\footnote{We emphasize that this is stronger than an amortized guarantee, as on every insertion, the cost will be $O(1)$.}
update time of $O(1)$.

\subsubsection{A Simpler, Near-Optimal Algorithm}

\begin{restatable}{theorem}{ThmMisra}\label{thm:heavy_hitters}
 Assume the stream length is known beforehand. Then there is a randomized one-pass algorithm $\mathcal{A}$ for the \textsc{$(\eps, \varphi)$-List heavy hitters} problem which succeeds with probability at least $1-\delta$ using $O\left(\eps^{-1}(\log\eps^{-1} + \log\log\delta^{-1}) + \varphi^{-1}\log n + \log\log m \right)$ bits of space. Moreover, $\mathcal{A}$ has an update time of $O(1)$ and reporting time linear in its output size. 
\end{restatable}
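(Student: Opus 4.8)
The plan is to reduce the problem on the full stream to the same problem on a small uniform sample, and then to solve it on the sample with a Misra--Gries-style frequency summary whose per-counter cost has been stripped of its dependence on $\log n$. First I would subsample the stream: using the primitive of \Cref{lem:sampling_ub}, retain each arriving item independently with probability $p = r/m$, where $r = \Theta(\eps^{-2}\log\delta^{-1})$ and $1/p$ is rounded to a power of two. Each sampling decision then costs $\BigO(\log\log m)$ bits and $\BigO(1)$ time, which accounts for the additive $\log\log m$ term. By the DKW-type guarantee of \Cref{lem:dkw}, with probability at least $1-\delta$ every scaled sample frequency $\hat f_i\cdot(m/r)$ is within $\tfrac{\eps}{3}m$ of the true frequency $f_i$ simultaneously for all items; hence it suffices to recover, on the sampled substream of length $\approx r$, all items of sample-frequency above $(\varphi-\tfrac{\eps}{3})r$ together with their sample counts to additive error $\tfrac{\eps}{3}r$.

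On the sample I would run a Misra--Gries summary with $k = \Theta(\eps^{-1})$ counters. Standard analysis gives, for every item, an estimate of its sample count with additive error at most $r/(k+1) = \BigO(\eps r)$; after rescaling by $m/r$ and composing with the sampling error this yields frequency estimates accurate to $\pm\eps m$ and correctly separates items above $\varphi m$ from those below $(\varphi-\eps)m$. Each stored count is bounded by $r$, so it fits in $\BigO(\log r) = \BigO(\log\eps^{-1}+\log\log\delta^{-1})$ bits. The remaining difficulty is that a textbook Misra--Gries structure also stores a full $\log n$-bit identity per counter, which would cost $\BigO(\eps^{-1}\log n)$ and overshoot the target. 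To avoid this I would store, inside each counter, only a hashed identity $h(a)$ drawn from a universal family (\Cref{lem:hash}) into a range of size $\mathrm{poly}(\eps^{-1})\cdot\mathrm{polylog}(\delta^{-1})$, costing $\BigO(\log\eps^{-1}+\log\log\delta^{-1})$ bits, and maintain the genuine $\log n$-bit identities only for the current candidate heavy hitters, of which there are at most $\BigO(\varphi^{-1})$; this is exactly the $\BigO(\varphi^{-1}\log n)$ term. Summing the three contributions gives the claimed space bound.

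The main obstacle is this identity-compression step: the hash range must be small enough to fit the per-counter budget yet large enough that the $\BigO(\varphi^{-1})$ truly heavy items receive distinct hash values, so that their counts are never conflated and no light item is inflated past the reporting threshold. A direct union bound over collisions would force a $\log\delta^{-1}$ (rather than $\log\log\delta^{-1}$) dependence, so the argument must exploit that collisions need only be avoided among the few heavy items and must verify each reported candidate against the separately stored full-identity list; getting these constants to line up is the delicate part. The remaining claims are routine: for the $\BigO(1)$ worst-case update time I would amortize each $\BigO(\eps^{-1})$-cost Misra--Gries update (triggered only on a sampled item) over the following $\Theta(\eps^{-1})$ insertions, using that under the assumption $m \geq \mathrm{poly}(\eps^{-1}\log\delta^{-1})$ no two samples fall within $\BigO(\eps^{-1})$ consecutive positions with high probability; and reporting simply scans the $\BigO(\eps^{-1})$ counters and emits those above threshold with their recorded identities, taking time linear in the output size.
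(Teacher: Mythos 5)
Your proposal follows essentially the same route as the paper's proof of \Cref{thm:heavy_hitters}: sample $\Theta(\eps^{-2}\log\delta^{-1})$ items via \Cref{lem:sampling_ub} and \Cref{lem:dkw}, run Misra--Gries with $\Theta(\eps^{-1})$ counters keyed by hashed ids, and keep full $\log n$-bit identities only for the top $O(\varphi^{-1})$ candidates in a second table. The only divergence is the collision step you flag as delicate: the paper does not attempt the ``collision-free only among heavy items plus verification'' refinement you sketch, but simply hashes into a range of size $O(\ell^{2}/\delta)$ so that, by \Cref{lem:hash}, \emph{all} sampled items receive distinct hashes with probability $1-\delta/3$, and lives with the resulting per-key cost.
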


\paragraph{Overview}

The overall idea is as follows. We sample $\el=O(\eps^{-2})$ many items from the stream uniformly at random as well as hash the id's (the word ``id'' is short for identifier) of the sampled elements into a space of size $O(\eps^{-4})$. Now, both the stream length as well as the universe size are $\text{poly}(\eps^{-1})$. From \cref{lem:dkw}, it suffices to solve the heavy hitters problem on the sampled stream. From \cref{lem:hash}, because the hash function is chosen from a universal family, the sampled elements have distinct hashed id's. We can then feed these elements into a standard Misra-Gries data structure with $\eps^{-1}$ counters, incurring space $O(\eps^{-1} \log \eps^{-1})$. Because we want to return the unhashed element id's for the heavy hitters, we also use $\log n$ space for recording the $\phi^{-1}$ top items according to the Misra-Gries data structure and output these when asked to report.

\begin{algorithm}[!htbp]
  \caption{for \textsc{$(\eps, \varphi)$-List heavy hitters}
  \label{alg:heavy_hitters}}
  \begin{algorithmic}[1]
    \Require{A stream $\SS$ of length $m$ over $\UU = [n]$; let $f(x)$ be the frequency of $x\in \UU$ in $\SS$}    
    \Ensure{A set $X\subseteq \UU$ and a function $\hat{f}:X \rightarrow \NB$ such that if $f(x)\ge \varphi m$, then $x\in X$ and $f(x) -\eps m \le \hat{f}(x) \le f(x) + \eps m$ and if $f(y) \le (\phi-\eps)m$, then $y\notin X$ for every $x, y \in \UU$}
    \Initialize{\vspace{-2ex}
     \Let{$\ell$}{$\nfrac{6\log(\nfrac{6}{\delta})}{\eps^2}$}\label{alg:ell}
     \State Hash function $h$ uniformly at random from a universal family $\HH \subseteq \{h:[n]\rightarrow \lceil \nfrac{4\ell^2}{\delta} \rceil\}$.
     \State An empty table $\TT_1$ of (key, value) pairs of length ${\eps^{-1}}$. Each key entry of $\TT_1$ can store an integer in $[0, \lceil \nfrac{400\ell^2}{\delta} \rceil]$ and each value entry can store an integer in $[0, 11\ell]$.\label{alg:m_table} \Comment{The table $\TT_1$ will be in sorted order 	by value throughout.}
     \State An empty table $\TT_2$ of length $\nfrac{1}{\varphi}$. Each entry of $\TT_2$ can store an integer in $[0, n]$. \Comment{The entries of $\TT_2$ will correspond to ids of the keys in $\TT_1$ of the highest $\nfrac{1}{\varphi}$ values}
    }
~\\
\Procedure{Insert}{x}
            \State With probability $p = \nfrac{6\ell}{m}$, continue. Otherwise, \Return.
       \State Perform Misra-Gries update using $h(x)$ maintaining $\TT_1$ sorted by values.\label{alg:u1}
       \If{The value of $h(x)$ is among the highest $\nfrac{1}{\varphi}$ valued items in $\TT_1$}
        \If{$x_i$ is not in $\TT_2$}
         \If{$\TT_2$ currently contains $\nfrac{1}{\varphi}$ many items}
         \State For $y$ in $\TT_2$  such that $h(y)$ is not among the highest $\nfrac{1}{\varphi}$ valued items in $\TT_1$, replace $y$ with $x$. 
         \Else
         \State We put $x$ in $\TT_2$.
         \EndIf
        \EndIf
        \State	Ensure that elements in $\TT_2$ are ordered according to corresponding values in $\TT_1$.
       \EndIf\label{alg:u2}
\EndProcedure
~\\
\Procedure{Report}{~}
    \State \Return items in $\TT_2$ along with their corresponding values in $\TT_1$ 
    \EndProcedure
  \end{algorithmic}
\end{algorithm}
\clearpage

\begin{proof}[Proof of \cref{thm:heavy_hitters}]
 The pseudocode of our \textsc{$(\eps, \varphi)$-List heavy hitters} algorithm is in \Cref{alg:heavy_hitters}. By \cref{lem:dkw}, if we select a subset $\mathcal{S}$ of size at least $\ell = 6\eps^{-2}{\log({6}\delta^{-1})}$ uniformly at random from the stream, then $\Pr[\forall i\in \mathcal{U}, |(\nfrac{\hat{f_i}}{|\SS|}) - (\nfrac{f_i}{n})| \le \eps] \ge 1 - \nfrac{\delta}{3}$, where $f_i$ and $\hat{f_i}$ are the frequencies of item $i$ in the input stream and $\mathcal{S}$ respectively. First we show that with the choice of $p$ in line \ref{alg:p} in \Cref{alg:heavy_hitters}, the number of items sampled is at least $\ell$ and at most $11\ell$ with probability at least $(1-\nfrac{\delta}{3})$. Let $X_i$ be the indicator random variable of the event that the item $x_i$ is sampled for $i\in[m]$. Then the total number of items sampled $X = \sum_{i=1}^m X_i$. We have $\EB[X] = 6\ell$ since $p = \nfrac{6\ell}{m}$. Now we have the following.
 \[ \Pr[X \le \ell \text{ or } X\ge 11\ell] \le \Pr[|X-\EB[X]|\ge 5\ell] \le \nfrac{\delta}{3} \]
{ The inequality follows from the Chernoff bound and the value of $\ell$. }From here onwards we assume that the number of items sampled is in $[\ell, 11\ell]$.
 
 We use (a modified version of) the Misra-Gries algorithm~\citep{misra82} to estimate the frequencies of items in $\mathcal{S}$. The length of the table in the Misra-Gries algorithm is $\eps^{-1}$. We pick a hash function $h$ uniformly at random from a universal family $\mathcal{H} = \{h | h:[n]\rightarrow \lceil\nfrac{4\ell^2}{\delta}\rceil\}$  of hash functions of size $|\HH| = O(n^2)$. Note that picking a hash function $h$ uniformly at random from $\mathcal{H}$ can be done using $O(\log n)$ bits of space.  
\Cref{lem:hash} shows that there are no collisions in $\mathcal{S}$ under this hash function $h$ with probability at least $1-{\delta}/{3}$. From here onwards we assume that there is no collision among the ids of the sampled items under the hash function $h$.
 
 We modify the Misra-Gries algorithm as follows. Instead of storing the id of any item $x$ in the Misra-Gries table (table $\TT_1$ in line \ref{alg:m_table} in \Cref{alg:heavy_hitters}) we only store the hash $h(x)$ of the id $x$. We also store the ids (not the hash of the id) of the items with highest $\nfrac{1}{\varphi}$ values in $\TT_1$ in another table $\TT_2$. Moreover, we always maintain the table $\TT_2$ consistent with the table $\TT_1$ in the sense that the $i^{th}$ highest valued key in $\TT_1$ is the hash of the $i^{th}$ id in $\TT_2$. 
 
 Upon picking an item $x$ with probability $p$, we create an entry corresponding to $h(x)$ in $\TT_1$ and make its value one if there is space available in $\TT_1$; decrement the value of every item in $\TT_1$ by one if the table is already full; increment the entry in the table corresponding to $h(x)$ if $h(x)$ is already present in the table. When we decrement the value of every item in $\TT_1$, the table $\TT_2$ remains consistent and we do not need to do anything else. Otherwise there are three cases to consider. Case $1$: $h(x)$ is not among the $\nfrac{1}{\varphi}$ highest valued items in $\TT_1$. In this case, we do not need to do anything else. Case $2$: $h(x)$ was not among the $\nfrac{1}{\varphi}$ highest valued items in $\TT_1$ but now it is among the $\nfrac{1}{\varphi}$ highest valued items in $\TT_1$. In this case the last item $y$ in $\TT_2$ is no longer among the $\nfrac{1}{\varphi}$ highest valued items in $\TT_1$. We replace $y$ with $x$ in $\TT_2$. Case 3: $h(x)$ was among the $\nfrac{1}{\varphi}$ highest valued items in $\TT_1$. 
 When the stream finishes, we output the ids of all the items in table $\TT_2$ along with the values corresponding to them in table $\TT_1$. Correctness follows from the correctness of the Misra-Gries algorithm and the fact that there is no collision among the ids of the sampled items.
\end{proof}

\subsubsection{An Optimal Algorithm}

\begin{restatable}{theorem}{ThmOpt}\label{thm:heavy_hitters2}
 Assume the stream length is known beforehand. Then there is a randomized one-pass algorithm $\mathcal{A}$ for the \textsc{$(\eps, \varphi)$-List heavy hitters} problem which succeeds with constant probability using $O\left(\eps^{-1}\log \phi^{-1}  + \phi^{-1}\log n + \log\log m \right)$ bits of space. Moreover,  $\mathcal{A}$ has an update time of $O(1)$ and reporting time linear in its output size.
\end{restatable}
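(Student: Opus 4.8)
The plan is to reuse, essentially verbatim, the reduction underlying \cref{thm:heavy_hitters}, and to concentrate all the new work in replacing the flat Misra--Gries table by a space-optimal counter structure. First I would subsample the stream down to $N = \Theta(\epsilon^{-2})$ retained items using the $O(\log\log m)$-bit sampling primitive of \cref{lem:sampling_ub}; by \cref{lem:dkw} every item's relative frequency is then preserved up to an additive $\epsilon/2$, so it suffices to report the (rescaled) heavy hitters of the sampled stream, whose length is $\Theta(\epsilon^{-2})$. Hashing the sampled ids through a universal family (\cref{lem:hash}) into a range $\mathrm{poly}(\epsilon^{-1})$ collapses the effective universe to $\mathrm{poly}(\epsilon^{-1})$ while keeping the genuinely heavy ids collision-free with constant probability. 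After this reduction the residual task is purely combinatorial: on a stream of length $N=\Theta(\epsilon^{-2})$, find every item of count at least $\varphi N$ and estimate counts within $\epsilon N$. The entire difficulty is to do this in $O(\epsilon^{-1}\log\varphi^{-1})$ bits rather than the $O(\epsilon^{-1}\log\epsilon^{-1})$ bits that a table of $\epsilon^{-1}$ counters, each holding a $\mathrm{poly}(\epsilon^{-1})$-range id and a count up to $\Theta(\epsilon^{-2})$, would spend.

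The source of the extra $\log\epsilon^{-1}$ factor in \cref{thm:heavy_hitters} is thus twofold --- the id field and the count field of each slot both cost $\Theta(\log\epsilon^{-1})$ bits --- and the plan is to attack both by separating the at most $\varphi^{-1}$ truly heavy items from the $\epsilon^{-1}$ bookkeeping counters needed only for accuracy. I would maintain a small side table, analogous to $\TT_2$, that alone retains full $\log n$-bit ids, which produces the $\varphi^{-1}\log n$ term. For the remaining counters I would force the id representation down to $O(\log\varphi^{-1})$ bits by hashing into a range of size only $\mathrm{poly}(\varphi^{-1})$; since there can be $\epsilon^{-1}\gg\varphi^{-1}$ live counters, such a short hash necessarily conflates many items, so the structure must be made \emph{collision-tolerant} rather than collision-free. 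The idea is to run a small constant number of independent short-hash tables so that, for any fixed genuinely heavy item, at least one of them isolates it from enough of the competing mass to estimate its count within $\epsilon N$. Counts would be stored only to the precision each slot warrants --- coarsely for the light bookkeeping counters and, for the flagged heavy slots, via a shared-scale (floating-point-style) encoding --- so that the amortized per-counter cost is $O(\log\varphi^{-1})$, which is safe because a Misra--Gries-type estimate already carries additive slack $\epsilon N$.

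The main obstacle is the collision analysis for the short-id tables: because ids are stored at reduced precision, distinct light items get conflated, and I must show simultaneously that this never inflates a sub-threshold item past the reporting cutoff and never robs a true heavy hitter of more than $\epsilon N$ of its count. The quantitative lever is that there are at most $\varphi^{-1}$ heavy items to protect while the accuracy demand is only $\epsilon N$, so the tolerable amount of spurious mass per protected item is $\Theta(\epsilon N)$, which is exactly what a handful of independent $\mathrm{poly}(\varphi^{-1})$-range hashes can guarantee; making this charging argument tight against the $O(\epsilon^{-1}\log\varphi^{-1})$ budget, and reconciling it with the Misra--Gries decrement dynamics under collisions, is the crux of the proof. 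Once correctness is in hand, the performance guarantees transfer from \cref{thm:heavy_hitters}: each sampled update touches $O(1)$ counters, the $O(1/\epsilon)$-time promotion and demotion work is spread across the $\Theta(1/\epsilon)$ updates that elapse before the next sample (with high probability none of the intervening updates is sampled), giving worst-case $O(1)$ update time, and the reporter simply reads off the side table in time linear in the output size.
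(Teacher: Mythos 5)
Your outer reduction (subsample to $N=\Theta(\epsilon^{-2})$ items via \cref{lem:sampling_ub} and \cref{lem:dkw}, keep a $\phi^{-1}$-slot side table with full $\log n$-bit ids for reporting) matches the paper, and your diagnosis that the $\log\epsilon^{-1}$ overhead of \cref{thm:heavy_hitters} comes from the id and count fields of the $\epsilon^{-1}$ bookkeeping counters is the right starting point. But the two mechanisms you propose for removing that overhead both have quantitative gaps.

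First, the id compression. You hash the bookkeeping ids into a range of size $\mathrm{poly}(\varphi^{-1})$ and argue that the tolerable spurious mass per protected item is $\Theta(\epsilon N)$. The problem is that with $N=\Theta(\epsilon^{-2})$ sampled items spread over $\mathrm{poly}(\varphi^{-1})$ buckets, the \emph{expected} colliding mass in a bucket is $\Theta(\epsilon^{-2}/\mathrm{poly}(\varphi^{-1}))$, which exceeds the budget $\epsilon N=\Theta(\epsilon^{-1})$ whenever $\varphi^{-1}=o(\epsilon^{-1})$ --- e.g.\ for constant $\phi$ every bucket carries $\Theta(\epsilon^{-2})$ of foreign mass in \emph{every} independent repetition, so no bounded number of short-hash tables isolates anything. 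The hash range must be $\Omega(\epsilon^{-1})$, and then the right move (the one the paper makes) is not to store hashed ids at all but to use a direct-addressed array of length $\Theta(\epsilon^{-1})$, so the id is the array index and costs nothing; identification of candidates is delegated entirely to the $2\phi^{-1}$-slot Misra--Gries table on full ids. Separately, even where isolation is achievable, ``a small constant number'' of repetitions cannot protect all $\phi^{-1}$ heavy items simultaneously: a union bound forces per-item failure probability $O(\phi)$, hence $\Theta(\log\phi^{-1})$ independent repetitions with a median, which is where the $\epsilon^{-1}\log\phi^{-1}$ term really comes from.

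Second, the count compression. Once you need $\Theta(\log\phi^{-1})$ repetitions, each repetition's table must cost only $O(\epsilon^{-1})$ bits in total, i.e.\ $O(1)$ amortized bits per bucket, and neither coarse rounding nor a shared-scale floating-point encoding delivers this: to estimate a count of magnitude up to $N=\epsilon^{-2}$ within additive $\epsilon N$ you still need relative precision $\epsilon$, hence $\Theta(\log\epsilon^{-1})$ mantissa bits per bucket, and you are back to $\epsilon^{-1}\log\epsilon^{-1}$. The paper's mechanism is inherently randomized: each occurrence increments a bucket's counter only with probability $\Theta(\epsilon^2 f_i)$ (an ``accelerated'' counter, with the rate tracked across doubling epochs by an auxiliary $\epsilon$-subsampled exact table), so the estimator is unbiased with variance $O(\epsilon^{-2})$, while the \emph{sum} of all stored counter values in one repetition is $O(\epsilon^{-1})$ in expectation, which by the usual geometric-bucketing argument yields $O(\epsilon^{-1})$ bits for the whole table. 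Without this (or an equivalent randomized magnitude-reduction device), the charging argument you describe cannot close against the $O(\epsilon^{-1}\log\phi^{-1})$ budget.
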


Note that in this section, for the sake of simplicity, we ignore floors and ceilings and state the results for a constant error probability, omitting the explicit dependence on $\delta$.

\begin{algorithm}[!htbp]
  \caption{for \textsc{$(\eps, \varphi)$-List heavy hitters}
  \label{alg:heavy_hitters2}}
  \begin{algorithmic}[1]
    \Require{A stream $\SS$ of length $m$ over universe $\UU = [n]$; let $f(x)$ be the frequency of $x\in \UU$ in $\SS$}    
    \Ensure{A set $X\subseteq \UU$ and a function $\hat{f}:X \rightarrow \NB$ such that if $f(x)\ge \varphi m$, then $x\in X$ and $f(x) -\eps m \le \hat{f}(x) \le f(x) + \eps m$ and if $f(y) \le (\phi-\eps)m$, then $y\notin X$ for every $x, y \in \UU$}
    \Initialize{\vspace{-2ex}
     \Let{$\ell$}{$10^5  \eps^{-2}$}\label{alg:ell}
     \Let{$s$}{0}
     \State Hash functions $h_1, \dots, h_{200\log(12\phi^{-1})}$ uniformly at random from a universal family $\HH \subseteq \{h:[n]\to [\nfrac{100}{\eps}]\}$.
     \State An empty table $\TT_1$ of (key, value) pairs of length ${2}{\phi^{-1}}$. Each key entry of $\TT_1$ can store an element of $[n]$ and each value entry can store an integer in $[0, 10\ell]$.\label{alg:m_table}
     \State An empty table $\TT_2$ with $100 \eps^{-1}$ rows and $200 \log (12\phi^{-1})$ columns. Each entry of $\TT_2$ can store an integer in $[0, 100\eps \ell]$.
     \State An empty $3$-dimensional table $\TT_3$ of size at most $100 \eps^{-1} \times 200 \log(12\phi^{-1}) \times 4 \log(\eps^{-1})$. Each entry of $\TT_3$ can store an  integer in $[0, 10 \ell]$. \Comment These are upper bounds; not all the allowed cells will actually be used. 
    }
    ~\\
    \Procedure{Insert}{$x$}
      \State With probability $\nfrac{\ell}{m}$, increment $s$ and continue. Else, \Return\label{alg:p}
      \State Perform Misra-Gries update on $\TT_1$ with $x$.\label{alg:u1}
      \For{$j \gets 1 \text{ to } 200\log(12\phi^{-1})$}
      	        \State $i \gets h_j(x)$
	   \State With probability $\eps$, increment $\TT_2[i,j]$
      \State $t \gets \lfloor \log(10^{-6} \TT_2[i,j]^{2})\rfloor$ and $p \gets \min(\eps \cdot 2^{t}, 1)$\label{alg:tpcomp} 
	      \If{$t \geq 0$}
	      \State With probability $p$, increment $\TT_3[i, j, t]$ \label{alg:t3inc}
	      \EndIf 
      \EndFor
      \EndProcedure
 ~\\   
    \Procedure{Report}{~}
    \State $X \gets \emptyset$
    \For{{each key} $x \text{ with nonzero value in } \TT_1$}
    \For{$ j \gets 1 \text{ to } 200 \log(12\phi^{-1})$}
    \State $\hat{f}_j(x) \gets \sum_{t=0}^{4 \log(\eps^{-1})} \TT_3[h(x), j, t]/\min(\eps 2^{t}, 1)$ \label{lin:fj}
    \EndFor
    \State $\hat{f}(x) \gets \text{median}(\hat{f}_1, \dots, \hat{f}_{10 \log \phi^{-1}})$ \label{lin:fhat}
    \If{$\hat{f}(x) \geq (\phi -\eps/2)s$}
    \State $X \gets X \cup \{x\}$    
   \EndIf 
    \EndFor
    \State \Return $X, \hat{f}$
    \EndProcedure
  \end{algorithmic}
\end{algorithm}

\paragraph{Overview}
As in the simpler algorithm, we sample $\ell = O(\eps^{-2})$ many stream elements and solve the $(\eps/2,\phi)$-\textsc{List heavy hitters} problem on this sampled stream. Also, the Misra-Gries algorithm for $(\phi/2, \phi)$-\textsc{List heavy hitters} returns a candidate set of $O(\phi^{-1})$ items containing all items of frequency at least $\phi \ell$. It remains to count the frequencies of these $O(\phi^{-1})$ items with upto $\eps \ell/2 = O(\eps^{-1})$ {\em additive} error, so that we can remove those whose frequency is less than $(\phi - \eps/2)\ell$.

Fix some item $i \in [n]$, and let $f_i$ be $i$'s count in the sampled stream. A natural approach to count $f_i$ approximately is to increment a 
 counter probabilistically, instead of deterministically, at every occurrence of $i$. Suppose that we increment a counter with probability $0\leq p_i \leq 1$ whenever item $i$ arrives in the stream. Let the value of the counter be $\hat{c}_i$, and let $\hat{f}_i = \hat{c}_i/p_i$. We see that $\E{\hat{f}_i} = f_i$ and $\mathsf{Var}[\hat{f}_i]  \leq f_i/p_i$. It follows that if $p_i = \Theta(\eps^2 f_i)$, then $\mathsf{Var}[\hat{f}_i] = O(\eps^{-2})$, and hence, $\hat{f}_i$ is an unbiased estimator of $f_i$ with additive error $O(\eps^{-1})$  with constant probability. We call such a counter an {\em accelerated counter} as the probability of incrementing accelerates with increasing counts. For each $i$, we can maintain $O(\log \phi^{-1})$ accelerated counters independently and take their median to drive the probability of deviating by more than $O(\eps^{-1})$ down to $O(\phi)$. So, with constant probability, the frequency for each of the $O(\phi^{-1})$ items in the Misra-Gries data structure is estimated within $O(\eps^{-1})$ error, as desired.

However, there are two immediate issues with this approach. The first problem is that we may need to keep counts for $\Omega(\ell) = \Omega(\eps^{-2})$ distinct items, which is too costly for our purposes. To get around this, we use a hash function from a universal family to hash the universe to a space of size $u = \Theta(\eps^{-1})$, and we work throughout with the hashed id's. We can then show that the space complexity for each iteration is $O(\eps^{-1})$. Also, the accelerated counters now estimate frequencies of hashed id's instead of actual items, but because of universality, the expected frequency of any  hashed id is $\ell/u = O(\eps^{-1})$, our desired error bound.

The second issue is that we need a constant factor approximation of $f_i$, so that we can set $p_i$ to $\Theta	(\eps^2 f_i)$. But because  the algorithm needs to be one-pass, we cannot first compute $p_i$ in one pass and then run the accelerated counter in another. So, we divide the stream into {\em epochs} in which  $f_i$ stays within a factor of $2$, and use a different $p_i$ for each epoch. In particular, set $p_i^t = \eps \cdot 2^{t}$ for $0 \leq t \leq \log(p_i/\eps)$.  We want to keep a running estimate of $i$'s count to within a factor of $2$ to know if the current epoch should be incremented.
For this, we subsample each element of the stream with probability $\eps$ independently  and maintain exact counts for the observed hashed id's. It is easy to see that this requires only $O(\eps^{-1})$ bits in expectation. Consider any $i \in [u]$ and the prefix of the stream upto $b \leq \ell$, and let $f_i(b)$ be $i$'s frequency in the prefix, let $\bar{c}_i(b)$ be $i$'s frequency among the samples in the prefix, and $\bar{f}_i(b) = \frac{\bar{c}_i(b)}{\eps}$. We see that $\E{\bar{f}_i(b)} = f_i(b)$, 
and $\mathsf{Var}[\bar{f}_i(b)] \leq \frac{f_i(b)}{\eps}$.  By Chebyshev, for any fixed $b$, $\Pr[|\bar{f}_i(b) - f_i(b)| > f_i(b)/\sqrt{2}] \leq \frac{2}{f_i(b) \eps}$, and hence, 
can show that $\bar{f}_i(b)$ is a $\sqrt{2}$-factor approximation of $f_i(b)$ with probability $1 - O((f_i(b) \eps)^{-1})$. Now, let $p_i(b) = \Theta(\eps^2 f_i(b))$, and for any epoch $t$, set $b_{i,t} = \min\{b: p_i(b) > p_i^{t-1}\}$. The last makes sense because $p_i(b)$ is non-decreasing with $b$. Also, note that $f_i(b_{i,t}) = \Omega(2^{t/2}/\eps)$. So, by the union bound, the probability that there exists $t$ for which $\bar{f}_i(b_{i,t})$ is not a $\sqrt2$-factor approximation of $f_i(b_{i,t})$ is at most $\sum_t \frac{1}{\Omega(f_i(b_{i,t}) \eps)} = \sum_t \frac{1}{\Omega(2^{t/2})}$, a small constant. In fact, it follows then that with constant probability, for all $b \in [\ell]$,  $\bar{f}_i(b)$ is a $2$-factor approximation of $f_i(b)$.
Moreover, we show that for any $b \in [\ell]$, $\bar{f}_i(b)$ is a $4$-factor approximation of $f_i(b)$ with constant probability. By repeating $O(\log \phi^{-1})$ times independently and taking the median, the error probability can be driven down to $O(\phi)$. 

Now, for every hashed id $i \in [u]$, we need not one accelerated counter but $O(\log(\eps f_i))$ many, one corresponding to each epoch $t$. When an element with hash id $i$ arrives at position $b$, we decide, based on $\bar{f}_i(b)$, the epoch $t$ it belongs to and then increment the $t$'th accelerated counter with probability $p_i^t$.  The storage cost over all $i$ is still $O(1/\eps)$. Also, we iterate the whole set of accelerated counters $O(\log \phi^{-1})$ times, making the total storage cost $O(\eps^{-1}\log \phi^{-1})$.

Let $\hat{c}_{i,t}$ be the count in the accelerated counter for hash id $i$ and epoch $t$. Then, let $\hat{f}_i = \sum_t {\hat{c}_{i,t}}/{p_i^t}$. Clearly, $\E{\hat{f}_i} = f_i$. The variance is $O(\eps^{-2})$ in each epoch, and so, $\mathsf{Var}[\hat{f}_i]=O(\eps^{-2} \log \eps^{-1})$, not $O(\eps^{-2})$ which we wanted. This issue is fixed by a change in how the sampling probabilities are defined. We now go on to the formal proof.

\begin{proof}[Proof of \cref{thm:heavy_hitters2}]
Pseudocode appears in \cref{alg:heavy_hitters2}. Note that the numerical constants are chosen for convenience of analysis and have not been optimized. Also, for the sake of simplicity, the pseudocode does not have the optimal reporting time, but it can be modified to achieve this; see the end of this proof for details.

By standard Chernoff bounds, with probability at least $99/100$, the length of the sampled stream $\ell/10 \leq s \leq 10\ell$. For $x \in [n]$, let $f_{\text{samp}}(x)$ be the frequency of $x$ in the sampled stream. By \cref{lem:dkw}, with probability at least $9/10$, for all $x \in [n]$:
$$\left|\frac{f_{\text{samp}}(x)}{s} - \frac{f(x)}{m}\right| \leq \frac{\eps}{4}$$
Now, fix $j \in [10 \log \phi^{-1}]$ and $x \in [n]$. Let $i = h_j(x)$ and $f_i = \sum_{y: h_j(y) = h_j(x)} f_{\text{samp}}(y)$. Then, for a random $h_j \in \mathcal{H}$, the expected value of $\frac{f_i}{s} - \frac{f_{\text{samp}}(x)}{s}$ is $\frac{\eps}{100}$, since $\mathcal{H}$ is a universal mapping to a space of size $100\eps^{-1}$. Hence, using Markov's inequality and the above:
\begin{equation}\Pr\left[\left| \frac{f(x)}{m} - \frac{f_i}{s}\right| \geq \frac{\eps}{2}\right] 
\leq \Pr\left[\left| \frac{f(x)}{m} - \frac{f_{\text{samp}}}{s}\right| \geq \frac{\eps}{4}\right] + \Pr\left[\left| \frac{f_{\text{samp}}(x)}{m} - \frac{f_i}{s}\right| \geq \frac{\eps}{4}\right]
< \frac{1}{10}  + \frac{1}{25} < \frac{3}{20}\end{equation}
In Lemma \ref{lem:pmain} below, we show that for each $j \in [200 \log(12\phi^{-1})]$, with error probability at most $3/10$, $\hat{f}_j(x)$ (in line \ref{lin:fj}) estimates $f_i$ with additive error at most $5000\eps^{-1}$, hence estimating $\frac{f_i}{s}$ with additive error at most $\frac{\eps}{2}$. Taking the median over $200 \log( 12\phi^{-1})$ repetitions (line \ref{lin:fhat}) makes the error probability go down to $\frac{\phi}{6}$ using standard Chernoff bounds. Hence, by the union bound, with probability at least $2/3$, for each of the $2/\phi$ keys $x$ with nonzero values in $\TT_1$, we have an estimate of $\frac{f(x)}{m}$ within additive error $\eps$, thus showing correctness.

\begin{lemma}\label{lem:pmain}
Fix $x \in [n]$ and $j \in [200 \log 12\phi^{-1}]$, and let $i = h_j(x)$. Then, $\Pr[|\hat{f}_j(x) - f_i| > 5000 \eps^{-1}] \leq 3/10$, where $\hat{f}_j$ is the quantity computed in line \ref{lin:fj}.
\end{lemma}
\begin{proof}
Index the sampled stream elements $1, 2, \dots, s$, and for $b \in [s]$, let $f_i(b)$ be the frequency of items with hash id $i$ restricted to the first $b$ elements of the sampled stream. Let $\bar{f}_i(b)$ denote the value of $\TT_2[i,j]\cdot \eps^{-1}$ after the procedure \textsc{Insert} has been called for the first $b$ items of the sampled stream.
\begin{claim}\label{lem:approx}
With probability at least $9/10$, for all $b \in [s]$ such that $f_i(b) \geq 100 \eps^{-1}$, $\bar{f}_i(b)$ is within a factor of 4 of $f_i(b)$.
\end{claim}
\begin{proof}
Fix $b \in [s]$. Note that $\E{\bar{f}_i(b)} = f_i(b)$ as $\TT_2$ is incremented with rate $\eps$. $\mathsf{Var}[\bar{f}_i(b)] \leq f_i/\eps$, and so by Chebyshev's inequality:
$$\Pr[|\bar{f}_i(b) - f_i(b)| > f_i(b)/2] < \frac{4}{f_i(b) \eps}$$
We now break the stream into chunks, apply this inequality to each chunk and then take a union bound to conclude. Namely, for any integer $t\geq 0$, define $b_t$ to be the first	 $b$ such that $100 \eps^{-1} 2^{t} \leq f_i(b) < 100 \eps^{-1} 2^{t+1}$ if such a $b$ exists. Then:
\begin{align*}
\Pr[\exists t \geq 0: |\bar{f}_i(b_t) - f_i(b_t)| > f_i(b_t)/2] &< \sum_t \frac{4}{100 \cdot 2^{t-1}}\\ &< \frac{1}{10}
 \end{align*} 
 So, with probability at least $9/10$, every $\bar{f}_i(b_t)$ and $f_i(b_t)$ are within a factor of $2$ of each other. Since for every $b\geq b_0$, $f_i(b)$ is within a factor of ${2}$ from some $f_i(b_t)$, the claim follows.
\end{proof}
Assume the event in \cref{lem:approx} henceforth. Now, we are ready to analyze $\TT_3$ and in particular, $\hat{f}_j(x)$. First of all, observe that if $t<0$ in line \ref{alg:tpcomp}, at some position $b$ in the stream, then $\TT_2[i,j]$ at that time must be at most 1000, and so by standard Markov and Chernoff bounds,  with probability at least $0.85$, 
\begin{equation}\label{eqn:ass}
f_i(b) 
\begin{cases}
< 4000 \eps^{-1}, & \text{ if } t<0\\
> 100 \eps^{-1}, & \text { if } t \geq 0
\end{cases}
\end{equation}
Assume this event. Then, $f_i - 4000 \eps^{-1} \leq \E{\hat{f}_j(x)}  \leq f_i$.
\begin{claim}
$$\mathsf{Var}(\hat{f}_j(x)) \leq {20000}{\eps^{-2}}$$
\end{claim}
\begin{proof}
If the stream element at position $b$ causes an increment in $\TT_3$ with probability $\eps2^t$ (in line \ref{alg:t3inc}), then $1000 \cdot 2^{t/2} \leq \TT_2[i,j] \leq 1000\cdot 2^{(t+1)/2}$, and so, $\bar{f}_i(b) \leq 1000\eps^{-1} 2^{(t+1)/2}$.  This must be the case for the highest $b = \bar{b}_t$ at which the count for $i$  in $\TT_3$ increments at the $t$'th slot. The number of such occurrences of $i$ is at most $f_i(\bar{b}_t) \leq 4 \bar{f_i}(\bar{b}_t)\leq 4000 \eps^{-1}2^{(t+1)/2}$ by \cref{lem:approx} (which can be applied since  $f_i(b) > 100\eps^{-1}$ by Equation \ref{eqn:ass}). So:
\begin{align*}
\mathsf{Var}[\hat{f}_j(x)] \leq \sum_{t \geq 0} \frac{f_i(\bar{b}_t)}{\eps 2^t} \leq \sum_{t \geq 0} \frac{4000}{\eps^2} 2^{-t/3}\leq {20000}{\eps^{-2}}
\end{align*}
Elements inserted with probability $1$ obviously do not contribute to the variance.
\end{proof}
So, conditioning on the events mentioned, the probability that $\hat{f}_j(x)$ deviates from $f_i$ by more than $5000\eps^{-1}$ is at most $1/50$. Removing all the conditioning yields what we wanted:
$$\Pr[|\hat{f}_j(x) - f_i| > 5000\eps^{-1}] \leq \frac{1}{50} + \frac{3}{20} + \frac{1}{10} \leq 0.3$$ 
\end{proof}

We next bound the space complexity.
\begin{claim}\label{lem:spacehh}
With probability at least $2/3$, \cref{alg:heavy_hitters2} uses $O(\eps^{-1} \log \phi^{-1} + \phi^{-1} \log n + \log \log m)$ bits of storage, if $n = \omega(\eps^{-1})$.
\end{claim}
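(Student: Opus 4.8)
The plan is to split the storage of \cref{alg:heavy_hitters2} into a fixed-size part, which can be bounded deterministically, and the two counting tables $\TT_2,\TT_3$, whose combined size is a random variable that I would bound in expectation and then convert to a high-probability statement by Markov's inequality. First I would condition on the event that the number $s$ of sampled stream elements satisfies $\ell/10\le s\le 10\ell$, which holds with probability at least $99/100$ by the Chernoff bound already used in the proof of \cref{thm:heavy_hitters2}; on this event $s=O(\eps^{-2})$ and every counter value is at most $10\ell=O(\eps^{-2})$, so each value fits in $O(\log\eps^{-1})$ bits.

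The fixed components are routine. The counter $s$ needs $O(\log\eps^{-1})$ bits, and $\TT_1$ needs $O(\phi^{-1}\log n)$ bits since it stores $2\phi^{-1}$ keys from $[n]$ together with values bounded by $10\ell$. The $200\log(12\phi^{-1})$ hash functions are drawn from a universal family on $[n]$ and cost $O(\log n)$ bits each, hence $O(\log\phi^{-1}\log n)=O(\phi^{-1}\log n)$ bits in total (using $\log\phi^{-1}\le\phi^{-1}$). Implementing the sampling step on line~\ref{alg:p}, which keeps each element with probability $\ell/m$, costs $O(\log\log m)$ bits by \cref{lem:sampling_ub}. These contribute $O(\phi^{-1}\log n+\log\log m)$, comfortably within the target; the hypothesis $n=\omega(\eps^{-1})$ is what lets me absorb the stray $\log\eps^{-1}$ terms (such as those in the values of $\TT_1$) into $\log n$.

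The heart of the argument is to show that $\TT_2$ and $\TT_3$, stored \emph{sparsely}, occupy only $O(\eps^{-1}\log\phi^{-1})$ bits in expectation, rather than the $O(\eps^{-1}\log\phi^{-1}\log\eps^{-1})$ bits that the allocated dimensions of $\TT_3$ would suggest for a dense layout. I would store both tables in a self-delimiting format in which a cell holding value $v$ costs $O(1+\log(1+v))$ bits and a zero cell costs nothing, so the total size is $O\!\left(\sum_{\text{cells}}(1+\log(1+v_{\text{cell}}))\right)$. Two facts drive the bound. First, by universality of $h_j$ each bucket $i$ of the hash range receives expected (hashed) frequency $\EB[f_i]=\eps s/100=O(\eps^{-1})$. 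Second, for a fixed column $j$ the bucket $i$ opens a nonzero $\TT_2$ cell of expected value $O(\eps f_i)=O(1)$, and opens nonzero $\TT_3$ cells only in the epochs it actually reaches, whose number is $O(1+\log^+(\eps f_i))$; moreover the number of buckets that reach epoch $t$ at all is $O(\eps^{-1}2^{-t/2})$, since reaching epoch $t$ requires frequency $\gtrsim\eps^{-1}2^{t/2}$ while $\sum_i f_i=s=O(\eps^{-2})$. Summing the per-cell cost $O(1+\log(1+v))$ over the epochs a bucket reaches, using concavity of $x\mapsto\log(1+x)$ and $\EB[f_i]=O(\eps^{-1})$, I expect to obtain an expected cost of $O(1)$ per (bucket, column) pair, hence $O(\eps^{-1})$ per column and $O(\eps^{-1}\log\phi^{-1})$ over all $O(\log\phi^{-1})$ columns.

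Finally I would apply Markov's inequality to conclude that with probability at least $9/10$ the two tables use $O(\eps^{-1}\log\phi^{-1})$ bits, and a union bound with the event $s\in[\ell/10,10\ell]$ gives the claimed overall probability $2/3$. The main obstacle is the second fact above: a crude worst-case count lets a single heavy bucket climb into the top epochs and reintroduces an extra $\log\eps^{-1}$ factor, so it is essential to argue in expectation and exploit both the balancedness guaranteed by the hash (buckets rarely reach high epochs) and the geometric decay $O(\eps^{-1}2^{-t/2})$ in the number of buckets reaching epoch $t$, which makes the relevant weighted sum $\sum_t O(t)\,2^{-t/2}$ converge. Reconciling the self-delimiting encoding with the $O(1)$ worst-case update time asserted in \cref{thm:heavy_hitters2} is a secondary technical point and is not needed for the space bound proved here.
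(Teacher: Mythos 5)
Your proposal is correct and follows essentially the same route as the paper's proof: bound the fixed components ($\TT_1$, the sampling counter via \cref{lem:sampling_ub}), then bound the expected size of the sparsely-stored $\TT_2$ and $\TT_3$ by counting how many buckets can reach each frequency/epoch level given that the total sampled mass per column is $O(\eps^{-1})$, observe the geometric decay makes the weighted sum converge to $O(\eps^{-1})$ per column, and finish with Markov. The only differences are presentational — you index the level-counting by epoch $t$ where the paper indexes by frequency level $k$, and you make explicit the self-delimiting encoding and the $O(\log\phi^{-1}\log n)$ cost of storing the hash functions, which the paper leaves implicit.
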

\begin{proof}
The expected length of the sampled stream is $\ell = O(\eps^{-2})$. So, the number of bits stored in $\TT_1$ is $O(\phi^{-1} \log n)$. For $\TT_2$, note that in lines 13-15, for any given $j$, $\TT_2$ is storing a total of $\eps \ell = O(\eps^{-1})$ elements in expectation. So, for $k \geq 0$, there can be at most $O((\eps 2^k)^{-1})$ hashed id's with counts between $2^k$ and $2^{k+1}$. Summing over all $k$'s and accounting for the empty cells gives $O(\eps^{-1})$ bits of storage, and so the total space requirement of $\TT_2$ is $O(\eps^{-1} \log \phi^{-1})$. .

The probability that a hashed id $i$ gets counted in table $\TT_3$ is at most $10^{-6}\eps^3 \bar{f}_i^2(s)$ from line \ref{alg:tpcomp} and our definition of $\bar{f}_i$ above. Moreover, from \cref{lem:approx}, we have that this is at most $16 \cdot 10^{-6} \eps^3 {f}_i^2(s)$  if $f_i > 100 \eps^{-1}$. Therefore, if $f_i = 2^k\cdot 100 \eps^{-1}$ with $k \geq 0$, then the expected value of a cell in $\TT_3$ with first coordinate $i$ is at most $1600\cdot 2^{2k} \eps = 2^{O(k)}$. Taking into account that there are at most $O((\eps 2^k)^{-1})$ many such id's $i$ and that the number of epochs $t$ associated with such an $i$ is at most $\log(16 \cdot 10^{-6} \eps^2 {f}_i^2) = O(\log(\eps f_i)) = O(k)$ (from line \ref{alg:tpcomp}), we get that the total space required for $\TT_3$ is:
\begin{align*}&\sum_{j = 1}^{O(\log \phi^{-1})} \left(O(\eps^{-1}) + \sum_{k=0}^\infty O((\eps 2^k)^{-1}) \cdot O(k) \cdot O(k)\right) \\ &= O(\eps^{-1} \log \phi^{-1})\end{align*}
where the first $O(\eps^{-1})$ term inside the summation is for the $i$'s with $f_i < 100 \eps^{-1}$. Since we have an expected space bound, we obtain a worst-case space bound with error probability $1/3$ by a Markov bound. 

The space required for sampling is an additional $O(\log \log m)$, using \cref{lem:sampling_ub}.
\end{proof}
We note that the space bound can be made worst case by aborting the algorithm if it tries to use more space.

The only remaining aspect of \cref{thm:heavy_hitters2} is the time complexity. As observed in \cref{sec:lhh}, the update time can be made $O(1)$ per insertion under the standard assumption of the stream being sufficiently long. The reporting time can also be made linear in the output by changing the bookkeeping a bit. Instead of computing $\hat{f}_j$ and $\hat{f}$ at reporting time, we can maintain them after every insertion. Although this apparently makes INSERT costlier, this is not true in fact because we can spread the cost over future stream insertions. The space complexity grows by a constant factor.
\end{proof}

\subsection{$\eps$-Maximum Problem}

By tweaking \Cref{alg:heavy_hitters} slightly, we get the following result for the {\sc $\eps$-Maximum} problem. 

\begin{restatable}{theorem}{ThmHeavyHittersWeak}\label{thm:heavy_hitters_weak}
 Assume the length of the stream is known beforehand. Then there is a randomized one-pass algorithm $\mathcal{A}$ for the \textsc{$\eps$-Maximum} problem which succeeds with probability at least $1-\delta$ using $O\left(\min\{\nfrac{1}{\eps}, n\}(\log\nfrac{1}{\eps} + \log\log\nfrac{1}{\delta}) + \log n + \log\log m \right)$ bits of space. Moreover, the algorithm $\mathcal{A}$ has an update time of $O(1)$.
\end{restatable}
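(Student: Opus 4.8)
The plan is to obtain \Cref{thm:heavy_hitters_weak} by specializing the \textsc{$(\eps,\varphi)$-List heavy hitters} algorithm of \Cref{thm:heavy_hitters} (\Cref{alg:heavy_hitters}) to the task of reporting a single, approximately maximum, frequency. As in that algorithm, I would first subsample $\ell = \BigO(\eps^{-2}\log\delta^{-1})$ stream items uniformly at random using \Cref{lem:sampling_ub}, so that by \Cref{lem:dkw} the empirical frequencies in the sample, rescaled by $\nfrac{m}{|S|}$, approximate the true frequencies within an additive $\eps m$ simultaneously for all items with probability $1-\nfrac{\delta}{3}$; a Chernoff bound keeps $|S|\in[\ell,11\ell]$ with probability $1-\nfrac{\delta}{3}$. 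I would then hash the ids of the sampled items into a range of size $\BigO(\nfrac{\ell^2}{\delta})$ drawn from a universal family, so that by \Cref{lem:hash} the sampled ids are collision-free with probability $1-\nfrac{\delta}{3}$, and each id can then be stored using only $\BigO(\log\eps^{-1}+\log\log\delta^{-1})$ bits, exactly as in \Cref{thm:heavy_hitters}.

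The two tweaks relative to \Cref{alg:heavy_hitters} are the following. First, I would run Misra--Gries on the hashed sample with $k=\min\{\nfrac{1}{\eps},n\}$ counters rather than exactly $\eps^{-1}$: since the sample contains at most $n$ distinct items, $k=n$ counters already make Misra--Gries exact, while $k=\eps^{-1}$ counters suffice for additive error $\eps|S|$, so the minimum of the two is enough and accounts for the $\min\{\nfrac1\eps,n\}$ factor. Second, and crucially for the space bound, I would discard the auxiliary table $\TT_2$ of \Cref{alg:heavy_hitters}, which stored the $\varphi^{-1}$ top ids and contributed the $\varphi^{-1}\log n$ term: for the \textsc{$\eps$-Maximum} problem only one frequency must be reported, so it suffices to keep a single register holding the actual id and value of the current maximum-valued counter (or simply to scan the $k$ counters once at reporting time). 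This collapses the $\varphi^{-1}\log n$ term into the $\log n$ bits needed to store the universal hash function together with one item id.

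For correctness I would compose the two approximation layers: Misra--Gries never overestimates a sampled count and underestimates it by at most $\eps|S|$, so the largest counter value, rescaled by $\nfrac{m}{|S|}$, lies within $\BigO(\eps m)$ of the true maximum frequency; the edge case in which the maximum sampled frequency falls below the retention threshold $\eps|S|$ is handled by noting that then the true maximum is itself $\BigO(\eps m)$, so reporting the largest surviving value (or zero) is already accurate. Rescaling $\eps$ by a constant and a union bound over the three failure events give overall success probability $1-\delta$, and the space is $\BigO(\min\{\nfrac1\eps,n\}(\log\eps^{-1}+\log\log\delta^{-1}))$ for the counters plus $\BigO(\log n)$ for the hash function and $\BigO(\log\log m)$ for the sampler. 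I expect the main point requiring care to be the $\BigO(1)$ worst-case update time combined with maintenance of the running maximum: the Misra--Gries decrement-all step must be implemented with the usual offset and linked-list-of-value-buckets structure, and one must verify that the identity of the maximum counter can change only to the just-inserted item, since a single-item increment touches one counter while a global decrement preserves the value order, so the single max register is updated correctly in constant time. As in \Cref{alg:heavy_hitters}, the $\BigO(\nfrac1\eps)$ work triggered on a sampled item is amortized over the following $\BigO(\nfrac1\eps)$ insertions, which are almost surely not sampled, yielding worst-case $\BigO(1)$ time per update.
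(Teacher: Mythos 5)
Your proposal is correct and follows essentially the same route as the paper: the paper's proof of \Cref{thm:heavy_hitters_weak} is precisely to run \Cref{alg:heavy_hitters} but replace the table $\TT_2$ with a single register holding the actual id of the maximum-frequency sampled item. Your additional observations (using $\min\{\nfrac{1}{\eps},n\}$ counters to justify that factor in the bound, and the argument that the running maximum can be maintained in $O(1)$ time) are correct elaborations of details the paper leaves implicit.
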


\begin{proof}
 Instead of maintaining the table $\TT_2$ in \Cref{alg:heavy_hitters}, we just store the actual id of the item with maximum frequency in the sampled items. 
\end{proof}

\subsection{$\eps$-Minimum Problem}
\begin{restatable}{theorem}{ThmRare}\label{thm:rare}
 Assume the length of the stream is known beforehand. Then there is a randomized one-pass algorithm $\mathcal{A}$ for the \textsc{$\eps$-Minimum} problem which succeeds with probability at least $1-\delta$ using $O\left((\nfrac{1}{\eps})\log\log(\nfrac{1}{\eps\delta}) + \log\log m \right)$ bits of space. Moreover, the algorithm $\mathcal{A}$ has an update time of $O(1)$.
\end{restatable}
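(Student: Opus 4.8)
The plan is to reduce the problem, by a case analysis on the universe size $n$, to approximately counting a small number of items on a short sampled substream, storing those counts in approximate (Morris‑style) counters whose precision is tuned to $n$. First I would dispose of the regime $n > \nfrac{1}{\eps}$: since the frequencies sum to $m$, averaging gives $\min_{i\in\UU} f_i \le \nfrac{m}{n} < \eps m$, so the algorithm can output $0$ and incur error below $\eps m$. This case needs no counters, only the $O(\log\log m)$ bits used to read the stream. Hence assume $n \le \nfrac{1}{\eps}$, so there are at most $\nfrac{1}{\eps}$ items to track.

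Next I would sample $r = \Theta(\eps^{-2}\log\delta^{-1})$ stream positions using the sampler of \Cref{lem:sampling_ub}, which costs $O(\log\log m)$ bits and $O(1)$ amortized time; spreading the rare update work over subsequent insertions yields worst‑case $O(1)$ update time, exactly as in \Cref{sec:lhh}. By \Cref{lem:dkw}, with probability $1-\nfrac{\delta}{2}$ every rescaled sampled frequency approximates the true frequency within $\nfrac{\eps}{2}m$, and since the minimum is $1$‑Lipschitz in the frequency vector, the minimum is preserved up to $\nfrac{\eps}{2}m$. It therefore suffices to estimate the minimum sampled count $c_{\min} = \min_i c_i$ to additive $\Theta(\eps r)$. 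The central observation is that $\sum_i c_i = r$, so $c_{\min} \le \nfrac{r}{n}$; consequently a counter with relative error $\gamma = \Theta(\eps n)$ already resolves $c_{\min}$ to additive $\gamma\cdot\nfrac{r}{n} = \Theta(\eps r)$. For the upper side, $\tilde c_{i^\star}\le(1+\gamma)c_{\min}$ for the true minimizer $i^\star$; for the lower side, $\tilde c_i\ge(1-\gamma)c_i\ge(1-\gamma)c_{\min}\ge c_{\min}-\gamma\cdot\nfrac{r}{n}$ for every $i$, so $\min_i\tilde c_i$ cannot dip below $c_{\min}-\Theta(\eps r)$. Thus I would keep one approximate counter per item with relative error $\gamma=\Theta(\eps n)$ on the sampled substream and output $\nfrac{m}{r}\cdot\min_i\tilde c_i$.

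For the space accounting, an approximate counter for a value of size up to $\nfrac{r}{n}$ with relative error $\gamma$ uses $O(\log\log(\nfrac{r}{n}) + \log\gamma^{-1})$ bits. Summed over the at most $n$ items, the range term gives $O(n\log\log r)=O(\eps^{-1}\log\log(\eps^{-1}\delta^{-1}))$, while the precision term gives $\sum O(\log(\eps n)^{-1}) = O(n\log\nfrac{1}{\eps n})$, which is $O(\eps^{-1})$ because $x\mapsto x\log\nfrac1x$ is bounded on $(0,1]$ and here $x=\eps n\in(0,1]$. Together with the $O(\log\log m)$ sampling cost this matches the claimed $O(\eps^{-1}\log\log(\eps^{-1}\delta^{-1}) + \log\log m)$ bound, and the space can be made worst case by aborting if the bound is exceeded.

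The hard part will be controlling the \emph{aggregate} failure probability of up to $\eps^{-1}$ counters without paying a $\log\delta^{-1}$ (or $\log n$) factor per counter: a naive union bound forces per‑counter failure probability $\nfrac{\delta}{n}$, inflating each counter to $\log(\nfrac{n}{\delta})$ bits and destroying the $\log\log$ saving. The way around this is that only the near‑minimum counters are dangerous—a counter whose true count lies well above $\nfrac{r}{n}$ stays above $c_{\min}-\eps r$ even under a large relative deviation—and that the guarantee we need on the low side is one‑sided. I would therefore instantiate the counters so that their downward‑deviation probability is supplied by the magnitude of the counts themselves rather than by extra stored bits (e.g. taking the sampled length to be a suitable $\mathrm{poly}(\eps^{-1}\delta^{-1})$ so the relevant counter values are large and an exponential tail, not additional space, delivers the reliability), shrinking the set of counters that must be verified to one small enough for a cheap union bound. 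Making this reliability bookkeeping fit the $\log\log(\eps^{-1}\delta^{-1})$ budget is the crux; the reduction and concentration steps above are routine.
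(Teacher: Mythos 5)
Your reduction and concentration steps are fine, but the proof as written has a genuine unresolved gap at exactly the point you flag as ``the crux,'' and that crux is where all the work in the paper's argument actually lives. Concretely: to conclude $\min_i\tilde c_i\ge c_{\min}-\Theta(\eps r)$ you need \emph{every} one of the up to $\eps^{-1}$ counters to simultaneously satisfy $\tilde c_i\ge(1-\gamma)c_i$, and a Morris-style counter has constant relative standard deviation independent of the magnitude of the count, so the ``exponential tail from the magnitude of the counts themselves'' that you invoke simply is not available from that primitive; driving the per-counter failure probability down to $\delta\eps$ costs $\Theta(\log\nfrac{1}{\eps\delta})$ bits per counter, which destroys the bound. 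The paper escapes this by changing the primitive and by first certifying a \emph{lower} bound on the minimum frequency, neither of which appears in your sketch. It (i) keeps a bit vector over a sample of size $\Theta(\eps^{-1}\log\nfrac{1}{\eps\delta})$ so that either some item is absent (and can be output immediately) or every frequency is at least $\Omega(\eps m/\log\nfrac{1}{\eps})$; (ii) splits on the number of distinct items, handling the case of few distinct items with exact counts; and (iii) in the remaining case, where the minimum frequency is sandwiched in $[\eps m/\log\nfrac{1}{\eps},\,\eps m\log\nfrac{1}{\eps}]$, subsamples at rate $\Theta(\log^6\nfrac{1}{\eps\delta}/(\eps m))$ and keeps \emph{exact but truncated} counters of $O(\log\log\nfrac{1}{\eps\delta})$ bits each. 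Because the certified frequency lower bound forces every relevant subsampled count to be $\mathrm{polylog}(\nfrac{1}{\eps\delta})$, a Chernoff bound on the subsampling randomness (not on an approximate counter) yields relative error $1/\log^2\nfrac{1}{\eps}$ with per-item failure probability $O(\eps\delta)$, which union-bounds cleanly. Without step (i) you have no lower bound on $c_{\min}$ and hence no magnitude from which to extract a tail; without step (iii) you have no counter whose reliability improves with the count.

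Two smaller issues: in the large-universe case the algorithm is required to output an \emph{item} whose frequency is within $\eps m$ of the minimum (this is how the matching lower bound in \Cref{thm:veto_lb} uses the output), so returning the value $0$ does not suffice; the paper instead returns a uniformly random item from a pool of size $\Theta(\nfrac{1}{\eps})$, which is correct with the desired probability because at most $\nfrac{1}{\eps}$ items can have frequency $\ge\eps m$. Also, your per-counter range term should be $\log\log r$ for counts up to $r$ (not up to $\nfrac{r}{n}$); the arithmetic still works out, but as stated the accounting conflates the resolution you need with the range you must cover.
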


\paragraph{Overview}

Pseudocode is provided in \cref{alg:rare}.
The idea behind our $\epsilon$-Minimum problem is as follows. It is most easily explained by looking at the REPORT(x) procedure starting in line 13. In lines 14-15 we ask, is the universe size $|U|$ significantly larger than $1/\epsilon$? Note that if it is, then outputting a random item from $|U|$ is likely to be a solution. Otherwise $|U|$ is $O(1/\epsilon)$.

The next point is that if the number of distinct elements in the stream were smaller than $1/(\epsilon \log(1/\epsilon))$, then we could just store all the items together with their frequencies with $O(1/\epsilon)$ bits of space. Indeed, we can first sample $O(1/\epsilon^2)$ stream elements so that all relative frequencies are preserved up to additive $\epsilon$, thereby ensuring each frequency can be stored with $O(\log(1/\epsilon)$ bits. Also, since the universe size is $O(1/\epsilon)$, the item identifiers can also be stored with $O(\log(1/\epsilon)$ bits. So if this part of the algorithm starts taking up too much space, we stop, and we know the number of distinct elements is at least $1/(\epsilon \log(1/\epsilon))$, which means that the minimum frequency is at most $O(m \epsilon \log(1/\epsilon))$. This is what is being implemented in steps 9-10 and 18-19 in the algorithm.

We can also ensure the minimum frequency is at least $\Omega(m \epsilon / \log(1/\epsilon))$. Indeed, by randomly sampling $O((\log(1/\epsilon)/\epsilon)$ stream elements, and maintaining a bit vector for whether or not each item in the universe occurs - which we can with $O(1/\epsilon)$ bits of space since $|U| = O(1/\epsilon)$ - any item with frequency at least $\Omega(\epsilon m/ \log(1/\epsilon))$ will be sampled and so if there is an entry in the bit vector which is empty, then we can just output that as our solution. This is what is being implemented in steps 8 and 16-17 of the algorithm.

Finally, we now know that the minimum frequency is at least $\Omega(m \epsilon / \log(1/\epsilon))$ and at most $O(m \epsilon \log(1/\epsilon))$. At this point if we randomly sample $O((\log^6 1/\epsilon)/\epsilon)$ stream elements, then by Chernoff bounds all item frequencies are preserved up to a relative error factor of $(1 \pm 1/\log^2 (1/\epsilon))$, and in particular the relative minimum frequency is guaranteed to be preserved up to an additive $\epsilon$. At this point we just maintain the exact counts in the sampled stream but truncate them once they exceed $\textrm{poly}(\log(1/\epsilon)))$ bits, since we know such counts do not correspond to the minimum. Thus we only need $O(\log \log (1/\epsilon))$ bits to represent their counts. This is implemented in step 11 and step 20 of the algorithm. 

\begin{algorithm}[!htbp]
  \caption{for \textsc{$\eps$-Minimum}
    \label{alg:rare}}
  \begin{algorithmic}[1]
    \Require{A stream $\SS = (x_i)_{i\in[m]}\in \UU^m$ of length $m$ over $\UU$; let $f(x)$ be the frequency of $x\in \UU$ in $\SS$}    
    \Ensure{An item $x\in\UU$ such that $f(x) \le f(y) + \eps m$ for every $y\in\UU$}
    \Initialize{
    \Let{$\ell_1$}{$\nfrac{\log(\nfrac{6}{\eps\delta})}{\eps}$}, $\ell_2 \leftarrow \nfrac{\log(\nfrac{6}{\delta})}{\eps^2}$, $\ell_3 \leftarrow \nfrac{\log^6 (\nfrac{6}{\delta\eps})}{\eps} $
    \Let{$p_1$}{$\nfrac{6\ell_1}{m}$}, $p_2 \leftarrow \nfrac{6\ell_2}{m}$, $p_3 \leftarrow \nfrac{6\ell_3}{m}$
    \Let{$\SS_1, \SS_2, \SS_3$}{$\emptyset$}
    \Let{$\BB_1$}{the bit vector for $\SS_1$}}
    ~\\
    \Procedure{Insert}{x}
      \State Put $x$ in $\SS_1$ with probability $p_1$ by updating the bit vector $\BB_1$
      \If{the number of distinct items in the stream so far is at most $\nfrac{1}{(\eps\log(\nfrac{1}{\eps}))}$}
	\State Pick $x$ with probability $p_2$ and put the id of $x$ in $\SS_2$ and initialize the corresponding counter to $1$ if $x\notin\SS_2$ and increment the counter corresponding to $x$ by $1$.
      \EndIf
      \State Pick $x$ with probability $p_3$, put the id of $x$ in $\SS_3$ and initialize the corresponding counter to $1$ if $x_i\notin\SS_3$ and increment the counter corresponding to $x_i$ by $1$. Truncate counters of $\SS_3$ at $2\log^7(\nfrac{2}{\eps\delta})$.
    \EndProcedure
    ~\\
    
    \Procedure{Report}{~}
    \If{$|\UU|\ge \nfrac{1}{((1-\delta)\eps)}$}
      \State \Return an item $x$ from the first $\nfrac{1}{((1-\delta)\eps)}$ items in $\UU$ (ordered arbitrarily) uniformly at random
      \EndIf
   \If{$\SS_1 \ne \UU$}
      \State \Return{any item from $\UU\setminus\SS_1$}\label{alg:S1_out}
    \EndIf
    \If{the number of distinct items in the stream is at most $\nfrac{1}{(\eps\log(\nfrac{1}{\eps}))}$}
      \State \Return an item in $\SS_2$ with minimum counter value in $\SS_2$\label{alg:S2_out}
    \EndIf
    \State \Return the item with minimum frequency in $\SS_3$
    \EndProcedure
  \end{algorithmic}
\end{algorithm}

\begin{proof}[Proof of \cref{thm:rare}]
 The pseudocode of our \textsc{$\eps$-Minimum} algorithm is in \Cref{alg:rare}. If the size of the universe $|\UU|$ is at least $\nfrac{1}{((1-\delta)\eps)}$, then we return an item $x$ chosen from $\UU$ uniformly at random. Note that there can be at most $\nfrac{1}{\eps}$ many items with frequency at least $\eps m$. Hence every item $x$ among other remaining $\nfrac{\delta}{((1-\delta)\eps)}$ many items has frequency less than $\eps m$ and thus is a correct output of the instance. Thus the probability that we answer correctly is at least $(1-\delta)$. From here on, let us assume $|\UU|<\nfrac{1}{((1-\delta)\eps)}$.
 
 Now, by the value of $p_j$, it follows from the proof of \Cref{thm:heavy_hitters} that we can assume $\ell_j < |\SS_j| < 11\ell_j$ for $j = 1, 2, 3$ which happens with probability at least $(1-(\nfrac{\delta}{3}))$. We first show that every item in $\UU$ with frequency at least $\eps m$ is sampled in $\SS_1$ with probability at least $(1-(\nfrac{\delta}{6}))$. For that, let $X_i^j$ be the indicator random variable for the event that the $j^{th}$ sample in $\SS_1$ is item $i$ where $i\in\UU$ is an item with frequency at least $\eps m$. Let $\HH\subset\UU$ be the set of items with frequencies at least $\eps m$. Then we have the following.
 
 \[ \Pr[X_i^j = 0] = 1-\eps \Rightarrow \Pr[X_i^j = 0 ~\forall j\in\SS_1] \le (1-\eps)^{\ell_1} \le \exp\{-\eps\ell_1\} = \nfrac{\eps\delta}{6} \]
 
 Now applying union bound we get the following.
 \[ \Pr[ \exists i\in \HH, X_i^j = 0 ~\forall j\in\SS_1] \le (\nfrac{1}{\eps})\nfrac{\eps\delta}{6} \le \nfrac{\delta}{6} \]
 Hence with probability at least $(1-(\nfrac{\delta}{3})-(\nfrac{\delta}{6})) \ge (1-\delta)$, the output at line \ref{alg:S1_out} is correct. Now we show below that if the frequency of any item $x\in\UU$ is at most $\nfrac{\eps\ln(\nfrac{6}{\delta})}{\ln(\nfrac{6}{\eps \delta})}$, then $x\in \SS_1$ with probability at least $(1-(\nfrac{\delta}{6}))$.
 \[ \Pr[ x\notin \SS_1 ] = (1-\nfrac{\eps\ln (\nfrac{6}{\delta})}{\ln (\nfrac{6}\eps\delta)})^{\nfrac{\ln(\nfrac{6}{\eps\delta})}{\eps}} \le \nfrac{\delta}{6} \]
 Hence from here onwards we assume that the frequency of every item in $\UU$ is at least $\nfrac{\eps m\ln (\nfrac{6}{\delta})}{\ln (\nfrac{6}\eps\delta)}$. 
 
 If the number of distinct elements is at most $\nfrac{1}{(\eps\ln(\nfrac{1}{\eps}))}$, then line \ref{alg:S2_out} outputs the minimum frequency item up to an additive factor of $\eps m$ due to Chernoff bound. Note that we need only $O(\ln(\nfrac{1}{((1-\delta)\eps)}))$ bits of space for storing ids. Hence $\SS_2$ can be stored in space $O((\nfrac{1}{\eps\ln(\nfrac{1}{\eps})}) \ln(\nfrac{1}{((1-\delta)\eps)} \ln\ln(\nfrac{1}{\delta})) = O(\nfrac{1}{\eps} \ln\ln(\nfrac{1}{\delta}))$.
 
 Now we can assume that the number of distinct elements is at least $\nfrac{1}{(\eps\ln(\nfrac{1}{\eps}))}$. Hence if $f(t)$ is the frequency of the item $t$ with minimum frequency, then we have $ m\nfrac{\eps}{\ln (\nfrac{1}{\eps})} \le f(t) \le m\eps \ln(\nfrac{1}{\eps})$. 
 
 Let $f_i$ be the frequency of item $i\in \UU$, $e_i$ be the counter value of $i$ in $\SS_3$, and $\hat{f}_i = \nfrac{e_i m}{\ell_3}$. Now again by applying Chernoff bound we have the following for any fixed $i\in\UU$.
\begin{eqnarray*}
\Pr[ |f_i - \hat{f}_i| > \nfrac{f_i}{\ln^2(\nfrac{1}{\eps})} ] 
&\le & 2\exp\{ -\nfrac{\ell_3 f_i}{(m \ln^4(\nfrac{1}{\eps}))} \}\\
& \le & 2 \exp\{ -\nfrac{f_i\ln^2 (\nfrac{6}{\eps\delta})}{(\eps m)} \}\\
& \le & \nfrac{\eps\delta}{6}.
\end{eqnarray*}
 Now applying union bound we get the following using the fact that $|\UU|\le \nfrac{1}{\eps (1-\delta)}$.
 \[ \Pr[ \forall i\in\UU, |f_i - \hat{f}_i| \le \nfrac{f_i}{\ln^2(\nfrac{1}{\eps})} ] > 1-\nfrac{\delta}{6} \]
 Again by applying Chernoff bound and union bound we get the following.
 \[ \Pr[ \forall i\in\UU \text{ with } f_i > 2m\eps\ln(\nfrac{1}{\eps}), |f_i - \hat{f}_i| \le \nfrac{f_i}{2} ] > 1-\nfrac{\delta}{6} \]
 Hence the items with frequency more than $2m\eps\ln(\nfrac{1}{\eps})$ are approximated up to a multiplicative factor of $\nfrac{1}{2}$ from below in $\SS_3$. The counters of these items may be truncated. The other items with frequency at most $2m\eps\ln(\nfrac{1}{\eps})$ are be approximated up to $(1 \pm \nfrac{1}{\ln^2(\nfrac{1}{\eps})})$ relative error and thus up to an additive error of $\nfrac{\eps m}{3}$. The counters of these items would not get truncated. Hence the item with minimum counter value in $\SS_3$ is the item with minimum frequency up to an additive $\eps m$.
 
 We need $O(\ln(\nfrac{1}{\eps\delta}))$ bits of space for the bit vector $\BB_1$ for the set $\SS_1$. We need $O(\ln^2(\nfrac{1}{\eps\delta}))$ bits of space for the set $\SS_2$ and $O((\nfrac{1}{\eps})\ln\ln(\nfrac{1}{\eps\delta}))$ bits of space for the set $\SS_3$ (by the choice of truncation threshold). We need an additional $O\left( \ln\ln m \right)$ bits of space for sampling using \Cref{lem:sampling_ub}. Moreover, using the data structure of Section 3.3 of \citep{demaine2002frequency} \Cref{alg:rare} can be performed in $O(1)$ time. Alternatively, we may also use the strategy described in \cref{sec:lhh} of spreading update operations over several insertions to make the cost per insertion be $O(1)$.
\end{proof}

\subsection{Problems for the Borda and Maximin Voting Rules}

\begin{restatable}{theorem}{ThmBorda}\label{thm:borda}
 Assume the length of the stream is known beforehand. Then there is a randomized one-pass algorithm $\mathcal{A}$ for \textsc{$(\eps, \varphi)$-List Borda} problem which succeeds with probability at least $1-\delta$ using $O\left( n\left( \log n + \log\frac{1}{\eps} + \log\log\frac{1}{\delta} \right) + \log\log m \right)$ bits of space.
\end{restatable}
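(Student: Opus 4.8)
The plan is to solve \textsc{$(\eps,\varphi)$-List Borda} by uniformly subsampling votes and maintaining one running Borda counter per item on the sampled sub-stream. Write $\text{pos}_i(v)\in[n]$ for the rank of item $i$ in vote $v$ (top $=1$), so that $v$ contributes $n-\text{pos}_i(v)$ to the Borda score of $i$ and the total is $B(i)=\sum_v (n-\text{pos}_i(v))\in[0,m(n-1)]$. The key observation is that the \emph{normalized} score $B(i)/(mn)=\frac1m\sum_v \frac{n-\text{pos}_i(v)}{n}$ is an average over votes of a quantity lying in $[0,1]$, so it is exactly the kind of statistic preserved by uniform sampling. First I would set $\ell=\Theta(\eps^{-2}\log(n/\delta))$ and, since the stream length $m$ is known, sample each incoming vote independently with probability $p=\ell/m$ using the space-efficient sampler of \Cref{lem:sampling_ub} (rounding $1/p$ to a power of two per the paper's convention), which costs $O(\log\log m)$ bits. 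I would keep an array of $n$ integer counters $c_1,\dots,c_n$ indexed directly by item id; whenever a vote $v$ is sampled, read off all positions and update $c_i \leftarrow c_i + (n-\text{pos}_i(v))$ for every $i$. Let $s$ be the number of sampled votes; on reporting, set $\hat B(i)=\frac{m}{s}c_i$ and output every item with $\hat B(i)\ge(\varphi-\eps/2)mn$ together with the estimate $\hat B(i)$.

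For correctness, $\hat B(i)$ is an unbiased estimator of $B(i)$, and $\hat B(i)/(mn)$ is the mean of $s$ i.i.d.\ draws of a $[0,1]$-valued quantity with expectation $B(i)/(mn)$. A Chernoff bound (\Cref{thm:chernoff}), applied to the upper and lower tails of this bounded sum (the same sampling principle underlying \Cref{lem:dkw}, but using the exponential tail needed for a union bound over $n$ items), shows that $\ell=\Theta(\eps^{-2}\log(n/\delta))$ samples estimate a single item's normalized score within additive $\eps/2$ with failure probability at most $\delta/n$; a standard Chernoff bound also gives $s=\Theta(\ell)$, so the rescaling by $m/s$ is harmless. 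A union bound over the $n$ items then yields, with probability $\ge 1-\delta$, that $|\hat B(i)-B(i)|\le(\eps/2)mn$ simultaneously for all $i$. This additive guarantee discharges all three requirements of \Cref{def:listborda}: items of true score $>\varphi mn$ have estimate $>(\varphi-\eps/2)mn$ and are reported; items of true score $<(\varphi-\eps)mn$ have estimate $<(\varphi-\eps/2)mn$ and are suppressed; and every reported estimate is within $\eps mn$ of the truth.

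For the space bound, the only persistent state is the array of $n$ counters plus the sampler. Each counter never exceeds $\ell\cdot(n-1)\le \ell n$, hence fits in $O(\log(\ell n))=O(\log n+\log\ell)$ bits; since $\log\ell=O(\log(1/\eps)+\log\log(n/\delta))=O(\log(1/\eps)+\log\log n+\log\log(1/\delta))$ and $\log\log n=O(\log n)$, each counter uses $O(\log n+\log(1/\eps)+\log\log(1/\delta))$ bits. Because the counters are indexed directly by id, no item identifiers are stored separately. Summing over the $n$ counters and adding the $O(\log\log m)$ sampling overhead gives the claimed $O\!\left(n(\log n+\log(1/\eps)+\log\log(1/\delta))+\log\log m\right)$ bits.

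The main obstacle is not the concentration argument but the space discipline: one must resist the natural temptation to store the $\ell=\Theta(\eps^{-2}\log(n/\delta))$ sampled rankings, which would cost $\Theta(\ell n\log n)$ bits, and instead fold every sampled vote immediately into the $n$ running counters, so that the sample size $\ell$ enters the space bound only logarithmically through the counter width. The delicate accounting is verifying that $\log\ell$ contributes exactly the $\log(1/\eps)+\log\log(1/\delta)$ terms and no stray $\log(1/\delta)$ factor, which is precisely what forces the use of an exponential-tail (Chernoff) bound rather than Chebyshev for the per-item estimate.
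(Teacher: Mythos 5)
Your proposal is correct and follows essentially the same route as the paper's proof: sample each vote with probability $\Theta(\eps^{-2}\log(n/\delta))/m$, fold each sampled vote immediately into $n$ running Borda counters (the paper phrases the increment as "the number of candidates that candidate $i$ beats," which equals your $n-\text{pos}_i(v)$), apply Chernoff plus a union bound over the $n$ items, and account the space as $O(n\log(n\ell))$ for the counters plus $O(\log\log m)$ for the sampler.
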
 

\begin{proof}
Let $\ell = 6\eps^{-2} \log(6n\delta^{-1})$ and $p = \nfrac{6\ell}{m}$.  On each insertion of a vote $v$, select $v$ with probability $p$ and store for every $i \in [n]$, the number of candidates that candidate $i$ beats in the vote $v$. Keep these exact counts in a counter of length $n$.

Then it follows from the proof of \Cref{thm:heavy_hitters} that $\ell \le |\SS| \le 11\ell$ with probability at least $(1-{\delta}/{3})$. Moreover, from a straightforward application of the Chernoff bound (see \cite{DeyB15}),  it follows that if $\hat{s}(i)$ denotes the Borda score of candidate $i$ restricted to the sampled votes, then:
$$\Pr\left[\forall i \in [n], \left|\frac{m}{|\SS|} \hat{s}(i) - s(i)\right| < \eps m n\right] > 1-\delta$$

The space complexity for exactly storing the counts is $O(n \log (n \ell)) = O(n (\log n + \log \eps^{-1} + \log \log \delta^{-1}))$ and the space for sampling the votes is $O(\log \log m)$ by \cref{lem:sampling_ub}.
\end{proof}

\begin{restatable}{theorem}{ThmMaximinUB}\label{thm:maximin}
 Assume the length of the stream is known beforehand. Then there is a randomized one-pass algorithm $\mathcal{A}$ for \textsc{$(\eps, \varphi)$-List maximin} problem which succeeds with probability at least $1-\delta$ using $O\left(n \eps^{-2} \log^2 n + n\eps^{-2} \log n \log \delta^{-1} + \log\log m \right)$ bits of space.
\end{restatable}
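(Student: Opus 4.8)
The plan is to mimic the sampling strategy used for \textsc{$(\eps, \varphi)$-List Borda} in \cref{thm:borda}, but to store the sampled rankings in full and to exploit the fact that the maximin score is a minimum of pairwise ``beats'' counts. Concretely, set $\ell = \Theta(\eps^{-2}(\log n + \log\delta^{-1}))$ and, as in the proofs of \cref{thm:heavy_hitters,thm:borda}, sample each incoming vote independently with probability $p = \nfrac{6\ell}{m}$ using the $\BigO(\log\log m)$-space sampler of \cref{lem:sampling_ub}, storing every sampled permutation over $\UU$ exactly. For two distinct items $i,j\in\UU$ let $D(i,j)$ be the number of stream votes that rank $i$ ahead of $j$, so that the true maximin score is $s(i) = \min_{j\neq i} D(i,j)$; let $\hat D(i,j)$ be the corresponding count over the stored sample $\SS$, and $\hat s(i) = \min_{j\neq i}\hat D(i,j)$. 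At reporting time I would output every item $i$ for which $\frac{m}{|\SS|}\hat s(i) \ge (\varphi - \nfrac{\eps}{2})m$, together with the estimate $\frac{m}{|\SS|}\hat s(i)$.

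For correctness, first note that exactly as in the proof of \cref{thm:heavy_hitters}, with probability at least $1 - \nfrac{\delta}{3}$ the sample size satisfies $\ell \le |\SS| \le 11\ell$; condition on this. Next, fix an ordered pair $(i,j)$ and let $X_t$ indicate that the $t$-th sampled vote ranks $i$ ahead of $j$; these are i.i.d.\ Bernoulli with mean $\nfrac{D(i,j)}{m}$, so $\hat D(i,j) = \sum_t X_t$ has mean $\frac{|\SS|}{m}D(i,j)$, and a Chernoff/Hoeffding bound (in the spirit of \cref{thm:chernoff}; see also \cite{DeyB15}) gives $\Pr\!\left[\,\bigl|\frac{m}{|\SS|}\hat D(i,j) - D(i,j)\bigr| > \nfrac{\eps m}{2}\,\right] \le 2\exp(-c\,\eps^2 |\SS|)$ for an absolute constant $c>0$. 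With $|\SS|\ge \ell = \Theta(\eps^{-2}(\log n + \log\delta^{-1}))$ this probability is at most $\nfrac{\delta}{3 n^2}$, and a union bound over all $n(n-1)$ ordered pairs shows that, except with probability $\nfrac{\delta}{3}$, every pairwise count is estimated within additive $\nfrac{\eps m}{2}$ simultaneously.

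The key remaining observation is that the minimum operator is non-expansive under a uniform additive perturbation: if $\bigl|\frac{m}{|\SS|}\hat D(i,j) - D(i,j)\bigr| \le \nfrac{\eps m}{2}$ for all $j\neq i$, then $\bigl|\frac{m}{|\SS|}\hat s(i) - s(i)\bigr| \le \nfrac{\eps m}{2}$ as well, since taking a minimum of two families of reals that are pointwise within $\nfrac{\eps m}{2}$ keeps the minima within $\nfrac{\eps m}{2}$. Hence, off a total failure probability of $\delta$, the reported estimates are within $\eps m$ of the true maximin scores; moreover the threshold $(\varphi-\nfrac{\eps}{2})m$ includes every item with $s(i)\ge \varphi m$ (whose estimate is at least $(\varphi-\nfrac{\eps}{2})m$) and excludes every item with $s(i)\le(\varphi-\eps)m$ (whose estimate is at most $(\varphi-\nfrac{\eps}{2})m$), giving the \textsc{List maximin} guarantee. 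For the space bound, storing at most $11\ell$ permutations over $\UU$ costs $\BigO(\ell\cdot n\log n) = \BigO\!\left(n\eps^{-2}\log^2 n + n\eps^{-2}\log n\log\delta^{-1}\right)$ bits, and the sampler adds $\BigO(\log\log m)$ by \cref{lem:sampling_ub}, which is exactly the claimed bound.

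This argument is largely routine once the earlier sampling lemmas are in hand; the two points that require care are that the union bound must range over all $\Theta(n^2)$ ordered pairs (this is precisely what forces the $\log n$ factor in $\ell$, and thus one of the two logarithmic factors in the final space bound), and the verification that passing from pairwise counts to the maximin minimum does not amplify the additive error. I expect the latter non-expansiveness step to be the conceptual crux, since it is what makes storing full rankings and thresholding the per-item minima correct; everything else is concentration and bookkeeping.
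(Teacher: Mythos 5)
Your proposal is correct and follows essentially the same route as the paper's proof: sample $\Theta(\eps^{-2}\log(n/\delta))$ votes with the $\BigO(\log\log m)$-space sampler, store the sampled permutations in full, estimate each pairwise count within additive $\nfrac{\eps m}{2}$ via Chernoff plus a union bound over all pairs, and threshold the sampled maximin scores at $(\varphi-\nfrac{\eps}{2})$. The only difference is expository: you spell out the non-expansiveness of the minimum under uniform additive perturbation and the inclusion/exclusion check at the threshold, which the paper leaves implicit.
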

\begin{proof}
  Let $\ell=(\nfrac{8}{\eps^2})\ln(\nfrac{6n}{\delta})$ and $p = \nfrac{6\ell}{m}$. We put the current vote in a set $\SS$ with probability $p$. Then it follows from the proof of \Cref{thm:heavy_hitters} that $\ell \le |\SS| \le 11\ell$ with probability at least $(1-\nfrac{\delta}{3})$. Suppose $|\SS| = \ell_1$; let $\mathcal{S} = \{ v_i : i\in[\ell_1] \}$ be the set of votes sampled. Let $D_\EE(x,y)$ be the total number of votes in which $x$ beats $y$ and $D_\SS(x, y)$) be the number of such votes in $\SS$. Then by the choice of $\ell$ and the Chernoff bound (see \cite{DeyB15}), it follows that $|D_\SS(x,y)\nfrac{m}{\ell_1} - D_\EE(x,y)| \le \nfrac{\eps m}{2}$ for every pair of candidates $x, y \in \UU$. Note that each vote can be stored in $O(n\log n)$ bits of space. Hence simply finding $D_\SS(x,y)$ for every $x, y\in \UU$ by storing $\SS$ and returning all the items with maximin score at least $(\phi - \eps/2) \ell_1$ in $\SS$ requires $O\left( n\eps^{-2} \log n (\log n + \log \delta^{-1})  + \log \log m \right)$ bits of memory, with the additive $O(\log \log m)$ due to \cref{lem:sampling_ub}.
\end{proof}

\subsection{Algorithms with Unknown Stream Length}\label{sec:unknown}

Now we consider the case when the length of the stream is not known beforehand. We present below an algorithm for {\sc $(\eps, \varphi)$-List heavy hitters} and {\sc $\eps$-Maximum} problems in the setting where the length of the stream is not known beforehand.

\begin{theorem}\label{thm:UbUnknownMax}
 There is a randomized one-pass algorithm for {\sc $(\eps, \varphi)$-List heavy hitters} and {\sc $\eps$-Maximum} problems with space complexity $O\left({\eps^{-1}}\log \eps^{-1} + \varphi^{-1}\log n + \log\log m \right)$ bits and update time $O(1)$ even when the length of the stream is not known beforehand.
\end{theorem}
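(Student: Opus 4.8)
The plan is to remove the assumption that $m$ is known by replacing the single fixed-rate sampling step of \Cref{thm:heavy_hitters} (for \textsc{$(\eps,\varphi)$-List heavy hitters}) and of \Cref{thm:heavy_hitters_weak} (for \textsc{$\eps$-Maximum}) with an \emph{adaptive} sampling scheme whose rate decreases geometrically as the stream grows, while leaving the downstream hashing-plus-Misra--Gries data structure essentially intact. Recall that both known-length algorithms access the input only through a uniform sample $S$ of size $\Theta(\ell)$ with $\ell=\Theta(\eps^{-2})$, after which correctness follows from \Cref{lem:dkw} (the sample preserves every relative frequency up to an additive $\eps$) together with the $\pm\,\ell/|\TT_1|=\pm\,\Theta(\eps^{-1})$ additive guarantee of Misra--Gries on the sampled substream, which rescales to the desired $\pm\,\eps m$ error on the original stream. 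Hence it suffices to produce, without knowing $m$ in advance, a uniform sample of the correct size together with a frequency summary of it.

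First I would maintain a sampling probability $p=2^{-t}$, initialized with $t=0$ (so $p=1$), and insert each arriving item into the sample independently with probability $p$, maintaining the invariant that at every point the retained set is a $\mathrm{Bernoulli}(p)$ sample of the prefix seen so far. Whenever the number of retained samples exceeds an upper threshold $c\ell$ for a suitable constant $c$, I perform a \emph{downsampling} step: increment $t$ (halve $p$) and discard each currently retained item independently with probability $1/2$, which preserves the invariant. A standard Chernoff argument shows that throughout the stream the retained sample size stays in $[\ell,c\ell]$ with probability $1-\delta$ and that the final level satisfies $2^{-t}=\Theta(\ell/m)$, so at termination $S$ is exactly a uniform sample of the size required by \Cref{thm:heavy_hitters,thm:heavy_hitters_weak} and \Cref{lem:dkw} applies verbatim.

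For the resource bounds, the only new state is the integer level $t$, which costs $\BigO(\log t)=\BigO(\log\log m)$ bits since $t=\BigO(\log m)$; drawing the per-item $\mathrm{Bernoulli}(2^{-t})$ decision in $\BigO(1)$ time and $\BigO(\log\log m)$ space is done exactly as in \Cref{lem:sampling_ub}, generating fresh random bits for the current level. The hash function and the tables $\TT_1,\TT_2$ are unchanged, contributing $\BigO(\eps^{-1}\log\eps^{-1}+\varphi^{-1}\log n)$ bits (for \textsc{$\eps$-Maximum} I keep only the single maximum-frequency id, as in \Cref{thm:heavy_hitters_weak}). To obtain worst-case $\BigO(1)$ update time I would reuse the amortization idea of \Cref{sec:lhh}: downsampling occurs only $\BigO(\log m)$ times and each such step does $\BigO(\ell)$ work, which---under the standing assumption that the stream is sufficiently long---can be spread over the subsequent insertions so that no single insertion performs more than $\BigO(1)$ work.

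The hard part will be maintaining the frequency summary consistently across downsampling steps, since Misra--Gries does not support deletions yet the discarded half of the sample must be reflected in $\TT_1$. My proposed resolution is to implement the discard \emph{at the level of the summary}: at each downsampling step replace every counter value $c$ in $\TT_1$ by an independent $\mathrm{Binomial}(c,1/2)$ draw, purge emptied keys, and refresh $\TT_2$ accordingly, which is the correct thinning of the retained multiset. I would then argue that, because the sample size is $\Theta(\ell)$ at every round, each thinning preserves the Misra--Gries invariant up to a constant fraction of a single additive-$\Theta(\eps^{-1})$ budget, so the cumulative error on the final sample stays $\BigO(\eps^{-1})$ even though there are $\BigO(\log m)$ rounds; rescaling by $m/|S|$ then yields the required $\pm\,\eps m$ accuracy and completes the reduction to \Cref{thm:heavy_hitters,thm:heavy_hitters_weak}. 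The delicate point to verify carefully is precisely that the error does not compound over the $\BigO(\log m)$ rounds, and the observation making this go through is that the slack introduced at a round scales with the current sample size $\Theta(\ell)$ rather than accumulating additively across rounds.
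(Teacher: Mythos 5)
Your overall strategy---replacing the fixed-rate sample by an adaptive Bernoulli sample whose rate halves whenever the retained set exceeds $\Theta(\eps^{-2})$, with the summary ``thinned'' at each level change---is a genuinely different route from the paper's. The paper keeps the known-length algorithm as a black box: it tracks an approximate stream length with a Morris counter, runs \Cref{alg:heavy_hitters} with an oversampled rate $\ell=\Theta(\eps^{-3}\log(1/\delta))$ against a geometrically growing sequence of guesses for $m$, keeps at most two instances alive at any time, discards the older instance when the next guess is reached, and on termination reports the older surviving instance; correctness follows because the discarded instance covers at most an $\eps$ fraction of the stream. That design never has to modify a summary in flight, which is precisely where your proposal runs into trouble.

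The gap is in the thinning step. You assert that replacing each counter $c$ of $\TT_1$ by an independent $\mathrm{Binomial}(c,1/2)$ draw ``is the correct thinning of the retained multiset,'' but $\TT_1$ is not the retained multiset: Misra--Gries is a lossy, one-sided compression in which an item's counter may already undercount its sample frequency by up to $\Theta(\eps^{-1})$ (the entire error budget) at the moment you downsample, and occurrences absorbed by past decrements are unrecoverable. Thinning the summary is therefore not coupled to thinning a sample you no longer possess, so the claim that \Cref{lem:dkw} ``applies verbatim'' and that the problem reduces to \Cref{thm:heavy_hitters,thm:heavy_hitters_weak} does not hold. What must actually be analyzed is the final counter directly as an estimator of $f_x 2^{-T}$, accounting for (i) the decrement losses incurred at each level, each up to $\Theta(\eps^{-1})$ and attenuated by subsequent thinnings, and (ii) the variance injected by $O(\log m)$ rounds of Binomial thinning. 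Your geometric-decay observation is the right intuition for (i), but (ii) requires a concentration bound holding \emph{simultaneously} for all candidate heavy hitters (a per-item Chebyshev bound does not give the List guarantee), and moreover the final level $T$ is a stopping time determined by the stream and the coin flips, so even the pure sampling part needs an argument that $T$ equals its deterministic target with high probability before any fixed-rate analysis can be invoked. None of these steps is carried out, and as written the argument does not establish the theorem; the paper's restart-and-discard construction is a way to avoid having to carry them out at all.
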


\begin{proof}
 We describe below a randomized one-pass algorithm for the {\sc $(8\eps, \varphi)$-List heavy hitters} problem. We may assume that the length of the stream is at least $\nfrac{1}{\eps^2}$; otherwise, we use the algorithm in \Cref{thm:heavy_hitters} and get the result. Now we guess the length of the stream to be $\nfrac{1}{\eps^2}$, but run an instance $\II_1$ of \Cref{alg:heavy_hitters} with $\ell=\nfrac{\log(\nfrac{6}{\delta})}{\eps^3}$ at line \ref{alg:ell}. By the choice of the size of the sample (which is $\Theta(\nfrac{\log(\nfrac{1}{\delta})}{\eps^3})$), $\II_1$ outputs correctly with probability at least $(1-\delta)$, if the length of the stream is in $[\nfrac{1}{\eps^2},\nfrac{1}{\eps^3}]$. If the length of the stream exceeds $\nfrac{1}{\eps^2}$, we run another instance $\II_2$ of \Cref{alg:heavy_hitters} with $\ell=\nfrac{\log(\nfrac{6}{\delta})}{\eps^3}$ at line \ref{alg:ell}. Again by the choice of the size of the sample, $\II_2$ outputs correctly with probability at least $(1-\delta)$, if the length of the stream is in $[\nfrac{1}{\eps^3},\nfrac{1}{\eps^4}]$. If the stream length exceeds $\nfrac{1}{\eps^3}$, we discard $\II_1$, free the space it uses, and run an instance $\II_3$ of \Cref{alg:heavy_hitters} with $\ell=\nfrac{\log(\nfrac{6}{\delta})}{\eps^3}$ at line \ref{alg:ell} and so on. At any point of time, we have at most two instances of \Cref{alg:heavy_hitters} running. When the stream ends, we return the output of the older of the instances we are currently running. We use the approximate counting method of Morris \citep{morris1978counting} to approximately count the length of the stream. We know that the Morris counter outputs correctly with probability $(1-2^{-\nfrac{k}{2}})$ using $O(\log\log m + k)$ bits of space at any point in time \citep{flajolet1985approximate}. Also, since the Morris counter increases only when an item is read, it outputs correctly up to a factor of four at every position if it outputs correctly at positions $1, 2, 4, \ldots, 2^{\lfloor \log_2 m \rfloor}$; call this event $E$. Then we have $\Pr(E) \ge 1-\delta$ by choosing $k=2\log_2(\nfrac{\log_2 m}{\delta})$ and applying union bound over the positions $1, 2, 4, \ldots, 2^{\lfloor \log_2 m \rfloor}$. The correctness of the algorithm follows from the correctness of \Cref{alg:heavy_hitters} and the fact that we are discarding at most $\eps m$ many items in the stream (by discarding a run of an instance of \Cref{alg:heavy_hitters}). The space complexity and the $O(1)$ update time of the algorithm follow from \Cref{thm:heavy_hitters},  the choice of $k$ above, and the fact that we have at most two instances of \Cref{alg:heavy_hitters} currently running at any point of time.
 
 The algorithm for the {\sc $\eps$-Maximum} problem is same as the algorithm above except we use the algorithm in \Cref{thm:heavy_hitters_weak} instead of \Cref{alg:heavy_hitters}.
\end{proof}

Note that this proof technique does not seem to apply to our optimal \cref{alg:heavy_hitters2}. Similarly to \Cref{thm:UbUnknownMax}, we get the following result for the {\sc $\eps$-Minimum, $(\eps,\phi)$-Borda,} and {\sc $(\eps,\phi)$-Maximin} problems.

\begin{theorem}\label{thm:UbUnknownMin}
 There are randomized one-pass algorithms for {\sc $\eps$-Minimum, $(\eps,\phi)$-Borda,} and {\sc $(\eps,\phi)$-Maximin} problems with space complexity $O\left((\nfrac{1}{\eps})\log\log(\nfrac{1}{\eps\delta}) + \log\log m \right)$, $O\left( n\left( \log n + \log\frac{1}{\eps} + \log\log\frac{1}{\delta} \right) + \log\log m \right)$, and $O\left(n\eps^{-2} \log^2 n + n \eps^{-2} \log n \log(\nfrac{1}{\delta}) + \log\log m \right)$ bits respectively even when the length of the stream is not known beforehand. Moreover, the update time for {\sc $\eps$-Minimum} is $O(1)$.
\end{theorem}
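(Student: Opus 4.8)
The plan is to extend the known-length algorithms of \Cref{thm:rare,thm:borda,thm:maximin} to the unknown-length setting using exactly the geometric doubling-guess technique already developed in the proof of \Cref{thm:UbUnknownMax}, combined with an approximate length counter. First I would maintain a Morris counter \citep{morris1978counting} to track the stream length approximately; as recorded in the proof of \Cref{thm:UbUnknownMax}, setting the counter parameter to $k = 2\log_2(\nfrac{\log_2 m}{\delta})$ guarantees, with probability at least $1-\delta$, that the counter is correct to within a factor of four at every stream position, using only $O(\log\log m + k)$ bits. This is what supplies the $O(\log\log m)$ additive term in each of the three claimed bounds and lets us detect, online, when the stream length crosses each successive power-of-$\eps^{-1}$ threshold.

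The core idea is to guess the stream length in geometrically increasing phases $[\nfrac{1}{\eps^2}, \nfrac{1}{\eps^3}], [\nfrac{1}{\eps^3}, \nfrac{1}{\eps^4}], \ldots$ and, for each phase, run a fresh instance of the corresponding known-length algorithm (\Cref{alg:rare} for \textsc{$\eps$-Minimum}, the sampling procedure of \Cref{thm:borda} for \textsc{$(\eps,\phi)$-Borda}, and that of \Cref{thm:maximin} for \textsc{$(\eps,\phi)$-Maximin}) with its sample-size parameter $\ell$ inflated by an extra factor of $\nfrac{1}{\eps}$, so that each instance samples enough to succeed across a full factor-$\eps^{-1}$ window of lengths. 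As in \Cref{thm:UbUnknownMax}, at any time we keep at most two instances alive: one for the current phase and one for the previous phase; when the Morris counter indicates the length has advanced past the upper end of the older phase, we discard the older instance and free its space, and when the stream ends we report the output of the older of the two currently-running instances. Discarding accounts for at most $\eps m$ of the items, which only perturbs all the relevant scores (frequencies, Borda scores, maximin scores) by at most an $\eps$-fraction and is absorbed into the additive error guarantee. Correctness then follows from the correctness of the underlying known-length algorithm on the retained phase together with the accuracy of the Morris counter, exactly mirroring the argument for \Cref{thm:UbUnknownMax}.

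For the space and update-time claims I would simply observe that running two instances multiplies the space by a constant, so each per-instance bound from \Cref{thm:rare}, \Cref{thm:borda}, and \Cref{thm:maximin} (namely $O((\nfrac{1}{\eps})\log\log(\nfrac{1}{\eps\delta}))$, $O(n(\log n + \log\nfrac{1}{\eps} + \log\log\nfrac{1}{\delta}))$, and $O(n\eps^{-2}\log^2 n + n\eps^{-2}\log n\log\nfrac{1}{\delta})$ respectively) is preserved up to constants, and the Morris counter adds the $O(\log\log m)$ term; the inflation of $\ell$ by $\nfrac{1}{\eps}$ changes only logarithmic factors inside counters and is likewise absorbed. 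The $O(1)$ update time for \textsc{$\eps$-Minimum} is inherited from the $O(1)$ update time established in \Cref{thm:rare} (via the data structure of \citep{demaine2002frequency} or the spreading-out strategy from \cref{sec:lhh}), since per insertion we touch only a constant number of live instances plus the Morris counter.

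The main obstacle I anticipate is verifying that discarding an entire phase's worth of early items genuinely costs at most an additive $\eps m$ in each score notion, and in particular that this holds simultaneously for the minimum frequency (where small counts are the delicate quantities) as well as for Borda and maximin scores. For \textsc{$\eps$-Minimum} one must check that throwing away a factor-$\eps$ prefix cannot push a genuinely-minimum item's relative frequency outside the tolerated additive $\eps$ band, which requires care because the minimum-frequency regime is where \Cref{alg:rare} already juggles several frequency thresholds; I would handle this by noting the discarded prefix contributes at most $\eps m$ to every item's count, so the additive distortion to $\nfrac{f(x)}{m}$ is at most $\eps$ uniformly. The remaining steps are routine applications of the Chernoff and union bounds already invoked in the cited theorems, so no genuinely new technical machinery is needed beyond careful bookkeeping of the error budget across the two simultaneously-running instances and the Morris counter.
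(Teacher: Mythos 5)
Your proposal is correct and follows exactly the route the paper intends: the paper gives no separate proof for this theorem, stating only that it follows ``similarly to'' \Cref{thm:UbUnknownMax}, and your reconstruction — Morris counter for the $O(\log\log m)$ term, geometric factor-$\eps^{-1}$ phases with sample size inflated by $\nfrac{1}{\eps}$, at most two live instances with the older one's output reported, and the observation that the discarded prefix of at most $\eps m$ items perturbs every frequency, Borda, and maximin score by at most an additive $\eps$-fraction — is precisely the intended argument. Your extra care about the $\eps$-\textsc{Minimum} case is a welcome addition the paper leaves implicit.
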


\section{Results on Space Complexity Lower Bounds}\label{subsec:lwb}

In this section, we prove space complexity lower bounds for the {\sc $\eps$-Heavy hitters}, {\sc $\eps$-Minimum}, {\sc $\eps$-Borda}, and {\sc $\eps$-maximin} problems. We present reductions from certain communication problems for proving space complexity lower bounds. Let us first introduce those communication problems with necessary results.

\subsection{Communication Complexity}

\begin{definition}(\textsc{Indexing}$_{m,t}$)\\
 Let $t$ and $m$ be positive integers. Alice is given a string $x = (x_1, \cdots, x_t)\in [m]^t$. Bob is given an index $i\in [t]$. Bob has to output $x_i$.
\end{definition}

The following is a well known result~\citep{Kushilevitz}.

\begin{lemma}\label{lem:index}
$\mathcal{R}_\delta^{\text{1-way}}(\textsc{Indexing}_{m,t}) = \Omega(t \log m)$ for constant $\delta\in(0,1)$.
\end{lemma}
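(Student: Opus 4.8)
The plan is to prove this classical lower bound by a distributional, information-theoretic argument built on the tools in \Cref{prop:mut} together with Fano's inequality. First I would reduce randomized to distributional complexity: take $X = (X_1,\dots,X_t)$ uniform on $[m]^t$ and an index $I$ uniform and independent on $[t]$. Any $\delta$-error randomized one-way protocol errs with probability at most $\delta$ on every fixed input, hence with average probability at most $\delta$ over $(X,I)$; fixing the protocol's coins to their best value (Yao's principle) yields a deterministic message $M = M(X)$ such that Bob recovers $X_i$ from $(M,i)$ with average error at most $\delta$. Writing $\delta_i$ for the error probability (over $X$) at a fixed index $i$, this says $\frac1t\sum_{i=1}^t \delta_i \le \delta$, and the length of $M$ is a lower bound on the communication cost of the original protocol.

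The heart of the argument is to show $I(X;M) = \Omega(t\log m)$. I would apply Fano's inequality (item 5 of \Cref{prop:mut}) to the reconstruction of $X_i$ from $M$, which for each $i$ gives
\[
H(X_i \mid M) \le H_b(\delta_i) + \delta_i \log(m-1).
\]
Combining the chain rule $H(X\mid M) = \sum_{i} H(X_i \mid M, X_{<i})$ with the fact that conditioning does not increase entropy, I get $H(X\mid M) \le \sum_{i} H(X_i\mid M)$, so that
\[
H(X\mid M) \le \sum_{i=1}^t\bigl(H_b(\delta_i) + \delta_i\log(m-1)\bigr) \le t\bigl(H_b(\delta) + \delta\log(m-1)\bigr),
\]
where the last inequality uses concavity of $H_b$ together with $\frac1t\sum_i\delta_i \le \delta$. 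Since $X$ is uniform on $[m]^t$, we have $H(X)=t\log m$, and therefore
\[
I(X;M) = H(X) - H(X\mid M) \ge t\bigl((1-\delta)\log m - H_b(\delta)\bigr).
\]

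Finally I would convert this into a bound on the message length. The expected (and hence worst-case) number of bits Alice sends is at least $H(M) \ge I(X;M)$. For any constant $\delta \in (0,1)$ and $m$ large enough that $H_b(\delta) \le \tfrac{1-\delta}{2}\log m$, the right-hand side is at least $\tfrac{1-\delta}{2}\,t\log m = \Omega(t\log m)$, giving $\mathcal{R}_\delta^{\text{1-way}}(\textsc{Indexing}_{m,t}) = \Omega(t\log m)$. I expect the one genuinely delicate point to be the reduction in the first paragraph: I must make sure the worst-case error guarantee survives the averaging over inputs so that Fano applies with the same parameter $\delta$, and that coin-fixing (whether the protocol uses public or private randomness) preserves average error at most $\delta$. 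Everything after that reduction is a routine entropy calculation.
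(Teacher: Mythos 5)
The paper does not actually prove this lemma: it is stated as a well-known result with a citation to the Kushilevitz--Nisan text, so there is no in-paper argument to compare yours against. Your proof is the standard information-theoretic derivation (fix the coins via Yao's principle to get a deterministic one-way protocol with average error at most $\delta$ under the uniform distribution, apply Fano's inequality coordinatewise, combine with the chain rule and ``conditioning reduces entropy,'' and finish with message length $\ge H(M)-1 \ge I(X;M)-1$), and it is correct, including the one genuinely delicate point you flag about preserving the error parameter through the averaging and coin-fixing. The only step I would tighten is the bound $\frac{1}{t}\sum_i H_b(\delta_i) \le H_b(\delta)$: concavity gives $\frac{1}{t}\sum_i H_b(\delta_i) \le H_b(\bar\delta)$ with $\bar\delta = \frac{1}{t}\sum_i \delta_i \le \delta$, but $H_b$ is increasing only on $(0,\tfrac12]$, so for constant $\delta > \tfrac12$ you should instead just use $H_b(\cdot) \le 1$; this costs an additive $t$ and leaves the $\Omega(t\log m)$ conclusion intact.
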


\begin{definition}(\textsc{Augmented-indexing}$_{m,t}$)\\
 Let $t$ and $m$ be positive integers. Alice is given a string $x = (x_1, \cdots, x_t)\in [m]^t$. Bob is given an integer $i\in [t]$ and $(x_1, \cdots, x_{i-1})$. Bob has to output $x_i$.
\end{definition}

The following communication complexity lower bound result is due to~\citep{ergun2010periodicity} by a simple extension of the arguments of Bar-Yossef et al \cite{bar2002information}.

\begin{lemma}\label{lem:aug}
$\mathcal{R}_\delta^{\text{1-way}}(\textsc{Augmented-indexing}_{m,t}) = \Omega((1-\delta)t \log m)$ for any $\delta < 1 - \frac{3}{2m}$.
\end{lemma}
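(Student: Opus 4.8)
The plan is to prove the bound by the standard information-theoretic (direct-sum style) argument of Bar-Yossef et al., using the uniform input distribution as the hard distribution. First I would reduce to deterministic protocols: by Yao's principle it suffices to exhibit a distribution on inputs on which every deterministic one-way protocol with error at most $\delta$ must send $\Omega((1-\delta)t\log m)$ bits. I take $X=(X_1,\dots,X_t)$ uniform over $[m]^t$, so that the coordinates are independent and each $H(X_i)=\log m$, and I let $M=M(X)$ denote Alice's single message. Since the number of distinct messages is at most $2^{c}$, where $c$ is the communication cost, we have $c\ge H(M)\ge I(X;M)$, so it suffices to lower bound $I(X;M)$.

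Next I would decompose $I(X;M)$ coordinatewise. By the chain rule for mutual information (Proposition \ref{prop:mut}) and the independence of the coordinates,
\[
I(X;M)=\sum_{i=1}^{t} I(X_i;M\mid X_1,\dots,X_{i-1})
=\sum_{i=1}^{t}\Big(\log m - H(X_i\mid M,X_1,\dots,X_{i-1})\Big).
\]
The crucial point is that this conditioning pattern is exactly the information available to Bob in \textsc{Augmented-indexing}: on index $i$ Bob holds $M$ together with the prefix $(X_1,\dots,X_{i-1})$ and must output $X_i$. Since the protocol errs with probability at most $\delta$ on every fixed input, it errs with probability at most $\delta$ under the uniform distribution as well; hence for each $i$ there is a reconstruction map $g_i(M,X_1,\dots,X_{i-1})$ recovering $X_i$ with error $\delta_i\le\delta$. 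Applying Fano's inequality (Proposition \ref{prop:mut}, item 5) with $\mathcal{X}=[m]$ gives
\[
H(X_i\mid M,X_1,\dots,X_{i-1})\le H_b(\delta_i)+\delta_i\log(m-1)\le H_b(\delta)+\delta\log(m-1),
\]
where the last step uses that $\delta\mapsto H_b(\delta)+\delta\log(m-1)$ is increasing for $\delta<1-1/m$, which is implied by $\delta<1-3/(2m)$. Substituting back yields $I(X_i;M\mid X_1,\dots,X_{i-1})\ge \log m-\delta\log(m-1)-H_b(\delta)$, and summing over $i$ gives $I(X;M)\ge t\big(\log m-\delta\log(m-1)-H_b(\delta)\big)$.

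Finally I would convert this into the claimed form. Writing $\log m-\delta\log(m-1)=(1-\delta)\log m+\delta\log\tfrac{m}{m-1}\ge(1-\delta)\log m$, it remains to show that the hypothesis $\delta<1-3/(2m)$ forces $(1-\delta)\log m-H_b(\delta)=\Omega\big((1-\delta)\log m\big)$, i.e.\ that the binary-entropy correction $H_b(\delta)=H_b(1-\delta)$ does not swallow the main term; this is precisely the regime in which the per-coordinate information stays positive and proportional to $(1-\delta)\log m$. I expect this last step to be the main obstacle: it is where the precise threshold $1-3/(2m)$ is genuinely used, and it needs a careful estimate of $H_b(1-\delta)$ (for instance via $H_b(\eta)\le\eta\log_2(e/\eta)$) measured against $(1-\delta)\log m$, rather than a one-line bound, since the ratio degrades as one approaches the boundary $\delta=1-3/(2m)$. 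Granting it, the chain $c\ge I(X;M)\ge t\big((1-\delta)\log m-H_b(\delta)\big)=\Omega((1-\delta)t\log m)$ completes the argument; the passage from private-randomness one-way protocols to the deterministic case is handled by the initial Yao averaging, and the restriction $\delta<1-3/(2m)$ also guarantees $\log(m-1)>0$ so that the Fano step is nontrivial.
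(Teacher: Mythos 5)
Your overall plan is the right one, and it is worth noting that the paper itself gives no proof of this lemma: it is quoted from Erg\"{u}n et al.\ as ``a simple extension of the arguments of Bar-Yossef et al.,'' and that argument is exactly your skeleton (uniform hard distribution, $c\ge H(M)\ge I(X;M)$, the chain rule aligned with Bob's prefix, Fano per coordinate). Everything up to and including $c\ge t\bigl(\log m-\delta\log(m-1)-H_b(\delta)\bigr)$ is sound, modulo one small repair: after Yao you only control the error averaged over the index $i$, so you get $\tfrac1t\sum_i\delta_i\le\delta$ rather than $\delta_i\le\delta$ for each $i$; this is fixed either by applying Fano directly to the randomized protocol (where the per-index, per-input error really is at most $\delta$) or by combining concavity of $H_b$ with the monotonicity of $\eta\mapsto H_b(\eta)+\eta\log(m-1)$ on $\eta\le 1-1/m$ that you already invoke.

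The step you defer, however, is a genuine gap and not a technicality: the inequality $(1-\delta)\log m-H_b(\delta)=\Omega\bigl((1-\delta)\log m\bigr)$ is false in part of the permitted range. Write $\eta=1-\delta$ and use $H_b(\delta)=H_b(\eta)\ge\eta\log(1/\eta)$. For $\eta$ just above $3/(2m)$ this gives $(1-\delta)\log m-H_b(\delta)\le\eta\log m-\eta\log(2m/3)=\eta\log(3/2)$, which is $\Theta(\eta)$ rather than $\Theta(\eta\log m)$ (and at $\eta=2/m$ the quantity is already negative, since $H_b(2/m)\approx\tfrac2m\log\tfrac{em}{2}>\tfrac2m\log m$). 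Keeping the full Fano expression does not help: $\log\tfrac{m}{m-1}+\eta\log(m-1)-H_b(\eta)\le\tfrac{\log e}{m-1}+\eta\log\bigl((m-1)\eta\bigr)=O(1/m)$ when $\eta=\Theta(1/m)$, whereas the target $(1-\delta)\log m$ is $\Theta\bigl(\tfrac{\log m}{m}\bigr)$; the ratio tends to $0$ as $m\to\infty$, so the Fano route provably yields only $\Omega\bigl(t(1-\delta)\bigr)$ near the threshold $\delta=1-\tfrac{3}{2m}$ (which is essentially the point where the bound must vanish, since success probability $1/m$ is free). What the argument does deliver is $\mathcal{R}_\delta^{\text{1-way}}\ge t\bigl(\log m-\delta\log(m-1)-H_b(\delta)\bigr)$, and this is $\Omega\bigl((1-\delta)t\log m\bigr)$ with a universal constant once $1-\delta\ge e/\sqrt m$, in particular for every constant $\delta<1$; it also covers the only place the paper uses the lemma, namely the \textsc{Greater-than} reduction with $m=2$ and $\delta<1/4$, where $\log(m-1)=0$ and $1-H_b(\delta)\ge\bigl(1-H_b(1/4)\bigr)(1-\delta)$. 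So you should either restrict the range of $\delta$ (or state the conclusion in the form $t(\log m-\delta\log(m-1)-H_b(\delta))$), or find a genuinely different argument for $1-\delta=\Theta(1/m)$; as proposed, the final step cannot be granted.
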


\citep{SunW15} defines a communication problem called {\sc Perm}, which we generalize to {\sc $\eps$-Perm} as follows.

\begin{definition}({\sc $\eps$-Perm})\\
 Alice is given a permutation $\sigma$ over $[n]$ which is partitioned into $\nfrac{1}{\eps}$ many contiguous blocks. Bob is given an index $i\in[n]$ and has to output the block in $\sigma$ where $i$ belongs.
\end{definition}

Our lower bound for {\sc $\eps$-Perm} matches the lower bound for {\sc Perm} in Lemma $1$ in \citep{SunW15} when $\eps = \nfrac{1}{n}$. For the proof, the reader may find useful some information theory facts described in Appendix A.

\begin{restatable}{lemma}{LemPerm}\label{lem:perm}
 $\RR_\delta^{\text{1-way}} (\eps-\textsc{Perm}) = \Omega(n\log(\nfrac{1}{\eps}))$, for any constant $\delta < \nfrac{1}{10}$.
\end{restatable}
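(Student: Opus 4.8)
The plan is to prove the bound by a direct information-theoretic argument, lower bounding the one-way message length by the mutual information between Alice's message and the block structure of her permutation. First I would set up the hard distribution: let $\sigma$ be a uniformly random permutation of $[n]$, and let $B : [n] \to [\nfrac1\eps]$ be the induced \emph{block-assignment function}, where $B(i)$ is the index of the contiguous block of $\sigma$ containing the element $i$. Since each of the $\nfrac1\eps$ blocks has exactly $\eps n$ elements, $B$ is a uniformly random balanced assignment, and by Stirling's approximation
$$ H(B) = \log \frac{n!}{((\eps n)!)^{1/\eps}} = n\log\tfrac1\eps - O\!\left(\tfrac{1}{\eps}\log(\eps n)\right), $$
so the leading term is $n\log(\nfrac1\eps)$.

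By the easy direction of Yao's principle it suffices to lower bound the average-error deterministic one-way complexity under the input distribution $\mu$ in which $\sigma$ is uniform and $i\in[n]$ is uniform and independent: fixing the protocol's coins to their best value yields a deterministic message $M = M(\sigma)$ and a deterministic reconstruction $\mathrm{Bob}(M,i)$ whose error averaged over $\mu$ is at most $\delta$. Writing $\delta_i = \Pr_\sigma[\mathrm{Bob}(M,i) \neq B(i)]$, the worst-case guarantee gives $\delta_i \le \delta$ for every $i$, hence in particular $\frac1n \sum_{i} \delta_i \le \delta$. The communication is at least $\max|M| \ge H(M) \ge I(M; B) = H(B) - H(B\mid M)$, so it remains to upper bound $H(B\mid M)$.

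For this I would apply subadditivity of conditional entropy followed by Fano's inequality coordinate-by-coordinate. Since for each fixed $i$ the map $M \mapsto \mathrm{Bob}(M,i)$ is a reconstruction function for $B(i)$, which takes values in a set of size $\nfrac1\eps$, with error $\delta_i$, Fano's inequality (\Cref{prop:mut}) gives $H(B(i)\mid M) \le H_b(\delta_i) + \delta_i\log(\nfrac1\eps - 1)$. Therefore
$$ H(B\mid M) \le \sum_{i=1}^{n} H(B(i)\mid M) \le \sum_{i=1}^n \Big(H_b(\delta_i) + \delta_i \log\tfrac1\eps\Big) \le n H_b(\delta) + n\delta \log\tfrac1\eps, $$
where the final step uses concavity (and monotonicity on $[0,\nfrac12]$) of $H_b$ together with $\frac1n\sum_i\delta_i \le \delta$. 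Combining with the estimate for $H(B)$,
$$ I(M;B) \ge n\log\tfrac1\eps\,(1-\delta) - nH_b(\delta) - o\!\left(n\log\tfrac1\eps\right). $$
For $\delta < \nfrac{1}{10}$ we have $1-\delta > \nfrac{9}{10}$ and $H_b(\delta) < H_b(\nfrac{1}{10}) < \nfrac12$; since we may assume $\eps \le \nfrac12$ (so $\log\tfrac1\eps \ge 1$ and the bounded term $nH_b(\delta)$ is absorbed), this yields $I(M;B) = \Omega(n\log(\nfrac1\eps))$ and hence the claimed lower bound.

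The main obstacle I anticipate is bookkeeping the constants so that the threshold $\delta < \nfrac{1}{10}$ genuinely makes the coefficient of $n\log(\nfrac1\eps)$ strictly positive, and correctly justifying the reduction from worst-case randomized error to average-case deterministic error over $\mu$ before invoking Fano per coordinate. A secondary technical point is controlling the Stirling lower-order term in $H(B)$ so that it is genuinely $o(n\log(\nfrac1\eps))$ in the regime where $\nfrac1\eps$ grows; this is routine but must be stated carefully so that it does not eat into the leading term.
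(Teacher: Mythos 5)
Your proposal is correct and follows essentially the same route as the paper's proof: a uniformly random permutation as the hard distribution, lower bounding the message length by $I(M;\tau)$, computing $H(\tau)=n\log(\nfrac{1}{\eps})$ via Stirling, and applying Fano's inequality coordinate-by-coordinate (the paper reaches $H(\tau\mid M)\le\sum_j H(\tau_j\mid M)$ via the chain rule and dropping conditioning, which is the same subadditivity step you use). Your treatment is slightly more careful about derandomizing via Yao and averaging the per-coordinate errors $\delta_i$ through concavity of $H_b$ --- note only that after fixing the coins you retain just the bound $\frac{1}{n}\sum_i\delta_i\le\delta$, not $\delta_i\le\delta$ for each $i$, but since your Fano step only uses the average this is harmless.
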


\begin{proof}
 Let us assume $\sigma$, the permutation Alice has, is uniformly distributed over the set of all permutations. Let $\tau_j$ denotes the block the item $j$ is in for $j\in[n]$, $\tau = (\tau_1, \ldots, \tau_n)$, and $\tau_{<j} = (\tau_1, \ldots, \tau_{j-1})$. Let $M(\tau)$ be Alice's message to Bob, which is a random variable depending on the randomness of $\sigma$ and the private coin tosses of Alice. Then we have $\RR^{1-way}(\eps-\textsc{Perm}) \ge H(M(\tau)) \ge I(M(\tau); \tau)$. Hence it is enough to lower bound $I(M(\tau); \tau)$. Then we have the following by chain rule.
 \begin{align*}
 I(M(\tau); \tau) &= \sum_{j=1}^n I(M(\tau); \tau_j | \tau_{<j})\\
 &= \sum_{j=1}^n H(\tau_j | \tau_ < j) - H(\tau_j | M(\tau), \tau_<j)\\
 &\ge \sum_{j=1}^n H(\tau_j | \tau_ < j) - \sum_{j=1}^n H(\tau_j | M(\tau))\\
 &= H(\tau) - \sum_{j=1}^n H(\tau_j | M(\tau))
 \end{align*}
 The number of ways to partition $n$ items into $\nfrac{1}{\eps}$ blocks is  $\nfrac{n!}{((\eps n)!)^{(\nfrac{1}{\eps})}}$ which is $\Omega(\nfrac{(\nfrac{n}{e})^n}{(\nfrac{\eps n}{e})^n})$. Hence we have $H(\tau) = n\log (\nfrac{1}{\eps})$. Now we consider $H(\tau_j | M(\tau))$. By the correctness of the algorithm, Fano's inequality, we have $H(\tau_j | M(\tau)) \le H(\delta) + (\nfrac{1}{10})\log_2((\nfrac{1}{\eps})-1) \le (\nfrac{1}{2}) \log (\nfrac{1}{\eps})$. Hence we have the following.
 \[ I(M(\tau); \tau) \ge (\nfrac{n}{2}) \log (\nfrac{1}{\eps}) \]
\end{proof}

Finally, we consider the \textsc{Greater-than} problem.
\begin{definition}(\textsc{Greater-than}$_{n}$)\\
 Alice is given an integer $x\in [n]$ and Bob is given an integer $y\in [n], y\ne x$. Bob has to output $1$ if $x>y$ and $0$ otherwise.
\end{definition}

The following result is due to \citep{smirnov88, MiltersenNSW98}. We provide a simple proof of it that seems to be missing\footnote{A similar proof appears in \cite{kremer1999randomized} but theirs gives a weaker lower bound.} in the literature.

\begin{restatable}{lemma}{LemGT}\label{lem:gt}
$\mathcal{R}_\delta^{\text{1-way}}(\textsc{Greater-than}_{n}) = \Omega(\log n)$, for every $\delta < \nfrac{1}{4}$. 
\end{restatable}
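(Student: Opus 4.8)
The plan is to prove the bound by a reduction from \textsc{Augmented-indexing}$_{2,t}$ with $t = \lfloor \log_2 n\rfloor$, whose one-way complexity is already pinned down by \Cref{lem:aug}. The guiding observation is that the predicate ``$x>y$'' for integers written in binary is decided by the most significant bit on which $x$ and $y$ disagree, which is exactly the information-revealing structure of augmented indexing. First I would have Alice interpret her binary string $a = (a_1,\dots,a_t)\in\{0,1\}^t$ as the integer $x = \sum_{j=1}^t a_j 2^{t-j}$ (with $a_1$ the most significant bit) and set her \textsc{Greater-than} input to $X = 2x$. Bob, who holds an index $i$ together with the prefix $(a_1,\dots,a_{i-1})$, would build the integer $y = \sum_{j<i} a_j 2^{t-j} + \sum_{j>i} 2^{t-j}$ (the known prefix, a $0$ in position $i$, and all ones thereafter) and set his input to $Y = 2y+1$; crucially Bob can compute $Y$ using only the data available to him in the augmented-indexing instance.

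The heart of the argument is the claim $X > Y \iff a_i = 1$. Working with $x$ and $y$ directly, $x - y = a_i 2^{t-i} - \sum_{j>i}(1-a_j)2^{t-j}$, and since $0 \le \sum_{j>i}(1-a_j)2^{t-j} \le 2^{t-i}-1$, one checks that $x > y$ when $a_i = 1$ and $x \le y$ when $a_i = 0$; hence $x > y \iff a_i = 1$. The parity shift $X = 2x$, $Y = 2y+1$ serves two purposes: it guarantees $X \ne Y$ (so the instance is a legal \textsc{Greater-than} input even in the corner case $x = y$), and it preserves the comparison, since $2x > 2y+1 \iff x > y$. Thus $X > Y \iff a_i = 1$, so Bob's \textsc{Greater-than} answer is exactly his required augmented-indexing output.

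Finally I would observe that the reduction respects the one-way model: Alice's input $X$ is a function of $a$ alone, so any one-way protocol for \textsc{Greater-than}$_{2n}$ yields a one-way protocol for \textsc{Augmented-indexing}$_{2,t}$ with the same message and the same error probability $\delta$, by having the augmented-indexing players simulate it on $(X,Y)$. Invoking \Cref{lem:aug} with $m = 2$ gives $\mathcal{R}_\delta^{\text{1-way}}(\textsc{Augmented-indexing}_{2,t}) = \Omega((1-\delta)t)$, valid precisely when $\delta < 1 - \frac{3}{2\cdot 2} = \frac14$, which is the range claimed; since $t = \lfloor\log_2 n\rfloor$ this is $\Omega(\log n)$, and rescaling from $[2n]$ to $[n]$ affects only constants. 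The one subtlety to get right---and the only place the argument could break---is that Bob must manufacture $Y$ knowing just the prefix $(a_1,\dots,a_{i-1})$ and never $a_i$ or the later bits; placing a $0$ in slot $i$ and all ones below it is exactly what makes $Y$ computable from the prefix while still forcing the comparison of $X$ and $Y$ to expose $a_i$. This is also where the present reduction improves on the weaker VC-dimension-style argument of \cite{kremer1999randomized}, as it hands over the full $\Omega(\log n)$ bound rather than a lower-order bound.
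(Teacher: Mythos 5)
Your proof is correct and follows essentially the same route as the paper: a reduction from \textsc{Augmented-indexing}$_{2,\Theta(\log n)}$ in which Alice encodes her bit string as an integer and Bob uses his prefix to build a threshold whose comparison with Alice's number reveals the bit $a_i$. The only difference is the encoding convention (the paper appends a trailing $1$ to Alice's number and has Bob place a $1$ at position $i$ followed by zeros, whereas you place a $0$ followed by ones and break ties with the $2x$ versus $2y+1$ parity shift), which is an immaterial variation of the same argument.
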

\begin{proof}
 We reduce the \textsc{Augmented-indexing}$_{2,\lceil\log n\rceil + 1}$ problem to the \textsc{Greater-than}$_{n}$ problem thereby proving the result. Alice runs the \textsc{Greater-than}$_{n}$ protocol with its input number whose representation in binary is $a=(x_1x_2\cdots x_{\lceil\log n\rceil}1)_2$. Bob participates in the \textsc{Greater-than}$_{n}$ protocol with its input number whose representation in binary is $b=(x_1x_2\cdots x_{i-1}1\underbrace{0 \cdots 0}_{(\lceil\log n\rceil-i+1)~ 0's})_2$. Now $x_i=1$ if and only if $a>b.$
\end{proof}

\subsection{Reductions from Problems in Communication Complexity}

We observe that a trivial $\Omega((\nfrac{1}{\varphi})\log n)$ bits lower bound for {\sc $(\eps, \varphi)$-List heavy hitters, $(\eps, \varphi)$-List borda, $(\eps, \varphi)$-List maximin} follows from the fact that any algorithm may need to output $\nfrac{1}{\phi}$ many items from the universe. Also, there is a trivial $\Omega(n \log n)$ lower bound for \textsc{$(\eps, \varphi)$-List borda} and \textsc{$(\eps, \varphi)$-List maximin} because each stream item is a permutation on $[n]$, hence requiring $\Omega(n \log n)$ bits to read.

We show now a space complexity lower bound of $\Omega(\frac{1}{\eps}\log \frac{1}{\phi})$ bits for the \textsc{$\eps$-Heavy hitters} problem.

\begin{restatable}{theorem}{ThmHeavyLbEps}\label{thm:eps_eps}
 Suppose the size of universe $n$ is at least $\nfrac{1}{(\eps\phi^{\mu})}$ for any constant $\mu>0$ and that $\phi > 2 \eps$. Any randomized one pass {\sc $(\eps,\phi)$-Heavy hitters} algorithm with success probability at least $(1-\delta)$ must use $\Omega((\nfrac{1}{\eps})\log \nfrac{1}{\phi})$ bits of space, for constant $\delta\in(0,1)$.
\end{restatable}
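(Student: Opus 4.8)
The plan is to prove the lower bound via a reduction from the $\textsc{Indexing}_{m,t}$ communication problem, whose one-way randomized complexity is $\Omega(t \log m)$ by \Cref{lem:index}. The idea is to choose the parameters so that $t = \Theta(1/\eps)$ and $m = \Theta(1/\phi)$, which gives the desired bound $\Omega((1/\eps)\log(1/\phi))$. First I would set $t = \lfloor 1/(2\eps) \rfloor$ and $m = \lfloor 1/(2\phi) \rfloor$, and partition a designated portion of the universe into $t$ disjoint blocks, each containing $m$ distinct ``candidate'' items (this is feasible since $n \ge 1/(\eps \phi^\mu)$ ensures enough items are available). Alice, holding a string $x = (x_1, \dots, x_t) \in [m]^t$, encodes each coordinate $x_j$ by inserting into the stream a carefully chosen number of copies of the $x_j$-th item of block $j$, so that this item acquires frequency slightly above $\phi m$ (where here $m$ denotes stream length, which I will rename to avoid clashing with the $\textsc{Indexing}$ parameter).

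The key step is to engineer the frequencies so that the heavy-hitter output, together with approximate frequency estimates, reveals $x_i$ for Bob's query index $i$. Concretely, I would have Alice give each ``selected'' item (one per block) a frequency large enough to be a heavy hitter, while all non-selected items stay below the $(\phi - \eps)$ threshold. Bob, given $i$ and the state of the streaming algorithm after Alice's insertions, appends items so as to suppress the contributions of all blocks other than block $i$, or otherwise isolates block $i$; after running the algorithm to completion he reads off which item in block $i$ is reported as heavy, thereby recovering $x_i$. The separation condition $\phi > 2\eps$ is what guarantees that the $\eps m$ additive slack in the heavy-hitters promise is not enough to confuse a selected item with a non-selected one, so the reduction is faithful. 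Since any correct $(\eps,\phi)$-heavy-hitters algorithm solving the stream yields a one-way protocol for $\textsc{Indexing}_{m,t}$ whose message is exactly the memory state of the algorithm, the space lower bound $\Omega(t \log m) = \Omega((1/\eps)\log(1/\phi))$ follows immediately.

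The main obstacle I anticipate is arranging the encoding so that Bob can cleanly isolate block $i$ using only stream insertions (he cannot delete items in an insertion-only stream) while preserving the heavy-hitter promise with the correct normalization. The subtlety is that appending Bob's items changes the total stream length, hence rescales the thresholds $\phi m$ and $(\phi - \eps)m$; I would handle this by having Bob pad with copies of a single fixed dummy item and choosing the number of copies Alice inserts per coordinate so that, after Bob's padding, block $i$'s selected item sits strictly above $\phi m$ and every other candidate sits strictly below $(\phi-\eps)m$. Verifying these inequalities is a routine but careful calculation using $\phi > 2\eps$ and the chosen values of $t$ and $m$. Finally I would note that the reduction is robust to the $(1-\delta)$ success probability since $\textsc{Indexing}$ is hard for any constant error $\delta \in (0,1)$, completing the argument.
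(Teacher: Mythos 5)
Your high-level plan — a reduction from $\textsc{Indexing}_{m,t}$ with $t = \Theta(1/\eps)$ coordinates over an alphabet of size $\Theta(1/\phi)$, with the universe partitioned into $t$ blocks of candidate items — is exactly the skeleton of the paper's proof (which uses the universe $[1/\phi^\mu]\times[1/\eps]$ and invokes \Cref{lem:index}). However, the concrete mechanism you give for Bob is where the argument breaks. You propose that Bob only pads the stream with copies of a single fixed dummy item. Since Alice does not know $i$, her encoding is necessarily symmetric across blocks: every selected item $(x_j,j)$ receives the same frequency $f$. Uniform dilution by a dummy preserves this symmetry, so after Bob's padding all $t$ selected items still have identical frequency, and it is impossible for block $i$'s selected item to sit above $\phi M$ while the selected items of the other blocks sit below $(\phi-\eps)M$, as your plan requires. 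The alternative of having Alice make \emph{all} selected items $\phi$-heavy (so Bob just reads off the one in block $i$) also fails precisely because of the hypothesis $\phi > 2\eps$: you would need $t\cdot \phi M \le M$, i.e.\ $\phi \le 2\eps$. So as stated, Bob cannot recover $x_i$.

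The missing idea, which is what the paper does, is that Bob must insert items that \emph{specifically boost block $i$}: he spends his half of the stream giving every candidate $(j,i)$ of block $i$ frequency about $\phi^{\mu} M/2$, while Alice gives each selected item $(x_j,j)$ frequency only about $\eps M/2$ (feasible, since $t\cdot \eps M/2 \le M/4$). Then $(x_i,i)$ has frequency about $(\eps+\phi^{\mu})M/2$, every other candidate in block $i$ has about $\phi^{\mu}M/2$, and the selected items of blocks $j\ne i$ have only $\eps M/2 < \phi M/2$ (here is where $\phi>2\eps$ is actually used). Running the algorithm with parameters $(\eps/5,\phi/2)$, the reported set together with the frequency estimates (accurate to $\eps M/5$, smaller than the $\eps M/2$ gap) lets Bob identify the unique boosted-plus-selected item in block $i$ and hence $x_i$. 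Your "routine but careful calculation" cannot be made to go through without replacing the dummy padding by this block-$i$-specific boosting.
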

\begin{proof}
 Let $\mu>0$ be any constant. Without loss of generality, we can assume $\mu\le 1$. We will show that, when $n\ge\nfrac{1}{(\eps\phi^{\mu})}$, any \textsc{$\eps$-Heavy hitters} algorithm must use $\Omega((\nfrac{1}{\eps})\log\nfrac{1}{\phi})$ bits of memory, thereby proving the result. Consider the \textsc{Indexing}$_{\nfrac{1}{\phi^\mu}, \nfrac{1}{\eps}}$ problem where Alice is given a string $x=(x_1, x_2, \cdots, x_{\nfrac{1}{\eps}})\in [\nfrac{1}{\phi^\mu}]^{\nfrac{1}{\eps}}$ and Bob is given an index $i\in [\nfrac{1}{\eps}]$. The stream we generate is over $[\nfrac{1}{\phi^\mu}]\times[\nfrac{1}{\eps}]\subseteq \UU$ (this is possible since $|\UU|\ge \nfrac{1}{(\eps\phi^{\mu})}$). Alice generates a stream of length $\nfrac{m}{2}$ in such a way that the frequency of every item in $\{(x_j,j) : j\in [\nfrac{1}{\eps}]\}$ is at least $\lfloor{\eps m}/{2}\rfloor$ and the frequency of any other item is $0$. Alice now sends the memory content of the algorithm to Bob. Bob resumes the run of the algorithm by generating another stream of length ${m}/{2}$ in such a way that the frequency of every item in $\{(j,i) : j\in [\nfrac{1}{\phi^\mu}]\}$ is at least $\lfloor{\phi^\mu m}/{2}\rfloor$ and the frequency of any other item is $0$. The frequency of the item $(x_i, i)$ is at least $\lfloor \nfrac{\eps m}{2} + \nfrac{\phi^\mu m}{2}\rfloor$ whereas the frequency of every other item is at most $\lfloor{\phi^\mu m}/{2}\rfloor$. Hence from the output of the {\sc $(\nfrac{\eps}{5},\nfrac{\phi}{2})$-Heavy hitters} algorithm Bob knows $i$ with probability at least $(1-\delta)$. Now the result follows from \Cref{lem:index}.
\end{proof}

We now use the same idea as in the proof of \Cref{thm:eps_eps} to prove an $\Omega(\frac{1}{\eps}\log \frac{1}{\eps})$ space complexity lower bound for the $\eps$-Maximum problem.

\begin{theorem}\label{thm:eps_maximum}
 Suppose the size of universe $n$ is at least $\frac{1}{\eps^{1+\mu}}$ for any constant $\mu>0$. Any randomized one pass {\sc $\eps$-Maximum} algorithm with success probability at least $(1-\delta)$ must use $\Omega(\frac{1}{\eps}\log \frac{1}{\eps})$ bits of space, for constant $\delta\in(0,1)$.
\end{theorem}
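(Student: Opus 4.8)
The plan is to mirror the reduction used for \Cref{thm:eps_eps}, but from the \textsc{Indexing} problem with parameters chosen so that the single ``planted'' item stands out as the unique maximum rather than merely as a heavy hitter. First I would invoke \Cref{lem:index}, which gives $\mathcal{R}_\delta^{\text{1-way}}(\textsc{Indexing}_{m,t}) = \Omega(t\log m)$ for constant $\delta$. The goal is to instantiate this with $t = \nfrac{1}{\eps}$ and $m = \nfrac{1}{\eps^\mu}$, so that the lower bound becomes $\Omega((\nfrac{1}{\eps})\log(\nfrac{1}{\eps^\mu})) = \Omega((\nfrac{1}{\eps})\log(\nfrac{1}{\eps}))$ for any fixed constant $\mu>0$, as desired.

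Concretely, Alice is given $x=(x_1,\ldots,x_{\nfrac{1}{\eps}})\in[\nfrac{1}{\eps^\mu}]^{\nfrac{1}{\eps}}$ and Bob an index $i\in[\nfrac{1}{\eps}]$. The stream is built over the product universe $[\nfrac{1}{\eps^\mu}]\times[\nfrac{1}{\eps}]\subseteq\UU$, which is legitimate since $|\UU|\ge \nfrac{1}{\eps^{1+\mu}}$. Alice produces the first half of the stream so that each item $(x_j,j)$ for $j\in[\nfrac{1}{\eps}]$ receives frequency roughly $\lfloor\nfrac{\eps m}{2}\rfloor$ and all other items receive frequency $0$, then sends the algorithm's memory state to Bob. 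Bob continues the run, producing a second half in which every item $(j,i)$ for $j\in[\nfrac{1}{\eps^\mu}]$ gets frequency roughly $\lfloor\nfrac{\eps^\mu m}{2}\rfloor$ and nothing else is touched. The point is the additive stacking at the single coordinate $(x_i,i)$: its total frequency is at least $\lfloor\nfrac{\eps m}{2}+\nfrac{\eps^\mu m}{2}\rfloor$, whereas every other item has frequency at most $\max\{\lfloor\nfrac{\eps m}{2}\rfloor,\lfloor\nfrac{\eps^\mu m}{2}\rfloor\}=\lfloor\nfrac{\eps^\mu m}{2}\rfloor$ (using $\mu\le 1$, which we may assume without loss of generality). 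Thus $(x_i,i)$ is the unique maximum-frequency item, separated from the runner-up by an additive gap of about $\nfrac{\eps m}{2}$.

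The main step is to verify that an \textsc{$\eps$-Maximum} algorithm, which by definition reports the maximum frequency up to additive $\eps m$, actually lets Bob recover $x_i$ and hence solve \textsc{Indexing}. I would strengthen the reduction slightly so that the algorithm is required (or can be made) to report the \emph{identity} of an approximate maximizer, not just its value; our \Cref{thm:heavy_hitters_weak} does return this identity, and for the lower bound it suffices to argue that any correct value-estimator can be boosted to reveal the maximizer, or alternatively run the argument directly against an $\eps$-Maximum routine that outputs an item. Given the additive gap of $\nfrac{\eps m}{2}$ exceeds the permitted additive error once constants are tuned (replacing $\eps$ by a suitable constant fraction such as $\nfrac{\eps}{5}$ throughout, exactly as in the proof of \Cref{thm:eps_eps}), the only item whose frequency can match the reported approximate maximum is $(x_i,i)$. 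Reading off its first coordinate gives $x_i$ with probability at least $1-\delta$.

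The expected obstacle is not the information-theoretic core but the bookkeeping around the approximation slack: one must choose the constant rescaling of $\eps$ and argue the floors $\lfloor\cdot\rfloor$ and the $\nfrac{m}{2}$ split do not erode the $\nfrac{\eps m}{2}$ separation below the additive tolerance, and one must ensure the algorithm is genuinely forced to distinguish $(x_i,i)$ from the $\nfrac{1}{\eps^\mu}$ near-ties introduced by Bob. Once these constants are fixed so that the approximate maximum uniquely identifies the planted coordinate, the conclusion $\Omega((\nfrac{1}{\eps})\log(\nfrac{1}{\eps}))$ follows immediately from \Cref{lem:index}, completing the proof.
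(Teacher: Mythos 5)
Your proposal is correct and follows essentially the same route as the paper's own proof: a reduction from $\textsc{Indexing}_{\nfrac{1}{\eps^\mu},\nfrac{1}{\eps}}$ over the product universe $[\nfrac{1}{\eps^\mu}]\times[\nfrac{1}{\eps}]$, with Alice planting $(x_j,j)$ at frequency $\lfloor\nfrac{\eps m}{2}\rfloor$, Bob stacking $(j,i)$ at frequency $\lfloor\nfrac{\eps^\mu m}{2}\rfloor$, and the constant rescaling to $\nfrac{\eps}{5}$ forcing the algorithm to single out $(x_i,i)$. Your extra care about whether the algorithm reports the maximizer's identity versus only the approximate maximum value is a reasonable point of bookkeeping, but the paper handles it the same way by arguing against an algorithm that outputs the item.
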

\begin{proof}
 Let $\mu>0$ be any constant. Without loss of generality, we can assume $\mu\le 1$. We will show that, when $n\ge\frac{1}{\eps^{1+\mu}}$, any \textsc{$\eps$-Maximum} algorithm must use $\Omega(\frac{1}{\eps}\log\frac{1}{\eps})$ bits of memory, thereby proving the result. Consider the \textsc{Indexing}$_{\nfrac{1}{\eps^\mu}, \nfrac{1}{\eps}}$ problem where Alice is given a string $x=(x_1, x_2, \cdots, x_{\nfrac{1}{\eps}})\in [\nfrac{1}{\eps^\mu}]^{\nfrac{1}{\eps}}$ and Bob is given an index $i\in [\nfrac{1}{\eps}]$. The stream we generate is over $[\nfrac{1}{\eps^\mu}]\times[\nfrac{1}{\eps}]\subseteq \UU$ (this is possible since $|\UU|\ge \frac{1}{\eps^{1+\mu}}$). Alice generates a stream of length $\nfrac{m}{2}$ in such a way that the frequency of every item in $\{(x_j,j) : j\in [\nfrac{1}{\eps}]\}$ is at least $\lfloor{\eps m}/{2}\rfloor$ and the frequency of any other item is $0$. Alice now sends the memory content of the algorithm to Bob. Bob resumes the run of the algorithm by generating another stream of length ${m}/{2}$ in such a way that the frequency of every item in $\{(j,i) : j\in [\nfrac{1}{\eps^\mu}]\}$ is at least $\lfloor{\eps^\mu m}/{2}\rfloor$ and the frequency of any other item is $0$. The frequency of the item $(x_i, i)$ is at least $\lfloor \nfrac{\eps m}{2} + \nfrac{\eps^\mu m}{2}\rfloor$ where as the frequency of every other item is at most $\lfloor{\eps^\mu m}/{2}\rfloor$. Hence the {\sc $\nfrac{\eps}{5}$-Maximum} algorithm must output $(x_i, i)$ with probability at least $(1-\delta)$. Now the result follows from \Cref{lem:index}.
\end{proof}

For {\sc $\eps$-Minimum}, we prove a space complexity lower bound of $\Omega(\nfrac{1}{\eps})$  bits.

\begin{restatable}{theorem}{ThmVetoLB}\label{thm:veto_lb}
 Suppose the universe size $n$ is at least $\nfrac{1}{\eps}$. Then any randomized one pass \textsc{$\eps$-Minimum} algorithm must use $\Omega(\nfrac{1}{\eps})$  bits of space.
\end{restatable}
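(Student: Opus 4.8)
The plan is to prove the $\Omega(\nfrac{1}{\eps})$ lower bound for \textsc{$\eps$-Minimum} by a reduction from the \textsc{Indexing} problem, in the same spirit as the reductions used for \Cref{thm:eps_eps} and \Cref{thm:eps_maximum}, but now encoding the hidden index in the \emph{absence} of an item rather than in a large spike of frequency. Concretely, I would reduce from \textsc{Indexing}$_{2,\nfrac{1}{\eps}}$, where Alice holds a bit string $x = (x_1, \ldots, x_{\nfrac{1}{\eps}}) \in \{0,1\}^{\nfrac{1}{\eps}}$ and Bob holds an index $i \in [\nfrac{1}{\eps}]$; by \Cref{lem:index} this requires $\Omega(\nfrac{1}{\eps})$ bits of one-way communication for constant error probability. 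Note the universe $\UU$ of size at least $\nfrac{1}{\eps}$ exactly accommodates one dedicated item per coordinate $j \in [\nfrac{1}{\eps}]$.

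The reduction runs the \textsc{$\eps$-Minimum} streaming algorithm as the message channel. First Alice, reading her input, generates a stream over $\UU = \{1, \ldots, \nfrac{1}{\eps}\}$ in which she inserts a fixed number of copies (say roughly $\eps m$) of item $j$ for every coordinate $j$ with $x_j = 1$, and inserts nothing for coordinates with $x_j = 0$. She then sends the memory state of the algorithm to Bob. Bob resumes the algorithm and appends to the stream copies of every item \emph{except} item $i$, so that in the final combined stream every item $j \neq i$ has a comfortably positive frequency (at least $\eps m$), while item $i$ has frequency zero precisely when $x_i = 0$ and positive frequency when $x_i = 1$. The key observation is that the minimum frequency over the universe is $0$ iff $x_i = 0$: if $x_i = 0$, item $i$ was never inserted by either player and so is a zero-frequency (hence minimum) item, whereas if $x_i = 1$ every item has frequency at least $\eps m$. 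I would need to set the constants so that the additive $\eps m$ slack in the \textsc{$\eps$-Minimum} guarantee cannot blur these two cases; using frequencies that are separated by at least, say, $2\eps m$ for the ``present'' items ensures the reported approximate-minimum distinguishes $x_i = 0$ from $x_i = 1$ with the algorithm's success probability. Thus Bob recovers $x_i$ with probability at least $1-\delta$, and the space used by the algorithm is at least the communication lower bound $\Omega(\nfrac{1}{\eps})$.

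The main obstacle I anticipate is arranging the frequency bookkeeping so that Bob's appended portion genuinely forces \emph{every} item other than $i$ to exceed the detection threshold while keeping item $i$'s frequency at $0$ in the $x_i = 0$ case, all without Bob needing to know $x$. Since Bob does not know which items Alice already inserted, he must blanket-insert all items $j \neq i$; I must then check that an item $j \neq i$ with $x_j = 1$ (touched by both players) still only has a bounded frequency and, more importantly, that no legitimate candidate for the minimum is created other than $i$ itself. Balancing the total stream length $m$, the per-item insertion counts, and the additive-error window of the \textsc{$\eps$-Minimum} promise is the delicate part; once those parameters are fixed consistently, correctness of the distinguisher and the invocation of \Cref{lem:index} are routine. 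One should also confirm the construction respects the stream length $m$ (choosing $m$ large enough that $\eps m$ copies of $\Theta(\nfrac{1}{\eps})$ items fit), which is immediate for $m$ sufficiently large.
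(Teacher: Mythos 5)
Your high-level plan---reduce from \textsc{Indexing} and encode the bit $x_i$ in whether item $i$ is absent from the stream---is the same idea as the paper's proof, but the frequency scale you pick makes the construction arithmetically impossible, and this is not deferrable ``delicate balancing.'' With one universe item per coordinate of \textsc{Indexing}$_{2,\nfrac{1}{\eps}}$, Bob must blanket-insert all $\nfrac{1}{\eps}-1$ items other than $i$, and you want each of them to end up with frequency at least $2\eps m$; summing over items, this already forces the stream to have length at least $(\nfrac{1}{\eps}-1)\cdot 2\eps m\approx 2m>m$. The average frequency over a universe of size $\nfrac{1}{\eps}$ is exactly $\eps m$, so you cannot push every blanketed item robustly above the additive window $\eps m$ while keeping item $i$ at zero. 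Concretely, if the blanketed items sit at frequency about $\eps m$, then in the $x_i=0$ case such an item is itself a legal answer (its frequency is within $\eps m$ of the true minimum $0$), and in the $x_i=1$ case item $i$ is a legal answer unless Alice gives it more than $\eps m$ copies beyond the minimum of the other items---which her budget cannot afford once every coordinate in the support of $x$ needs the same treatment. Shrinking to $\nfrac{1}{c\eps}$ coordinates for a large constant $c$ buys some slack but still leaves the circularity that $m$ (hence the window $\eps m$) depends on $|\mathrm{supp}(x)|$, which Bob does not know, and the problem that item $i$ with positive frequency may itself be an approximate minimizer.

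The paper escapes all of this by working at the opposite end of the scale: every universe item is inserted at most twice, so the entire stream has length $m=\Theta(\nfrac{1}{\eps})$ and the constants are arranged so that $\eps m<1$. The additive window then collapses and the algorithm must resolve frequencies $0$, $1$, $2$ exactly. Alice inserts two copies of item $j$ for each $j$ with $x_j=1$; Bob inserts two copies of every item except $i$, plus one copy of a dedicated decoy item that nobody else touches. If $x_i=0$ the unique legal output is item $i$ (frequency $0$; everything else has frequency at least $1>\eps m$); if $x_i=1$ the minimum is $1$, attained only by the decoy, and item $i$ at frequency $2>1+\eps m$ is excluded, so the reported item tells Bob the bit. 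To repair your argument you would need to import both the short-stream trick and the frequency-one decoy.
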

\begin{proof}
 We reduce from \textsc{Indexing}$_{2,\nfrac{5}{\eps}}$ to \textsc{$\eps$-Minimum} thereby proving the result. Let the inputs to Alice and Bob in \textsc{Indexing}$_{2,\nfrac{5}{\eps}}$ be $(x_1, \ldots, x_{\nfrac{5}{\eps}}) \in \{0,1\}^{\nfrac{5}{\eps}}$ and an index $i\in[\nfrac{5}{\eps}]$ respectively. Alice and Bob generate a stream $\SS$ over the universe $[(\nfrac{5}{\eps})+1]$. Alice puts two copies of item $j$ in $\SS$ for every $j\in\UU$ with $x_j=1$ and runs the \textsc{$\eps$-Minimum} algorithm. Alice now sends the memory content of the algorithm to Bob. Bob resumes the run of the algorithm by putting two copies of every item in $\UU\setminus\{i,(\nfrac{5}{\eps})+1\}$ in the stream $\SS$. Bob also puts one copy of $(\nfrac{5}{\eps})+1$ in $\SS$. Suppose the size of the support of $(x_1, \ldots, x_{\nfrac{5}{\eps}})$ be $\ell$. Since $\nfrac{1}{(2\ell+(\nfrac{2}{\eps})-1)} > \nfrac{\eps}{5}$, we have the following. If $x_i=0$, then the \textsc{$\eps$-Minimum} algorithm must output $i$ with probability at least $(1-\delta)$. If $x_i=1$, then the \textsc{$\eps$-Minimum} algorithm must output $(\nfrac{5}{\eps})+1$ with probability at least $(1-\delta)$. Now the result follows from \Cref{lem:index}.
\end{proof}

We show next a $\Omega(n\log (\nfrac{1}{\eps}))$ bits space complexity lower bound for {\sc $\eps$-Borda}.

\begin{theorem}\label{thm:lwb_borda}
 Any one pass algorithm for {\sc $\eps$-Borda} must use $\Omega(n\log (\nfrac{1}{\eps}))$ bits of space.
\end{theorem}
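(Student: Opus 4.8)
The plan is to prove the lower bound by reducing from the communication problem $\eps\textsc{-Perm}$, whose one-way randomized complexity is $\Omega(n\log(1/\eps))$ by \Cref{lem:perm}. This is the natural source of hardness: $\eps\textsc{-Perm}$ already encodes $\Omega(n\log(1/\eps))$ bits of information about which of $1/\eps$ blocks each of $n$ items lands in, and Borda scores are precisely a mechanism by which the relative position of an item within a single permutation can be read off. The goal is to show that a one-pass $\eps\textsc{-Borda}$ streaming algorithm lets Bob recover the block of his query index $i$, so its memory footprint must be at least the communication lower bound.

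First I would set up the reduction as follows. Alice holds a permutation $\sigma$ over $[n]$ partitioned into $1/\eps$ contiguous blocks of size $\eps n$ each; Bob holds an index $i\in[n]$ and must output $i$'s block. Alice builds a small number of votes (permutations on the candidate set) so that the partial Borda scores accumulated so far encode $\sigma$, runs the $\eps\textsc{-Borda}$ algorithm on these votes, and sends the working memory to Bob. Bob then continues the stream with carefully chosen additional votes designed so that the Borda score of candidate $i$ (up to the additive $\eps mn$ slack) reveals which block of $\sigma$ the item $i$ occupies. The key design principle is that within one permutation the Borda contribution of item $i$ equals the number of items it beats, which ranges linearly over $[0,n-1]$; by spacing block boundaries at positions that are $\Omega(\eps n)$ apart in Borda score, I can ensure that the $\eps mn$ additive error is too small to blur adjacent blocks together.

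Concretely, I would choose the number of votes $m$ to be a small constant (or $O(1/\eps)$ if needed to amplify score gaps) so that the additive error $\eps mn$ stays a constant fraction below the per-block score separation. The main quantitative step is to verify the separation: two items in different blocks of $\sigma$ should, after Alice's and Bob's combined stream, have Borda scores differing by more than $2\eps mn$, so that an algorithm answering $\eps\textsc{-Borda}$ correctly distinguishes the $1/\eps$ possible blocks for index $i$ with probability $1-\delta$. Since the $\eps\textsc{-Borda}$ output (the approximate maximum Borda score, or the approximate score of the maximizer) must be accurate to within $\eps mn$, a correct algorithm yields a correct $\eps\textsc{-Perm}$ protocol whose message length is the algorithm's space usage, giving the bound $\Omega(n\log(1/\eps))$.

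The hard part will be engineering the votes so that a \emph{single scalar} (the approximate maximum Borda score that the $\eps\textsc{-Borda}$ algorithm returns) suffices to extract the block of Bob's index, rather than needing the full score vector — since the theorem claims hardness even for the weaker max-only version. I expect to handle this by having Bob add ``boosting'' votes that inflate the score of candidate $i$ by a controlled amount depending on a guessed block, so that $i$ becomes the approximate Borda winner precisely when the guess is correct; Bob can then binary-search or sweep over the $1/\eps$ guesses, though one must be careful that this is done within a single run of the reduction (each run of the streaming algorithm corresponds to one message, so the reduction must extract the block from a single invocation). An alternative, cleaner route is to observe that the $\Omega(n\log(1/\eps))$ bound for the full-vector version follows directly, and then argue separately that even returning only an approximate maximum retains the hardness by embedding the query index as the unique candidate whose score crosses a threshold; deciding which threshold it crosses is equivalent to identifying its block. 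Balancing these design choices to keep the additive $\eps mn$ error harmless is the crux of the argument.
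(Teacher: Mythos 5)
Your proposal is correct and follows essentially the same route as the paper's proof: a reduction from {\sc $\eps$-Perm} in which Alice encodes $\sigma$ in a single vote, block separation in Borda score is enforced (the paper does this by interleaving $2n$ dummy candidates, $\eps n$ on each side of every block of $\sigma$, so the score ranges of adjacent blocks are $\Omega(\eps n)$ apart), and Bob appends a constant number of votes with $i$ on top. The ``hard part'' you flag resolves more simply than you fear: Bob's boost is unconditional and of known magnitude, and it makes $i$ the unique approximate maximizer (score at least $12n$ versus at most $9n$ for every other candidate, with $m=5$ votes over $3n$ candidates), so the single returned scalar minus the known boost reveals $i$'s position and hence its block --- no sweep over guesses is needed.
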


\begin{proof}
 We reduce {\sc $\eps$-Perm} to {\sc $\eps$-Borda}. Suppose Alice has a permutation $\sigma$ over $[n]$ and Bob has an index $i\in[n]$. The item set of our reduced election is $\UU = [n]\sqcup\DD$, where $\DD = \{d_1, d_2, \ldots, d_{2n}\}$. Alice generates a vote $\vvv$ over the item set $\UU$ from $\sigma$ as follows. The vote $\vvv$ is $\BB_1 \succ \BB_2 \succ \cdots \succ \BB_{\nfrac{1}{\eps}}$ where $\BB_j$ for $j=1, \ldots, \nfrac{1}{\eps}$ is defined as follows.
 \begin{align*}
 \BB_j &= d_{(j-1)2\eps n + 1} \succ d_{(j-1)2\eps n + 2} \succ \cdots \succ d_{(2j-1)\eps n} \\
 &\succ \sigma_{j\eps n + 1} \succ \cdots \succ \sigma_{(j+1)\eps n} \succ d_{(2j-1)\eps + 1} \succ \cdots \succ d_{2j\eps n}
 \end{align*}

 Alice runs the {\sc $\eps$-Borda} algorithm with the vote $\vvv$ and sends the memory content to Bob. Let $\DD_{-i} = \DD \setminus \{i\}$, $\overrightarrow{\DD_{-i}}$ be an arbitrary but fixed ordering of the items in $\DD_{-i}$, and $\overleftarrow{\DD_{-i}}$ be the reverse ordering of $\overrightarrow{\DD_{-i}}$. Bob resumes the algorithm by generating two votes each of the form $i\succ \overrightarrow{\DD_{-i}}$ and $i\succ \overleftarrow{\DD_{-i}}$. Let us call the resulting election $\EE$. The number of votes $m$ in $\EE$ is $5$. The Borda score of the item $i$ is at least $12n$. The Borda score of every item $x\in\UU$ is at most $9n$. Hence for $\eps < \nfrac{1}{15}$, the {\sc $\eps$-Borda} algorithm must output the item $i$. Moreover, it follows from the construction of $\vvv$ that an $\eps mn$ additive approximation of the Borda score of the item $i$ reveals the block where $i$ belongs in the {\sc $\eps$-Perm} instance.
\end{proof}

We next give a nearly-tight lower bound for the {\sc $\eps$-maximin} problem.
\begin{theorem}\label{thm:mmlb}
Any one-pass algorithm for {\sc $\eps$-maximin} requires $\Omega(n/\eps^2)$ memory bits of storage.
\end{theorem}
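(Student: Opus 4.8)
The plan is to reduce from $\textsc{Indexing}_{2,N}$ with $N = \Theta(n/\eps^2)$, so that \Cref{lem:index} immediately yields an $\Omega(N) = \Omega(n/\eps^2)$ bound on Alice's one-way message and hence on the working memory of any one-pass $\eps$-maximin algorithm. First I would index Alice's bit string by pairs: identify $[N]$ with a set $\mathcal{P}$ of $N$ ordered candidate pairs $(p,q)$ drawn from the candidates in $\UU$, which is possible because $N = \Theta(n/\eps^2) \le \binom{n}{2}$ whenever $n = \Omega(1/\eps^2)$, an assumption we may make since the bound is stated for $n$ large in terms of $\eps$. Alice, holding $x \in \{0,1\}^{\mathcal{P}}$, emits a stream of permutations whose sole purpose is to fix, for each encoded pair $(p,q) \in \mathcal{P}$, the margin $D_\EE(p,q)$ at roughly $m/2 + 2\eps m$ when $x_{(p,q)} = 1$ and at roughly $m/2 - 2\eps m$ when $x_{(p,q)} = 0$, while keeping every non-encoded margin pinned near $m/2$. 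She then sends the memory state of the algorithm to Bob.

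Then I would have Bob, who holds the queried pair $(p^\ast, q^\ast)$, resume the stream with a short block of votes engineered to turn the single margin $D_\EE(p^\ast, q^\ast)$ into the reported maximum maximin score. Concretely, Bob appends permutations that rank $p^\ast$ immediately below $q^\ast$ but above every other candidate: such votes leave $D_\EE(p^\ast, q^\ast)$ unchanged, since $q^\ast$ still beats $p^\ast$ in them, while increasing $D_\EE(p^\ast, j)$ for every $j \ne q^\ast$, so that after enough of them the minimizing opponent of $p^\ast$ is exactly $q^\ast$ and the maximin score of $p^\ast$ equals $D_\EE(p^\ast, q^\ast)$ up to the rescaling induced by the new votes. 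A second group of Bob's votes installs a ``spoiler'' structure that drives the maximin score of every other candidate strictly below that of $p^\ast$, so that $p^\ast$ becomes an approximate maximin winner and the returned value is its maximin score. Reading this value to additive error $\eps m$ then separates $D_\EE(p^\ast,q^\ast) \approx m/2 + 2\eps m$ from $\approx m/2 - 2\eps m$, recovering $x_{(p^\ast,q^\ast)}$ with the same success probability as the streaming algorithm.

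I expect the main obstacle to be the design and accounting of Bob's appended votes, which must simultaneously (i) preserve the one queried margin while uniformly inflating all of $p^\ast$'s other margins enough to make $q^\ast$ the unique argmin, (ii) suppress every competing candidate's maximin score below that of $p^\ast$ by more than twice the algorithm's additive error $\eps m$, and (iii) keep the total vote count, and hence the scale $\eps m$, controlled so that the two encoded cases stay separated after rescaling. Because each stream update is a full permutation rather than an arbitrary pairwise increment, realizing (i)--(iii) needs a careful gadget, possibly with a constant-factor increase in the number of reference candidates and votes, which only affects hidden constants. Once the gadget is verified the reduction is faithful, and \Cref{lem:index} gives the $\Omega(n/\eps^2)$ lower bound; combined with the trivial $\Omega(n\log n)$ cost of reading a single permutation and the $\Omega(\log\log m)$ bound of \Cref{thm:loglogn}, this matches the upper bound of \Cref{thm:maximin} up to the $\log^2 n$ factor.
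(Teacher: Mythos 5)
Your high-level architecture is in fact the same as the paper's: reduce from \textsc{Indexing} on $\Theta(n/\eps^2)$ bits, let Alice encode the bits into pairwise margins that sit at roughly $m/2 \pm \Theta(\eps m)$ for a designated family of candidate pairs, and let Bob append votes that isolate the one queried margin as the (approximate) maximin score. Bob's gadget, which you flag as the main obstacle, is actually the easy part --- the paper simply has Bob cast $\gamma$ votes ranking $i$ first and $j$ second, which forces $\min_{x\ne j} D(j,x)$ to be attained at $x=i$ in two lines. The genuine gap is the step you dispose of in one sentence: the claim that Alice can ``emit a stream of permutations whose sole purpose is to fix, for each encoded pair $(p,q)$, the margin at roughly $m/2+2\eps m$ or $m/2-2\eps m$'' independently across $N=\Theta(n/\eps^2)$ pairs. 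This is the entire content of the theorem and it does not follow from any generic argument. A McGarvey-style construction (\Cref{thm:mcgarvey}) spends a constant number of dedicated votes per unit of margin deviation per pair, hence would need $\Theta(\eps m \cdot N) = \Theta(mn/\eps) \gg m$ votes. The reason no easy shortcut exists is transitivity: the comparison matrix of a single permutation, restricted to a bipartite family of pairs, has nested rows (it contains no $2\times 2$ submatrix with ones on one diagonal and zeros on the other), so one cannot freely superpose independent $\pm\Theta(\eps m)$ shifts on $N$ margins using only $m$ votes. A cut-norm计算 shows the target configuration is achievable at best for $N = O(n\log n/\eps^2)$ pairs and, for some natural choices of the pair family (e.g.\ all pairs inside a set of $O(\sqrt{N})$ candidates, when $\eps \gg 1/\sqrt{n}$), is provably \emph{not} achievable at all; so the choice of which $N$ pairs to use, which you leave unspecified, is itself delicate.

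The paper fills this hole by importing a nontrivial encoding gadget (\Cref{lem:vangucht}, from \cite{VWWZ15}): using public randomness, Alice builds a matrix $P\in\{0,1\}^{n\times\gamma}$ with $\gamma=1/\eps^2$ whose columns become her $\gamma$ votes (top half of the permutation = the candidates with a $1$ in that column), and the bit $y_\ell$ is recovered from the Hamming distance $\Delta(P^i,P^j)$ between a \emph{reference} row $i\in[\gamma]$ and a \emph{data} row $j$, which equals the relevant margin up to the row weights $|P^i|,|P^j|$ that Alice ships to Bob separately. Crucially, this gadget does not set each margin independently to one of two values $4\eps m$ apart; it guarantees a gap of only $2\sqrt{\gamma}=\Theta(\eps m)$, only for the queried pair, and only with probability $2/3$ over the public coins --- and the pair family is forced to be ``spread out'' ($\gamma$ reference candidates each paired with all $n-\gamma$ data candidates), consistent with the obstruction above. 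Without this (or an equivalent Gap-Hamming-style construction), your reduction does not go through.
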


\begin{proof}
We reduce from {\sc Indexing}. Let $\gamma = 1/\eps^2$. Suppose Alice has a string $y$ of length $(n-\gamma)\cdot \gamma$, partitioned into $n-\gamma$ blocks of length $\gamma$ each. Bob has an index $\ell = i + (j-\gamma-1)\cdot \gamma$ where $i \in [\gamma], j \in \{\gamma+1, \dots, n\}$. The {\sc Indexing} problem is to return $y_\ell$ for which there is a $\Omega(|y|) = \Omega(n/\eps^2)$ lower bound (\Cref{lem:index}).

The initial part of the reduction follows the construction in the proof of Theorem 6 in \cite{VWWZ15}, which we encapsulate in the following lemma. 

\begin{lemma}[Theorem 6 in \cite{VWWZ15}]\label{lem:vangucht}
Given $y$, Alice can construct a matrix $P \in \{0,1\}^{n \times \gamma}$ using public randomness, such that if $P^i$ and $P^j$ are the $i$'th and $j$'th rows of $P$ respectively, then with probability at least $2/3$, $\Delta(P^i,P^j) \geq \frac{\gamma}{2} + \sqrt{\gamma}$ if $y_\ell = 1$ and $\Delta(a,b) \leq \frac{\gamma}{2}-\sqrt{\gamma}$ if $y_\ell = 0$.
\end{lemma}

Let Alice construct $P$ according to \Cref{lem:vangucht} and then adjoin the bitwise complement of the matrix $P$ below $P$ to form the matrix $P' \in \{0,1\}^{2n \times \gamma}$; note that each column of $P'$ has exactly $n$ 1's and $n$ 0's. Now, we interpret each row of $P$ as a candidate and each column of $P$ as a vote in the following way: 
for each $v \in [\gamma]$, vote $v$ has the candidates in $\{c : P'_{c,v}=1\}$ in ascending order in the top $n$ positions and the rest of the candidates in ascending order in the bottom $n$ positions. Alice inserts these $\gamma$ votes into the stream and sends the state of the {\sc $\eps$-Maximin} algorithm to Bob as well as the Hamming weight of each row in $P'$. Bob inserts $\gamma$ more votes, in each of which candidate $i$ comes first, candidate $j$ comes second, and the rest of the $2n-2$ candidates are in arbitrary order.

Note that because of Bob's votes, the maximin score of $j$ is the number of votes among the ones casted by Alice in which $j$ defeats  $i$. Since $i < j$, in those columns $v$ where $P_{i,v} = P_{j,v}$, candidate $i$ beats candidate $j$. Thus, the set of votes in which $j$ defeats $i$ is $\{v \mid P_{i,v} = 0, P_{j,v}=1\}$. The size of this set is $\frac{1}{2}\left(\Delta(P^i, P^j) + |P^j| - |P^i|\right)$. Therefore, if Bob can estimate the maximin score of $j$ upto $\sqrt{\gamma}/4$ additive error, he can find $\Delta(P^i,P^j)$ upto $\sqrt{\gamma}/2$ additive error as Bob knows $|P^i|$ and $|P^j|$. This is enough, by \Cref{lem:vangucht}, to solve the {\sc Indexing} problem with probability at least $2/3$.
\end{proof}

Finally, we show a space complexity lower bound that depends on the length of the stream $m$.

\begin{restatable}{theorem}{ThmLogLogn}\label{thm:loglogn}
 Any one pass algorithm for {\sc $\eps$-Heavy hitters}, {\sc $\eps$-Minimum}, {\sc $\eps$-Borda}, and {\sc $\eps$-maximin} must use $\Omega(\log \log m)$ memory bits, even if the stream is over a universe of size $2$, for every $\eps < \frac{1}{4}$.
\end{restatable}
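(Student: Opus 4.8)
The plan is to prove all four bounds by one‑way reductions from $\textsc{Greater-than}_n$, invoking the $\Omega(\log n)$ lower bound of \Cref{lem:gt}. The key idea is to take $n = \Theta(\log m)$: I would encode Alice's and Bob's integers as exponentially large frequencies (or vote counts) over a universe of size two, so that the resulting stream has length $m \le 2c^{n}$ for a suitable constant base $c$. Then $\log\log m \le \log n + O(1)$, so the communication lower bound $\Omega(\log n)$ immediately yields the claimed space bound $\Omega(\log\log m)$, since the memory content that must be carried from the prefix produced by Alice to the suffix produced by Bob is exactly the message of a one‑way protocol.

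\emph{Common construction.} Fix a constant base $c \ge 4$ and the universe (candidate set) $\{1,2\}$. On input $x \in [n]$ Alice emits $c^{x}$ copies of item $1$ (for \textsc{Heavy-hitters} and \textsc{Minimum}), or $c^{x}$ copies of the vote $1 \succ 2$ (for \textsc{Borda} and \textsc{maximin}); she runs the streaming algorithm and sends its memory state to Bob. On input $y \in [n]$ with $y \ne x$, Bob resumes the algorithm on $c^{y}$ copies of item $2$ (resp.\ of the vote $2 \succ 1$). Thus $f_1 = c^{x}$, $f_2 = c^{y}$ and $m = c^{x}+c^{y} \le 2c^{n}$. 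Because $x \ne y$ are integers, $\max(f_1,f_2) \ge c\cdot\min(f_1,f_2)$; this fixed constant‑factor separation, namely $|f_1-f_2| \ge \frac{c-1}{c+1}m > \eps m$ for $\eps < \tfrac14$, is what lets the approximate outputs decide the comparison.

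\emph{Reading off the answer.} For \textsc{Minimum} the algorithm must return an item within $\eps m$ of the minimum frequency, so the factor‑$c$ gap forces it to return the true minimiser, which is item $2$ iff $x > y$. For \textsc{Heavy-hitters} I take $\phi = \tfrac12$: if $x>y$ then $f_1 \ge \frac{c}{c+1}m > \phi m$ so $1 \in S$, whereas if $x<y$ then $f_1 \le \frac{1}{c+1}m \le (\phi-\eps)m$ so $1 \notin S$, and Bob outputs $1$ iff $1 \in S$. For \textsc{Borda} and \textsc{maximin} over two candidates each candidate's score equals its number of first‑place votes, so the reported value $R$ approximates $\max(f_1,f_2)=c^{\max(x,y)}$ up to additive $O(\eps m)$; Bob cannot read off an identity, but, knowing $y$, he compares $R$ against the threshold $2c^{y}$, and the factor‑$c$ gap guarantees $R < 2c^{y}$ when $x<y$ and $R \ge 2c^{y}$ when $x>y$. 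In every case Bob succeeds whenever the streaming algorithm does, so an algorithm of constant success probability $>\tfrac34$ yields a one‑way protocol with error $<\tfrac14$, and \Cref{lem:gt} forces space $\Omega(\log n)=\Omega(\log\log m)$.

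\emph{Main obstacle.} The straightforward cases are \textsc{Heavy-hitters} and \textsc{Minimum}, where the output directly exposes which item is larger. The delicate case is \textsc{Borda}/\textsc{maximin}, whose lower bound is stated only for returning the approximate score of an approximate maximiser, i.e.\ a single number rather than an identity. Since the additive slack there is a constant fraction of $m$, and $m$ itself depends on the unknown $x$, one cannot recover $x$ from the score directly; the reduction instead exploits that Bob knows $y$ together with the exponential encoding, so that the two candidate scores differ by a fixed constant factor and a single $y$‑calibrated threshold test recovers the sign of $x-y$. Ensuring this separation survives the $\eps m n$ additive error for all $\eps < \tfrac14$ is the only place that needs $c$ to be a sufficiently large constant; checking the inequalities $\frac{1}{c+1} \le \phi-\eps$ and the threshold bounds is the routine remainder.
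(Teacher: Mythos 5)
Your proof is correct and follows essentially the same route as the paper's: a one-way reduction from \textsc{Greater-than}$_n$ (\Cref{lem:gt}) in which Alice and Bob encode their integers as exponentially large frequencies over a two-element universe, so that $\log\log m = \Theta(\log n)$ and the $\Omega(\log n)$ communication bound transfers to the space. The paper only spells out the $\eps$-Heavy hitters case (with base $2$) and asserts the other three problems follow for universe size $2$; your explicit, $y$-calibrated threshold argument for $\eps$-Borda and $\eps$-maximin, where the algorithm returns only a score rather than an identity, supplies the detail the paper elides.
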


\begin{proof}
 It is enough to prove the result only for {\sc $\eps$-Heavy hitters} since the other three problems reduce to {\sc $\eps$-Heavy hitters} for a universe of size $2$. Suppose we have a randomized one pass {\sc $\eps$-Heavy hitters} algorithm which uses $s(m)$ bits of space. Using this algorithm, we will show a communication protocol for the \textsc{Greater-than}$_{m}$ problem whose communication complexity is $s(2^m)$ thereby proving the statement. The universal set is $\UU=\{0,1\}$. Alice generates a stream of $2^x$ many copies of the item $1$. Alice now sends the memory content of the algorithm. Bob resumes the run of the algorithm by generating a stream of $2^y$ many copies of the item $0$. If $x>y$, then the item $1$ is the only $\eps$-winner; whereas if $x<y$, then the item $0$ is the only $\eps$-winner.
\end{proof}

\section{Conclusion}

In this work, we not only resolve a long standing fundamental open problem in the data streaming literature namely heavy hitters but also provide an optimal algorithm for a substantial generalization of heavy hitters by introducing another parameter $\phi$. We also initiate a promising direction of research on finding a winner of a stream of votes.

In the next chapter, we study the scenario when voters are allowed to have incomplete preferences in the form of a partial order.
\chapter{Kernelization for Possible Winner and Coalitional Manipulation}
\label{chap:kernel}

\blfootnote{A preliminary version of the work in this chapter was published as \cite{DeyMN15}: Palash Dey, Neeldhara Misra, and Y. Narahari. Kernelization complexity of possible winner and coalitional manipulation problems in voting. In Proc. 2015 International
Conference on Autonomous Agents and Multiagent Systems, AAMAS 2015, Istanbul,
Turkey, May 4-8, 2015, pages 87-96, 2015.

The full version of the work in this chapter was published as \cite{journalsDeyMN16}: Palash Dey, Neeldhara Misra, and Y. Narahari. Kernelization complexity of possible
winner and coalitional manipulation problems in voting. Theor. Comput. Sci., 616:111-
125, 2016.}

\begin{quotation}
{\small In the \textsc{Possible winner} problem in computational social
choice theory, we are given a set of partial preferences and the question
is whether a distinguished candidate could be made winner by extending the
partial preferences to linear preferences. Previous work has provided, for many
common voting rules, fixed parameter tractable algorithms 
for the \textsc{Possible winner} problem, with
number of candidates as the parameter. 
However, the corresponding kernelization question is still
open and in fact, has been mentioned as a key research challenge~\citep{ninechallenges}. 
In this work, we settle this open question for many common voting rules. 

We show that the \textsc{Possible winner} problem for
maximin, Copeland, Bucklin, ranked pairs, and a class of scoring rules that
includes the Borda voting rule do not admit a polynomial kernel with the number
of candidates as the parameter. We show however that the \textsc{Coalitional manipulation}
problem which is an important special case of the \textsc{Possible winner} problem
does admit a polynomial kernel for maximin, Copeland, ranked pairs, and a class of
scoring rules that includes the Borda voting rule, when the number of
manipulators is polynomial in the number of candidates. 
A significant conclusion of this work is that the \textsc{Possible winner} problem 
is harder than the \textsc{Coalitional manipulation} problem since the \textsc{Coalitional manipulation} 
problem admits a polynomial kernel whereas the \textsc{Possible winner} problem does not admit 
a polynomial kernel.}
\end{quotation}

\section{Introduction}

Usually, in a voting setting, it is assumed that the votes are complete orders over the
candidates. However, due to many reasons, for example, lack of knowledge of voters 
about some candidates, a voter may be indifferent between some pairs of candidates. 
Hence, it is both natural and important to 
consider scenarios where votes are partial orders over the candidates. When votes are only partial orders over the candidates, the winner cannot be determined with certainty since it depends on 
how these partial orders are extended to linear orders. 
This leads to a natural computational problem called the \textsc{Possible winner}~\cite{konczak2005voting} problem:  given a set of partial votes $P$ and a distinguished candidate $c$, is there a way to extend the partial votes to linear ones to make $c$ win?
The \textsc{Possible winner} problem has been studied extensively in the literature 
\cite{lang2007winner,pini2007incompleteness,walsh2007uncertainty,xia2008determining,betzler2009multivariate,betzler2009towards,chevaleyre2010possible,betzler2010partial,baumeister2011computational,lang2012winner,faliszewski2014complexity} following its definition in \cite{konczak2005voting}. Betzler et al.~\cite{betzler2009towards} and Baumeister et al.~\cite{BaumeisterR12} show that the \PW winner problem is \NPC for all scoring rules except for the plurality and veto voting rules; the \PW winner problem is in \Pb for the plurality and veto voting rules. Moreover the \PW problem is known to be \NPC for many common voting rules, for example, scoring rules, maximin, Copeland, Bucklin etc. even when the maximum number of undetermined pairs of candidates in every vote is bounded above by small constants~\cite{xia2008determining}. Walsh showed that the \PW problem can be solved in polynomial time for all the voting rules mentioned above when we have a constant number of candidates \cite{walsh2007uncertainty}.
An important special case of the \textsc{Possible winner} problem is the 
\textsc{Coalitional manipulation} problem \cite{bartholdi1989computational} 
where only two kinds of partial votes 
are allowed - complete preference and empty preference. The set of empty votes 
is called the manipulators' vote and is denoted by $M$. The \textsc{Coalitional manipulation} problem 
is $\NPCb$ for maximin, Copeland, and ranked pairs voting rules even when $|M|\ge 2$ 
\cite{faliszewski2008copeland,faliszewski2010manipulation,xia2009complexity}. 
The \textsc{Coalitional manipulation} problem is in $\Pb$ for the Bucklin voting rule 
\cite{xia2009complexity}. We refer to \cite{xia2008determining,walsh2007uncertainty,xia2009complexity} 
for detailed overviews.

\subsection{Our Contribution} Discovering kernelization algorithms is currently an active and interesting area of 
research in computational social choice theory~\cite{betzler2010partial,betzler2010problem,betzler2009fixed,bredereck2012multivariate,froese2013parameterized,bredereck2014prices,Betzlerxyz,dorn2012multivariate}. 
Betzler et al. \cite{betzler2009multivariate} showed that the 
\textsc{Possible winner} problem admits fixed parameter tractable algorithm when parameterized 
by the total number of candidates for scoring rules, maximin, Copeland, Bucklin, and ranked 
pairs voting rules. Yang et al.~\cite{Yang:2013:EAW:2484920.2485207,DBLP:conf_ecai_Yang14} provides efficient fixed parameter tractable algorithms for the \textsc{Coalitional manipulation} problem for the Borda, maximin, and Copeland voting rules. A natural and practical follow-up question is whether the \textsc{Possible winner} and \textsc{Coalitional manipulation} problems admit a polynomial kernel when parameterized by the number of candidates. This question has been open ever since the work of Betzler et al. and in fact, has been mentioned as a key research challenge in parameterized algorithms for computational social choice theory~\cite{ninechallenges}. Betzler et al. showed non-existence of polynomial kernel for the \textsc{Possible winner} problem for the $k$-approval voting rule when parameterized by $(t,k)$, where $t$ is the number of partial votes~\cite{betzler2010problem}. 
The $\NPCb$ reductions for the \textsc{Possible winner} 
problem for scoring rules, maximin, Copeland, Bucklin, and ranked 
pairs voting rules given by Xia et al.~\cite{xia2009complexity} are from the \textsc{Exact 3 set cover} 
problem. Their results do not throw any light on the existence of a polynomial kernel since \textsc{Exact 3 set cover} has a trivial $O(m^3)$ kernel where $m$ is the size of the universe. In this work, we show that there is no polynomial kernel (unless~\caveat{}) for the \textsc{Possible winner} problem, when parameterized by the total number of candidates, with respect to maximin, Copeland, Bucklin, and ranked pairs voting rules, and a class of scoring rules that includes the Borda voting rule. These hardness 
results are shown by a parameter-preserving many-to-one reduction from the 
\textsc{Small universe set cover} problem for which there does not exist any polynomial kernel 
parameterized by universe size unless~\caveat{} \cite{DomLokSau2009}. 

On the other hand, we show that 
the \textsc{Coalitional manipulation} problem admits a polynomial kernel for maximin, 
Copeland, and ranked pairs voting rules, and a class of scoring rules that includes the Borda voting rule when we have $poly(m)$ number of manipulators -- specifically, we exhibit  an $O(m^2|M|)$ kernel for maximin and Copeland voting rules, and an $O(m^4|M|)$ kernel for the ranked pairs voting rule, where $m$ is the number of candidates and $M$ is the set of manipulators. The \textsc{Coalitional manipulation} problem for the Bucklin voting rule is in $\Pb$ \cite{xia2009complexity} and thus the kernelization question does not arise. 

A significant conclusion of our work is that, although the \textsc{Possible winner} and 
\textsc{Coalitional manipulation} problems are both $\NPCb$ for many voting rules, the \textsc{Possible winner} problem is harder than the \textsc{Coalitional manipulation} problem since the \textsc{Coalitional manipulation} 
problem admits a polynomial kernel whereas the \textsc{Possible winner} problem does not admit 
a polynomial kernel. 

\section{Problem Definitions}

The {\sc Possible winner} problem with respect to $r$ is the following:

\defparproblem{Possible Winner [$r$]}
{
A set $\CC$ of candidates and votes $\VV$, where each vote is a partial order over $\CC$, and a candidate $c \in \CC$.
}
{$m$}
{
Is there a linear extension of $\VV$ for which $c$ is the unique winner with respect to $r$?}

An important special case of the {\sc Possible winner} problem is the {\sc Coalitional manipulation} problem where every partial vote is either an empty order or a complete order. We call the complete votes the votes of the non-manipulators and the empty votes the votes of the manipulators. Formally the {\sc Coalitional manipulation} problem is defined as follows.

\defparproblem{Coalitional Manipulation [$r$]}
{
A set $\CC$ of candidates, a set $\VV$ of complete votes, an integer $t$ corresponding to the number of manipulators, and a candidate $c \in \CC$.
}
{$m$}
{
Does there exist a set of votes $\VV^\prime$ of size $t$ such that $c$ is the unique winner with respect to $r$ for the voting profile $(\VV,\VV^\prime)$?
}

When the voting rule $r$ is clear from the context, we often refer to the problems as {\sc Possible winner} and {\sc Coalitional manipulation} without any further qualification. We note that one might also consider the variant of the problem where the designated candidate $c$ is only required to be a co-winner, instead of being the unique winner. All the results in this work can be easily adapted to this variant as well. 

We now briefly describe the framework in which we analyze the computational complexity of \textsc{Possible winner} and {\sc Coalitional manipulation} problems.

\section{Kernelization Hardness of the Possible Winner Problem}\label{sec:hard}

In this section, we show non-existence of polynomial kernels for the \textsc{Possible 
Winner} problem for the maximin, Copeland, Bucklin, and ranked pairs voting rules, and a class of scoring rules that includes the Borda voting rule. 
We do this by demonstrating polynomial parameter transformations from the \textsc{Small universe set cover} problem, which 
is the classic \textsc{Set Cover} problem, but now parameterized by the size of the universe and the budget.
\defparproblem{Small Universe Set Cover}
{
A set $\UU = \{u_1,\ldots,u_m\}$ and a family $\FF = \{S_1, \ldots, S_t\}$. 
}
{$m+k$}
{
Is there a subfamily $\HH \subseteq \FF$ of size at most $k$ such that every element of the universe belongs to 
at least one $H \in \HH$?}

It is well-known~\cite{DomLokSau2009} that \textsc{Red-blue dominating set} parameterized by $k$ and the number 
of non-terminals does not admit a polynomial kernel unless~\caveat{}. It follows, by the  
duality between dominating set and set cover, that \textsc{Set cover} when parameterized by the solution size 
and the size of the universe (in other words, the \textsc{Small universe set cover} problem defined above) does 
not admit a polynomial kernel unless~\caveat{}. 

We now consider the \textsc{Possible winner} problem parameterized by the number of candidates for the maximin, 
Copeland, Bucklin, and ranked pairs voting rules, and a class of scoring rules that includes the Borda rule,
and establish that they do not admit a polynomial kernel 
unless~\caveat{}, by polynomial parameter transformations from \textsc{Small universe set cover}. 

\subsection{Results for the Scoring Rules}

We begin with proving the hardness of kernelization for the \textsc{Possible winner} problem for a class of scoring rules that includes the Borda voting rule. For that, we use the following lemma which has been used before~\cite{baumeister2011computational}.

\begin{lemma}\label{score_gen}
Let $\mathcal{C} = \{c_1, \ldots, c_m\} \uplus D, (|D|>0)$ be a set of candidates, and $\vec{\alpha}$ a normalized score vector of length $|\mathcal{C}|$. Then, for any given $\mathbf{X} = (X_1, \ldots, X_m) \in \mathbb{Z}^m$, there exists $\lambda\in \mathbb{R}$ and a voting profile such that the $\vec{\alpha}$-score of $c_i$ is $\lambda + X_i$ for all $1\le i\le m$,  and the score of candidates $d\in D$ is less than $\lambda$. Moreover, the number of votes is $O(poly(|\mathcal{C}|\cdot \sum_{i=1}^m |X_i|))$.
\end{lemma}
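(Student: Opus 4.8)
The plan is to realize the prescribed scores in two stages. First I would normalize the target: since only the differences $X_i-X_{i'}$ are constrained (the common level is absorbed into $\lambda$), I may add any fixed integer constant to every $X_i$, so I will pin down the exact shift only at the very end once the arithmetic of the gadget is clear. Write $M=|\mathcal{C}|$ and $T=\sum_{p=1}^{M}\alpha_p$, and recall that normalization gives a position $j$ with $\alpha_j-\alpha_{j+1}=1$. The overall idea is to build a symmetric \emph{base profile} placing all of $c_1,\dots,c_m$ at one common high score $\lambda_0$ with every $d\in D$ strictly below, and then to add \emph{differentiation blocks} that raise each $c_i$ by exactly $X_i$ relative to everyone else; the unit gap at $j$ is precisely what will let me achieve \emph{unit} granularity.

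For the base profile I would use cyclic rotations: rotate $c_1,\dots,c_m$ through the top $m$ positions while the dummies rotate through the bottom $M-m$ positions, and repeat this block $B$ times. Each $c_i$ then receives $B\sum_{p=1}^{m}\alpha_p$, so the $c_i$ are exactly tied, while each dummy receives a strictly smaller amount per block; taking $B$ large creates an arbitrarily large head start of the $c_i$ over the dummies that the later blocks cannot erase.

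The heart of the construction is an elementary block that boosts a chosen $c_i$ over \emph{every} other candidate by a controlled amount while keeping all remaining candidates mutually balanced, and I would use two such moves. In Move~A, hold $c_i$ at position $j$ and cyclically rotate the other $M-1$ candidates through the remaining positions ($M-1$ votes); then $c_i$ gains $(M-1)\alpha_j$ and each other candidate gains $T-\alpha_j$, a differential of $D_j=M\alpha_j-T$. In Move~B, hold $c_i$ at position $j$ and a dummy at position $j+1$, rotating the other $M-2$ candidates ($M-2$ votes); now $c_i$ gains $(M-2)\alpha_j$ and each non-partner candidate gains $T-\alpha_j-\alpha_{j+1}$, a differential of $E=(M-1)\alpha_j+\alpha_{j+1}-T$. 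A one-line computation gives the crucial identity
\begin{equation*}
D_j-E=\alpha_j-\alpha_{j+1}=1,
\end{equation*}
so $\gcd(D_j,E)=1$. Because every move leaves the non-distinguished candidates tied, a short accounting over all blocks shows that if $c_i$ is subjected to $a_i$ copies of Move~A and $b_i$ copies of Move~B, then $s(c_k)=\lambda+a_kD_j+b_kE$ for a common $\lambda$ depending only on the totals. Coprimality then lets me write each (suitably shifted) target as a nonnegative integer combination $X_k=a_kD_j+b_kE$ via the numerical-semigroup bound, and I fix the global shift so that all shifted targets land in the semigroup.

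Finally I would bound the size and isolate the obstacle. Each move costs $O(M)$ votes, the coefficients satisfy $a_k,b_k=O(|X_k|+\mathrm{poly}(M))$, and $|D_j|,|E|=O(\mathrm{poly}(M))$, so the whole profile uses $O(\mathrm{poly}(M)\cdot(\sum_i|X_i|+M))=O(\mathrm{poly}(|\mathcal{C}|\cdot\sum_i|X_i|))$ votes, as claimed. The main obstacle is exactly the interaction of the two requirements in the second stage: obtaining \emph{exact integer} control of the differences (the $\gcd=1$ step, where the normalized unit gap is indispensable) while \emph{simultaneously} keeping all non-target candidates perfectly balanced and every dummy strictly below $\lambda$. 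The balancing forces each non-distinguished candidate to absorb a full rotation in every block, so the delicate point is to verify that the dummies---who also serve as rotation fillers and as the partner in Move~B---never catch up to the $c_i$; this is what the large base head start $B$ is for, and checking that it dominates the accumulated filler and partner contributions is the one genuinely fiddly piece of bookkeeping.
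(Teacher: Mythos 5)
The paper does not actually prove this lemma (it is imported from the cited work of Baumeister and Rothe), so you are reconstructing it from scratch. Your rotation gadget is sound as far as it goes: Moves A and B do keep all non-distinguished candidates synchronized, the accounting $s(c_k)=\lambda'+a_kD_j+b_kE$ is correct, and the identity $D_j-E=\alpha_j-\alpha_{j+1}=1$ is the right engine. The genuine gap is in how you combine the two moves. The gcd/numerical-semigroup step needs $D_j=M\alpha_j-T$ and $E=D_j-1$ to be integers of a common sign whose magnitudes are polynomially bounded, and neither is guaranteed: the paper's normalization only pins down $\alpha_j=1$ and $\alpha_k=0$ for $k>j$, so $T=\sum_p\alpha_p$ may be irrational (then $\gcd(D_j,E)$ is meaningless and the semigroup argument collapses) or super-polynomially large in $M$ (then the Frobenius threshold forces a global shift of order $|D_j|\,|E|\approx T^2$ and coefficients $a_k,b_k$ of order $T$, so the vote count is not $O(\mathrm{poly}(|\mathcal{C}|\cdot\sum_i|X_i|))$). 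Your claimed bound $a_k,b_k=O(|X_k|+\mathrm{poly}(M))$ is therefore false in general.

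The repair is small but changes the argument: since $E=D_j-1$, you have $aD_j+bE=(a+b)D_j-b$, so rather than invoking coprimality you should fix $a_k+b_k=N_0:=\max_\ell X_\ell-\min_\ell X_\ell$ for every $k$ and set $b_k=\max_\ell X_\ell-X_k$. The term $N_0D_j$ is then common to all $c_k$ and absorbed into $\lambda$, the candidate-specific contribution is exactly $-b_k$, and the coefficients are $O(\sum_i|X_i|)$ independently of the score magnitudes and of whether $D_j$ is rational. Two further pieces of bookkeeping remain loose. First, the base profile: exact ties among the $c_i$ and a quantifiable lead over the dummies require rotating over a full period of $m(M-m)$ votes; your claim that each dummy gets strictly less \emph{per block of $m$ votes} is false (for $(1,1,1,1,0)$ with $m=2$ a dummy sitting in positions $3$ and $4$ matches the $c_i$'s), though the averaged statement over a full period is correct. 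Second, the lead the base must build is of order $N_0|D_j|$, which scales with $T$; this still yields a polynomial number of base periods only because the per-period lead is at least $\alpha_1-\alpha_M\geq T/M$, and that cancellation is precisely the ``fiddly bookkeeping'' you deferred --- it does go through, but it must be checked rather than asserted.
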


With the above lemma at hand, we now show hardness of polynomial kernel for the class of strict scoring rules.

\begin{theorem}
\label{thm:npk_borda}
The \textsc{Possible winner} problem for any strict scoring rule, when parameterized by the number of candidates, does not admit a polynomial kernel unless~\caveat{}.
\end{theorem}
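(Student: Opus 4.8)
The plan is to exhibit a polynomial parameter transformation from \textsc{Small universe set cover} to \textsc{Possible winner} for any strict scoring rule, where the parameter of the target instance (the number of candidates $|\CC|$) is bounded by a polynomial in the source parameter $m+k$. Since \textsc{Small universe set cover} admits no polynomial kernel unless \caveat{}, and a PPT composed with a hypothetical polynomial kernel for \textsc{Possible winner} would yield one for \textsc{Small universe set cover}, the result follows from the standard kernelization-hardness machinery (the analogue of \Cref{thm:ppt-reduction}).

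First I would set up the candidate set. Given an instance $(\UU, \FF, k)$ of \textsc{Small universe set cover} with $\UU = \{u_1, \ldots, u_m\}$ and $\FF = \{S_1, \ldots, S_t\}$, I would introduce one candidate $u_i$ for each universe element, the distinguished candidate $c$ whom we wish to make the winner, and a small pool $D$ of auxiliary ``padding'' candidates. The crucial point is that the total number of candidates is $O(m + |D|)$, and $|D|$ will be a constant or a small polynomial in $m$ independent of $t$ — this is what keeps the parameter small and gives the PPT. The family $\FF$ and the budget $k$ will be encoded not through extra candidates but through the \emph{partial votes}: each set $S_j$ contributes a partial vote whose only unspecified comparisons concern the elements of $S_j$ relative to $c$, so that choosing to ``place $c$ high'' in that vote corresponds to selecting $S_j$ into the cover $\HH$.

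The heart of the construction is a scoring-gadget argument using \Cref{score_gen}. The idea is to use \Cref{score_gen} to install, via a block of complete votes, a carefully chosen base score profile $\mathbf{X}$ so that every element candidate $u_i$ starts just barely ahead of $c$, while the padding candidates in $D$ are held strictly below the threshold $\lambda$ and can never contend. Then each of the $t$ partial votes corresponding to a set $S_j$ can be completed in two ``canonical'' ways: one that boosts $c$ relative to the elements of $S_j$ (modelling $S_j \in \HH$) and one that does not. Because the scoring rule is strict ($\alpha_1 > \alpha_2 > \cdots > \alpha_m$), every adjacent swap in a completed vote changes relative scores by a \emph{positive} amount, which I would exploit to make the bookkeeping tight: $c$ becomes the unique winner in some linear extension if and only if for every element $u_i$ there is at least one selected set covering it, and the number of selected sets is at most $k$. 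Enforcing the ``at most $k$'' side is the delicate part, and I would handle it by calibrating $\mathbf{X}$ so that each selection-completion spends a bounded amount of ``budget'' against $c$'s own score, and only $k$ such completions keep $c$ above the field.

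\textbf{The main obstacle} I anticipate is twofold. First, \Cref{score_gen} guarantees the desired base scores only at the cost of $O(\mathrm{poly}(|\CC|\cdot\sum_i|X_i|))$ additional complete votes; I must ensure the entries $X_i$ are polynomially bounded in $m$ (not in $t$) so that the whole reduction remains polynomial-time and, more importantly, does not inflate the candidate count — the candidates are unaffected by the number of votes, so this only threatens running time, but I would still verify the entries stay small. Second, and more genuinely delicate, is arguing that \emph{arbitrary} (non-canonical) completions of the partial votes cannot cheat: a malicious linear extension might place $c$ advantageously in more than the two intended configurations. I would resolve this by restricting the unspecified pairs in each set-vote to exactly the comparisons between $c$ and the members of $S_j$ (fixing everything else), so the only freedom is whether $c$ beats each $u_i \in S_j$, and by choosing the base profile so that these are the only score-relevant degrees of freedom. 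Completing this ``soundness'' direction cleanly — showing that any winning extension induces a genuine set cover of size $\le k$ — is where the bulk of the careful case analysis lives.
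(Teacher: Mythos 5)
Your high-level plan (a polynomial parameter transformation from \textsc{Small universe set cover}, a candidate set of size polynomial in $m$ only, and base scores installed via \Cref{score_gen}) matches the paper's. But the core gadget you describe is structurally broken, and the breakage is exactly at the point you flag as ``delicate.'' You encode the selection of $S_j$ as a completion that places $c$ above the elements of $S_j$, with the unspecified pairs being precisely the comparisons between $c$ and the members of $S_j$. Under any positional scoring rule, $c$'s score in a completed vote is then a monotone non-decreasing function of the set of elements $c$ is placed above, and each element's score is non-increasing in the same quantity. So every selection strictly helps $c$ and strictly hurts the elements it covers, there is no cost whatsoever to selecting \emph{every} set, and the constructed instance is a \textsc{Yes}-instance of \textsc{Possible winner} whenever $\FF$ covers $\UU$, regardless of $k$. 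Your proposed repair --- ``calibrating $\mathbf{X}$ so that each selection-completion spends a bounded amount of budget against $c$'s own score'' --- cannot work: $\mathbf{X}$ is realized by a fixed block of \emph{complete} votes and is independent of how the partial votes are extended, so no choice of base scores can turn a per-selection gain for $c$ into a per-selection loss. Your padding pool $D$ is explicitly inert (held below $\lambda$, never contending), so it cannot supply the missing pressure either.

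The paper's construction sidesteps this by keeping $c$'s position \emph{fixed} in every partial vote (so $c$'s score is identical in all extensions) and instead making the unspecified pairs involve an auxiliary ``selector'' candidate $w$: selecting $S_i$ means lifting $w$ to the top of that vote, which simultaneously (i) pushes $d$ down by one position, (ii) pushes every element of $S_i$ down by one position, and (iii) raises $w$'s score by a fixed amount. The budget is then enforced two-sidedly by a pair of counterweight candidates: $d$ starts $(k-1)\delta_1$ ahead of $c$ and loses $\delta_1$ per selection, forcing \emph{at least} $k$ selections, while $w$ starts $k(\delta_2+\cdots+\delta_{m+1})+\delta_1$ behind $c$ and gains per selection, forcing \emph{at most} $k$; coverage is forced because $s(c)=s(u_i)$ and $u_i$ only loses score in votes whose set contains it and is selected. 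The dummy block $\VV$ pads each vote so that these per-selection increments are uniform across sets of different sizes. This selector-plus-counterweight mechanism is the idea your proposal is missing, and without it the ``at most $k$'' direction of the equivalence fails on instances where $\FF$ covers $\UU$ but no $k$ sets do.
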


\begin{proof}
 Let $(\UU,\FF,k)$ be an instance of \textsc{Small universe set cover}, where $\UU = \{u_1,\ldots,u_m\}$ and $\FF = \{S_1, \ldots, S_t\}$. We use  $T_i$ to denote $\UU \setminus S_i$ for $i\in[t]$. We let $\vec{\alpha}= (\alpha_1, \alpha_2, \ldots, \alpha_{2m+3})$ denote the score vector of length $t$, and let $\delta_i$ denote the difference $(\alpha_{i} - \alpha_{i+1})$ for $i\in[2m+2]$. Note that for a strict scoring rule, all the $\delta_i$'s will be strictly positive. We now construct an instance $(\CC,V,c)$ of \textsc{Possible winner} as follows.  
 
\paragraph*{Candidates} $\CC = \UU \uplus \VV \uplus \{w,c,d\}$, where $\VV := \{v_1,\dots ,v_m\}$. 

\paragraph*{Partial Votes, $P$}~The first part of the voting profile comprises $t$ partial votes, and will be denoted by $P$. Let $V_j$ denote the set $\{v_1, \ldots, v_j\}$. For each $i\in[t]$, we first consider a profile built on a total order $\eta_i$:
$$\eta_i := d \succ S_i \succ V_j \succ w \succ \text{others}, \mbox{where } j = m - |S_i|.$$ 
Now we obtain a partial order $\lambda_i$ based on $\eta_i$ for every $i\in[t]$ as follows:
$$\lambda_i := \eta_i \setminus \left( \{w\} \times \left( \{d\} \uplus S_i \uplus V_j\right)\right)$$
That is, $\lambda_i$ is the partial vote where the order between the candidates $x$ and $y$ for $x\ne w$ or $y\notin \{d\} \uplus S_i \uplus V_j$, is same as the order between $x$ and $y$ in $\eta_i$. Whereas, if $x=w$ and $y\in \{d\} \uplus S_i \uplus V_j$, then the order between $x$ and $y$ is unspecified. Let $P^\prime$ be the set of votes $\{\eta_i ~|~ i\in[t]\}$ and $P$ be the set of votes $\{\lambda_i ~|~ i\in[t]\}$. 

\paragraph*{Complete Votes, $Q$}~We now add complete votes, which we denote by $Q$, such that $s(c) = s(u_i), s(d) - s(c) = (k -1)\delta_1, s(c) - s(w) = k(\delta_2 + \delta_3 + \cdots + \delta_{m+1}) + \delta_1, s(c) > s(v_i) + 1$ for all $i\in[t]$, where $s(a)$ is the score of candidate $a$ from the combined voting profile $P^\prime \uplus Q$. From the proof of \Cref{score_gen}, we can see that such votes can always be constructed. In particular, also note that the voting profile $Q$ consists of complete votes. Note that the number of candidates is $2m+3$, which is polynomial in the size of the universe, as desired. 

We now claim that the candidate $c$ is a possible winner for the voting profile $P \uplus Q$ with respect to the strict scoring rule $\vec{\alpha}$ if and only if $(\UU,\FF)$ is a YES instance of Set Cover. 

In the forward direction, suppose, without loss of generality, that $S_1, \ldots, S_k$ form a set cover. Then we propose the following extension for the partial votes $\lambda_1, \ldots, \lambda_k$:
$$w > d > S_i > V_j > \text{others} ,$$

and the following extension for the partial votes $\lambda_{k+1}, \ldots, \lambda_t$:
$$d > S_i > V_j > w > \text{others} $$

For $i\in[k]$, the position of $d$ in the extension of $\lambda_i$ proposed above is one lower than its original position in $\eta_i$. Therefore, the score of $d$ decreases by $k\delta_1$ making the final score of $d$ less than the score of $c$. Similarly, since $S_1, \ldots, S_k$ form a set cover, the score of $u_i$ decreases by at least $\min_{i=2}^m \{\delta_i\}$ for every $i\in[m]$, which is strictly positive as the scoring rule is strict. Finally, the score of $w$ increase by at most $k(\delta_2 + \delta_3 + \cdots + \delta_{m+1})$, since there are at most $k$ votes where the position of $w$ in the extension of $\lambda_i$ improved from it's original position in $\eta_i$ for $i\in[t]$. Therefore, the score of $c$ is greater than any other candidate, implying that $c$ is a possible winner. 

For the reverse direction, notice that there must be at least $k$ extensions where $d$ is in the second position, since the score of $d$ is $(k-1)\delta_1$ more than the score of $c$. In these extensions, observe that $w$ will be at the first position. On the other hand, placing $w$ in the first position causes its score to increase by $(\delta_2 + \delta_3 + \cdots + \delta_{m+1})$, therefore, if $w$ is in the first position in $\ell$ extensions, its score increases by $\ell(\delta_2 + \delta_3 + \cdots + \delta_{m+1})$. Since the score difference between $w$ and $c$ is only $k(\delta_2 + \delta_3 + \cdots + \delta_{m+1}) + 1$, we can afford to have $w$ in the first position in \emph{at most} $k$ votes. Therefore, apart from the extensions where $d$ is in the second position, in all remaining extensions, $w$ appears after $V_j$, and therefore the candidates from $S_i$ continue to be in their original positions. Moreover, there must be exactly $k$ votes where $d$ is at the second position. We now claim that the sets corresponding to the $k$ votes where $d$ is at the second position form a set cover. Indeed, if not, suppose the element $u_i$ is not covered. It is easily checked that the score of such a $u_i$ remains unchanged in this extension, and therefore its score is equal to $c$, contradicting our assumption that we started with an extension for which $c$ was a winner. 
\end{proof}

The proof of \Cref{thm:npk_borda} can be generalized to a wider class of scoring rules as stated in the following corollary.
\begin{corollary}
 Let $r$ be a positional scoring rule such that there exists a polynomial function $f:\mathbb{N}\rightarrow \mathbb{N}$, such that for every $m\in \mathbb{N}$, there exists an index $l$ in the $f(m)$ length score vector $\vec{\alpha}$ satisfying following,
 $$ \alpha_i - \alpha_{i+1} > 0 ~\forall l\le i\le l+m $$
 Then the \textsc{Possible winner} problem for $r$, when parameterized by the number of candidates, does not admit a polynomial kernel unless~\caveat{}.
\end{corollary}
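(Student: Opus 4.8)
The plan is to reuse, almost verbatim, the polynomial parameter transformation from \textsc{Small universe set cover} constructed in the proof of \Cref{thm:npk_borda}. The crucial observation is that that proof never uses strictness of the \emph{entire} score vector: every score-change computation in both directions (the drop of $\delta_1$ for $d$ when it moves to second place, the gain of at most $k(\delta_2+\cdots+\delta_{m+1})$ for $w$, and the strictly positive drop for each uncovered universe element $u_i$) involves only the $m+1$ consecutive gaps $\delta_1,\ldots,\delta_{m+1}$ sitting among the top $m+2$ positions of each ballot. Hence any scoring rule that merely exhibits a block of $m+1$ consecutive strictly positive gaps somewhere inside its length-$f(m)$ score vector can host exactly the same gadget; this is precisely what the hypothesis $\alpha_i-\alpha_{i+1}>0$ for all $l\le i\le l+m$ provides.

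Concretely, given a \textsc{Small universe set cover} instance $(\UU,\FF,k)$ with $|\UU|=m$, I would build an election on $f(m)$ candidates (so that the ballots match the length-$f(m)$ score vector); since $f$ is polynomial, the number of candidates is $\mathrm{poly}(m)$ and the transformation is parameter preserving. I would introduce ``top-filler'' dummy candidates pinned to positions $1,\ldots,l-1$ and ``bottom-filler'' dummies pinned below position $l+m+1$ in every partial vote, and embed the critical block $d \succ S_i \succ V_j \succ w$ (of size exactly $m+2$, where $V_j$ pads $S_i$ up to $m$ elements as before) into positions $l,\ldots,l+m+1$, where the gaps are positive. The partial votes $\lambda_i$ are obtained from the total orders $\eta_i$ exactly as in \Cref{thm:npk_borda}, deleting only the comparisons between $w$ and $\{d\}\uplus S_i\uplus V_j$, so that $w$ may float to position $l$ or remain at $l+m+1$. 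The complete votes $Q$, produced through \Cref{score_gen}, then install the same target inequalities as in the theorem, now written with $\delta_{l},\ldots,\delta_{l+m}$ in place of $\delta_1,\ldots,\delta_{m+1}$, while simultaneously absorbing the (large but fixed) contributions of the filler candidates so that $c$ is the unique possible winner iff a set cover of size $k$ exists.

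Both directions of correctness transfer line for line from \Cref{thm:npk_borda} under the substitution $\delta_i \mapsto \delta_{l+i-1}$, because relocating the gadget down the ballot changes nothing structurally and the filler candidates occupy identical fixed positions in every extension, hence never contribute to any score \emph{difference} across extensions. The main obstacle, and the one point deserving care, is the bookkeeping around the fillers: one must check that the top- and bottom-filler candidates can be assigned consistent fixed positions in all partial votes and, more importantly, that the additive freedom of \Cref{score_gen} is still enough to cancel their constant score offsets while meeting all the target inequalities and keeping the designated candidates' scores tied to $\lambda$. This is exactly where a general rule could in principle misbehave, but the ``$\lambda+X_i$'' flexibility of \Cref{score_gen} neutralizes it. Since the resulting instance has $f(m)=\mathrm{poly}(m)$ candidates and \textsc{Small universe set cover} admits no polynomial kernel parameterized by $m+k$ unless \caveat, \Cref{thm:ppt-reduction} delivers the stated kernelization lower bound.
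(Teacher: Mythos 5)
Your proposal is correct and is exactly the generalization the paper intends: the paper gives no separate proof of this corollary, asserting only that the reduction of \Cref{thm:npk_borda} carries over, and your relocation of the $d \succ S_i \succ V_j \succ w$ gadget into the strictly decreasing block at positions $l,\ldots,l+m+1$ (with the substitution $\delta_i \mapsto \delta_{l+i-1}$ and the pinned filler candidates' constant offsets absorbed via \Cref{score_gen}) is precisely that argument. The one detail worth adding is that $f(m)$ need not be large enough to house both the gadget block and the $m+1$ remaining candidates ($c$ together with $\UU\setminus S_i$ and $\VV\setminus V_j$), so one should invoke the hypothesis at a larger argument, say $m'=2m+3$, and build the election on the length-$f(m')$ score vector, which is still polynomial in $m$ and hence preserves the polynomial parameter transformation.
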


\subsection{Results for the Maximin Voting Rule}

We will need the following lemma in subsequent proofs. The lemma has been used before \cite{mcgarvey1953theorem,xia2008determining}.

\begin{lemma}\label{thm:mcgarvey}
 Let $f:\mathcal{C} \times \mathcal{C} \longrightarrow \mathbb{Z}$ be a function such that
 \begin{enumerate}
  \item $\forall a,b \in \mathcal{C}, f(a,b) = -f(b,a)$.
  \item $\forall a, b, c, d \in \mathcal{C}, f(a,b) + f(c,d)$ is even.
 \end{enumerate}
 Then we can construct in time $O\left(|\mathcal{C}|\sum_{\{a,b\}\in \mathcal{C}\times\mathcal{C}} |f(a,b)|\right)$ an election $E$ with $n$ votes over the candidate set $\mathcal{C}$ such that for all $a,b \in \mathcal{C}$, $D_E(a, b) = f(a,b)$. 
\end{lemma}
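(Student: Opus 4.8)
The plan is to give the standard McGarvey-style construction, building the election additively out of small two-vote gadgets. First I would record a parity observation that drives everything: setting $a=b$ in condition~(1) gives $f(a,a)=-f(a,a)$, so $f(a,a)=0$ for every $a$; feeding this into condition~(2) with $c=d=a$ shows $f(a,b)=f(a,b)+f(a,a)$ is even for every pair. Hence \emph{all} values $f(a,b)$ are even integers. This is exactly the flexibility needed, since a single gadget will be able to shift one margin by $2$ while fixing all others, and an even target margin can then be hit by repeating the gadget the right number of times. The overall strategy is: process each unordered pair independently, for each pair lay down a number of identical gadgets proportional to $|f(a,b)|$, and observe that the margins simply add across all the votes produced.

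The core object is the following two-vote gadget for an ordered pair $(a,b)$. List the remaining candidates as $z_1,\dots,z_{m-2}$ in an arbitrary fixed order and add the two votes
\[
v_1 : a \succ b \succ z_1 \succ z_2 \succ \cdots \succ z_{m-2}, \qquad
v_2 : z_{m-2} \succ \cdots \succ z_2 \succ z_1 \succ a \succ b.
\]
I claim these two votes increase $D_E(a,b)$ by exactly $2$ and change no other pairwise margin. The verification is the key step and is a short case check on pair types: for $\{a,b\}$ both votes rank $a\succ b$, contributing $+2$; for any pair $\{z_i,z_j\}$ the order is reversed between $v_1$ and $v_2$, contributing $0$; for any $\{a,z_i\}$, $v_1$ has $a$ above all $z$'s while $v_2$ has all $z$'s above $a$, again $0$; and identically for $\{b,z_i\}$. (The degenerate case $m=2$ is fine: the two votes are just $a\succ b$ twice, contributing $+2$.) This cancellation is the heart of the argument, so I would present the four cases explicitly rather than compress them.

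To assemble the election, for each unordered pair $\{a,b\}$ orient it so that $f(a,b)\ge 0$ (possible by antisymmetry), and append $f(a,b)/2$ copies of the gadget for $(a,b)$; since $f(a,b)$ is even this is a nonnegative integer, and it adds exactly $f(a,b)$ to $D_E(a,b)$ and nothing elsewhere. Because margins are additive over disjoint blocks of votes, the final election $E$ satisfies $D_E(a,b)=f(a,b)$ simultaneously for all pairs. The total number of votes is $n=\sum_{\{a,b\}}|f(a,b)|$, each vote is a permutation of $\mathcal{C}$ written in $O(|\mathcal{C}|)$ time, so the construction runs in $O\!\left(|\mathcal{C}|\sum_{\{a,b\}}|f(a,b)|\right)$ time, as required. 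I expect no real obstacle beyond the gadget verification; the only points warranting care are the initial parity observation (which legitimizes dividing by $2$) and confirming that the per-pair contributions genuinely do not interfere, which follows from the block-additivity of $D_E$.
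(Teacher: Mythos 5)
Your proof is correct and is exactly the standard McGarvey construction that the paper relies on by citation (the paper itself gives no proof of this lemma, attributing it to McGarvey and to Xia--Conitzer); indeed, the very same two-vote gadget $a \succ b \succ \overrightarrow{\mathcal{C}\setminus\{a,b\}}$ paired with $\overleftarrow{\mathcal{C}\setminus\{a,b\}} \succ a \succ b$ is used explicitly elsewhere in the thesis. Your parity observation justifying the division by $2$ and the four-case cancellation check are exactly the right points to make explicit.
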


We now describe the reduction for the \textsc{Possible winner} problem for the maximin voting rule parameterized by the number of candidates.

\begin{theorem}
\label{thm:npk_maximin}
The \textsc{Possible winner} problem for the maximin voting rule, when parameterized by the number of candidates, does not admit a polynomial kernel unless~\caveat{}.
\end{theorem}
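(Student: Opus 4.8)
The plan is to exhibit a polynomial parameter transformation (in the sense of \Cref{def:ppt-reduction}) from \textsc{Small universe set cover} to \textsc{Possible winner} for maximin, with the number of candidates bounded by a polynomial in the parameter $m+k$. Since \textsc{Small universe set cover} admits no polynomial kernel unless \caveat{}~\cite{DomLokSau2009}, \Cref{thm:ppt-reduction} then rules out a polynomial kernel for \textsc{Possible winner} [maximin] as well. The overall architecture mirrors the scoring-rule reduction of \Cref{thm:npk_borda}: a block of complete votes fixes a controlled ``base'' configuration, and a block of $t$ partial votes --- one per set $S_i$ --- encodes, through the two ways of completing its undetermined pairs, the binary choice of whether $S_i$ is placed in the cover.

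First I would take the candidate set to be $\CC = \UU \uplus \{c, d, w\}$ (together with a constant number of guard candidates if convenient), so that $|\CC| = O(m)$, which is polynomial in the parameter. The role of \Cref{thm:mcgarvey} is central here: because that lemma realizes any integer-valued, antisymmetric, parity-consistent margin function $D_\EE(\cdot,\cdot)$ using only $O(|\CC|\sum|D_\EE|)$ complete votes, I can install a base weighted majority graph in which (i) $c$ defeats every candidate by a large, completion-independent margin, so that its maximin score $\min_{y\ne c} D_\EE(c,y)$ is pinned to a fixed value $S$; (ii) each element candidate $u_j$ defeats everyone comfortably except a single designated opponent, against whom its margin sits just above $S$; and (iii) $d$ and $w$ each have a unique binding (minimum-margin) comparison calibrated against the threshold $S$.

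The partial vote $\lambda_i$ for set $S_i$ would leave undetermined only a small, carefully chosen collection of pairs, so that its ``cover-type'' completion simultaneously (a) lowers, for every $u_j \in S_i$, the margin of $u_j$ against its designated opponent below $S$ (covering those elements), (b) improves $w$'s binding margin, and (c) worsens $d$'s binding margin, while the ``idle'' completion does none of these. The numerical gaps are set, exactly as in \Cref{thm:npk_borda}, so that $w$'s maximin score stays below $S$ only if at most $k$ votes are completed the cover-type way, whereas $d$'s maximin score drops below $S$ only if at least $k$ are. Hence $c$ is a possible winner if and only if exactly $k$ cover-type completions suffice to push every element candidate (and $d$ and $w$) strictly below $S$, which happens precisely when the corresponding $k$ sets cover $\UU$; the forward and reverse directions then follow the template of \Cref{thm:npk_borda}.

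The hard part will be the bookkeeping forced by maximin being a minimum over pairwise margins rather than a sum of positional scores: a single completion of $\lambda_i$ perturbs many entries of the weighted majority graph at once, so I must keep the undetermined region of each partial vote as small as possible and use \Cref{thm:mcgarvey} to pre-absorb all deterministic contributions, ensuring that for each of $c$, $d$, $w$, and the $u_j$ the intended comparison remains the binding minimum across all admissible completions. Verifying that no unintended comparison ever becomes the minimum --- and that the parity and boundedness hypotheses of \Cref{thm:mcgarvey} are met so that the number of votes, and hence the running time, stays polynomial --- is where the real care is needed; the candidate count, and therefore the validity of the transformation as a PPT, is immediate from $|\CC| = O(m)$.
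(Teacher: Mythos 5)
Your overall strategy is the right one and matches the paper's: a polynomial parameter transformation from \textsc{Small universe set cover}, with \Cref{thm:mcgarvey} used to install a base weighted majority graph and $t$ partial votes whose completions encode set selection, with a budget enforced through auxiliary candidates. However, there are two genuine gaps. First, your item (i) is wrong as stated: if $c$ defeats every candidate by a large completion-independent margin, then $c$ is a Condorcet winner and maximin elects it in every extension, so the reduced instance is trivially a \YES{} instance. What is actually needed (and what the paper does) is the opposite: $c$ has a \emph{fixed losing} comparison ($D(c,l_1)=-t$, enforced by a $3$-cycle $L=\{l_1,l_2,l_3\}$ of candidates with huge mutual margins) that caps $c$'s maximin score at $-t$, together with a \emph{completion-dependent} comparison $D(c,w_1)=-2k$ that improves by $2$ with each ``cover-type'' completion; the budget of $k$ is then enforced because $d$'s binding margin $D(d,w_1)=-2k-2$ improves in lockstep and catches up with $c$ if more than $k$ votes are spent. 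A completion-independent score for $c$ would destroy this mechanism.

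Second, and more importantly, the step you defer as ``where the real care is needed'' is essentially the entire proof. With only $\UU\uplus\{c,d,w\}$ plus constantly many guards, each partial vote must couple the coverage of all three elements of $S_i$ with the single payment against $d$ and $w$ in an all-or-nothing way; nothing in your sketch prevents a completion that covers only some elements of $S_i$, or covers elements without paying, which breaks the correspondence with set cover. The paper resolves this by introducing one dedicated blocker per element, $W=\{w_1,\ldots,w_m,w_x\}$ with $D(w_i,u_i)=-2t$, kept as a rigid block $\vec{W}$ inside each partial vote ($\lambda_i=\eta_i\setminus(W\times(\{c,d,x\}\uplus\UU))$ with $\eta_i = L\succ\vec{W}\succ x\succ S_i\succ d\succ c\succ T_i$); the reverse direction then proceeds by two chained claims --- at most $k$ extensions can have $c\succ w_1$ (else $d$ ties with $c$), and every extension with $w_1\succ c$ must have $w_x\succ x$ (else $x$ ties with $c$), which forces the whole block $\vec{W}$ ahead of $S_i$ in the non-selected votes --- so that an uncovered element $u_i$ yields $D(w_i,u_i)=-t$ and $w_i$ ties with $c$. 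Your proposal would need a concrete replacement for this gadget and for the chained forcing argument; as written, the reduction is a plausible plan rather than a proof, and it is not clear that the constant-guard variant can be made to work at all.
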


\begin{proof}
 Let $(\UU,\FF,k)$ be an instance of \textsc{Small universe set cover}, where $\UU = \{u_1,\ldots,u_m\}$ and $\FF = \{S_1, \ldots, S_t\}$. We use  $T_i$ to denote $\UU \setminus S_i$. We now construct an instance $(\CC,V,c)$ of the \textsc{Possible winner} as follows. 

\paragraph*{Candidates}~$C := \UU \uplus W \uplus \{c,d,x\} \uplus L$, where $W := \{w_1, w_2, \ldots, w_m, w_x\}, L:= \{l_1,l_2,l_3\}$.

\paragraph*{Partial Votes, $P$}~The first part of the voting profile comprises $t$ partial votes, and will be denoted by $P$. For each $i\in[t]$, we first consider a profile built on a total order $\eta_i$. We denote the order $w_1 \succ \cdots \succ w_m \succ w_x$ 
by $\vec{W}$. From this point onwards, whenever we place a set of candidates in some position of a partial order, we mean that the candidates in the set can be ordered arbitrarily. For example, the candidates in $S_i$ can be ordered arbitrarily among themselves in the total order $\eta_i$ below for every $i\in[t]$.
\[\eta_i := L \succ \vec{W} \succ x \succ S_i \succ d \succ c \succ T_i\]
Now we obtain a partial order $\lambda_i$ based on $\eta_i$ for every $i\in[t]$ as follows:
\[\lambda_i := \eta_i \setminus \left(W \times \left(\{c,d,x\} \uplus \UU\right)\right) \]
The profile $P$ consists of $\{ \lambda_i ~|~ 1\in[t] \}$.

\paragraph*{Complete Votes, $Q$}~We now describe the remaining votes in the profile, which are linear orders designed to achieve specific pairwise difference scores among the candidates. This profile, denoted by $Q$, is defined according to~\Cref{thm:mcgarvey} to contain votes such that the pairwise score differences of $P \cup Q$ satisfy the following.

\begin{itemize}
\item $D(c,w_1) = -2k$.
\item $D(c,l_1) = -t$.
\item $D(d,w_1) = -2k - 2$.
\item $D(x,w_x) = -2k - 2$.
\item $D(w_i,u_i) = -2t$ $\forall$ $i\in[m]$.
\item $D(a_i,l_1) = D(w_x,l_1) = -4t$.
\item $D(l_1,l_2) = D(l_2,l_3) = D(l_3,l_1) = -4t$.
\item $D(l,r) \leq 1$ for all other pairs $(l,r) \in C \times C$.
\end{itemize}
We note that the for all $c,c' \in \CC$, the difference $|D(c,c') - D_P(c,c')|$ is always even, as long as $t$ is even and the number of sets in $\FF$ that contain any element $u \in \UU$ is always even. Note that the latter can always be ensured without loss of generality: indeed, if $u \in \UU$ occurs in an odd number of sets, then we can always add the set $\{u\}$ if it is missing and remove it if it is present, flipping the parity in the process. In case $\{u\}$ is the only set containing the element $u$, then we remove the set from both $\FF$ and $\UU$ and decrease $k$ by one. The number of sets $t$ can be assumed to be even by adding a dummy element and a dummy pair of sets that contains the said element. It is easy to see that these modifications always preserve the instance. 
Thus, the constructed instance of \textsc{Possible winner} is $(\CC,V,c)$, where $V := P \cup Q$. We now turn to the proof of correctness. 

In the forward direction, let $\HH \subseteq \FF$ be a set cover of size at most $k$. Without loss of generality, let $|\HH| = k$ (since a smaller set cover can always be extended artificially) and let $\HH = \{S_1, \ldots, S_k\}$ (by renaming). 

If $i \leq k$, let: 
\[\lambda_i^* := L \succ x \succ S_i \succ d \succ c \succ \vec{W} \succ T_i\]

If $k < i \leq t$, let: 
\[\lambda_i^* := L \succ \vec{W} \succ x \succ S_i \succ d \succ c \succ T_i\]
Clearly $\lambda_i^*$ extends $\lambda_i$ for every $i\in[t]$. Let $V^*$ denote the extended profile consisting of the votes $\{ \lambda_i^* ~|~ i\in[t]\} \cup Q$. We now claim that $c$ is the unique winner with respect to the maximin voting rule in $V^*$. 

Since there are $k$ votes in $V^*$ where $c$ is preferred over $w_1$ and $(t-k)$ votes where $w_1$ is preferred to $c$, we have:
\begin{eqnarray*}
D_{V^*}(c,w_1) &=& D_{V}(c,w_1) + k - (t-k)
\\ &=& -2k + k - (t-k) = -t
\end{eqnarray*}
It is easy to check that maximin score of $c$ is $-t$. Also, it is straightforward to verify the following 
table.

\begin{table}[!htbp]
\begin{minipage}{\textwidth}
  \begin{center}
  {\renewcommand{\arraystretch}{2}
 \begin{tabular}{|c|c| }\hline
  Candidate	& maximin score	\\\hline
  $w_i, \forall i\in \{1, 2, \dots, m\}$		& $ < -t $	\\\hline
  $u_i, \forall i\in \{1, 2, \dots, m\}$		& $\leq -4t$	\\\hline 
  $w_x$		& $\leq -4t$	\\\hline 
  $l_1, l_2, l_3$		& $\leq -4t$	\\\hline
  $x$		& $\leq -t-2$	\\\hline 
  $d$		& $\leq -t-2$\\\hline
 \end{tabular}
 }
 \end{center}
\end{minipage}
\label{summary}
\end{table}

Therefore, $c$ is the unique winner for the profile $V^*$.

We now turn to the reverse direction. Let $P^*$ be an extension of $P$ such that $V^* := P^* \cup Q$ admits $c$ as a unique winner with respect to the maximin voting rule. We first argue that $P^*$ must admit a certain structure, which will lead us to an almost self-evident set cover for $\UU$.

Let us denote by $P^*_C$ the set of votes in $P^*$ which are consistent with $c \succ w_1$, and let $P^*_W$ be the set of votes in $P^*$ which are consistent with $w_1 \succ c$. We first argue that $P^*_C$ has at most $k$ votes.
\begin{claim} Let $P^*_C$ be as defined above. Then $|P^*_C| \leq k$.
\end{claim}

\begin{proof}
Suppose, for the sake of contradiction, that more than $k$ extensions are consistent with $c \succ w_1$. Then we have:
\begin{eqnarray*}
D_{V^*}(c,w_1) & \geq & D_{V}(c,w_1) + k+1 - (t-k-1)
\\ &=&-2k + 2k - t + 2 = -t+2
\end{eqnarray*}
Since $D_{V^*}(c,l_1) = -t$, the maximin score of $c$ is $-t$. On the other hand, we also have that the maximin score of $d$ is given by $D_{V^*}(d,w_1)$, which is now at least $(-t)$:
\begin{eqnarray*}
D_{V^*}(d,w_1) & \geq & D_{V}(d,w_1) + k+1 - (t-k-1)
\\ &=&-2k-2 + 2k - t + 2 = -t
\end{eqnarray*}
Therefore, $c$ is no longer the unique winner in $V^*$ with respect to the maximin voting rule, 
which is the desired contradiction. 
\end{proof}

We next propose that a vote that is consistent with $w_1 \succ c$ must be consistent with $w_x \succ x$.
\begin{claim} 
Let $P^*_W$ be as defined above. Then any vote in $P^*_W$ must respect $w_x \succ x$.
\end{claim}

\begin{proof}
Suppose there are $r$ votes in $P^*_C$, and suppose that in at least one vote in $P^*_W$ where $x \succ w_x$. Notice that any vote in $P^*_C$ is consistent with $x \succ w_x$. Now we have:
\begin{eqnarray*}
D_{V^*}(c,w_1) & = & D_{V}(c,w_1) + r - (t-r)
\\ &=&-2k + 2r - t 
\\ &=&-t - 2(k-r)
\end{eqnarray*}
And further:
\begin{eqnarray*}
D_{V^*}(x,w_x) & \geq & D_{V}(x,w_x) + (r+1) - (t-r-1)
\\ &=&-2k-2 + 2r - t + 2 
\\ &=&-t - 2(k-r)
\end{eqnarray*}
It is easy to check that the maximin score of $c$ in $V^*$ is at most $-t - 2(k-r)$, witnessed by  $D_{V^*}(c,w_1)$, and the maximin score of $x$ is at least $-t - 2(k-r)$, witnessed by  $D_{V^*}(x,w_x)$. Therefore, $c$ is no longer the unique winner in $V^*$ with respect to the maximin voting rule, and we have a contradiction.
\end{proof}

We are now ready to describe a set cover of size at most $k$ for $\UU$ based on $V^*$. Define $J \subseteq [t]$ as being the set of all indices $i$ for which the extension of $\lambda_i$ in $V^*$ belongs to $P_C^*$. Consider: \[\HH := \{S_i ~|~ i \in J\}.\] The set $\HH$ is our proposed set cover. Clearly, $|\HH| \leq k$. It remains to show that $\HH$ is a set cover. 

We assume, for the sake of contradiction, that there is an element $u_i \in \UU$ that is not covered by $\HH$. This means that we have $u_i \in T_i$ for all $i \in J$, and thus $w_i \succ u_i$ in the corresponding extensions of $\lambda_i$ in $V^*$.  Further, for all $i \notin J$, we have that the extension of $\lambda_i$ in $V^*$ is consistent with: 
\[w_1 \succ \cdots \succ w_i \succ \cdots \succ w_x \succ x \succ S_i \succ c \succ T_i,\]
implying again that $w_i \succ u_i$ in these votes. Therefore, we have:
\[D_{V^*}(w_i,u_i) = D_V(w_i,u_i) + k + (t-k) = -2t + t = -t.\]
We know that the maximin score of $c$ is less than or equal to $-t$, since $D_{V^*}(c,l_1) = -t$, and we now have that the maximin score of $w_i$ is $-t$. This overrules $c$ as the unique winner in $V^*$, contradicting our assumption to that effect. This completes the proof. 
\end{proof}

\subsection{Results for the Copeland Voting Rule}

We now describe the result for the \textsc{Possible winner} problem for the Copeland voting rule parameterized by the number of candidates.

\begin{theorem}
\label{thm:npk_copeland}
The \textsc{Possible winner} problem for the Copeland voting rule, when parameterized by the number of candidates, does not admit a polynomial kernel unless~\caveat{}.
\end{theorem}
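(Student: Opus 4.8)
The plan is to establish kernelization hardness for the Copeland \textsc{Possible winner} problem by a polynomial parameter transformation from \textsc{Small universe set cover}, mirroring the structure of the maximin proof in \Cref{thm:npk_maximin} but adapting the gadgetry to the way Copeland scores candidates (namely, by counting pairwise victories rather than worst-case margins). Since the Copeland score only depends on the \emph{signs} of the pairwise differences $D(x,y)$, and not on their magnitudes, the design must ensure that the relevant pairwise comparisons are \emph{tight} --- that is, the undetermined edges incident to $w_1$ (and the auxiliary candidate $x$) are exactly the ones whose sign flips depending on whether a set is chosen into the cover. First I would set up candidates $\CC := \UU \uplus W \uplus \{c,d,x\} \uplus L$ along the lines of the maximin construction, with $W = \{w_1, \ldots, w_m, w_x\}$ tracking the universe elements and $L$ a small block of auxiliary candidates used to pin down $c$'s score. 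The partial votes $P = \{\lambda_i\}$ would again be derived from total orders $\eta_i := L \succ \vec{W} \succ x \succ S_i \succ d \succ c \succ T_i$ by deleting the comparisons between $W$ and $\{c,d,x\}\uplus\UU$, so that an extension amounts to a choice of where to insert the block $\vec{W}$.

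The key step is to choose the complete votes $Q$, via \Cref{thm:mcgarvey}, so that in the base profile $P \cup Q$ the pairwise differences $D(c,\cdot)$, $D(d,\cdot)$, $D(x,\cdot)$, and $D(w_i,u_i)$ sit right at the threshold where $k$ well-chosen extensions tip them in $c$'s favor. Concretely, I would arrange that $c$ beats (or ties, depending on the unique-winner convention) every candidate in its pairwise contest precisely when the $k$ extensions placing $w_1$ \emph{behind} $c$ correspond to a set cover, and that $d$, $x$, and each $w_i$ fail to match $c$'s Copeland tally unless some element goes uncovered. As in the maximin argument, I would prove the reverse direction by showing that any winning extension $P^*$ can place $c$ ahead of $w_1$ in at most $k$ votes (else $d$ or $x$ catches up in Copeland score, via the $D(d,w_1)$ and $D(x,w_x)$ gadgets), that votes not in this block must respect $w_x \succ x$, and that an uncovered element $u_i$ would leave $w_i$ with enough pairwise victories to overrule $c$. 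The indices of the at-most-$k$ favorable extensions then yield the desired subfamily $\HH \subseteq \FF$.

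The parity conditions of \Cref{thm:mcgarvey} (that $f(a,b)+f(c,d)$ be even) must be maintained, so I would reuse the same normalization trick as in the maximin proof: ensure $t$ is even and that every universe element lies in an even number of sets, by adding or deleting singleton sets $\{u\}$ and a dummy element/pair as needed, all of which preserve the instance. Since $|\CC| = O(m)$ is polynomial in the universe size and the number of votes is polynomial by \Cref{thm:mcgarvey}, this is a valid polynomial parameter transformation, and the non-existence of a polynomial kernel for \textsc{Small universe set cover} (unless~\caveat{}) transfers to Copeland \textsc{Possible winner}.

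The main obstacle I anticipate is calibrating the Copeland tie structure: because Copeland counts discrete wins, ties in pairwise contests contribute fractionally (via the parameter $\alpha$) and a single miscalibrated edge can shift a candidate's integer Copeland score by a full point. The delicate part is therefore to guarantee that flipping the $w_1$-versus-$c$ comparison in exactly $k$ extensions changes $c$'s number of pairwise wins by the precise amount needed to separate it from $d$, $x$, and the $w_i$'s simultaneously, while all the ``padding'' pairwise differences $D(l,r) \le 1$ stay strictly on one side so as not to perturb the count. Handling the dependence on $\alpha$ (the tie-contribution) uniformly, and checking the unique-winner condition rather than co-winner, is where the bulk of the careful bookkeeping will lie.
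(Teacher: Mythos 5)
You have the right reduction source (\textsc{Small universe set cover}) and you correctly identify the central difficulty --- that Copeland depends only on the \emph{signs} of pairwise differences --- but the plan to port the maximin gadget does not actually confront that difficulty, and two concrete steps would fail. First, in the maximin construction all the ``unspecified'' pairs are set to $|D(l,r)| \leq 1$, and this is harmless there because a maximin score is determined by the single most negative entry; under Copeland every one of those $\Theta(m^2)$ edges is a full pairwise win or loss that counts toward somebody's score. With your candidate set of size roughly $2m+7$, these padding edges dominate every Copeland tally, so ``recalibrating the thresholds'' on the handful of edges incident to $w_1$, $w_x$, $d$, $x$ cannot by itself separate $c$ from the rest: you would have to specify and verify the sign of essentially every pair, and the proposal never does this. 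Flipping $D(d,w_1)$ or $D(x,w_x)$ across zero changes $d$'s or $x$'s score by exactly one unit, and nothing in the proposal establishes that $d$ and $x$ start exactly one win behind $c$ while every $w_i$, every $u_i$, and every $l_j$ stays strictly behind in \emph{count} of victories. Second, your parity normalization is the wrong one: making $t$ and all element frequencies even (as in the maximin proof) produces even pairwise differences, which permits ties $D(a,b)=0$, and then the $\alpha$-contribution you worry about genuinely enters the score. The clean way out is the opposite normalization --- make the total number of votes odd so that no pairwise tie can ever occur and $\alpha$ becomes irrelevant --- and the proposal misses it.

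For comparison, the paper's construction is much leaner and uses a mechanism specific to Copeland's counting nature: the candidate set is just $\UU \uplus \{z,c,d,w\}$, and the partial votes $\eta_i := \UU\setminus S_i \succ z \succ c \succ d \succ S_i \succ w$ couple the comparisons $c$-vs-$d$ and $z$-vs-$w$ so that they flip together. A ``benchmark'' candidate $z$ is guaranteed score exactly $m+1$ in every extension (it always beats $d$ and all of $\UU$, and always loses to $c$); every candidate other than $c$ and $z$ is forced to lose to at least three others (via a cyclic tournament $D(u_i,u_j)=t+1$ on a window of $\lfloor m/2\rfloor$ indices), capping their scores below $m+1$; and $c$ reaches $m+2$ if and only if exactly $k$ extensions are ``activated'' and the corresponding sets cover $\UU$, with the margins $D(c,d)=t-2k+1$ and $D(z,w)=t-2k-1$ enforcing ``exactly $k$'' from both sides. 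If you want to salvage your approach, you would need to build an analogous global accounting of win-counts for your larger candidate set, which amounts to redesigning the gadget rather than recalibrating the maximin one.
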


\begin{proof}
Let $(\UU,\FF,k)$ be an instance of \textsc{Small universe set cover}, where $\UU = \{u_1,\ldots,u_m\}$ and $\FF = \{S_1, \ldots, S_t\}$. For the purpose of this proof, we assume (without loss of generality) that $m \geq 6$. We now construct an instance $(\CC,V,c)$ of \textsc{Possible winner} as follows. 

\paragraph*{Candidates}~$\CC := \UU \uplus \{z, c, d, w\}$.

\paragraph*{Partial Votes, $P$}~The first part of the voting profile comprises of $m$ partial votes, and will be denoted by $P$. For each $i\in[t]$, we first consider a profile built on a total order:
\[\eta_i := \UU \setminus S_i \succ z \succ c \succ d \succ S_i \succ w\]
Now we obtain a partial order $\lambda_i$ based on $\eta_i$ as follows for each $i\in[t]$:
\[\lambda_i := \eta_i \setminus \left( \{z,c\} \times \left( S_i \uplus \{d,w\} \right)\right)\]
The profile $P$ consists of $\{ \lambda_i ~|~ i\in[t] \}$.

\paragraph*{Complete Votes, $Q$}~We now describe the remaining votes in the profile, which are linear orders designed to achieve specific pairwise difference scores among the candidates. This profile, denoted by $Q$, is defined according to~\Cref{thm:mcgarvey} to contain votes such that the pairwise score differences of $P \cup Q$ satisfy the following.

\begin{itemize}
\item $D(c,d) = t - 2k + 1$
\item $D(z,w) = t - 2k - 1$
\item $D(c,u_i) = t - 1$
\item $D(c,z) = t+1$ 
\item $D(c,w) = -t-1$ 
\item $D(u_i,d) = D(z,u_i) = t+1$ $\forall$ $i\in[m]$
\item $D(z,d) = t+1$
\item $D(u_i,u_j) = t+1$ $\forall$ $j \in [i+1\pmod* m,i+\lfloor m/2 \rfloor\pmod* m]$
\end{itemize}

We note that the difference $|D(c,c') - D_P(c,c')|$ is always even for all $c,c' \in \CC$, as long as $t$ is odd and the number of sets in $\FF$ that contain any element $a \in \UU$ is always odd. Note that the latter can always be ensured without loss of generality: indeed, if $a \in \UU$ occurs in an even number of sets, then we can always add the set $\{a\}$ if it is missing and remove it if it is present, flipping the parity in the process. In case $\{a\}$ is the only set containing the element $a$, then we remove the set from both $\FF$ and $\UU$ and decrease $k$ by one. The number of sets $t$ can be assumed to be odd by adding a dummy element in $\UU$, adding a dummy set that contains the said element in $\FF$, and incrementng $k$ by one. It is easy to see that these modifications always preserve the instance. 

Thus the constructed instance of \textsc{Possible winner} is $(\CC,V,c)$, where $V := P \cup Q$. We now turn to the proof of correctness. 

In the forward direction, let $\HH \subseteq \FF$ be a set cover of size at most $k$. Without loss of generality, let $|\HH| = k$ (since a smaller set cover can always be extended artificially) and let $\HH = \{S_1, \ldots, S_k\}$ (by renaming).  

If $i \leq k$, let: 
\[\lambda_i^* := \UU \setminus S_i \succ z \succ c \succ d \succ S_i \succ w\]

If $k < i \leq t$, let: 
\[\lambda_i^* := \UU \setminus S_i \succ d \succ S_i \succ w \succ z \succ c\]
Clearly $\lambda_i^*$ extends $\lambda_i$ for every $i\in[t]$. Let $V^*$ denote the extended profile consisting of the votes $\{ \lambda_i^* ~|~ i\in[t]\} \cup Q$. We now claim that $c$ is the unique winner with respect to the Copeland voting rule in $V^*$. 

First, consider the candidate $z$. For every $i\in[m]$, between $z$ and $u_i$, even if $z$ loses to $u_i$ in $\lambda_j^*$, for every $j\in[t]$, because $D(z,u_i) = t+1$, $z$ wins the pairwise election between $z$ and $u_i$. The same argument holds between $z$ and $d$. Therefore, the Copeland score of $z$, no matter how the partial votes were extended, is at least $(m+1)$. 

Further, note that all other candidates (apart from $c$) have a Copeland score of less than $(m+1)$, because they are guaranteed to lose to at least three candidates (assuming $m \geq 6$). In particular, observe that $u_i$ loses to at least $\lfloor m/2 \rfloor$ candidates, and $d$ loses to $u_i$ (merely by its position in the extended votes), and $w$ loses to $u_i$ (because of way the scores were designed) for every $i\in[m]$. Therefore, the Copeland score of all candidates in $\CC\setminus \{z,c\}$ is strictly less than the Copeland score of $z$, and therefore they cannot be possible (co-)winners.

Now we restrict our attention to the contest between $z$ and $c$. First note that $c$ beats $u_i$ for every $i\in[m]$: since the sets of $\HH$ form a set cover, $u_i$ is placed in a position after $c$ in some $\lambda^*_j$ for $j\in[k]$. Since the difference of score between $c$ and $u_i$ was $(t-1)$, even if $c$ suffered defeat in every other extension, we have the pairwise score of $c$ and $u_i$ being at least $t-1 - (t-1) + 1 = 1$, which implies that $c$ defeats every $u_i$ in their pairwise election. Note that $c$ also defeats $d$ by getting ahead of $d$ in $k$ votes, making its final score $t - 2k + 1 + k - (t-k) = 1$. Finally, $c$ is defeated by $w$, simply by the preset difference score. Therefore, the Copeland score of $c$ is $(m+2)$.

Now all that remains to be done is to rule $z$ out of the running. Note that $z$ is defeated by $w$ in their pairwise election: this is because $z$ defeats $w$ in $k$ of the extended votes, and is defeated by $w$ in the remaining. This implies that its final pairwise score with respect to $w$ is at most $t - 2k - 1 + k - (t-k) = -1$. Also note that $z$ loses to $c$ because of its predefined difference score. Thus, the Copeland score of $z$ in the extended vote is exactly $(m+1)$, and thus $c$ is the unique winner of the extended vote. 

We now turn to the reverse direction. Let $P^*$ be an extension of $P$ such that $V^* := P^* \cup Q$ admits $c$ as a unique winner with respect to the Copeland voting rule. As with the argument for the maximin voting rule, we first argue that $P^*$ must admit a certain structure, which will lead us to an almost self-evident set cover for $\UU$.

Let us denote by $P^*_C$ the set of votes in $P^*$ which are consistent with $c \succ d$, and let $P^*_W$ be the set of votes in $P^*$ which are consistent with $w \succ z$. Note that the votes in $P^*_C$ necessarily have the form:
\[\lambda_i^* := \UU \setminus S_i \succ z \succ c \succ d \succ S_i \succ w\]
and those in $P^*_W$ have the form: 
\[\lambda_i^* := \UU \setminus S_i \succ d \succ S_i \succ w \succ z \succ c\]
It is easy to check that this structure is directly imposed by the relative orderings that are fixed by the partial orders.

Before we argue the details of the scores, let us recall that in any extension of $P$, $z$ loses to $c$ and $z$ wins over $d$ and all candidates in $\UU$. Thus the Copeland score of $z$ is at least $(m+1)$. On the other hand, in any extension of $P$, $c$ loses to $w$, and therefore the Copeland score of $c$ is at most $(m+2)$. (These facts follow from the analysis in the forward direction.) 

Thus, we have the following situation. If $z$ wins over $w$, then $c$ cannot be the unique winner in the extended vote, because the score of $z$ goes up to $(m+2)$. Similarly, $c$ cannot afford to lose to any of $\UU \cup \{d\}$, because that will cause its score to drop below $(m+2)$, resulting in either a tie with $z$, or defeat. These facts will successively lead us to the correctness of the reverse direction. 

Now let us return to the sets $P^*_C$ and $P^*_W$. If $P^*_C$ has more than $k$ votes, then $z$ wins over $w$: the final score of $z$ is at least $t - 2k - 1 + (k+ 1) - (t-k-1) = 1$, and we have a contradiction. If $P^*_C$ has fewer than $k$ votes, then $c$ loses to $d$, with a score of at most $t - 2k + 1 + (k-1) - (t-k+1) = -1$, and we have a contradiction. Hence, $P^*_C$ must have exactly $k$ votes.

Finally, suppose the sets corresponding to the votes of $P^*_C$ do not form a set cover. Consider an element $u_i\in\UU$ not covered by the union of these sets. Observe that $c$ now loses the pairwise election between itself and $u_i$ and is no longer in the running for being the unique winner in the extended vote. Therefore, the sets corresponding to the votes of $P^*_C$ form a set cover of size exactly $k$, as desired.
\end{proof}

\subsection{Results for the Bucklin Voting Rule}

We now describe the result for the \textsc{Possible winner} problem for the Bucklin voting rule parameterized by the number of candidates.

\begin{theorem}
\label{thm:npk_bucklin}
The \textsc{Possible winner} problem for the Bucklin voting rule, when parameterized by the number 
of candidates, does not admit a polynomial kernel unless~\caveat{}.
\end{theorem}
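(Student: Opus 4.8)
The plan is to establish the result by a polynomial parameter transformation from \textsc{Small universe set cover}, mirroring the template used in \Cref{thm:npk_maximin,thm:npk_copeland}. Given an instance $(\UU,\FF,k)$ with $\UU=\{u_1,\dots,u_m\}$ and $\FF=\{S_1,\dots,S_t\}$, I would build a candidate set $\CC := \UU \uplus \{c,d\} \uplus D$, where $c$ is the distinguished candidate and $D$ is a small block of padding candidates used to place the ``majority boundary'' at a controllable prefix length; the total number of candidates stays polynomial in $m$, which is exactly what a parameter-preserving reduction requires. As before, the profile splits into $t$ partial votes $P=\{\lambda_i\}$, one per set, together with a block $Q$ of complete votes.

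Each partial vote $\lambda_i$ would be obtained from a total order $\eta_i$ in which the elements of $S_i$ and the candidate $c$ straddle a fixed critical position $\ell^\star$, by deleting precisely the comparisons between $c$ and the elements of $S_i$. Thus an extension of $\lambda_i$ is free to decide whether $c$ sits \emph{above} the $S_i$-elements (a ``promotion'', which moves $c$ into the top $\ell^\star$ of that vote and simultaneously pushes each $u\in S_i$ out of the top $\ell^\star$) or \emph{below} them. The complete votes $Q$ would then be calibrated so that, at level $\ell^\star$: (i) no candidate holds a majority within any shorter prefix, regardless of how $P$ is extended; (ii) $c$ lies exactly $k$ votes short of the strict majority $n/2$, so that $c$ reaches majority at level $\ell^\star$ iff it is promoted in at least $k$ votes; and (iii) each element $u_i$ already sits at the majority boundary, so that $u_i$ fails to reach majority at level $\ell^\star$ only if it is pushed down in at least one vote.

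Given this scaffolding, correctness would follow in the usual two directions. Forward: from a cover $\{S_{i_1},\dots,S_{i_k}\}$, promote $c$ in exactly those $k$ votes and leave $c$ low elsewhere; then $c$ crosses $n/2$ at level $\ell^\star$, while every $u_i$, being covered, is suppressed in some promoted vote and hence stays at or below the boundary, making $c$ the unique Bucklin winner. Reverse: in any extension making $c$ the unique winner, a budget argument forces at most $k$ promotions (additional promotions would let $d$ or a padding candidate tie $c$ at level $\ell^\star$, exactly as the ``$d$'' and ``$w$'' gadgets play off against $c$ in the maximin proof), while any element left uncovered by the promoted sets would retain its boundary majority and tie $c$, a contradiction; hence the promoted sets form a cover of size at most $k$.

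The main obstacle I anticipate is the construction and analysis of $Q$. Unlike scoring rules, where a single lemma (\Cref{score_gen}) pins down candidate scores, and unlike maximin and Copeland, where \Cref{thm:mcgarvey} realizes prescribed pairwise margins, the Bucklin winner is governed by \emph{nested prefix-majority counts}: a single complete vote contributes to the top-$\ell$ tallies of all $\ell$ simultaneously, so these counts cannot be set independently. The delicate part is therefore a counting gadget that at once (a) forces the critical level to be exactly $\ell^\star$, (b) guarantees no premature majority at levels below $\ell^\star$ across \emph{all} extensions of $P$, and (c) places $c$ and every $u_i$ within a single vote of the boundary at level $\ell^\star$, all while keeping the number of votes polynomial and the relevant parities consistent. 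The padding block $D$ is the tool for decoupling these constraints, and verifying that it can be sized and positioned so as to meet (a)--(c) simultaneously is where the bulk of the technical care will go.
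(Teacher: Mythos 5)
Your plan follows the paper's proof of \Cref{thm:npk_bucklin} in all essentials: a polynomial parameter transformation from \textsc{Small universe set cover}, one partial vote per set in which only the comparisons between a small promotion block containing $c$ and the elements of $S_i$ are left undetermined, complete votes calibrated so that $c$ is exactly $k$ promotions short of a majority at a critical prefix length, and the two-directional correctness argument you sketch. The gap is exactly the one you flag yourself: you list the properties the complete profile $Q$ must have without constructing it, and two of those properties are in fact discharged by mechanisms different from the ones you propose. First, the cap of $k$ on the number of promotions is not enforced by $d$ or a padding candidate tying $c$; it is enforced by a companion candidate $z$ placed \emph{immediately above} $c$ in every $\eta_i$ and included in the undetermined block, so that any extension that pulls $c$ into the critical prefix necessarily pulls $z$ in as well. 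The complete votes give $z$ a head start of $t-k-1$ appearances at the critical level (via $t-k-1$ copies of $D_1 \succ z \succ c \succ \cdots$ with a padding block $D_1$ of size $m$), so more than $k$ promotions hand $z$ a majority at an earlier level and $c$ is no longer the unique winner. Second, the elements $u_i$ need no calibration from $Q$ at all: every $u_i$ occupies the top $m$ positions of each of the $t$ partial votes, and with the voter count fixed at $2t-1$ this is already a majority unless $u_i$ is demoted in at least one extension --- demotion being achieved by a block $D_3$ of $2m$ dummies placed right after $c$ in $\eta_i$, so that pushing $S_i$ below $c$ pushes it far outside the critical prefix.

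The remaining bookkeeping (one vote $D_1 \succ c \succ a \succ z \succ \cdots$ and $k-1$ votes headed by a disjoint block $D_2$) pins down the majority threshold and the exact level at which $c$ first crosses it. In particular, no general prefix-count realization lemma analogous to \Cref{score_gen} or \Cref{thm:mcgarvey} is needed: the ``delicate counting gadget'' you anticipate reduces to three explicit families of complete votes whose heads are pairwise-disjoint dummy blocks, precisely so that their contributions to the nested prefix tallies do not interact. Until you commit to such a concrete $Q$ and to the companion-candidate device (or an equivalent one) that bounds the number of promotions, the reverse direction of your argument is not yet established.
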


\begin{proof}
Let $(\UU,\FF,k)$ be an instance of \textsc{Small universe set cover}, where $\UU = \{u_1,\ldots,u_m\}$ and 
$\FF = \{S_1, \ldots, S_t\}$. Without loss of generality, we assume that $t>k+1$, and that every set in $\FF$ has at 
least two elements. We now construct an instance $(\CC,V,c)$ of \textsc{Possible winner} as follows. 

\paragraph*{Candidates}~$\CC := \UU \uplus \{z, c, a\} \uplus W \uplus D_1 \uplus D_2 \uplus D_3$, where $D_1$, $D_2$, and $D_3$ are sets of \textit{``dummy candidates''} such that $|D_1|=m$, $|D_2|=2m$, and $|D_3|= 2m$. $W:=\{w_1, w_2, \dots, w_{2m} \}$.

\paragraph*{Partial Votes, $P$}~The first part of the voting profile comprises of $t$ partial votes, and will be denoted by $P$. For each $i\in[t]$, we first consider a profile built on a total order:
\begin{equation*}
\eta_i := \UU \setminus S_i \succ S_i \succ w_{i\pmod* m} \succ w_{i+1\pmod* m} \succ z \succ c \succ D_3 \succ \text{others}
\end{equation*}
Now we obtain a partial order $\lambda_i$ based on $\eta_i$ for every $i\in[t]$ as follows:
\[\lambda_i := \eta_i \setminus \left(\left( \{w_{i\pmod* m}, w_{i+1\pmod* m}, z,c\} \uplus D_3\right) \times S_i \right)\]
The profile $P$ consists of $\{ \lambda_i ~|~ i\in[t] \}$.

\paragraph*{Complete Votes, $Q$} 

\begin{eqnarray*}
 t-k-1 &:& D_1 \succ z \succ c \succ \text{others} \\
 1 &:& D_1 \succ c \succ a \succ z \succ \text{others} \\
 k-1 &:& D_2 \succ \text{others}
\end{eqnarray*}

We now show that $(\UU,\FF,k)$ is a \YES{} instance if and only if $(\CC,V,c)$ is a \YES{} instance. 
Suppose $\{S_j : j\in J\}$ forms a set cover. Then consider the following extension of $P$: 
\begin{equation*}
 (\UU \setminus S_j) \succ w_{j\pmod* m} \succ w_{j+1\pmod* m} \succ z \succ c \succ D_3 \succ S_j \succ \text{others}, \text{ for } j\in J
\end{equation*}
\begin{equation*}
 (\UU \setminus S_j) \succ S_j \succ w_{j\pmod* m} \succ w_{j+1\pmod* m} \succ z \succ c \succ D_3 \succ \text{others}, \text{ for } j\notin J
\end{equation*}
We claim that in this extension, $c$ is the unique winner with Bucklin score $(m+2)$. First, let us establish the score of $c$. The candidate $c$ is already within the top $(m+1)$ choices in $(t-k)$ of the complete votes. In all the sets that form the set cover, $c$ is ranked within the first $(m+2)$ votes in the proposed extension of the corresponding vote (recall that every set has at least two elements). Therefore, there are a total of $t$ votes where $c$ is ranked within the top $(m+2)$ preferences. Further, consider a candidate $v \in \UU$. Such a candidate is not within the top $(m+2)$ choices of any of the complete votes. Let $S_i$ be the set that covers the element $v$. Note that in the extension of the vote $\lambda_i$, $v$ is not ranked among the top $(m+2)$ spots, since there are at least $m$ candidates from $D_3$ getting in the way. Therefore, $v$ has strictly fewer than $t$ votes where it is ranked among the top $(m+2)$ spots, and thus has a Bucklin score more than $c$.

Now the candidate $z$ is within the top $(m+2)$ ranks of at most $(t-k-1)$ votes among the complete votes. In the votes corresponding to the sets \emph{not} in the set cover, $z$ is placed beyond the first $(m+2)$ spots. Therefore, the number of votes where $z$ is among the top $(m+2)$ candidates is at most $(t-1)$, which makes its Bucklin score strictly more than $(m+2)$. 

The candidates from $W$ are within the top $(m+2)$ positions only in a constant number of votes. The candidates $D_1 \cup \{a\}$ have $(t-k)$ votes (among the complete ones) in which they are ranked among the top $(m+2)$ preferences, but in all extensions, these candidates have ranks below $(m+2)$. Finally, the candidates in $D_3$ do not feature in the top $(m+2)$ positions of any of the complete votes, and similarly, the candidates in $D_2$ do not feature in the top $(m+2)$ positions of any of the extended votes. Therefore, the Bucklin scores of all these candidates is easily seen to be strictly more than $(m+2)$, concluding the argument in the forward direction. 

Now consider the reverse direction. Suppose $(\CC,V,c)$ is a \YES{} instance. For the same reasons described in the forward direction, observe that only the following candidates can win depending upon how the partial 
preferences get extended - either one of the candidates in $\UU$, or one of $z$ or $c$. Note that the Bucklin score of $z$ in any extension is at most $(m+3)$. Therefore, the Bucklin score of $c$ has to be $(m+2)$ or less. Among the complete votes $Q$, there are $(t-k)$ votes where the candidate $c$ appears in the top $(m+2)$ positions. To get majority within top $(m+2)$ positions, $c$ should be within top $(m+2)$ positions for at least $k$ of the extended votes in $P$.
Let us call these set of votes $P^{\prime}$. Now notice that whenever $c$ comes within top 
$(m+2)$ positions in a valid extension of $P$, the candidate $z$ also comes within top $(m+2)$ positions in the same vote. However, the candidate $z$ is already ranked among the top $(m+2)$ candidates in $(t-k-1)$ complete votes. Therefore, $z$ can appear within top $(m+2)$ positions in \emph{at most} $k$ extensions (since $c$ is the unique winner), implying that $|P^{\prime}|=k$. Further, note that the Bucklin score of $c$ cannot be strictly smaller than $(m+2)$ in any extension. Indeed, candidate $c$ features in only one of the complete votes within the top $(m+1)$ positions, and it would have to be within the top $(m+1)$ positions in at least $(t-1)$ extensions. However, as discussed earlier, this would give $z$ exactly the same mileage, and therefore its Bucklin score would be $(m-1)$ or even less; contradicting our assumption that $c$ is the unique winner. 

Now we claim that the $S_i$'s corresponding to the votes in $P^{\prime}$ form a set cover for $\UU$. If not, there is an element $x\in \UU$ that is uncovered. Observe that $x$ appears within top $m$ positions in all the extensions of the votes in $P^\prime$, by assumption. Further, in all the remaining extensions, since $z$ is not present among the top $(m+2)$ positions, we only have room for two candidates from $W$. The remaining positions must be filled by all the candidates corresponding to elements of $\UU$. Therefore, $x$ appears within the top $(m+2)$ positions of all the extended votes. Since these constitute half the total number of votes, we have that $x$ ties with $c$ in this situation, a contradiction. 
\end{proof}

\subsection{Results for the Ranked Pairs Voting Rule}

We now describe the reduction for  \textsc{Possible winner} parameterized by the number of candidates, for the ranked pairs voting rule.

\begin{theorem}
\label{thm:npk_rankedpairs}
The \textsc{Possible winner} problem for the ranked pairs voting rule, when parameterized by the 
number of candidates, does not admit a polynomial kernel unless~\caveat{}.
\end{theorem}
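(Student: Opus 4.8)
The plan is to give a polynomial parameter transformation from \textsc{Small universe set cover} to \textsc{Possible winner} for the ranked pairs rule, exactly in the spirit of \Cref{thm:npk_maximin,thm:npk_copeland}. Given an instance $(\UU, \FF, k)$ with $\UU = \{u_1, \ldots, u_m\}$ and $\FF = \{S_1, \ldots, S_t\}$, I would build an election whose candidate set $\CC$ consists of the element candidates $\UU$ together with a constant-sized gadget (a designated candidate $c$, a ``rival'' $w$, and a few auxiliary candidates used to pin down the lock-in order), so that $|\CC| = O(m)$ is polynomial in the universe size while the number of partial votes may be $\Theta(t)$. Since the parameter is the number of candidates, this is a valid PPT, and by \Cref{thm:ppt-reduction} together with the kernel lower bound for \textsc{Small universe set cover} from \cite{DomLokSau2009}, it yields the claimed hardness unless \caveat.

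As before, the votes split into partial votes $P$ and complete votes $Q$. The profile $P$ has one partial vote $\lambda_i$ per set $S_i$; in $\lambda_i$ the relative order of $c$ (and of $w$) against the elements of $S_i$ is left unspecified, so that a linear extension of $\lambda_i$ encodes the binary choice of whether $S_i$ is ``selected'' into the cover. The complete votes $Q$ are produced by McGarvey's construction (\Cref{thm:mcgarvey}) to realize a prescribed table of pairwise margins $D(\cdot,\cdot)$. I would lay out the margins in a strict hierarchy: a ``backbone'' of very large margins among the gadget candidates is locked in first by ranked pairs and fixes an almost total order on $\CC\setminus\{c\}$ regardless of how $P$ is extended; the decisive margins $D(c,u_i)$ and $D(c,w)$ are tuned so that $c$ acquires a \emph{positive} margin over $u_i$ precisely when some selected set covers $u_i$, and so that the rival $w$ overtakes $c$ whenever more than $k$ sets are selected. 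In effect, covering each $u_i$ makes the edge $c \to u_i$ lockable, while the budget $k$ is enforced by $w$, whose standing improves by a controlled amount each time a set is selected. As in the maximin and Copeland reductions, I would first normalize $(\UU,\FF,k)$ so that all realized margins satisfy the parity hypothesis of \Cref{thm:mcgarvey}, by adding and deleting singleton sets and a dummy element/set pair, which preserves the instance.

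For correctness I would argue both directions. In the forward direction, a set cover of size $k$ gives an extension of $P$ in which exactly the $k$ covering sets are selected; one checks that in this extension every $D(c,u_i)$ and $D(c,w)$ is positive, so $c$ beats every other candidate pairwise, making $c$ the Condorcet winner and hence, by Condorcet consistency of ranked pairs, the unique winner. The rule-specific content lies in the reverse direction: from any extension for which $c$ wins I would read off the selected sets (those whose $\lambda_i$ ranks $c$ above $S_i$), use the $w$-budget argument to show at most $k$ sets are selected, and argue that if some $u_i$ were uncovered then $D(c,u_i)$ stays non-positive, so the edge $u_i \to c$ locks in before it can be blocked and, combined with the backbone, forces some candidate above $c$ in \emph{every} parallel universe, contradicting that $c$ is the ranked pairs winner.

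The main obstacle is the order-sensitivity of ranked pairs together with parallel-universes tie-breaking: unlike score-based rules, the winner depends on the sequence in which edges are locked and on how ties among equal margins are resolved. The crux is therefore to design the margin hierarchy with gaps large enough that the backbone edges are always strictly heavier than every decisive edge, and that the decisive edges themselves are separated in weight, so that their relative processing order is fixed independently of the extension of $P$. One must then verify that no extension can introduce a spurious heavy cycle that either lets a non-$c$ candidate become the source or lets $c$ survive despite an uncovered element—that is, that the only freedom the adversary has is exactly the selection-and-budget choice we intend. Getting these gap conditions and the cycle-blocking analysis precisely right, while keeping $|\CC| = O(m)$, is the delicate part of the proof.
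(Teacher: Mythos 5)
Your overall skeleton---a polynomial parameter transformation from \textsc{Small universe set cover} with candidate set $\UU$ plus a constant-sized gadget, one partial vote per set encoding selection, and McGarvey-style complete votes realizing a prescribed margin table, combined with \Cref{thm:ppt-reduction}---is exactly the paper's. The gap lies in the mechanism you propose for the rule-specific part. You want the forward direction to go through Condorcet consistency ($c$ beats everyone pairwise once a cover is selected) \emph{and} you want ``the decisive edges themselves separated in weight, so that their relative processing order is fixed independently of the extension.'' These two goals cannot coexist with a working reverse direction. Any margin that must change sign as a function of the extension (your $D(c,u_i)$, which flips when $u_i$ becomes covered, and your $D(c,w)$, which flips when the budget is exceeded) is necessarily small in absolute value near its threshold, and its magnitude there depends on the extension (e.g., on how many selected sets happen to contain $u_i$); so the relative processing order of these edges is inherently extension-dependent and cannot be pinned down in advance. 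Worse, suppose you did fix the order: if the protective edge $c\to w$ is always heavier than every $u_i\to c$, then $c\succ w\succ u_i$ locks first and shields $c$ from an uncovered $u_i$ in \emph{every} universe, killing the reverse direction; fix it the other way and you must still rule out transitive shielding of an uncovered $u_i$ through the backbone and through covered elements (a heavy locked edge $w\succ u_i$ or $u_j\succ u_i$ together with an already-locked $c\succ w$ or $c\succ u_j$ blocks $u_i\succ c$ just as effectively). You flag ``cycle-blocking'' as the delicate part, but the real danger is not a spurious heavy cycle: it is that your own separation principle manufactures exactly the transitive protection that the reverse direction must exclude. (Your budget mechanism is also underspecified: for $D(c,w)$ to degrade with each selection, the pair $(c,w)$ or a proxy must be coupled to the selection inside each partial vote, which your construction does not arrange.)

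The paper's construction does the opposite on both counts, and that is worth internalizing. Its distinguished candidate is deliberately \emph{not} a Condorcet winner: with gadget $\{a,b,c,w\}$, the margin $D(a,c)=t+2$ is positive in every extension, and $c$ survives only because $c\succ b\succ a$ is locked earlier. Selecting a set means lifting $a\succ c$ above $b$ in that partial vote, which erodes $D(b,a)$ from $2k+4$; staying within budget keeps $D(b,a)\ge t+4>t+2$ so that $b\to a$ locks before $a\to c$, while exceeding the budget drops $D(b,a)$ to exactly $t+2$, creating a \emph{tie} with $D(a,c)$ that lets some parallel universe lock $a\succ c$ first. Coverage is enforced the same way: $D(c,w)=t+2$ and $D(w,u_i)=4t$ protect $c$ via $c\succ w\succ u_i$ as long as every $D(u_i,c)$ stays strictly below $t+2$, and an uncovered element pushes $D(u_i,c)$ up to meet $t+2$, again breaking the protection only through a tie. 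The correctness of the reduction hinges on these engineered \emph{equalities} between decisive margins---precisely the configurations your ``separated in weight, fixed processing order'' principle is designed to prevent. To repair your plan you would either have to abandon that principle and adopt the tie-based mechanism, or carry out the unaddressed program of eliminating every backbone edge that could transitively shield $c$; as written, the proposal would not compile into a correct proof.
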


\begin{proof}
Let $(\UU,\FF,k)$ be an instance of \textsc{Small universe set cover}, where $\UU = \{u_1,\ldots,u_m\}$ and 
$\FF = \{S_1, \ldots, S_t\}$. Without loss of generality, we assume that $t$ is even. We now construct an 
instance $(\CC,V,c)$ of \textsc{Possible winner} as follows. 

\paragraph*{Candidates}~$\CC := \UU \uplus \{a, b, c, w \}$.

\paragraph*{Partial Votes, $P$}~The first part of the voting profile comprises of $t$ partial votes, and will 
be denoted by $P$. For each $i\in[t]$, we first consider a profile built on a total order:
\[\eta_i := \UU \setminus S_i \succ S_i \succ b \succ a \succ c \succ \text{others}\]
Now we obtain a partial order $\lambda_i$ based on $\eta_i$ for every $i\in[t]$ as follows:
\[\lambda_i := \eta_i \setminus \left(\{ a, c\} \times \left( S_i \uplus \{ b \} \right)\right)\]
The profile $P$ consists of $\{ \lambda_i ~|~ i\in[t] \}$.

\paragraph*{Complete Votes, $Q$} We add complete votes such that along with the already determined 
pairs from the partial votes $P$, we have the following.

\begin{itemize}
\item $D(u_i, c) = 2$ $\forall$ $i\in[m] $
\item $D(c,b) = 4t$
\item $D(c,w) = t+2$
\item $D(b,a) = 2k + 4$ 
\item $D(w,a) = 4t$ 
\item $D(a,c) = t+2$
\item $D(w, u_i) = 4t$ $\forall$ $i\in[m]$ 
\end{itemize}

We now show that $(\UU,\FF,k)$ is a \YES{} instance if and only if $(\CC,V,c)$ is a \YES{} instance. 
Suppose $\{S_j : j\in J\}$ forms a set cover. Then consider the following extension of $P$ : 
\[ \UU \setminus S_j \succ a \succ c \succ S_j \succ b \succ \text{others}~ \forall j\in J\]
\[ \UU \setminus S_j \succ S_j \succ b \succ a \succ c \succ \text{others}~ \forall j\notin J\]
We claim that the candidate $c$ is the unique winner in this extension. Note that the pairs $(w \succ a)$ and  $(w \succ u_i)$ for every $i\in[t]$ get locked first (since these differences are clearly the highest and unchanged). The pair $(c,b)$ gets locked next, with a difference score of $(3t + 2k)$. Now since the votes in which $c \succ b$ are based on a set cover of size at most $k$, the pairwise difference between $b$ and $a$ becomes at least $2k + 4 - k + (t - k) = t + 4$. Therefore, the next pair to get locked is $b \succ a$. Finally, for every element $u_i \in \UU$, the difference $D(u_i,c)$ is at most $2 + (t-1) = t+1$, since there is at least one vote where $c \succ u_i$ (given that we used a set cover in the extension). It is now easy to see that the next highest pairwise difference is between $c$ and $w$, so the ordering $c \succ w$ gets locked, and at this point, by transitivity, $c$ is superior to $w, b, a$ and all $u_i$. It follows that $c$ wins the election irrespective the sequence in which pairs are considered subsequently.

Now suppose $(\CC,V,c)$ is a \YES{} instance. Notice that, irrespective of the extension of the votes in $P$, $c \succ b, w \succ a, w \succ u_i ~\forall i\in[m]$ are locked first. Now if $b \succ c$ in all the extended votes, then it is easy to check that $b \succ a$ gets locked next, with a difference score of $2k+4+t$; leaving us with $D(u_i,c) = t+2 = D(c,w)$, where $u_i \succ c$ could be a potential lock-in. This implies the possibility of a $u_i$ being a winner in some choice of tie-breaking, a contradiction to the assumption that $c$ is the unique winner. Therefore, there are at least some votes in the extended profile where $c \succ b$. We now claim that there are at most $k$ such votes. Indeed, if there are more, then $D(b,a) = 2k + 4 - (k+1) + (t-k-1) = t + 2$. Therefore, after the forced lock-ins above, we have $D(b,a) = D(c,w) = D(a,c) = t + 2$. Here, again, it is possible for $a \succ c$ to be locked in before the other choices, and we again have a contradiction.  

Finally, we have that $c \succ b$ in at most $k$ many extensions in $P$. Call the set of indices of these extensions $J$. We claim that $\{ S_j : j\in J \}$ forms a set cover. If not, then suppose an element $u_i\in \UU$ is not covered by $\{ S_j : j\in J \}$. Then the candidate $u_i$ comes before $c$ in all the extensions which makes $ D(u_i,c) $ become $(t+2)$, which in turn ties with $D(c,w)$. This again contradicts the fact that $c$ is the unique winner. Therefore, if there is an extension that makes $c$ the unique winner, then we have the desired set cover. 
\end{proof}

\section{Polynomial Kernels for the Coalitional Manipulation Problem}\label{sec:easy}

We now describe a kernelization algorithm for every scoring rule which satisfies certain properties mentioned in \Cref{thm:pk_borda} below. Note that the Borda voting rule satisfies these properties.

\begin{theorem}
\label{thm:pk_borda} 
 For $m\in \mathbb{N}$, let $(\alpha_1, \ldots, \alpha_m)$ and $(\alpha_1^{\prime}, \ldots, \alpha_{m+1}^{\prime})$ be the normalized score vectors for a scoring rule $r$ for an election with $m$ and $(m+1)$ candidates respectively. Let $\alpha_1^{\prime} = poly(m)$ and $\alpha_i = \alpha_{i+1}^{\prime}$ for every $i\in[m]$. Then the \textsc{Coalitional manipulation} problem for $r$ admits a polynomial kernel 
 when the number of manipulators is $poly(m)$.
\end{theorem}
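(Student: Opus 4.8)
The plan is to show that for a scoring rule $r$ satisfying the stated hypotheses, the only feature of the non-manipulator profile $\VV$ that matters is the vector of scores it induces on the candidates, and moreover that only a polynomially bounded window of these scores is actually relevant. First I would invoke the standard fact that for any scoring rule an optimal manipulation places $c$ at the top of every manipulator vote; hence $c$'s final score is fixed at $S_c = s(c) + t\alpha_1$, where $s(a)$ denotes the $\vec\alpha$-score that $a$ receives from $\VV$. With $c$ fixed at the top, each manipulator vote distributes the scores $\alpha_2, \dots, \alpha_m$ among the $m-1$ remaining candidates, and the instance is a \YES{} instance precisely when these $t$ distributions can be chosen so that every $a \neq c$ receives total manipulator score strictly less than the gap $g(a) := S_c - s(a)$. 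Computing all the $s(a)$, and hence all $g(a)$, takes time polynomial in the input size.

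Next I would argue that only gaps in a polynomial range are relevant. Since a candidate $a \neq c$ receives at most $\alpha_2$ from each of the $t$ manipulator votes, its total manipulator contribution never exceeds $t\alpha_2$. Consequently: if $g(a) \leq 0$ the instance is trivially \NO{} (such a candidate already ties or beats $c$); if $g(a) > t\alpha_2$ then $a$ can never reach $S_c$ and its constraint is vacuous, so we may replace $s(a)$ by the value making $g(a) = t\alpha_2 + 1$ without changing the answer. After this capping every relevant gap lies in $[1, t\alpha_2+1]$. Using $\alpha_1^{\prime} = poly(m)$ together with $\alpha_i = \alpha_{i+1}^{\prime}$ (which gives $\alpha_i = \alpha_{i+1}^{\prime} \leq \alpha_1^{\prime}$, so that every entry of $\vec\alpha$ is $poly(m)$) and $t = poly(m)$, all capped gaps are bounded by $poly(m)$.

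It remains to realize the capped score profile by a small profile. I would add a single fresh dummy candidate $d$, so that the kernelized instance has $m+1$ candidates and uses the length-$(m+1)$ vector $\vec\alpha^{\prime}$, and invoke \Cref{score_gen} with $D = \{d\}$ to build a profile $\VV^{\prime}$ that induces, on the original candidates, scores whose pairwise differences equal the capped targets, while keeping the dummy's score $s_{\VV^{\prime}}(d)$ arbitrarily small. Because the targets differ by only $poly(m)$ and there are $m+1$ candidates, the quantity $\sum_i |X_i|$ fed to \Cref{score_gen} is $poly(m)$, so $\VV^{\prime}$ has $poly(m)$ votes; together with the $t = poly(m)$ manipulators and the distinguished candidate $c$, this yields a kernel of total size $poly(m)$, constructed in polynomial time.

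The crux, and the step I expect to be the main obstacle, is proving that this $(m+1)$-candidate instance is equivalent to the original $m$-candidate one, and this is exactly where the condition $\alpha_i = \alpha_{i+1}^{\prime}$ is used. The point is that when the manipulators place $c$ first (score $\alpha_1^{\prime}$) and $d$ second (score $\alpha_2^{\prime} = \alpha_1$) in every vote, the remaining $m-1$ real candidates occupy positions $3, \dots, m+1$ and receive the scores $\alpha_3^{\prime}, \dots, \alpha_{m+1}^{\prime} = \alpha_2, \dots, \alpha_m$, which are precisely the scores of the original $m$-candidate scheduling problem. I would show that placing $d$ at position $2$ is without loss of generality: since $s_{\VV^{\prime}}(d)$ is tiny, the dummy's final score $s_{\VV^{\prime}}(d) + t\alpha_1$ stays below $c$'s final score $s_{\VV^{\prime}}(c) + t\alpha_1^{\prime}$ no matter where $d$ is placed (using $\alpha_1^{\prime} \geq \alpha_1$), so $d$ is never a threat; and moving $d$ up to position $2$ in any successful manipulation only pushes real candidates downward, weakly decreasing their scores while keeping $c$ on top. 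Hence a successful $(m+1)$-candidate manipulation can be converted into one with $d$ always second, which upon deleting $d$ gives a successful original manipulation, and conversely inserting $d$ into position $2$ of an original manipulation gives an $(m+1)$-candidate one; by the choice of targets the gap constraints are identical on both sides. This establishes the equivalence and completes the kernelization. The routine parts are the score bookkeeping and the application of \Cref{score_gen}; the delicate part is the position-$2$ absorption argument, which hinges on the prefix relation between $\vec\alpha$ and $\vec\alpha^{\prime}$.
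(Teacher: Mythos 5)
Your proof follows the paper's argument almost exactly: fix $c$ at the top of every manipulator vote, observe that only a $\mathrm{poly}(m)$-sized window of score gaps is relevant, cap the vacuous ones, re-realize the capped score profile via \Cref{score_gen} using a single dummy candidate $d$, and absorb $d$ at position~2 of every manipulator vote so that the prefix relation $\alpha_i=\alpha_{i+1}^{\prime}$ hands the real candidates exactly the original scores $\alpha_2,\dots,\alpha_m$. All of that matches the paper's proof.

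The one step that fails as written is the calibration of $c$'s score in the reduced instance. You build $\VV^{\prime}$ so that the pairwise score differences among the original candidates equal the capped targets, but in the $(m+1)$-candidate instance $c$ collects $t\alpha_1^{\prime}$ from the manipulators rather than $t\alpha_1$, so every final gap $s_{\VV^{\prime}}(c)+t\alpha_1^{\prime}-s_{\VV^{\prime}}(a)$ is inflated by $t(\alpha_1^{\prime}-\alpha_1)$, a quantity that is strictly positive for Borda and most other rules. The reduced instance is then only a relaxation of the original: a \NO{} instance can become a \YES{} instance. Your sentence ``by the choice of targets the gap constraints are identical on both sides'' asserts exactly the property you need, but the stated choice of targets does not deliver it. The fix is the offset the paper makes explicit, namely lowering $c$'s non-manipulator score by $|M|(\alpha_1^{\prime}-\alpha_1)$ (the paper's $s^{*}_{NM}(c)$); with that correction your equivalence argument, including the position-2 absorption of $d$, goes through.
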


\setcounter{reductionrule}{0}


\begin{proof}
 Let $c$ be the candidate whom the manipulators aim to make winner. Let $M$ be the set of manipulators and $\mathcal{C}$ the set of candidates. Let $s_{NM}(x)$ be the score of candidate $x$ from the votes of the non-manipulators. Without loss of generality, we assume that, all the manipulators place $c$ at top position in their votes. Hence, the final score of $c$ is $s_{NM}(c) + |M|\alpha_1$, which we denote by $s(c)$.
 Now if $s_{NM}(x) \ge s(c)$ for any $x \ne c$, then $c$ cannot win and we output \textit{no}. Hence, we assume that $s_{NM}(x) < s(c)$ for all $x \ne c$. Now let us define $s_{NM}^*(x)$ as follows.
 $$ s_{NM}^*(x) := \max \{ s_{NM}(x), s_{NM}(c) \} $$
 Also define $s_{NM}^*(c)$ as follows.
 $$ s_{NM}^*(c) := s_{NM}(c) - |M|(\alpha_1^\prime - \alpha_1)$$
 We define a \textsc{Coalitional manipulation} instance with $(m+1)$ candidates as $(\mathcal{C}^{\prime},NM,M,c)$, where $\mathcal{C}^{\prime} = \mathcal{C} \uplus \{d\}$ is the set of candidates, $M$ is the set of manipulators, $c$ is the distinguished candidate, and $NM$ is the non-manipulators' vote such that it generates score of $x\in \mathcal{C}$ to be $ K + (s_{NM}^*(x) - s_{NM}(c))$, where $K \in \mathbb{N}$ is same for $x\in \mathcal{C}$, and the score of $d$ is less than $ K - \alpha_1^\prime|M| $. The existence of such a voting profile $NM$ of size $poly(m)$ is due to~\Cref{score_gen} and the fact that $\alpha_1^\prime = poly(m)$. Hence, once we show the equivalence of these two instances, we have a kernel whose size is polynomial in $m$. 
 The equivalence of the two instances is due to the following facts: (1) The new instance has $(m+1)$ candidates and $c$ is always placed at the top position without loss of generality. The candidate $c$ recieves $|M|(\alpha_1^\prime - \alpha_1)$ score more than the initial instance and this is compensated in $s_{NM}^*(c)$. (2) The relative score difference from the final score of $c$ to the current score of every $x\in \mathcal{C}\setminus \{c\}$ is same in both the instances. (3) In the new instance, we can assume without loss of generality that the candidate $d$ will be placed in the second position in all the manipulators' votes.
\end{proof}

We now move on to the voting rules that are based on the weighted majority graph.
The reduction rules modify the weighted majority graph maintaining the property that there exists a set of votes that can realize the modified weighted majority graph. In particular, the final weighted majority graph is realizable with a set of votes.

\setcounter{reductionrule}{0}
\begin{theorem}
\label{thm:pk_maximin} 
The \textsc{Coalitional manipulation} problem for the maximin voting rule admits a polynomial kernel 
when the number of manipulators is $poly(m)$.
\end{theorem}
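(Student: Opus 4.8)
The plan is to show that a maximin \textsc{Coalitional manipulation} instance is captured entirely by its weighted majority graph, and that this graph can be ``truncated'' so that every edge weight has magnitude $O(|M|)$ without changing the answer. First I would record that the only information about the non-manipulator profile relevant to the maximin outcome under \emph{any} manipulation is the collection of pairwise margins $D_{NM}(x,y)=N_{NM}(x,y)-N_{NM}(y,x)$, together with $|M|$ and the distinguished candidate $c$. Indeed, casting $t=|M|$ additional votes changes each final margin $D(x,y)$ by an amount in $[-|M|,|M|]$, and the maximin score $\min_{y\neq z}D(z,y)$ of every candidate $z$ is a function of the final margins alone. As usual we may assume every manipulator ranks $c$ first, which fixes the final margins out of $c$ to be $D_{NM}(c,y)+|M|$ and hence fixes $s(c):=\min_{y\neq c}\bigl(D_{NM}(c,y)+|M|\bigr)$; the task is then exactly to decide whether the manipulators can cast linear orders so that $\min_{y\neq x}D(x,y)<s(c)$ for every $x\neq c$.

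The kernelization itself consists of a single truncation rule applied to this graph. For every ordered pair $(x,y)$ I would replace $D_{NM}(x,y)$ by $\operatorname{sign}(D_{NM}(x,y))\cdot\min\{|D_{NM}(x,y)|,\,2|M|+2\}$, updating $D_{NM}(y,x)$ antisymmetrically and, if needed, adjusting one auxiliary pair to restore the common parity of all margins. Since a manipulator's vote perturbs any margin by at most one, weights beyond $\pm(2|M|+2)$ should carry no decision-relevant information, so this rule ought to be safe. After truncation every margin is an integer of magnitude $O(|M|)$ and there are $\binom{m}{2}$ of them, giving a description of size $O(m^2|M|)$. To output a genuine \textsc{Coalitional manipulation} instance rather than a graph, I would invoke \Cref{thm:mcgarvey} to realize the truncated antisymmetric margin function by an explicit non-manipulator profile; the parity bookkeeping needed to apply that lemma is routine, and because $|M|=\mathrm{poly}(m)$ the resulting instance has size $\mathrm{poly}(m)$, i.e.\ a polynomial kernel in the number of candidates.

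The crux, and the step I expect to be hardest, is proving that truncation preserves the answer. The subtlety is that $s(c)$ and each opponent's maximin score are \emph{minima} of margins, and the binding threshold $s(c)$ is itself defined through margins that the rule may alter, so one cannot argue pair-by-pair in isolation. The key structural fact I would exploit is that at most one candidate can defeat every other candidate, so at most one candidate $z$ has $\min_{y}D_{NM}(z,y)>0$; every other candidate already loses some pairwise contest and therefore has a negative unmanipulated minimum that a magnitude-$(>2|M|+2)$ truncation cannot raise above $-(|M|+1)$ even after manipulation. I would then split into cases according to whether this possible ``dominating'' candidate is $c$ or some $x_0\neq c$: in the former $c$ wins both before and after truncation; in the latter $x_0$'s manipulated score stays strictly positive (hence above $s(c)$, which is then negative) in both instances; and in the remaining case all relevant minima already lie inside the window, so truncation is inert. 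Verifying that every strict inequality $\min_y D(x,y)<s(c)$ is simultaneously preserved across these cases, for the \emph{same} manipulating profile, is the main technical content; once this equivalence is established, combining it with the size bound above completes the kernel. The same template, adapted to the fact that the Copeland score depends only on the \emph{sign} of each margin, is what I would use for the analogous Copeland claim.
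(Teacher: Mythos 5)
Your overall template --- reduce the instance to its weighted majority graph, truncate every margin to magnitude $O(|M|)$, and re-realize the truncated graph via \Cref{thm:mcgarvey} --- is the same one the paper uses, but your specific truncation rule is unsound, and the case analysis you sketch does not repair it. The comparison that decides the instance is, for each $x \neq c$, whether the manipulators can force $\min_{y \neq x} D(x,y) < s + |M|$, where $s := \min_{y \neq c} D_{NM}(c,y)$ and $D_{NM}$ denotes the margins among the non-manipulators. Since one manipulator vote moves any margin by at most one, the only decision-relevant information about a pair $(x,y)$ is where $D_{NM}(x,y)$ sits relative to the window $[s,\, s+2|M|]$ --- a window anchored at $s$, not at $0$. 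Your rule clips every margin, including the margins $D_{NM}(c,y)$ that define $s$, into $[-(2|M|+2),\, 2|M|+2]$. When $s \ll -2|M|$ this simultaneously raises the threshold from $s+|M|$ to roughly $-(|M|+2)$ and destroys the information about which opposing margins actually lie below $s+2|M|$; even margins untouched by the clipping can change status because the threshold has moved. Concretely, take candidates $\{c,x,y,z\}$, $|M|=2$, $D_{NM}(c,\cdot)=-100$, and $D_{NM}(x,y)=D_{NM}(y,z)=D_{NM}(z,x)=-10$. Here $s=-100$, the final maximin score of $c$ is $-98$, while each of $x,y,z$ has maximin score at least $-12$ no matter how the manipulators vote, so the instance is a \NO{}. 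After your truncation all margins become $\pm 6$, the threshold becomes $-4$, and the two manipulator votes $c \succ x \succ y \succ z$ and $c \succ z \succ y \succ x$ give each of $x,y,z$ maximin score $-6 < -4$, so the truncated instance is a \YES{}. Your third case (``all relevant minima already lie inside the window, so truncation is inert'') is exactly where this fails: a non-dominating candidate only needs \emph{some} negative margin, not one inside $[-(2|M|+2),0)$, and inertness fails anyway once the threshold itself has been clipped upward.

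The fix --- and what the paper actually does --- is to anchor the truncation at $s$: first dispose of the trivial case $s+|M|\ge 0$, then clamp every negative margin into (roughly) $[s-2,\, s+2|M|+2]$, preserving parity so that McGarvey still applies; values below $s$ remain ``guaranteed beaten'' and values above $s+2|M|$ remain ``never beatable,'' so this step is sound pair by pair. Since the clamped values may still have magnitude about $|s|$, the paper then applies one uniform additive shift to all margins when $s<-4|M|$; this preserves every relevant comparison because $\min_y D(x,y)$ and the threshold $s+|M|$ shift by the same constant, and only after this shift do the weights become $O(|M|)$ and the $O(m^2|M|)$ kernel follow. Both ingredients --- recentering the window at $s$ and the explicit global shift --- are missing from your magnitude cap, and neither is implied by it.
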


\begin{proof}
 Let $c$ be the distinguished candidate of the manipulators. Let $M$ be the set of all manipulators. 
 We can assume that $ |M| \ge 2$ since for $|M| = 1$, the problem is in $\Pb$ \cite{bartholdi1989computational}.
 Define $s$ to be $\min_{x\in C\setminus \{c\}} D_{(\VV\setminus M)}(c,x)$. 
 So, $s$ is the maximin score of the candidate $c$ from the votes except from $M$. Since the maximin voting 
 rule is monotone, we can assume that the voters in $M$ put the candidate $c$ at top position of their 
 preferences. Hence, $c$'s final maximin score will be $s+|M|$. This provides the following reduction 
 rule.
 
 \begin{reductionrule}\label{rr:condorcet_winner}
  If $ s+|M| \ge 0 $, then output \YES{}.
 \end{reductionrule}
 
 In the following, we will assume $s+|M|$ is negative. Now we propose the following reduction rules on the weighted majority graph.
 
 \begin{reductionrule}\label{rr:weight_up}
  If $D_{(\VV\setminus M)}(c_i, c_j) < 0$ and $D_{(\VV\setminus M)}(c_i, c_j) > 2|M| + s$, then make $D_{(\VV\setminus M)}(c_i, c_j)$ either $2|M|+s+1$ or $2|M|+s+2$ whichever keeps the parity of 
  $D_{(\VV\setminus M)}(c_i, c_j)$ unchanged.
 \end{reductionrule}
 
 If $D_{(\VV\setminus M)}(c_i, c_j) > 2|M| + s$, then $D_{\VV}(c_i, c_j) > |M| + s$ irrespective of the way the manipulators vote. Hence, given any votes of the manipulators, whether or not the maximin score of $c_i$ and $c_j$ will exceed the maximin score of $c$ does not 
 gets affected by this reduction rule. Hence, \Cref{rr:weight_up} is sound.
 
 \begin{reductionrule}\label{rr:weight_lw}
  If $D_{(\VV\setminus M)}(c_i, c_j) < s $, then make $D_{(\VV\setminus M)}(c_i, c_j)$ either $s-1$ or
  $s-2$ whichever keeps the parity of $D_{(\VV\setminus M)}(c_i, c_j)$ unchanged.
 \end{reductionrule}
 
 The argument for the correctness of \Cref{rr:weight_lw} is similar to the argument for \Cref{rr:weight_up}. 
 Here onward, we may assume that whenever $D_{(\VV\setminus M)}(c_i, c_j) < 0$, 
 $ s-2 \le D_{(\VV\setminus M)}(c_i, c_j) \le 2|M|+s+2 $
 
 \begin{reductionrule}\label{rr:weight_closer}
  If $ s < -4|M| $ then subtract $ s+5|M| $ from $D_{(\VV\setminus M)}(x, y)$ for every $x, y\in \mathcal{C}, x\ne y$.
 \end{reductionrule}
 
The correctness of \Cref{rr:weight_closer} follows from the fact that it adds linear fixed offsets to all the 
 edges of the weighted majority graph. Hence, if there a voting profile of the voters in $M$ that makes the candidate 
 $c$ win in the original instance, the same voting profile will make $c$ win the election in the reduced 
 instance and vice versa.
 
 Now we have a weighted majority graph with $O(|M|)$ weights for every edge. Also, all the weights have uniform parity and thus the result follows from \Cref{thm:mcgarvey}.
\end{proof}
\setcounter{reductionrule}{0}

We next present a polynomial kernel for the \textsc{Coalitional manipulation} problem for the Copeland voting rule.

\begin{theorem}
\label{thm:pk_copeland} 
The \textsc{Coalitional manipulation} problem for the Copeland voting rule admits a polynomial kernel 
when the number of manipulators is $poly(m)$.
\end{theorem}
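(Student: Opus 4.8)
The plan is to mirror the kernelization for maximin in \Cref{thm:pk_maximin}, working entirely with the weighted majority graph of the non-manipulator votes and exploiting the fact that the Copeland winner is determined only by the \emph{signs} (win/tie/loss) of the pairwise margins. First I would fix, without loss of generality, that every manipulator ranks the distinguished candidate $c$ at the top; this is sound because Copeland is monotone. After this normalisation the margins incident to $c$ are completely determined, namely $D_{\VV}(c,x)=D_{(\VV\setminus M)}(c,x)+|M|$, while for any two candidates $c_i,c_j\in\mathcal{C}\setminus\{c\}$ the manipulators can shift $D(c_i,c_j)$ by any net amount in $\{-|M|,-|M|+2,\dots,|M|\}$. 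Thus the only information about $D_{(\VV\setminus M)}(c_i,c_j)$ that can affect the outcome is whether its value lies inside an $O(|M|)$-wide window around the sign-change threshold, together with its parity, which governs whether an exact tie (and hence the $\alpha$ contribution to the Copeland score) is reachable.

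Next I would introduce reduction rules on the weighted majority graph, analogous to the reduction rules in the proof of \Cref{thm:pk_maximin}, but now clamping margins towards $0$ rather than towards the maximin score of $c$. Concretely, for an edge between two candidates in $\mathcal{C}\setminus\{c\}$ with $D_{(\VV\setminus M)}(c_i,c_j)>|M|+2$ I would replace its weight by $|M|+1$ or $|M|+2$, whichever preserves its parity; symmetrically, if $D_{(\VV\setminus M)}(c_i,c_j)<-(|M|+2)$ I would replace it by $-(|M|+1)$ or $-(|M|+2)$ of the correct parity. Edges incident to $c$ carry no manipulator freedom, so there I need only preserve the sign and vanishing of the fixed quantity $D_{(\VV\setminus M)}(c,x)+|M|$; this again depends only on $D_{(\VV\setminus M)}(c,x)$ within an $O(|M|)$-window, now centred near $-|M|$, and I clamp it to a bounded weight of the correct parity otherwise. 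Each such rule is sound because, the manipulators being able to move any margin by at most $|M|$, clamping a margin that is already outside the controllable window to a value just outside it, of the same sign and parity, changes neither whether $c_i$ beats, ties, or loses to $c_j$ under any manipulator action, and hence changes no candidate's vector of pairwise outcomes.

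After exhaustively applying these rules, every edge of the weighted majority graph carries a weight of magnitude $O(|M|)$. Any weighted majority graph arising from $n$ voters has all its margins congruent to $n \bmod 2$, so the weights automatically share a common parity, and the reduction rules preserve this by preserving each edge's parity. Hence the hypotheses of \Cref{thm:mcgarvey} are met, and I would invoke it to construct an explicit non-manipulator profile of $O(m^2|M|)$ votes realising the reduced graph; together with the $|M|$ manipulators and the candidate $c$ this is the kernel, whose size is polynomial in $m$ whenever $|M|=poly(m)$. Equivalence of the two instances follows because the reduced instance has exactly the same win/tie/loss structure as the original under every possible manipulator vote, and Copeland winners depend only on that structure. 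The main obstacle, and the point requiring the most care, is the tie handling forced by the $\alpha$ term: a reduction that merely preserved the sign of each margin would be unsound, since it could create or destroy the possibility of an exact tie. This is precisely why the clamping must preserve parity, guaranteeing that a margin can be driven to $0$ by the manipulators in the reduced instance exactly when it could in the original; carrying out this parity bookkeeping consistently across both the $c$-incident and non-$c$-incident edges is the crux of the correctness argument.
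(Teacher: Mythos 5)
Your proposal is correct and follows essentially the same route as the paper: clamp every pairwise margin whose magnitude lies outside the $O(|M|)$ window that the manipulators can influence to a bounded value of the same sign and parity, observe that this preserves every candidate's win/tie/loss record under any possible manipulator votes, and realize the reduced weighted majority graph with $O(m^2|M|)$ votes via \Cref{thm:mcgarvey}. The paper achieves this with a single uniform clamping rule applied to all pairs (including those involving $c$), so your separate normalisation of the manipulators' votes and special treatment of the $c$-incident edges is sound but not needed.
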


\begin{proof}
We apply the following reduction rule.
 \begin{reductionrule}\label{rr:weight_up}
  If $D_{(\VV\setminus M)}(x, y) > |M| $ for $x, y\in\mathcal{C}$, then make $D_{(\VV\setminus M)}(x, y)$ 
  either $|M|+1$ or $|M|+2$ whichever keeps the parity of 
  $D_{(\VV\setminus M)}(x, y)$ unchanged.
 \end{reductionrule}
 Given any votes of $M$, we have $D_{\VV}(x, y) > 0 $ in the original instance if and only if 
 $D_{\VV}(x, y) > 0 $ in the reduced instance for every $x, y\in\mathcal{C}$. Hence, each candidate has the same Copeland score and thus the reduction rule is correct.
 
 Now we have a weighted majority graph with $O(|M|)$ weights for every edges. Also, all the weights have uniform parity. 
 From \Cref{thm:mcgarvey}, we can realize the weighted majority graph using $O(m^2|M|)$ votes. 
\end{proof}
\setcounter{reductionrule}{0}

Now we move on to the ranked pairs voting rule.

\begin{theorem}
\label{thm:pk_rankedpair} 
The \textsc{Coalitional manipulation} problem for the ranked pairs voting rule admits a polynomial kernel 
when the number of manipulators is $poly(m)$.
\end{theorem}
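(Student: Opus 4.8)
The plan is to proceed exactly as in the proofs of \Cref{thm:pk_maximin,thm:pk_copeland}: work with the weighted majority graph of the non-manipulators and compress its edge weights so that the reduced graph stays realizable by $poly(m)$ votes via \Cref{thm:mcgarvey}. Let $c$ be the distinguished candidate and $M$ the set of manipulators; as before we may assume $|M|\ge 2$. Write $w(a,b) = D_{(\VV\setminus M)}(a,b)$ for the non-manipulator margins; these are antisymmetric, $w(b,a) = -w(a,b)$, and all share the parity of $|\VV\setminus M|$, so \Cref{thm:mcgarvey} applies to any modification that keeps these two properties. The manipulators shift each margin by an additive amount $\delta(a,b)\in[-|M|,|M|]$ with $\delta(b,a) = -\delta(a,b)$, and the set of achievable shift vectors depends only on $|M|$ and \CC, hence is identical before and after we modify the non-manipulator graph.

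The first conceptual point is that, under the parallel-universes tie-breaking, the ranked pairs winner determined by a fixed manipulator profile depends only on the relative order (including equalities) and the signs of the final margins $\{w(a,b)+\delta(a,b)\}$, since this is precisely the order in which pairs are locked in. Thus it suffices to replace $w$ by some $w'$ so that, for \emph{every} achievable shift $\delta$, the lock-in order of $\{w'(a,b)+\delta(a,b)\}$ agrees with that of $\{w(a,b)+\delta(a,b)\}$. Because $|\delta(a,b)-\delta(c,d)|\le 2|M|$ always, this is guaranteed once $w'$ preserves (i) every pairwise difference $w(a,b)-w(c,d)$ of absolute value at most $2|M|$, and (ii) the property of lying more than $2|M|$ apart for every pair of margins that are originally more than $2|M|$ apart. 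Sign preservation relative to $0$ is subsumed by (i)--(ii): the gap between the two directed copies $w(a,b)$ and $w(b,a)=-w(a,b)$ equals $2|w(a,b)|$, so (i)--(ii) already control whether $|w(a,b)|>|M|$.

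This yields the following compression reduction rule. Sort the distinct margin values (which are symmetric about $0$) and, processing the gaps between consecutive values, leave every gap of size at most $2|M|$ unchanged while replacing every gap larger than $2|M|$ by $2|M|+2$, performing the replacement symmetrically about $0$. Since all original gaps are even, the new gaps remain even and symmetric about $0$, so $w'$ stays antisymmetric and of uniform parity and \Cref{thm:mcgarvey} remains applicable; moreover an unchanged gap is $\le 2|M|$ while a shrunk gap is still $>2|M|$, which is exactly (i)--(ii). Chaining gaps shows that every difference of magnitude $\le 2|M|$ is preserved exactly and every larger difference stays larger than $2|M|$, and likewise $|w'(a,b)|>|M|$ if and only if $|w(a,b)|>|M|$; hence the original and reduced instances have the same answer.

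For the size bound, there are $O(m^2)$ distinct margin values, and after compression each gap is $O(|M|)$, so by the $0$-centered antisymmetric structure every $|w'(a,b)| = O(m^2|M|)$. Invoking \Cref{thm:mcgarvey} with $\sum_{\{a,b\}}|w'(a,b)| = O(m^2)\cdot O(m^2|M|) = O(m^4|M|)$ realizes the compressed graph with $O(m^4|M|)$ non-manipulator votes, giving a kernel of size $O(m^4|M|)$, which is polynomial whenever $|M| = poly(m)$. The main obstacle, and the genuine difference from the maximin and Copeland cases, is precisely (i)--(ii): for those rules only the sign of each margin matters, so edges can be capped independently, whereas ranked pairs is sensitive to the entire lock-in order. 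The reduction must therefore preserve the whole relative order of margins under every admissible manipulation, and the delicate part is arranging this gap compression while simultaneously respecting the antisymmetry and uniform-parity constraints that \Cref{thm:mcgarvey} demands.
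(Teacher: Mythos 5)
Your proposal is correct and follows essentially the same route as the paper's proof: sort the pairwise margins of the non-manipulators, compress every large gap between consecutive values down to a bounded even amount so that the weak order of all final margins (and hence the ranked pairs outcome under parallel-universes tie-breaking) is preserved for every manipulator profile, and re-realize the compressed weighted majority graph with $O(m^4|M|)$ votes via \Cref{thm:mcgarvey}. If anything you are more careful than the paper: you shrink large gaps only down to $2|M|+2$ rather than the paper's $|M|+2$, which is the right threshold since two margins can drift apart by up to $2|M|$ under the manipulators' votes, and you make explicit both why the weak order of margins determines the ranked pairs winner and why the antisymmetry and uniform-parity constraints needed for \Cref{thm:mcgarvey} are maintained.
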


\begin{proof}
 Consider all non-negative $D_{(\VV\setminus M)}(c_i, c_j)$ and arrange them in non-decreasing order. Let 
 the ordering be $x_1, x_2, \dots, x_l$ where $ l = {m \choose 2} $. Now keep applying following reduction 
 rule till possible. Define $x_0 = 0$.
 \begin{reductionrule}
  If there exist any $i$ such that, $x_i - x_{i-1} > |M|+2$, subtract an even offset to all 
  $x_i, x_{i+1}, \dots, x_l$ such that $x_i$ becomes either $(x_{i-1} + |M| + 1)$ or 
  $(x_{i-1} + |M| + 2)$.
 \end{reductionrule} 
 The reduction rule is correct since for any set of votes by $M$, for any four candidates $a, b, x, y \in \CC$, 
 $ D(a,b) > D(x,y) $ in the original instance if and only if $ D(a,b) > D(x,y) $ in the reduced instance. Now 
 we have a weighted majority graph with $O(m^2|M|)$ weights for every edges. Also, all the weights have uniform 
 parity and hence can be realized with $O(m^4|M|)$ votes by~\Cref{thm:mcgarvey}. 
\end{proof}

\section{Conclusion}
Here we showed that the \textsc{Possible winner} problem does not admit a polynomial 
kernel for many common voting rules under the complexity theoretic assumption that 
\caveat is not true. We also showed the existence of polynomial kernels for the 
\textsc{Coalitional manipulation} problem for many common voting rules. This shows that the \textsc{Possible winner} 
problem is a significantly harder problem than the \textsc{Coalitional manipulation} problem, 
although both the problems are $\NPCb{}$.

With this, we conclude the winner determination part of the thesis. We now move on to the last part of the thesis which study computational complexity of various form of election control.

\part{Election Control}
\vspace{5ex}

In this part of the thesis, we present our results on the computational complexity of various problems in the context of election control. We have the following chapters in this part.

\begin{itemize}
 \item In \Cref{chap:partial} -- \nameref{chap:partial} -- we show that manipulating an election becomes a much harder problem when the manipulators only have a partial knowledge about the votes of the other voters. Hence, manipulating an election, although feasible in theory, may often be harder in practice.
 \item In \Cref{chap:detection} -- \nameref{chap:detection} -- we initiate the work on detecting possible instances of manipulation behavior in elections. Our work shows that detecting possible instances of manipulation may often be a much easier problem than manipulating the election itself.
 \item In \Cref{chap:frugal_bribery} -- \nameref{chap:frugal_bribery} -- we show that the computational problem of bribery in an election remains an intractable problem even with a much weaker notion of bribery which we call frugal bribery. Hence, our results strengthen the intractability results from the literature on bribery.
\end{itemize}

\blankpage
\chapter{Manipulation with Partial Votes}
\label{chap:partial}

\blfootnote{A preliminary version of the work in this chapter was published as \cite{deypartial}: Palash Dey, Neeldhara Misra, and Y. Narahari. Complexity of manipulation with partial
information in voting. In Proc. Twenty-Fifth International Joint Conference on Artificial
Intelligence, IJCAI 2016, New York, NY, USA, 9-15 July 2016, pages 229-235, 2016.}

\begin{quotation}
{\small The Coalitional Manipulation problem has been
studied extensively in the literature for many voting rules. 
However, most studies have focused on the complete information setting, wherein
the manipulators know the votes of the non-manipulators. While this assumption
is reasonable for purposes of showing intractability, it is unrealistic for algorithmic
considerations. In most real-world scenarios, it is impractical for the manipulators 
to have accurate knowledge of all the other votes. In this work, we investigate
manipulation with incomplete information. In our framework,
the manipulators know a partial order for each voter that is consistent with
the true preference of that voter. In this setting, we formulate three natural
computational notions of manipulation, namely weak, opportunistic, and strong manipulation. 
We say that an extension of a partial order is
\textit{viable} if there exists a manipulative vote for that extension. We propose the following notions of manipulation when manipulators have incomplete information about the votes of other voters.

\begin{enumerate}
\item \textsc{Weak Manipulation}: the manipulators seek to
vote in a way that makes their preferred candidate win in \textit{at least one
extension} of the partial votes of the non-manipulators.
\item \textsc{Opportunistic Manipulation}: the manipulators seek to
vote in a way that makes their preferred candidate win\textit{ in every
viable extension} of the partial votes of the non-manipulators.
\item \textsc{Strong Manipulation}: the manipulators seek to
vote in a way that makes their preferred candidate win \textit{in every
extension} of the partial votes of the non-manipulators.
\end{enumerate}

We consider several scenarios for which the traditional manipulation problems are easy (for instance, Borda with a single manipulator). For many of them, the corresponding manipulative questions that we propose turn out to be computationally intractable. Our hardness results often hold even when very little information is missing, or in other words, even when the instances are very close to the complete information setting. Our results show that the impact of paucity of information on the computational complexity of manipulation crucially depends on the notion of manipulation under consideration. Our overall conclusion is that computational hardness continues to be a valid obstruction to manipulation, in the context of a more realistic model.}
\end{quotation}

\section{Introduction}

A central issue in voting is the possibility of \emph{manipulation}. For many voting rules, it turns out that even a single vote, if cast differently, can alter the outcome. In particular, a voter manipulates an election if, by misrepresenting her preference, she obtains an outcome that she prefers over the ``honest'' outcome. In a cornerstone impossibility result, Gibbard and Satterthwaite~\cite{gibbard1973manipulation,satterthwaite1975strategy} show that every unanimous and non-dictatorial voting rule with three candidates or more is manipulable. We refer to~\cite{brandt2015handbook} for an excellent introduction to various strategic issues in computational social choice theory.

Considering that voting rules are indeed susceptible to manipulation, it is natural to seek ways by which elections can be protected from manipulations. The works of Bartholdi et al.~\cite{bartholdi1989computational,bartholdi1991single} approach the problem from the perspective of computational intractability. They exploit the possibility that voting rules, despite being vulnerable to manipulation in theory, may be hard to manipulate in practice. Indeed, a manipulator is faced with the following decision problem: given a collection of votes $\mathcal{P}$ and a distinguished candidate $c$, does there exist a vote $v$ that, when tallied with $\mathcal{P}$, makes $c$ win for a (fixed) voting rule $r$? The manipulation problem has subsequently been generalized to the problem of \textsc{Coalitional manipulation} by Conitzer et al.~\cite{conitzer2007elections}, where one or more manipulators collude together and try to make a distinguished candidate win the election. The manipulation problem, fortunately, turns out to be \NP{}-hard in several settings. This established the success of the approach of demonstrating a computational barrier to manipulation.

However, despite having set out to demonstrate the hardness of manipulation, the initial results in~\cite{bartholdi1989computational} were to the contrary, indicating that many voting rules are in fact easy to manipulate. Moreover, even with multiple manipulators involved, popular voting rules like plurality, veto, $k$-approval, Bucklin, and Fallback continue to be easy to manipulate~\cite{xia2009complexity}. While we know that the computational intractability may not provide a strong barrier \cite{procaccia2006junta,ProcacciaR07,friedgut2008elections,xia2008generalized,xia2008sufficient,faliszewski2010using,walsh2010empirical,walsh2011hard,isaksson2012geometry,dey2015computational,DeyMN15,journalsDeyMN16,DeyMN15a,dey2014asymptotic,dey2015asymptoticjournal} even for rules for which the coalitional manipulation problem turns out to be \NP{}-hard, in all other cases the possibility of manipulation is a much more serious concern.

\subsection{Motivation and Problem Formulation}

In our work, we propose to extend the argument of computational intractability to address the cases where the approach appears to fail. We note that most incarnations of the manipulation problem studied so far are in the complete information setting, where the manipulators have complete knowledge of the preferences of the truthful voters. While these assumptions are indeed the best possible for the computationally negative results, we note that they are not reflective of typical real-world scenarios. Indeed, concerns regarding privacy of information, and in other cases, the sheer volume of information, would be significant hurdles for manipulators to obtain complete information. Motivated by this, we consider the manipulation problem in a natural \emph{partial information} setting. In particular, we model the partial information of the manipulators about the votes of the non-manipulators as partial orders over the set of candidates. A partial order over the set of candidates will be called a partial vote. Our results show that several of the voting rules that are easy to manipulate in the complete information setting become intractable when the manipulators know only partial votes. Indeed, for many voting rules, we show that even if the ordering of a small number of pairs of candidates is missing from the profile, manipulation becomes an~intractable problem. Our results therefore strengthen the view that manipulation may not be practical if we limit the information the manipulators have at their disposal about the votes of other voters \cite{conitzer2011dominating}.

We introduce three new computational problems that, in a natural way, extend the question of manipulation to the partial information setting. In these problems, the input is a set of partial votes $\mathcal{P}$ corresponding to the votes of the non-manipulators, a non-empty set of manipulators $M$, and a preferred candidate $c$. The task in the \textsc{Weak Manipulation (WM)} problem is to determine if there is a way to cast the manipulators' votes such that $c$ wins the election for at least one extension of the partial votes in $\mathcal{P}$. On the other hand, in the \textsc{Strong Manipulation (SM)} problem, we would like to know if there is a way of casting the manipulators' votes such that $c$ wins the election in \emph{every extension} of the partial votes in $\mathcal{P}$. 

We also introduce the problem of \textsc{Opportunistic Manipulation (OM)}, which is an ``intermediate'' notion of manipulation. Let us call an extension of a partial profile \textit{viable} if it is possible for the manipulators to vote in such a way that the manipulators' desired candidate wins in that extension. In other words, a viable extension is a \YES-instance of the standard \CM problem. We have an opportunistic manipulation when it is possible for the manipulators to cast a vote which makes $c$ win the election in \emph{ all} viable extensions. Note that any \YES-instance of \SM is also a \YES-instance of \OM, but this may not be true in the reverse direction. As a particularly extreme example, consider a partial profile where there are no viable extensions: this would be a \NO-instance for \SM, but a (vacuous) \YES-instance of \OM. The \OM problem allows us to explore a more relaxed notion of manipulation: one where the manipulators are obliged to be successful only in extensions where it is possible to be successful. Note that the goal with \SM is to be successful in all extensions, and therefore the only interesting instances are the ones where all extensions are viable. 

It is easy to see that \YES{} instance of \SM is also a \YES{} instance of \OM and \WM. Beyond this, we remark that all the three problems are questions with different goals, and neither of them render the other redundant. We refer the reader to~\Cref{Fig:EG} for a simple example distinguishing these scenarios.

All the problems above generalize \CM, and hence any computational intractability result for \CM immediately yields a corresponding intractability result for \WM, \SM, and \OM under the same setting. For example, it is known that the \CM problem is intractable for the maximin voting rule when we have at least two manipulators~\cite{xia2009complexity}. Hence, the \WM, \SM, and \OM problems are intractable for the maximin voting rule when we have at least two manipulators.

\begin{figure}[t]
\centering
\includegraphics[scale=0.3]{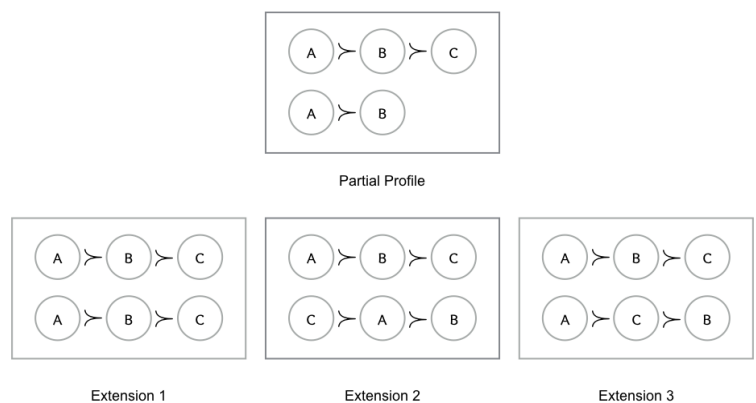}
\caption{An example of a partial profile. Consider the plurality voting rule with one manipulator. If the favorite candidate is A, then the manipulator simply has to place A on the top his vote to make A win in \textit{any} extension. If the favorite candidate is B, there is \textit{no vote} that makes B win in any extension. Finally, if the favorite candidate is C, then with a vote that places C on top, the manipulator can make C win in the only viable extension (Extension 2).}
\label{Fig:EG}	
\end{figure}

\subsection{Related Work}

A notion of manipulation under partial information has been considered by Conitzer et al.~\cite{conitzer2011dominating}. They focus on whether or not there exists a dominating manipulation and show that this problem is \NP{}-hard for many common voting rules. Given some partial votes, a dominating manipulation is a non-truthful vote that the manipulator can cast which makes the winner at least as preferable (and sometimes more preferable) as the winner when the manipulator votes truthfully. The dominating manipulation problem and the \WM, \OM, and \SM problems do not seem to have any apparent complexity-theoretic connection. For example, the dominating manipulation problem is \NP{}-hard for all the common voting rules except plurality and veto, whereas, the \SM problem is easy for most of the cases (see \Cref{table:partial_summary}). However, the results in \cite{conitzer2011dominating} establish the fact that it is indeed possible to make manipulation intractable by restricting the amount of information the manipulators possess about the votes of the other voters. Elkind and Erd{\'e}lyi~\cite{elkind2012manipulation} study manipulation under voting rule uncertainty. However, in our work, the voting rule is fixed and known to the manipulators.

Two closely related problems that have been extensively studied in the context of incomplete votes are \textsc{Possible Winner} and \textsc{Necessary Winner}~\cite{konczak2005voting}. In the \PW problem, we are given a set of partial votes $\mathcal{P}$ and a candidate $c$, and the question is whether there exists an extension of $\mathcal{P}$ where $c$ wins, while in the \NW problem, the question is whether $c$ is a winner in every extension of $\mathcal{P}$. Following the work in~\cite{konczak2005voting}, a number of  special cases and variants of the \PW problem have been studied in the literature~\cite{chevaleyre2010possible,bachrach2010probabilistic,baumeister2011computational,baumeister2012possible,gaspers2014possible,xia2008determining,ding2013voting,NarodytskaW14,BaumeisterFLR12,MenonL15}. The flavor of the \WM problem is clearly similar to \PW. However, we emphasize that there are subtle distinctions between the two problems. A more elaborate comparison is made in the next section.

\subsection{Our Contribution}

Our primary contribution in this work is to propose and study three natural and realistic generalizations of the computational problem of manipulation in the incomplete information setting. We summarize the complexity results in this work in \Cref{table:partial_summary}. Our results provides the following interesting insights on the impact of lack of information on the computational difficulty of manipulation. We note that the number of undetermined pairs of candidates per vote are small constants in all our hardness results.

\begin{itemize}
 \item We observe that the computational problem of manipulation for the plurality and veto voting rules remains polynomial time solvable even with lack of information, irrespective of the notion of manipulation under consideration [\Cref{pv_easy,thm:vetoOM,sm_k_easy,obs:plurality_fallback}]. We note that the plurality and veto voting rule also remain vulnerable under the notion of dominating manipulation~\cite{conitzer2011dominating}.
 
 \item The impact of absence of information on the computational complexity of manipulation is more dynamic for the $k$-approval, $k$-veto, Bucklin, Borda, and maximin voting rules. Only the \WM and \OM problems are computationally intractable for the $k$-approval [\Cref{wm_k_hard,thm:kappOM}], $k$-veto [\Cref{k_veto_wm,thm:kvetoOM}], Bucklin [\Cref{wm_bucklin_hard,thm:bucklinOM}], Borda [\Cref{cm_hard_one,thm:bordaOM}], and maximin [\Cref{cm_hard_one,thm:maximinOM}] voting rules, whereas the \SM problem remains computationally tractable [\Cref{sm_k_easy,sm_sr_easy,sm_bucklin_easy,sm_maximin_easy}].
 
 \item \Cref{table:partial_summary} shows an interesting behavior of the fallback voting rule. The Fallback voting rule is the only voting rule among the voting rules we study here for which the \WM problem is NP-hard [\Cref{wm_bucklin_hard}] but both the \OM and \SM problems are polynomial time solvable [\Cref{sm_bucklin_easy,obs:plurality_fallback}]. This is because the \OM problem can be solved for the fallback voting rule by simply making manipulators vote for their desired candidate.
 
 \item Our results show that absence of information makes all the three notions of manipulations intractable for the Copeland$^\alpha$ voting rule for every rational $\alpha\in[0,1]\setminus\{0.5\}$ for the \WM problem [\Cref{cm_hard_one}] and for every $\alpha\in[0,1]$ for the \OM and \SM problems [\Cref{sm_copeland_hard,thm:copelandOM}].
\end{itemize}

Our results (see \Cref{table:partial_summary}) show that whether lack of information makes the manipulation problems harder, crucially depends on the notion of manipulation applicable to the situation under consideration. All the three notions of manipulations are, in our view, natural extension of manipulation to the incomplete information setting and tries to capture different behaviors of manipulators. For example, the \WM problem may be applicable to an optimistic manipulator whereas for an pessimistic manipulator, the \SM problem may make more sense.

\begin{table}
\centering
\resizebox{\linewidth}{!}{
\renewcommand*{\arraystretch}{1.3}
\begin{tabular}{|c|c|c|c|c|c|c|}
\hline
                  & WM, $\ell = 1$                                                          & WM                                                         & OM, $\ell = 1$                            & OM                           & SM, $\ell = 1$                              & SM                                                                                                          \\ \hline
Plurality         & \multicolumn{2}{c}{}                                                                                        & \multicolumn{2}{c}{\cellcolor[HTML]{67FD9A}}                            & \multicolumn{2}{c|}{\cellcolor[HTML]{67FD9A}}                                                                                                             \\ \cline{1-1}
Veto              & \multicolumn{2}{c}{\multirow{-2}{*}{P }}                                                     & \multicolumn{2}{c}{\multirow{-2}{*}{\cellcolor[HTML]{67FD9A}P}}         & \multicolumn{2}{c|}{\cellcolor[HTML]{67FD9A}}                                                                                                             \\ \cline{1-1}

$k$-Approval      & \multicolumn{2}{c}{\cellcolor[HTML]{FE996B}}                                                                                        & \multicolumn{2}{c}{\cellcolor[HTML]{FFCB2F}}                            & \multicolumn{2}{c|}{\cellcolor[HTML]{67FD9A}}                                                                                                             \\ \cline{1-1}
$k$-Veto          & \multicolumn{2}{c}{\cellcolor[HTML]{FE996B}}                                                                                        & \multicolumn{2}{c}{\cellcolor[HTML]{FFCB2F}}                            & \multicolumn{2}{c|}{\cellcolor[HTML]{67FD9A}}                                                                                                             \\ \cline{1-1}
Bucklin           & \multicolumn{2}{c}{\cellcolor[HTML]{FE996B}}                                                                                        & \multicolumn{2}{c}{\multirow{-3}{*}{\cellcolor[HTML]{FFCB2F}coNP-hard}} & \multicolumn{2}{c|}{\cellcolor[HTML]{67FD9A}}                                                                                                             \\ \cline{1-1}
Fallback          & \multicolumn{2}{c}{\multirow{-4}{*}{\cellcolor[HTML]{FE996B}NP-complete}}                                                           & \multicolumn{2}{c}{\cellcolor[HTML]{67FD9A}P}                           & \multicolumn{2}{c|}{\multirow{-6}{*}{\cellcolor[HTML]{67FD9A}P}}                                                                                          \\ \cline{1-1}
Borda             & \multicolumn{2}{c}{}                                                                                        & \multicolumn{2}{c}{\cellcolor[HTML]{FFCB2F}}                            & \multicolumn{1}{c}{\cellcolor[HTML]{67FD9A}}                    &                                                                                 \\ \cline{1-1}
maximin           & \multicolumn{2}{c}{}                                                                                        & \multicolumn{2}{c}{\cellcolor[HTML]{FFCB2F}}                            & \multicolumn{1}{c}{\multirow{-2}{*}{\cellcolor[HTML]{67FD9A}P}} &                                                                                     \\ \cline{1-1}
Copeland$^\alpha$ & \multicolumn{2}{c}{\multirow{-3}{*}{
\begin{tabular}[c]{@{}c@{}}NP-complete\\ \end{tabular}}} & \multicolumn{2}{c}{\multirow{-3}{*}{\cellcolor[HTML]{FFCB2F}coNP-hard}} & \multicolumn{1}{c}{\cellcolor[HTML]{FFCB2F}coNP-hard}         & \multirow{-3}{*}{\begin{tabular}[c]{@{}c@{}}NP-hard\\ \end{tabular}} \\ \hline
\end{tabular}}
\caption{Summary of Results ($\el$ denotes the number of manipulators). The results in white follow immediately from the literature (\Cref{pw_hard,cm_hard,cm_hard_one}). Our results for the Copeland$^\alpha$ voting rule hold for every rational $\alpha\in[0,1]\setminus\{0.5\}$ for the \WM problem and for every $\alpha\in[0,1]$ for the \OM and \SM problems.}
\label{table:partial_summary}
\end{table}

\subsection{Problem Definitions}\label{sec:probdef}

We now formally define the three problems that we consider in this work, namely \textsc{Weak Manipulation},  \textsc{Opportunistic Manipulation}, and \textsc{Strong Manipulation}. Let $r$ be a fixed voting rule. We first introduce the \textsc{Weak Manipulation} problem.

\begin{definition}\textbf{\textsc{$r$-Weak Manipulation}}\\
 Given a set of partial votes $\mathcal{P}$ over a set of candidates $\mathcal{C}$, a positive integer $\el~(>0)$ denoting the number of manipulators, and a candidate $c$, do there exist votes $\succ_1, \ldots, \succ_\el\, \in \mathcal{L(\mathcal{C})}$ such that there exists an extension $\succ\, \in \mathcal{\mathcal{L(\mathcal{C})}^{|\mathcal{P}|}}$ of $\mathcal{P}$ with $r(\succ, \succ_1, \ldots, \succ_\el) = c$?
\end{definition}

To define the \textsc{Opportunistic Manipulation} problem, we first introduce the notion of an $(r,c)$-opportunistic voting profile, where $r$ is a voting rule and $c$ is any particular candidate.

\begin{definition}\textbf{$(r,c)$-Opportunistic Voting Profile}\\
 Let $\el$ be the number of manipulators and $\PP$ a set of partial votes. An $\el$-voter profile $(\succ_i)_{i\in[\el]}\in\LL(\CC)^\el$ is called an $(r,c)$-opportunistic voting profile if for each extension $\overline{\PP}$ of $\PP$ for which there exists an $\el$-vote profile $(\succ^\prime_i)_{i\in[\el]}\in\LL(\CC)^\el$ with $r\left(\overline{\PP}\cup\left(\succ^\prime_i\right)_{i\in[\el]}\right) = c$, we have $r\left(\overline{\PP}\cup\left(\succ_i\right)_{i\in[\el]}\right) = c$.
\end{definition}

In other words, an $\ell$-vote profile is $(r,c)$-opportunistic with respect to a partial profile if, when put together with the truthful votes of any extension,  $c$ wins if the extension is viable to begin with. We are now ready to define the \textsc{Opportunistic Manipulation} problem.

\begin{definition}\textbf{\textsc{$r$-Opportunistic Manipulation}}\\
 Given a set of partial votes $\mathcal{P}$ over a set of candidates $\mathcal{C}$, a positive integer $\el~(>0)$ denoting the number of manipulators, and a candidate $c$, does there exist an $(r,c)$-opportunistic $\el$-vote profile?
\end{definition}

We finally define the \textsc{Strong Manipulation} problem.

\begin{definition}\textbf{\textsc{$r$-Strong Manipulation}}\\
 Given a set of partial votes $\mathcal{P}$ over a set of candidates $\mathcal{C}$, a positive integer $\el~(>0)$ denoting the number of manipulators, and a candidate $c$, do there exist votes $(\succ_i)_{i\in\el} \in \mathcal{L(\mathcal{C})}^\el$ such that for every extension $\succ\, \in \mathcal{\mathcal{L(\mathcal{C})}^{|\mathcal{P}|}}$ of $\mathcal{P}$, we have $r(\succ, (\succ_i)_{i\in[\el]}) = c$?
\end{definition}

We use $(\PP, \el, c)$ to denote instances of \WM, \OM, and \SM, where $\PP$ denotes a profile of partial votes, $\el$ denotes the number of manipulators, and $c$ denotes the desired winner.

For the sake of completeness, we provide the definitions of the \textsc{Coalitional Manipulation} and \textsc{Possible Winner} problems below.
\begin{definition}\textbf{\textsc{$r$-Coalitional Manipulation}}\\
 Given a set of complete votes $\succ$ over a set of candidates $\mathcal{C}$, a positive integer $\el~(>0)$ denoting the number of manipulators, and a candidate $c$, do there exist votes $(\succ_i)_{i\in\el} \in \mathcal{L(\mathcal{C})}^\el$ such that $r\left(\succ, \left(\succ_i\right)_{i\in[\el]}\right) = c$?
\end{definition}

\begin{definition}\textbf{\textsc{$r$-Possible Winner}}\\
 Given a set of partial votes $\mathcal{P}$ and a candidate $c$, does there exist an extension $\succ$ of the partial votes in $\mathcal{P}$ to linear votes such that $r(\succ)=c$?
\end{definition}

\subsection{Comparison with Possible Winner and Coalitional Manipulation Problems} ~For any fixed voting rule, the \WM problem with $\ell$ manipulators reduces to the \PW problem. This is achieved by simply using the same set as truthful votes and introducing $\ell$ empty votes. We summarize this in the observation below.

\begin{observation}\label{pw_hard}
 The \WM problem many-to-one reduces to the \PW problem for every voting rule.
\end{observation}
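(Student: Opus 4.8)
The plan is to exhibit a polynomial-time many-to-one reduction that maps an instance $(\PP, \el, c)$ of \WM to an instance $(\PP', c)$ of \PW, using exactly the construction hinted at in the preceding paragraph. Given the non-manipulator partial profile $\PP$, the number of manipulators $\el$, and the desired winner $c$, I would define $\PP' = \PP \cup \{\pi_1, \ldots, \pi_\el\}$, where each $\pi_i$ is the \emph{empty} partial vote on $\CC$, i.e.\ the partial order imposing no constraint between any two distinct candidates. The output of the reduction is the \PW instance $(\PP', c)$. This map is plainly computable in time polynomial in the size of the input, since it only appends $\el$ votes and $\el$ is part of the input.

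The conceptual heart of the argument is the single observation that the empty partial order admits \emph{every} linear order in $\LL(\CC)$ as a completion. This is what makes the $\el$ appended votes behave precisely like free manipulator ballots: choosing a completion of $\pi_i$ is the same thing as choosing an arbitrary manipulator vote $\succ_i \in \LL(\CC)$, with no restriction whatsoever.

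For correctness I would verify the two directions of the equivalence. In the forward direction, if $(\PP, \el, c)$ is a \YES{} instance of \WM, then there are manipulator votes $\succ_1, \ldots, \succ_\el \in \LL(\CC)$ and an extension $\succ$ of $\PP$ with $r(\succ, \succ_1, \ldots, \succ_\el) = c$; completing $\PP$ to $\succ$ and each $\pi_i$ to $\succ_i$ yields a single linear extension of $\PP'$ in which $c$ wins, so $(\PP', c)$ is a \YES{} instance of \PW. In the reverse direction, any winning extension of $\PP'$ restricts to an extension $\succ$ of $\PP$ together with linear completions $\succ_1, \ldots, \succ_\el$ of the empty votes; since these completions are legitimate elements of $\LL(\CC)$, they form a valid manipulator profile witnessing that $(\PP, \el, c)$ is a \YES{} instance of \WM. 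Both directions rely only on the fact that the partial votes of $\PP$ appear unchanged in $\PP'$ and that the empty votes range exactly over $\LL(\CC)$.

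I do not expect a genuine obstacle here, as the reduction is structurally immediate; the only point meriting explicit care is matching the two notions of ``extension,'' namely confirming that completing the empty partial votes ranges over precisely the same set $\LL(\CC)^\el$ as the manipulators' admissible ballots, so that the witnesses transfer faithfully in both directions. Once this correspondence is stated cleanly, the equivalence of the two instances follows directly from the definitions of \WM and \PW.
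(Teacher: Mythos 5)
Your reduction is exactly the one the paper uses: append $\el$ empty partial votes to $\PP$ and pose the resulting profile as the \PW{} instance, with correctness following because completions of the empty votes range over all of $\LL(\CC)$. The paper states this in one line; your write-up just spells out both directions of the equivalence, so the proposal is correct and takes essentially the same approach.
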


\begin{proof}
 Let $(\PP, \el, c)$ be an instance of \WM. Let $\QQ$ be the set consisting of \el many copies of partial votes $\{\emptyset\}$. Clearly the \WM instance $(\PP, \el, c)$ is equivalent to the \PW instance $(\PP\cup\QQ)$.
\end{proof}

However, whether the \PW problem reduces to the \WM problem or not is not clear since in any \WM problem instance, there must exist at least one manipulator and a \PW instance may have no empty vote. From a technical point of view, the difference between the \WM and \PW problems may look marginal; however we believe that the \WM problem is a very natural generalization of the \CM problem in the partial information setting and thus worth studying. Similarly, it is easy to show, that the \CM problem with $\ell$ manipulators reduces to \WM, \OM, and \SM problems with $\ell$ manipulators, since the former is a special case of the latter ones.

\begin{observation}\label{cm_hard}
 The \CM problem with \el manipulators many-to-one reduces to \WM, \OM, and \SM problems with \el manipulators for all voting rules and for all positive integers \el.
\end{observation}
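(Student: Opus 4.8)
The plan is to produce one uniform, trivially polynomial-time reduction that realises \CM as the complete-information special case of each of the three problems, and then to check the resulting equivalence for \WM, \OM, and \SM one at a time. Given a \CM instance $(V,\el,c)$ with $V=(\succ_1,\dots,\succ_n)\in\LL(\CC)^n$ a profile of complete votes, I would output the instance $(\PP,\el,c)$ in which $\PP$ is $V$ itself, now read as a profile of partial orders (every linear order is in particular a partial order). The single structural fact that drives the whole argument is that a complete order, viewed as a partial order, has exactly one linear extension, namely itself; consequently $\PP$ admits the unique extension $V$. The map is computable in polynomial time and preserves $\el$ and $c$, so it remains only to verify equivalence of the yes-instances.

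For \WM and \SM the verification is immediate once uniqueness of the extension is in hand. In the \WM instance we ask whether there are manipulator votes $(\succ_i')_{i\in[\el]}$ together with \emph{some} extension of $\PP$ on which $c$ wins; since the only extension is $V$, this is exactly the question whether $(\succ_i')_{i\in[\el]}$ can be chosen so that $r(V,(\succ_i')_{i\in[\el]})=c$, i.e.\ the original \CM question. For \SM we instead require $c$ to win in \emph{every} extension, but again the unique extension is $V$, so the condition collapses to the same \CM statement. Hence $(V,\el,c)$ is a yes-instance of \CM if and only if $(\PP,\el,c)$ is a yes-instance of \WM, and likewise for \SM.

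The step I expect to be the main obstacle is \OM, because the trivial reduction interacts badly with the vacuous clause in the definition of an opportunistic profile. With the unique extension $V$, if \CM is a yes-instance then $V$ is viable and any \CM-winning manipulator profile is $(r,c)$-opportunistic, so \OM is a yes-instance; but if \CM is a no-instance then $V$ is \emph{not} viable, there are no viable extensions at all, and \emph{every} manipulator profile is vacuously opportunistic, again making \OM a yes-instance. Thus the identity map sends every instance to an \OM yes-instance and fails to preserve no-instances, so \OM genuinely requires more than the complete-profile construction. To repair this I would not keep $\PP$ complete: instead I would adjoin a small gadget of partial votes carrying one or two undetermined pairs, engineered so that the resulting partial profile always has at least one viable extension, and so that when \CM fails there are two viable extensions whose winning manipulator profiles conflict (no single manipulator vote wins in both), whereas when \CM succeeds all viable extensions admit a common winning profile. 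Designing such a gadget so that it forces this ``conflicting viable extensions on no-instances'' behaviour while remaining \emph{rule-independent} is the delicate point; it is precisely this difficulty that the paper's dedicated per-rule \OM hardness theorems (\Cref{thm:kappOM,thm:kvetoOM,thm:bucklinOM,thm:bordaOM,thm:maximinOM,thm:copelandOM}) are built to overcome, and any successful proof of the \OM half of the observation will have to route through a construction of this flavour rather than through the bare special-case embedding that suffices for \WM and \SM.
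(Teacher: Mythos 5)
Your treatment of \WM and \SM coincides exactly with the paper's: the paper's entire proof of this observation is the single sentence that every instance of \CM is also an equivalent instance of \WM, \OM, and \SM — that is, the identity embedding of the complete votes as partial votes — and your uniqueness-of-the-extension argument is precisely the (correct) verification that the paper leaves implicit for those two problems.

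The interesting part of your proposal is the \OM case, and there you have put your finger on a real defect in the paper's argument rather than introduced one of your own. As you observe, under the identity embedding a \NO-instance of \CM produces a partial profile whose unique extension is not viable, so every manipulator profile is vacuously $(r,c)$-opportunistic and the resulting \OM instance is a \YES-instance; the map therefore fails to preserve \NO-instances and is not a many-to-one reduction. The paper's one-line proof of \Cref{cm_hard} does not address this, even though the vacuous-\YES phenomenon is explicitly pointed out earlier in the same chapter. Your proposed repair — a partial-vote gadget that guarantees viable extensions which demand conflicting manipulator votes exactly when \CM fails — is the right shape of fix, but you do not construct it, and doing so in a rule-independent way is genuinely nontrivial; the paper's dedicated per-rule \OM theorems are constructions of this flavour, but they establish \coNPH{}ness by reduction from $\overline{\text{X3C}}$ rather than \NPH{}ness from \CM, so they do not substitute for the missing gadget either. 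In sum, your proposal proves the \WM and \SM halves of the observation, and for the \OM half it correctly refutes the paper's stated justification without supplying a replacement; the practical damage is limited, since the paper appears to invoke this observation only to transfer \CM hardness to \SM (and \WM), never to \OM, for which it proves hardness directly.
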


\begin{proof}
 Follows from the fact that every instance of the \CM problem is also an equivalent instance of the \WM, \OM, and \SM problems.
\end{proof}

Finally, we note that the \CM problem with $\ell$ manipulators can be reduced to the \WM problem with just one manipulator, by introducing $\ell-1$ empty votes. These votes can be used to witness a good extension in the forward direction. In the reverse direction, given an extension where the manipulator is successful, the extension can be used as the manipulator's votes. This argument leads to the following observation.

\begin{observation}\label{cm_hard_one}
 The \CM problem with \el manipulators many-to-one reduces to the \WM problem with one manipulator for every voting rule and for every positive integer \el.
\end{observation}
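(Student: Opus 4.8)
The plan is to exhibit a many-to-one reduction from \CM with $\el$ manipulators to \WM with a single manipulator. Given an instance of \CM consisting of a set of complete votes $\succ$ over candidates $\CC$, the number of manipulators $\el$, and a distinguished candidate $c$, I would construct a \WM instance $(\PP, 1, c)$ as follows. The partial profile $\PP$ consists of the complete votes in $\succ$ together with $\el - 1$ empty partial votes (that is, partial orders specifying no pairwise comparisons at all), and the single \WM manipulator accounts for the $\el$-th manipulator. The candidate $c$ and the voting rule $r$ are carried over unchanged. This construction is clearly computable in polynomial time, so the only substantive work is to verify equivalence of the two instances.

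For the forward direction, suppose the \CM instance is a \YES-instance, witnessed by manipulative votes $\succ_1, \ldots, \succ_\el \in \LL(\CC)$ with $r(\succ, \succ_1, \ldots, \succ_\el) = c$. In the \WM instance, I would have the single manipulator cast $\succ_1$, and choose the extension of $\PP$ that completes each of the $\el - 1$ empty votes to $\succ_2, \ldots, \succ_\el$ respectively (the complete votes of $\succ$ extend to themselves). Under this extension the full profile is exactly $(\succ, \succ_1, \ldots, \succ_\el)$, so $c$ wins, and hence there exists an extension in which the manipulator makes $c$ win. This shows the \WM instance is a \YES-instance.

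For the reverse direction, suppose the \WM instance is a \YES-instance. Then there is a vote $\succ_1$ for the manipulator and an extension of $\PP$ in which $c$ wins. Since the complete votes in $\succ$ have unique (trivial) extensions, this extension completes the $\el - 1$ empty votes to some linear orders $\succ_2, \ldots, \succ_\el \in \LL(\CC)$, and the resulting profile is $(\succ, \succ_1, \succ_2, \ldots, \succ_\el)$ with $r(\succ, \succ_1, \ldots, \succ_\el) = c$. Taking $\succ_1, \ldots, \succ_\el$ as the manipulators' votes witnesses that the \CM instance is a \YES-instance. Combining both directions gives the required many-to-one reduction.

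The proof is essentially routine once the construction is fixed, so there is no serious obstacle; the only point requiring a little care is ensuring that the empty partial votes genuinely impose no constraints, so that \emph{any} tuple of linear orders on the $\el-1$ slots is a valid extension, which is precisely what lets the \WM extension simulate an arbitrary choice of the remaining $\el - 1$ manipulative votes. One should also confirm that the definitions in the excerpt allow a partial vote to be the empty relation (used already in the proof of \Cref{pw_hard}), and that ``unique winner'' is handled consistently across \CM and \WM so that the equivalence is exact.
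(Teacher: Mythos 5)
Your proof is correct and takes essentially the same approach as the paper: pad the profile with $\el-1$ partial votes whose extensions stand in for the missing manipulators, and let the single \WM manipulator supply the remaining vote. The only (cosmetic) difference is that the paper's written proof pads with $\el-1$ copies of the partial vote $c\succ\text{others}$ rather than empty votes, even though its surrounding prose describes exactly your empty-vote construction; your choice is if anything slightly cleaner, since it avoids even the mild assumption that a successful coalition can be taken to rank $c$ first.
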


\begin{proof}
 Let $(\PP, \el, c)$ be an instance of \CM. Let $\QQ$ be the set of consisting of $\el-1$ many copies of partial vote $\{c\succ \text{others}\}$. Clearly the \WM instance $(\PP\cup\QQ, 1, c)$ is equivalent to the \CM instance $(\PP, \el, 1)$.
\end{proof}

This observation can be used to derive the hardness of \WM even for one manipulator whenever the hardness for \CM is known for any fixed number of manipulators (for instance, this is the case for the voting rules such as Borda, maximin and Copeland). However, determining the complexity of \WM with one manipulator requires further work for voting rules where \CM is polynomially solvable for any number of manipulators (such as $k$-approval, Plurality, Bucklin, and so on).

\section{Hardness Results for \WM, \OM, and \SM Problems}\label{sec:hard}

In this section, we present our hardness results. While some of our reductions are from the \textsc{Possible Winner} problem, the other reductions in this section are from the \textsc{Exact Cover by 3-Sets} problem, also referred to as X3C. This is a well-known \NPC{}~\cite{garey1979computers} problem, and is defined as follows.

\begin{definition}[Exact Cover by 3-Sets (X3C)] Given a set $\UU$ and a collection $\SS = \{S_1,S_2, \dots, S_t\}$ of $t$ subsets of $ \UU$ with $|S_i|=3 ~\forall i=1, \dots, t,$ does there exist a $\TT\subset\SS$ with $|\TT|=\frac{|\UU|}{3}$ such that $\cup_{X\in \TT} X = \UU$?
\end{definition}

We use $\overline{\text{X3C}}$ to refer to the complement of X3C, which is to say that an instance of $\overline{\text{X3C}}$ is a \YES instance if and only if it is a \NO instance of X3C. The rest of this section is organized according to the problems being addressed.

\subsection{Weak Manipulation Problem} 

To begin with, recall that the \CM problem is \NPC for the Borda~\cite{davies2011complexity,betzler2011unweighted}, maximin~\cite{xia2009complexity}, and Copeland$^\alpha$~\cite{faliszewski2008copeland,FaliszewskiHHR09,faliszewski2010manipulation} voting rules for every rational $\alpha\in[0,1]\setminus\{0.5\}$, when we have two manipulators. Therefore, it follows from~\Cref{cm_hard_one} that the \WM problem is \NPC for the Borda, maximin, and Copeland$^\alpha$ voting rules for every rational $\alpha\in[0,1]\setminus\{0.5\}$, even with one manipulator. 

For the $k$-approval and $k$-veto voting rules, we reduce from the corresponding \PW problems. While it is natural to start from the same voting profile, the main challenge is in undoing the advantage that the favorite candidate receives from the manipulator's vote, in the reverse direction.

\subsubsection{Result for the $k$-Approval Voting Rule}

We begin with proving that the \WM problem is \NP{}-complete for the $k$-approval voting rule even with one manipulator and at most $4$ undetermined pairs per vote.

\begin{theorem}\label{wm_k_hard}
The \WM problem is \NP{}-complete for the $k$-approval voting rule even with one manipulator for any constant $k>1$, even when the number of undetermined pairs in each vote is no more than $4$.
\end{theorem}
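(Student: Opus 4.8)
The plan is to first note membership in \NP{} and then prove hardness by a polynomial-time reduction from the \PW problem for the $k$-approval rule, which is \NP{}-hard even when every vote leaves at most four pairs of candidates unordered~\cite{xia2008determining}. Membership is immediate: a certificate consists of the single manipulator's linear vote together with a linear extension of $\mathcal{P}$, both of polynomial size, and checking whether $c$ wins under $k$-approval in the resulting complete profile takes polynomial time. So the work is in the reduction, and the guiding difficulty (flagged in the preamble to this section) is that the one manipulator necessarily hands $c$ an extra approval, which must be neutralized in the reverse direction.

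For hardness I would start from a \PW instance with candidate set $\mathcal{C}$, partial profile $\mathcal{P}$, and target $c$, in which each vote has at most four undetermined pairs, and build a \WM instance on $\mathcal{C}' = \mathcal{C} \uplus D$ with $D = \{d_1,\dots,d_{k-1}\}$ dummy candidates (note $k-1 \ge 1$ since $k>1$) and exactly one manipulator. The non-manipulator profile has two blocks. The first block is $\mathcal{P}$, where each partial vote is extended to $\mathcal{C}'$ by ranking all of $D$ below every candidate of $\mathcal{C}$; this fixes every pair incident to a dummy, adds no new undetermined pairs, and hence keeps every vote at most four undetermined pairs. The second block $W$ consists of complete votes obtained from \Cref{score_gen}, applied with controlled candidates $\mathcal{C}$ and low-score set $D$, realizing the target scores $X_a = 1$ for every $a \in \mathcal{C}\setminus\{c\}$ and $X_c = 0$. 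Thus from $W$ alone every real competitor leads $c$ by exactly one point, while every dummy sits strictly below the offset $\lambda$.

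The key design idea is that this uniform one-point head start granted to the competitors by $W$ is exactly what cancels the point the manipulator is forced to give $c$. I would first show that the manipulator's vote may, without loss of generality, be taken to approve $c$ together with $d_1,\dots,d_{k-1}$: approving $c$ can only help, and since the dummies are guaranteed losers (a dummy's total score stays below $\lambda+1$ even after a manipulator approval, whereas $c$'s total score is at least $\lambda+1$), spending the remaining $k-1$ approvals on dummies rather than on real competitors weakly dominates every other choice. With this canonical vote fixed, in any extension $E$ the final score of $c$ is $\lambda+1+s_E(c)$ and that of a competitor $a$ is $\lambda+1+s_E(a)$, where $s_E$ denotes the original \PW score coming from $\mathcal{P}$; consequently $c$ is the unique \WM winner in $E$ if and only if $s_E(c) > s_E(a)$ for all $a \in \mathcal{C}\setminus\{c\}$, i.e.\ if and only if $E$ witnesses a \PW win for $c$. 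Both directions of the equivalence then follow, the forward direction using the good \PW extension verbatim and the reverse direction invoking the dominance argument to replace an arbitrary successful manipulator vote by the canonical one; \Cref{pw_hard} guarantees that the overall \WM-to-\PW correspondence is the intended one.

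The main obstacle is precisely the reverse direction, where the manipulator's unavoidable advantage for $c$ must be undone while simultaneously ruling out any clever reallocation of its $k-1$ free approvals; both issues are resolved by the combination of the $+1$ baseline from \Cref{score_gen} and the guaranteed-loser dummies, which together force the manipulator's vote into canonical form and make the \WM winning condition collapse onto the \PW winning condition. The only remaining bookkeeping is the undetermined-pair count: since the sole partial votes are the dummy-extended votes of $\mathcal{P}$ and every vote of $W$ is complete, the per-vote bound of four is inherited directly from the source \PW instance, which establishes the claimed bound.
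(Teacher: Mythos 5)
Your proof is correct and follows essentially the same route as the paper's: both reduce from \PW for $k$-approval with at most four undetermined pairs per vote, cancel the manipulator's forced approval of $c$ by giving every real competitor a one-point head start through complete padding votes, and use dummy candidates ranked at the bottom of every partial vote to absorb the manipulator's remaining $k-1$ approvals. The only differences are cosmetic: you realize the offsets via \Cref{score_gen} with $k-1$ dummies, whereas the paper builds the padding votes $c_i \succ d_i \succ \text{others}$ explicitly and uses $(m+1)(k-1)$ dummies, with the same cancellation argument in both directions.
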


\begin{proof}
For simplicity of presentation, we prove the theorem for $2$-approval.
We reduce from the \PW problem for $2$-approval which is \NP{}-complete~\cite{xia2008determining}, even when the number of undetermined pairs in each vote is no more than $4$. 
Let $\mathcal{P}$ be the set of partial votes in a \PW instance, and 
let ${\mathcal C} = \{c_1, \ldots, c_m, c\}$ be the set of candidates, 
where the goal is to check if there is an extension of $\mathcal{P}$ that makes $c$ win. For developing the instance of \WM, we need to ``reverse'' any advantage that the candidate $c$ obtains from the vote of the manipulator. Notice that the most that the manipulator can do is to increase the score of $c$ by one. Therefore, in our construction, we \textit{``artificially''} increase the score of all the other candidates by one, so that despite of the manipulator's vote, $c$ will win the new election if and only if it was a possible winner in the \PW instance. To this end, we introduce $(m+1)$ many \textit{dummy} candidates $d_1, \ldots, d_{m+1}$ and the complete votes:
$$w_i = c_i \succ d_i \succ \text{others},\text{ for every } i\in \{1, \dots, m\}$$
Further, we extend the given partial votes of the \PW instance to force the dummy candidates 
to be preferred least over the rest - by defining, for every $v_i \in \mathcal{P}$, the corresponding partial vote $v_i^{\prime}$ as follows.
$$v_i^\prime = v_i \cup \{{\mathcal C} \succ \{d_1, \ldots, d_{m+1}\}\}.$$ 
This ensures that all the dummy candidates do not receive any score 
from the modified partial votes corresponding to the partial votes of the \PW instance. Notice that since the number of undetermined pairs in $v_i$ is no more than $4$, the number of undetermined pairs in $v_i^\prime$ is also no more than $4$. Let $({\mathcal C^\prime},\mathcal{Q},c)$ denote this constructed \WM instance. We claim that the two instances are equivalent.

In the forward direction, suppose $c$ is a possible winner with respect to $\mathcal{P}$, and let $\overline{\mathcal{P}}$ be an extension where $c$ wins. Then it is easy to see that the manipulator can make $c$ win in some extension by placing $c$ and $d_{m+1}$ in the first two positions of her vote (note that the partial score of $d_{m+1}$ is zero in $\mathcal{Q}$). Indeed, consider the extension of $\mathcal{Q}$ obtained by mimicking the extension $\overline{\mathcal{P}}$ on the ``common'' partial votes, $\{v_i^\prime ~|~ v_i \in \mathcal{P}\}$. Notice that this is well-defined since $v_i$ and $v_i^\prime$ have exactly the same set of incomparable pairs. In this extension, the score of $c$ is strictly greater than the scores of all the other candidates, since the scores of all candidates in $\mathcal{C}$ is exactly one more than their scores in $\mathcal{P}$, and all the dummy candidates have a score of at most one. 

In the reverse direction, notice that the manipulator puts the candidates $c$ and $d_{m+1}$ in the top two positions without loss of generality. Now suppose the manipulator's vote $c\succ d_{m+1}\succ \text{others}$ makes $c$ win the election for an extension $\overline{\mathcal{Q}}$ of $\mathcal{Q}$. Then consider the extension $\overline{\mathcal{P}}$ obtained by restricting $\overline{\mathcal{Q}}$ to $\mathcal{C}$. Notice that the score of each candidate in $\mathcal{C}$ in this extension is one less than their scores in $\mathcal{Q}$. Therefore, the candidate $c$ wins this election as well, concluding the proof. 

The above proof can be imitated for any other constant values of $k$ by reducing it from the \PW problem for $k$-approval and introducing $(m+1)(k-1)$ dummy candidates.
\end{proof}

\subsubsection{Result for the $k$-Veto Voting Rule}

\begin{theorem}\label{k_veto_wm}
 The \WM problem for the $k$-veto voting rule is \NP{}-complete even with one manipulator for any constant $k>1$.
\end{theorem}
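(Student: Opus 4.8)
The plan is to reduce from the \PW problem for the $k$-veto rule, which is \NP-complete for every constant $k>1$ (being a pure positional scoring rule other than plurality and veto). Membership in \NP{} is immediate: a certificate consists of the manipulator's vote together with the witnessing extension of the non-manipulator profile, which can be checked in polynomial time. The substance is therefore the \NP-hardness, and the main difficulty, exactly as in \Cref{wm_k_hard}, is to cancel out the advantage that the single manipulator can confer on the distinguished candidate $c$. In $k$-veto this advantage is of a different nature than in $k$-approval: the manipulator cannot raise $c$'s score, but it can lower the scores of up to $k$ of $c$'s rivals by vetoing them. The key idea is to \emph{force the manipulator to squander all $k$ of its vetoes on dummy candidates}, so that no genuine rival of $c$ is ever touched by the manipulator's ballot.

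Concretely, given a \PW instance with candidate set $A=\{c,a_1,\dots,a_m\}$ and partial profile $\mathcal{P}$ (where we may assume, by the structure of the \PW reduction, that $c$ occupies a fixed position in every partial vote, so that its non-manipulator score $S_c$ is a constant independent of the extension), I would build a \WM instance on $A\cup D$ with $D=\{d_1,\dots,d_k\}$ a set of $k$ fresh dummies and a single manipulator whose preferred candidate is $c$. Each partial vote $v\in\mathcal{P}$ is extended to $v'$ by adding the constraints $D\succ A$, so that the dummies are never vetoed within these votes and each such vote still casts its $k$ vetoes inside $A$ exactly as in $\mathcal{P}$; this guarantees that the scores of $a_1,\dots,a_m$ in any extension mirror their scores in the corresponding extension of $\mathcal{P}$. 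I would then add $|S_c|$ complete ``calibration'' votes, each vetoing precisely the $k$ candidates of $D$ (and hence leaving all of $A$ untouched), so that every dummy ends up with non-manipulator score exactly $S_c$, tied with $c$. The number of candidates and votes added is polynomial, so this is a valid polynomial-time reduction.

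With this gadget the manipulator's ballot is essentially forced: if it fails to veto some $d_i$, then $d_i$ retains score $S_c$ and ties $c$, so $c$ cannot be the unique winner; since there are exactly $k$ dummies and the manipulator has exactly $k$ vetoes, it must spend all of them on $D$ and may veto no member of $A$. After this forced vote every $d_i$ has final score $S_c-1<S_c$, and $c$ (never vetoed by the manipulator) keeps score $S_c$, so $c$ is the unique winner of some extension of the \WM instance if and only if some extension of $\mathcal{P}$ yields $S_c>s_{\mathcal{P}}(a_i)$ for all $i$, i.e.\ if and only if $c$ is a possible winner of the \PW instance. The forward and backward directions then follow by transporting extensions between the two instances, just as in \Cref{wm_k_hard}. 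The step that requires the most care is the calibration argument: one must verify that every constructed vote vetoes exactly $k$ candidates, that the counts work out so that each dummy is tied with $c$ at precisely $S_c$ rather than above or below it, and that the assumption of a fixed-position $c$ (hence constant $S_c$) genuinely holds for the \PW instance being reduced from; this last point is the main obstacle, since a varying $S_c$ would break the clean ``tie-forces-veto'' mechanism. The construction uses $k$ dummies uniformly, so it extends verbatim to every constant $k>1$.
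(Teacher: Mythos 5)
Your proposal is correct and follows essentially the same route as the paper's proof: a reduction from \PW for $k$-veto in which $c$'s position (and hence its score $S_c$) is fixed in every partial vote, $k$ dummy candidates are pinned above the original candidates so that the partial votes never veto them, and their scores are calibrated to tie with $c$ so that the single manipulator is forced to spend all $k$ vetoes on the dummies. The only cosmetic difference is that you build the calibration votes explicitly (each vetoing exactly the $k$ dummies), whereas the paper invokes its score-adjustment lemma (\Cref{score_gen}) and therefore carries one extra low-scoring dummy $d$; both variants are sound.
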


\begin{proof}
We reduce from the \PW problem for the $k$-veto voting rule which is known to be \NPC~\cite{betzler2009towards}. Let $\mathcal{P}$ be the set of partial votes in a \PW problem instance, and 
let ${\mathcal C} = \{c_1, \ldots, c_m, c\}$ be the set of candidates, 
where the goal is to check if there is an extension that makes $c$ win with respect to $k$-veto. We assume without loss of generality that $c$'s position is fixed in all the partial votes (if not, then we fix the position of $c$ as high as possible in every vote).

We introduce $k+1$ many \textit{dummy} candidates $d_1, \ldots, d_k, d$. The role of the first $k$ dummy candidates is to ensure that the manipulator is forced to place them at the ``bottom $k$'' positions of her vote, so that all the original candidates get the same score from the additional vote of the manipulator. The most natural way of achieving this is to ensure that the dummy candidates have the same score as $c$ in any extension (note that we know the score of $c$ since $c$'s position is fixed in all the partial votes). This would force the manipulator to place these $k$ candidates in the last $k$ positions. Indeed, doing anything else will cause these candidates to tie with $c$, even when there is an extension of $\mathcal{P}$ that makes $c$ win.

To this end, we begin by placing the dummy candidates in the top $k$ positions in all the partial votes. Formally, we modify every partial vote as follows:
$$w = d_i \succ \text{others},\text{ for every } i\in \{1, \dots, k\}$$
At this point, we know the scores of $c$ and $d_i,\text{ for every } i\in \{1, \dots, k\}$. Using \Cref{score_gen}, we add complete votes such that the final score of $c$ is the same with the score of every $d_i$ and the score of $c$ is strictly more than the score of $d$. The relative score of every other candidate remains the same. This completes the description of the construction. We denote the augmented set of partial votes by $\overline{\mathcal{P}}$. 

We now argue the correctness. In the forward direction, if there is an extension of the votes that makes $c$ win, then we repeat this extension, and the vote of the manipulator puts the candidate $d_i$ at the position $m+i+2$; and all the other candidates in an arbitrary fashion. Formally, we let the manipulator's vote be:
$$\mathfrak{v} = c \succ c_1 \succ \cdots \succ c_m \succ d \succ d_1 \succ \cdots \succ d_k.$$
By construction $c$ wins the election in this particular setup. In the reverse direction, consider a vote of the manipulator and an extension $\overline{\mathcal{Q}}$ of $\overline{\mathcal{P}}$ in which $c$ wins. Note that the manipulator's vote necessarily places the candidates $d_i$ in the bottom $k$ positions --- indeed, if not, then $c$ cannot win the election by construction. We extend a partial vote $w \in \mathcal{P}$ by mimicking the extension of the corresponding partial vote $w^\prime \in \overline{\mathcal{P}}$, that is, we simply project the extension of $w^\prime$ on the original set of candidates $\mathcal{C}$. Let $\mathcal{Q}$ denote this proposed extension of $\mathcal{P}$. We claim that $c$ wins the election given by $\mathcal{Q}$. Indeed, suppose not. Let $c_i$ be a candidate whose score is at least the score of $c$ in the extension $\mathcal{Q}$. Note that the scores of $c_i$ and $c$ in the extension  $\overline{\mathcal{Q}}$ are exactly the same as their scores in $\mathcal{Q}$, except for a constant offset --- importantly, their scores are offset by the same amount. This implies that the score of $c_i$ is at least the score of $c$ in $\overline{\mathcal{Q}}$ as well, which is a contradiction. Hence, the two instances are equivalent.
\end{proof}

\subsubsection{Result for the Bucklin Voting Rule}

We next prove, by a reduction from X3C, that the \textsc{Weak Manipulation} problem for the Bucklin and simplified Bucklin voting rules is \NPC{} even with one manipulator and at most $16$ undetermined pairs per vote.

\begin{theorem}\label{wm_bucklin_hard}
The \textsc{Weak Manipulation} problem is \NPC{} for Bucklin, simplified Bucklin, Fallback, and simplified Fallback voting rules, even when we have only one manipulator and the number of undetermined pairs in each vote is no more than $16$. 
\end{theorem}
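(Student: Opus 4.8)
The plan is to establish membership in \NP{} and then prove \NP{}-hardness by a reduction from \textsc{Exact Cover by 3-Sets} (X3C). Membership is immediate: a nondeterministic algorithm guesses the single manipulator's linear vote together with one linear extension of the partial profile $\PP$, and then checks in polynomial time whether $c$ is the (simplified) Bucklin / Fallback winner of the resulting complete profile; since \WM asks for the existence of \emph{some} extension, this certificate suffices. I would first carry out the argument in detail for simplified Bucklin and then indicate the (minor) modifications needed for Bucklin, Fallback, and simplified Fallback.

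For the hardness reduction, let $(\UU,\SS)$ with $\UU=\{u_1,\dots,u_{3q}\}$ and $\SS=\{S_1,\dots,S_t\}$, $|S_i|=3$, be an instance of X3C. First I would build the candidate set $\CC=\UU\cup\{c\}\cup D$, where $D$ is a small set of dummy/padding candidates used only to control the majority thresholds at each Bucklin level. The core gadget is one partial vote $\pi_i$ per set $S_i$: in $\pi_i$ the relative order of almost all candidates is fixed, and the only incomparabilities are a constant number of pairs that involve the three elements of $S_i$ (and a bounded number of dummies). These incomparabilities are arranged so that every linear extension of $\pi_i$ corresponds to a binary choice --- either the three candidates of $S_i$ are all pushed \emph{into} the critical top-$\ell$ block (``$S_i$ is selected'') or all kept \emph{out} of it (``$S_i$ is not selected''). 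A direct count shows this can be engineered with at most $16$ undetermined pairs in each vote, which is the quantitative claim in the statement.

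Next I would add a collection of complete filler votes, constructed explicitly, whose sole purpose is to set the baseline top-$\ell$ appearance counts so that the arithmetic of majorities encodes exact covering. Concretely, the counts should be calibrated so that: (i) $c$ attains a strict majority within the top $\ell$ positions precisely when the manipulator ranks $c$ in the top block and exactly $q$ of the set-gadgets are selected; and (ii) each universe candidate $u_j$ stays strictly below majority within the top $\ell$ positions precisely when $u_j$ is covered \emph{at most once} by the selected sets, while reaching or exceeding majority (thus tying with or beating $c$) as soon as it is covered twice or remains uncovered at the relevant level. With this calibration the equivalence follows: from an exact cover I select the corresponding $q$ extensions and let the manipulator rank $c$ first, making $c$ the unique winner; conversely, a successful manipulator vote plus extension can select at most $q$ sets within the score budget, and for $c$ to reach majority while every $u_j$ falls short, the selected sets must cover each element exactly once, i.e. form an exact cover. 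The single manipulator contributes only one extra top-position appearance to $c$, which is exactly what is absorbed into the baseline counts.

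Finally, I would transfer the result to the remaining three rules. For Bucklin (as opposed to simplified Bucklin) the same gadgets work after re-tuning the filler votes so that $c$ not only crosses the majority line but does so with the strictly largest top-$\ell$ count; and since every constructed vote is a \emph{complete} ranking in which all candidates are approved, Fallback and simplified Fallback coincide with Bucklin and simplified Bucklin respectively on these profiles, so the identical reduction applies verbatim. The hard part, and the step I expect to demand the most care, is the simultaneous calibration in the previous paragraph: the padding counts, the critical level $\ell$, and the majority parities must be chosen so that \emph{over-covering} forces some $u_j$ to a majority (killing $c$) and \emph{under-covering} prevents $c$ from reaching a majority, all while keeping the number of incomparabilities per vote bounded by the constant $16$ and ensuring that the lone manipulator's vote cannot circumvent the exact-cover requirement by aiding a universe candidate.
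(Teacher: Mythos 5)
Your plan follows the paper's proof essentially verbatim: the paper likewise reduces from X3C, uses one partial vote per set $S_i$ in which a block $\XX$ of four auxiliary candidates is left incomparable to the four candidates $\{c\}\cup S_i$ (giving exactly $4\times 4=16$ undetermined pairs and forcing that whenever $c$ enters the critical top block, all of $S_i$ does too), adds calibrated complete votes so that $c$ reaches a majority at the critical level only when $\nfrac{m}{3}$ gadgets are ``selected'' while each $u_j$ can tolerate at most one extra appearance there, and transfers the result to Bucklin and (simplified) Fallback exactly as you describe. The one imprecision is your claim in (ii) that an \emph{uncovered} $u_j$ reaches a majority --- it does not, and no calibration could make it do so without also breaking the covered-once case; what actually closes the argument (and what your final sentence correctly falls back on) is that each selection covers three elements, each element may be covered at most once, and at least $\nfrac{m}{3}$ selections are forced, which together pin down an exact cover.
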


\begin{proof}
We reduce the X3C problem to \textsc{Weak Manipulation} for simplified Bucklin. Let $(\UU = \{u_1, \ldots, u_m\}, \SS:= \{S_1,S_2, \dots, S_t\})$ be an instance of X3C, where each $S_i$ is a subset of $\UU$ of size three. We construct a \WM instance based on $(\UU,\SS)$ as follows.
$$ \text{Candidate set: } \mathcal{C} = \WW\cup \XX \cup \DD\cup \UU\cup \{c,w, a, b\},\text{ where } |\WW|=m-3, |\XX|=4, |\DD|=m+1$$ 
We first introduce the following partial votes $\PP$ in correspondence with the sets in the family as follows.
$$ \WW\succ \XX\succ S_i\succ c\succ (\UU\setminus S_i)\succ \DD \setminus \left(\{\XX \times (\{c\}\cup S_i)\}\right), \forall i\le t$$
Notice that the number of undetermined pairs in every vote in $\PP$ is $16$. We introduce the following additional complete votes $\QQ$:
\begin{itemize}
 \item $t$ copies of $\UU\succ c\succ \text{others}$
 \item $\nfrac{m}{3}-1$ copies of $\UU\succ a\succ c\succ \text{others}$
 \item $\nfrac{m}{3}+1$ copies of $\DD\succ b\succ \text{others}$
\end{itemize}
The total number of voters, including the manipulator, is $2t+\nfrac{2m}{3}+1$. Now we show equivalence of the two instances. 

In the forward direction, suppose we have an exact set cover $\TT\subset \SS$. Let the vote of the manipulator $\vvv$ be $c\succ D\succ \text{others}$. We consider the following extension $\overline{\PP}$ of $\PP$. 

$$\WW \succ S_i\succ c\succ \XX\succ (\UU\setminus S_i)\succ \DD$$

On the other hand, if $S_i\in\SS\setminus\TT$, then we have:
$$\WW \succ \XX\succ S_i\succ c\succ (\UU\setminus S_i)\succ \DD$$

We claim that $c$ is the unique simplified Bucklin winner in the profile $(\overline{\PP},\WW,\vvv)$. Notice that the simplified Bucklin score of $c$ is $m+1$ in this extension, since it appears in the top $m+1$ positions in the $m/3$ votes corresponding to the set cover, $t$ votes from the complete profile $\QQ$ and one vote $\vvv$ of the manipulator. For any other candidate $u_i\in \UU$, $u_i$ appears in the top $m+1$ positions once in $\overline{\PP}$ and $t+\frac{m}{3}-1$ times in $\QQ$. Thus, $u_i$ does not get majority in top $m+1$ positions making its simplified Bucklin score at least $m+2$. Hence, $c$ is the unique simplified Bucklin winner in the profile $(\overline{\PP},\WW,\vvv)$.
Similarly, the candidate $w_1$ appears only $t$ times in the top $m+1$ positions. 
The same can be argued for the remaining candidates in $\DD, \WW,$ and $w$. 

In the reverse direction, suppose the \WM is a \YES instance. We may assume without loss of generality that the manipulator's vote $\vvv$ is $c\succ \DD\succ \text{others}$, since the simplified Bucklin score of the candidates in $\DD$ is at least $2m$. Let $\overline{\PP}$ be the extension of $\PP$ such that $c$ is the unique winner in the profile $(\overline{\PP},\QQ,\vvv)$. As $w_1$ is ranked within top $m+2$ positions in $t+\frac{m}{3}+1$ votes in $\QQ$, for $c$ to win, $c\succ w_{m-2}$ must hold in at least $\frac{m}{3}$ votes in $\overline{\PP}$. In those votes, all the candidates in $S_i$ are also within top $m+2$ positions. Now if any candidate in $\UU$ is within top $m+1$ positions in $\overline{\PP}$ more than once, then $c$ will not be the unique winner. Hence, the $S_i$'s corresponding to the votes where $c\succ w_{m-2}$ in $\overline{\PP}$ form an exact set cover. 

The reduction above also works for the Bucklin voting rule. Specifically, the argument for the forward direction is exactly the same as the simplified Bucklin above and the argument for the reverse direction is as follows. The candidate $w_1$ is ranked within top $m+2$ positions in $t+\frac{m}{3}+1$ votes in $\QQ$ and $c$ is never placed within top $m+2$ positions in any vote in $\QQ$. Hence, for $c$ to win, $c\succ w_{m-2}$ must hold in at least $\frac{m}{3}$ votes in $\overline{\PP}$. In those votes, all the candidates in $S_i$ are also within top $m$ positions. Notice that $c$ never gets placed within top $m$ positions in any vote in $(\overline{\PP},\QQ)$. Now if any candidate $x\in\UU$ is within top $m$ positions in $\overline{\PP}$ more than once, then $x$ gets majority within top $m$ positions and thus $c$ cannot win.

The result for the Fallback and simplified Fallback voting rules follow from the corresponding results for the Bucklin and simplified Bucklin voting rules respectively since every Bucklin and simplified Bucklin election is also a Fallback and simplified Fallback election respectively.
\end{proof}

\subsection{Strong Manipulation Problem} 

We know that the \CM problem is \NPC for the Borda, maximin, and Copeland$^\alpha$ voting rules for every rational $\alpha\in[0,1]\setminus\{0.5\}$, when we have two manipulators. Thus, it follows from~\Cref{cm_hard} that \SM is \NPH{} for Borda, maximin, and Copeland$^\alpha$ voting rules for every rational $\alpha\in[0,1]\setminus\{0.5\}$ for at least two manipulators. 

For the case of one manipulator, \SM turns out to be polynomial-time solvable for most other voting rules. For Copeland$^\alpha$, however, we show that the problem is \coNPH for every $\alpha\in[0,1]$ for a single manipulator, even when the number of undetermined pairs in each vote is bounded by a constant. This is achieved by a careful reduction from $\overline{\text{X3C}}$.

We have following intractability result for the \textsc{Strong Manipulation} problem for the Copeland$^\alpha$ rule with one manipulator and at most $10$ undetermined pairs per vote.

\begin{theorem}\label{sm_copeland_hard}
\textsc{Strong Manipulation} is \coNPH for Copeland$^\alpha$ voting rule for every $\alpha\in[0,1]$ even when we have only one manipulator and the number of undetermined pairs in each vote is no more than $10$. 
\end{theorem}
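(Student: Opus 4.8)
The plan is to reduce from $\overline{\text{X3C}}$, the complement of \textsc{Exact Cover by 3-Sets}, which is \coNPH. Given an X3C instance $(\UU,\SS)$ with $|\UU|=3k$ and $\SS=\{S_1,\dots,S_t\}$, I would build a single-manipulator \textsc{Strong Manipulation} instance for Copeland$^\alpha$ so that the manipulator can force the distinguished candidate $c$ to be the unique winner in \emph{every} extension if and only if $(\UU,\SS)$ has \emph{no} exact cover. The guiding observation is that with one manipulator the existential choice of the manipulative vote essentially collapses: I will design the instance so that a single canonical vote (placing $c$ at the top and the remaining candidates in a fixed order chosen to help $c$ and hurt its only serious rival) is optimal, and no other manipulator vote can do better. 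Consequently the genuine source of complexity is the universal quantifier over completions of the partial profile, which is where the \coNP-hardness will live: an adversary completing the partial votes is trying to find one extension in which $c$ fails, and I want such an extension to exist precisely when an exact cover exists.

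For the construction I would take the candidate set to consist of the universe elements $\UU$ (as ``element candidates''), the distinguished candidate $c$, a rival candidate $a$ that will be the only candidate capable of catching $c$, and a small number of auxiliary/guard candidates used to control Copeland scores and enforce the budget $k$. In correspondence with each set $S_i$ I would introduce one partial vote whose \emph{only} undetermined pairs involve $a$, the three element candidates of $S_i$, and one guard candidate; completing such a vote one way corresponds to ``selecting'' $S_i$ and the other way to ``not selecting'' it. Since each such vote leaves incomparable only a constant number of pairs (the three $a$-versus-$S_i$ pairs together with a few guard pairs), I can keep the number of undetermined pairs per vote at most $10$. I would then invoke \Cref{thm:mcgarvey} to add a block of complete votes that installs the baseline pairwise margins, tuned so that (i) every contest of $c$ is determined by the manipulator's vote and yields $c$ a fixed Copeland score $N$ in all extensions, (ii) an element candidate $u_j$ is defeated by $a$ in a given extension iff at least one selected set contains $u_j$, and (iii) selecting a set simultaneously helps $a$ against its three elements but costs $a$ against the guard candidates, so that $a$'s Copeland score reaches the threshold needed to tie or beat $c$ only when the selected sets both cover all of $\UU$ and number exactly $k$ --- that is, exactly when the selection is an exact cover. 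The parities of all installed margins must be fixed (as required by \Cref{thm:mcgarvey}) by the usual device of duplicating votes and adding a dummy element/set pair.

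The correctness argument then has two directions. If no exact cover exists, then under the canonical manipulator vote every completion leaves $a$ (and, by the score bookkeeping, every other candidate) strictly below $c$, so $c$ is the unique winner in all extensions and the instance is a \textsc{yes}-instance. Conversely, if an exact cover $\mathcal{T}$ exists, then for any manipulator vote the adversary completes the set-votes according to $\mathcal{T}$, boosting $a$ to at least $c$'s score and destroying $c$'s unique victory, so no manipulator vote succeeds in all extensions. The main obstacle I anticipate is item (iii): I must calibrate the baseline margins so that the adversary gains nothing from \emph{over-covering}, \emph{under-sized}, or \emph{redundant} selections --- only genuine exact covers may be ``bad'' --- while simultaneously keeping the per-vote undetermined count within $10$ and making the whole gadget behave uniformly for every $\alpha\in[0,1]$. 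The tie weight $\alpha$ complicates the threshold computations for $a$ and the guard candidates, so I expect to need separate but parallel margin settings that make the argument go through whether the critical contests end in wins or ties, which is the most delicate part of the bookkeeping.
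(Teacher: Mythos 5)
Your plan follows essentially the same route as the paper's proof: a reduction from X3C in which each set $S_i$ yields one partial vote with ten undetermined pairs controlling whether a designated rival gains against the three elements of $S_i$, McGarvey-type complete votes install the baseline pairwise margins, and a budget mechanism (the paper levies the cost on a $c$-versus-$w$ contest rather than on guards against the rival, but the accounting is the same) forces any extension that dethrones $c$ to encode an exact cover of size exactly $|\UU|/3$. The one place where the paper is substantially cleaner than what you anticipate is the treatment of $\alpha$: by arranging an odd total number of voters it guarantees that no pairwise contest can tie, so $\alpha$ never enters any score computation and the ``separate but parallel margin settings'' you flag as the most delicate part of the bookkeeping are unnecessary.
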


\begin{proof}
 We reduce X3C to \SM for Copeland$^\alpha$ rule. Let $(\mathcal{U} = \{u_1, \ldots, u_m\}, \mathcal{S}=\{S_1,S_2, \dots, S_t\})$ is an X3C instance. We assume, without loss of generality, $t$ to be an even integer (if not, replicate any set from $\mathcal{S}$). We construct a corresponding \WM instance for Copeland$^\alpha$ as follows.
 $$ \text{Candidate set } \mathcal{C} = \mathcal{U} \cup \{c, w, z, d\} $$
 Partial votes $\mathcal{P}$:
 $$ \forall i\le t, (\mathcal{U}\setminus S_i) \succ c\succ z\succ d\succ S_i\succ w \setminus \{ \{z, c\} \times (S_i\cup \{d,w\}) \}$$
 Notice that the number of undetermined pairs in every vote in $\PP$ is $10$. Now we add a set $\QQ$ of complete votes with $|\QQ|$ even and $|\QQ|=poly(m,t)$ using \Cref{thm:mcgarvey} to achieve the following margin of victories in pairwise elections.
 \begin{itemize}
  \item $D_\QQ(d,z) = D_\QQ(z,c) = D_\QQ(c,d) = D_\QQ(w,z) = 4t$
  \item $D_\QQ(u_i,d) = D_\QQ(c,u_i) = 4t ~\forall u_i\in \mathcal{U}$
  \item $D_\QQ(z,u_i) = t ~\forall u_i\in \mathcal{U}$
  \item $D_\QQ(c,w) = t-\frac{2q}{3}-2$
  \item $D_\QQ(u_i, u_{i+1 \pmod* q}) = 4t ~\forall u_i\in \mathcal{U}$
  \item $D_\QQ(a,b)=0$ for every $a,b\in\CC, a\ne b,$ not mentioned above
 \end{itemize}
 We have only one manipulator who tries to make $c$ winner. Notice that the number of votes in the \SM instance $(\PP\cup\QQ,1,c)$ including the manipulator's vote is odd (since $|\PP|$ and $|\QQ|$ are even integers). Therefore, $D_{\PP^*\cup\QQ\cup\{v^*\}}(a,b)$ is never zero for every $a,b\in\CC, a\ne b$ in every extension $\PP^*$ of $\PP$ and manipulators vote $v^*$ and consequently the particular value of $\alpha$ does not play any role in this reduction. Hence, we assume, without loss of generality, $\alpha$ to be zero from here on and simply use the term Copeland instead of Copeland$^\alpha$.
 
 Now we show that the X3C instance $(\mathcal{U},\mathcal{S})$ is a \YES instance if and only if the \SM instance $(\PP\cup\QQ,1,c)$ is a \NO instance (a \SM instance is a \NO instance if there does not exist a vote of the manipulator which makes $c$ the unique winner in every extension of the partial votes). We can assume without loss of generality that manipulator puts $c$ at first position and $z$ at last position in her vote $\vvv$.
  
 Assume that the X3C instance is a \YES instance. Suppose (by renaming) that $S_1, \dots, S_{\frac{m}{3}}$ forms an exact set cover. We claim that the following extension $\overline{\PP}$ of $\PP$ makes both $z$ and $c$ Copeland co-winners.
 
 Extension $\overline{\PP}$ of $\mathcal{P}$:
 $$ i\le \frac{m}{3}, (\mathcal{U}\setminus S_i) \succ c\succ z\succ d\succ S_i\succ w $$
 $$ i\ge \frac{m}{3} + 1, (\mathcal{U}\setminus S_i) \succ d\succ S_i\succ w\succ c\succ z $$
 
 We have summarize the pairwise margins between $z$ and $c$ and the rest of the candidates from the profile  $(\overline{\PP}\cup\QQ\cup\vvv)$ in \Cref{tbl:copeland}. The candidates $z$ and $c$ are the co-winners with Copeland score $(m+1)$.
 
 \begin{table}[htbp!]
  \centering
  {
   \renewcommand*{\arraystretch}{1.5}
  \begin{tabular}{|c|c|c|c|c|}\cline{1-2}\cline{4-5}
   $\CC\setminus\{z\}$ & $D_{\overline{\PP}\cup\QQ\cup\vvv}(z, \cdot)$ && $\CC\setminus\{c\}$ & $D_{\overline{\PP}\cup\QQ\cup\vvv}(c, \cdot)$ \\\cline{1-2}\cline{4-5}
   
   $c$ & $\ge 3t$ && $z, u_i\in\UU$ & $\ge 3t$ \\\cline{1-2}\cline{4-5}
   $w, d$ & $\le -3t$ && $w$ & $-1$ \\\cline{1-2}\cline{4-5}
   $u_i\in\UU$ & $1$ && $d$ & $\le -3t$ \\\cline{1-2}\cline{4-5}
  \end{tabular}}
  \caption{$D_{\overline{\PP}\cup\QQ\cup\vvv}(z, \cdot)$ and $D_{\overline{\PP}\cup\QQ\cup\vvv}(c, \cdot)$}\label{tbl:copeland}
 \end{table}
 
 For the other direction, notice that Copeland score of $c$ is at least $m+1$ since $c$ defeats $d$ and every candidate in $\mathcal{U}$ in every extension of $\PP$. Also notice that the Copeland score of $z$ can be at most $m+1$ since $z$ loses to $w$ and $d$ in every extension of $\PP$. Hence the only way $c$ cannot be the unique winner is that $z$ defeats all candidates in $\mathcal{U}$ and $w$ defeats $c$. 
 
 This requires $w\succ c$ in at least $t-\frac{m}{3}$ extensions of $\mathcal{P}$. We claim that the sets $S_i$ in the remaining of the extensions where $c\succ w$ forms an exact set cover for $(\UU,\SS)$. Indeed, otherwise some candidate $u_i\in \mathcal{U}$ is not covered. Then, notice that $u_i \succ z$ in all $t$ votes, making $D(z,u_i) = -1$.
\end{proof}

\subsection{Opportunistic Manipulation Problem}

All our reductions for the \coNPH{}ness for \OM start from $\overline{\text{X3C}}$. We note that all our hardness results  hold even when there is only one manipulator.  Our overall approach is the following. We engineer a set of partial votes in such a way that the manipulator is forced to vote in a limited number of ways to have any hope of making her favorite candidate win. For each such vote, we demonstrate a viable extension where the vote fails to make the candidate a winner, leading to a \NO{} instance of \OM. These extensions rely on the existence of an exact cover. On the other hand, we show that if there is no exact set cover, then there is no viable extension, thereby leading to an instance that is vacuously a \YES{} instance of \OM. 

\subsubsection{Result for the $k$-Approval Voting Rule}

Our first result on \OM shows that the \OM problem is \coNPH for the $k$-approval voting rule for constant $k\ge 3$ even when the number of manipulators is one and the number of undetermined pairs in each vote is no more than $15$.

\begin{theorem}\label{thm:kappOM}
 The \OM problem is \coNPH for the $k$-approval voting rule for constant $k\ge 3$ even when the number of manipulators is one and the number of undetermined pairs in each vote is no more than $15$.
\end{theorem}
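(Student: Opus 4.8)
The plan is to reduce from $\overline{\text{X3C}}$, following the general template the authors lay out in the paragraph preceding the theorem: engineer the partial profile so that the single manipulator has only a tiny set of ``sensible'' votes, and then show that when an exact cover exists one can build a viable extension defeating each sensible manipulator vote (making it a \NO{} instance of \OM), while when no exact cover exists there is no viable extension at all (making it a vacuous \YES{} instance). For concreteness I would first prove the case $k=3$ and then remark that the construction generalizes to any constant $k\ge 3$ by padding with dummy candidates that always occupy irrelevant score positions.

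First I would set up the candidate set. Given an X3C instance $(\UU=\{u_1,\dots,u_m\},\SS=\{S_1,\dots,S_t\})$ with $|S_i|=3$, I would take candidates $\CC = \UU \cup \{c,w\} \cup \DD$, where $c$ is the manipulator's preferred candidate, $w$ is a ``threat'' candidate, and $\DD$ is a set of dummy candidates used both to fix scores and to absorb the $k-1$ approval positions that the manipulator cannot help giving away. For each set $S_i$ I would introduce a partial vote whose only undetermined pairs involve whether the three elements of $S_i$ (together with a small gadget around $c$ and $w$) land inside or outside the top-$k$ window; keeping the gadget size bounded ensures the claimed bound of at most $15$ undetermined pairs per vote. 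I would then add, via an analogue of \Cref{score_gen} (the scoring-rule score-fixing lemma used repeatedly in the paper), a block of complete votes that calibrates the baseline $k$-approval scores so that $c$ and $w$ are in a near-tie and every $u_i$ is poised one approval below the threshold. The manipulator, having essentially one useful top-$k$ slot to spend on $c$ (the rest forced onto dummies), can then only tip the balance in a controlled way.

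The core of the argument is the two-directional correctness claim. In the direction where an exact cover $\TT\subseteq\SS$ exists, I would show that for every candidate vote the manipulator could cast, the corresponding extension that ``activates'' exactly the sets of $\TT$ (pushing their elements out of the top-$k$ window in a coordinated fashion) is \emph{viable} — i.e.\ some manipulator vote wins there — yet the manipulator's fixed vote fails to make $c$ the unique winner in it, because the cover lets some $u_i$ or $w$ catch up to $c$. This yields a \NO{} instance of \OM. Conversely, if there is no exact cover, I would argue that no extension is viable: the inability to cover $\UU$ with $m/3$ sets means that in every extension some $u_i$ retains its extra approval and ties or beats $c$ regardless of the manipulator's single free slot, so the \OM instance is vacuously a \YES{} instance. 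The main obstacle, as in the \SM Copeland reduction (\Cref{sm_copeland_hard}), will be the bookkeeping needed to force the manipulator into a genuinely narrow set of candidate votes and to guarantee both the viability of the ``bad'' extensions and the non-viability of \emph{all} extensions in the no-cover case simultaneously; getting the score calibration and the per-vote undetermined-pair count ($\le 15$) to cohere is the delicate part, and I would handle it by designing the $c$-$w$-dummy gadget carefully so that the manipulator's only leverage is exactly one approval on $c$.

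Once $k=3$ is established, I would close by observing that for larger constant $k$ the same reduction works after adding $k-3$ further dummy candidates per relevant position, which the manipulator is compelled to approve and which never interfere with the $c$/$w$/$\UU$ contest; this preserves both the \coNPH{}ness and the constant bound on undetermined pairs, completing the proof.
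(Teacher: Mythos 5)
Your high-level architecture matches the paper's: reduce from $\overline{\text{X3C}}$, argue that the absence of an exact cover forces the absence of any viable extension (so the \OM instance is vacuously a \YES{} instance), and argue that the presence of an exact cover yields a \NO{} instance by defeating every candidate manipulator vote in some viable extension. The first half of your plan is essentially the paper's: $c$ gains nothing from any extension, pushing any $u_j$ out of the top-$k$ window forces a score-limited candidate (the paper's $y$) into it, so at most $m/3$ partial votes can be ``activated'' and, for every $u_j$ to lose a point, they must correspond to an exact cover; hence no cover implies no viable extension.

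The gap is in the second half. To certify a \NO{} instance of \OM{} you must exhibit, for \emph{every} vote the manipulator might cast, a \emph{viable} extension in which that vote fails. Your gadget gives the manipulator ``essentially one useful top-$k$ slot to spend on $c$ (the rest forced onto dummies).'' But if the dummies really are forced and the only genuine decision is to approve $c$, the manipulator has a unique sensible vote; in any viable extension the vote witnessing viability must then be that unique vote, so that vote is $c$-optimal and the instance is a \YES{} instance of \OM{} \emph{even when a cover exists}, breaking the reduction. A single threat candidate $w$ cannot repair this. What is needed --- and what the paper supplies --- is a fork: two symmetric candidates $z_1,z_2$ whose relative order is itself among the undetermined pairs, with scores calibrated so that the manipulator must spend its one free approval on exactly one of $z_1$ or $z_2$, together with two viable extensions (differing only in how a single partial vote orders $z_1$ and $z_2$) such that the first is won only by approving $z_2$ and the second only by approving $z_1$. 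No single vote handles both, which is precisely what makes the instance a \NO{} instance. Without a pair of mutually exclusive demands of this kind, the statement ``for every candidate vote there is a viable extension defeating it'' cannot be established; you would need to redesign your gadget around at least two interchangeable decoy candidates rather than one threat candidate plus inert dummies. (Your padding step for general $k\ge 3$ and your undetermined-pair accounting are fine once this is fixed; the paper gets exactly $4\times 3+3=15$ pairs per vote from $\{y,x,z_1,z_2\}\times S_i$ plus the three pairs among $x,z_1,z_2$.)
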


\begin{proof}
 We reduce $\overline{\text{X3C}}$ to \OM for $k$-approval rule. Let $(\mathcal{U} = \{u_1, \ldots, u_m\}, \mathcal{S}=\{S_1,S_2, \dots, S_t\})$ is an $\overline{\text{X3C}}$ instance. We construct a corresponding \OM instance for $k$-approval voting rule as follows. We begin by introducing a candidate for every element of the universe, along with $k-3$ dummy candidates (denoted by $\WW$), and special candidates $\{c,z_1,z_2, d,x,y\}$. Formally, we have:
 $$ \text{Candidate set } \mathcal{C} = \mathcal{U} \cup \{c, z_1, z_2, d, x, y\} \cup \WW.$$ 
Now, for every set $S_i$ in the universe, we define the following total order on the candidate set, which we denote by $\PP^\prime_i$: 
 $$\WW\succ S_i \succ y \succ z_1 \succ z_2 \succ x \succ (\mathcal{U}\setminus S_i) \succ c \succ d$$ 
 Using $\PP^\prime_i$, we define the partial vote $\PP_i$ as follows: $$\PP_i = \PP^\prime_i \setminus (\{ \{y, x, z_1, z_2\} \times S_i \} \cup \{(z_1, z_2), (x, z_1), (x,z_2)\}).$$ 
 
 We denote the set of partial votes $\{\PP_i: i\in[t]\}$ by $\PP$ and $\{\PP^\prime_i: i\in[t]\}$ by $\PP^\prime$. We remark that the number of undetermined pairs in each partial vote $\PP_i$ is $15$. 
 
 We now invoke Lemma 1 from~\cite{journalsDeyMN16}, which allows to achieve any pre-defined scores on the candidates using only polynomially many additional votes. Using this, we add a set $\QQ$ of complete votes with $|\QQ|=\text{poly}(m,t)$ to ensure the following scores, where we denote the $k$-approval score of a candidate from a set of votes $\VV$ by $s_\VV(\cdot)$: $s_\QQ(z_1) = s_\QQ(z_2) =  s_\QQ(y) = s_\QQ(c) - \nfrac{m}{3}; s_\QQ(d), s_\QQ(w) \le s_\QQ(c) - 2t ~\forall w\in\WW; s_\QQ(x) = s_\QQ(c) -1; s_{\PP^\prime\cup\QQ} (u_j) = s_\QQ(c) + 1 ~\forall j\in[m]$.
 
 Our reduced instance is $(\PP\cup\QQ,1,c)$. The reasoning for this score configuration will be apparent as we argue the equivalence. We first argue that if we had a \YES{} instance of $\overline{\text{X3C}}$ (in other words, there is no exact cover), then we have a \YES{} instance of \OM. It turns out that this will follow from the fact that there are no viable extensions, because, as we will show next, a viable extension implies the existence of an exact set cover.
 
 To this end, first observe that the partial votes are constructed in such a way that $c$ gets no additional score from \textit{any} extension. Assuming that the manipulator approves $c$ (without loss of generality), the final score of $c$ in any extension is going to be  $s_\QQ(c) + 1$. Now, in any viable extension, every candidate $u_j$ has to be ``pushed out'' of the top $k$ positions at least once. Observe that whenever this happens, $y$ is forced into the top $k$ positions. Since $y$ is behind the score of $c$ by only $m/3$ votes, $S_i$'s can be pushed out of place in only $m/3$ votes. For every $u_j$ to lose one point, these votes must correspond to an exact cover. Therefore, if there is no exact cover, then there is no viable extension, showing one direction of the reduction. 
 
 On the other hand, suppose we have a \NO{} instance of $\overline{\text{X3C}}$ -- that is, there is an exact cover. We will now use the exact cover to come up with two viable extensions, both of which require the manipulator to vote in different ways to make $c$ win. Therefore, there is no single manipulative vote that accounts for both extensions, leading us to a \NO{} instance of \OM. 

 First, consider this completion of the partial votes: 
 $$ i=1, \WW\succ y \succ x \succ z_1 \succ z_2 \succ S_i \succ (\mathcal{U}\setminus S_i) \succ c \succ d$$
 $$ 2\le i\le \nfrac{m}{3}, \WW\succ y \succ z_1 \succ z_2 \succ x \succ S_i \succ (\mathcal{U}\setminus S_i) \succ c \succ d$$
 $$ \nfrac{m}{3} + 1\le i\le t, \WW\succ S_i \succ y \succ z_1 \succ z_2 \succ x \succ (\mathcal{U}\setminus S_i) \succ c \succ d$$
Notice that in this completion, once accounted for along with the votes in $\QQ$, the score of $c$ is tied with the scores of all $u_j$'s, $z_1, x$ and $y$, while the score of $z_2$ is one less than the score of $c$. Therefore, the only $k$ candidates that the manipulator can afford to approve are $\WW$, the candidates $c,d$ and $z_2$. However, consider the extension that is identical to the above except with the first vote changed to:
 $$ \WW\succ y \succ x \succ z_2 \succ z_1 \succ S_i \succ (\mathcal{U}\setminus S_i) \succ c \succ d$$
Here, on the other hand, the only way for $c$ to be an unique winner is if the manipulator approves $\WW, c,d$ and $z_1$. Therefore, it is clear that there is no way for the manipulator to provide a consolidated vote for both these profiles. Therefore, we have a \NO{} instance of \OM.
\end{proof}

\subsubsection{Result for the $k$-Veto Voting Rule}

We next move on to the $k$-veto voting rule and show that the \OM problem for the $k$-veto is \coNPH for every constant $k\ge 4$ even when the number of manipulators is one and the number of undetermined pairs in each vote is no more than $9$.

\begin{theorem}\label{thm:kvetoOM}
 The \OM problem is \coNPH for the $k$-veto voting rule for every constant $k\ge 4$ even when the number of manipulators is one and the number of undetermined pairs in each vote is no more than $9$.
\end{theorem}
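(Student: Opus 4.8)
The plan is to establish \coNPH{}ness by a reduction from $\overline{\text{X3C}}$, constructing the dual of the gadget used for $k$-approval in \Cref{thm:kappOM}. Given an instance $(\mathcal{U} = \{u_1, \ldots, u_m\}, \mathcal{S} = \{S_1, \ldots, S_t\})$ of $\overline{\text{X3C}}$, I would build a candidate set consisting of the universe candidates $\mathcal{U}$, a set $\mathcal{W}$ of $k-4$ dummy candidates used only to pad out the vetoed positions, the distinguished candidate $c$, a sink candidate $d$, and a small constant number of special candidates (say $z_1, z_2, x, y$) whose mutual ordering near the bottom of the votes will force the manipulator's hand. For each set $S_i$ I would create one partial vote in which the three elements of $S_i$, together with the special candidates, occupy the last few positions with their relative order left undetermined; keeping only these pairs unspecified is what bounds the number of undetermined pairs per vote by $9$.

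Next I would fix the base scores. Since $k$-veto is a positional scoring rule, I can append a polynomial-size block $\mathcal{Q}$ of complete votes, via the score-fixing construction of \Cref{score_gen} (equivalently the lemma invoked in the proof of \Cref{thm:kappOM}), so that: the score of $c$ is essentially determined independently of any extension; every universe candidate $u_j$ is tied with or just ahead of $c$ and hence must be pushed into the bottom $k$ positions, thereby losing a point, at least once in any extension in which $c$ can win; and the special candidate that gets pulled out of the bottom $k$ whenever an $S_i$-block is disturbed trails $c$ by exactly $m/3$, so it can absorb only $m/3$ such disturbances. The effect, which is the technical heart of the reduction, is the equivalence: a \emph{viable} extension (one in which the single manipulator can make $c$ the unique winner) exists if and only if the sets whose blocks are disturbed form an exact cover of $\mathcal{U}$.

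With this equivalence in hand, the two directions of the \OM reduction fall out exactly as in \Cref{thm:kappOM}. If the $\overline{\text{X3C}}$ instance is a \YES{} instance, i.e.\ there is no exact cover, then no extension is viable, so the \OM instance is vacuously a \YES{} instance. Conversely, if an exact cover exists, I would exhibit two viable extensions that differ only by swapping the special candidates $z_1$ and $z_2$ in a single vote near the bottom; the first forces the manipulator to veto one of them and the second forces her to veto the other, so no single manipulative ballot succeeds in both viable extensions, making the \OM instance a \NO{} instance. Since $\overline{\text{X3C}}$ is \coNPH, so is \OM for $k$-veto with one manipulator.

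The main obstacle I anticipate is the gadget design under the tight budget of $9$ undetermined pairs together with $k \ge 4$: I must simultaneously (i) arrange the bottom-$k$ veto gadget so that pushing a universe candidate down necessarily pulls exactly one designated special candidate up, (ii) calibrate the $\mathcal{Q}$-scores so that the special candidate's slack equals $m/3$ and every $u_j$ needs to be vetoed exactly once, and (iii) ensure the $z_1 \leftrightarrow z_2$ swap yields two genuinely incompatible manipulator responses while both extensions remain viable. Verifying that these constraints are mutually consistent — in particular that the manipulator's only useful ballots veto $\mathcal{W} \cup \{d\}$ plus exactly one of $z_1, z_2$ — is where the careful bookkeeping will lie.
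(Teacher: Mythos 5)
Your overall architecture matches the paper's proof: a reduction from the complement of X3C in which each set $S_i$ yields one partial vote whose bottom-$k$ block can be ``disturbed,'' a budget-limiter candidate whose score deficit caps the number of disturbances at $\nfrac{m}{3}$ (so that a viable extension exists if and only if an exact cover does), a vacuous \YES{} answer to \OM when no cover exists, and, when a cover exists, two symmetric viable extensions obtained by swapping $z_1$ and $z_2$ that demand incompatible vetoes from the single manipulator. All of that is exactly how the paper argues.

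The gap is in the veto-budget accounting, and it is not mere bookkeeping: with the candidate set you specify, the reduction fails in the exact-cover case. For the two-extension trick to work, the manipulator must be left with exactly \emph{one} discretionary veto, so that she can protect $c$ against $z_1$ or against $z_2$ but not both. That requires $k-1$ candidates that are tied with $c$ and that no extension of the partial votes can ever push into the bottom $k$; any $c$-optimal ballot must then spend a veto on each of them. Your forced set is $\mathcal{W}\cup\{d\}$, which has only $k-3$ elements, so your stated conclusion that ``the manipulator's only useful ballots veto $\mathcal{W}\cup\{d\}$ plus exactly one of $z_1,z_2$'' cannot hold --- that accounts for only $k-2$ of the $k$ vetoes. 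Concretely, the manipulator may veto $\mathcal{W}$, $d$, $z_1$, $z_2$, and one further candidate; this single ballot makes $c$ the unique winner in \emph{both} of your swapped extensions, so the \OM{} instance is \YES{} even though an exact cover exists, and the reduction breaks. None of $z_1,z_2,x,y$ can plug the hole, since each must start strictly below $c$ in order to be raised by extensions. The paper closes the gap by adding a triple $\{a_1,a_2,a_3\}$ placed at the top of every partial vote (hence never vetoable by any extension) and calibrated via \Cref{score_gen} to be tied with $c$; together with the $k-4$ dummies these absorb $k-1$ vetoes, which is also why the construction needs $k\ge 4$. Your ``dual of the $k$-approval gadget'' is two slots short precisely because in the approval version $c$ itself and $d$ occupy two of the $k$ distinguished positions, whereas in the veto version $c$ must not be vetoed. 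A second, smaller calibration issue: if the budget-limiter trails $c$ by exactly $\nfrac{m}{3}$ it ties $c$ after $\nfrac{m}{3}$ disturbances and $c$ no longer wins uniquely (the manipulator has no veto left for it); the deficit must be $\nfrac{m}{3}+1$, as in the paper.
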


\begin{proof}
 We reduce X3C to \OM for $k$-veto rule. Let $(\mathcal{U} = \{u_1, \ldots, u_m\}, \mathcal{S}=\{S_1,S_2, \dots, S_t\})$ is an X3C instance. We construct a corresponding \OM instance for $k$-veto voting rule as follows.
 $$ \text{Candidate set } \mathcal{C} = \mathcal{U} \cup \{c, z_1, z_2, d, x, y\} \cup \AA \cup \WW, \text{ where } \AA = \{a_1, a_2, a_3\}, |\WW|=k-4 $$
 For every $i\in[t]$, we define $\mathcal{P}^\prime_i$ as follows:
 $$ \forall i\le t, c \succ \AA \succ (\mathcal{U}\setminus S_i) \succ d \succ S_i \succ y \succ x \succ z_1 \succ z_2\succ \WW$$
 Using $\PP^\prime_i$, we define partial vote $\PP_i = \PP^\prime_i \setminus (\{ \{y, x, z_1, z_2\} \times S_i \} \cup \{(z_1, z_2), (x, z_1), (x,z_2)\})$ for every $i\in[t]$. We denote the set of partial votes $\{\PP_i: i\in[t]\}$ by $\PP$ and $\{\PP^\prime_i: i\in[t]\}$ by $\PP^\prime$. We note that the number of undetermined pairs in each partial vote $\PP_i$ is $9$. Using \Cref{score_gen}, we add a set $\QQ$ of complete votes with $|\QQ|=\text{poly}(m,t)$ to ensure the following. We denote the $k$-veto score of a candidate from a set of votes $\WW$ by $s_\WW(\cdot)$.
 \begin{itemize}
  \item $s_{\PP^\prime\cup\QQ} (z_1) = s_{\PP^\prime\cup\QQ} (z_2) = s_{\PP^\prime\cup\QQ} (c) - \nfrac{m}{3}$
  \item $s_{\PP^\prime\cup\QQ} (a_i) = s_{\PP^\prime\cup\QQ} (u_j) = s_{\PP^\prime\cup\QQ} (w) = s_{\PP^\prime\cup\QQ} (c) ~\forall a_i\in \AA, u_j\in\UU, w\in\WW$
  \item $s_{\PP^\prime\cup\QQ} (y) = s_{\PP^\prime\cup\QQ} (c) - \nfrac{m}{3} - 1$
  \item $s_{\PP^\prime\cup\QQ} (x) = s_{\PP^\prime\cup\QQ} (c) - 2$
 \end{itemize}
 
 We have only one manipulator who tries to make $c$ winner. Now we show that the X3C instance $(\mathcal{U},\mathcal{S})$ is a \YES instance if and only if the \OM instance $(\PP\cup\QQ,1,c)$ is a \NO instance.
 
 In the forward direction, let us now assume that the X3C instance is a \YES instance. Suppose (by renaming) that $S_1, \dots, S_{\nfrac{m}{3}}$ forms an exact set cover. Let us assume that the manipulator's vote $\vvv$ disapproves every candidate in $\WW\cup\AA$ since otherwise $c$ can never win uniquely. We now show that if $\vvv$ does not disapprove $z_1$ then, $\vvv$ is not a $c$-optimal vote. Suppose $\vvv$ does not disapprove $z_1$. Then we consider the following extension $\overline{\PP}$ of $\PP$.
 
 $$ i=1, c \succ \AA \succ (\mathcal{U}\setminus S_i) \succ d \succ y \succ z_1 \succ x \succ z_2 \succ S_i\succ \WW$$
 $$ 2\le i\le \nfrac{m}{3}, c \succ \AA \succ (\mathcal{U}\setminus S_i) \succ d \succ y \succ z_1 \succ z_2 \succ x \succ S_i\succ \WW$$
 $$ \nfrac{m}{3} + 1\le i\le t, c \succ \AA \succ (\mathcal{U}\setminus S_i) \succ d \succ S_i \succ y \succ x \succ z_1 \succ z_2\succ \WW$$
 
 We have the following scores $s_{\overline{\PP}\cup\QQ}(c) = s_{\overline{\PP}\cup\QQ}(z_1) = s_{\overline{\PP}\cup\QQ}(z_2) + 1 = s_{\overline{\PP}\cup\QQ}(x) + 1 = s_{\overline{\PP}\cup\QQ}(u_j) + 1 ~\forall u_j\in\UU$. Hence, both $c$ and $z_1$ win for the votes $\overline{\PP}\cup\QQ\cup\{\vvv\}$. However, the vote $\vvv^\prime$ which disapproves $a_1, a_2, a_3, z_1$ makes $c$ a unique winner for the votes $\overline{\PP}\cup\QQ\cup\{\vvv^\prime\}$. Hence, $\vvv$ is not a $c$-optimal vote. Similarly, we can show that if the manipulator's vote does not disapprove $z_2$ then, the vote is not $c$-optimal. Hence, there does not exist any $c$-optimal vote and the \OM instance is a \NO instance.
 
 In the reverse direction, we show that if the X3C instance is a \NO instance, then there does not exist a vote $\vvv$ of the manipulator and an extension $\overline{\PP}$ of $\PP$ such that $c$ is the unique winner for the votes $\overline{\PP}\cup\QQ\cup\{\vvv^\prime\}$ thereby proving that the \OM instance is vacuously \YES (and thus every vote is $c$-optimal). Notice that, there must be at least $\nfrac{m}{3}$ votes $\PP_1$ in $\overline{\PP}$ where the corresponding $S_i$ gets pushed in bottom $k$ positions since $s_{\PP^\prime\cup\QQ} (u_j) = s_{\PP^\prime\cup\QQ} (c) ~\forall a_i\in \AA, u_j\in\UU$. However, in each vote in $\PP_1$, $y$ is placed within top $m-k$ many position and thus we have $|\PP_1|$ is exactly $\nfrac{m}{3}$ since $s_{\PP^\prime\cup\QQ} (y) = s_{\PP^\prime\cup\QQ} (c) - \nfrac{m}{3} - 1$. Now notice that there must be at least one candidate $u\in\UU$ which is not covered by the sets $S_i$s corresponding to the votes $\PP_1$ because the X3C instance is a \NO instance. Hence, $c$ cannot win the election uniquely irrespective of the manipulator's vote. Thus every vote is $c$-optimal and the \OM instance is a \YES instance.
\end{proof}

\subsubsection{Result for the Borda Voting Rule}

We show next similar intractability result for the Borda voting rule too with only at most $7$ undetermined pairs per vote.

\begin{theorem}\label{thm:bordaOM}
 The \OM problem is \coNPH for the Borda voting rule even when the number of manipulators is one and the number of undetermined pairs in every vote is no more than $7$.
\end{theorem}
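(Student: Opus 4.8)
The plan is to follow the same template used for the $k$-approval and $k$-veto versions of \OM (\Cref{thm:kappOM,thm:kvetoOM}), reducing from X3C and arranging that the given X3C instance admits an exact cover if and only if the constructed \OM instance is a \NO{} instance. Given an X3C instance $(\UU = \{u_1,\ldots,u_m\}, \SS = \{S_1,\ldots,S_t\})$, I would build a candidate set $\CC = \UU \cup \{c, z_1, z_2, d, x, y\} \cup \WW$ where $c$ is the manipulator's preferred candidate and $z_1, z_2, x, y, d$ are gadget candidates whose role mirrors the $k$-veto construction: the pair $(z_1, z_2)$ is the ``swappable'' pair that the single manipulator will be unable to handle in two distinct viable extensions simultaneously. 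For each set $S_i$ I would introduce one partial vote $\PP_i$ obtained from a fixed linear order by leaving undetermined only the relative order of the three elements of $S_i$ against a small set of gadget candidates, together with the pairs $\{(z_1,z_2),(x,z_1),(x,z_2)\}$; as in the $k$-veto proof this keeps the number of undetermined pairs per vote small, and the budget of $7$ should be met by economizing on which gadget candidates each element of $S_i$ must cross.

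Next I would use \Cref{score_gen} to add a polynomial-size block $\QQ$ of complete votes that pins the Borda scores to prescribed target values. The key difference from $k$-veto is that Borda is a full positional rule, so I must track the exact score contributed by each undetermined pair rather than a $0/1$ veto indicator. I would set the targets so that (i) $c$ receives no additional Borda score from any extension of $\PP$, hence $c$'s final score is fixed once the manipulator votes; (ii) each universe candidate $u_j$ starts tied with (or just above) $c$, so that making $c$ the unique winner forces every $u_j$ to be pushed down in at least one extended vote, which in turn forces $y$ up, and the limited headroom of $y$ makes exactly $\nfrac{m}{3}$ sets responsible for covering $\UU$; and (iii) $z_1$ and $z_2$ are each within one Borda point of $c$, so the manipulator is compelled to rank low whichever of $z_1, z_2$ sits higher in the extension at hand.

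For correctness, in the direction ``exact cover exists $\Rightarrow$ \OM is \NO{}'', I would exhibit two viable extensions built from the cover that are identical except for swapping $z_1$ and $z_2$ in a single vote (exactly the device used in the proof of \Cref{thm:kappOM}); in one extension the unique winning manipulator vote must rank $z_1$ below $z_2$ and in the other it must rank $z_2$ below $z_1$, so no single vote is an $(r,c)$-opportunistic profile. In the converse direction, if there is no exact cover I would argue that there is no viable extension at all: some $u_j$ remains uncovered, its Borda score stays at least that of $c$ in every extension, and therefore $c$ can never be made the unique winner, making the instance vacuously a \YES{} instance of \OM. The main obstacle I anticipate is the simultaneous control of Borda scores and the undetermined-pair budget: because every positional shift in Borda perturbs many pairwise relationships at once, I must verify that the $z_1/z_2$ swap changes only the intended tie and that pushing the three elements of an uncovered set does not accidentally let some other candidate overtake $c$; getting the per-vote undetermined pairs down to $7$ while preserving these delicate score equalities is where the construction will require the most care.
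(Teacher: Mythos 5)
Your template matches the paper's proof exactly: reduce from X3C, arrange that an exact cover exists iff the \OM instance is a \NO{} instance, use a $z_1/z_2$ swap across two viable extensions to defeat any single manipulator vote in the forward direction, and argue vacuous \YES{}-ness (no viable extension) in the reverse direction. However, the concrete gadget you import from the $k$-veto proof is not the one the paper uses for Borda, and the part you defer --- the score bookkeeping and the undetermined-pair budget --- is where all the work of this theorem lives. The paper's construction has no $x$ and no $\WW$: the candidate set is just $\UU \cup \{c,z_1,z_2,d,y\}$, each partial vote is built from $y \succ S_i \succ z_1 \succ z_2 \succ (\UU\setminus S_i) \succ d \succ c$, and the undetermined pairs are those between $\{y\}\cup S_i$ and $\{z_1,z_2\}$ together with $(z_1,z_2)$ --- not your $(x,z_1),(x,z_2)$ pairs. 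This choice is essential to the mechanism: $c$ sits last in every partial vote (so it gains nothing from any extension), $y$ sits first with a large surplus, and the \emph{only} way to reduce $y$'s score is to promote $z_1$ or $z_2$ above it, which simultaneously demotes the three $S_i$-candidates by one position each; the limited headroom of $z_1,z_2$ (not of $y$ --- you have the direction of $y$'s forced movement reversed; $y$ must be pushed \emph{down}, losing at least $\nfrac{m}{3}$ points) caps the number of such votes at $\nfrac{m}{3}$, forcing those votes to encode an exact cover. A further Borda-specific device you would need and do not mention is the staggered targets $s(u_i) = s(c) + m + 5 - i$, which let the manipulator's single fixed placement $u_m \succ \cdots \succ u_1$ bring every universe candidate to exactly one point above $c$ simultaneously. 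Without committing to these (or equivalent) choices, the claim that the two swapped extensions are both viable and force contradictory manipulator votes cannot be checked, so the proposal as written is a correct plan but not yet a proof.
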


\begin{proof}
 We reduce X3C to \OM for the Borda rule. Let $(\mathcal{U} = \{u_1, \ldots, u_m\}, \mathcal{S}=\{S_1,S_2, \dots, S_t\})$ is an X3C instance. Without loss of generality we assume that $m$ is not divisible by $6$ (if not, then we add three new elements $b_1, b_2, b_3$ to $\UU$ and a set $\{b_1, b_2, b_3\}$ to $\SS$). We construct a corresponding \OM instance for the Borda voting rule as follows.
 $$ \text{Candidate set } \mathcal{C} = \mathcal{U} \cup \{c, z_1, z_2, d, y\} $$
 For every $i\in[t]$, we define $\mathcal{P}^\prime_i$ as follows:
 
 $$ \forall i\le t, y \succ S_i \succ z_1 \succ z_2 \succ (\mathcal{U}\setminus S_i) \succ d \succ c$$
 
 Using $\PP^\prime_i$, we define partial vote $\PP_i = \PP^\prime_i \setminus (\{ (\{y\} \cup S_i) \times \{z_1, z_2\} \} \cup \{(z_1, z_2)\})$ for every $i\in[t]$. We denote the set of partial votes $\{\PP_i: i\in[t]\}$ by $\PP$ and $\{\PP^\prime_i: i\in[t]\}$ by $\PP^\prime$. We note that the number of undetermined pairs in each partial vote $\PP_i$ is $7$. Using \Cref{score_gen}, we add a set $\QQ$ of complete votes with $|\QQ|=\text{poly}(m,t)$ to ensure the following. We denote the Borda score of a candidate from a set of votes $\WW$ by $s_\WW(\cdot)$.
 
 \begin{itemize}
  \item $s_{\PP^\prime\cup\QQ} (y) = s_{\PP^\prime\cup\QQ} (c) + m + \nfrac{m}{3} + 3$
  \item $s_{\PP^\prime\cup\QQ} (z_1) = s_{\PP^\prime\cup\QQ} (c) - 3\lfloor\nfrac{m}{6}\rfloor - 2$
  \item $s_{\PP^\prime\cup\QQ} (z_2) = s_{\PP^\prime\cup\QQ} (c) - 5\lfloor\nfrac{m}{6}\rfloor - 3$
  \item $s_{\PP^\prime\cup\QQ} (u_i) = s_{\PP^\prime\cup\QQ} (c) + m + 5 - i ~\forall i\in[m]$
  \item $s_{\PP^\prime\cup\QQ} (d) \le s_{\PP^\prime\cup\QQ} (c) - 5m$
 \end{itemize}
 
 We have only one manipulator who tries to make $c$ winner. Now we show that the X3C instance $(\mathcal{U},\mathcal{S})$ is a \YES instance if and only if the \OM instance $(\PP\cup\QQ,1,c)$ is a \NO instance. Notice that we can assume without loss of generality that the manipulator places $c$ at the first position, $d$ at the second position, the candidate $u_i$ at $(m+5-i)^{th}$ position for every $i\in[m]$, and $y$ at the last position, since otherwise $c$ can never win uniquely irrespective of the extension of $\PP$ (that it, the manipulator's vote looks like $c\succ d \succ \{z_1, z_2\} \succ u_m \succ u_{m-1} \succ \cdots \succ u_1 \succ y$). 
 
 In the forward direction, let us now assume that the X3C instance is a \YES instance. Suppose (by renaming) that $S_1, \dots, S_{\nfrac{m}{3}}$ forms an exact set cover. Let the manipulator's vote $\vvv$ be $c \succ d \succ z_1 \succ z_2 \succ u_m \succ \cdots \succ u_1 \succ y$. We now argue that $\vvv$ is not a $c$-optimal vote. The other case where the manipulator's vote $\vvv^\prime$ be $c \succ d \succ z_2 \succ z_1 \succ u_m \succ \cdots \succ u_1 \succ y$ can be argued similarly. We consider the following extension $\overline{\PP}$ of $\PP$.
 
 $$ 1\le i\le \lfloor\nfrac{m}{6}\rfloor, z_2 \succ y \succ S_i \succ z_1 \succ (\mathcal{U}\setminus S_i) \succ d \succ c $$
 $$ \lceil\nfrac{m}{6}\rceil\le i\le \nfrac{m}{3}, z_1 \succ y \succ S_i \succ z_2 \succ (\mathcal{U}\setminus S_i) \succ d \succ c $$
 $$ \nfrac{m}{3} + 1\le i\le t, y \succ S_i \succ z_1 \succ z_2 \succ (\mathcal{U}\setminus S_i) \succ d \succ c$$
 
 We have the following Borda scores $s_{\overline{\PP}\cup\QQ\cup\{\vvv\}}(c) = s_{\overline{\PP}\cup\QQ\cup\{\vvv\}}(y) + 1 = s_{\overline{\PP}\cup\QQ\cup\{\vvv\}}(z_2) + 6 = s_{\overline{\PP}\cup\QQ\cup\{\vvv\}}(z_1) = s_{\overline{\PP}\cup\QQ\cup\{\vvv\}}(u_i) + 1 ~\forall i\in[m]$. Hence, $c$ does not win uniquely for the votes $\overline{\PP}\cup\QQ\cup\{\vvv\}$. However, $c$ is the unique winner for the votes $\overline{\PP}\cup\QQ\cup\{\vvv^\prime\}$. Hence, there does not exist any $c$-optimal vote and the \OM instance is a \NO instance.

 In the reverse direction, we show that if the X3C instance is a \NO instance, then there does not exist a vote $\vvv$ of the manipulator and an extension $\overline{\PP}$ of $\PP$ such that $c$ is the unique winner for the votes $\overline{\PP}\cup\QQ\cup\{\vvv^\prime\}$ thereby proving that the \OM instance is vacuously \YES (and thus every vote is $c$-optimal). Notice that the score of $y$ must decrease by at least $\nfrac{m}{3}$ for $c$ to win uniquely. However, in every vote $v$ where the score of $y$ decreases by at least one in any extension $\overline{\PP}$ of $\PP$, at least one of $z_1$ or $z_2$ must be placed at top position of the vote $v$. However, the candidates $z_1$ and $z_2$ can be placed at top positions of the votes in $\overline{\PP}$ at most $\nfrac{m}{3}$ many times while ensuring $c$ does not lose the election. Also, even after manipulator places the candidate $u_i$ at $(m+5-i)^{th}$ position for every $i\in[m]$, for $c$ to win uniquely, the score of every $u_i$ must decrease by at least one. Hence, altogether, there will be exactly $\nfrac{m}{3}$ votes (denoted by the set $\PP_1$) in any extension of $\PP$ where $y$ is placed at the second position. However, since the X3C instance is a \NO instance, the $S_i$s corresponding to the votes in $\PP_1$ does not form a set cover. Let $u\in\UU$ be an element not covered by the $S_i$s corresponding to the votes in $\PP_1$. Notice that the score of $u$ does not decrease in the extension $\overline{\PP}$ and thus $c$ does not win uniquely irrespective of the manipulator's vote. Thus every vote is $c$-optimal and thus the \OM instance is a \YES instance. Thus every vote is $c$-optimal and the \OM instance is a \YES instance.
\end{proof}

\subsubsection{Result for the Maximin Voting Rule}

For the maximin voting rule, we show intractability of \OM with one manipulator even when the number of undetermined pairs in every vote is at most $8$.

\begin{theorem}\label{thm:maximinOM}
 The \OM problem is \coNPH for the maximin voting rule even when the number of manipulators is one and the number of undetermined pairs in every vote is no more than $8$.
\end{theorem}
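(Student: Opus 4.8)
The plan is to reduce from X3C, following the same template used for the other \OM hardness results (\Cref{thm:kvetoOM,thm:bordaOM}) and borrowing the margin-engineering ideas from the maximin- and Copeland-based constructions (\Cref{sm_copeland_hard}). Given an X3C instance $(\UU = \{u_1,\ldots,u_m\}, \SS = \{S_1,\ldots,S_t\})$, I would take a candidate set $\CC = \UU \cup \{c, z_1, z_2, d, w, y\}$, where $c$ is the manipulator's target and $z_1, z_2$ are ``twin'' candidates whose relative order is deliberately left undetermined. For each set $S_i$ I would introduce a partial vote $\PP_i$ obtained from a total order of the shape $\cdots \succ S_i \succ z_1 \succ z_2 \succ (\UU \setminus S_i) \succ \cdots \succ c$, and then delete a constant number of comparisons --- those between the twins and the elements of $S_i$, together with the pair $(z_1,z_2)$ itself --- so that each partial vote has at most $8$ undetermined pairs. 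These are precisely the degrees of freedom that let an extension either push the three elements of $S_i$ down (simulating ``using'' $S_i$ in a cover) or keep them high, while simultaneously keeping the $z_1$-versus-$z_2$ order free.

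Next I would invoke \Cref{thm:mcgarvey} to add a polynomial-size block $\QQ$ of complete votes that fixes all pairwise margins $D_\QQ(\cdot,\cdot)$ to prescribed values. The targets would be chosen so that (i) $c$ defeats $d$, $w$, and every $u_i$ by a safe margin independent of the extension, so that the maximin score of $c$ is pinned to a single threshold value determined only by its contest with the twins; and (ii) the twins $z_1, z_2$ and the universe candidates $u_i$ are each poised one step away from matching or overtaking $c$ in maximin score, so that $c$ can win only if, in the completed profile, each $u_i$ is dominated the correct number of times and the twins are held in check. As in the earlier proofs I would arrange $t$ to be even and each element of $\UU$ to occur in an even number of sets (padding with singletons if necessary) so that the parity hypotheses of \Cref{thm:mcgarvey} are met and the single manipulator's vote never produces a tie on any relevant pair.

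The correctness argument would run in the contrapositive, as for the other \OM results. If the X3C instance is a \NO instance (no exact cover), then in any extension at least one $u_i$ is never pushed down enough, so its maximin score meets or beats that of $c$; hence no extension is viable, and the \OM instance is vacuously a \YES instance. Conversely, if an exact cover $S_1,\ldots,S_{m/3}$ exists, I would build two viable extensions that differ only in how the undetermined pair $(z_1,z_2)$ is resolved inside one cover-vote: in one extension the manipulator is forced to place $z_2$ just below $z_1$ to keep $c$ a unique winner, and in the other the roles of $z_1$ and $z_2$ are swapped. Because these two viable extensions demand incompatible placements of the twins in the manipulator's ballot, no single ballot can be $(\text{maximin},c)$-opportunistic, so the \OM instance is a \NO instance.

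The main obstacle I anticipate is the simultaneous calibration of the $\QQ$-margins: I must make the maximin score of $c$ depend only on the twin contest, force every viable completion to correspond to an exact cover, and keep the $z_1$-versus-$z_2$ ambiguity genuinely live so that the two viable extensions really do require distinct optimal ballots. Getting all three of these --- the knife's-edge scores for the $u_i$, the decoupling of $c$'s score from the cover structure, and the twin tie-break sensitivity --- to hold together, within the budget of $8$ undetermined pairs per vote and the even-parity constraints, is the delicate part; the remaining verification of the maximin scores in each exhibited extension is routine bookkeeping via \Cref{thm:mcgarvey}.
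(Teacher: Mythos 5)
There is a genuine gap, and it sits at the heart of your gadget: the two-twin construction does not create a dilemma for the maximin rule. In maximin, the only way the manipulator's comparison between $z_1$ and $z_2$ can be decisive against, say, $z_1$ is if $z_1$'s maximin score is attained at the pair $(z_1,z_2)$ and sits exactly at $c$'s threshold, i.e.\ $D(z_1,z_2)$ is large and negative (comparable to $c$'s score of roughly $-4t$). But then $D(z_2,z_1)$ is large and \emph{positive}, so $z_2$'s maximin score is not constrained by $z_1$ at all; $z_2$'s binding pair must be against some other candidate $w$, and the manipulator can satisfy both demands simultaneously with a single ballot of the form $c \succ \cdots \succ w \succ z_2 \succ z_1$. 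Your hope that one viable extension forces $z_1 \succ z_2$ and another forces $z_2 \succ z_1$ would require the sign of the non-manipulator margin $D(z_1,z_2)$ to flip between the two extensions by about $8t$, which the $t$ partial votes (each shifting this pair by at most one) cannot achieve. More generally, an unsatisfiable set of ``rank $a$ above $b$'' demands on a single linear ballot requires a cycle, and the shortest such cycle has length three. This is precisely why the paper's proof (like its Copeland counterpart, and unlike the $k$-approval/Borda/Bucklin \OM proofs you are pattern-matching from) uses \emph{three} candidates $z_1,z_2,z_3$ arranged in a cyclic majority with each edge at a knife-edge margin, together with one extra dedicated partial vote $\ppp$ whose order on $\{z_1,z_2,z_3\}$ is left completely free: any linear order the manipulator casts must reverse at least one edge of the $3$-cycle, thereby boosting some $z_k$ on its losing edge, and the adversary then completes $\ppp$ to push that same $z_k$ into a tie with $c$. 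Your phrase ``place $z_2$ just below $z_1$'' also betrays a positional-scoring intuition; for maximin only the relative order matters, not adjacency.

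A secondary divergence: the paper cleanly separates the two mechanisms. The set-cover structure lives entirely in the undetermined pairs $(\{x\}\cup S_i)\times\{d,y\}$ of the votes $\PP_i$ (exactly $8$ pairs each, with $(z_1,z_2,z_3)$ fully ordered at the bottom of every $\PP_i$), while the twin dilemma lives entirely in the single extra vote $\ppp$. You instead fold the $(z_1,z_2)$ ambiguity into every cover vote, which both muddies the budget accounting (which pairwise margins limit how many $S_i$'s can be ``used'') and, per the above, still cannot manufacture the required incompatibility. Your contrapositive direction (no exact cover implies no viable extension, hence a vacuous \YES instance of \OM) is the right outline and matches the paper, but the construction it is meant to certify needs to be replaced by the three-candidate cyclic gadget.
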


\begin{proof}
 We reduce X3C to \OM for the maximin rule. Let $(\mathcal{U} = \{u_1, \ldots, u_m\}, \mathcal{S}=\{S_1,S_2, \dots, S_t\})$ is an X3C instance. We construct a corresponding \OM instance for the maximin voting rule as follows.
 $$ \text{Candidate set } \mathcal{C} = \mathcal{U} \cup \{c, z_1, z_2, z_3, d, y\} $$
 For every $i\in[t]$, we define $\mathcal{P}^\prime_i$ as follows:
 
 $$ \forall i\le t, S_i \succ x \succ d \succ y \succ (\mathcal{U}\setminus S_i) \succ z_1 \succ z_2 \succ z_3$$
 
 Using $\PP^\prime_i$, we define partial vote $\PP_i = \PP^\prime_i \setminus (\{ (\{x\} \cup S_i) \times \{d, y\} \})$ for every $i\in[t]$. We denote the set of partial votes $\{\PP_i: i\in[t]\}$ by $\PP$ and $\{\PP^\prime_i: i\in[t]\}$ by $\PP^\prime$. We note that the number of undetermined pairs in each partial vote $\PP_i$ is $8$. We define another partial vote $\ppp$ as follows.
 
 $$ \ppp = (z_1 \succ z_2 \succ z_3 \succ \text{ others }) \setminus \{(z_1, z_2), (z_2, z_3), (z_1, z_3)\} $$
 
 Using \Cref{thm:mcgarvey}, we add a set $\QQ$ of complete votes with $|\QQ|=\text{poly}(m,t)$ to ensure the following pairwise margins (notice that the pairwise margins among $z_1, z_2,$ and $z_3$ does not include the partial vote $\ppp$).
 
 \begin{itemize}
  \item $D_{\PP^\prime\cup\QQ\cup\{\ppp\}} (d, c) = 4t + 1$
  \item $D_{\PP^\prime\cup\QQ\cup\{\ppp\}} (x, d) = 4t + \nfrac{2m}{3} + 1$
  \item $D_{\PP^\prime\cup\QQ\cup\{\ppp\}} (y, x) = 4t - \nfrac{2m}{3} + 1$
  \item $D_{\PP^\prime\cup\QQ\cup\{\ppp\}} (d, u_j) = 4t - 1 ~\forall u_j\in\UU$
  \item $D_{\PP^\prime\cup\QQ} (z_1, z_2) = D_{\PP^\prime\cup\QQ} (z_2, z_3) = D_{\PP^\prime\cup\QQ} (z_3, z_1) = 4t + 2$
  \item $|D_{\PP^\prime\cup\QQ\cup\{\ppp\}} (a, b)| \le 1$ for every $a, b\in \CC$ not defined above.
 \end{itemize}
 
 We have only one manipulator who tries to make $c$ winner. Now we show that the X3C instance $(\mathcal{U},\mathcal{S})$ is a \YES instance if and only if the \OM instance $(\PP\cup\QQ\cup\{\ppp\},1,c)$ is a \NO instance. Notice that we can assume without loss of generality that the manipulator's vote prefers $c$ to every other candidate, $y$ to $x$, $x$ to $d$, and $d$ to $u_j$ for every $u_j\in\UU$.
 
 In the forward direction, let us now assume that the X3C instance is a \YES instance. Suppose (by renaming) that $S_1, \dots, S_{\nfrac{m}{3}}$ forms an exact set cover. Notice that the manipulator's vote  must prefer either $z_2$ to $z_1$ or $z_1$ to $z_3$ or $z_3$ to $z_2$. We show that if the manipulator's vote $\vvv$ prefers $z_2$ to $z_1$, then $\vvv$ is not a $c$-optimal vote. The other two cases are symmetrical. Consider the following extension $\overline{\PP}$ of $\PP$ and $\overline{\ppp}$ of $\ppp$.
 
 $$ 1\le i\le \nfrac{m}{3}, d \succ y \succ S_i \succ x \succ (\mathcal{U}\setminus S_i) \succ z_1 \succ z_2 \succ z_3 $$
 $$ \nfrac{m}{3} + 1 \le i\le t, S_i \succ x \succ d \succ y \succ (\mathcal{U}\setminus S_i) \succ z_1 \succ z_2 \succ z_3 $$
 $$ \overline{\ppp} = z_2 \succ z_3 \succ z_1 \succ \text{ others } $$
 
 From the votes in $\overline{\PP}\cup\QQ\cup\{\vvv, \overline{\ppp}\}$, the maximin score of $c$ is $-4t$, of $d, x, u_j ~\forall u_j\in\UU$ are $-4t-2$, of $z_1, z_3$ are at most than $-4t-2$, and of $z_2$ is $-4t$. Hence, $c$ is not the unique maximn winner. However, the manipulator's vote $c \succ z_1 \succ z_2 \succ z_3 \succ \text{ other }$ makes $c$ the unique maximin winner. Hence, $\vvv$ is not a $c$-optimal vote.
 
 For the reverse direction, we show that if the X3C instance is a \NO instance, then there does not exist a vote $\vvv$ of the manipulator and an extension $\overline{\PP}$ of $\PP$ such that $c$ is the unique winner for the votes $\overline{\PP}\cup\QQ\cup\{\vvv^\prime\}$ thereby proving that the \OM instance is vacuously \YES (and thus every vote is $c$-optimal). Consider any extension $\overline{\PP}$ of $\PP$. Notice that, for $c$ to win uniquely, $y \succ x$ must be at least $\nfrac{m}{3}$ of the votes in $\overline{\PP}$; call these set of votes $\PP_1$. However, $d\succ x$ in every vote in $\PP_1$ and $d\succ x$ can be in at most $\nfrac{m}{3}$ votes in $\overline{\PP}$ for $c$ to win uniquely. Hence, we have $|\PP_1| = \nfrac{m}{3}$. Also for $c$ to win, each $d\succ u_j$ must be at least one vote of $\overline{\PP}$ and $d\succ u_j$ is possible only in the votes in $\PP_1$. However, the sets $S_i$s corresponding to the votes in $\PP_1$ does not form a set cover since the X3C instance is a \NO instance. Hence, there must exist a $u_j\in \UU$ for which $u_j\succ d$ in every vote in $\overline{\PP}$ and thus $c$ cannot win uniquely irrespective of the vote of the manipulator. Thus every vote is $c$-optimal and the \OM instance is a \YES instance.
\end{proof}

\subsubsection{Result for the Copeland$^\alpha$ Voting Rule}

Our next result proves that the \OM problem is \coNPH for the Copeland$^\alpha$ voting rule too for every $\alpha\in[0,1]$ even with one manipulator and at most $8$ undetermined pairs per vote.

\begin{theorem}\label{thm:copelandOM}
 The \OM problem is \coNPH for the Copeland$^\alpha$ voting rule for every $\alpha\in[0,1]$ even when the number of manipulators is one and the number of undetermined pairs in each vote is no more than $8$.
\end{theorem}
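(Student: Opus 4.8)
The plan is to establish the theorem by a polynomial-time reduction from X3C, designed so that the X3C instance is a \YES{} instance if and only if the constructed \OM{} instance is a \NO{} instance; equivalently, this realizes a reduction from $\overline{\text{X3C}}$ to \OM, matching the template used for the other \OM hardness results in this section. Given an X3C instance $(\mathcal{U} = \{u_1, \ldots, u_m\}, \mathcal{S} = \{S_1, \ldots, S_t\})$, I would take the candidate set $\mathcal{C} = \mathcal{U} \cup \{c, z_1, z_2, z_3, d, y\}$, with $c$ the manipulator's distinguished candidate. As in the construction for \Cref{thm:maximinOM}, for every set $S_i$ I would introduce a partial vote $\PP_i$ in which the relative order between a gadget candidate (such as $d$ or $y$) and the three elements of $S_i$ is left undetermined, so that a completion of $\PP_i$ encodes the binary choice of whether $S_i$ is placed in the cover; each such $\PP_i$ can be arranged to have at most $8$ undetermined pairs. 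I would additionally install a single partial vote $\ppp$ in which $z_1, z_2, z_3$ are mutually incomparable, and use \Cref{thm:mcgarvey} to add a set $\QQ$ of complete votes realizing precisely chosen pairwise margins: a large Condorcet cycle $z_1 \to z_2 \to z_3 \to z_1$ among the triple (computed from $\PP \cup \QQ$, excluding $\ppp$), together with margins relating $c$, $d$, $y$, and the $u_j$'s that couple $c$'s pairwise wins to the set-cover structure.

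Since Copeland$^\alpha$ is defined through the weighted majority graph, the parameter $\alpha$ enters only through pairwise ties. To obtain the result uniformly for every $\alpha\in[0,1]$, I would choose $|\QQ|$ so that $|\PP| + |\QQ|$ is odd; then, after adding the partial vote $\ppp$ and the single manipulator vote, the total number of voters is odd, every pairwise margin in the final profile is odd and hence nonzero, and no tie ever arises. Consequently the Copeland$^\alpha$ score coincides with the Copeland score regardless of $\alpha$, exactly as in the argument for \Cref{sm_copeland_hard}.

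The two directions would then proceed as follows. I would first observe that any hope of making $c$ the unique winner forces the manipulator to rank $c$ at the top and to fix the order of $d, y$ relative to the $u_j$'s in a determined way, leaving open only the symmetric choice of how to break the $z_1, z_2, z_3$ cycle, i.e.\ one of $z_2 \succ z_1$, $z_1 \succ z_3$, or $z_3 \succ z_2$. For the forward direction (X3C \YES{} $\Rightarrow$ \OM \NO{}), given an exact cover I would exhibit, for each of these three symmetric manipulator choices, a viable extension --- a completion of $\PP$ corresponding to the cover together with an appropriate completion of $\ppp$ --- in which that particular choice only makes $c$ tie with one of the $z_i$'s (so $c$ is not the \emph{unique} Copeland winner), whereas a different manipulator choice would have made $c$ win uniquely in that same extension. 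Hence no single manipulator vote is $(r,c)$-opportunistic across all viable extensions, so \OM is a \NO{} instance. For the reverse direction (X3C \NO{} $\Rightarrow$ \OM vacuously \YES{}), I would show that when no exact cover exists, every completion of $\PP$ leaves some uncovered $u_j$ defeating $c$ in the weighted majority graph, so that $c$'s Copeland score can never be raised above that of the $u_j$'s; thus there is no viable extension at all, and every manipulator vote is vacuously $(r,c)$-opportunistic, making \OM a \YES{} instance.

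The main obstacle will be calibrating the pairwise margins supplied by \Cref{thm:mcgarvey} so that three conditions hold simultaneously: (i) $c$ defeats every $u_j$ precisely when the chosen sets cover $\mathcal{U}$, which ties $c$'s Copeland score to the exactness of the cover; (ii) the symmetric triple genuinely yields three distinct ``failure'' extensions, no two of which can be neutralized by one manipulator vote; and (iii) the per-vote undetermined-pair budget stays at $8$ while the total vote count remains odd so that $\alpha$ is irrelevant. Because a Copeland score is sensitive to the sign of \emph{every} pairwise margin at once, the delicate part is ensuring that in each viable extension $c$'s advantage over the relevant $z_i$ and over $\mathcal{U}$ is razor-thin --- decided exactly by the single manipulator vote together with the completion of $\ppp$ --- without inadvertently letting some third candidate overtake $c$.
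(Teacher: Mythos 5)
Your proposal follows essentially the same route as the paper's proof: a reduction from X3C in which a \YES{} instance of X3C corresponds to a \NO{} instance of \OM, the three-way incomparability of $z_1,z_2,z_3$ in a separate partial vote $\ppp$ combined with a McGarvey-realized Condorcet cycle forces the manipulator into one of three symmetric tie-breaking choices each of which fails in some viable extension built from the exact cover, an odd total number of voters renders $\alpha$ irrelevant, and the absence of a cover leaves some $u_j$ defeating $c$ in every extension so that \OM is vacuously \YES{}. The only difference is that the paper's calibration uses a slightly larger candidate set (adding $x$ and dummies $d_1,d_2,d_3$) to pin down the Copeland scores, which is exactly the step you flag as the remaining obstacle.
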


\begin{proof}
 We reduce X3C to \OM for the Copeland$^\alpha$ voting rule. Let $(\mathcal{U} = \{u_1, \ldots, u_m\}, \mathcal{S}=\{S_1,S_2, \dots, S_t\})$ is an X3C instance. We construct a corresponding \OM instance for the Copeland$^\alpha$ voting rule as follows.
 $$ \text{Candidate set } \mathcal{C} = \mathcal{U} \cup \{c, z_1, z_2, z_3, d_1, d_2, d_3, x, y\} $$
 For every $i\in[t]$, we define $\mathcal{P}^\prime_i$ as follows:
 
 $$ \forall i\le t, S_i \succ x \succ y \succ c \succ \text{ others}$$
 
 Using $\PP^\prime_i$, we define partial vote $\PP_i = \PP^\prime_i \setminus (\{ (\{x\} \cup S_i) \times \{c, y\} \})$ for every $i\in[t]$. We denote the set of partial votes $\{\PP_i: i\in[t]\}$ by $\PP$ and $\{\PP^\prime_i: i\in[t]\}$ by $\PP^\prime$. We note that the number of undetermined pairs in each partial vote $\PP_i$ is $8$. We define another partial vote $\ppp$ as follows.
 
 $$ \ppp = (z_1 \succ z_2 \succ z_3 \succ \text{ others }) \setminus \{(z_1, z_2), (z_2, z_3), (z_1, z_3)\} $$
 
 Using \Cref{thm:mcgarvey}, we add a set $\QQ$ of complete votes with $|\QQ|=\text{poly}(m,t)$ to ensure the following pairwise margins (notice that the pairwise margins among $z_1, z_2,$ and $z_3$ does not include the partial vote $\ppp$).
 
 \begin{itemize}
  \item $D_{\PP^\prime\cup\QQ\cup\{\ppp\}} (u_j, c) = 2 ~\forall u_j\in\UU$
  \item $D_{\PP^\prime\cup\QQ\cup\{\ppp\}} (x, y) = \nfrac{2m}{3}$
  \item $D_{\PP^\prime\cup\QQ\cup\{\ppp\}} (c, y) = D_{\PP^\prime\cup\QQ\cup\{\ppp\}} (x, c) = D_{\PP^\prime\cup\QQ\cup\{\ppp\}} (d_i, c) = D_{\PP^\prime\cup\QQ\cup\{\ppp\}} (z_k,c) = D_{\PP^\prime\cup\QQ\cup\{\ppp\}} (u_j, x) = D_{\PP^\prime\cup\QQ\cup\{\ppp\}} (x, z_k) = D_{\PP^\prime\cup\QQ\cup\{\ppp\}} (d_i, x) = D_{\PP^\prime\cup\QQ\cup\{\ppp\}} (y, u_j) = D_{\PP^\prime\cup\QQ\cup\{\ppp\}} (d_i, y) = D_{\PP^\prime\cup\QQ\cup\{\ppp\}} (y, z_k) = D_{\PP^\prime\cup\QQ\cup\{\ppp\}} (z_k, u_j) = D_{\PP^\prime\cup\QQ\cup\{\ppp\}} (u_j, d_i) = D_{\PP^\prime\cup\QQ\cup\{\ppp\}} (z_k, d_1) = D_{\PP^\prime\cup\QQ\cup\{\ppp\}} (z_k, d_2) = D_{\PP^\prime\cup\QQ\cup\{\ppp\}} (d_3, z_k) =  4t ~\forall i,k\in[3], j\in[m]$
  
  \item $D_{\PP^\prime\cup\QQ\cup\{\ppp\}} (u_j, u_\el) = -4t$ for at least $\nfrac{m}{3}$ many $u_\el\in\UU$  
  \item $D_{\PP^\prime\cup\QQ} (z_1, z_2) = D_{\PP^\prime\cup\QQ} (z_2, z_3) = D_{\PP^\prime\cup\QQ} (z_3, z_1) = 1$
  \item $|D_{\PP^\prime\cup\QQ\cup\{\ppp\}} (a, b)| \le 1$ for every $a, b\in \CC$ not defined above.
 \end{itemize}
 
 We have only one manipulator who tries to make $c$ winner. Now we show that the X3C instance $(\mathcal{U},\mathcal{S})$ is a \YES instance if and only if the \OM instance $(\PP\cup\QQ\cup\{\ppp\},1,c)$ is a \NO instance. Since the number of voters is odd, $\alpha$ does not play any role in the reduction and thus from here on we simply omit $\alpha$. Notice that we can assume without loss of generality that the manipulator's vote prefers $c$ to every other candidate and $x$ to $y$.
 
 In the forward direction, let us now assume that the X3C instance is a \YES instance. Suppose (by renaming) that $S_1, \dots, S_{\nfrac{m}{3}}$ forms an exact set cover. Suppose the manipulator's vote $\vvv$ order $z_1, z_2,$ and $z_3$ as $z_1 \succ z_2\succ z_3$. We will show that $\vvv$ is not a $c$-optimal vote. Symmetrically, we can show that the manipulator's vote ordering $z_1, z_2,$ and $z_3$ in any other order is not $c$-optimal. Consider the following extension $\overline{\PP}$ of $\PP$ and $\overline{\ppp}$ of $\ppp$.
 
 $$ 1\le i\le \nfrac{m}{3}, y \succ c \succ S_i \succ x \succ \text{others} $$
 $$ \nfrac{m}{3} + 1 \le i\le t, S_i \succ x \succ y \succ c \succ \text{others} $$
 $$ \overline{\ppp} = z_1 \succ z_2 \succ z_3 \succ \text{others } $$
 
 From the votes in $\overline{\PP}\cup\QQ\cup\{\vvv, \overline{\ppp}\}$, the Copeland score of $c$ is $m+4$ (defeating $y, z_k, u_j ~\forall k\in[3], j\in[m]$), of $y$ is $m+3$ (defeating $z_k, u_j ~\forall k\in[3], j\in[m]$), of $u_j$ is at most $\nfrac{2m}{3} + 4$ (defeating $x, d_i ~\forall i\in[3]$ and at most $\nfrac{2m}{3}$ many $u_\el\in\UU$), of $x$ is $5$ (defeating $c, y, z_k ~\forall l\in[3]$), of $d_1, d_2$ is $2$ (defeating $y$ and $c$), of $d_3$ is $5$ (defeating $y, c, z_k~\forall k\in[3]$). of $z_3$ is $m+3$ (defeating $d_i, u_j \forall i\in[3], j\in[m]$) for every $k\in[3]$, of $z_3$ is $m+2$ (defeating $d_1, d_2, u_j i\in[3], j\in[m]$), $z_2$ is $m+3$ (defeating $d_1, d_2, z_3, u_j i\in[3], j\in[m]$), $z_1$ is $m+4$ (defeating $d_1, d_2, z_2, z_3, u_j i\in[3], j\in[m]$). Hence, $c$ co-wins with $z_1$ with Copeland score $m+4$. However, the manipulator's vote $c\succ z_3\succ z_2 \succ z_1$ makes $c$ win uniquely. Hence, $\vvv$ is not a $c$-optimal vote and thus the \OM instance is a \NO instance.
 
 For the reverse direction, we show that if the X3C instance is a \NO instance, then there does not exist a vote $\vvv$ of the manipulator and an extension $\overline{\PP}$ of $\PP$ such that $c$ is the unique winner for the votes $\overline{\PP}\cup\QQ\cup\{\vvv^\prime\}$ thereby proving that the \OM instance is vacuously \YES (and thus every vote is $c$-optimal). Consider any extension $\overline{\PP}$ of $\PP$. Notice that, for $c$ to win uniquely, $c$ must defeat each $u_j\in\UU$ and thus $c$ is preferred over $u_j$ in at least one vote in $\overline{\PP}$; we call these votes $\PP_1$. However, in every vote in $\PP_1$, $y$ is preferred over $x$ and thus $|\PP_1|\le \nfrac{m}{3}$ because $x$ must defeat $y$ for $c$ to win uniquely. Since the X3C instance is a \NO instance, there must be a candidate $u\in\UU$ which is not covered by the sets corresponding to the votes in $\PP_1$ and thus $u$ is preferred over $c$ in every vote in $\PP$. Hence, $c$ cannot win uniquely irrespective of the vote of the manipulator. Thus every vote is $c$-optimal and the \OM instance is a \YES instance.
\end{proof}

\subsubsection{Result for the Bucklin Voting Rule}

For the Bucklin and simplified Bucklin voting rules, we show intractability of the \OM problem with at most $15$ undetermined pairs per vote and only one manipulator.

\begin{theorem}\label{thm:bucklinOM}
 The \OM problem is \coNPH for the Bucklin and simplified Bucklin voting rules even when the number of manipulators is one and the number of undetermined pairs in each vote is no more than $15$.
\end{theorem}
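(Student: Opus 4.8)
The plan is to reduce from $\overline{\text{X3C}}$ exactly as in the other \OM hardness results of this section, so that an X3C instance is a \YES instance if and only if the constructed \OM instance is a \NO instance; consequently the $\overline{\text{X3C}}$ instance is a \YES instance precisely when the \OM instance is a (vacuous) \YES instance. I would reuse the partial-vote gadget of the $k$-approval proof (\Cref{thm:kappOM}) together with the top-$\ell$ majority bookkeeping of the simplified Bucklin reduction (\Cref{wm_bucklin_hard}). Concretely, given $(\UU = \{u_1,\dots,u_m\}, \SS = \{S_1,\dots,S_t\})$, I would take a candidate set $\CC = \UU \cup \{c, z_1, z_2, x, y\} \cup \WW \cup \DD$, where $\WW$ and $\DD$ are blocks of padding candidates whose sizes calibrate the majority threshold, and for each $S_i$ I would start from a total order of the shape $\WW \succ S_i \succ y \succ z_1 \succ z_2 \succ x \succ (\UU\setminus S_i) \succ c \succ \DD$ and then delete the comparisons in $\{y,x,z_1,z_2\}\times S_i$ together with $\{(z_1,z_2),(x,z_1),(x,z_2)\}$, which leaves at most $15$ undetermined pairs per vote, matching the claimed bound.

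Next I would add a block $\QQ$ of complete votes (built via the score-realizing lemma used elsewhere, e.g.\ Lemma~1 of~\cite{journalsDeyMN16}) to pin down, for every candidate, the exact number of top-$\ell$ appearances contributed outside the gadget votes. The calibration I would aim for is the one that already worked in \Cref{wm_bucklin_hard}: the intended winner $c$ reaches majority within the first $\ell$ positions only if at least $m/3$ of the gadget votes are completed so as to push the corresponding $S_i$ down and pull $c$ up, while the ``budget'' candidate $y$, which advances whenever such a push happens, is close enough to $c$ that at most $m/3$ such pushes are affordable before $c$ ties or loses. Thus every \emph{viable} extension must correspond to an exact cover, and so when X3C is a \NO instance there is no viable extension at all and the \OM instance is vacuously a \YES instance.

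For the forward direction I would exploit the symmetric pair $z_1, z_2$, exactly as in \Cref{thm:kappOM}. Assuming a cover $S_1,\dots,S_{m/3}$, I would exhibit two viable completions that are identical except that one places $z_1 \succ z_2$ in the critical vote while the other places $z_2 \succ z_1$; the thresholds would be arranged so that in the first completion $c$ can be made the unique Bucklin winner only by a manipulator ballot ordering $z_2 \succ z_1$, and in the second only by a ballot ordering $z_1 \succ z_2$. Since no single manipulator ballot can satisfy both, there is no $(r,c)$-opportunistic vote, i.e.\ a \NO instance of \OM. The same argument applies to (non-simplified) Bucklin by tracking top-$\ell$ counts rather than simplified-Bucklin majority, reusing the appearance parities from \Cref{wm_bucklin_hard}.

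The main obstacle I anticipate is that Bucklin is a rank-global rule: a single manipulator ballot simultaneously affects the top-$\ell$ tallies of every candidate it ranks, and the winner is fixed by the \emph{first} level $\ell$ at which some candidate attains majority. Hence the delicate step is choosing $|\WW|$, $|\DD|$, and the thresholds in $\QQ$ so that (i) the level at which $c$ attains majority is exactly the level controlled by the cover, (ii) the auxiliary candidates $z_1, z_2$ are the unique threats whose defeat hinges on the manipulator's relative ordering of them, and (iii) no other candidate, in particular those in $\UU$, $\WW$, or $\DD$, accidentally attains majority at an earlier level. Making these three conditions hold simultaneously while keeping the undetermined-pair count at $15$ is where the bulk of the careful counting will lie.
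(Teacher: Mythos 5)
Your overall strategy is the same as the paper's: reduce from X3C so that a \YES instance of X3C corresponds to a \NO instance of \OM; use the undetermined pair $(z_1,z_2)$ to force any single manipulative ballot to commit to promoting exactly one of $z_1,z_2$ into the critical top-$\ell$ window; exhibit, from an exact cover, two viable extensions that defeat each of the two commitments; and observe that when no cover exists there is no viable extension at all, so the instance is vacuously \YES. The paper's gadget is $(\UU\setminus S_i)\succ S_i\succ d\succ x_1\succ x_2\succ z_1\succ z_2\succ\text{others}$ with $(\{d\}\cup S_i)\times\{x_1,x_2,z_1,z_2\}$ and $(z_1,z_2)$ left undetermined, and it lets $z_1,z_2$ themselves carry the budget that limits the number of ``activated'' votes to $m/3$; your variant with a separate budget candidate $y$ is workable because $y$ dominates $z_1,z_2,x$ in the retained partial order, so $y$ is forced into the critical window whenever an element of $S_i$ is pushed out of it.

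One step of your calibration would fail as written, though. You say that ``$c$ reaches majority within the first $\ell$ positions only if at least $m/3$ of the gadget votes are completed so as to push the corresponding $S_i$ down and pull $c$ up,'' borrowing the mechanism of \Cref{wm_bucklin_hard}. But in your gadget no pair involving $c$ is undetermined: $c$ sits at the fixed position $|\WW|+m+5$ in every extension of every gadget vote, so it can never be ``pulled up,'' and the gadget votes contribute nothing to its top-$\ell$ count at the relevant level. The correct mechanism --- and the one the paper's construction uses, since there too $c$ lies in the fully determined tail of every partial vote --- is that $c$'s majority at the critical level comes entirely from the complete votes $\QQ$ together with the manipulator's single ballot, while the extensions of the gadget votes matter only because every $u_j\in\UU$, which starts out with majority at an earlier level, must be pushed out of the critical window in at least one gadget vote containing it; combined with the budget on $y$ (at most $m/3$ activations) this forces the activated sets to form an exact cover. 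With that correction, and the careful choice of $|\WW|$, $|\DD|$ and the counts in $\QQ$ that you already flag as the remaining work, your plan goes through along essentially the same lines as the paper.
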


\begin{proof}
 We reduce X3C to \OM for the Bucklin and simplified Bucklin voting rules. Let $(\mathcal{U} = \{u_1, \ldots, u_m\}, \mathcal{S}=\{S_1,S_2, \dots, S_t\})$ is an X3C instance. We assume without loss of generality that $m$ is not divisible by $6$ (if not, we introduce three elements in $\UU$ and a set containing them in $\SS$) and $t$ is an even integer (if not, we duplicate any set in $\SS$). We construct a corresponding \OM instance for the Bucklin and simplified Bucklin voting rules as follows.
 $$ \text{Candidate set } \mathcal{C} = \mathcal{U} \cup \{c, z_1, z_2, x_1, x_2, d\} \cup W, \text{ where } |W|=m-3 $$
 For every $i\in[t]$, we define $\mathcal{P}^\prime_i$ as follows:
 
 $$ \forall i\le t, (\UU\setminus S_i) \succ S_i \succ d \succ x_1 \succ x_2 \succ z_1 \succ z_2 \succ \text{ others}$$
 
 Using $\PP^\prime_i$, we define partial vote $\PP_i = \PP^\prime_i \setminus (\{ (\{d\} \cup S_i) \times \{x_1, x_2, z_1, z_2\} \}\cup \{(z_1, z_2)\})$ for every $i\in[t]$. We denote the set of partial votes $\{\PP_i: i\in[t]\}$ by $\PP$ and $\{\PP^\prime_i: i\in[t]\}$ by $\PP^\prime$. We note that the number of undetermined pairs in each partial vote $\PP_i$ is $15$. We introduce the following additional complete votes $\QQ$:
 \begin{itemize}
  \item $\nfrac{t}{2}-\lfloor\nfrac{m}{6}\rfloor-1$ copies of $W\succ z_1\succ z_2\succ x_1\succ c\succ \text{ others}$
  \item $\nfrac{t}{2}-\lfloor\nfrac{m}{6}\rfloor-1$ copies of $W\succ z_1\succ z_2\succ x_2\succ c\succ \text{ others}$
  \item $2\lceil\nfrac{m}{6}\rceil$ copies of $W\succ z_1\succ z_2\succ d\succ c\succ \text{ others}$
  \item $\lfloor\nfrac{m}{6}\rfloor$ copies of $W\succ z_1\succ d\succ x_1\succ c\succ \text{ others}$
  \item $\lfloor\nfrac{m}{6}\rfloor$ copies of $W\succ z_1\succ d\succ x_2\succ c\succ \text{ others}$
  \item $2\lceil\nfrac{m}{6}\rceil-1$ copies of $\UU\succ x_1\succ \text{ others}$
  \item One $\UU\succ c\succ \text{ others}$
 \end{itemize}
 
 We have only one manipulator who tries to make $c$ winner. Now we show that the X3C instance $(\mathcal{U},\mathcal{S})$ is a \YES instance if and only if the \OM instance $(\PP\cup\QQ,1,c)$ is a \NO instance. The total number of voters in the \OM instance is $2t+\nfrac{2m}{3}+1$. We notice that within top $m+1$ positions of the votes in $\PP^\prime\cup\QQ$, $c$ appears $t+\nfrac{m}{3}$ times, $z_1$ and $z_2$ appear $t+\lfloor\nfrac{m}{6}\rfloor$ times, $x_1$ appears $\nfrac{t}{2}+\nfrac{m}{3}-1$ times, $x_2$ appears $\nfrac{t}{2}-1$ times, every candidate in $W$ appears $t+\nfrac{m}{3}-1$ times, every candidate in $\UU$ appears $t+\nfrac{m}{3}+1$ times. Also every candidate in $\UU$ appears $t+\nfrac{m}{3}+1$ times within top $m$ positions of the votes in $\PP\cup\QQ$. Hence, for both Bucklin and simplified Bucklin voting rules, we can assume without loss of generality that the manipulator puts $c$, every candidate in $W$, $x_1$, $x_2$, and exactly one of $z_1$ and $z_2$.
 
 In the forward direction, let us now assume that the X3C instance is a \YES instance. Suppose (by renaming) that $S_1, \dots, S_{\nfrac{m}{3}}$ forms an exact set cover. Suppose the manipulator's vote $\vvv$ puts $c$, every candidate in $W$, $x_1$, $x_2$, and $z_1$ within top $m+1$ positions. We will show that $\vvv$ is not $c$-optimal. The other case where the manipulator's vote $\vvv^\prime$ puts $c$, every candidate in $W$, $x_1$, $x_2$, and $z_2$ within top $m+1$ positions is symmetrical. Consider the following extension $\overline{\PP}$ of $\PP$:
 $$ 1\le i\le \lfloor\nfrac{m}{6}\rfloor, (\UU\setminus S_i) d \succ x_1 \succ x_2 \succ z_2 \succ S_i \succ \succ z_1 \succ \text{ others} $$
 $$ \lceil\nfrac{m}{6}\rceil \le i\le \nfrac{m}{3}, (\UU\setminus S_i) d \succ x_1 \succ x_2 \succ z_1 \succ S_i \succ \succ z_2 \succ \text{ others} $$
 $$ \nfrac{m}{3}+1 \le i\le t, (\UU\setminus S_i) \succ S_i \succ d \succ x_1 \succ x_2 \succ z_1 \succ z_2 \succ \text{ others}$$
 For both Bucklin and simplified Bucklin voting rules, $c$ co-wins with $z_1$ for the votes in $\PP\cup\QQ\cup\{\vvv\}$. However, $c$ wins uniquely for the votes in $\PP\cup\QQ\cup\{\vvv^\prime\}$. Hence, $\vvv$ is not a $c$-optimal vote and thus the \OM instance is a \NO instance. 
 
 For the reverse direction, we show that if the X3C instance is a \NO instance, then there does not exist a vote $\vvv$ of the manipulator and an extension $\overline{\PP}$ of $\PP$ such that $c$ is the unique winner for the votes $\overline{\PP}\cup\QQ\cup\{\vvv^\prime\}$ thereby proving that the \OM instance is vacuously \YES (and thus every vote is $c$-optimal). Consider any extension $\overline{\PP}$ of $\PP$. Notice that, for $c$ to win uniquely, every candidate must be pushed out of top $m+1$ positions in at least one vote in $\PP$; we call these set of votes $\PP_1$. Notice that, $|\PP_1|\ge \nfrac{m}{3}$. However, in every vote in $\PP_1$, at least one of $z_1$ and $z_2$ appears within top $m+1$ many positions. Since, the manipulator has to put at least one of $z_1$ and $z_2$ within its top $m+1$ positions and $z_1$ and $z_2$ appear $t+\lfloor\nfrac{m}{6}\rfloor$ times in the votes in $\PP^\prime\cup\QQ$, we must have $|\PP_1|\le \nfrac{m}{3}$ and thus $|\PP_1|=\nfrac{m}{3}$, for $c$ to win uniquely. However, there exists a candidate $u\in\UU$ not covered by the $S_i$s corresponding to the votes in $\PP_1$. Notice that $u$ gets majority within top $m$ positions of the votes and $c$ can never get majority within top $m+1$ positions of the votes. Hence, $c$ cannot win uniquely irrespective of the vote of the manipulator. Thus every vote is $c$-optimal and the \OM instance is a \YES instance.
\end{proof}

\section{Polynomial Time Algorithms for \WM, \SM, and \OM Problems}\label{sec:poly}

We now turn to the polynomial time cases depicted in~\Cref{table:partial_summary}. This section is organized in three parts, one for each problem considered. 

\subsection{Weak Manipulation Problem}
Since the \PW problem is in \Pb{} for the plurality and the veto voting rules~\cite{betzler2009towards}, it follows from \Cref{pw_hard} that the \WM problem is in \Pb{} for the plurality and veto voting rules for any number of manipulators. 

\begin{proposition}\label{pv_easy}
 The \WM problem is in \Pb{} for the plurality and veto voting rules for any number of manipulators.
\end{proposition}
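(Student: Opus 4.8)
The plan is to obtain this result by composing the reduction already recorded in \Cref{pw_hard} with the known tractability of \PW for these two rules. Since \Cref{pw_hard} establishes that the \WM problem many-to-one reduces to the \PW problem for \emph{every} voting rule, and since the composition of a polynomial-time many-to-one reduction with a polynomial-time decision procedure is itself a polynomial-time decision procedure, it suffices to verify two things: that the reduction of \Cref{pw_hard} runs in polynomial time, and that \PW is polynomial-time solvable for plurality and veto.

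First I would recall the explicit reduction from the proof of \Cref{pw_hard}: given a \WM instance $(\PP, \el, c)$, we form the set $\QQ$ consisting of $\el$ copies of the empty partial vote, and output the \PW instance on the profile $\PP \cup \QQ$ with distinguished candidate $c$. Constructing $\QQ$ and forming $\PP \cup \QQ$ clearly takes time polynomial in the size of the input (indeed linear in $\el$ plus the encoding of $\PP$), so the map is a polynomial-time many-to-one reduction. The correctness of this map — that $(\PP, \el, c)$ is a \YES instance of \WM if and only if $\PP \cup \QQ$ admits a completion making $c$ win — is exactly what \Cref{pw_hard} asserts, so no further argument is needed here.

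Next I would invoke the fact that \PW is in \Pb{} for the plurality and veto voting rules; this is established in the literature (Betzler et al.~\cite{betzler2009towards}), where a candidate $c$ is a possible winner under plurality (respectively veto) if and only if every partial vote can be completed so as to maximize the final plurality score of $c$ while simultaneously keeping the scores of its competitors as low as possible, a condition that can be checked greedily in polynomial time. Composing the polynomial-time reduction above with this polynomial-time procedure yields a polynomial-time algorithm deciding \WM for plurality and veto, for an arbitrary number $\el$ of manipulators, which is precisely the claim.

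There is essentially no serious obstacle in this proof: both ingredients are already in hand, and the only point requiring a moment's care is the observation that the number of manipulators $\el$ appears only as the count of empty votes appended in the reduction and therefore does not affect its polynomial running time — so the tractability holds uniformly in $\el$ rather than only for a fixed number of manipulators.
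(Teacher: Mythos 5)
Your proof is correct and takes exactly the same route as the paper: the paper's argument is precisely to invoke \Cref{pw_hard} together with the polynomial-time solvability of \PW for plurality and veto due to Betzler et al.~\cite{betzler2009towards}. Your additional remarks spelling out the reduction and observing that the number of manipulators only affects the count of appended empty votes are fine elaborations of the same argument.
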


\begin{proof}
 The \PW problem is in \Pb{} for the plurality and the veto voting rules~\cite{betzler2009towards}. Hence, the result follows from \Cref{pw_hard}.
\end{proof}

\subsection{Strong Manipulation Problem}

We now discuss our algorithms for the \SM{} problem. The common flavor in all our algorithms is the following: we try to devise an extension that is as adversarial as possible for the favorite candidate $c$, and if we can make $c$ win in such an extension, then roughly speaking, such a strategy should work for other extensions as well (where the situation only improves for $c$). However, it is challenging to come up with an extension that is globally dominant over all the others in the sense that we just described. So what we do instead is we consider every potential nemesis $w$ who might win instead of $c$, and we build profiles that are ``as good as possible'' for $w$ and ``as bad as possible'' for $c$. Each such profile leads us to constraints on how much the manipulators can afford to favor $w$ (in terms of which positions among the manipulative votes are \textit{safe} for $w$). We then typically show that we can determine whether there exists a set of votes that respects these constraints, either by using a greedy strategy or by an appropriate reduction to a flow problem. We note that the overall spirit here is similar to the approaches commonly used for solving the \NW problem, but as we will see, there are non-trivial differences in the details. We begin with the $k$-approval and $k$-veto voting rules.

\begin{theorem}\label{sm_k_easy}
The \SM problem is in \Pb{} for the $k$-approval and $k$-veto voting rules, for any $k$ and any number of manipulators.
\end{theorem}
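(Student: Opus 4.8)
The plan is to reduce \SM for these rules to a flow feasibility question, exploiting the additive, per-vote structure of the scores. First I would argue that, without loss of generality, every manipulator places $c$ among the top $k$ positions (for $k$-approval) or outside the bottom $k$ positions (for $k$-veto); any other manipulator behaviour can only hurt $c$. Fix such a manipulator profile and let $a_w$ (respectively $b_w$) denote the number of manipulator votes in which a rival $w\ne c$ is approved (respectively vetoed). Since the manipulators commit their votes first, $c$ is the unique winner in \emph{every} extension if and only if, for each $w\ne c$, the quantity $\mathrm{score}(c)-\mathrm{score}(w)$ stays strictly positive in the extension that is worst for $c$ against $w$.

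The central observation is that this worst case decouples over the partial votes. For $k$-approval the score of a candidate is the sum over votes of the indicator that the candidate lies in the top $k$ of that vote, and distinct partial votes are extended independently; hence
\[ \min_{\text{ext}}\bigl(\mathrm{score}_{NM}(c)-\mathrm{score}_{NM}(w)\bigr) = \sum_{v\in\PP}\;\min_{v'}\bigl(\mathbf{1}[c\in \mathrm{top}_k(v')]-\mathbf{1}[w\in \mathrm{top}_k(v')]\bigr), \]
where the inner minimum is over the linear extensions $v'$ of the partial vote $v$. Each inner minimisation involves only the two candidates $c$ and $w$ and is evaluated in polynomial time by checking whether $v$ admits a completion pushing $c$ out of, and $w$ into, the top $k$. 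Writing $D_w$ for this sum, the winning condition against $w$ becomes $\ell + D_w - a_w > 0$, i.e.\ an \emph{upper} bound $a_w \le \ell + D_w - 1$ on how many manipulators may approve $w$. The $k$-veto case is entirely symmetric: the same decoupling yields a quantity $E_w$ and the condition $E_w + b_w > 0$, i.e.\ a \emph{lower} bound $b_w \ge 1 - E_w$ on how many manipulators must veto $w$.

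It then remains to decide whether a manipulator profile meeting all of these per-candidate bounds exists. Each manipulator vote approves exactly $k$ candidates including $c$ (for $k$-approval), so the $\ell(k-1)$ remaining approval slots must be filled by candidates $w\ne c$, using each $w$ at most once per vote and at most $\max(0,\ell+D_w-1)$ times overall. I would model this as an integral flow: a source feeds each vote through an arc of capacity $k-1$, each vote is joined to every candidate $w\ne c$ by a unit-capacity arc, and each $w$ is joined to the sink with capacity $\max(0,\ell+D_w-1)$; a feasible profile exists iff the maximum flow saturates the source, which is checkable in polynomial time (a greedy filling also works). For $k$-veto one instead sets up a lower-bounded flow enforcing $b_w\ge 1-E_w$.

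I expect the main obstacle to be twofold. First, one must justify the per-vote decoupling rigorously: although a single adversarial extension must be committed globally, the additivity of the $k$-approval/$k$-veto scores together with the independence of vote completions lets the adversary optimise each vote separately, and the interaction between the positions of $c$ and $w$ \emph{within} a vote is exactly what the inner minimisation captures. Second, one must verify that the flow (or greedy) step faithfully encodes both the ``distinct candidates within a vote'' constraint and the global capacities, and that saturating the source is both necessary and sufficient; the $k$-veto variant, carrying lower-bound rather than upper-bound constraints, requires the slightly more delicate lower-bounded flow formulation.
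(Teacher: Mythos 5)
Your proposal is correct and follows essentially the same route as the paper: for each rival $w$ one computes the worst-case score deficit of $c$ by optimising each partial vote independently (the per-vote decoupling), fixes $c$ in the top position of every manipulator vote, and then decides the existence of a suitable manipulator profile by a max-flow feasibility check. One small repair: when $\ell + D_w - 1 < 0$ for some $w$ the instance must be rejected outright rather than clamping that sink capacity to $0$, since $a_w = 0$ does not then satisfy the required bound $a_w \le \ell + D_w - 1$.
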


\begin{proof}
 For the time being, we just concentrate on non-manipulators' votes. For each candidate $c^{\prime} \in \mathcal{C}\setminus\{c\}$, calculate the maximum possible value of $s^{max}_{NM}(c,c^{\prime}) = s_{NM}(c^{\prime}) - s_{NM}(c)$ from non-manipulators' votes, where $s_{NM}(a)$ is the score that candidate $a$  receives from the votes of the non-manipulators. This can be done by checking all possible $O(m^2)$ pairs of positions for $c$ and $c^{\prime}$ in each vote $v$ and choosing the one which maximizes $s_v(c^{\prime}) - s_v(c)$ from that vote. We now fix the position of $c$ at the top position for the manipulators' votes and we check if it is possible to place other candidates in the manipulators' votes such that the final value of $s^{max}_{NM}(c,c^{\prime}) + s_M(c^{\prime}) - s_M(c)$ is negative which can be solved easily by reducing it to the max flow problem which is polynomial time solvable.
\end{proof}

We now prove that the \SM problem for scoring rules is in \Pb{} for one manipulator.

\begin{theorem}\label{sm_sr_easy}
The \SM problem is in \Pb{} for any scoring rule when we have only one manipulator.
\end{theorem}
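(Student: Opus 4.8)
The plan is to exploit the monotonicity of scoring rules to fix the manipulator's treatment of $c$, reduce the remaining question to an assignment problem, and then solve that problem by bipartite matching. First I would argue that the single manipulator may place $c$ at the top of her vote without loss of generality. Since a scoring rule is monotone, moving $c$ upward in the manipulative vote can only increase the total score of $c$ and can only decrease (or leave unchanged) the score any competitor receives from the manipulator; hence if some manipulative vote makes $c$ the unique winner in every extension, then so does the vote obtained by promoting $c$ to the first position. We may therefore assume $c$ is given $\alpha_1$ by the manipulator, and the remaining $m-1$ candidates receive the scores $\alpha_2 \geq \cdots \geq \alpha_m$, each exactly once, from the manipulator.

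The key structural observation is a decoupling of the ``for all extensions'' quantifier. For a fixed manipulative vote $v_M$, candidate $c$ is the unique winner in every extension of $\PP$ if and only if for every competitor $w \neq c$ and every extension $E$ we have $s_E(c) > s_E(w)$. Writing $s_{v_M}(w)$ for the score $w$ gets from $v_M$ and $s_{NM}^{E}(\cdot)$ for non-manipulator scores in the extension $E$, this condition is equivalent to $s_{v_M}(w) < \alpha_1 - d_w$ for every $w$, where $d_w := \max_E \bigl( s_{NM}^{E}(w) - s_{NM}^{E}(c) \bigr)$ is the largest non-manipulator score difference in favour of $w$ achievable by any completion. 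Because completions of distinct partial votes are independent, $d_w = \sum_{v \in \PP} \max \bigl( s_v(w) - s_v(c) \bigr)$, the inner maximum being over linear extensions of the single partial vote $v$. I would compute this per-vote quantity in polynomial time by trying all $O(m^2)$ pairs of target positions for the ordered pair $(w,c)$ and testing, for each pair, whether the partial order of $v$ admits a linear extension placing $w$ and $c$ at those positions (a routine consistency check on the partial order), retaining the feasible pair maximizing $\alpha_{\mathrm{pos}(w)} - \alpha_{\mathrm{pos}(c)}$.

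Having computed the thresholds $\tau_w := \alpha_1 - d_w$, the problem reduces to deciding whether the scores $\alpha_2, \ldots, \alpha_m$ can be assigned bijectively to the $m-1$ competitors so that each $w$ receives a score strictly below $\tau_w$. This is a bipartite feasibility problem with competitors on one side and the multiset of available scores on the other, joined by an edge whenever a score lies below the candidate's threshold; it is solvable in polynomial time either as a maximum-matching/max-flow instance or by the natural greedy rule that processes candidates in increasing order of $\tau_w$ and gives each the largest still-available score that respects its threshold. The given \SM instance is a \YES instance precisely when such an assignment exists, establishing membership in \Pb.

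I expect the main obstacle to be the correct computation of $d_w$: one must justify that the adversary may independently choose the worst completion in each partial vote (so that the global maximum factors as a sum of per-vote maxima), and that the joint placement of the two distinguished candidates $w$ and $c$ within a single partial order is feasibility-checkable in polynomial time. A secondary subtlety, to be threaded carefully through both the threshold $\tau_w$ and the matching step, is the strict inequality demanded by the \emph{unique}-winner requirement, as opposed to the weaker co-winner condition.
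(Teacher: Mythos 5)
Your proposal is correct and follows essentially the same route as the paper's proof: fix $c$ at the top of the manipulative vote, compute for each competitor the worst-case non-manipulator score advantage by enumerating the $O(m^2)$ feasible position pairs in each partial vote and summing the per-vote maxima, and then decide feasibility of assigning the remaining positions via bipartite matching. Your write-up is somewhat more explicit about the quantifier decoupling and the strict-inequality bookkeeping, but the underlying argument is the same.
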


\begin{proof}
For each candidate $c^{\prime} \in \mathcal{C}\setminus\{c\}$, calculate $s^{max}_{NM}(c,c^{\prime})$ using same technique described in the proof of \Cref{sm_k_easy}. We now put $c$ at the top position of the manipulator's vote. For each candidate $c^{\prime} \in \mathcal{C}\setminus\{c\}$, $c^{\prime}$ can be placed at positions $i\in \{2,\ldots,m\}$ in the manipulator's vote which makes $s^{max}_{NM}(c,c^{\prime}) + \alpha_i - \alpha_1$ negative. Using this, construct 
a bipartite graph with $\mathcal{C}\setminus\{c\}$ on left and $\{2, \dots, m\}$ 
on right and there is an edge between $c^{\prime}$ and $i$ iff the candidate $c^{\prime}$ 
can be placed at $i$ in the manipulator's vote according to the above criteria. 
Now solve the problem by finding existence of perfect matching in this graph.
\end{proof}

Our next result proves that the \SM problem for the Bucklin, simplified Bucklin, fallback, and simplified fallback voting rules are in \Pb{}.

\begin{theorem}\label{sm_bucklin_easy}
The \SM problem is in \Pb{} for the Bucklin, simplified Bucklin, fallback, and simplified fallback voting rules, for any number of manipulators. 
\end{theorem}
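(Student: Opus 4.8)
The plan is to reduce the problem, for each of the four rules, to a polynomial number of feasibility checks. Throughout, write $n$ for the total number of votes and $t$ for the number of manipulators. Since the Bucklin family is monotone, I would first argue that the manipulators may place $c$ at the top of each of their ballots without loss of generality: pushing $c$ upward in a manipulator's vote only lowers $c$'s (simplified) Bucklin score and never improves the standing of any rival. The crux is then a clean characterisation of failure. For a fixed extension $E$ of the non-manipulator partial votes, let $n_\ell^E(x)$ denote the number of votes of $E$ (together with the manipulators' ballots) ranking $x$ among the top $\ell$ positions. A standard argument about simplified Bucklin shows that $c$ is the \emph{unique} winner of $E$ iff there is a level $\ell$ with $n_\ell^E(c) > n/2$ and $n_\ell^E(w) \le n/2$ for every $w \ne c$; dualising, $c$ fails to be the unique winner precisely when there exist a rival $w$ and a level $\ell$ with $n_\ell^E(w) > n/2$ and $n_{\ell-1}^E(c) \le n/2$.

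Next I would exploit the fact that the ``for all extensions'' quantifier decouples across rivals on the adversary's side. With the manipulators' ballots fixed, $c$ is a strong-manipulation winner iff for every rival $w$ and every level $\ell$ there is \emph{no} extension of the non-manipulator votes simultaneously achieving $n_\ell(w) > n/2$ and $n_{\ell-1}(c) \le n/2$. For a fixed pair $(w,\ell)$ the adversary's optimal extension can be built vote by vote: in each non-manipulator partial vote it wants a completion placing $w$ inside the top $\ell$ while keeping $c$ outside the top $\ell-1$. These two goals conflict in a single ballot only when that ballot already forces $c \succ w$, and whether a completion meeting both requirements exists is a per-vote linear-extension feasibility test computable in polynomial time. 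Aggregating these per-vote decisions (a small greedy/transportation step, since one maximises $w$'s prefix count while holding $c$'s prefix count down) yields the worst-case contribution of the non-manipulators for the pair $(w,\ell)$.

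Finally I would fold in the manipulators' ballots. Their contribution is transparent for $c$ --- placing $c$ first adds $t$ to every prefix count $n_{\ell-1}(c)$ --- but for the rivals it is the shared resource that must defeat all pairs $(w,\ell)$ at once. For each dangerous pair the calculation above produces an upper bound on how many manipulator ballots may rank $w$ within their top $\ell$ positions, and the remaining task is to decide whether the $t$ ballots (each a permutation with $c$ fixed on top) can be filled to respect all these bounds simultaneously. I would model this as assigning the $m-1$ non-$c$ candidates to the positions $2, \ldots, m$ of the $t$ ballots and phrase the joint constraints as a max-flow feasibility instance, solvable in polynomial time. Running this check over all $O(m^2)$ pairs $(w,\ell)$ settles simplified Bucklin; the argument for Bucklin is identical with the ``majority within top $\ell$'' clauses replaced by ``most often within top $\ell$''. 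For fallback and simplified fallback the only new ingredient is that voters rank a subset of candidates, and treating each unapproved candidate as occupying a position beyond the approval cutoff lets the same prefix-count analysis go through.

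The main obstacle I expect is this last step: the manipulators must commit to a \emph{single} fixed profile robust against every rival in every extension, yet a single ballot cannot push all rivals low at once. Isolating exactly how much of the manipulators' ballots is ``free'' per pair $(w,\ell)$, and showing that the joint feasibility of these per-pair budgets is captured faithfully by a totally unimodular flow model (so an integral ballot assignment exists whenever a fractional one does), is where the real work lies; by contrast, the adversary-side simultaneity of maximising $w$ and minimising $c$ within one partial vote is a comparatively minor wrinkle handled by the linear-extension test.
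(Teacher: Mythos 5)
Your proposal follows essentially the same route as the paper's proof: place $c$ on top of every manipulator ballot, build for each rival--level pair $(w,\ell)$ a worst-case vote-by-vote extension of the partial profile (separating out the votes in which pushing $w$ into the top $\ell$ unavoidably pushes $c$ into the top $\ell-1$), translate the resulting majority conditions into per-pair caps $\eta_{w,\ell}$ on how many manipulator ballots may rank $w$ in the top $\ell$, and finally test joint feasibility of these caps. The only divergence is the last step, where the paper fills the manipulators' ballots greedily position by position rather than via your max-flow formulation, and it handles Bucklin versus simplified Bucklin and the fallback rules by exactly the adjustments you sketch.
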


\begin{proof}
Let $(\mathcal{C},\mathcal{P},M,c)$ be an instance of \SM for simplified Bucklin, and let $m$ denote the total number of candidates in this instance. Recall that the manipulators have to cast their votes so as to ensure that the candidate $c$ wins in every possible extension of $\mathcal{P}$. We use $\mathcal{Q}$ to denote the set of manipulating votes that we will construct. To begin with, without loss of generality, the manipulators place $c$ in the top position of all their votes. We now have to organize the positioning of the remaining candidates across the votes of the manipulators to ensure that $c$ is a necessary winner of the profile $(\mathcal{P}, \mathcal{Q})$.

To this end, we would like to develop a system of constraints indicating the overall number of times that we are free to place a candidate $w \in \mathcal{C} \setminus \{c\}$ among the top $\ell$ positions in the profile $\mathcal{Q}$. In particular, let us fix $w \in \mathcal{C} \setminus \{c\}$ and $2 \leq \ell \leq m$. Let $\eta_{w,\ell}$ be the maximum number of votes of $\mathcal{Q}$ in which $w$ can appear in the top $\ell$ positions. Our first step is to compute necessary conditions for $\eta_{w,\ell}$.

We use $\overline{\mathcal{P}}_{w,\ell}$ to denote a set of complete votes that we will construct based on the given partial votes. Intuitively, these votes will represent the ``worst'' possible extensions from the point of view of $c$ when pitted against $w$. These votes are engineered to ensure that the manipulators can make $c$ win the elections $\overline{\mathcal{P}}_{w,\ell}$ for all $w \in \mathcal{C} \setminus \{c\}$ and $\ell \in \{2,\ldots,m\}$, if, and only if, they can strongly manipulate in favor of $c$. More formally, there exists a voting profile $\mathcal{Q}$ of the manipulators so that $c$ wins the election $\overline{\mathcal{P}}_{w,\ell} \cup \mathcal{Q}$, for all $w \in \mathcal{C} \setminus \{c\}$ and $\ell \in \{2,\ldots,m\}$ if and only if $c$ wins in every extension of the profile $\mathcal{P} \cup \mathcal{Q}$. 

We now describe the profile $\overline{\mathcal{P}}_{w,\ell}$. The construction is based on the following case analysis, where our goal is to ensure that, to the extent possible, we position $c$ out of the top $\ell-1$ positions, and incorporate $w$ among the top $\ell$ positions. 
\begin{itemize}
 \item Let $v \in \mathcal{P}$ be such that either $c$ and $w$ are incomparable or $w \succ c$. We add the complete vote $v^\prime$ to  $\overline{\mathcal{P}}_{w,\ell}$, where $v^\prime$ is obtained from $v$ by placing $w$ at the highest possible position and $c$ at the lowest possible position, and extending the remaining vote arbitrarily. 
 \item Let $v \in \mathcal{P}$ be such that $c \succ w$, but there are at least $\ell$ candidates that are preferred over $w$ in $v$. We add the complete vote $v^\prime$ to  $\overline{\mathcal{P}}_{w,\ell}$, where $v^\prime$ is obtained from $v$ by placing $c$ at the lowest possible position, and extending the remaining vote arbitrarily.  
 \item Let $v \in \mathcal{P}$ be such that $c$ is forced to be within the top $\ell-1$ positions, then we add the complete vote $v^\prime$ to  $\overline{\mathcal{P}}_{w,\ell}$, where $v^\prime$ is obtained from $v$ by first placing $w$ at the highest possible position followed by placing $c$ at the lowest possible position, and extending the remaining vote arbitrarily.
 \item In the remaining votes, notice that whenever $w$ is in the top $\ell$ positions, $c$ is also in the top $\ell-1$ positions. Let $\mathcal{P}^*_{w,\ell}$ denote this set of votes, and let $t$ be the number of votes in $\mathcal{P}^*_{w,\ell}$.
\end{itemize}
We now consider two cases. Let $d_\ell(c)$ be the number of times $c$ is placed in the top $\ell-1$ positions in the profile $\overline{\mathcal{P}}_{w,\ell} \cup \mathcal{Q}$, and let $d_\ell(w)$ be the number of times $w$ is placed in the top $\ell$ positions in the profile $\overline{\mathcal{P}}_{w,\ell}$. Let us now formulate the requirement that in $\overline{\mathcal{P}}_{w,\ell} \cup \mathcal{Q}$, the candidate $c$ does \emph{not} have a majority in the top $\ell-1$ positions and $w$ \emph{does} have a majority in the top $\ell$ positions. Note that if this requirement holds for any $w$ and $\ell$, then strong manipulation is not possible. Therefore, to strongly manipulate in favor of $c$, we must ensure that for every choice of $w$ and $\ell$, we are able to negate the conditions that we derive. 

The first condition from above simply translates to $d_\ell(c) \le \nfrac{n}{2}$. The second condition amounts to requiring first, that there are at least $\nfrac{n}{2}$ votes where $w$ appears in the top $\ell$ positions, that is, $d_\ell(w) + \eta_{w,\ell} + t > \nfrac{n}{2}$. Further, note that the gap between $d_\ell(w)+\eta_{w,\ell}$ and majority will be filled by using votes from $\mathcal{P}^*_{w,\ell}$ to ``push'' $w$ forward. However, these votes  contribute equally to $w$ and $c$ being in the top $\ell$ and $\ell-1$ positions, respectively. Therefore, the difference between $d_\ell(w)+\eta_{w,\ell}$ and $\nfrac{n}{2}$ must be less than the difference between $d_\ell(c)$ and $\nfrac{n}{2}$. Summarizing, the following conditions, which we collectively denote by $(\star)$, are sufficient to defeat $c$ in some extension: $d_\ell(c) \le \nfrac{n}{2}, d_\ell(w) + \eta_{w,\ell} + t > \nfrac{n}{2}, \nfrac{n}{2} - d_\ell(w) + \eta_{w,\ell} < \nfrac{n}{2} - d_\ell(c)$.

From the manipulator's point of view, the above provides a set of constraints to be satisfied as they place the remaining candidates across their votes. Whenever $d_\ell(c) > \nfrac{n}{2}$, the manipulators place any of the other candidates among the top $\ell$ positions freely, because $c$ already has majority. On the other hand, if $d_\ell(c) \leq \nfrac{n}{2}$, then the manipulators must respect at least one of the following constraints: $\eta_{w,\ell} \leq \nfrac{n}{2} - t - d_\ell(w)$ and $\eta_{w,\ell} \leq d_\ell(c) - d_\ell(w)$.

Extending the votes of the manipulator while respecting these constraints (or concluding that this is impossible to do) can be achieved by a natural greedy strategy --- construct the manipulators' votes by moving positionally from left to right. For each position, consider each manipulator and populate her vote for that position with any available candidate. We output the profile if the process terminates by completing all the votes, otherwise, we say \textsc{No}.

We now argue the proof of correctness. Suppose the algorithm returns \textsc{No}. This implies that there exists a choice of $w \in \mathcal{C} \setminus \{c\}$ and $\ell \in \{2,\ldots,m\}$ such that for any voting profile $\mathcal{Q}$ of the manipulators, the conditions in $(\star)$ are satisfied. (Indeed, if there exists a voting profile that violated at least one of these conditions, then the greedy algorithm would have discovered it.) Therefore, no matter how the manipulators cast their vote, there exists an extension where $c$ is defeated. In particular, for the votes in $\mathcal{P} \setminus \mathcal{P}^*_{w,\ell}$, this extension is given by $\overline{\mathcal{P}}_{w,\ell}$. Further, we choose $\nfrac{n}{2} - \eta_{w,\ell} - d_\ell(w)$ votes among the votes in $\mathcal{P}^*_{w,\ell}$ and extend them by placing $w$ in the top $\ell$ positions (and extending the rest of the profile arbitrary). We extend the remaining votes in $\mathcal{P}^*_{w,\ell}$ by positioning $w$ outside the top $\ell$ positions. Clearly, in this extension, $c$ fails to achieve majority in the top $\ell-1$ positions while $w$ does achieve majority in the top $\ell$ positions. 

On the other hand, if the algorithm returns \textsc{Yes}, then consider the voting profile of the manipulators. We claim that $c$ wins in every extension of $\mathcal{P} \cup \mathcal{Q}$. Suppose, to the contrary, that there exists an extension $\mathcal{R}$ and a candidate $w$ such that the simplified Bucklin score of $c$ is no more than the simplified Bucklin score of $w$ in $\mathcal{R}$. In this extension, therefore, there exists $\ell \in \{2,\ldots,m\}$ for which $w$ attains majority in the top $\ell$ positions and $c$ fails to attain majority in the top $\ell-1$ positions. However, note that this is already impossible in any extension of the profile $\overline{\mathcal{P}}_{w,l} \cup \mathcal{P}^*_{w,\ell}$, because of the design of the constraints. By construction, the number of votes in which $c$ appears in the top $\ell-1$ positions in $\mathcal{R}$ is only greater than the number of times $c$ appears in the top $\ell-1$ positions in any extension of $\overline{\mathcal{P}}_{w,l} \cup \mathcal{P}^*_{w,\ell}$ (and similarly for $w$). This leads us to the desired contradiction. 

For the Bucklin voting rule, we do the following modifications to the algorithm. If $d_\el(c) > d_\el(w)$ for some $w\in\CC\setminus\{c\}$ and $\el<m$, then we make $\eta_{w,\el} = \infty$. The proof of correctness for the Bucklin voting rule is similar to the proof of correctness for the simplified Bucklin voting rule above. 

For Fallback and simplified Fallback voting rules, we consider the number of candidates each voter approves while computing $\eta_{w,\el}$. We output \YES if and only if $\eta_{w,\el}\ge 0$ for every $w\in\CC\setminus\{c\}$ and every $\el\le m$, since we can assume, without loss of generality, that the manipulator approves the candidate $c$ only. Again the proof of correctness is along similar lines to the proof of correctness for the simplified Bucklin voting rule.
\end{proof}

We next show that the \SM problem for the maximin voting rule is polynomial-time solvable when we have only one manipulator.

\begin{theorem}\label{sm_maximin_easy}
The \SM problem for the maximin voting rules are in \Pb{}, when we have only one manipulator.
\end{theorem}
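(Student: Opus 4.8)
The plan is to reduce the search for a successful manipulative vote to the feasibility of a simple ordering problem that can be resolved greedily. Throughout, let $\mathcal{P}$ be the non-manipulator partial profile, $c$ the distinguished candidate, and recall that the maximin score of a candidate $x$ in a complete profile is $s(x) = \min_{y \ne x} D(x,y)$ with $D(x,y) = N(x,y) - N(y,x)$. As a first simplification I would argue that the single manipulator may place $c$ in the top position of her vote $v$ without loss of generality: doing so adds $+1$ to every margin $D(c,\cdot)$ and $-1$ to every margin $D(w,c)$ with $w \ne c$, which can only raise $s(c)$ and lower each $s(w)$, so any successful vote stays successful after moving $c$ to the top. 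With $c$ fixed on top, the quantity $\theta := 1 + \min_{y\ne c}\underline{D}(c,y)$ is a constant, where $\underline{D}(c,y)$ denotes the minimum of the non-manipulator margin $D_{\overline{\mathcal{P}}}(c,y)$ over all extensions $\overline{\mathcal{P}}$ of $\mathcal{P}$; this $\theta$ is exactly the worst-case value of $s(c)$ over all extensions, and it is independent of how $v$ orders the candidates below $c$.

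Next I would show that checking whether a fixed $v$ strongly manipulates decouples over the potential rivals. Since strong manipulation fails precisely when some single extension admits a rival $w$ with $s(w) \ge s(c)$, the manipulator succeeds if and only if, for every $w \ne c$ separately, no extension makes $s(w) \ge s(c)$. For a fixed $w$, using $\min_y D(c,y)\le D(c,y)$, the failure condition ``$\exists$ extension with $\min_z D(w,z)\ge \min_y D(c,y)$'' is equivalent to ``$\exists$ extension and $\exists y^*$ with $D(w,z)\ge D(c,y^*)$ for all $z$''. The central lemma I need is a simultaneous-realizability statement: for fixed $w$ and $y^*$ there is a single extension $E_{w}$ that simultaneously maximizes every margin $D(w,z)$ and minimizes $D(c,y^*)$. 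This is obtained by extending each partial vote independently, placing $w$ as high as the partial order permits (which maximizes all of $w$'s pairwise margins at once, a standard fact) and, among the candidates below $w$, forcing $y^* \succ c$ whenever the partial order allows. Writing $\overline{D}(w,z)$ for the extension-maximum of $D_{\overline{\mathcal{P}}}(w,z)$ and folding in the manipulator's contribution $v_{w,z}\in\{+1,-1\}$ (which is $+1$ iff $w\succ_v z$, and always $-1$ for $z=c$), the adversary's best choice of $y^*$ yields the clean characterization: $v$ is defeated through $w$ if and only if $\min_{z\ne w}\big(\overline{D}(w,z)+v_{w,z}\big)\ge \theta$.

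Consequently, $v$ is a strong manipulation if and only if for every $w\ne c$ there is some $z$ with $\overline{D}(w,z)+v_{w,z}\le \theta-1$. Since $v_{w,z}=-1$ exactly when the manipulator ranks $z$ above $w$, each $w$ that is not already handled for every placement (for instance because $\overline{D}(w,c)\le\theta$, as $c$ is always above $w$) requires at least one ``witness'' $z$ from the set $Z_w=\{z\ne w,c:\overline{D}(w,z)\le\theta\}$ to be placed above $w$ in $v$. Deciding whether the candidates of $\mathcal{C}\setminus\{c\}$ can be linearly ordered so that each constrained $w$ has a witness from $Z_w$ before it is a pure precedence-feasibility question, which I would settle with the obvious greedy procedure: treat $c$ as pre-placed, and repeatedly append any candidate that is either unconstrained or already has a witness among the placed candidates. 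A standard minimal-counterexample argument shows this greedy succeeds if and only if a valid order exists (if it stalls, the unplaced candidates have all their witnesses among themselves, forcing the first of them in any order to be witness-less).

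All ingredients---computing $\overline{D}(w,z)$ and $\underline{D}(c,y)$ per pair by independently extending each partial vote, evaluating $\theta$, forming the witness sets, and running the greedy ordering---run in time polynomial in the number of candidates and votes, giving the claimed membership in \Pb. I expect the main obstacle to be the simultaneous-realizability lemma for $E_{w}$ together with the reduction of the failure test to a per-pair numeric comparison: one must verify carefully that maximizing all of $w$'s margins (via placing $w$ at the top of each vote) never conflicts with minimizing $D(c,y^*)$, and that the $\min$--$\max$ interchange is exact because a single extension attains all the relevant maxima. The remaining parity and strictness bookkeeping for the unique-winner requirement is routine by comparison.
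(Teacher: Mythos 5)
Your overall architecture---placing $c$ on top, decoupling the failure test over rivals $w$, and reducing the choice of the manipulative order to a precedence-feasibility question solved greedily---is close in spirit to the paper's greedy algorithm. But the step you yourself flag as the main obstacle is exactly where the proof breaks: the simultaneous-realizability lemma is false, and with it your ``clean characterization'' of defeat through $w$. Consider a partial vote in which $c \succ w$ is forced and $y^*$ is unconstrained relative to both. Maximizing $D(w,y^*)$ requires placing $y^*$ below $w$, hence below $c$, contributing $+1$ to $D(c,y^*)$; minimizing $D(c,y^*)$ requires placing $y^*$ above $c$, hence above $w$, contributing $-1$ to $D(w,y^*)$. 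No single extension attains both $\overline{D}(w,y^*)$ and $\underline{D}(c,y^*)$, and your construction of $E_w$ (which reorders only the candidates below $w$) leaves $D(c,y^*)$ unminimized in every such vote. Consequently your test $\min_{z}\bigl(\overline{D}(w,z)+v_{w,z}\bigr)\ge \theta$ is only a \emph{necessary} condition for the existence of a defeating extension, not a sufficient one: each conflicting vote inflates $\overline{D}(w,y^*)-\underline{D}(c,y^*)$ by $2$ relative to what any actual extension achieves, because moving $y^*$ across the block between $c$ and $w$ shifts $D(w,y^*)$ and $D(c,y^*)$ by the same amount (so their difference is invariant within any one extension), whereas your two extrema are attained in different extensions. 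This is realizable concretely: with candidates $c,w,y,u$, one such partial vote, and McGarvey-style complete votes chosen so that $y$ and $u$ have hopeless maximin scores and $D(c,y)$ is the binding constraint for $c$, the vote $c\succ w\succ y\succ u$ is a genuine strong manipulation while your test declares $w$ a threat and rejects it. Your algorithm is therefore sound (any vote it accepts works) but not provably complete, so membership in \Pb{} is not established.

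The repair needs exactly the ingredient your decoupled computation of $\overline{D}(w,\cdot)$ and $\underline{D}(c,\cdot)$ discards: for each pair consisting of a rival $w$ and a prospective nemesis $y^*$ of $c$, one must compute a single extension that is \emph{jointly} optimal for the adversary, resolving per vote the trade-off between $w$'s margins and $D(c,y^*)$ (including the transitive interactions that arise even when the pairs $(w,z)$ and $(c,y^*)$ are disjoint but linked through the partial order). This is what the paper's proof does implicitly by invoking the necessary-winner machinery for maximin of Xia and Conitzer, whose quantities $N_{(w,w')}(\cdot,\cdot)$ are indexed by the \emph{pair} precisely because the adversary's optimal extension depends on both coordinates; the greedy construction of the manipulator's vote is then run against these pairwise-optimized worst cases rather than against independently optimized margins.
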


\begin{proof}
 For the time being, just concentrate on non-manipulators' votes. Using the algorithm for NW for maximin in \cite{xia2008determining}, we compute for all pairs $w, w^{\prime} \in \mathcal{C}$, $N_{(w,w^{\prime})}(w,d)$ and $N_{(w,w^{\prime})}(c,w^{\prime})$ for all $d\in \mathcal{C}\setminus \{c\}$. This can be computed in polynomial time. Now we place $c$ at the top position in the manipulator's vote and increase all $N_{(w,w^{\prime})}(c,w^{\prime})$ by one. Now we place a candidate $w$ at the second position if for all $w^{\prime} \in \mathcal{C}$, $N_{(w,w^{\prime})}^{\prime}(w,d) < N_{(w,w^{\prime})}(c,w^{\prime})$ for all $d\in \mathcal{C}\setminus \{c\}$, where $N_{(w,w^{\prime})}^{\prime}(w,d) = N_{(w,w^{\prime})}(w,d)$ of the candidate $d$ has already been assigned some position in the manipulator's vote, and $N_{(w,w^{\prime})}^{\prime}(w,d) = N_{(w,w^{\prime})}(w,d)+1$ else. The correctness argument is in the similar lines of the classical greedy manipulation algorithm of~\cite{bartholdi1989computational}.
\end{proof}

\subsection{Opportunistic Manipulation Problem}

For the plurality, fallback, and simplified fallback voting rules, it turns out that the voting profile where all the manipulators approve only $c$ is a $c$-opportunistic voting profile, and therefore it is easy to devise a manipulative vote. 

\begin{observation}\label{obs:plurality_fallback}
 The \OM problem is in \Pb for the plurality and fallback voting rules for a any number of manipulators.
\end{observation}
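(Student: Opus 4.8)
The plan is to show that for both rules there is a single, fixed profile of manipulator votes that is $(r,c)$-opportunistic for \emph{every} partial profile $\PP$; since such a profile then always exists and is trivially constructible, the \OM instance is always a \textsc{Yes}-instance, and the algorithm that simply outputs this profile runs in polynomial time. Let $\mathcal{Q}_0$ denote the profile in which each of the $\el$ manipulators ranks $c$ first (in the plurality case) or approves only $c$ (in the fallback case). First I would establish the witness $\mathcal{Q}_0$ and reduce the whole problem to proving that $\mathcal{Q}_0$ is opportunistic.

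The first key step is a \emph{domination lemma}. Writing $f_j^{\mathcal{Q}}(x)$ for the number of votes of $\overline{\PP}\cup\mathcal{Q}$ that place $x$ among the top $j$ positions (for plurality only $j=1$ matters, i.e. the plurality score), I would observe that for any extension $\overline{\PP}$ of $\PP$ and any manipulator profile $\mathcal{Q}$, the choice $\mathcal{Q}_0$ simultaneously maximizes $c$'s tallies and minimizes every competitor's tallies at every level: $f_j^{\mathcal{Q}_0}(c)\ge f_j^{\mathcal{Q}}(c)$ and $f_j^{\mathcal{Q}_0}(x)\le f_j^{\mathcal{Q}}(x)$ for all $x\neq c$ and all $j$. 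This holds because in $\mathcal{Q}_0$ the candidate $c$ sits in the top $j$ positions of all $\el$ manipulator votes (the maximum possible contribution $+\ell$), while no $x\neq c$ receives any contribution (the minimum possible contribution $0$).

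Next I would use domination to verify opportunism, namely that $c$ wins under $\mathcal{Q}_0$ in every \emph{viable} extension $\overline{\PP}$ (one in which some profile $\mathcal{Q}'$ makes $c$ win). For plurality this is immediate: if $c$ strictly beats every $x$ under $\mathcal{Q}'$, then $f_1^{\mathcal{Q}_0}(c)\ge f_1^{\mathcal{Q}'}(c) > f_1^{\mathcal{Q}'}(x)\ge f_1^{\mathcal{Q}_0}(x)$, so $c$ still wins under $\mathcal{Q}_0$. The fallback case is the main obstacle, because the winner is decided by the smallest level $j^\ast$ at which some candidate reaches majority (with a fallback to the most-approved candidate when no such level exists), and replacing $\mathcal{Q}'$ by $\mathcal{Q}_0$ can shift this decisive level. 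I would handle it by a short case analysis on the winning mechanism under $\mathcal{Q}'$. If $c$ wins by majority at level $j'$, then domination gives that $c$ still reaches majority by level $j'$ under $\mathcal{Q}_0$; moreover no competitor can reach majority at any level $\le j'$ under $\mathcal{Q}_0$ without having already done so under $\mathcal{Q}'$ (contradicting minimality of $j'$ for $\mathcal{Q}'$), and at the decisive level the strict lead of $c$ over each competitor is preserved by the domination inequalities. If instead no candidate reaches majority under $\mathcal{Q}'$ and $c$ wins on total approvals, the same inequalities applied at level $j=m$ (where the top-$m$ tally equals the approval count) show that under $\mathcal{Q}_0$ either $c$ newly becomes the first to reach a majority, or $c$ retains the strict lead in total approvals.

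Finally I would conclude that $\mathcal{Q}_0$ is $(r,c)$-opportunistic in all cases, so the answer to \OM is always \textsc{Yes}, witnessed by $\mathcal{Q}_0$, which is computable in time linear in the input size; hence \OM lies in \Pb{} for plurality and fallback for any number of manipulators. I expect the argument to be insensitive to the exact tie-breaking convention, since the domination inequalities preserve both the set of majority-reaching candidates and the strict score comparisons used to single out the winner, so the same reasoning adapts verbatim to the unique-winner and co-winner formulations (and, identically, to simplified fallback).
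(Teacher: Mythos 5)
Your proposal is correct and matches the paper's approach exactly: the paper justifies this observation with the single remark that the profile in which all manipulators approve only $c$ (equivalently, rank $c$ first under plurality) is a $(r,c)$-opportunistic profile, which is precisely your witness $\mathcal{Q}_0$. Your domination lemma and the case analysis on the decisive majority level for fallback supply the details the paper leaves implicit, and they go through as you describe.
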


For the veto voting rule, however, a more intricate argument is needed, that requires building a system of constraints and a reduction to a suitable instance of the maximum flow problem in a network, to show polynomial time tractability of \OM.

\begin{theorem}\label{thm:vetoOM}
 The \OM problem is in \Pb for the veto voting rule for a constant number of manipulators.
\end{theorem}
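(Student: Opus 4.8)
The plan is to exploit the fact that under veto only the bottom-ranked candidate of each vote affects the outcome. Consequently a manipulator's vote is, as far as the election is concerned, completely described by the single candidate it vetoes; with $\el$ manipulators there are therefore at most $m^\el$ ``effective'' manipulative profiles, which is polynomial in $m$ since $\el$ is constant. My algorithm enumerates all of these profiles $Q$, represented by the vector $q(\cdot)$ recording how many manipulators veto each candidate, and for each $Q$ tests whether it is $(\text{veto},c)$-opportunistic. It outputs \textsc{Yes} iff some $Q$ passes the test. Since a manipulative profile can always be completed to genuine linear orders without changing any veto tally, exhibiting one opportunistic $q(\cdot)$ suffices.

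The core subroutine decides, for a fixed $Q$, whether $Q$ \emph{fails} to be opportunistic, i.e.\ whether there is a \emph{bad} extension: an extension $\overline{\PP}$ of $\PP$ that is viable (some manipulative profile makes $c$ the unique winner) yet in which $c$ is not the unique winner under $Q$. For veto, both predicates depend only on the non-manipulator veto counts $v(\cdot)$ of $\overline{\PP}$. Up to these counts, an extension is specified by choosing, for each partial vote $i$, which candidate is placed last; candidate $x$ is an admissible choice for vote $i$ exactly when $x$ is a minimal element of the partial order (no candidate is forced below it), giving an allowed set $A_i\subseteq\CC$. Writing $v(c)=t$, viability is the greedy-manipulation condition $\sum_{x\ne c}\max(0,t+1-v(x))\le\el$, and ``$c$ loses under $Q$'' is the existence of some $y\ne c$ with $v(y)+q(y)\le t+q(c)$. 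The key structural observation is that viability forces at most $\el$ candidates other than $c$ to satisfy $v(x)\le t$, since each such deficient candidate contributes at least one unit to the deficiency sum.

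Using this, the subroutine guesses $t$, the set $D$ of deficient candidates (there are $\binom{m}{\le\el}$ choices), their exact counts (the deficiencies form a composition of an integer $\le\el$, hence only a constant number of possibilities), and a witness $y$ for the losing condition. These guesses pin down $v(c)=t$, an exact value for each $x\in D$, the lower bound $v(x)\ge t+1$ for each remaining $x$, and, when $y\notin D$, the additional upper bound $v(y)\le t+q(c)-q(y)$ (when $y\in D$ the losing condition is merely checked against the guess). It then remains to decide whether some choice function $f(i)\in A_i$ realizes a veto-count vector meeting all these lower and upper bounds. This is a bipartite degree-constrained assignment (each vote supplies one unit, candidate $x$ receives $v(x)$ units, edges given by the $A_i$), solvable as a max-flow feasibility problem with lower and upper capacities in polynomial time. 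If some guess yields a feasible bad extension then $Q$ is not opportunistic; if no guess does so for any threatening $y$, then $Q$ is opportunistic. Observe that when no viable extension exists at all, no bad extension is ever found, so every $Q$ is vacuously opportunistic and the algorithm correctly returns \textsc{Yes}.

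The main obstacle I anticipate is this last step: arguing that realizability of a target pattern of veto counts, subject simultaneously to the partial-order admissibility sets $A_i$, the viability bounds, and the strict unique-winner requirement, reduces cleanly to flow feasibility, and that the finitely many guesses (enabled by the $\le\el$ deficiency bound) genuinely cover all bad extensions. Care is also needed with strictness---distinguishing $\le$ from $<$ when encoding both the losing witness and viability---and with confirming that completing the chosen bottom candidates to full linear orders never disturbs the computed tallies.
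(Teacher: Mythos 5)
Your proposal is correct and follows essentially the same route as the paper's proof: enumerate the polynomially many ($O(m^{\ell})$ for constant $\ell$) veto-count vectors of the manipulators, and for each one detect a ``viable but losing'' extension by encoding the admissible last-place candidates of each partial vote as a bipartite flow instance with lower and upper bounds (demands and capacities) on the candidates' veto tallies. The only substantive difference is how the viability witness is enumerated --- the paper simply iterates over a second tuple $(n'_a)$, i.e.\ over pairs of manipulator veto-count vectors, to derive the per-candidate score windows, whereas you derive the same windows from the greedy manipulation condition together with the observation that a viable extension has at most $\ell$ deficient candidates; both enumerations are polynomial and lead to the same flow-feasibility test.
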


\begin{proof}
 Let $(\PP, \el, c)$ be an input instance of \OM. We may assume without loss of generality that the manipulators approve $c$. We view the voting profile of the manipulators as a tuple $(n_a)_{a\in\CC\setminus\{c\}}\in(\mathbb{N}\cup\{0\})^{m-1}$ with $\sum_{a\in\CC\setminus\{c\}} n_a = \el$, where the $n_a$ many manipulators disapprove $a$. We denote the set of such tuples as $\TT$ and we have $\TT=O((2m)^\el)$ which is polynomial in $m$ since \el is a constant. A tuple $(n_a)_{a\in\CC\setminus\{c\}}\in\TT$ is not $c$-optimal if there exists another tuple $(n_a^\prm)_{a\in\CC\setminus\{c\}}\in\TT$ and an extension $\overline{\PP}$ of $\PP$ with the following properties. We denote the veto score of a candidate from $\PP$ by $s(\cdot)$. For every candidate $a\in\CC\setminus\{c\}$, we define two quantities $w(a)$ and $d(a)$ as follows.
 \begin{itemize}
  \item $s(c) > s(a)$ for every $a\in\CC\setminus\{c\}$ with $n_a = n_a^\prm = 0$ and we define $w(a) = s(c) - 1, d(a)=0$
  \item $s(c) > s(a) - n_a^\prm  $ for every $a\in\CC\setminus\{c\}$ with $n_a \ge n_a^\prm$ and we define $w(a) = s(c) - n_a^\prm - 1, d(a)=0$
  \item $s(a) - n_a \ge s(c) > s(a) - n_a^\prm$ for every $a\in\CC\setminus\{c\}$ with $n_a < n_a^\prm$ and we define $w(a) = s(c) - n_a^\prm, d(a)=s(a) - n_a$
 \end{itemize}
 We guess the value of $s(c)$. Given a value of $s(c)$, we check the above two conditions by reducing this to a max flow problem instance as follows. We have a source vertex $s$ and a sink $t$. We have a vertex for every $a\in\CC$ (call this set of vertices $Y$) and a vertex for every vote $v\in\PP$ (call this set of vertices $X$). We add an edge from $s$ to each in $X$ of capacity one. We add an edge of capacity one from a vertex $x\in X$ to a vertex $y\in Y$ if the candidate corresponding to the vertex $y$ can be placed at the last position in an extension of the partial vote corresponding to the vertex $x$. We add an edge from a vertex $y$ to $t$ of capacity $w(a)$, where $a$ is the voter corresponding to the vertex $y$. We also set the demand of every vertex $y$ $d(a)$ (that is the total amount of flow coming into vertex $y$ must be at least $d(a)$), where $a$ is the voter corresponding to the vertex $y$. Clearly, the above three conditions are met if and only if there is a feasible $|\PP|$ amount of flow in the above flow graph. Since $s(c)$ can have only $|\PP|+1$ possible values (from $0$ to $\PP$) and $|\TT|=O((2m)^\el)$, we can iterate over all possible pairs of tuples in $\TT$ and all possible values of $s(c)$ and find a $c$-optimal voting profile if there exists a one.
\end{proof}

\section{Conclusion}\label{sec:con}

We revisited many settings where the complexity barrier for manipulation was non-existent, and studied the problem under an incomplete information setting. Our results present a fresh perspective on the use of computational complexity as a barrier to manipulation, particularly in cases that were thought to be dead-ends (because the traditional manipulation problem was polynomially solvable). To resurrect the argument of computational hardness, we have to relax the model of complete information, but we propose that the incomplete information setting is more realistic, and many of our hardness results work even with very limited amount of incompleteness in information.

In the next chapter, we will see how possible instances of manipulation can be detected for various voting rules.

\chapter{Manipulation Detection}
\label{chap:detection}

\blfootnote{A preliminary version of the work in this chapter was published as \cite{DeyMN15a}: Palash Dey, Neeldhara Misra, and Y. Narahari. Detecting possible manipulators in elections. In Proc. 2015 International Conference on Autonomous Agents and Multiagent Systems, AAMAS 2015, Istanbul, Turkey, May 4-8, 2015, pages 1441-1450, 2015.}

\begin{quotation}
{\small Manipulation is a problem of fundamental importance 
in voting theory in which the voters exercise their 
votes strategically instead of voting honestly to make
an alternative that is more preferred to her, win the election.
The classical Gibbard-Satterthwaite theorem 
shows that there is no strategy-proof voting rule that simultaneously satisfies 
certain combination of desirable properties. 
Researchers have attempted to get around
the impossibility result in several ways such as
domain restriction and computational hardness of manipulation. 
However, these approaches are known to have fundamental limitations. 
Since prevention of manipulation seems to be elusive even after substantial research effort, an interesting research direction therefore is detection of manipulation. Motivated by this, 
we initiate the study of detecting possible instances of manipulation 
in elections.

We formulate two pertinent computational 
problems in the context of manipulation detection - Coalitional Possible Manipulators (CPM) and 
Coalitional Possible Manipulators given Winner (CPMW), where a 
suspect group of voters is provided as input and we have to find whether 
they can be a potential coalition of manipulators.
In the absence of any suspect group, we formulate two more computational 
problems namely Coalitional Possible Manipulators Search (CPMS) and Coalitional Possible Manipulators Search 
given Winner (CPMSW). We provide polynomial time algorithms
for these problems, for several popular voting rules. For a few other voting rules,
we show that these problems are \NPshort{}-complete. We observe that 
detecting possible instances of manipulation may be easy even when the actual manipulation problem is computationally intractable, as seen
for example, in the case of the Borda voting rule.}
\end{quotation}

\section{Introduction}
A basic problem with voting rules is that the voters 
may vote strategically instead of voting honestly, leading to the 
selection of a candidate which is not the {\em actual winner}. 
We call a candidate actual winner if it wins the election when every voter votes truthfully. 
This phenomenon of strategic voting is called 
{\em manipulation} in the context of voting. The  
Gibbard-Satterthwaite (G-S) theorem \cite{gibbard1973manipulation,satterthwaite1975strategy} 
proves that manipulation is unavoidable for any \textit{unanimous}
and \textit{non-dictatorial} voting rule if we have at least three candidates. 
A voting rule is called \textit{unanimous} if 
whenever any candidate is most preferred by all the voters, such a candidate is the winner. 
A voting rule is called \textit{non-dictatorial} 
if there does not exist any voter whose most preferred candidate is always 
the winner irrespective of the votes of other voters.
The problem of manipulation is particularly relevant for multiagent systems
since agents have computational power to determine strategic votes. 
There have been several attempts to bypass the impossibility 
result of the G-S theorem.

Economists have proposed domain restriction as a way out of the impossibility 
implications of the G-S theorem. The G-S theorem assumes all possible 
preference profiles as the domain of voting rules. In a restricted domain, it has been
shown that we can have voting rules that are not vulnerable to manipulation. 
A prominent restricted domain is the domain of single peaked preferences, in 
which the median voting rule provides a satisfactory 
solution~\cite{mas1995microeconomic}. To know more about other domain restrictions, 
we refer to \cite{mas1995microeconomic,gaertner2001domain}. This approach of
restricting the domain, however, 
suffers from the requirement that the social planner needs to know the 
domain of preference profiles of the voters, which is often impractical. 

\subsection{Related Work}

Researchers in computational social choice theory have proposed invoking computational intractability of manipulation as a possible 
work around for the G-S theorem. 
Bartholdi et al.~\cite{bartholdi1991single,bartholdi1989computational} 
first proposed the idea of using computational hardness as a barrier 
against manipulation. Bartholdi et al. defined and studied the computational problem called manipulation 
where a set of manipulators have to compute their votes that make their preferred candidate win the election.
The manipulators know the votes of the truthful voters and the voting rule that will be used to compute the winner. Following this, a large body of research  
\cite{narodytska2011manipulation,davies2011complexity,xia2009complexity,xia2010scheduling,conitzer2007elections,obraztsova2011ties,elkind2005hybrid,faliszewski2013weighted,narodytska2013manipulating,gaspers2013coalitional,obraztsova2012optimal,faliszewski2010using,zuckerman2011algorithm,dey2014asymptotic,faliszewski2014complexity,elkind2012manipulation} 
shows that the manipulation problem is \NPshort{}-complete for many voting rules. However, 
Procaccia et al.~\cite{procaccia2006junta,ProcacciaR07} 
showed average case easiness of manipulation assuming \textit{junta} 
distribution over the voting profiles.  
Friedgut et al.~\cite{friedgut2008elections} showed that any \textit{neutral} voting rule which is sufficiently far from 
being dictatorial is manipulable with non-negligible probability at any uniformly random preference profile 
by a uniformly random preference. The above result holds for elections 
with three candidates only. 
A voting rule is called \textit{neutral} 
if the names of the candidates are immaterial. 
Isaksson et al.~\cite{isaksson2012geometry} 
generalize the above result to any number of candidates which has been further generalized to all voting rules which may not be neutral by Mossel and Racz in~\cite{mossel2015quantitative}. 
Walsh~\cite{walsh2010empirical} empirically shows ease of manipulating an STV 
(single transferable vote) election  
-- one of the very few voting rules 
where manipulation, even by one voter, is \NPCshort{}~\cite{bartholdi1991single}. 
In addition to the results mentioned above, there exist many other results in 
the literature that emphasize the weakness of considering computational complexity as a barrier against manipulation
\cite{conitzer2006nonexistence,xia2008sufficient,xia2008generalized,faliszewski2010ai,walsh2011computational}.
Hence, the barrier of computational hardness is ineffective against manipulation in many settings.

\subsection{Motivation}

In a situation where multiple attempts for prevention of manipulation fail to provide a fully satisfactory solution, detection of manipulation is a natural next step of research. There have been scenarios where establishing the occurrence of manipulation is straightforward, by observation or hindsight. For example, in sport, there have been occasions where the very structure of the rules of the game have encouraged teams to deliberately lose their matches. Observing such occurrences in, for example, football (the 1982 FIFA World Cup football match played between West Germany and Austria) and badminton (the quarter-final match between South Korea and China in the London 2012 Olympics), the relevant authorities have subsequently either changed the rules of the game (as with football) or disqualified the teams in question (as with the badminton example). The importance of detecting manipulation lies in the potential for implementing corrective measures in the future. For reasons that will be evident soon, it is not easy to formally define the notion of manipulation detection. Assume that we have the votes from an
election that has already happened. A voter is potentially a manipulator if
there exists a preference $\succ$, different from the voter's reported preference,
which is such that the voter had an ``incentive to deviate'' from the \suc. 
Specifically, suppose the candidate
who wins with respect to this voter's reported preference is preferred (in $\succ$) over the 
candidate who wins with respect to $\succ$. In such a situation, $\succ$ could 
potentially be the voter's truthful preference, and the voter could be 
refraining from being truthful because an untruthful vote leads to a
more favorable outcome with respect to $\succ$. Note that we do not (and indeed, cannot) conclusively suggest that a particular voter 
has manipulated an election. This is because the said voter can always claim that she voted truthfully; 
since her actual preference is only known to her, there is no way to prove or disprove 
such a claim. Therefore, we are inevitably limited to asking only 
whether or not a voter has \textit{possibly} manipulated an election.

Despite this technical caveat, it is clear that efficient detection of manipulation, even if it is only possible manipulation, is potentially of interest in practice. We believe that, the information whether a certain group of voters have possibly manipulated an election or not would be useful to social planners. For example, the organizers of an event, say London 2012 Olympics, may be interested to have this information. Also, in settings where data from many past elections (roughly over a fixed set of voters) is readily available, it is conceivable that possible manipulation could serve as suggestive evidence of real manipulation. Aggregate data about possible manipulations, although formally inconclusive, could serve as an important evidence of real manipulation, especially in situations where the instances of possible manipulation turn out to be statistically significant. Thus, efficient detection of possible manipulation would provide an useful input to a social planner for future elections. We remark that having a rich history is typically not a problem, 
particularly for AI related applications, 
since the data generated from an election is normally kept for future 
requirements (for instance, for data mining or learning). For example, several past affirmatives for possible manipulation is one possible way of formalizing the notion of erratic past behavior. Also, applications where benefit of doubt may be important, for example, elections in judiciary systems, possible manipulation detection may be useful. Thus the computational problem of detecting possible manipulation is of definite interest in many settings.

\subsection{Our Contribution} 

The novelty of this work is in initiating research on {\em detection
of possible manipulators} in elections. We formulate four pertinent computational
problems in this context:

\begin{itemize}
 \item CPM: In the {\em coalitional possible manipulators} problem, 
  we are interested in whether or not a given subset of voters is a possible coalition of manipulators [\Cref{cpmprobdef}]. 
 \item CPMW: The {\em coalitional possible manipulators given winner} is  
  the CPM problem with the additional information about who the winner would have been if the possible manipulators had all voted truthfully [\Cref{CPMWdef}].
 \item CPMS, CPMSW: In CPMS ({\em Coalitional Possible Manipulators Search}), we want to know, whether there exists any coalition of possible manipulators of a size at most $k$ [\Cref{cpmoprobdef}]. Similarly, we define CPMSW ({\em Coalitional Possible Manipulators Search given Winner}) [\Cref{cpmwoprobdef}].
\end{itemize}

Our specific findings are as follows.

\begin{itemize}
 \item We show that all the four problems above, for scoring rules and the maximin voting rule, 
 are in \Pshort{} when the coalition size is one [\Cref{PUMScoringRuleP,PUMMaximinP}].
 \item We prove that all the four problems, {\em for any coalition size}, are in \Pshort{} for a wide class of scoring rules which include the Borda voting rule [\Cref{CPMWScr}, \Cref{thm:CPMWOScr,bordakapp}]. 
 \item We show that, for the Bucklin voting rule [\Cref{UPCMBucklin}], both the CPM and CPMW problems 
 are in \Pshort{}, {\em for any coalition size}. The CPMS and CPMSW problems for the Bucklin voting rule are also in \Pshort{}, when we have maximum possible coalition size $k=O(1)$.
 \item We show that both the CPM and the CPMW problems are \NPCshort{} for the STV voting rule [\Cref{UPCMSTVNPC,CPMWSTVNPC}], {\em even for a coalition of size one}. We also prove that the CPMW problem is \NPCshort{} for maximin voting rule [\Cref{maxmincpmwnpc}], {\em for a coalition of size two}. 
\end{itemize}

We observe that all the four problems are computationally easy for many voting rules that we study in this work. This can be taken as a positive result. 
The results for the CPM and the CPMW problems are summarized in \Cref{tbl:detection_summary}. 

\begin{table}[htbp]
  \centering
{\renewcommand{\arraystretch}{1.17}
\begin{tabular}{|c|c|c|c|c|}\hline
  Voting Rule	& CPM, $k=1$	&CPM	  	&CPMW, $k=1$	&CPMW	 	\\\hline\hline
  Scoring Rules	& \Pshort{}	&?		&\Pshort{}	&?		\\\hline
  Borda		& \Pshort{}	& \Pshort{}	& \Pshort{}	&\Pshort{}	\\\hline 
  $k$-approval	& \Pshort{}	& \Pshort{}	& \Pshort{}	& \Pshort{}	\\\hline 
  Maximin	& \Pshort{}	&?		&\Pshort{}	&\NPCshort{}\\\hline
  Bucklin	& \Pshort{}	& \Pshort{}	& \Pshort{}	& \Pshort{}	\\\hline 
  STV		& \NPCshort{}	& \NPCshort{}	& \NPCshort{}	& \NPCshort{}	\\\hline
 \end{tabular}
 }
 \caption{\normalfont Results for CPM and CPMW ($k$ denotes coalition size). The `?' mark means that the problem is open.}\label{tbl:detection_summary}
\end{table} 

\section{Problem Formulation}

Consider an election that has already happened in which all the votes are known and thus the 
winner $x\in\mathcal{C}$ is also known. We call the candidate $x$ the {\em current winner} of the election. 
The authority may suspect that the voters belonging to a subset $M\subset \mathcal{V}$ of the set of voters have formed a coalition among themselves and manipulated the 
election by voting non-truthfully. The authority believes that other voters who do not belong to $M$, have voted truthfully. We denote $|M|$, the size of the coalition, by $k$.
Suppose the authority has auxiliary information, 
may be from some other sources, which says that the \textit{actual winner} should have been some candidate 
$y\in \mathcal{C}$ other than $x$. 
We call a candidate {\em actual winner} if it wins the election where all the voters vote truthfully. This means that the authority
thinks that, had the voters in $M$ voted truthfully, the candidate $y$ would have been the winner.  
We remark that there are practical situations, for example, 1982 FIFA World cup or 2012 London Olympics, where 
the authority knows the actual winner. 
This situation is formalized below.

\begin{definition}\label{cpmwdef}
Let $r$ be a voting rule, and $(\succ_i)_{i\in \mathcal{V}}$ be a voting profile of a set $\cal V$ of $n$ voters. Let $x$ be the winning candidate with respect to $r$ for this profile. For a candidate $y \neq x$, $M\subset \mathcal{V}$ is called a coalition of possible manipulators against $y$ with respect to $r$ if there exists a $|M|$-voters' profile 
 $(\succ_j^{\prime})_{j\in M} \in \mathcal{L(C)}^{|M|}$ such that $x \succ_j^\prime y, \forall j\in M$, and further,
$r((\succ_j)_{j\in \mathcal{V}\setminus M},(\succ_i^{\prime})_{i\in M}) = y$.  
\end{definition} 

Using the notion of coalition of possible manipulators, we formulate a computational problem called {\em Coalitional Possible Manipulators given Winner (CPMW)} as follows. Let $r$ be any voting rule.

\begin{definition}{($r$--CPMW Problem)}\label{CPMWdef}\\
Given a preference profile $(\succ_i)_{i\in \mathcal{V}}$ of a set of voters $\mathcal{V}$ over a set of candidates $\mathcal{C}$, a subset of voters $M\subset \mathcal{V}$, and a candidate $y$, determine if $M$ is a coalition of possible manipulators against $y$ with respect to $r$.
\end{definition}

In the CPMW problem, the actual winner is given in the input. However, 
it may very well happen that the authority does not have any other information to 
guess the \textit{actual winner} -- the candidate who would have won the election had the voters in $M$ 
voted truthfully. In this situation, the authority is interested in knowing whether there 
is a $|M|$-voter profile which along with the votes in $\mathcal{V}\setminus M$ makes some candidate 
$y\in \mathcal{C}$ the winner who is different from the current winner $x\in \mathcal{C}$ and all 
the preferences in the $|M|$-voters' profile prefer $x$ to $y$. If such a $|M|$-voter profile exists for the subset of voters $M$, then we call $M$ a \textit{coalition of possible manipulators} and the 
corresponding computational problem is called \textit{Coalitional Possible Manipulators (CPM)}. 
These notions are formalized below.

\begin{definition}
\label{cpmdef}
Let $r$ be a voting rule, and $(\succ_i)_{i\in \mathcal{V}}$ be a voting profile of a set $\cal V$ of $n$ voters. A subset of voters $M\subset \mathcal{V}$ is called a coalition of possible manipulators with respect to $r$ if $M$ is a coalition of possible manipulators against some candidate $y$ with respect to $r$.
\end{definition}

\begin{definition}{($r$--CPM Problem)}\label{cpmprobdef}\\
Given a preference profile $(\succ_i)_{i\in \mathcal{V}}$ of a set of voters $\mathcal{V}$ over a set of candidates $\mathcal{C}$, and a subset of voters $M\subset \mathcal{V}$, determine if $M$ is a coalition of possible manipulators with respect to $r$.
\end{definition}

In both the CPMW and CPM problems, a subset of voters which the authority suspect to be a coalition of manipulators, is given in the input. However, there can be situations where there is no specific subset of voters to suspect. In those scenarios, it may still be useful to know, what are the possible coalition of manipulators of size less than some number $k$. Towards that end, we extend the CPMW and CPM problems to search for a coalition of potential possible manipulators and call them {\em Coalitional  Possible Manipulators Search given Winner (CPMSW)} and {\em Coalitional Possible Manipulators Search (CPMS)} respective.

\begin{definition}{($r$--CPMSW Problem)}\label{cpmwoprobdef}\\
Given a preference profile $(\succ_i)_{i\in \mathcal{V}}$ of a set of voters $\mathcal{V}$ over a set of candidates $\mathcal{C}$, a candidate $y$, and an integer $k$, determine whether there exists any $M\subset \mathcal{V}$ with $|M|\le k$ such that $M$ is a coalition of possible manipulators against $y$.
\end{definition}

\begin{definition}{($r$--CPMS Problem)}\label{cpmoprobdef}\\
Given a preference profile $(\succ_i)_{i\in \mathcal{V}}$ of a set of voters $\mathcal{V}$ over a set of candidates $\mathcal{C}$, and an integer $k$, determine whether there exists any $M\subset \mathcal{V}$ with $|M|\le k$ such that $M$ is a coalition of possible manipulators.
\end{definition}

\subsection{Discussion} 

The CPMW problem may look very similar to the manipulation problem~\cite{bartholdi1989computational,conitzer2007elections} -- in both the problems a set of voters try to make a candidate winner. However, in the CPMW problem, the actual winner must be less preferred to the current winner in every manipulator's vote. Although it may look like a subtle difference, it changes the nature and complexity theoretic behavior of the problem completely. For example, we show that all the four problems have an efficient algorithm for a large class of voting rules that includes the Borda voting rule, for any coalition size. However, the manipulation problem for the Borda voting rule is \NPCshort{}, even when we have at least two manipulators~\cite{davies2011complexity,betzler2011unweighted}. Another important difference is that the manipulation problem, in contrast to the problems studied in this work, does not take care of manipulators' preferences. We believe that there does not exist any formal reduction between the CPMW problem and the manipulation problem. 

On the other hand, the CPMS problem is similar to the margin of victory problem defined by Xia~\cite{xia2012computing}, where also we are looking for changing the current winner by changing at most some $k$ number of votes, which in turn identical to the destructive bribery problem~\cite{faliszewski2009hard}. Whereas, in the CPMS problem, the vote changes can occur in a restricted fashion. An important difference between the two problems is that the margin of victory problem has the hereditary property which the CPMS problem does not possess (there is no coalition of possible manipulators of size $n$ in any election for all the common voting rules). These two problems do not seem to have any obvious complexity theoretic implications.

Now we explore the connection among the four problems that we study here. Notice that, a polynomial time algorithm for the CPM and the CPMW problems gives us a polynomial time algorithm for the CPMS and the CPMSW problems for any maximum possible coalition size $k=O(1)$. Also, observe that, a polynomial time algorithm for the CPMW (respectively CPMSW) problem implies a polynomial time algorithm for the CPM (respectively CPMS) problem. Hence, we have the following observations.

\begin{observation}\label{cpmcpmw}
 For every voting rule, if the maximum possible coalition size $k=O(1)$, then,
 \[CPMW\in\text{\Pshort{}} \Rightarrow CPM, CPMSW, CPMS\in\text{\Pshort{}}\]
\end{observation}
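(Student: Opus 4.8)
The statement is an \Cref{observation}, and its proof should be short and structural, relying only on definitions already in the excerpt. The plan is to establish the two implications encoded in the chain by separately observing (i) that a polynomial-time algorithm for CPMW yields one for CPM, and (ii) that a polynomial-time algorithm for CPM (or CPMW) together with the constant bound $k = O(1)$ on the coalition size yields polynomial-time algorithms for the corresponding search problems CPMS and CPMSW.

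First I would handle the reduction from CPM to CPMW. Recall from \Cref{cpmdef} that $M$ is a coalition of possible manipulators if and only if $M$ is a coalition of possible manipulators \emph{against some candidate $y$}. Hence, given a CPM instance $((\succ_i)_{i\in\mathcal{V}}, M)$, I would run the assumed polynomial-time CPMW algorithm once for each of the at most $m-1$ candidates $y \neq x$ (where $x$ is the current winner), asking whether $M$ is a coalition of possible manipulators against $y$. The CPM instance is a \textsc{Yes}-instance precisely when at least one of these CPMW queries returns \textsc{Yes}. Since $m \leq |\mathcal{C}|$ and each CPMW call runs in polynomial time, the total running time is polynomial, giving $\text{CPMW} \in \mathsf{P} \Rightarrow \text{CPM} \in \mathsf{P}$. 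The same argument verbatim shows $\text{CPMSW} \in \mathsf{P} \Rightarrow \text{CPMS} \in \mathsf{P}$, since CPMS is the search analogue that drops the fixed target winner.

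Next I would handle the search problems under the assumption $k = O(1)$. Given a CPMSW instance $((\succ_i)_{i\in\mathcal{V}}, y, k)$, the task is to decide whether some $M \subseteq \mathcal{V}$ with $|M| \leq k$ is a coalition of possible manipulators against $y$. The plan is brute-force enumeration: iterate over all subsets $M \subseteq \mathcal{V}$ of size at most $k$, and for each invoke the assumed polynomial-time CPMW algorithm on $((\succ_i)_{i\in\mathcal{V}}, M, y)$; answer \textsc{Yes} iff some subset succeeds. The number of such subsets is $\sum_{j=0}^{k}\binom{n}{j} = O(n^k)$, which is polynomial in $n$ precisely because $k = O(1)$. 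The same enumeration, combined with the CPM algorithm, solves CPMS. Chaining these observations — $\text{CPMW} \in \mathsf{P}$ gives CPM, CPMSW directly (the latter by enumeration), and CPMSW then gives CPMS — establishes the full chain in the statement.

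There is no genuine technical obstacle here; the only point requiring a little care is bookkeeping the two distinct sources of polynomiality, namely the factor $m-1$ from quantifying over target candidates and the factor $O(n^k)$ from enumerating bounded-size coalitions, and confirming that both remain polynomial (the latter only because $k$ is constant, which is exactly the hypothesis invoked). I would also note explicitly that the implication $\text{CPMW}\in\mathsf{P}\Rightarrow\text{CPMSW}\in\mathsf{P}$ uses the constant-$k$ assumption whereas $\text{CPMW}\in\mathsf{P}\Rightarrow\text{CPM}\in\mathsf{P}$ does not, so the hypothesis $k = O(1)$ is needed only for the search variants.
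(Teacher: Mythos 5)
Your proposal is correct and matches the paper's (implicit) argument: the paper justifies this observation only by the informal remarks in the preceding discussion, namely that CPMW solves CPM by iterating over the at most $m-1$ possible actual winners, and that constant coalition size permits brute-force enumeration of the $O(n^k)$ candidate coalitions for the search variants. Your write-up simply makes that reasoning explicit, including the correct accounting of where the $k=O(1)$ hypothesis is actually used.
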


\begin{observation}\label{prop:cpmocpmwo}
 For every voting rule,
 \[CPMSW\in\text{\Pshort{}} \Rightarrow CPMS\in\text{\Pshort{}}\]
\end{observation}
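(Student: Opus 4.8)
The plan is to reduce any instance of CPMS to polynomially many instances of CPMSW and invoke the assumed polynomial-time CPMSW algorithm on each. The key observation is the definition of a coalition of possible manipulators (\Cref{cpmdef}): a subset $M \subseteq \mathcal{V}$ is a coalition of possible manipulators if and only if it is a coalition of possible manipulators \emph{against some candidate} $y$. Thus the existential quantifier over coalitions in CPMS can be decomposed into an outer existential quantifier over candidates $y$ and an inner CPMSW query.

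Concretely, given a CPMS instance $\left((\succ_i)_{i\in\mathcal{V}}, k\right)$, I would first compute the current winner $x$ by applying the voting rule $r$ to the profile $(\succ_i)_{i\in\mathcal{V}}$; this takes polynomial time for any reasonable voting rule. Then, for every candidate $y \in \mathcal{C} \setminus \{x\}$, I would construct the CPMSW instance $\left((\succ_i)_{i\in\mathcal{V}}, y, k\right)$ and run the assumed polynomial-time CPMSW algorithm on it. If any of these queries returns \YES, the algorithm outputs \YES; otherwise it outputs \NO.

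Correctness is immediate from \Cref{cpmdef}: there exists $M \subset \mathcal{V}$ with $|M| \le k$ that is a coalition of possible manipulators if and only if there exists some $y \neq x$ and some $M$ with $|M| \le k$ such that $M$ is a coalition of possible manipulators against $y$, which is exactly the condition that at least one of the constructed CPMSW instances is a \YES instance. Since $|\mathcal{C} \setminus \{x\}| = m - 1$, the procedure makes at most $m-1$ calls to the CPMSW algorithm, each running in polynomial time, together with a single winner-determination computation. Hence the whole procedure runs in polynomial time, establishing $\text{CPMSW} \in \Pshort{} \Rightarrow \text{CPMS} \in \Pshort{}$.

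There is essentially no obstacle here; the only point requiring minor care is that the candidate $y$ witnessing membership must differ from the current winner $x$, which is guaranteed by \Cref{cpmwdef} (a coalition manipulates \emph{against} $y$ only when $y \neq x$). Consequently, restricting the outer loop to $\mathcal{C} \setminus \{x\}$ loses no generality, and the argument goes through verbatim.
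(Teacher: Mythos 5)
Your proof is correct and takes essentially the same route the paper intends for this observation: CPMS is just the disjunction over candidates $y$ of CPMSW instances, so at most $m$ calls to a polynomial-time CPMSW algorithm decide CPMS. One small remark: computing the current winner $x$ is dispensable (and winner determination is not guaranteed to be polynomial for \emph{every} voting rule), since you can simply query CPMSW for all $y\in\mathcal{C}$ and rely on the instance with $y=x$ being vacuously a \NO{} instance.
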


\section{Results for the CPMW, CPM, CPMSW, and CPMS Problems}

In this section, we present our algorithmic results for the CPMW, CPM, CPMSW, and CPMS problems for various 
voting rules.

\subsection{Scoring Rules}

Below we have certain lemmas which form a crucial ingredient of our algorithms. To begin with, we define the notion of a {\em manipulated preference}. Let $r$ be a scoring rule and $\succ:=(\succ_i,\succ_{-i})$ be a voting profile of $n$ voters. Let $\succ_i^{\prime}$ be a preference such that 

$$r(\succ) >_i^{\prime}  r(\succ_i^{\prime},\succ_{-i})$$

Then we say that $\succ_i^{\prime}$ is a $(\succ,i)$-manipulated preference with respect to $r$. We omit the reference to $r$ if it is clear from the context.

\begin{lemma}\label{losersorted}
Let $r$ be a scoring rule and $\succ:=(\succ_i,\succ_{-i})$ be a voting profile of $n$ voters.  Let $a$ and $b$ be two candidates such that $score_{\succ_{-i}}(a) > score_{\succ_{-i}}(b)$, and let $\succ_i^{\prime}$ be $(\succ,i)$-manipulated preference where $a$ precedes $b$:
$$\succ_i^{\prime}:= \cdots > a > \cdots > b > \cdots$$
If $a$ and $b$ are not winners with respect to either $(\succ_i^{\prime},\succ_{-i})$ or  $\succ$, then the preference $\succ_i^{\prime\prime}$ obtained from $\succ_i^{\prime}$ by interchanging $a$ and $b$ is also $(\succ,i)$-manipulated. 
\end{lemma}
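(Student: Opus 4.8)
The plan is to show that swapping $a$ and $b$ (where $a$ currently precedes $b$ in $\succ_i^\prime$) does not disturb the manipulation, i.e., that $\succ_i^{\prime\prime}$ still makes the $\succ$-winner strictly preferred (in $\succ_i^{\prime\prime}$) to the $(\succ_i^{\prime\prime}, \succ_{-i})$-winner. The key structural observation is that $\succ_i^{\prime\prime}$ differs from $\succ_i^\prime$ only in the positions of $a$ and $b$, and hence the scores contributed by voter $i$ change \emph{only} for the candidates $a$ and $b$: every other candidate retains exactly the same position, so its total score is unchanged across the two profiles $(\succ_i^\prime, \succ_{-i})$ and $(\succ_i^{\prime\prime}, \succ_{-i})$.

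First I would make the position bookkeeping precise. Say in $\succ_i^\prime$ candidate $a$ sits at rank $p$ and $b$ at rank $q$ with $p < q$, so $a$ receives $\alpha_p$ and $b$ receives $\alpha_q$ from voter $i$, with $\alpha_p \ge \alpha_q$. After the swap, $a$ receives $\alpha_q$ and $b$ receives $\alpha_p$. Writing $s^\prime(\cdot)$ and $s^{\prime\prime}(\cdot)$ for the total scores under the two manipulated profiles, I would record that $s^{\prime\prime}(a) = s^\prime(a) - (\alpha_p - \alpha_q)$, that $s^{\prime\prime}(b) = s^\prime(b) + (\alpha_p - \alpha_q)$, and that $s^{\prime\prime}(c) = s^\prime(c)$ for every other candidate $c$. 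Thus the only candidate whose score could \emph{increase} as a result of the swap is $b$, and $a$'s score can only decrease.

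Next I would use the hypotheses to argue that this single possible increase cannot change the winner. Let $w$ denote the winner under $(\succ_i^\prime, \succ_{-i})$; by assumption $w \notin \{a, b\}$, so $s^{\prime\prime}(w) = s^\prime(w) = \max_c s^\prime(c)$. The only candidate that gained score is $b$, so the only way $w$ could fail to remain the (unique) winner is if $b$ overtakes $w$, i.e. $s^{\prime\prime}(b) \ge s^{\prime\prime}(w)$. Here I would invoke the hypothesis $score_{\succ_{-i}}(a) > score_{\succ_{-i}}(b)$: since the contribution of $\succ_{-i}$ is fixed, and since in $\succ_i^\prime$ candidate $a$ was ranked above $b$ (so $a$'s contribution from voter $i$ was at least $b$'s), we have $s^\prime(a) > s^\prime(b)$, and after the swap $s^{\prime\prime}(b) = s^\prime(b) + (\alpha_p - \alpha_q) = s^\prime(a) + (\alpha_q - \alpha_p) + (\alpha_p-\alpha_q)$, which collapses to exactly $s^\prime(a) - $ nothing; more carefully $s^{\prime\prime}(b) = s^\prime(a) - (s^\prime(a)-s^\prime(b)) + (\alpha_p-\alpha_q)$, and since $s^\prime(a)-s^\prime(b) = score_{\succ_{-i}}(a)-score_{\succ_{-i}}(b) + (\alpha_p-\alpha_q) > \alpha_p - \alpha_q$, we conclude $s^{\prime\prime}(b) < s^\prime(a) \le s^\prime(w) = s^{\prime\prime}(w)$. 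Hence $b$ does not overtake $w$, so $w$ remains the winner under $\succ_i^{\prime\prime}$.

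Finally I would close the manipulation argument: since the winner under $(\succ_i^{\prime\prime}, \succ_{-i})$ is the same candidate $w$ as under $(\succ_i^\prime, \succ_{-i})$, and since $\succ_i^\prime$ was $(\succ,i)$-manipulated means $r(\succ) \succ_i^\prime w$, I must translate this preference to $\succ_i^{\prime\prime}$. Because $r(\succ) \notin \{a,b\}$ as well (again by hypothesis both $a,b$ are non-winners under $\succ$), and similarly $w \notin \{a,b\}$, the relative order of $r(\succ)$ and $w$ is untouched by swapping $a$ and $b$; thus $r(\succ) \succ_i^{\prime\prime} w = r(\succ_i^{\prime\prime},\succ_{-i})$, establishing that $\succ_i^{\prime\prime}$ is $(\succ,i)$-manipulated. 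The main obstacle I anticipate is the bookkeeping in the middle paragraph: one must be careful to account for the fact that $a$'s position edge over $b$ in voter $i$'s ballot is itself part of $s^\prime(a)-s^\prime(b)$, so that the strict inequality $score_{\succ_{-i}}(a) > score_{\succ_{-i}}(b)$ genuinely dominates the score gain $\alpha_p-\alpha_q$ that $b$ receives; handling the non-strict versus strict winner conditions correctly (unique winner under both profiles) is where the argument needs the most care.
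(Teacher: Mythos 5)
Your proof is correct and follows essentially the same route as the paper's: the core step in both is the chain $s^{\prime\prime}(b) < s^{\prime}(a) \le s^{\prime}(w) = s^{\prime\prime}(w)$ (the paper phrases it via $score_{\succ_i^{\prime\prime}}(b) = score_{\succ_i^{\prime}}(a)$ together with $score_{\succ_{-i}}(b) < score_{\succ_{-i}}(a)$, which is the same computation as your $\alpha_p,\alpha_q$ bookkeeping), combined with the observation that $a$'s score only decreases and all other candidates are untouched. Your final remark about the relative order of $r(\succ)$ and $w$ being preserved corresponds to the paper's appeal to the fixed lexicographic tie-breaking order.
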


\begin{proof}
Let $x:= r(\succ_i^{\prime},\succ_{-i})$. If suffices to show that $x$ continues to win in the proposed profile $(\succ_i^{\prime\prime},\succ_{-i})$. To this end, it is enough to argue the scores of $a$ and $b$ with respect to $x$. First, consider the score of $b$ in the new profile:
 \begin{eqnarray*}
  score_{(\succ_i^{\prime\prime},\succ_{-i})}(b) 
  &=& score_{\succ_i^{\prime\prime}}(b) + score_{\succ_{-i}}(b) \nonumber \\
  &<& score_{\succ_i^{\prime}}(a) + score_{\succ_{-i}}(a) \nonumber \\
  &=& score_{(\succ_i^{\prime},\succ_{-i})}(a) \nonumber \\
  &\le& score_{(\succ_i^{\prime},\succ_{-i})}(x) \nonumber \\
  &=& score_{(\succ_i^{\prime\prime},\succ_{-i})}(x)
 \end{eqnarray*}
 The second line uses the fact that 
 $score_{\succ_i^{\prime\prime}}(b) = score_{\succ_i^{\prime}}(a)$ and 
 $score_{\succ_{-i}}(b) < score_{\succ_{-i}}(a)$. The fourth line comes from 
 the fact that $x$ is the winner and the last line follows from the fact that 
 the position of $x$ is same in both profiles. Similarly, we have the following argument for the score of $a$ in the new profile (the second line below simply follows 
 from the definition of scoring rules).
 \begin{eqnarray*}
  score_{(\succ_i^{\prime\prime},\succ_{-i})}(a) 
  &=& score_{\succ_i^{\prime\prime}}(a) + score_{\succ_{-i}}(a) \nonumber \\
  &\le& score_{\succ_i^{\prime}}(a) + score_{\succ_{-i}}(a) \nonumber \\
  &=& score_{(\succ_i^{\prime},\succ_{-i})}(a) \nonumber \\
  &\le& score_{(\succ_i^{\prime},\succ_{-i})}(x) \nonumber \\
  &=& score_{(\succ_i^{\prime\prime},\succ_{-i})}(x)
 \end{eqnarray*}
 Since the tie breaking rule is according to some predefined fixed order 
 $\succ_t  \in \mathcal{L(C)}$ and the candidates tied with winner in 
 $(\succ_i^{\prime\prime},\succ_{-i})$ also tied with winner in 
 $(\succ_i^{\prime},\succ_{-i})$, we have the following,
 \[ r(\succ) >_i^{\prime\prime} r(\succ_i^{\prime\prime},\succ_{-i})\]
\end{proof}
\setcounter{equation}{0}

We now show that, if there is some $(\succ,i)$-manipulated preference with respect to a scoring rule $r$, then there exists a $(\succ,i)$-manipulated preference with a specific structure.

\begin{lemma}\label{algotheorem}
 Let $r$ be a scoring rule and $\succ:=(\succ_i,\succ_{-i})$ be a voting profile of $n$ voters. If there is some $(\succ,i)$-manipulated preference with respect to $r$, then there also exists a $(\succ,i)$-manipulated preference $\succ_i^{\prime}$ where the actual winner $y$ immediately follows the current winner $x$:
  $$\succ_i^{\prime}:= \cdots > x > y > \cdots $$
 and the remaining candidates are in nondecreasing ordered of their scores from ${\succ_{-i}}$.
\end{lemma}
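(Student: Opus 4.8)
The plan is to take an arbitrary $(\succ,i)$-manipulated preference and massage it into the claimed canonical form through a two-phase rearrangement, keeping the manipulated property invariant throughout. Fix such a preference $\succ_i'$, write $x := r(\succ)$ for the current winner and $y := r(\succ_i',\succ_{-i})$ for the new winner; by the definition of a manipulated preference we have $x \succ_i' y$. The crucial structural observation, which I would record first, is that $x$ and $y$ are the only candidates that win with respect to $\succ$ or to $(\succ_i',\succ_{-i})$, so every other candidate is a non-winner in \emph{both} profiles and is therefore eligible for the exchange operation of \Cref{losersorted}.

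In the first phase I would promote $y$. Repeatedly interchange $y$ with the candidate immediately above it, stopping only when that candidate is $x$. Each such swap moves $y$ to a higher position, so its score weakly increases while the score of the displaced candidate (a non-winner, since it is neither $x$ nor $y$) weakly decreases and all other scores are unchanged; hence $y$ remains the selected winner. Here one must check that tie-breaking cannot dethrone $y$, which holds because $y$'s score never drops and no other candidate's score rises. Since $y$ is never pushed above $x$, the relation $x \succ_i' y$ is preserved, and at the end of this phase $x$ sits immediately above $y$.

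In the second phase I would sort the remaining candidates. Reading the current preference from top to bottom and ignoring the block $x \succ y$, consider the induced subsequence of the $m-2$ non-winners. As long as this subsequence contains two consecutive entries $a$ (higher) and $b$ (lower) with $score_{\succ_{-i}}(a) > score_{\succ_{-i}}(b)$, interchange $a$ and $b$; since $a,b \notin \{x,y\}$ they are non-winners in both profiles, so \Cref{losersorted} applies and guarantees that the result is still $(\succ,i)$-manipulated with $y$ still winning. Each interchange transposes two subsequence-adjacent elements and thus strictly decreases the number of inversions in the subsequence, which forces termination; moreover a transposition only permutes non-winners and leaves the block $x \succ y$ (hence $y$'s winning status) intact, even when the block lies between $a$ and $b$ in the full order. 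When no inversions remain, the non-winners appear in nondecreasing order of their $\succ_{-i}$-scores, giving exactly the claimed canonical preference.

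The step I expect to require the most care is the first phase, because promoting $y$ is not an instance of \Cref{losersorted} (that lemma rearranges only non-winners) and so needs its own monotonicity-plus-tie-breaking argument; one must also verify that the two phases compose, i.e. that after promoting $y$ the pair $(x,y)$ is still precisely the set of winners, so that \Cref{losersorted} stays applicable during sorting. A cleaner self-contained alternative, which avoids iterating swaps, is to build the target preference directly: keep $x$ at its original position $q$, place $y$ at position $q+1$, and fill the other positions with the non-winners in nondecreasing $\succ_{-i}$-score order. Then $x$'s score is unchanged and $y$'s score can only increase, while assigning the largest remaining positional scores to the lowest-base-score candidates minimizes the maximum total among non-winners (a rearrangement inequality), so no non-winner can overtake $y$; this shows the constructed preference is manipulated with winner $y$, yielding a second route to the same conclusion.
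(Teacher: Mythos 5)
Your main argument is correct and is essentially the paper's proof with the two phases in the opposite order: the paper first invokes \Cref{losersorted} to put the non-winners in nondecreasing order and then slides $y$ up to the position immediately below $x$, justifying that single block move by exactly the monotonicity-plus-tie-breaking argument you give for your swap-by-swap promotion. Your direct-construction alternative also goes through, though its appeal to a rearrangement inequality needs a little extra care with tie-breaking, since after reassigning positions a non-winner \emph{different} from the one originally tied with $y$ could end up with score equal to $y$'s, and one must check it still loses the tie-break.
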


\begin{proof}
Let $\succ^{\prime\prime}$ be a $(\succ,i)$-manipulated preference with respect to $r$.  
 Let $x:= r(\succ), y:= r(\succ^{\prime\prime},\succ_{-i})$. From \Cref{losersorted}, 
 without loss of generality, we may assume that, all candidates except $x, y$ 
 are in nondecreasing order of $score_{\succ_{-i}}(.)$ in the preference $\succ^{\prime\prime}$. 
 If $\succ_i^{\prime\prime}:= \cdots \succ x \succ \cdots \succ y \succ \cdots \succ \cdots$, 
 we define $\succ_i^{\prime}:= \cdots \succ x \succ y \succ \cdots \succ \cdots$ from 
 $\succ_i^{\prime\prime}$ where 
 $y$ is moved to the position following $x$ and the position of the candidates in between 
 $x$ and $y$ in $\succ_i^{\prime\prime}$ is deteriorated by one position each. The position 
 of the rest of the candidates remain same in both $\succ_i^{\prime\prime}$ and $\succ_i^{\prime}$.
 Now we have following,
 \begin{eqnarray*}
  score_{(\succ_i^{\prime},\succ_{-i})}(y) 
  &=& score_{\succ_i^{\prime}}(y) + score_{\succ_{-i}}(y) \nonumber \\
  &\ge& score_{\succ_i^{\prime\prime}}(y) + score_{\succ_{-i}}(y) \nonumber \\
  &=& score_{(\succ_i^{\prime\prime},\succ_{-i})}(y)
 \end{eqnarray*}
 We also have,
 \begin{eqnarray*}
  score_{(\succ_i^{\prime},\succ_{-i})}(a) 
  \le score_{(\succ_i^{\prime\prime},\succ_{-i})}(a), 
  \forall a \in \mathcal{C} \setminus \{y\}
 \end{eqnarray*}
 Since the tie breaking rule is according to some predefined order $\succ_t  \in \mathcal{L(C)}$, 
 we have the following,
 \[ r(\succ) >_i^{\prime} r(\succ^{\prime},\succ_{-i}) \]
\end{proof}

Using \Cref{losersorted,algotheorem}, we now present our results for the scoring rules.

\begin{theorem}\label{PUMScoringRuleP}
 The CPMW, CPM, CPMSW, and CPMS problems for scoring rules are in \Pshort{} for a coalition of size $1$ (that is, the coalition size $k=1$).
\end{theorem}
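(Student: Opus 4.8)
The plan is to reduce all four problems to the CPMW problem. Since the coalition has size $k=1=O(1)$, \Cref{cpmcpmw} tells us that a polynomial-time algorithm for CPMW immediately yields polynomial-time algorithms for CPM, CPMSW, and CPMS: CPM is solved by running the CPMW test against each of the $m-1$ candidates $y\ne x$; CPMSW (resp. CPMS) is solved by iterating over the $n$ singleton coalitions and, for each, invoking CPMW (resp. CPM). Thus the entire burden is to place CPMW with a single manipulator in \Pshort{}.

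So I would fix the current profile $\succ=(\succ_i,\succ_{-i})$, let $x=r(\succ)$ be the current winner (computable in polynomial time), let $M=\{i\}$, and let $y\ne x$ be the target. We must decide whether there is a preference $\succ_i^\prime$ with $x\succ_i^\prime y$ and $r(\succ_i^\prime,\succ_{-i})=y$; equivalently, whether a $(\succ,i)$-manipulated preference making $y$ win exists. The key idea is that \Cref{algotheorem} collapses the exponentially many candidate preferences to polynomially many: if any such manipulated preference exists, then one exists in the canonical form where the $m-2$ candidates other than $x$ and $y$ appear in nondecreasing order of their scores from $\succ_{-i}$, and $x$ is immediately followed by $y$. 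The only remaining degree of freedom is the location of the adjacent block $x\succ y$ within the sorted list, which admits exactly $m-1$ choices.

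This suggests the following algorithm. First compute, from $\succ_{-i}$ alone, the score of every candidate, and sort the $m-2$ candidates in $\mathcal{C}\setminus\{x,y\}$ accordingly. Then, for each of the $m-1$ insertion positions of the block $x\succ y$, form the corresponding complete preference $\succ_i^\prime$, append it to $\succ_{-i}$, and determine the winner $r(\succ_i^\prime,\succ_{-i})$ under the fixed tie-breaking order. Output \textsc{Yes} if and only if some placement yields winner $y$. Completeness is exactly \Cref{algotheorem}: if a manipulated preference making $y$ win exists at all, the canonical one is among the $m-1$ tested. Soundness is immediate, since every tested preference has $x$ directly before $y$ and hence satisfies $x\succ_i^\prime y$, so any placement that makes $y$ win is a genuine witness. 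The running time is dominated by $O(m)$ winner determinations, each polynomial, together with one sort, so CPMW is in \Pshort{}; the three remaining problems then follow as above.

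I do not expect a serious obstacle, because \Cref{losersorted} and \Cref{algotheorem} already furnish the structural normal form that does all the heavy lifting. The one point that needs care is confirming that the canonical form leaves only the block placement free --- the relative order of all candidates other than $x,y$ is pinned down by the sorting, and the adjacency $x\succ y$ is forced --- so that the search space is genuinely linear rather than merely polynomial of higher degree; this is precisely what \Cref{algotheorem} asserts, and the constraint $x\succ_i^\prime y$ is automatically maintained by the construction. A secondary check is that the current winner $x$ and each tentative winner are evaluated consistently with the same tie-breaking rule $\succ_t$ used in the lemmas, which is routine.
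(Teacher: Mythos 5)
Your proposal is correct and follows essentially the same route as the paper: reduce everything to single-manipulator CPMW via \Cref{cpmcpmw}, then use the canonical form of \Cref{algotheorem} to restrict the search to the $m-1$ preferences obtained by sliding the adjacent block $x \succ y$ through the list of remaining candidates sorted in nondecreasing order of their scores from $\succ_{-i}$, checking the winner for each. The only difference is that you spell out the soundness/completeness argument and the derivation of CPM, CPMSW, and CPMS more explicitly than the paper does.
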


\begin{proof}
From \Cref{cpmcpmw}, it is enough to give a polynomial time algorithm for the CPMW problem. So consider the CPMW problem. We are given the actual winner $y$ and we compute the current winner $x$ with respect to $r$. Let $\succ_{[j]}$ be a preference where $x$ and $y$ are in positions $j$ and $(j+1)$ respectively, and the rest of the candidates are in nondecreasing order of the score that they receive from $\succ_{-i}$. For $j \in \{1,2,\ldots,m-1\}$, we check if $y$ wins with the profile $(\succ_{-i},\succ_{[j]})$. If we are successful with at least one $j$ we report YES, otherwise we say NO. The correctness follows from \Cref{algotheorem}. 
Thus we have a polynomial time algorithm for CPMW when $k=1$.
\end{proof}

Now we present our results for the CPMW and the CPM problems when $k>1$. 
If $m=O(1)$, then both the CPMW and the CPM problems for any anonymous and efficient voting rule $r$ can be solved in polynomial time by iterating over all possible ${m!+k-1 \choose m!}$ ways the manipulators can have actual preferences. A voting rule is called efficient if winner determination under it is in \Pshort{}.

\begin{theorem}\label{CPMWScr}
 For scoring rules with $\alpha_1 - \alpha_2 \le \alpha_i - \alpha_{i+1}, \forall i$, 
 the CPMW and the CPM problems are in \Pshort{}, for any coalition size.
\end{theorem}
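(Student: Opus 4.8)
The plan is to first dispose of the search/decision variants and reduce everything to a single clean optimization. Since the candidate set has size $m$, the CPM problem reduces in polynomial time to at most $m-1$ instances of CPMW: compute the current winner $x$ with respect to $r$, and for each candidate $y \neq x$ ask whether $M$ is a coalition of possible manipulators against $y$ (\Cref{cpmdef}). Hence it suffices to solve CPMW for a fixed target $y$. Writing $s_{-M}(\cdot)$ for the score contributed by the honest votes $(\succ_i)_{i\in \mathcal{V}\setminus M}$, the task is to decide whether the $k=|M|$ manipulators can cast rankings, each placing $x$ above $y$, so that $y$ becomes the overall winner.

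The core structural step is a coalitional strengthening of \Cref{losersorted,algotheorem}. I would argue that there is an optimal manipulating profile in which \emph{every} manipulator ranks $x$ first and $y$ immediately second (call this \emph{canonical form}). Two exchange arguments drive this. First, the second position is the highest slot available to $y$ under the constraint $x \succ y$, so it maximises $y$'s score; and this is exactly where the hypothesis $\alpha_1 - \alpha_2 \le \alpha_i - \alpha_{i+1}$ enters, because if a manipulator instead places $x$ at position $j$ with $y$ just below it, the per-vote margin $\mathrm{score}(y)-\mathrm{score}(x)$ equals $\alpha_{j+1}-\alpha_j$, which is least damaging to $y$ precisely when $\alpha_j - \alpha_{j+1}$ is smallest, i.e.\ at $j=1$. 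Second, fixing $x$ and $y$ in the top two slots leaves the remaining candidates the pointwise-smallest possible multiset of positional scores $\{\alpha_3,\dots,\alpha_m\}$, which can only help $y$ against every other candidate; an argument in the spirit of \Cref{losersorted} shows that swapping two non-winning candidates into a more favourable order never hurts. After this reduction, $y$'s final score is pinned at $s_{-M}(y)+k\alpha_2$ and $x$'s at $s_{-M}(x)+k\alpha_1$, so beating $x$ is a single inequality, and the residual question is purely how to distribute positions $3,\dots,m$ among the remaining $m-2$ candidates across the $k$ canonical votes.

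For this residual distribution I would appeal to a Birkhoff--von Neumann style observation: any nonnegative integer ``usage'' matrix $N=(n_{z,i})$, indexed by the remaining candidates $z$ and positions $i\in\{3,\dots,m\}$, whose row sums and column sums all equal $k$, decomposes into $k$ genuine rankings, so the per-vote permutation constraints impose no extra restriction. Thus it suffices to find such an $N$ in which each candidate $z$ stays below the winning threshold, i.e.\ $s_{-M}(z)+\sum_i \alpha_i n_{z,i} \le s_{-M}(y)+k\alpha_2$, with the fixed tie-breaking order $\succ_t$ deciding equalities in $y$'s favour. I would set this up as a bipartite assignment / min-cost flow feasibility problem and check it by binary search on the threshold, so that each candidate's budget becomes a capacity constraint on the flow entering its node, and then convert a feasible $N$ back into explicit canonical votes by the decomposition above.

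The hard part will be exactly this last step: proving that the coalitional position-assignment is polynomial-time solvable and that canonical form is without loss of generality for \emph{all} $k$ (the case $k=1$ is already \Cref{PUMScoringRuleP}). The delicacy is that, superficially, filling positions for a coalition resembles the position-assignment underlying coalitional Borda manipulation, which is \NPCshort{} even for two manipulators, and Borda itself satisfies $\alpha_1-\alpha_2 \le \alpha_i-\alpha_{i+1}$; so the tractability cannot come from the score-vector hypothesis alone. The argument must genuinely exploit the detection-specific constraint that $x$ (the known current winner) sits above $y$ in every manipulator vote, which fixes the two top positions and decouples the target's and the reference winner's scores from the assignment. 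Making this decoupling precise --- showing that the resulting flow/matching is integral and correctly certifies CPMW, and carefully handling ties --- is where I expect the bulk of the technical effort to lie.
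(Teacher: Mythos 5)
Your canonical-form step (every manipulator ranks $x$ first and $y$ second, justified exactly by the hypothesis $\alpha_1-\alpha_2 \le \alpha_i - \alpha_{i+1}$) matches the paper. The gap is in the final step, which you correctly sense is delicate but do not resolve. The residual problem you pose --- distribute the remaining $m-2$ candidates over positions $3,\dots,m$ of the $k$ votes so that each candidate $z$ respects an individual budget $s_{-M}(z)+\sum_i \alpha_i n_{z,i}\le s_{-M}(y)+k\alpha_2$ --- is structurally the same as coalitional manipulation for scoring rules restricted to the lower positions, which is \NPHshort{} for Borda even with two manipulators, and Borda satisfies the theorem's hypothesis. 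Your proposed fix does not work: a constraint $\sum_i \alpha_i n_{z,i}\le B_z$ is a \emph{weighted} per-row budget, not a bound on the total flow into node $z$ (that total is fixed at $k$ for every $z$), so it cannot be encoded as a node or arc capacity in a min-cost-flow feasibility check; if it could, you would have a polynomial-time algorithm for Borda coalitional manipulation.

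The missing idea is that no assignment problem ever needs to be solved. The paper takes the manipulators' \emph{reported} votes, moves $x$ to the first and $y$ to the second position of each, and leaves the other candidates in their reported order; call this profile $\succ'$. Relative to the reported profile, $x$'s score does not decrease, the score of every $z\notin\{x,y\}$ does not increase, and $x$ was the reported winner, so the winner of $\succ'$ is either $x$ or $y$; hence $y$ wins $\succ'$ if and only if $y$ beats $x$ there. Since in every canonical profile the coalition contributes exactly $k\alpha_1$ to $x$ and $k\alpha_2$ to $y$, the $x$-versus-$y$ comparison is invariant over all canonical profiles, so $y$ wins in some valid manipulation profile if and only if it wins in $\succ'$. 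In other words, the reported votes themselves supply a feasible completion of positions $3,\dots,m$ whenever one exists, the only possible obstruction is $x$ itself, and the whole algorithm is a single winner-determination call on $\succ'$ rather than a Birkhoff--von Neumann decomposition plus flow computation. (Your reduction of CPM to $m-1$ instances of CPMW is fine and is what the paper does implicitly.)
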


\begin{proof}
 We provide a polynomial time algorithm for the CPMW problem in this setting. 
 Let $x$ be the current winner and $y$ be the given actual winner. Let
 $M$ be the given subset of voters. Let $((\succ_i)_{i\in M}, (\succ_j)_{j\in V\setminus M})$ be the 
 reported preference profile. Without loss of generality, we assume that $x$ is the most preferred candidate 
 in every $\succ_i, i\in M$. Let us define $\succ_i^{\prime}, i\in M,$ by moving $y$ to the second position in the preference
 $\succ_i$. In the profile $((\succ_i^{\prime})_{i\in M}, (\succ_j)_{j\in V\setminus M})$, the winner is either $x$ or $y$ since 
 only $y$'s score has increased. We claim that $M$ is a coalition of possible manipulators with respect to $y$ 
 \textit{if and only if} $y$ is the winner in preference profile 
 $((\succ_i^{\prime})_{i\in M}, (\succ_j)_{j\in V\setminus M})$. This can be seen as follows. Suppose there exist 
 preferences $\succ_i^{\prime\prime}, $ with $x \succ_i^{\prime\prime} y, i\in M,$ for which $y$ wins in the profile 
 $((\succ_i^{\prime\prime})_{i\in M}, (\succ_j)_{j\in V\setminus M})$. Now without loss of generality, we can assume that 
 $y$ immediately follows $x$ in all $\succ_i^{\prime\prime}, i\in M,$ and  
 $\alpha_1 - \alpha_2 \le \alpha_i - \alpha_{i+1}, \forall i$ implies that we can also assume that $x$ and $y$ are in the 
 first and second positions respectively in all $\succ_i^{\prime\prime}, i\in M$. Now in both the profiles, 
 $((\succ_i^{\prime})_{i\in M}, (\succ_j)_{j\in V\setminus M})$ and $((\succ_i^{\prime\prime})_{i\in M}, (\succ_j)_{j\in V\setminus M})$, the score of $x$ and $y$ are same. But in the first profile $x$ wins and in the second profile $y$ wins, which is a contradiction.
\end{proof}

We now prove a similar result for the CPMSW and CPMS problems.

\begin{theorem}\label{thm:CPMWOScr}
 For scoring rules with $\alpha_1 - \alpha_2 \le \alpha_i - \alpha_{i+1}, \forall i$, 
 the CPMSW and the CPMS problems are in \Pshort{}, for any coalition size.
\end{theorem}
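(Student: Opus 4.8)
The plan is to reduce \textsc{CPMSW} to a greedy selection problem, exploiting the structural characterization established for \textsc{CPMW} in \Cref{CPMWScr}, and then to solve \textsc{CPMS} by running the \textsc{CPMSW} routine once for each of the at most $m-1$ candidates $y \ne x$ that could serve as the target actual winner, where $x$ is the observed current winner (cf.\ \Cref{prop:cpmocpmwo}). Thus it suffices to give a polynomial time algorithm for \textsc{CPMSW} with $y$ fixed. For a candidate subset $M$, I would consider the profile obtained by replacing every observed vote $\succ_i,\ i \in M$, by the vote that places $x$ first, $y$ second, and keeps the remaining $m-2$ candidates in the same relative order they had in $\succ_i$ (filling positions $3,\ldots,m$). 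By the gap hypothesis $\alpha_1 - \alpha_2 \le \alpha_i - \alpha_{i+1}$ together with \Cref{algotheorem} and the argument behind \Cref{CPMWScr}, this replacement is an optimal manipulation for $y$: the set $M$ is a coalition of possible manipulators against $y$ if and only if $y$ wins this converted profile.

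The key structural fact I would prove is that this conversion, for \emph{any} choice of $M$, never increases the score of a candidate $z \notin \{x,y\}$ and can only increase those of $x$ and $y$. Indeed, lifting $x$ and $y$ to the top two positions pushes every other candidate weakly downward, so the new score $\text{sc}'_i(z) \le \text{sc}_i(z)$ for $z \ne x,y$. Since $x$ is the current winner, $\text{score}_\mathcal{V}(x) \ge \text{score}_\mathcal{V}(z)$ for all $z$, and hence in every converted profile $\text{score}(z) \le \text{score}(x)$ for each $z \notin \{x,y\}$. Consequently the winner of any converted profile lies in $\{x,y\}$, so $y$ wins exactly when $y$ beats $x$ under the fixed tie-breaking order. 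Finally, the relevant margin is separable over the converted voters:
\[ \text{score}(y) - \text{score}(x) = \big(\text{score}_\mathcal{V}(y) - \text{score}_\mathcal{V}(x)\big) + \sum_{i \in M} \delta_i, \qquad \delta_i := (\alpha_2 - \alpha_1) + \big(\text{sc}_i(x) - \text{sc}_i(y)\big). \]

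Given separability, the subset $M$ with $|M| \le k$ that maximizes $\text{score}(y) - \text{score}(x)$ is obtained greedily: compute $\delta_i$ for every voter and let $M$ consist of the at most $k$ voters with the largest strictly positive $\delta_i$. The algorithm answers \textsc{Yes} iff the resulting converted profile elects $y$. Correctness follows because the objective is a sum of per-voter terms independent across voters, so no subset of size at most $k$ can achieve a larger $x$-versus-$y$ margin; and because the winner is always $x$ or $y$, maximizing precisely this margin is the correct objective. Computing the $\delta_i$ and performing winner determination both run in polynomial time, giving \textsc{CPMSW} $\in \Pshort$, and the outer loop over $y$ gives \textsc{CPMS} $\in \Pshort$.

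The main obstacle I anticipate is twofold. First, I must justify that the single uniform replacement ``$x$ first, $y$ second'' is simultaneously optimal against \emph{all} competitors and not merely against $x$; this is exactly where the hypothesis $\alpha_1 - \alpha_2 \le \alpha_i - \alpha_{i+1}$ is essential, as it guarantees that raising $x$ to the very top costs $y$ the least possible ground against $x$ while still awarding $y$ the maximum admissible score $\alpha_2$, and the ``observed relative order'' redistribution keeps every other candidate weakly suppressed. Second, some care is needed with the requirement that $y$ be the \emph{unique} winner: when the greedy margin is zero one must ensure tie-breaking resolves in $y$'s favour and that no third candidate ties at the top, which can be handled using the remaining freedom in positions $3,\ldots,m$ to push any tying competitor strictly below $y$, or simply by invoking winner determination on the converted profile. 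Modulo these points, the reduction to a separable greedy selection is routine.
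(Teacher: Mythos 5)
Your proposal is correct and follows essentially the same route as the paper's proof: fix the target $y$, treat the replacement ``$x$ first, $y$ second, remaining candidates in their reported relative order'' as the canonical optimal manipulation under the gap condition, observe that the resulting $y$-versus-$x$ margin decomposes into per-vote quantities $\delta_i$ (exactly the paper's $\Delta(v) = \alpha_2 - \alpha_j - \alpha_1 + \alpha_i$), and greedily select the at most $k$ votes with the largest such gains before checking the winner. Your extra care in verifying that no third candidate can overtake $x$ in the converted profile and in handling tie-breaking only spells out details the paper leaves implicit in its appeal to \Cref{CPMWScr}.
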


\begin{proof}
 From \Cref{prop:cpmocpmwo}, it is enough to prove that $CPMSW \in \mathcal{P}$. Let $x$ be the current winner, $y$ be the given actual winner and $s(x)$ and $s(y)$ be their current respective scores. For each vote $v\in \mathcal{V}$, we compute a number $\Delta(v) = \alpha_2 - \alpha_j - \alpha_1 + \alpha_i$, where $x$ and $y$ are receiving scores $\alpha_i$ and $\alpha_j$ respectively from the vote $v$. Now, we output yes iff there are $k$ votes $v_i, 1\le i\le k$ such that, $\sum_{i=1}^k \Delta(v_i) \ge s(x) - s(y)$, which can be checked easily by sorting the $\Delta(v)$'s in nonincreasing order and checking the condition for the first $k$ $\Delta(v)$'s, where $k$ is the maximum possible coalition size specified in the input. The proof of correctness follows by exactly in the same line of argument as the proof of \Cref{CPMWScr}.
\end{proof}

For the $k$-approval voting rule, we can solve all the problems easily using max flow. Hence, from \Cref{CPMWScr} and \Cref{thm:CPMWOScr}, we have the following result.

\begin{corollary}\label{bordakapp}
 The CPMW, CPM, CPMSW, and CPMS problems for the Borda and $k$-approval voting rules are in \Pshort{}, for any coalition size.
\end{corollary}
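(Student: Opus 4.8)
The plan is to split the corollary into two parts: the cases that follow immediately by \emph{specializing} the general scoring-rule theorems \Cref{CPMWScr,thm:CPMWOScr}, and the cases handled by a direct max-flow argument. Recall that \Cref{CPMWScr} places CPMW and CPM in \Pshort{} for every normalized scoring rule whose vector satisfies $\alpha_1 - \alpha_2 \le \alpha_i - \alpha_{i+1}$ for all $i$, and \Cref{thm:CPMWOScr} does the same for CPMSW and CPMS. So the first step is to check that Borda (and, incidentally, $k$-approval with $k\ge 2$) meets this hypothesis, so that the Borda half of the corollary is an immediate consequence.

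For Borda the normalized vector is $(m-1, m-2, \ldots, 1, 0)$, so every consecutive difference equals $1$; in particular $\alpha_1 - \alpha_2 = 1 = \alpha_i - \alpha_{i+1}$ and the hypothesis holds with equality. Invoking \Cref{CPMWScr,thm:CPMWOScr} then places all four problems in \Pshort{} for Borda for any coalition size. For $k$-approval with $k\ge2$ the normalized vector begins with at least two equal entries, so $\alpha_1-\alpha_2=0$ while every difference is nonnegative, and the hypothesis again holds trivially; but since plurality ($1$-approval) has $\alpha_1-\alpha_2 = 1 > 0 = \alpha_i-\alpha_{i+1}$ for $i\ge2$, it falls outside the theorems and I would instead give a single flow-based argument that covers \emph{all} $k$-approval rules uniformly, as indicated in the statement.

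The key structural observation is that, under $k$-approval, the constraint $x \succ_i' y$ on a manipulator's true ballot reduces to a purely combinatorial condition: the ballot (let $B_i$ denote its approval set) may not approve $y$ while leaving $x$ unapproved. Consequently each manipulator contributes a nonnegative amount $\mathbf{1}[x\in B_i]-\mathbf{1}[y\in B_i]$ to $score(x)-score(y)$, so the coalition can never improve $y$'s standing against $x$; this both simplifies the correctness argument and pins down exactly when $y$ can overtake $x$. For CPMW I would guess the number $p$ of manipulators who approve $y$ (only $|M|+1$ choices), which fixes $y$'s final score at $n_y+p$, where $n_y$ is $y$'s approval count among the non-manipulators, and then test, via a bipartite max-flow between the manipulators and the candidates in $\CC\setminus\{y\}$ --- with source-to-manipulator capacities equal to the number of remaining approval slots, manipulator-to-candidate edges of capacity one, and candidate-to-sink capacities chosen so that each competitor stays strictly below $n_y+p$ under the lexicographic tie-break --- whether the leftover approvals admit a legal distribution. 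CPM then follows by running CPMW over all targets $y\in\CC$, and CPMSW/CPMS are obtained by additionally ranging over $y$ and searching over coalitions of size at most $k$ (using the $\Delta(v)$-style selection already employed in the proof of \Cref{thm:CPMWOScr}).

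I expect the main obstacle to be the correctness bookkeeping of the flow rather than any conceptual difficulty. One must verify that the forced coupling ``$y$ approved $\Rightarrow x$ approved'' is faithfully captured so that boosting $y$ never illegally suppresses $x$, that the ``approve both or neither'' normal form for the remaining ballots is without loss of generality, and that the per-candidate sink capacities encode the strict winning inequality with tie-breaking correctly; the CPMSW variant carries the extra wrinkle of \emph{choosing} which voters constitute the coalition, which the selection step must handle. Since $p$ ranges over a linear set and each flow computation is polynomial, polynomiality of the whole procedure is clear once these capacities are set up correctly.
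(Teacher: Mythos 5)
Your proposal is correct and follows essentially the same route as the paper: the Borda case is obtained by checking the hypothesis of \Cref{CPMWScr} and \Cref{thm:CPMWOScr}, and the $k$-approval case is handled by a max-flow feasibility argument. The paper's own proof is a two-sentence assertion of exactly this decomposition, so your explicit flow construction (and the observation that $k$-approval with $k\ge 2$ already satisfies the scoring-rule hypothesis, leaving only plurality genuinely dependent on the flow argument) merely supplies detail that the paper omits.
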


\subsection{Maximin Voting Rule}

For the maximin voting rule, we show that all the four problems are in \Pshort{}, when we have a coalition of size one.

\begin{theorem}\label{PUMMaximinP}
 The CPMW, CPM, CPMSW, and CPMS problems for maximin voting rule are in \Pshort{} for any coalition size $k=1$ (for CPMW and CPM) or maximum possible coalition size $k=1$ (for CMPWS and CPMS).
\end{theorem}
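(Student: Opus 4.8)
The plan is to handle all four problems through a single polynomial-time subroutine for \textsc{CPMW} with a coalition of size one, and then appeal to \Cref{cpmcpmw}: since the maximum coalition size here is $k=1=O(1)$, a polynomial algorithm for \textsc{CPMW} immediately yields polynomial algorithms for \textsc{CPM}, \textsc{CPMSW}, and \textsc{CPMS} (the search variants simply invoke the \textsc{CPMW} routine on each of the $n$ singleton coalitions). So I would fix the manipulator $i$ and a target candidate $y$, compute the current maximin winner $x$ of the reported profile (winner determination for maximin is in \Pshort), reject immediately if $y=x$, and otherwise decide whether there is a vote $\succ_i^\prime$ with $x \succ_i^\prime y$ that makes $y$ the maximin winner of $((\succ_j)_{j\in\VV\setminus\{i\}},\succ_i^\prime)$.

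First I would record the pairwise margins $D_{-i}(\cdot,\cdot)$ computed from the $n-1$ votes other than voter $i$; casting $\succ_i^\prime$ changes each margin by exactly $\pm 1$ according to how the two candidates are ordered in $\succ_i^\prime$. The backbone of the algorithm is the same monotonicity observation that drives the classical greedy manipulation algorithm of Bartholdi et al.~\cite{bartholdi1989computational} and the argument behind \Cref{sm_maximin_easy}: demoting a candidate in $\succ_i^\prime$ can only lower that candidate's final maximin score, since it decreases its margin against every candidate it now sits below and leaves all other margins untouched. Building on this, I would iterate over the $O(n)$ possible integer values $\theta$ for the final maximin score of $y$. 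For a fixed $\theta$, the requirement that $y$ score at least $\theta$ translates into per-candidate precedence constraints relative to $y$: a candidate $b$ with $D_{-i}(y,b)+1 < \theta$ rules out $\theta$ altogether, a candidate with $D_{-i}(y,b)-1 < \theta \le D_{-i}(y,b)+1$ must be placed below $y$, and the rest are free. I would then check that the constraint $x\succ_i^\prime y$ is compatible with these (equivalently $\theta \le D_{-i}(y,x)-1$, so that $x$ is allowed to sit above $y$), and finally test, greedily, whether the remaining candidates can be arranged so that every $c\neq y$ ends with maximin score strictly below $\theta$ (or equal to $\theta$ but behind $y$ in the fixed tie-breaking order).

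The greedy placement itself proceeds top-down and is justified by the monotonicity lemma exactly as in \Cref{sm_maximin_easy}: a candidate that cannot be kept below the threshold when placed at the currently highest available slot can never be kept below it lower down, so it is safe to place any ``suppressible'' candidate as high as the current slot and defer the others. Many candidates are automatically suppressed because $x$ or $y$ already beats them sufficiently; the rest must receive a dedicated suppressor placed above them. I expect the main obstacle to be the interaction between the mandatory relation $x \succ_i^\prime y$ and the need to suppress $x$ itself. Because $x$ is pinned above $y$, it cannot be demoted below $y$ to drive its own score down; in the regime $D_{-i}(y,x)\le 1$, where $y$ does not already beat $x$ enough to hold it off ``for free'', the only recourse is to place suppressors of $x$ above both $x$ and $y$, and one must verify that these extra high-ranked candidates do not themselves pull $y$'s score below $\theta$ (which is guaranteed precisely for candidates $b$ with $D_{-i}(y,b) \ge \theta+1$). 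Proving that the greedy, restricted to this feasible set of ``above-$y$'' suppressors, is both sound and complete for every $\theta$ is the delicate step; the rest is bookkeeping, and the total running time is polynomial ($O(n)$ thresholds, a polynomial greedy per threshold, and $n$ singleton coalitions for the search variants).
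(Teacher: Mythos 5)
Your overall strategy coincides with the paper's: reduce all four problems to CPMW with a single manipulator via \Cref{cpmcpmw}, guess the final maximin score, and build $\succ_i'$ by a Bartholdi--Tovey--Trick style greedy subject to the constraint $x\succ_i' y$. (One simplification you could borrow from the paper: since the single new vote shifts every pairwise margin by exactly one, the final maximin scores of $x$ and $y$ can each only be their $\succ_{-i}$-scores $\pm 1$, so four score guesses suffice rather than $O(n)$ thresholds.)

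The gap sits exactly at the step you yourself flag as delicate, and the monotonicity claim you offer in support of the greedy is stated in the wrong direction: moving a candidate \emph{down} in $\succ_i'$ puts \emph{more} candidates above it, hence decrements more of its pairwise margins, so a candidate that cannot be suppressed at the current slot may very well be suppressible lower down. The correct statement is the converse --- a candidate that is safe at the current slot remains safe at every later slot --- and completeness of the greedy then rests on an exchange argument, not on the implication you wrote. More importantly, you never resolve how the greedy simultaneously (i) keeps $x$ above $y$, (ii) forces a witness from $B(x)$ (and, in the relevant score-guess, from $B(y)$) above $x$ so that those scores actually drop, and (iii) avoids placing above $y$ any candidate $b$ with $D_{-i}(y,b)-1<\theta$. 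The paper closes precisely this hole by guessing more before running the greedy: it normalizes so that $y$ immediately follows $x$, guesses the position of $x$ (hence of $y$) among $O(m)$ possibilities, and guesses whether the required witnesses from $B(x)$ and $B(y)$ coincide or are distinct and, if distinct, their relative order; the greedy then fills positions top-down while prioritizing the guessed witnesses, and its correctness reduces to the unconstrained argument of Bartholdi et al. Without this extra layer of guessing, or an explicit completeness proof for your constrained greedy, the proposal does not yet yield a correct algorithm.
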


\begin{proof}
 Given a $n$-voters' profile $\succ \in \mathcal{L(C)}^n$ and a voter $v_i$, 
 let the \textit{current winner} be $x:= r(\succ)$ and the given \textit{actual winner} be $y$. We will construct 
 $\succ^{\prime} = (\succ_i^{\prime} , \succ_{-i})$, if it exists, such that 
 $r(\succ) >_i^{\prime} r(\succ^{\prime})=y$, thus deciding whether $v_i$ is a 
 possible manipulator or not. Now, the maximin score of $x$ and $y$ in the profile $\succ^{\prime}$ can take one of 
 values from the set $\{ score_{\succ_{-i}} (x) \pm 1\}$ and $\{ score_{\succ_{-i}} (y) \pm 1\}$. 
 The algorithm is as follows. We first guess the maximin score of $x$ and $y$ in the profile $\succ^{\prime}$. 
 There are only four possible guesses. Suppose, we guessed that $x$'s score will decrease by one and $y$'s score 
 will decrease by one assuming that this guess makes $y$ win. 
 Now notice that, without loss of generality, 
 we can assume that $y$ immediately follows $x$ in the preference $\succ_i^{\prime}$ since $y$ is the winner 
 in the profile $\succ^{\prime}$. This implies that there are only $O(m)$ many possible positions for $x$ and $y$ 
 in $\succ_i^{\prime}$. We guess the position of $x$ and thus the position of $y$ in $\succ_i^{\prime}$.  Let $B(x)$ 
 and $B(y)$ be the sets of candidates with whom $x$ and respectively $y$ performs worst. Now since, $x$'s score 
 will decrease and $y$'s score will decrease, we have the following constraint on $\succ_i^{\prime}$. There must be a candidate each from $B(y)$ and $B(x)$ that will precede $x$. We do not know a-priori if there is one candidate that will serve as a witness for both $B(x)$ and $B(y)$, or if there separate witnesses. In the latter situation, we also do not know what order they appear in. Therefore we guess if there is a common candidate, and if not, we guess the relative ordering of the distinct candidates from $B(x)$ and $B(y)$.
  Now we place any candidate at the top position of $\succ_i^{\prime}$ if this action does not make $y$ lose the election. If there are many choices, we prioritize in favor of candidates from $B(x)$ and $B(y)$ --- in particular, we focus on the candidates common to $B(x)$ and $B(y)$ if we expect to have a common witness, otherwise, we favor a candidate from one of the sets according to the guess we start with. If still there are multiple choices, we 
 pick arbitrarily.  After that we move on to the next position, and do the same thing (except we stop prioritizing explicitly for $B(x)$ and $B(y)$ once we have at least one witness from each set). The other situations can be handled similarly with minor modifications. In this way, if it is able to get a 
 complete preference, then it checks whether $v_i$ is a possible manipulator or not 
 using this preference. If \textit{yes}, then it returns YES. Otherwise, it tries 
 other positions for $x$ and $y$ and other possible scores of $x$ and $y$. After 
 trying all possible guesses, if it cannot find the desired preference, 
 then it outputs NO. Since there are only polynomial 
 many possible guesses, this algorithm runs in a polynomial amount of time. 
 The proof of correctness follows from the proof of Theorem 1 in~\cite{bartholdi1989computational}.
\end{proof}

We now show that the CPMW problem for the maximin voting rule is \NPCshort{} when 
we have $k>1$. Towards that, we use the fact that the \textit{unweighted 
coalitional manipulation (UCM)} problem for the maximin voting rule is \NPCshort{} 
\cite{xia2009complexity}, when we have $k>1$. The UCM problem is as follows. Let $r$ be any voting rule.

\begin{definition}{($r$--UCM Problem)}\\
 Given a set of manipulators $M\subset \mathcal{V}$, a profile of 
 non-manipulators' vote $(\succ_i)_{i\in \mathcal{V}\setminus M}$, 
 and a candidate $z\in \mathcal{C}$, we are asked whether there exists a 
 profile of manipulators' votes $(\succ_j^{\prime})_{j\in M}$ such that 
 $r((\succ_i)_{i\in \mathcal{V}\setminus M}, (\succ_j^{\prime})_{j\in M}) = z$. 
 Assume that ties are broken in favor of $z$.
\end{definition}

We define a restricted version of the UCM problem called R-UCM as follows. 

\begin{definition}{($r$--R-UCM Problem)}\\
 This problem is the same as the UCM problem with a given guarantee - let $k:= |M|$. 
 The candidate $z$ loses pairwise election with every other candidate by $4k$ 
 votes. For any two candidates $a,b \in \mathcal{C}$, either $a$ and $b$ ties or 
 one wins pairwise election against the other one by margin of either $2k+2$ or 
 of $4k$ or of $8k$. We denote the margin by which a candidate $a$ defeats $b$, by $d(a,b)$.
\end{definition}

The R-UCM problem for the maximin voting rule is \NPCshort{} \cite{xia2009complexity}, 
when we have $k>1$.

\begin{theorem}\label{maxmincpmwnpc}
 The CPMW problem for the maximin voting rule is \NPCshort{}, for a coalition of size at least $2$. 
\end{theorem}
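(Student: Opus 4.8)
<br>

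The plan is to prove \NPCshort{}ness by establishing membership in \NPshort{} and then giving a reduction from the R-UCM problem for the maximin voting rule, which is known to be \NPCshort{} for any coalition of size at least $2$~\cite{xia2009complexity}. Membership in \NPshort{} is routine: given a certificate consisting of the manipulators' preference profile $(\succ_j^\prime)_{j\in M}$, we can verify in polynomial time that $x \succ_j^\prime y$ for every $j\in M$ and that $y$ is the maximin winner of the combined profile, where $x$ is the current winner. So the main work is the reduction, and the key obstacle is reconciling the two defining features of CPMW that are absent in UCM: (i) the manipulators in CPMW are the voters whose reported votes are already part of the input (so their current reported votes determine the current winner $x$), whereas in UCM the manipulators' votes are what we construct; and (ii) every manipulator's constructed (truthful) vote must rank the current winner $x$ above the target $y$.

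The approach I would take is to start from an R-UCM instance with non-manipulator votes $(\succ_i)_{i\in\mathcal{V}\setminus M}$, target candidate $z$, and manipulator set $M$ with $|M|=k\ge 2$, exploiting the strong structural guarantees of R-UCM: $z$ loses every pairwise contest by exactly $4k$, and all other pairwise margins are $0$ or one of $2k+2$, $4k$, $8k$. I would construct a CPMW instance whose non-manipulator votes are exactly $(\succ_i)_{i\in\mathcal{V}\setminus M}$ and declare the target actual winner to be $y := z$. The $k$ voters in $M$ will be the suspected coalition. The crux is to choose the \emph{reported} votes of the coalition so that (a) the current winner $x$ (the maximin winner of the full reported profile) is some candidate distinct from $z$ that every feasible truthful manipulating vote can legitimately rank above $z$, and (b) the reported votes contribute to the pairwise margins in a way that preserves the R-UCM structure after we subtract them off, so that finding truthful votes $(\succ_j^\prime)_{j\in M}$ making $z$ win (with $x\succ_j^\prime z$) is equivalent to solving the original R-UCM instance. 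Because \Cref{thm:mcgarvey} lets us realize any valid target margin vector with polynomially many extra votes, I would add a padding block of complete votes to force $x$ into the role of current winner and to set the ``base'' pairwise margins (computed over non-manipulators plus padding) to match exactly the margins that R-UCM assumes are achievable after manipulation; the $k$ coalition voters' reported votes can then be taken to all rank $x$ first and $z$ last, which makes $x$ the current winner while guaranteeing headroom for the constraint $x\succ_j^\prime z$.

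The main obstacle, which I would address carefully, is the forward/backward equivalence under the constraint $x\succ_j^\prime y$. In the forward direction, from an R-UCM solution $(\succ_j^\prime)_{j\in M}$ making $z$ win, I must verify that we may assume without loss of generality that each such vote also places $x$ above $z$; this should follow because $x\ne z$ and placing $z$ last (as is optimal for making $z$ win maximin, by the greedy argument in~\cite{bartholdi1989computational,xia2009complexity}) already guarantees $x\succ_j^\prime z$, so the constraint is automatically satisfiable. In the reverse direction, any coalition of possible manipulators against $y=z$ yields, by definition, manipulating votes making $z$ the maximin winner of the combined profile, and subtracting the fixed padding margins recovers an R-UCM solution. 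The delicate point is ensuring $x$ is genuinely the \emph{current} winner so that CPMW is well-posed (the target $y$ must differ from the current winner), and that the padding does not accidentally let some third candidate beat $z$ after manipulation; the rigid $2k+2$, $4k$, $8k$ margin structure of R-UCM, which is preserved by \Cref{thm:mcgarvey}, is exactly what rules this out, since each coalition vote can shift any pairwise margin by at most $2$ and there are only $k$ of them.
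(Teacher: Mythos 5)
Your high-level plan — reduce from R-UCM, let the coalition's reported votes install a current winner $x$, and use \Cref{thm:mcgarvey} to pad the margins — matches the paper's skeleton, but the proposal is missing the one idea that makes the reduction work, and the step where you claim the constraint $x \succ_j^{\prime} z$ is ``automatically satisfiable'' is wrong. An R-UCM solution places $z$ at the \emph{top} of each manipulator's vote (that is what the greedy argument of \cite{bartholdi1989computational} gives; placing $z$ last, as you write, is what one does to make $z$ \emph{lose}), so those votes rank $z$ above every candidate, including $x$ — they violate the CPMW constraint rather than satisfy it. If you instead repair this by keeping $x$ on top of the truthful votes (so that $x \succ_j^{\prime} z$ holds) while the reported coalition votes also rank $x$ first, then none of $x$'s pairwise margins change when the coalition switches from reported to truthful votes; $x$'s maximin score is identical before and after, so the current winner can never be dethroned and the forward direction collapses. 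This is the central tension of CPMW versus UCM, and your proposal does not resolve it.

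The paper escapes it with a gadget you would need to supply: it adds a \emph{fresh} candidate $w$ to serve as current winner together with three dummies $d_1,d_2,d_3$ arranged in a cycle with margins $8k$ and with $d(d_1,w)=2k+2$. The reported coalition votes are $w \succ \cdots$, making $w$ the current winner; the truthful votes are $d_1 \succ d_2 \succ d_3 \succ w \succ z \succ \cdots$, where $z$ heads the R-UCM solution vote. This satisfies $w \succ_j^{\prime} z$ while simultaneously demoting $w$ below its nemesis $d_1$ in all $k$ changed votes, dragging $w$'s maximin score (governed by its contest with $d_1$) down to $-3k-2$, strictly below $z$'s score of $-3k$; the $8k$ cycle keeps the dummies themselves from ever winning, and the candidates of $\mathcal{C}$ are handled by the R-UCM guarantees. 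Without some such mechanism — a designated candidate that beats the current winner by just the right margin and that the truthful votes can promote above it — a candidate who is ranked above $z$ in every truthful vote and who already wins the reported profile cannot end up losing to $z$, so the reduction as you describe it cannot be completed.
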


\begin{proof}
Clearly the CPMW problem for maximin voting rule is \NPshort{}. We provide a many-one reduction from the 
 R-UCM problem for the maximin voting rule to it. Given a R-UCM problem 
 instance, we define a CPMW problem instance 
 $\Gamma = (\mathcal{C}^{\prime},(\succ_i^{\prime})_{i\in\mathcal{V^{\prime}}},M^{\prime})$ as follows.
 \[ \mathcal{C}^{\prime}:= \mathcal{C} \cup \{w,d_1,d_2,d_3\}\]
 We define $\mathcal{V^{\prime}}$ such that $d(a,b)$ is the same as the R-UCM instance, 
 for all $a,b\in \mathcal{C}$ and $d(d_1,w)=2k+2, d(d_1,d_2)=8k, d(d_2,d_3)=8k, d(d_3,d_1)=8k$. 
 The existence of such a $\mathcal{V^{\prime}}$ is guaranteed from \Cref{thm:mcgarvey}. Moreover, 
 \Cref{thm:mcgarvey} also ensures that $|\mathcal{V^{\prime}}|$ is $O(mc)$. 
 The votes of the voters in $M$ is $w\succ \dots$. Thus the \textit{current winner} is $w$. 
 The \textit{actual winner} is defined to be $z$. The tie breaking rule is $\succ_t = w\succ z\succ \dots$, 
 where $z$ is the candidate whom the manipulators in $M$ want to make winner in the 
 R-UCM problem instance. Clearly this reduction takes polynomial amount of time. 
 Now we show that, $M$ is a coalition of possible 
 manipulators \textit{iff} $z$ can be made a winner.
 
 The \textit{if} part is as follows. Let $\succ_i, i\in M$ be the votes that make $z$ win. 
 We can assume that $z$ is the most preferred candidate in all the preferences $\succ_i, i\in M$.
 Now consider the preferences for the voters in $M$ is follows.
 \[\succ_i^{\prime}:= d_1\succ d_2\succ d_3\succ w\succ_i, i\in M\]
 The score of every candidate in $\mathcal{C}$ is not more than $z$. The score 
 of $z$ is $-3k$. The score of $w$ is $-3k-2$ and the scores of $d_1, d_2,$ and $d_3$ 
 are less than $-3k$. Hence, $M$ is a coalition of possible manipulators with the actual preferences 
 $\succ_i^{\prime}:= d_1\succ d_2\succ d_3\succ w\succ_i, i\in M$.
 
 The \textit{only if} part is as follows. Suppose $M$ is a coalition of possible manipulators 
 with actual preferences $\succ_i^{\prime}, i\in M$. Consider the preferences $\succ_i^{\prime}, i\in M$, 
 but restricted to the set $\mathcal{C}$ only. Call them $\succ_i, i\in M$. We claim that 
 $\succ_i, i\in M$ with the votes from $\mathcal{V}$ makes $z$ win the election. 
 If not then, there exists a candidate, say $a\in \mathcal{C}$, whose score is strictly more than 
 the score of $z$ - this is so because the tie breaking rule is in favor of $z$. 
 But this contradicts the fact that $z$ wins the election when the voters 
 in $M$ vote $\succ_i^{\prime}, i\in M$ along with the votes from $\mathcal{V^{\prime}}$.
\end{proof}

\subsection{Bucklin Voting Rule}

In this subsection, we design polynomial time algorithms for both the CPMW and the CPM 
problems for the Bucklin voting rule. Again, we begin by showing that if there are profiles witnessing manipulation, then there exist profiles that do so with some additional structure, which we will exploit subsequently in our algorithm.

\begin{lemma}\label{lemmaBucklin}
Consider a preference profile $(\succ_i)_{i \in \cal{V}}$, where $x$ is the winner with respect to the Bucklin voting rule. Suppose a subset of voters $M \subset \mathcal{V}$ forms a coalition of possible manipulators. Let $y$ be the actual winner. Then there exist preferences $(\succ_i^\prime)_{i \in M}$ such that $y$ is a Bucklin winner in $((\succ_i)_{i \in \cal{V} \setminus M},(\succ_i^\prime)_{i \in M})$, and further:
 \begin{enumerate}
  \item $y$ immediately follows $x$ in each $\succ_i^{\prime}$.
  \item The rank of $x$ in each $\succ_i^{\prime}$ is in one of the 
  following - first, $b(y)-1$, $b(y)$, $b(y)+1$, where $b(y)$ be the Bucklin score of $y$ in $((\succ_i)_{i \in \cal{V} \setminus M},(\succ_i^\prime)_{i \in M})$.
 \end{enumerate}
\end{lemma}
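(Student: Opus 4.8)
The plan is to establish the two structural properties by taking an arbitrary witness profile and performing local modifications that preserve $y$'s victory while imposing the claimed structure, exactly in the spirit of the exchange arguments used for scoring rules in \Cref{losersorted,algotheorem}. Let $(\succ_i^{\prime\prime})_{i\in M}$ be any set of manipulator preferences witnessing that $M$ is a coalition of possible manipulators against $y$; so $x\succ_i^{\prime\prime} y$ for all $i\in M$, and $y$ is the Bucklin winner of the combined profile. Write $b(y)$ for the Bucklin score of $y$ in this witness profile, i.e.\ $y$ is the first candidate to reach a strict majority within the top $b(y)$ positions across all the votes.

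\textbf{Property 1 (moving $y$ to immediately follow $x$).} First I would argue that pulling $y$ up to the position directly after $x$ in each manipulator vote only helps $y$ and hurts no one relevant. Concretely, in each $\succ_i^{\prime\prime}$, move $y$ up so that it immediately follows $x$, shifting the candidates that were strictly between $x$ and $y$ down by one position each, and leaving all other candidates fixed. This operation can only decrease the rank of $y$ (so $y$'s count within any top-$\ell$ prefix can only increase) and can only increase the rank of each intermediate candidate (so their prefix-counts can only decrease); the counts of $x$ and of every candidate outside this block are unchanged. Since $y$ already attained a strict majority at level $b(y)$ and no other candidate attained a majority strictly earlier, after the modification $y$ still reaches majority by level $b(y)$ and no competitor reaches it sooner, so $y$ remains the Bucklin winner (ties broken as in the fixed order, for which the argument is identical to \Cref{algotheorem}). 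Crucially, the constraint $x\succ_i^\prime y$ is preserved because we keep $y$ right after $x$. Call the resulting profile $(\succ_i^\prime)_{i\in M}$ and note $b(y)$ is unchanged or smaller; I would track it as the value in the \emph{new} profile.

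\textbf{Property 2 (constraining the rank of $x$).} This is the step I expect to be the main obstacle, since it requires showing that $x$ gains nothing from sitting far above $b(y)$ and loses nothing essential from being pushed down to within one of $b(y)$. The idea is that the only role $x$'s position plays is (i) whether $x$ itself reaches a majority at or before level $b(y)$ (which would tie or beat $y$ and must be avoided), and (ii) the block constraint $x\succ_i^\prime y$ with $y$ immediately after $x$. If in some vote $x$ currently sits at a rank much smaller than $b(y)-1$, I would push the whole block $x\succ y$ downward until $x$ occupies rank $b(y)-1$ (so $y$ sits at $b(y)$), promoting the displaced candidates upward. Promoting other candidates can only increase their prefix-counts, so I must check this does not create a premature majority for a competitor at a level $<b(y)$; here I would use that $y$ is the winner together with the fact that the promoted candidates were already below $y$ in these votes, so any majority they could reach at level $\le b(y)$ would already have made $y$ fail to be the unique-style winner. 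Symmetrically, if $x$ sits far below $b(y)+1$, pulling the block up to place $x$ at $b(y)+1$ only demotes other candidates, which cannot hurt $y$; and $x$ itself reaching majority before $y$ is precisely what we rule out, pinning $x$ to the window $\{b(y)-1,b(y),b(y)+1\}$. The three allowed values correspond to the boundary cases of whether $x$ and/or $y$ cross the majority threshold exactly at level $b(y)$.

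The remaining care is bookkeeping: after each local move I must re-examine whether $b(y)$ could shift, since the two properties are stated relative to the final profile's value of $b(y)$. I would therefore fix the target level $b(y)$ of the final profile and verify consistency by showing the moves in Properties~1 and~2 can be applied simultaneously without $y$'s majority level exceeding $b(y)$, treating any drop in $y$'s level as only strengthening the conclusion. The payoff of this lemma, which I would flag for the subsequent algorithm, is that it reduces the search for manipulator votes to guessing the single value $b(y)$ and then placing the block $x\succ y$ in one of a constant number of positions per vote, making both CPMW and CPM solvable in polynomial time for Bucklin.
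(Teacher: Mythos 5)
Your Property~1 step is fine and matches the paper in spirit: the paper swaps $y$ with the candidate immediately following $x$ rather than shifting the whole intermediate block down by one, but both moves only promote $y$ and demote candidates that lay between $x$ and $y$, so $y$ remains a Bucklin winner and the constraint $x \succ_i^{\prime} y$ is preserved.

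The gap is in Property~2, in the case where $x$ sits at a rank smaller than $b(y)-1$. You push the block $x \succ y$ \emph{downward} to ranks $b(y)-1$ and $b(y)$, which \emph{promotes} the displaced candidates, and you justify safety by claiming that any majority a promoted candidate could reach at a level below $b(y)$ ``would already have made $y$ fail.'' That inference is false: a competitor $z$ may have exactly $\lfloor n/2 \rfloor$ votes placing it within the top $\ell$ positions for some $\ell < b(y)$ before the move --- which is perfectly consistent with $y$ being the Bucklin winner --- and promoting $z$ in even one manipulator vote tips it over the majority threshold at level $\ell$, making $z$ the winner at a level strictly earlier than $b(y)$. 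The fact that $z$ was ranked below $y$ \emph{in that particular vote} gives no control over $z$'s aggregate prefix counts across the whole profile. The paper resolves exactly this case by moving in the opposite direction: it places $x$ and $y$ at the \emph{first and second} positions (which is precisely why ``first'' appears as one of the four admissible ranks in the statement). That move only demotes other candidates; and $x$ itself cannot newly attain a premature majority because $x$ was already within the top $b(y)-1$ positions of that vote, so its count within the top $b(y)-1$ positions is unchanged and hence still at most $n/2$, which bounds its count at every smaller level as well. Your treatment of the remaining case ($x$ ranked below $b(y)+1$, pull the block up to ranks $b(y)+1$ and $b(y)+2$) is correct and agrees with the paper, since there the only changes at levels up to $b(y)$ are demotions of other candidates.
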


\begin{proof}
 From \Cref{cpmdef}, $y$'s rank must be worse than $x$'s rank in each $\succ_i^{\prime}$. 
 We now exchange the position of $y$ with the candidate which immediately follows $x$ in 
 $\succ_i^{\prime}$. This process does not decrease Bucklin score of any candidate except 
 possibly $y$'s, and $x$'s score does not increase. Hence $y$ will continue to win and thus  
 $\succ_i^{\prime}$ satisfies the first condition.
 
 Now to begin with, we assume that $\succ_i^{\prime}$ satisfies the first condition. 
 If the position of $x$ in $\succ_i^{\prime}$ is $b(y)-1$ or $b(y)$, we do not change it. 
 If $x$ is above $b(y)-1$ in $\succ_i^{\prime}$, then move $x$ and $y$ at the first and 
 second positions respectively. Similarly if $x$ is below $b(y)+1$ in $\succ_i^{\prime}$, 
 then move $x$ and $y$ at the $b(y)+1$ and $b(y)+2$ positions respectively. This process 
 does not decrease score of any candidate except $y$ because the Bucklin score of $x$ is at least 
 $b(y)$. The transformation cannot increase the score $y$ since its position has only been 
 improved. Hence $y$ continues to win and thus $\succ_i^{\prime}$ satisfies the second condition. 
\end{proof}

\Cref{lemmaBucklin} leads us to the following theorem.

\begin{theorem}\label{UPCMBucklin}
 The CPMW problem and the CPM problems for Bucklin voting rule are in \Pshort{} for any coalition of size. Therefore, by \Cref{cpmcpmw}, the CPMSW and the CPMS problems are in \Pshort{} when the maximum coalition size $k=O(1)$.
\end{theorem}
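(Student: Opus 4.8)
The plan is to first put CPMW in \Pshort{} for an arbitrary number of manipulators and then derive the other three problems from it. A polynomial-time algorithm for CPMW that works for any coalition size immediately gives one for CPM of any size: to decide CPM for a suspected coalition $M$ we compute the current Bucklin winner $x$ and run the CPMW algorithm once for each candidate $y \in \mathcal{C} \setminus \{x\}$ as the designated actual winner, answering \textsc{yes} iff some call succeeds (this is exactly \Cref{cpmdef}). Once CPMW and CPM are in \Pshort{}, \Cref{cpmcpmw} gives CPMSW and CPMS in \Pshort{} whenever the maximum coalition size is $k = O(1)$, since only $\binom{n}{k} = O(n^k)$ coalitions need be tested. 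Thus the whole theorem reduces to solving CPMW in polynomial time.

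For CPMW I would exploit the structure guaranteed by \Cref{lemmaBucklin}. Write $k = |M|$, $n = |\mathcal{V}|$, fix the non-manipulators' ballots, and for a candidate $c$ and a level $\ell \in [m]$ let $N_\ell(c)$ be the number of non-manipulators ranking $c$ within their top $\ell$ positions; these are computable in polynomial time. By \Cref{lemmaBucklin} it suffices to search only over manipulating profiles in which, in every ballot, $y$ immediately follows $x$ and the rank of $x$ lies in $\{1, b-1, b, b+1\}$, where $b = b(y)$ is the Bucklin score $y$ is to attain. The first step is therefore to guess $b \in [m]$, of which there are only $m$ choices. Having fixed $b$, the only relevant effect of a manipulator's ballot on the Bucklin tally up to level $b$ is (i) whether it places $y$ within the top $b$ positions (which occurs exactly when the rank of $x$ is $1$ or $b-1$), and (ii) how it distributes the remaining $m-2$ candidates among the free positions.

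This reduces the feasibility test for a fixed $b$ to a few counting constraints plus an assignment problem. Letting $p$ be the number of manipulators placing $y$ within the top $b$ positions, the requirement that $y$ attains majority exactly at level $b$ forces $N_b(y) + p > n/2$, while the requirement that $x$ not reach majority before level $b$ forces $N_{b-1}(x) + p \le n/2$ (using rank $b-1$ in preference to rank $1$, which is never worse for $x$); both are simple interval constraints on $p$. Any guess of $b$ for which some candidate already enjoys a non-manipulated majority at a level below $b$ is discarded. The delicate part is to ensure that no ``other'' candidate $c \notin \{x,y\}$ is pushed to a premature majority at any level $\ell < b$, and that no candidate tie-break-preferred to $y$ reaches majority at level $b$. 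Since moving an other candidate to a worse position only lowers all of its top-$\ell$ counts, the manipulators should pack the dangerous high free positions as sparingly as possible; I would model the placement of the $m-2$ other candidates into the free slots of the $k$ ballots, subject to the per-candidate, per-level budgets $B_\ell(c) = \lfloor n/2 \rfloor - N_\ell(c)$, as a bipartite feasibility / maximum-flow instance (the nested ``top $\ell$'' structure keeps the budgets laminar), which is solvable in polynomial time. The algorithm outputs \textsc{yes} iff some guess of $b$ admits a feasible $p$ together with a feasible assignment.

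The main obstacle is the correctness of this final assignment step, namely showing that the constraints on $p$ together with feasibility of the flow instance are \emph{equivalent} to the existence of a genuine manipulating profile making $y$ the Bucklin winner. The forward direction is routine, as a manipulating profile yields $p$ and an assignment. The reverse direction requires reconstructing $k$ complete linear orders from a flow solution while simultaneously respecting the nested level budgets and the tie-breaking condition at level $b$, and invoking \Cref{lemmaBucklin} to argue that confining $x$ to the four allowed ranks loses no generality. I expect this reconstruction, and the verification that the ``place $x$ at rank $b-1$ rather than rank $1$'' normalization never hurts any constraint, to be the technically demanding part; the counting bounds and the flow formulation themselves are standard.
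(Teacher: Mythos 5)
Your high-level plan coincides with the paper's: reduce everything to CPMW, guess the final Bucklin level $b$ of $y$, invoke \Cref{lemmaBucklin} to restrict the shape of the manipulators' ballots, and then decide feasibility of placing the remaining candidates in polynomial time (the paper fills the free positions greedily; your flow formulation is a reasonable substitute once the positions of $x$ and $y$ are fixed). The genuine gap is the normalization ``place $x$ at rank $b-1$ rather than rank $1$.'' This is never worse for $x$'s \emph{own} top-$\ell$ counts, but it is worse for everyone else: with $x$ at rank $1$ and $y$ at rank $2$ the top $b-1$ positions of that ballot contain only $b-3$ free slots for the other candidates, whereas with $x$ at rank $b-1$ and $y$ at rank $b$ they contain $b-2$ such slots, so one additional ``other'' candidate is forced into the top $b-1$ and charged against its budget $B_{b-1}(\cdot)$. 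Concretely, take $m=4$ with candidates $x,y,c,d$, guess $b=3$, let $n$ be even, and arrange $N_2(c)=N_2(d)=\lfloor n/2\rfloor$ with $N_2(x)$ and $N_1(x)$ small. A manipulator ballot $x \succ y \succ c \succ d$ keeps both $c$ and $d$ out of the top two, and such a profile can make $y$ the winner at level $3$; but every ballot of your normalized shape $\cdot \succ x \succ y \succ \cdot$ must put $c$ or $d$ at rank one, handing it a premature majority at level $2$. Your algorithm would therefore answer \textsc{no} on a \textsc{yes}-instance. The symmetric collapse of the non-$y$-promoting ballots to a single rank for $x$ (rank $b$ versus $b+1$) is lossy for the same reason: rank $b+1$ pushes one more ``other'' into position $b$, while rank $b$ charges $x$ at level $b$, and neither choice dominates.

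The repair is exactly what the paper does: since Bucklin is anonymous, enumerate the number of manipulator ballots assigning $x$ each of the four ranks $1$, $b-1$, $b$, $b+1$ permitted by \Cref{lemmaBucklin} --- only polynomially many distributions for each guess of $b$ --- and run the feasibility test separately for each. With that change the rest of your argument goes through. Indeed, because $N_\ell(\cdot)$ is nondecreasing in $\ell$, the per-level constraints for an ``other'' candidate collapse to just its number of appearances in the top $b-1$ (which must stay at or below $\lfloor n/2\rfloor - N_{b-1}(\cdot)$) and, for candidates tie-break-preferred to $y$, its number of appearances in the top $b$; so the assignment step is an easy bipartite flow and the reconstruction of linear orders from a flow solution is immediate. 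Your reductions from CPM, CPMSW and CPMS to CPMW via \Cref{cpmcpmw} are fine as stated.
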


\begin{proof}
 \Cref{cpmcpmw} says that it is enough to prove that the CPMW problem is in \Pshort{}. 
 Let $x$ be the \textit{current winner} and $y$ be the given \textit{actual winner}. 
 For any final Bucklin score $b(y)$ of $y$, there are polynomially many 
 possibilities for the positions of $x$ and $y$ in the profile of $\succ_i, i\in M$, 
 since Bucklin voting rule is anonymous. Once the positions of $x$ and $y$ is fixed, we try 
 to fill the top $b(y)$ positions of each $\succ_i^{\prime}$ - place a candidate in an empty 
 position above $b(y)$ in any $\succ_i^{\prime}$ if doing so does not make $y$ lose the election. 
 If we are able to successfully fill the top $b(y)$ positions of all $\succ_i^{\prime}$ 
 for all $i \in M$, then $M$ is a coalition of possible manipulators. 
 If the above process fails for all possible above mentioned 
 positions of $x$ and $y$ and all possible guesses of $b(y)$, then 
 $M$ is not a coalition of possible manipulators. Clearly the above algorithm runs in \textit{poly(m,n)} 
 time.
 
 The proof of correctness is as follows. If the algorithm outputs that $M$ is 
 a coalition of possible manipulators,  
 then it actually has constructed $\succ_i^{\prime}$ for all $i \in M$ with respect 
 to which they form a coalition of possible manipulators. 
 On the other hand, if they form a coalition of possible manipulators, then \Cref{lemmaBucklin}
 ensures that our algorithm explores all the sufficient positions of $x$ and $y$ in 
 $\succ_i^{\prime}$ for all $i \in M$. Now if $M$ is a possible coalition of manipulators, then 
 the corresponding positions for $x$ and $y$ have also been searched. Our greedy
 algorithm must find it since permuting the candidates except $x$ and $z$ which are 
 ranked above $b(y)$ in $\succ_i^{\prime}$ cannot stop $y$ to win the 
 election since the Bucklin score of other candidates except $y$ is at least $b(y)$.
\end{proof}

\subsection{STV Voting Rule}

Next we prove that the CPMW and the CPM problems for STV rule is \NPCshort{}. To this end, we reduce from the Exact Cover by 3-Sets Problem (X3C), which is known to be \NPCshort{} \cite{garey1979computers}. 
The X3C problem is as follows.

\begin{definition}(X3C Problem)\\
 Given a set $S$ of cardinality $n$ and $m$ subsets $S_1,S_2, \dots, S_m \subset S$ with $|S_i|=3, 
 \forall i=1, \dots, m,$ does there exist an index set $I\subseteq \{1,\dots,m\}$ 
 with $|I|=\frac{|S|}{3}$ such that $\cup_{i\in I} S_i = S$.
\end{definition}

\begin{theorem}\label{UPCMSTVNPC}
 The CPM problem for STV rule is \NPCshort{} even for a coalition of size $1$.
\end{theorem}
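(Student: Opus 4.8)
The plan is to settle membership in \NPshort{} first and then establish \NPHshort{}ness by a reduction from the X3C problem, adapting the single-voter STV manipulation hardness of Bartholdi and Orlin~\cite{bartholdi1991single}. For membership, a certificate consists of a candidate $y$ together with a linear order $\succ'$ for the lone voter $v_i$; since the STV winner can be computed in polynomial time, one verifies in polynomial time both that $x \succ' y$, where $x$ is the winner of the reported profile, and that replacing $v_i$'s reported vote by $\succ'$ makes $y$ the STV winner. Thus the problem lies in \NPshort{}.

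For hardness, given an X3C instance $(S, \{S_1,\ldots,S_m\})$, I would build on the Bartholdi--Orlin gadget, whose non-manipulator votes are engineered so that the STV elimination order is forced up to a single ``knife's-edge'' round in which a distinguished candidate $w$ survives --- and ultimately wins --- if and only if one extra ballot can be used to select an exact cover. The new ingredient demanded by CPM is an honest/reported winner $x \neq w$ together with the requirement that the deviating ballot rank $x$ above the new winner. I would therefore introduce an auxiliary candidate $x$ and calibrate the vote counts so that: (i) with a fixed ``neutral'' reported vote $\succ_i$ assigned to $v_i$, candidate $x$ is the STV winner of the whole profile; and (ii) $x$ is eliminated early in any run in which $v_i$ instead casts a cover-encoding ballot, so that $v_i$ may place $x$ at the very top of its deviating ballot (wasting its weight only for the first elimination round) and still drive $w$ to victory through the subsequent transfers. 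This is precisely what reconciles the STV need to ``spend'' the manipulator's ballot on the set candidates with the CPM constraint $x \succ' w$.

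In the forward direction, if $(S,\{S_i\})$ admits an exact cover then $v_i$ can deviate to the corresponding cover-encoding ballot, with $x$ placed first so that $x \succ' w$, and make $w$ win; hence $\{v_i\}$ is a coalition of possible manipulators against $w$. The reverse direction is where the real work lies: assuming $\{v_i\}$ is a coalition of possible manipulators, there is some $y \neq x$ and a ballot $\succ'$ with $x \succ' y$ under which $y$ wins, and I must show this forces an exact cover. The construction must therefore guarantee that $w$ is the \emph{only} candidate whose victory a single deviating voter can ever bring about --- every other candidate either already wins under the reported profile (so equals $x$) or trails by a margin too large for one ballot to overcome --- and that $w$ can win only when the transfers induced by $\succ'$ single out pairwise-disjoint sets covering $S$.

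The principal obstacle is exactly this reverse direction. Unlike the manipulation problem, where the target candidate is fixed in the input, here the target $y$ is existentially quantified, so the gadget must simultaneously (a) pin down $x$ as a \emph{robust} honest winner, (b) prevent any single-ballot deviation from promoting any candidate other than $w$, and (c) make the promotion of $w$ equivalent to the existence of an exact cover, all while still leaving room to place $x$ above $w$ on the decisive ballot. Verifying (b) reduces to a careful round-by-round accounting of the STV transfers, showing that the margins surrounding every candidate except $w$ strictly exceed the unit weight that a single voter can shift between rounds. I expect this bookkeeping, rather than any conceptual leap, to be the most delicate part of the argument.
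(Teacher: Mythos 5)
Your overall strategy is the right one and matches the paper's: reduce from X3C via the Bartholdi--Orlin gadget, and handle the existentially quantified target by engineering the profile so that only one candidate besides the reported winner $x$ can ever be promoted by a single deviating ballot. Your NP-membership argument is also fine. However, the hardness half is a plan rather than a proof --- the gadget is never specified, and the reverse direction, which you correctly identify as the crux, is entirely deferred --- and the one concrete mechanism you do commit to is likely to fail.

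Specifically, you propose to satisfy the constraint $x \succ' w$ by top-ranking $x$ in the deviating ballot and arranging for $x$ to be ``eliminated early,'' so that the ballot's weight is wasted only in the first round. But top-ranking $x$ in the deviation can only \emph{increase} $x$'s first-round plurality score relative to the reported profile (in which $x$ must win, and in which the reported ballot either top-ranks $x$ or does not); so $x$ cannot be eliminated earlier in the deviating run on first-round scores, and any later-round elimination of $x$ leaves the manipulator's entire weight parked on $x$ during exactly the rounds in which it must be steering the eliminations among the set-encoding candidates. The paper's construction resolves this tension the opposite way: $x$ is placed \emph{low} in the deviating ballot, immediately above $y$ (the unique alternative winner), below the cover-encoding prefix. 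The ballot's weight is then spent on the set candidates, $x$ survives until a decisive middle round where its score is exactly one below every other surviving candidate ($12m-1$ versus $12m$), and the tie-breaking order $\cdots \succ x$ guarantees $x$ is eliminated there even if the ballot's weight has trickled down to it. All the real work is in verifying, round by round, that the winner is always either $x$ or $y$ no matter what the single voter does, and that $y$ wins precisely when the induced set $J = \{j : b_j \text{ eliminated before } \overline{b}_j\}$ indexes an exact cover; none of that accounting appears in your proposal.
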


\begin{proof}
 Clearly the problem is \NPshort{}. To show \NPshort{} hardness, we show a many-one reduction from 
 the X3C problem to it. The reduction is analogous to the reduction given in \cite{bartholdi1991single}. 
 Given an X3C instance, we construct an election as follows. The unspecified 
 positions can be filled in any arbitrary way. The candidate set is as follows.
 $$
 \begin{array}{rclcl}
 \mathcal{C} = \{x,y\} &\cup& \{a_1, \dots, a_m\} \cup \{\overline{a}_1, \dots, \overline{a}_m\}\\
			&\cup& \{b_1, \dots, b_m\} \cup \{\overline{b}_1, \dots, \overline{b}_m\}\\
			&\cup& \{d_0, \dots, d_n\} \cup \{g_1, \dots, g_m\}
 \end{array}
 $$
 The votes are as follows.
 \begin{itemize}
  \item $12m$ votes for $y \succ x \succ \dots$
  \item $12m-1$ votes for $x \succ y \succ \dots$
  \item $10m+\frac{2n}{3}$ votes for $d_0 \succ x \succ y \succ \dots$
  \item $12m-2$ votes for $d_i \succ x \succ y \succ \dots, \forall i\in [n]$
  \item $12m$ votes for $g_i \succ x \succ y \succ \dots, \forall i\in [m]$
  \item $6m+4i-5$ votes for $b_i \succ \overline{b}_i \succ x \succ y \succ \dots, \forall i\in [m]$
  \item $2$ votes for $b_i \succ d_j \succ x \succ y \succ \dots, \forall i\in [m], \forall j\in S_i$
  \item $6m+4i-1$ votes for $\overline{b}_i \succ b_i \succ x \succ y \succ \dots, \forall i\in [m]$
  \item $2$ votes for $\overline{b}_i \succ d_0 \succ x \succ y \succ \dots, \forall i\in [m]$
  \item $6m+4i-3$ votes for $a_i \succ g_i \succ x \succ y \succ \dots, \forall i\in [m]$
  \item $1$ vote for $a_i \succ b_i \succ g_i \succ x \succ y \succ \dots, \forall i\in [m]$
  \item $2$ votes for $a_i \succ \overline{a}_i \succ g_i \succ x \succ y \succ \dots, \forall i\in [m]$
  \item $6m+4i-3$ votes for $\overline{a}_i \succ g_i \succ x \succ y \succ \dots, \forall i\in [m]$
  \item $1$ vote for $\overline{a}_i \succ \overline{b}_i \succ g_i \succ x \succ y \succ \dots, \forall i\in [m]$
  \item $2$ votes for $\overline{a}_i \succ a_i \succ g_i \succ x \succ y \succ \dots, \forall i\in [m]$
 \end{itemize}
 The tie breaking rule is $\succ_t = \cdots \succ x$. The vote of $v$ is $x\succ \cdots$. We claim that 
 $v$ is a possible manipulator \textit{iff} the X3C is a \textit{yes} instance. 
 Notice that, of the first $3m$ candidates to be eliminated, $2m$ of them are $a_1, \dots, a_m$ and 
 $\overline{a}_1, \dots, \overline{a}_m$. Also exactly one of $b_i$ and $\overline{b}_i$ will be 
 eliminated among the first $3m$ candidates to be eliminated because if one of $b_i$, $\overline{b}_i$ 
 then the other's score exceeds $12m$.
 We show that the winner is either $x$ or $y$ irrespective of the vote of one more candidate. Let 
 $J:=\{ j: b_j \text{ is eliminated before } \overline{b}_j \}$. If $J$ is an index of set cover then the winner is 
 $y$. This can be seen as follows. Consider the situation after the first $3m$ eliminations. Let $i\in S_j$ 
 for some $j\in J$. Then $b_j$ has been eliminated and thus the score of $d_i$ is at least $12m$. 
 Since $J$ is an index of a set cover, every $d_i$'s score is at least $12m$. Notice that $\overline{b}_j$ 
 has been eliminated for all $j \notin J$. Thus the revised score of $d_0$ is at least $12m$. 
 After the first $3m$ eliminations, the remaining candidates are $x, y, \{d_i:i\in[n]\}, \{g_i:i\in[m]\}, 
 \{b_j: j\notin J\}, \{\overline{b}_j: j\in J\}$. All the remaining candidates except $x$ has score 
 at least $12m$ and $x$'s score is $12m-1$. Hence $x$ will be eliminated next which makes $y$'s score 
 at least $24m-1$. Next $d_i$'s will get eliminated which will in turn make $y$'s score $(12n+36)m-1$. 
 At this point $g_i$'s score is at most $32m$. Also all the remaining $b_i$ and $\overline{b}_i$'s score 
 is at most $32m$. Since each of the remaining candidate's scores gets transferred to $y$ once they are 
 eliminated, $y$ is the winner.
 
 Now we show that, if $J$ is not an index of set cover then the winner is $x$. This can be seen as follows. 
 If $|J|>\frac{n}{3}$, then the number of $\overline{b}_j$ that gets eliminated in the first $3m$ iterations 
 is less than $m-\frac{n}{3}$ . This makes the score 
 of $d_0$ at most $12m-2$. Hence $d_0$ gets eliminated before $x$ and all its scores gets transferred to $x$. 
 This makes the elimination of $x$ impossible before $y$ and makes $x$ the winner of the election.
 
 If $|J|\le \frac{n}{3}$ and there exists an $i\in S$ that is not covered by the corresponding set cover, then $d_i$ gets eliminated before $x$ with a score of $12m-2$ and its score gets transferred to $x$. This makes $x$ win the election.
 
 Hence $y$ can win \textit{iff} X3C is a \textit{yes} 
 instance. Also notice that if $y$ can win the election, then it can do so 
 with the voter $v$ voting a preference like $\cdots \succ x \succ y \succ \cdots$. 
\end{proof}
 
From the proof of the above theorem, we have the following corollary by specifying $y$ as the \textit{actual winner} for the CPMW problem.

\begin{corollary}\label{CPMWSTVNPC}
 The CPMW problem for STV rule is \NPCshort{} even for a coalition of size $1$.
\end{corollary}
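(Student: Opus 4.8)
The final statement is \Cref{CPMWSTVNPC}, which asserts that the CPMW problem for the STV voting rule is \NPCshort{} even for a coalition of size one. The plan is to observe that this follows almost immediately from the proof of \Cref{UPCMSTVNPC}, the analogous result for the CPM problem, which was just established by a reduction from X3C. The key structural fact to exploit is that in that reduction, the only candidate that could ever replace the current winner $x$ as the STV outcome (when a single suspect voter $v$ deviates) is the specific candidate $y$ --- indeed, the proof of \Cref{UPCMSTVNPC} shows that the winner of the election is either $x$ or $y$ regardless of how the one additional vote is cast, and that $y$ wins if and only if the X3C instance is a \YES{} instance.

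First I would recall the same construction verbatim: the candidate set $\mathcal{C}$, the non-manipulator votes, the tie-breaking rule $\succ_t = \cdots \succ x$, and the single suspect voter $v$ with reported preference $x \succ \cdots$. The current winner is $x$ by design. For the CPMW instance, the only additional ingredient needed beyond the CPM instance is to designate the \emph{actual winner}; I would simply specify $y$ as the actual winner in the input. This makes the reduction for CPMW identical to the reduction for CPM, with $y$ named explicitly.

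The correctness argument then splits into the two directions exactly as before. In the forward direction, if X3C is a \YES{} instance, the proof of \Cref{UPCMSTVNPC} exhibits a preference for $v$ of the form $\cdots \succ x \succ y \succ \cdots$ that makes $y$ the STV winner; since $x \succ y$ in this vote, this witnesses that $\{v\}$ is a coalition of possible manipulators \emph{against} $y$ in the sense of \Cref{cpmwdef}, so the CPMW instance is a \YES{} instance. In the reverse direction, if $\{v\}$ is a coalition of possible manipulators against $y$, then by definition there is a vote of $v$ ranking $x$ above $y$ that makes $y$ win; by the winner-dichotomy established in \Cref{UPCMSTVNPC} (the outcome is always $x$ or $y$), $y$ winning forces the X3C instance to be a \YES{} instance. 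Membership in \NPshort{} is immediate since, given the manipulator vote as a certificate, one can check in polynomial time both that $x$ is ranked above $y$ and that $y$ becomes the STV winner.

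The main thing to verify --- and the only real obstacle --- is that \Cref{cpmwdef} requires the manipulator's vote to place the current winner $x$ above the actual winner $y$, which is an extra constraint not present in the bare CPM formulation. I would confirm that the witnessing vote produced in \Cref{UPCMSTVNPC} genuinely satisfies $x \succ_v y$; since that vote has the form $\cdots \succ x \succ y \succ \cdots$, this constraint holds, so no modification of the construction is necessary. Thus the corollary follows directly, and I would present it as a one-line deduction from the proof of \Cref{UPCMSTVNPC} by fixing $y$ as the specified actual winner.
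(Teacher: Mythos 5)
Your proposal is correct and follows exactly the paper's route: the paper obtains the corollary in one line by taking the same reduction as in \Cref{UPCMSTVNPC} and specifying $y$ as the actual winner, relying (as you do) on the fact that the witnessing vote has the form $\cdots \succ x \succ y \succ \cdots$ so the constraint $x \succ_v y$ from \Cref{cpmwdef} is satisfied, and on the winner dichotomy between $x$ and $y$ for the reverse direction. Your write-up simply makes explicit the checks the paper leaves implicit; there is nothing to add or correct.
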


\section{Conclusion}

In this work, we have initiated a promising research direction for detecting possible instances of manipulation in elections. We have proposed the notion of \emph{possible} manipulation and explored several concrete computational problems, which we believe to be important in the context of voting theory. These problems involve identifying if a given set of voters are possible manipulators (with or without a specified candidate winner). We have also studied the search versions of these problems, where the goal is to simply detect the presence of possible manipulation with the maximum coalition size. 
We believe there is theoretical as well as practical interest in studying the proposed problems. We have
provided algorithms and hardness results for these problems for many common voting rules. 
It is our conviction that both the problems that we have studied here have initiated an interesting research direction with significant promise and potential for future work.

In the next chapter of the thesis, we study another interesting form of election control called bribery.
\blankpage
\chapter{Frugal Bribery}
\label{chap:frugal_bribery}

\blfootnote{A preliminary version of the work in this chapter was published as \cite{frugalDeyMN16}: Palash Dey, Neeldhara Misra, and Y. Narahari. Frugal bribery in voting. In Proc. Thirtieth AAAI Conference on Artificial Intelligence, February 12-17, 2016, Phoenix, Arizona, USA., pages 2466–2472, 2016.}

\begin{quotation}
 {\small Bribery in elections is an important problem in computational
social choice theory. We introduce and study two important special cases of the classical
\textsc{\$Bribery} problem, namely, \textsc{Frugal-bribery} and \textsc{Frugal-\$bribery} where the briber is frugal in nature. By this, we mean that the briber is 
only able to influence voters who benefit from the suggestion of the briber.
More formally, a voter is {\em vulnerable} if the outcome of the election  
improves according to her own preference when she accepts the suggestion of the 
briber. In the \textsc{Frugal-bribery} problem, the goal is to make a certain
candidate win the election by changing {\em only} the vulnerable votes. In the
\textsc{Frugal-\$bribery} problem, the vulnerable votes have prices and the
goal is to make a certain candidate win the election by changing only the 
vulnerable votes, subject to a budget constraint. We further
formulate two natural variants of the \textsc{Frugal-\$bribery} problem namely
\textsc{Uniform-frugal-\$bribery} and \textsc{Nonuniform-frugal-\$bribery}
where the prices of the vulnerable votes are, respectively, all the same or different.

We observe that, even if we have only a small number of
candidates, the problems are intractable for all voting rules studied here for
weighted elections, with the sole exception of the \textsc{Frugal-bribery}
problem for the plurality voting rule. In contrast, we have polynomial time
algorithms for the \textsc{Frugal-bribery} problem for plurality, veto,
$k$-approval, $k$-veto, and plurality with runoff voting rules for unweighted
elections. However, the \textsc{Frugal-\$bribery} problem is intractable for
all the voting rules studied here barring the plurality and the veto voting
rules for unweighted elections.
These intractability results demonstrate that bribery is a hard computational 
problem, in the sense that several special cases of this problem continue to be
computationally intractable. This strengthens the view that bribery, although a possible attack on an election in principle, may be infeasible in practice.}
\end{quotation}

\section{Introduction}

Activities that try to influence voter opinions, in favor of specific candidates, are very common during the time that an election is in progress. For example, in a political election, candidates often conduct elaborate campaigns to promote themselves among a general or targeted audience. Similarly, it is not uncommon for people to protest against, or rally for, a national committee or court that is in the process of approving a particular policy. An extreme illustration of this phenomenon is \emph{bribery} --- here, the candidates may create financial incentives to sway the voters. Of course, the process of influencing voters may involve costs even without the bribery aspect; for instance, a typical political campaign or rally entails considerable expenditure. 

All situations involving a systematic attempt to influence voters usually have the following aspects: an external agent, a candidate that the agent would like to be the winner, a budget constraint, a cost model for a change of vote, and knowledge of the existing election. The formal computational problem that arises from these inputs is the following: is it possible to make a distinguished candidate win the election in question by incurring a cost that is within the budget? This question, with origins in~\cite{faliszewski2006complexity,faliszewski2009hard,FaliszewskiHHR09}, has been subsequently studied intensely in computational social choice literature. 
In particular, bribery has been studied under various cost models, for example, uniform price per 
vote which is known as \textsc{\$Bribery}~\cite{faliszewski2006complexity}, nonuniform price per vote~\cite{faliszewski2008nonuniform},  nonuniform price per shift of the distinguished candidate per vote which is called \textsc{Shift bribery}, nonuniform price per swap of candidates 
per vote which is called \textsc{Swap bribery}~\cite{elkind2009swap}. A closely related problem known as campaigning has been studied for various vote models, for example, truncated ballots~\cite{BaumeisterFLR12}, 
soft constraints~\cite{pini2013bribery}, CP-nets~\cite{dorn2014hardness}, combinatorial domains~\cite{mattei2012bribery} and 
probabilistic lobbying~\cite{erdelyi2009complexity}. 
The bribery problem has also been studied under 
voting rule uncertainty~\cite{erdelyi2014bribery}. Faliszewski et al.~\cite{faliszewski2014complexity} 
study the complexity of bribery in Bucklin and Fallback voting rules. Xia~\cite{xia2012computing} studies destructive bribery, 
where the goal of the briber is to change the winner by changing minimum number of votes. 
Dorn et al.~\cite{dorn2012multivariate} studies the 
parameterized complexity of the \textsc{Swap Bribery} problem and Bredereck et al.~\cite{bredereck2014prices} explores the parameterized 
complexity of the \textsc{Shift Bribery} problem for a wide range of parameters.
We recall again that the costs and the budgets involved in all the bribery problems above need not necessarily 
correspond to actual money traded between voters and candidates. They may correspond to any cost in general, for example, the amount of effort or time that the briber needs to spend for each voter.

\subsection{Motivation}
In this work, we propose an effective cost model for the bribery problem. Even the most general cost models that have been studied in the literature fix absolute costs per voter-candidate combination, with no specific consideration to the voters' opinions about the current winner and the distinguished candidate whom the briber wants to be the winner. In our proposed model, a change of vote is relatively easier to effect if the change causes an outcome that the voter would find desirable. Indeed, if the currently winning candidate is, say, $a$, and a voter is (truthfully) promised that by changing her vote from $c \succ d \succ b \succ a$ to $d \succ b \succ c \succ a$, the winner of the election would change from $a$ to $d$, then this is a change that the voter is likely to be  happy to make. While the change does not make her most favorite candidate win the election, it does improve the result from her point of view. Thus, given the circumstances (namely that of her least favorite candidate winning the election), the altered vote serves the voter better than the original one. 

We believe this perspective of voter influence is an important one to study. The cost of a change of vote is proportional to the nature of the outcome that the change promises --- the cost is low or nil if the change results in a better outcome with respect to the voter's original ranking, and high or infinity otherwise. A frugal agent only approaches voters of the former category, thus being able to effectively bribe with minimal or no cost. Indeed the behavior of agents in real life is often frugal. For example, consider campaigners in favor of a relatively smaller party in a political election. They may actually target only vulnerable voters due to lack of human and other resources they have at their disposal.

More formally, let $c$ be the winner of an election and $p$ (other than $c$) the candidate whom the briber 
wishes to make the winner of the election. Now the voters who prefer $c$ to $p$ will be reluctant to change their votes, and we call these votes {\em non-vulnerable with respect to $p$} --- we do not allow these votes to be changed by the briber, which justifies the {\em frugal} nature of the briber. On the other hand, if a voter prefers $p$ to $c$, then it may be very easy to convince her to change her vote if doing so makes $p$ win the election. We name these votes {\em vulnerable with respect to $p$}. When the candidate $p$ is clear from the context, we simply call these votes non-vulnerable and vulnerable, respectively. 

The computational problem is to determine whether there is a way to make a candidate $p$ win the election by changing \emph{only} those votes that are vulnerable with respect to $p$. We call this problem \textsc{Frugal-bribery}. Note that there is no cost involved in the \textsc{Frugal-bribery} problem --- the briber does not incur any cost to change the votes of the vulnerable votes. We also extend this basic model to a more general setting where each vulnerable vote has a certain nonnegative integer price which may correspond to the effort involved in approaching these voters and convincing them to change their votes. We also allow for the specification of a budget constraint, which can be used to enforce auxiliary constraints. This leads us to define the \textsc{Frugal-\$bribery} problem, where we are required to find a subset of vulnerable votes with a total cost that is within a given budget, such that these votes can be changed in some way to make the candidate $p$ win the election. Note that the \textsc{Frugal-\$bribery} problem can be either uniform or nonuniform depending on whether the prices of the vulnerable votes are all identical or different. If not mentioned otherwise, the prices of the vulnerable votes will be assumed to be nonuniform. 
We remind that the briber is not allowed to change the non-vulnerable votes in both the \textsc{Frugal-bribery} and the \textsc{Frugal-\$bribery} problems.

\subsection{Our Contribution}
Our primary contribution in this work is to formulate and study two important and natural models of bribery which turn out to be special cases of the well studied \textsc{\$Bribery} problem in elections. Our results show that both the \textsc{Frugal-bribery} and the \textsc{Frugal-\$bribery} problems are intractable for many commonly used voting rules for weighted as well as unweighted elections, barring a few exceptions. These intractability results can be interpreted as an evidence that the bribery in elections is a hard computational problem in the sense that even many of its important and natural special cases continue to be intractable. Thus bribery, although a possible attack on elections in principle, may be practically not viable. From a more theoretical perspective, our intractability results strengthen the existing hardness results for the \textsc{\$Bribery} problem. On the other hand, our polynomial time algorithms exhibit interesting tractable special cases of the \textsc{\$Bribery} problem.

\subsubsection*{Our Results for Unweighted Elections} ~We have the following results for unweighted elections.
 \begin{itemize}
  \item The \textsc{Frugal-bribery} problem is in \Pshort{} for the $k$-approval, Bucklin, and plurality with runoff voting rules. Also, the \textsc{Frugal-\$bribery} problem is in \Pshort{} for the plurality and veto voting rules. In contrast, the \textsc{Frugal-\$bribery} problem is \NPshort{}-complete for the Borda, maximin, Copeland, and STV voting rules [\Cref{lem:frugalNPC}].
  \item The \textsc{Frugal-bribery} problem is \NPC{} for the Borda voting rule [\Cref{thm:frugalBordaNPC}]. The \textsc{Frugal-\$bribery} is \NPC{} for the $k$-approval for any constant $k \ge 5$ [\Cref{thm:frugalKappNPC}], $k$-veto for any constant $k \ge 3$ [\Cref{thm:frugalKvetoNPC}], and a wide class of scoring rules [\Cref{thm:frugalScrNPC}] even if the price of every vulnerable vote is either $1$ or $\infty$. 
  Moreover, the \textsc{Uniform-frugal-\$bribery} is \NPC{} for the Borda voting rule even if all the vulnerable votes have a uniform
  price of $1$ and the budget is $2$ [\Cref{thm:uniformBordaNPC}].
  \item The \textsc{Frugal-\$bribery} problem is in \Pshort{} for the $k$-approval, Bucklin, and plurality with runoff voting rules when the budget is a constant [\Cref{thm:frugalKappP}]. 
 \end{itemize}
 
\subsubsection*{Our Results for Weighted Elections}
~We have the following results for weighted elections.
\begin{itemize}
 \item The \textsc{Frugal-bribery} problem is in \Pshort{} for the maximin and Copeland voting rules when we have only three candidates [\Cref{lem:wfrugalEasy}], and for the plurality voting rule for any number of candidates [\Cref{thm:wfrugalP}].
 \item The \textsc{Frugal-bribery} problem is \NPC{} for the STV [\Cref{thm:stv_wt}], plurality with runoff [\Cref{cor:run_wt}], and every scoring rule except the plurality voting rule [\Cref{lem:wfrugalScr}] for three candidates. The \textsc{Frugal-\$bribery} problem is \NPC{} for the plurality voting rule for three candidates [\Cref{thm:wfrugalPluNPC}]. 
 \item When we have only four candidates, the \textsc{Frugal-bribery} problem is \NPC{} for the maximin [\Cref{thm:wfrugalMaxmin}], Bucklin [\Cref{thm:bucklin_wt}], and Copeland [\Cref{thm:wfrugalCopeland}] rules.
\end{itemize}
We summarize the results in the \Cref{tbl:frugal_summary}.

\begin{table}[htbp]\centering
  \resizebox{\linewidth}{!}{
\renewcommand*{\arraystretch}{2}
 \begin{tabular}{|c|c|c|c|c|}\hline 
  \multirow{2}{*}{Voting Rules}			& \multicolumn{2}{c|}{Unweighted}	&\multicolumn{2}{c|}{Weighted}	\\\cline{2-5}
						& \textsc{Frugal-bribery} & \textsc{Frugal-\$bribery}& \textsc{Frugal-bribery} & \textsc{Frugal-\$bribery}\\\hline\hline
  Plurality & \makecell{\Pshort{}\\\relax[\Cref{lem:frugalP}]} & \makecell{\Pshort{}\\\relax[\Cref{lem:frugalPluP}]} & \makecell{\Pshort{}\\\relax[\Cref{thm:wfrugalP}]} & \makecell{\NPC{}\\\relax[\Cref{thm:wfrugalPluNPC}]}\\\hline
  
  Veto & \makecell{\Pshort{}\\\relax[\Cref{lem:frugalP}]} & \makecell{\Pshort{}\\\relax[\Cref{lem:frugalPluP}]} & \makecell{\NPC{}\\\relax[\Cref{lem:wfrugalScr}]} & \makecell{\NPC{}\\\relax[\Cref{lem:wfrugalScr}]}\\\hline
  
  $k$-approval & \makecell{\Pshort{}\\\relax[\Cref{lem:frugalP}]} & \makecell{\NPC{}$^{\star}$\\\relax[\Cref{thm:frugalKappNPC}]} & \makecell{\NPC{}$^{\diamond}$\\\relax[\Cref{lem:wfrugalScr}]} & \makecell{\NPC{}\\\relax[\Cref{lem:wfrugalScr}]}\\\hline
  
  $k$-veto & \makecell{\Pshort{}\\\relax[\Cref{lem:frugalP}]} & \makecell{\NPC{}$^{\bullet}$\\\relax[\Cref{thm:frugalKvetoNPC}]} & \makecell{\NPC{}$^{\diamond}$\\\relax[\Cref{lem:wfrugalScr}]} & \makecell{\NPC{}\\\relax[\Cref{lem:wfrugalScr}]}\\\hline
  
  Borda & \makecell{\NPC{}\\\relax[\Cref{thm:frugalBordaNPC}]} & \makecell{\NPC{}$^{\dagger}$\\\relax[\Cref{thm:frugalScrNPC}]} & \makecell{\NPC{}\\\relax[\Cref{lem:wfrugalScr}]} & \makecell{\NPC{}\\\relax[\Cref{lem:wfrugalScr}]}\\\hline
  
  Runoff & \makecell{\Pshort{}\\\relax[\Cref{lem:frugalP}]} & ? & \makecell{\NPC{}\\\relax[\Cref{cor:run_wt}]} & \makecell{\NPC{}\\\relax[\Cref{cor:run_wt}]}\\\hline
  
  Maximin & ? & \makecell{\NPC{}\\\relax[\Cref{lem:frugalNPC}]} & \makecell{\NPC{}\\\relax[\Cref{thm:wfrugalMaxmin}]} & \makecell{\NPC{}\\\relax[\Cref{thm:wfrugalMaxmin}]}\\\hline
  
  Copeland & ? & \makecell{\NPC{}\\\relax[\Cref{lem:frugalNPC}]} & \makecell{\NPC{}\\\relax[\Cref{thm:wfrugalCopeland}]} & \makecell{\NPC{}\\\relax[\Cref{thm:wfrugalCopeland}]}\\\hline
  
  STV & ? & \makecell{\NPC{}\\\relax[\Cref{lem:frugalNPC}]} & \makecell{\NPC{}\\\relax[\Cref{thm:stv_wt}]} & \makecell{\NPC{}\\\relax[\Cref{thm:stv_wt}]}\\\hline
\end{tabular}}
  \caption{${\star}$- The result holds for $k\ge5$. \newline\hspace{\linewidth}$\bullet$- The result holds for $k\ge3$. \newline\hspace{\linewidth}$\dagger$- The result holds for a much wider class of scoring rules. \newline\hspace{\linewidth}${\diamond}$- The results do not hold for the plurality voting rule. \newline\hspace{\linewidth}?- The problem is open.}\label{tbl:frugal_summary}
\end{table}

\subsection{Related Work} The pioneering work of~\cite{faliszewski2006complexity} defined and studied the \textsc{\$Bribery} problem wherein, the input is a set of votes with prices for each vote and the goal is to 
make some distinguished candidate win the election, subject to a budget constraint of the briber. 
The \textsc{Frugal-\$bribery} problem is the \textsc{\$Bribery} 
problem with the restriction that the price of every non-vulnerable vote is infinite. 
Also, the \textsc{Frugal-bribery} problem is a special case of the \textsc{Frugal-\$bribery} problem. 
Hence, whenever the \textsc{\$Bribery} problem is computationally easy in a setting, both the \textsc{Frugal-bribery} and the 
\textsc{Frugal-\$bribery} problems are also computationally easy (see \Cref{prop:conn} for a more formal proof). 
However, the \textsc{\$Bribery} problem is computationally intractable in most of the settings. This makes the study of important special cases such as \textsc{Frugal-bribery} and \textsc{Frugal-\$bribery}, interesting. We note that a notion similar to vulnerable votes has been studied in the context of dominating manipulation by \cite{conitzer2011dominating}. Hazon et al.~\cite{hazon2013change} introduced and studied \textsc{Persuasion} and $k$-\textsc{Persuasion} problems where an external agent suggests votes to vulnerable voters which are beneficial for the vulnerable voters as well as the external agent. It turns out that the \textsc{Persuasion} and the $k$-\textsc{Persuasion} problems Turing reduce to the \textsc{Frugal-bribery} and the \textsc{Frugal-\$bribery} problems respectively (see \Cref{prop:persu}). Therefore, the polynomial time algorithms we propose in this work imply polynomial time algorithms for the persuasion analog. On the other hand, since the reduction in~\Cref{prop:persu} from \textsc{Persuasion} to \textsc{Frugal-bribery} is a Turing reduction, the existing \NP{}-completeness results for the persuasion problems do not imply \NP{}-completeness results for the corresponding frugal bribery variants. We refer to~\cite{rogers1967theory} for Turing reductions.

\section{Problem Definition}

In all the definitions below, $r$ is a fixed voting rule. We define the notion of vulnerable votes as follows.
Intuitively, the vulnerable votes are those votes whose voters can easily be persuaded to change their votes since doing so will result in an outcome that those voters prefer over the current one.
\begin{definition}(Vulnerable votes)\\
 Given a voting rule $r$, a set of candidates $\CC$, a profile of votes $\succ = (\succ_1, \ldots, \succ_n)$, and a distinguished candidate $p$, we say a vote $\succ_i$ is $p$-vulnerable if $p\succ_i r(\succ)$.
\end{definition} 
 Recall that, whenever the distinguished candidate is clear from the context, we drop it from the notation. With the above definition of vulnerable votes, we formally define the \textsc{Frugal-bribery} problem as follows. Intuitively, the problem is to determine whether a particular candidate can be made winner by changing only the vulnerable votes.
\begin{definition}($r$-\textsc{Frugal-bribery})\\
 Given a preference profile $\succ = (\succ_1, \ldots, \succ_n)$ over a candidate set $\CC$, and a candidate $p$, determine
if there is a way to make $p$ win the election by changing only the vulnerable votes.
\end{definition}
 Next we generalize the \textsc{Frugal-bribery} problem to the \textsc{Frugal-\$bribery} problem which involves prices for the 
vulnerable votes and a budget for the briber. Intuitively, the price of a vulnerable vote $v$ is the cost the briber incurs to change the vote $v$.
\begin{definition}($r$-\textsc{Frugal-\$bribery})\\
Let $\succ = (\succ_1, \ldots, \succ_n)$ be a preference profile over a candidate set $\CC$. We are given a candidate $p$, a finite budget $b\in\mathbb{N}$, and a price function $c:[n]\longrightarrow \mathbb{N}\cup\{\infty\}$ such that $c(i)  = \infty$ if $\succ_i$ is not a $p$-vulnerable vote. The goal is to determine if there exist $p$ vulnerable votes $\succ_{i_1}, \ldots, \succ_{i_\ell}\in\succ$ and votes $\succ_{i_1}^\prime, \ldots, \succ_{i_\ell}^\prime\in\mathcal{L}(C)$ such that:

\begin{enumerate}
	\item[(a)] the total cost of the chosen votes is within the budget, that is, $\sum_{j=1}^\ell c(i_j) \le b$, and
	\item[(b)] the new votes make the desired candidate win, that is, $r(\succ_{[n]\setminus\{i_1, \ldots, i_\ell\}}, \succ_{i_1}^\prime, \ldots, \succ_{i_\ell}^\prime) = p$. 
\end{enumerate}  

The special case of the problem when the prices of all the vulnerable votes are the same is called \textsc{Uniform-frugal-\$bribery}. We refer to the general version as \textsc{Nonuniform-frugal-\$bribery}. If not specified, \textsc{Frugal-\$bribery} refers to the nonuniform version. 
\end{definition}
 The above problems are important special cases of the well studied \textsc{\$Bribery} problem. Also, the \textsc{Coalitional-manipulation} problem~\cite{bartholdi1989computational,conitzer2007elections}, one of the classic problems in computational social choice theory, turns out to be a special case of the \textsc{Frugal-\$bribery} problem [see \Cref{prop:conn}].

For the sake of completeness, we include the definitions of these problems here.
\begin{definition}($r$-\textsc{\$Bribery})~\cite{faliszewski2009hard}\\
 Given a preference profile $\succ = (\succ_1, \ldots, \succ_n)$ over a set of candidates \CC, a distinguished candidate $p$, a price function $c:[n]\longrightarrow \mathbb{N}\cup\{\infty\}$, and a budget $b\in\mathbb{N}$, determine if there a way to make $p$ win the election.
\end{definition}
\begin{definition}(\textsc{Coalitional-manipulation})~\cite{bartholdi1989computational,conitzer2007elections}\\
 Given a preference profile $\succ^t = (\succ_1, \ldots, \succ_n)$ of truthful voters over a set of candidates \CC, an integer $\el$, and a distinguished candidate $p$, determine if there exists a $\el$ voter preference profile $\succ^\el$ such that the candidate $p$ wins uniquely (does not tie with any other candidate) in the profile $(\succ^t,\succ^\el)$.
\end{definition}
The following proposition shows the relationship among the above problems.
\Cref{prop:conn,prop:man_bri_con,prop:persu} below hold for both weighted and unweighted elections.
\begin{proposition}\label{prop:conn}
 For every voting rule, \textsc{Frugal-bribery} $\le_\Pshort{}$ \textsc{Uniform-frugal-\$bribery} $\le_\Pshort{}$ \textsc{Nonuniform-frugal-\$bribery} 
 $\le_\Pshort{}$ \textsc{\$Bribery}. Also, \textsc{Coalitional-manipulation} $\le_\Pshort{}$ \textsc{Nonuniform-frugal-\$bribery}.
\end{proposition}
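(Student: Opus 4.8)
The plan is to establish a chain of polynomial-time many-one reductions, each of which is essentially an inclusion-of-special-cases argument. The overall strategy is to observe that each problem in the chain is a syntactic restriction of the next, so the reduction map can simply be the identity (or near-identity) on the instance, and the bulk of the work is checking that the correctness condition is preserved.

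First I would handle \textsc{Frugal-bribery} $\le_{\Pshort{}}$ \textsc{Uniform-frugal-\$bribery}. Given a \textsc{Frugal-bribery} instance $(\succ, \CC, p)$ with $n$ votes, I would produce the \textsc{Uniform-frugal-\$bribery} instance that keeps the same profile and candidate, assigns every vulnerable vote the uniform price $1$ (and every non-vulnerable vote price $\infty$, as the definition forces), and sets the budget $b = n$. Since at most $n$ vulnerable votes exist and each costs $1$, the budget never binds, so a successful frugal bribery (changing any subset of vulnerable votes to make $p$ win) is feasible if and only if the uniform-priced instance is a \YES instance. The next link, \textsc{Uniform-frugal-\$bribery} $\le_{\Pshort{}}$ \textsc{Nonuniform-frugal-\$bribery}, is immediate: a uniform price function is a special case of a nonuniform one, so the identity map works and the equivalence is trivial because the two problems share the same success predicate. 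For \textsc{Nonuniform-frugal-\$bribery} $\le_{\Pshort{}}$ \textsc{\$Bribery}, I would again use the identity map on $(\succ, \CC, p, c, b)$; the only point requiring a sentence of justification is that \textsc{Frugal-\$bribery} is precisely \textsc{\$Bribery} with the added stipulation $c(i) = \infty$ for every non-vulnerable vote, so any feasible \textsc{\$Bribery} solution automatically avoids non-vulnerable votes (their infinite price would exceed the finite budget), matching the frugal constraint exactly.

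For the last claim, \textsc{Coalitional-manipulation} $\le_{\Pshort{}}$ \textsc{Nonuniform-frugal-\$bribery}, I would take a \textsc{Coalitional-manipulation} instance $(\succ^t, \CC, \ell, p)$ and build a profile consisting of the $n$ truthful votes together with $\ell$ additional ``manipulator'' votes, each of which I place $p$ at the bottom (so that each such vote is $p$-vulnerable with respect to any outcome the truthful votes alone would produce, or more carefully, constructed so that $p$ is ranked below the honest winner in each). I would assign each manipulator vote price $1$, each truthful vote price $\infty$, and set budget $b = \ell$. The intended equivalence is that bribing exactly the $\ell$ vulnerable manipulator votes corresponds precisely to the manipulators choosing their $\ell$-vote profile, and $p$ wins uniquely in one setting iff it does in the other.

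The main obstacle I anticipate is in this final reduction: ensuring that the $\ell$ inserted votes are genuinely $p$-vulnerable, which requires that $p$ is strictly less preferred than the current winner of the combined profile in each inserted vote. I would need to argue that one can fix the initial ranking of these $\ell$ votes (for instance, ranking $p$ last) so that before bribery the honest winner beats $p$ in each inserted vote, making them all vulnerable and hence legally bribable, while after re-voting they realize the desired manipulating ballots. A subtle parity/tie-breaking check is also needed, since \textsc{Coalitional-manipulation} demands $p$ win \emph{uniquely}, so I must confirm the frugal bribery success predicate ($r(\cdot) = p$) aligns with unique victory under the fixed tie-breaking convention; this bookkeeping, rather than any deep idea, is where the care is required.
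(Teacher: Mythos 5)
The first three reductions in your chain are correct and match the paper's argument in spirit: each problem is a syntactic special case of the next, and your choice of uniform price $1$ with budget $n$ (versus the paper's price $0$ and budget $0$) is an immaterial difference, since either way the budget never binds.

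The last reduction, however, contains a genuine error: you have the definition of $p$-vulnerable backwards. A vote $\succ_i$ is $p$-vulnerable precisely when $p \succ_i r(\succ)$, i.e., when the voter prefers $p$ \emph{over} the current winner (such a voter is happy to help $p$ win); votes that rank the current winner above $p$ are the \emph{non}-vulnerable ones, which the frugal briber is forbidden from touching. By placing $p$ at the bottom of each of the $\ell$ inserted manipulator votes, you guarantee that the current winner is ranked above $p$ in those votes, so they are all non-vulnerable, carry price $\infty$ by the problem definition, and can never be changed -- the reduction then fails outright (the reduced instance is a \NO{} instance even when the \textsc{Coalitional-manipulation} instance is a \YES{} instance). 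The fix is to do the opposite of what you propose: fix the manipulators' ballots to $\succ_f = p \succ \text{others}$, so that $p$ sits above whoever currently wins and all $\ell$ votes are $p$-vulnerable. One must also dispose of the degenerate case where $p$ already wins when all manipulators cast $\succ_f$ (in which case those votes are not vulnerable either, since $p \not\succ_i p$); but that case is detectable in polynomial time and is trivially a \YES{} instance of \textsc{Coalitional-manipulation}, so it can be mapped to a canonical \YES{} instance. Your concern about unique winning versus winning under tie-breaking is a fair piece of bookkeeping, but it is secondary to the vulnerability issue above.
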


\begin{proof}
 In the reductions below, let us assume that the election to start with is a weighted election. Since we do not change the weights of any vote in the reduction and since there is a natural one to one correspondence between the votes of the original instance and the reduced instance, the proof also works for unweighted elections.
 
 Given a \textsc{Frugal-bribery} instance, we construct a \textsc{Uniform-frugal-\$bribery} instance by defining 
 the price of every vulnerable vote to be zero and the budget to be zero. Clearly, the two instances are equivalent. Hence, 
  \textsc{Frugal-bribery} $\le_\Pshort{}$ \textsc{Uniform-frugal-\$bribery}.
  
  \textsc{Uniform-frugal-\$bribery} $\le_\Pshort{}$ \textsc{Nonuniform-frugal-\$bribery} $\le_\Pshort{}$ \textsc{\$Bribery} follows from the 
  fact that  \textsc{Uniform-frugal-\$bribery} is a special case of \textsc{Nonuniform-frugal-\$bribery} which in turn is a 
  special case of \textsc{\$Bribery}.
  
  Given a \textsc{Coalitional-manipulation} instance, we construct a \textsc{Nonuniform-frugal-\$bribery} instance as follows. 
  Let $p$ be the distinguished candidate of the manipulators and $ \succ_f = p\succ others$ be any arbitrary but 
  fixed ordering of the candidates given in the \textsc{Coalitional-manipulation} instance. 
  Without loss of generality, we can assume that $p$ does not win if all the manipulators vote $\succ_f$ (Since, this is a 
  polynomially checkable case of \textsc{Coalitional-manipulation}). We define the vote of the manipulators to be $\succ_f$, 
  the distinguished candidate of the campaigner to be $p$, the budget of the campaigner to be zero, 
  the price of the manipulators to be zero (notice that all the manipulators' votes are $p$-vulnerable), and the price of the rest of the vulnerable votes to be one. Clearly, the 
  two instances are equivalent. Hence, \textsc{Coalitional-manipulation} $\le_\Pshort{}$ \textsc{Nonuniform-frugal-\$bribery}.
\end{proof}
Also, the \textsc{Frugal-bribery} problem reduces to the \textsc{Coalitional-manipulation} problem by simply making all vulnerable votes to be manipulators.
\begin{proposition}\label{prop:man_bri_con}
 For every voting rule, \textsc{Frugal-bribery} $\le_\Pshort{}$ \textsc{Coalitional-manipulation}.
\end{proposition}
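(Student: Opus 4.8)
The statement to prove is Proposition~\ref{prop:man_bri_con}, namely that \textsc{Frugal-bribery} $\le_\Pshort{}$ \textsc{Coalitional-manipulation} for every voting rule. My plan is to exhibit a polynomial-time many-one reduction that takes an instance of \textsc{Frugal-bribery} and produces an equivalent instance of \textsc{Coalitional-manipulation}. The guiding intuition is that in \textsc{Frugal-bribery} the briber is permitted to arbitrarily rewrite exactly the set of vulnerable votes (at no cost, since \textsc{Frugal-bribery} has no prices), and this is precisely the freedom a coalition of manipulators enjoys over its own votes. So the natural move is to designate the vulnerable voters as the manipulators.

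First I would, given a \textsc{Frugal-bribery} instance consisting of a profile $\succ = (\succ_1,\ldots,\succ_n)$ over candidate set $\CC$ and a distinguished candidate $p$, compute the current winner $r(\succ)$ and then identify the set $V \subseteq \{\succ_1,\ldots,\succ_n\}$ of $p$-vulnerable votes (those $\succ_i$ with $p \succ_i r(\succ)$). This identification is polynomial-time since it only requires one winner determination and a scan of each vote to check the relative order of $p$ and $r(\succ)$. I would then construct the \textsc{Coalitional-manipulation} instance whose truthful voters are exactly the non-vulnerable votes $\{\succ_i : \succ_i \notin V\}$, whose manipulator set has size $\el = |V|$, and whose distinguished candidate is $p$. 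The reduction is clearly computable in polynomial time.

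The core of the argument is the equivalence of the two instances, which I would establish in both directions. In the forward direction, if the \textsc{Frugal-bribery} instance is a \YES{} instance, then there is a way to rewrite the vulnerable votes so that $p$ wins; these rewritten votes serve directly as the manipulators' ballots, so $p$ wins in the combined profile and the \textsc{Coalitional-manipulation} instance is a \YES{} instance. In the reverse direction, a successful manipulator profile supplies exactly the new ballots to replace the vulnerable votes, making $p$ win; since \textsc{Frugal-bribery} only requires changing vulnerable votes, this is a valid frugal bribery. The one subtlety I expect to be the main (though minor) obstacle is the mismatch in the winning condition: \textsc{Coalitional-manipulation} as defined in the excerpt requires $p$ to win \emph{uniquely} (not tie), whereas \textsc{Frugal-bribery} only asks that $p$ ``win the election,'' a notion tied to the tie-breaking rule built into $r$. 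I would handle this by being careful that $r$ already incorporates a fixed tie-breaking rule, so ``$p$ wins'' under $r$ is well-defined on both sides and the correspondence of ballots preserves it; if one insists on the literal unique-winner formulation, one observes that the same ballot set witnesses the outcome and the winning condition transfers verbatim.

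Finally I would note that the reduction preserves vote weights (no weights are altered and there is a one-to-one correspondence between votes), so the argument applies uniformly to both weighted and unweighted elections, matching the scope claimed for the neighboring Propositions~\ref{prop:conn} and~\ref{prop:persu}. This completes the plan; the proof itself is short, with the only genuine content being the clean forward/backward translation of ballots and the careful treatment of the winning condition.
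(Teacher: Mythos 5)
Your proposal is correct and follows essentially the same route as the paper, which justifies this proposition with the single observation that one ``simply mak[es] all vulnerable votes to be manipulators''; your fleshed-out version (keep the non-vulnerable votes as the truthful profile, set the coalition size to the number of vulnerable votes, keep the same distinguished candidate) is exactly that reduction, and your remark about the unique-winner versus tie-breaking mismatch is a reasonable extra precaution that the paper itself does not bother to address.
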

We can also establish the following relationship between the \textsc{Persuasion} (respectively $k$-\textsc{Persuasion}) problem and the \textsc{Frugal-bribery} (respectively \textsc{Frugal-\$bribery}) problem. The persuasions differ from the corresponding frugal bribery variants in that the briber has her own preference order, and desires to improve the outcome of the election with respect to her preference order. The following proposition is immediate from the definitions of the problems. 
\begin{proposition}\label{prop:persu}
 For every voting rule, there is a Turing reduction from \textsc{Persuasion} (respectively \textsc{$k$-persuasion}) to \textsc{Frugal-bribery} (respectively \textsc{Frugal-\$bribery}).
\end{proposition}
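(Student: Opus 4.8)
The statement to prove is \Cref{prop:persu}, which asserts the existence of a Turing reduction from \textsc{Persuasion} (respectively \textsc{$k$-persuasion}) to \textsc{Frugal-bribery} (respectively \textsc{Frugal-\$bribery}) for every voting rule. The plan is to exploit the structural similarity between the two families of problems: both involve an external agent suggesting altered votes only to voters who stand to benefit from the change. The essential difference is that in \textsc{Persuasion} the external agent carries her own preference order $\succ_a$ over the candidates and seeks to \emph{improve} the election outcome with respect to $\succ_a$, rather than to fix one particular winner $p$. This is precisely the feature that forces a Turing reduction instead of a many-to-one reduction.

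First I would spell out the reduction. Given a \textsc{Persuasion} instance with the agent's preference $\succ_a$ and current winner $w$, I would observe that ``improving the outcome with respect to $\succ_a$'' means making some candidate $p$ with $p \succ_a w$ win. Since there are at most $m-1$ such candidates, the plan is to iterate over every candidate $p$ that the agent strictly prefers to the current winner $w$, and for each such $p$ invoke the \textsc{Frugal-bribery} oracle with $p$ as the distinguished candidate on the same profile. The key point that makes this sound is that the notion of a \emph{vulnerable} vote in \textsc{Frugal-bribery} (a voter $i$ with $p \succ_i r(\succ)$) coincides exactly with the set of voters that the persuader is permitted to influence in \textsc{Persuasion} when targeting $p$, since a voter will accept the suggestion only if the new outcome $p$ is preferred over the old one. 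Thus a successful frugal bribery making $p$ win corresponds to a successful persuasion, and conversely. I would output \textsc{Yes} if and only if at least one of these $\le m-1$ oracle calls returns \textsc{Yes}. For \textsc{$k$-persuasion} the same loop works, calling the \textsc{Frugal-\$bribery} oracle where the budget encodes the bound $k$ on the number of votes changed (assigning unit prices to vulnerable votes and budget $k$).

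The main obstacle I anticipate is arguing the correctness of the equivalence carefully, namely that iterating over candidates $p \succ_a w$ and asking only whether $p$ can be \emph{made winner} correctly captures the persuader's objective of obtaining an outcome she prefers. I would need to verify that it suffices to target each more-preferred candidate individually: any persuasion that improves the outcome produces \emph{some} winner $p$ with $p \succ_a w$, so that single target would succeed in the corresponding frugal bribery call, and conversely any successful frugal bribery call for a target $p \succ_a w$ yields an outcome that the agent strictly prefers. A secondary subtlety is confirming that the vulnerability constraints (which votes may be altered) transfer faithfully between the two formulations, and that in the $k$-bounded setting the price/budget encoding correctly enforces the cardinality constraint; both follow directly from the definitions once the target candidate is fixed.

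Since each of the polynomially many oracle calls uses the \textsc{Frugal-bribery} (respectively \textsc{Frugal-\$bribery}) oracle as a black box and the post-processing (taking a disjunction of the answers) is trivially polynomial-time computable, the overall procedure is a polynomial-time Turing reduction, as required. I expect the entire argument to be short, with the bulk of the writing devoted to the correctness equivalence rather than to any computation.
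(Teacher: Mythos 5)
Your proposal is correct and follows essentially the same route as the paper's proof: iterate over candidate targets (the paper loops over all distinguished candidates, you restrict to those the agent prefers to the current winner, which is a harmless refinement), invoke the \textsc{Frugal-bribery} oracle for each, and for the $k$-bounded variant encode the cardinality constraint via unit prices and budget $k$ in \textsc{Frugal-\$bribery}. Nothing further is needed.
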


\begin{proof}
 Given an algorithm for the \textsc{Frugal-bribery} problem, we iterate over all possible distinguished candidates to have an algorithm for the persuasion problem. 
 
 Given an algorithm for the \textsc{Frugal-\$bribery} problem, we iterate over all possible distinguished candidates and fix the price of the corresponding vulnerables to be one to have an algorithm for the $k$-persuasion problem.
\end{proof}

\section{Results for Unweighted Elections}

Now we present the results for unweighted elections. We begin with some easy observations that follow from known results. 

\begin{observation}\label{lem:frugalP}
 The \textsc{Frugal-bribery} problem is in \Pshort{} for the $k$-approval voting rule for any $k$, Bucklin, and plurality with runoff voting rules.
\end{observation}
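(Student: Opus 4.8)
The statement to prove is \Cref{lem:frugalP}, which asserts that \textsc{Frugal-bribery} is in \Pshort{} for the $k$-approval voting rule (for any $k$), the Bucklin voting rule, and the plurality with runoff voting rule. The plan is to exploit \Cref{prop:man_bri_con}, which tells us that \textsc{Frugal-bribery} reduces in polynomial time to \textsc{Coalitional-manipulation}. The crucial observation is that in the \textsc{Frugal-bribery} problem, once the distinguished candidate $p$ is fixed, the set of vulnerable votes is determined purely by the current winner $r(\succ)$: a vote is vulnerable precisely when it ranks $p$ above the current winner. Since the briber is free to modify \emph{all} and only the vulnerable votes at no cost, the \textsc{Frugal-bribery} instance is exactly a \textsc{Coalitional-manipulation} instance in which the vulnerable voters play the role of the manipulators and the non-vulnerable voters are the fixed truthful votes.

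The first step I would carry out is to make this reduction completely explicit: given a \textsc{Frugal-bribery} instance $(\succ, \CC, p)$, compute the winner $r(\succ)$ (polynomial time, as all rules considered have polynomial winner determination), identify the set $M$ of vulnerable votes as those $\succ_i$ with $p \succ_i r(\succ)$, and output the \textsc{Coalitional-manipulation} instance whose truthful profile is $(\succ_i)_{i \notin M}$, whose manipulator count is $|M|$, and whose distinguished candidate is $p$. I would note the one technical subtlety regarding tie-breaking: the definition of \textsc{Coalitional-manipulation} in the excerpt requires $p$ to win \emph{uniquely}, whereas \textsc{Frugal-bribery} as defined only asks for $p$ to ``win the election'' under the fixed tie-breaking rule, so I would check that the co-winner versus unique-winner distinction does not affect tractability for these particular rules (it does not, since the manipulation algorithms below adapt straightforwardly to either convention).

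The second step is to invoke the known polynomial-time algorithms for \textsc{Coalitional-manipulation} for exactly these three voting families. For $k$-approval, \textsc{Coalitional-manipulation} is solvable in polynomial time by the greedy ``least-used position'' strategy of Conitzer, Sandholm, and Lang, in which all manipulators place $p$ in an approved slot and then fill the remaining approved slots so as to minimize the maximum score among the non-$p$ candidates; this greedy allocation (or an equivalent flow/matching formulation) runs in polynomial time. For Bucklin and for plurality with runoff, \textsc{Coalitional-manipulation} is likewise known to be polynomial-time solvable. Composing these algorithms with the polynomial-time reduction yields the desired containment in \Pshort{}.

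The main obstacle, and the place where I would spend the most care, is verifying that the reduction of \Cref{prop:man_bri_con} genuinely preserves feasibility for the frugal setting rather than merely giving a one-directional implication. Specifically I must argue both directions: if the \textsc{Frugal-bribery} instance is a \textsc{Yes}-instance, then the modified vulnerable votes constitute a successful manipulating profile for $M$; and conversely, any successful manipulating profile for the $|M|$ manipulators corresponds to a legal frugal bribe, because every vote the algorithm is allowed to change is, by construction, vulnerable. This equivalence is clean because the briber incurs no cost and faces no budget in \textsc{Frugal-bribery}, so there is no further constraint to track beyond membership in $M$. Hence the only genuine content is the citation of the three manipulation algorithms and the bookkeeping around tie-breaking; the reduction itself is immediate, which is why the statement is phrased as an \emph{observation}.
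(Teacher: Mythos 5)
Your proposal is correct and follows essentially the same route as the paper: the paper's proof is precisely the two-line observation that \textsc{Frugal-bribery} reduces to \textsc{Coalitional-manipulation} via \Cref{prop:man_bri_con} (treating the vulnerable voters as manipulators), combined with the known polynomial-time coalitional-manipulation algorithms for $k$-approval, Bucklin, and plurality with runoff due to Xia et al. Your additional care about the two directions of the reduction and the unique-winner versus co-winner convention is sound but not something the paper spells out.
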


\begin{proof}
 The \textsc{Coalitional-manipulation} problem is in \Pshort{} for these voting rules~\cite{xia2009complexity}. Hence, the result follows from~\Cref{prop:man_bri_con}.
\end{proof}

\begin{observation}\label{lem:frugalPluP}
 The \textsc{Frugal-\$bribery} problem is in \Pshort{} for the plurality and veto voting rules.
\end{observation}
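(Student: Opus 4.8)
The plan is to exploit the fact that, by \Cref{prop:conn}, \textsc{Frugal-\$bribery} is a special case of \textsc{\$Bribery}, so one clean route is simply to invoke the known polynomial-time algorithms for \textsc{\$Bribery} under plurality and veto in unweighted elections~\cite{faliszewski2009hard}. To keep the argument self-contained and to expose the extra structure that frugality provides, however, I would give direct algorithms for both rules. The common idea is that only one position of each vote matters (the top for plurality, the bottom for veto), so a bribed vote is always set to its single most useful configuration, and the only real work is deciding \emph{which} vulnerable votes to buy, subject to the budget.

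For plurality, let $w$ be the current winner and $s(\cdot)$ the plurality scores. The only sensible way to change a bribed vulnerable vote is to move $p$ to the first position; this raises $s(p)$ by one and lowers by one the score of the candidate formerly on top, both favorable, so there is never a reason to set a bribed vote any other way. I would then enumerate the intended final score $t$ of $p$ (only $O(n)$ choices). For a fixed $t$, the number of votes to buy is exactly $t - s(p)$, and for every opponent $a$ whose score $s(a)$ exceeds the winning threshold determined by $t$ and the tie-breaking rule, at least $s(a) - (\text{threshold})$ of the bought votes must be ones currently topped by $a$. The minimum-cost way to satisfy these per-candidate lower bounds and then top up to $t - s(p)$ votes total is to take the cheapest vulnerable votes of each required type, a straightforward greedy/sorting computation; I accept $t$ if the resulting cost is within the budget $b$. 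Correctness follows because buying a vote topped by $a$ is the only operation that simultaneously lowers $a$ and raises $p$, so the lower-bound-then-fill selection is optimal for each target $t$.

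For veto the situation is more delicate, and this is where I expect the main obstacle. The key structural observation is that a vote ranking $p$ last can never be vulnerable: if $p$ is last then $w \succ p$, so the voter prefers $w$ to $p$. Hence no vulnerable vote vetoes $p$, and the veto count $v(p)$ is \emph{fixed} under frugal bribery. The problem therefore reduces to raising every opponent's veto count up to the threshold $v(p)$ (adjusted for tie-breaking), where each bought vulnerable vote contributes one veto that may be reassigned to any opponent. The complication is the side effect: reassigning a bought vote's veto also \emph{removes} a veto from the candidate $x_i$ it originally vetoed, which is harmful when $x_i$ is itself below the threshold. I would handle this either by a min-cost flow formulation (sources being the bought vulnerable votes carrying their prices, sinks being the per-opponent deficiencies $v(p) - v(a)$, with each vote netted against its original target $x_i$), or, more elementarily, by first arguing that an optimal solution never buys a vote whose original target is strictly below the threshold and then reducing to a capacitated cheapest-selection problem exactly as in the plurality case.

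The routine parts (computing scores, sorting vulnerable votes by price, and checking the budget) are clearly polynomial; the only genuinely non-trivial step is establishing optimality of the selection rule for veto in the presence of the veto-removal side effect, which the fixedness of $v(p)$ renders tractable. Finally I would verify the boundary cases (no vulnerable votes, or $p$ already winning) and note that the bounds hold for both uniform and nonuniform prices, since the algorithms use only the sorted order of prices.
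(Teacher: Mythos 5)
Your first sentence is, almost verbatim, the paper's entire proof: it observes via \Cref{prop:conn} that \textsc{Frugal-\$bribery} is a special case of \textsc{\$Bribery} and cites the known polynomial-time algorithms for \textsc{\$Bribery} under the plurality and veto rules. Everything after that is a correct but independent reconstruction of those cited algorithms, restricted to the vulnerable votes, so the bulk of your write-up is a genuinely different, self-contained route. The plurality greedy (enumerate $p$'s target score $t$, impose per-opponent lower bounds on how many votes topped by each over-threshold candidate must be bought, then fill up to $t-s(p)$ with the cheapest remaining vulnerable votes) is sound by a standard exchange argument, and your observation that a bribed vote should always place $p$ on top is the right dominance claim. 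For veto, the key structural point---no vulnerable vote vetoes $p$, so $v(p)$ is effectively fixed---is correct, and the min-cost flow formulation handles the veto-removal side effect properly; the ``more elementary'' alternative also goes through, but only after an explicit exchange argument (pair any purchase of a vote whose original veto target is deficient with a compensating redirection into that candidate, and short-circuit the two into a single cheaper purchase), which you should not leave implicit since without it the claim is not obvious. In short: the paper's route buys a two-line proof by citation; yours buys self-containment and makes explicit which structure frugality leaves intact, at the cost of essentially re-deriving the \textsc{\$Bribery} algorithms of the cited works.
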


\begin{proof}
 The \textsc{\$Bribery} problem is in \Pshort{} for the plurality~\cite{faliszewski2006complexity} and 
 veto~\cite{faliszewski2008nonuniform} voting rules. Hence, the result follows from~\Cref{prop:conn}.
\end{proof}

\begin{observation}\label{lem:frugalNPC}
 The \textsc{Frugal-\$bribery} problem is \NPC{} for Borda, maximin, Copeland, and STV voting rules.
\end{observation}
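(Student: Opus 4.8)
The plan is to derive this \NPC{}-ness essentially for free from the reduction chain already established in \Cref{prop:conn}, together with the known intractability of \textsc{Coalitional-manipulation} for these four rules. First I would argue membership in \NPshort{}: a certificate consists of the subset of vulnerable votes to be changed together with their replacement preferences, and one verifies in polynomial time both that the total price of the chosen votes is within the budget $b$ and that $p$ wins the resulting election, winner determination being polynomial for Borda, maximin, Copeland, and STV.

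For \NPshort{}-hardness, I would invoke the reduction \textsc{Coalitional-manipulation} $\le_\Pshort{}$ \textsc{Nonuniform-frugal-\$bribery} from \Cref{prop:conn}. Recall that this reduction fixes all manipulators' votes to the common order $\succ_f = p \succ \text{others}$, sets the budget to zero, assigns price zero to each manipulator vote and price one to every other vulnerable vote; under the (polynomially checkable, hence without loss of generality) assumption that $p$ does not already win when every manipulator casts $\succ_f$, each manipulator vote ranks $p$ above the current winner and is therefore $p$-vulnerable. Thus the only votes the briber can afford to change are exactly the manipulator votes, and the constructed \textsc{Frugal-\$bribery} instance is a \YES{}-instance if and only if the manipulators can make $p$ win, that is, if and only if the \textsc{Coalitional-manipulation} instance is a \YES{}-instance.

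It then remains only to cite the classical intractability of \textsc{Coalitional-manipulation}: it is \NPC{} for Borda, Copeland$^\alpha$, and maximin with a constant number ($\ge 2$) of manipulators, and for STV even with a single manipulator. Composing each of these with the reduction above yields \NPshort{}-hardness of \textsc{Frugal-\$bribery} for the respective rule, and together with membership in \NPshort{} this completes the proof.

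The main point to be careful about --- rather than a deep obstacle --- is the tie-breaking convention and the ``without loss of generality'' step. Since \textsc{Coalitional-manipulation} requires $p$ to be the \emph{unique} winner, I must confirm that the \textsc{Frugal-\$bribery} formulation (where $p$ is simply required to win) is consistent with this; discarding the trivially solvable instances in which $p$ already wins under the all-$\succ_f$ profile makes the equivalence go through. I would also double-check that the reduction introduces no spurious price-zero vulnerable votes that might grant the briber unintended power, which is precisely why assigning price one to all non-manipulator vulnerable votes and budget zero is the correct bookkeeping.
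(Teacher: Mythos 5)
Your proposal is correct and follows essentially the same route as the paper: the paper's proof of this observation simply cites the \NPC{}-ness of \textsc{Coalitional-manipulation} for these rules and invokes the reduction \textsc{Coalitional-manipulation} $\le_\Pshort{}$ \textsc{Nonuniform-frugal-\$bribery} from \Cref{prop:conn}. Your additional checks (membership in \NPshort{}, the price/budget bookkeeping, and discarding instances where $p$ already wins under the all-$\succ_f$ profile) are sound elaborations of details the paper leaves implicit.
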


\begin{proof}
 The \textsc{Coalitional-manipulation} problem is \NPC{} for the above voting rules. Hence, the result follows from~\Cref{prop:conn}.
\end{proof}

 We now present our main results. We begin with showing that the \textsc{Frugal-bribery} problem for the Borda voting rule and the \textsc{Frugal-\$bribery} problem for various scoring rules are \NPC{}. To this end, we reduce from the \PS problem, which is known to be \NPC{}~\cite{Yu:2004:MMT:1013651.1013680}. The \PS problem is defined as follows.

\defproblem{\PS}{$n$ integers $X_i, i\in[n]$ with $1\le X_i\le 2n$ for every $i\in[n]$ and $\sum_{i=1}^n X_i = n(n+1)$.}{Do there exist two permutations $\pi$ and $\sigma$ of $[n]$ such that $\pi(i)+\sigma(i)=X_i$ for every $i\in[n]?$}


We now prove that the \textsc{Frugal-bribery} problem is \NPC{} for the Borda voting rule, by a reduction from \PS. Our reduction is inspired by the reduction used by Davies et al.~\cite{davies2011complexity} and Betzler et al.~\cite{betzler2011unweighted} to prove \NP-completeness of the \textsc{Coalitional-manipulation} problem for the Borda voting rule.

\begin{theorem}\label{thm:frugalBordaNPC}
 The \textsc{Frugal-bribery} problem is \NPC{} for the Borda voting rule.
\end{theorem}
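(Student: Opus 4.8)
The plan is to reduce from \PS, following the spirit of the known \NP-hardness reduction for \textsc{Coalitional-manipulation} under Borda~\cite{davies2011complexity,betzler2011unweighted}, but engineered so that the ``manipulating'' votes arise naturally as \emph{vulnerable} votes rather than being given for free. First I would set up the candidate set to include the distinguished candidate $p$, a set of $n$ ``main'' candidates $c_1,\dots,c_n$ corresponding to the $n$ integers of the \PS instance, and possibly a few auxiliary/padding candidates used to control scores. The truthful profile is constructed from two parts: a block of fixed non-vulnerable votes whose sole purpose is to set the initial Borda scores of all candidates to prescribed values, and a set of $2$ (or more) votes that will turn out to be $p$-vulnerable with respect to $p$. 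Using the standard score-setting gadget (the analogue of \Cref{score_gen}, which lets us realize any target score vector with polynomially many additional votes), I would arrange the non-vulnerable block so that, under the current winner, $p$ loses but every vulnerable voter strictly prefers $p$ to the current winner --- this is what makes those votes vulnerable and hence eligible to be changed by the frugal briber.

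The key design constraint is the following: the briber is allowed to rewrite \emph{only} the vulnerable votes, and she may rewrite them completely (there is no swap-distance cost in \textsc{Frugal-bribery}), so effectively the vulnerable votes behave exactly like manipulator votes in the coalitional manipulation problem. Therefore the crux is to force the gap between $p$'s score and each $c_i$'s score to be realizable precisely when two permutations $\pi,\sigma$ with $\pi(i)+\sigma(i)=X_i$ exist. The standard trick is to have exactly two vulnerable votes; in each such vote, placing candidate $c_i$ at position giving it Borda score $n-\pi(i)$ (respectively $n-\sigma(i)$) contributes a controlled amount to $c_i$. I would calibrate the non-vulnerable scores so that $p$ ends up winning if and only if every $c_i$ receives a \emph{total} additional score of exactly $X_i$ (up to the fixed offset) from the two vulnerable votes, while $p$ is placed at the top of both vulnerable votes. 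Since each vulnerable vote is a genuine permutation of the candidates, the contributions to the $c_i$'s across the two votes are exactly two permutations of $\{1,\dots,n\}$, and the ``$p$ beats everyone'' condition translates into $\pi(i)+\sigma(i)=X_i$ for all $i$. This yields the forward and reverse directions of the reduction simultaneously.

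The steps I would carry out in order are: (i) verify membership in \NP{} (guess the rewritten vulnerable votes, check the winner in polynomial time); (ii) describe the candidate set and the \PS-to-profile construction, invoking \Cref{score_gen} to fix the non-vulnerable scores and to guarantee that precisely the intended votes are $p$-vulnerable (i.e.\ those voters rank $p$ above the current Borda winner); (iii) argue the forward direction --- given permutations $\pi,\sigma$ solving \PS, build the two rewritten vulnerable votes putting $p$ first and placing each $c_i$ to contribute $n-\pi(i)$ and $n-\sigma(i)$, and check $p$ becomes the unique Borda winner; (iv) argue the reverse direction --- any successful frugal bribery must place $p$ at the top of both vulnerable votes (otherwise $p$'s deficit cannot be overcome), and then the per-candidate score constraints force the two votes to encode permutations summing to $X_i$. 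The main obstacle I anticipate is step (ii)/(iv): I must ensure that the \emph{only} votes that are vulnerable are the intended two, and that vulnerability is preserved under the score-setting gadget (the padding votes must not themselves be vulnerable and must not inadvertently let the briber win by other means). Controlling vulnerability precisely --- since it depends on the \emph{current} winner, which in turn depends on the whole profile --- is the delicate part, and I would handle it by choosing the offsets so that the current winner is some fixed candidate strictly preferred over $p$ only in the two designated votes, while the remaining voters rank the current winner above $p$. Tie-breaking and the exact numerical offsets (ensuring integrality and the $1\le X_i\le 2n$ range maps correctly onto Borda positions) are routine calculations I would defer.
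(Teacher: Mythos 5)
Your proposal is correct and follows essentially the same route as the paper: a reduction from \PS\ in which exactly two votes are engineered to be $p$-vulnerable (so they play the role of the two manipulators in the Davies et al./Betzler et al.\ construction), the non-vulnerable block fixes the score of each main candidate to be $8n-X_i$-ish ahead of $p$, and the two rewritten votes encode the permutations $\pi,\sigma$ via the Borda scores they assign to the main candidates. The one delicate point you flag --- that the generic score-setting gadget cannot be invoked as a black box because every padding vote must itself rank the current winner above $p$ to remain non-vulnerable --- is exactly why the paper builds those $8n$ padding votes explicitly (each places $c$ ahead of $p$), which is the resolution you sketch.
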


\begin{proof}
 The problem is clearly in \NPshort{}. To show \NPshort{}-hardness, we reduce an arbitrary instance of the \PS problem to the \textsc{Frugal-bribery} problem for the Borda voting rule. Let $(X_1, \ldots, X_n)$ be an instance of the \PS problem. Without loss of generality, let us assume that $n$ is an odd integer -- if $n$ is an even integer, then we consider the instance $(X_1, \ldots, X_n, X_{n+1}=2(n+1))$ which is clearly equivalent to the instance $(X_1, \ldots, X_n).$
 
 We define a \textsc{Frugal-bribery} instance $(\CC, \PP, p)$ as follows. The candidate set is: 
 
 $$\mathcal{C} = \XX \uplus D\uplus \{p,c\}, \text{ where } \XX =  \{\xxx_i: i\in[n]\} \text{ and }|D| = 3n - 1$$ 
 
 Note that the total number of candidates is $4n+1$, and therefore the Borda score of a candidate placed at the top position is $4n$.
 
Before describing the votes, we give an informal overview of how the reduction will proceed. The election that we define will consist of exactly two vulnerable votes. Note that when placed at the top position in these two votes, the distinguished candidate $p$ gets a score of $8n$ ($4n$ from each vulnerable vote). We will then add non-vulnerable votes, which will be designed to ensure that, among them, the score of $\xxx_i$ is $8n-X_i$ more than the score of the candidate $p$. Using the ``dummy candidates'', we will also be able to ensure that the candidates $\xxx_i$ receive (without loss of generality) scores between $1$ and $n$ from the modified vulnerable votes. 

Now suppose these two vulnerable votes can be modified to make $p$ win the election. Let $s_1$ and $s_2$ be the scores that $\xxx_i$ obtains from these altered vulnerable votes. It is clear that for $p$ to emerge as a winner, $s_1+s_2$ must be at most $X_i$. Since the Borda scores for the candidates in \XX range from $1$ to $n$ in the altered vulnerable votes, the total Borda score that all the candidates in \XX can accumulate from two altered vulnerable votes is $n(n+1)$. On the other hand, since the sum of the $X_i$'s is also $n(n+1)$, it turns out that $s_1+s_2$ must in fact be equal to $X_i$ for the candidate $p$ to win. From this point, it is straightforward to see how the permutations $\sigma$ and $\pi$ can be inferred from the modified vulnerable votes: $\sigma(i)$ is given by the score of the candidate $\xxx_i$ from the first vote, while $\pi(i)$ is the score of the candidate $\xxx_i$ from the second vote. These functions turn out to be permutations because these $n$ candidates receive $n$ distinct scores from these votes. 
 
We are now ready to describe the construction formally. We remark that instead of $8n-X_i$, as described above, we will maintain a score difference of either $8n-X_i$ or $8n-X_i-1$ depending on whether $X_i$ is even or odd respectively --- this is a minor technicality that comes from the manner in which the votes are constructed and does not affect the overall spirit of the reduction. 

Let us fix any arbitrary order $\succ_f$ among the candidates in $\XX \uplus D.$ For any subset $A\subset \XX \uplus D,$ let $\overrightarrow{A}$ be the ordering among the candidates in $A$ as defined in $\succ_f$ and $\overleftarrow{A}$ the reverse order of $\overrightarrow{A}$. For each $i\in[n]$, we add two votes $v_i^j$ and $v_i^{j^\pr}$ as follows for every $j\in[4]$. Let $\el$ denote $|D| = 3n-1$. Also, for $d\in D$, let $D_{i}, D_{\nfrac{\el}{2}}\subset D\setminus\{d\}$ be such that: $$|D_{i}| = \nfrac{\el}{2}+n+1-\lceil\nfrac{X_i}{2}\rceil \mbox{ and } |D_{\nfrac{\el}{2}}| = \nfrac{\el}{2}.$$

 \[ v_i^j : 
 \begin{cases}
 c \suc p \suc d \suc \overrightarrow{\CC\setminus(\{d, c, p, \xxx_i\}\uplus D_{i})} \suc \xxx_i \suc  \overrightarrow{D_{i}} & \text{for } 1\le j\le 2\\
 \xxx_i \suc \overleftarrow{D_{i}} \suc \overleftarrow{\CC\setminus(\{d, c, p, \xxx_i\}\uplus D_{i})} \suc c \suc p \suc d & \text{for } 3\le j\le 4
 \end{cases}
 \]
 
 \[ v_i^{j^\pr} : 
 \begin{cases}
 c \suc p \suc d \suc \overrightarrow{\CC\setminus(\{d, c, p, \xxx_i\}\uplus D_{\nfrac{\el}{2}})} \suc \xxx_i \suc  \overrightarrow{D_{\nfrac{\el}{2}}} & \text{for } 1\le j^\pr \le 2\\
 \xxx_i \suc \overleftarrow{D_{\nfrac{\el}{2}}} \suc \overleftarrow{\CC\setminus(\{d, c, p, \xxx_i\}\uplus D_{\nfrac{\el}{2}})} \suc c \suc p \suc d & \text{for } 3\le j^\pr \le 4
 \end{cases}
 \]
 
It is convenient to view the votes corresponding to $j = 3,4$ as a near-reversal of the votes in $j = 1,2$ (except for candidates $c,d$ and $\xxx_i$).  Let $\PP_1 = \{v_i^j, v_i^{j^\pr} : i\in[n], j\in[4]\}.$ Since there are $8n$ votes in all, and $c$ always appears immediately before $p$, it follows that the score of $c$ is exactly $8n$ more than the score of the candidate $p$ in $\PP_1$. 

We also observe that the score of the candidate $\xxx_i$ is exactly $2(\el+n+1)-X_i = 8n-X_i$ more than the score of the candidate $p$ in $\PP_1$ for every $i\in[n]$ such that $X_i$ is an even integer. On the other hand, the score of the candidate $\xxx_i$ is exactly $2(\el+n+1)-X_i-1 = 8n-X_i-1$ more than the score of the candidate $p$ in $\PP_1$ for every $i\in[n]$ such that $X_i$ is an odd integer. Note that for $i^\pr \in [n] \setminus \{i\}$, $p$ and $\xxx_i$ receive the same Borda score from the votes $v_{i^\pr}^j$ and $v_{i^\pr}^{j^\pr}$ (where $j,j^\pr \in [4]$).

We now add the following two votes $\mu_1$ and $\mu_2$. 
 
\[ \mu_1 : p \succ c \succ \text{others} \]
\[ \mu_2 : p \succ c \succ \text{others} \]

 Let $\PP = \PP_1 \uplus \{\mu_1, \mu_2\}, \XX^o = \{\xxx_i : i\in[n], X_i \text{ is odd}\},$ and $\XX^e = \XX\setminus\XX^o.$ We recall that the distinguished candidate is $p.$ The tie-breaking rule is according to the order $\XX^o \suc p \succ \text{others}.$ We claim that the \textsc{Frugal-bribery} instance $(\CC, \PP, p)$ is equivalent to the \PS instance $(X_1, \ldots, X_n).$
 
 In the forward direction, suppose there exist two permutations $\pi$ and $\sigma$ of $[n]$ such that $\pi(i) + \sigma(i) = X_i$ for every $i\in[n].$ We replace the votes $\mu_1$ and $\mu_2$ with respectively $\mu_1^\pr$ and $\mu_2^\pr$ as follows.
 
 \[ \mu_1^\pr : p \suc D \suc \xxx_{\pi^{-1}(n)} \suc \xxx_{\pi^{-1}(n-1)} \suc \cdots \suc \xxx_{\pi^{-1}(1)} \suc c \]
 \[ \mu_2^\pr : p \suc D \suc \xxx_{\sigma^{-1}(n)} \suc \xxx_{\sigma^{-1}(n-1)} \suc \cdots \suc \xxx_{\sigma^{-1}(1)} \suc c \]
 
 We observe that, the candidates $c$ and every $\xxx\in\XX^e$ receive same score as $p$, every candidate $\xxx^\pr\in\XX^o$ receives $1$ score less than $p$, and every candidate in $D$ receives less score than $p$ in $\PP_1\uplus\{\mu_1^\pr, \mu_2^\pr\}.$ Hence $p$ wins in $\PP_1\uplus\{\mu_1^\pr, \mu_2^\pr\}$ due to the tie-breaking rule. Thus $(\CC, \PP, p)$ is a \YES instance of \textsc{Frugal-bribery}.
 
 To prove the other direction, suppose the \textsc{Frugal-bribery} instance is a \YES{} instance. Notice that the only vulnerable votes are $\mu_1$ and $\mu_2.$ Let $\mu_1^\pr$ and $\mu_2^\pr$ be two votes such that the candidate $p$ wins in the profile $\PP_1\uplus\{\mu_1^\pr, \mu_2^\pr\}.$ We assume, without loss of generality, that candidate $p$ is placed at the first position in both $\mu_1^\pr$ and $\mu_2^\pr.$ Since $c$ receives $8n$ scores more than $p$ in $\PP_1,$ $c$ must be placed at the last position in both $\mu_1^\pr$ and $\mu_2^\pr$ since otherwise $p$ cannot win in $\PP_1\uplus\{\mu_1^\pr, \mu_2^\pr\}.$ We also assume, without loss of generality, that every candidate in $D$ is preferred over every candidate in \XX since otherwise, if $\xxx \suc d$ in either $\mu_1^\pr$ or $\mu_2^\pr$ for some $\xxx\in\XX$ and $d\in D,$ then we can exchange the positions of \xxx and $d$ and $p$ continues to win since no candidate in $D$ receives more score than $p$ in $\PP_1.$ Hence, every $\xxx\in\XX$ receives some score between $1$ and $n$ in both the $\mu_1^\pr$ and $\mu_2^\pr.$ Let us define two permutations $\pi$ and $\sigma$ of $[n]$ as follows. For every $i\in[n]$, we define $\pi(i)$ and $\sigma(i)$ to be the scores the candidate $\xxx_i$ receives in $\mu_1^\pr$ and $\mu_2^\pr$ respectively. The fact that $\pi$ and $\sigma$, as defined above, is indeed a permutation of $[n]$ follows from the structure of the votes $\mu_1^\pr, \mu_2^\pr$ and the Borda score vector. Since $p$ wins in $\PP_1\uplus\{\mu_1^\pr, \mu_2^\pr\},$ we have $\pi(i) + \sigma(i) \le X_i.$ We now have the following.
 
 \[ n(n+1) = \sum_{i=1}^n (\pi(i) + \sigma(i)) \le \sum_{i=1}^n X_i = n(n+1) \]
 
 Hence, we have $\pi(i) + \sigma(i) = X_i$ for every $i\in[n]$ and thus $(X_1, \ldots, X_n)$ is a \YES instance of \PS.
\end{proof}

 We will use \Cref{score_gen} in subsequent proofs.
 
\begin{theorem}\label{thm:frugalKappNPC}
 The \textsc{Frugal-\$bribery} problem is \NPC{} for the $k$-approval voting rule for any constant $k\ge 5$, even if the price of every vulnerable vote is either $1$ or $\infty$.
\end{theorem}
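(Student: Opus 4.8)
The plan is to reduce from \PS, reusing the overall architecture of the Borda reduction in \Cref{thm:frugalBordaNPC} but adapting it to the much coarser structure of $k$-approval scores. The key difference is that under $k$-approval a vote contributes only a $0/1$ indicator of ``top-$k$ membership'' rather than a graded score, so I cannot encode a permutation directly in a single altered vote the way the Borda proof does. Instead I would exploit the budget: I will create exactly two vulnerable votes per ``slot'' so that the only affordable bribery (within the budget) forces the briber to reposition the $\xxx_i$ candidates in a way that realizes the two permutations $\pi,\sigma$.

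Concretely, I would first set up a candidate set $\CC = \XX \uplus D \uplus \{p\} \uplus (\text{a few guard candidates})$, where $\XX=\{\xxx_i : i\in[n]\}$ encodes the \PS variables and $D$ is a pool of dummy candidates used to pad approval scores. Using \Cref{score_gen} I would install a block of non-vulnerable votes so that, before any bribery, the score of each $\xxx_i$ exceeds the score of $p$ by an amount tied to $X_i$, while guard candidates sit just below $p$ to enforce a tight tie-breaking margin. The vulnerable votes (those in which $p$ is currently ranked above the winner) are precisely the votes I allow to be modified; I would give each a price of $1$ and all other votes the price $\infty$, and set the budget $b$ equal to the number of vulnerable votes I must change. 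The requirement $k\ge 5$ comes from needing enough ``approved'' positions in a modified vulnerable vote to simultaneously (i) approve $p$, (ii) withdraw approval from the correct $\xxx_i$'s, and (iii) absorb the remaining approvals into dummies of $D$ so they cannot accidentally lift a competitor above $p$; five slots is what is needed to carry out this bookkeeping cleanly.

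The heart of the argument is the equivalence. In the forward direction, given permutations $\pi,\sigma$ with $\pi(i)+\sigma(i)=X_i$, I would exhibit an explicit rebribing of the two designated vulnerable votes per block that decreases each $\xxx_i$'s surplus over $p$ by exactly $X_i$, so that $p$ ties or beats every $\xxx_i$ and wins by tie-breaking; the parity/guard candidates ensure no dummy overtakes $p$. In the reverse direction, I would argue that, because of the budget and the $1/\infty$ price structure, any successful bribery can only touch the intended vulnerable votes, and that to pull every $\xxx_i$ below $p$ the total ``approval mass'' removed from $\XX$ across the two votes is bounded by $n(n+1)$; matching this against $\sum_i X_i = n(n+1)$ forces the per-candidate reduction to be exactly $X_i$, from which the two permutations are read off.

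The main obstacle I anticipate is that $k$-approval gives only binary information per vote, so ``reducing $\xxx_i$'s surplus by exactly $X_i$'' cannot be done in one vote; I must spread the decrement across several vulnerable votes and guarantee that the only budget-feasible way to achieve a net decrement of $X_i$ for each $i$ simultaneously corresponds to a genuine pair of permutations rather than some fractional or uneven assignment. Making this counting argument airtight — in particular ruling out ``cheating'' solutions that over-reduce some $\xxx_i$ while under-reducing another, and controlling the dummy candidates' scores so none sneaks above $p$ — is where the careful use of \Cref{score_gen} and the precise choice of prices, budget, and tie-breaking order will be essential. The analogous constructions for $k$-veto (\Cref{thm:frugalKvetoNPC}), for the broad class of scoring rules (\Cref{thm:frugalScrNPC}), and the uniform-price Borda hardness (\Cref{thm:uniformBordaNPC}) would follow the same template with the scoring vector's gap structure substituted appropriately.
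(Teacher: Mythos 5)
Your proposal diverges from the paper's proof in its very starting point: the paper reduces from X3C, not from \PS. The paper's construction is much simpler than what you outline. For each set $S_i$ it creates one vulnerable vote $v_i : p \succ q \succ S_i \succ D \succ \text{others}$ (so $p$, $q$, and the three elements of $S_i$ occupy the top $5$ positions, which is exactly where the requirement $k\ge 5$ comes from), installs via \Cref{score_gen} a block of non-vulnerable votes so that the current winner $q$ leads $p$ by $\nfrac{|U|}{3}$ and every $x\in U$ leads $p$ by $1$, prices each $v_i$ at $1$, and sets the budget to $\nfrac{|U|}{3}$. Changing $v_i$ to $p \succ D \succ \text{others}$ drops both $q$ and the elements of $S_i$ out of the approved block, so a feasible bribery succeeds iff the changed votes correspond to an exact cover.

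The gap in your approach is not cosmetic: the \PS{}/permutation machinery does not survive the passage to binary approval scores, and your proposed repair does not close the hole. First, ``exactly two vulnerable votes per slot'' cannot work, since two $k$-approval votes can lower a candidate's score by at most $2$, while $X_i$ may be as large as $2n$; so you are forced, as you concede, to spread the decrement over many votes. But once the decrement is spread out, the global counting argument you invoke --- total approval mass removed from $\XX$ equals $\sum_i X_i = n(n+1)$ --- only constrains the \emph{sum} of the per-candidate decrements, not their individual values, and in particular does not rule out over-reducing one $\xxx_i$ while under-reducing another unless every $\xxx_i$'s surplus is \emph{exactly} $X_i$ and the budget is exactly tight; even then, nothing in a collection of $0/1$ removals recovers the requirement that the removals be organized as two \emph{permutations} (i.e., that each altered vote assign $n$ distinct values to the $n$ candidates), which is the entire content of \PS. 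What remains after discarding the permutation structure is precisely a covering/packing condition on which subsets of $\XX$ lose approvals in which altered votes --- which is why the natural (and the paper's actual) source of hardness here is X3C rather than \PS. You should restructure the reduction around a covering problem; the \PS{} template should be reserved for rules, like Borda, whose score vectors have the graded gap structure that \Cref{thm:frugalScrNPC} isolates.
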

\begin{proof}
 The problem is clearly in \NPshort{}. To show \NPshort{}-hardness, we reduce an arbitrary instance of X3C to \textsc{Frugal-\$bribery}. 
 Let $(U, \{S_1, \dots, S_t\})$ be an instance of X3C. We define a \textsc{Frugal-\$bribery} instance as follows. 
 The candidate set is $\mathcal{C} = U\uplus D\uplus \{p,q\}$, where $|D|=k-1$. For each $S_i, 1\le i\le t$, we add a vote $v_i$ as follows.
 \[ v_i : \underbrace{p \succ q \succ S_i}_{5 \text{ candidates}} \succ D \succ \text{others} \]
 By \Cref{score_gen}, we can add $poly(|U|)$ many additional votes to ensure the following scores (denoted by $s(\cdot)$).
 \[ s(q) = s(p) + \nfrac{|U|}{3}, s(x) = s(p) + 1, \forall x\in U,\]
 \[ s(d) < s(p) - \nfrac{|U|}{3}, \forall d\in D \] 
 The tie-breaking rule is ``$p \succ \text{others}$''. The winner is $q$. The distinguished candidate is $p$ and 
 thus all the votes in $\{v_i : 1\le i\le t\}$ are vulnerable. The price of every $v_i$ is $1$ and the price of every other vulnerable vote is $\infty$. The budget is $\nfrac{|U|}{3}$. We claim that the two instances are equivalent. Suppose there exists an index set $I\subseteq [t]$ 
 with $|I|=\nfrac{|U|}{3}$ such that $\uplus_{i\in I} S_i = U$. We replace the votes $v_i$ with $v_i^{\prime}, i\in I,$ which are defined as follows.
 \[ v_i^{\prime} : \underbrace{p \succ D}_{k \text{ candidates}} \succ \text{others} \]
 This makes the score of $p$ not less than the score of any other candidate and thus $p$ wins. 
 
To prove the result in the other direction, suppose the \textsc{Frugal-\$bribery} instance is a \YES{} instance. Then notice that there will be $\nfrac{|U|}{3}$ votes in $\{v_i : 1\le i\le t\}$ 
 where the candidate $q$ should not be placed within the top $k$ positions since $s(p) = s(q) - \nfrac{|U|}{3}$ and the budget is $\nfrac{|U|}{3}$. 
 We claim that the $S_i$'s corresponding to the $v_i$'s that have been changed must form an exact set cover. Indeed, otherwise, there will be a 
 candidate $x\in U$, whose score never decreases which contradicts the fact that $p$ wins the election since $s(p) = s(x)-1$.
\end{proof}
 We next present a similar result for the $k$-veto voting rule.
\begin{theorem}\label{thm:frugalKvetoNPC}
 The \textsc{Frugal-\$bribery} problem is \NPC{} for the $k$-veto voting rule for any constant $k\ge3$, even if the price of every vulnerable vote is either $1$ or $\infty$.
\end{theorem}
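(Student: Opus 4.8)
The plan is to prove NP-completeness of \textsc{Frugal-\$bribery} for the $k$-veto voting rule ($k \ge 3$) by mirroring the strategy used in the proof of \Cref{thm:frugalKappNPC}, again reducing from X3C. Membership in \NPshort{} is immediate: given a set of vulnerable votes to change, together with their replacement rankings, one can verify in polynomial time that the total price is within budget and that $p$ wins. The main work is the hardness reduction.

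First I would set up the candidate set as $\mathcal{C} = U \uplus D \uplus \{p, q\}$ for a suitably sized dummy set $D$ (for $k$-veto the natural choice is $|D| = k-3$ or similar, chosen so that the ``bottom $k$'' positions of a vote can be arranged to accommodate exactly the three elements of $S_i$ together with the dummies). The key structural idea for $k$-veto is dual to the $k$-approval case: instead of pushing $q$ out of the top $k$ positions, the briber will place the three elements of a chosen $S_i$ (plus dummies) into the bottom $k$ vetoed positions, thereby lowering their scores. For each set $S_i$ I would introduce a vulnerable vote $v_i$ in which, crucially, the three elements of $S_i$ are \emph{not} currently in the bottom $k$ positions (so that changing $v_i$ can veto them), while $q$ currently escapes a veto that a modified vote would instead impose in a way that helps $p$. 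Each such $v_i$ gets price $1$, and all other vulnerable votes get price $\infty$; the budget is set to $\nfrac{|U|}{3}$.

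Next I would invoke \Cref{score_gen} to add $poly(|U|)$ complete (non-vulnerable) votes that calibrate the scores so that, before any bribery, $q$ is the winner and the score gaps are exactly what an exact cover can close. Concretely, I would arrange that the score of $p$ is short of $q$ by $\nfrac{|U|}{3}$, that each $x \in U$ leads $p$ by exactly $1$ (so every element must be vetoed at least once among the changed votes), and that every dummy is far below $p$. The tie-breaking is set in favor of $p$. The equivalence argument then runs exactly as in \Cref{thm:frugalKappNPC}: an exact cover $I$ lets the briber change the $\nfrac{|U|}{3}$ votes $\{v_i : i \in I\}$ so that each $x \in U$ is vetoed exactly once and $p$ catches up to become the winner; conversely, a successful bribery within budget $\nfrac{|U|}{3}$ must veto every $x \in U$ at least once using at most $\nfrac{|U|}{3}$ price-$1$ votes, and since each vote can veto at most three elements of $U$, a counting argument forces the chosen sets to form an exact cover.

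The main obstacle I anticipate is the careful positional bookkeeping: for $k$-veto the briber controls only which $k$ candidates are vetoed in each changed vote, so I must ensure (i) that in the original vote $v_i$ the candidate $q$ is positioned so that the unmodified profile indeed makes $q$ win, (ii) that a modified $v_i$ can simultaneously remove a veto from $q$ (or otherwise not penalize $p$) while imposing vetoes on exactly the three elements of $S_i$, and (iii) that the dummy set $D$ has the right cardinality for $k \ge 3$ so that the bottom $k$ slots can always be filled by $S_i \uplus D$ without wasting vetoes on elements of $U$ outside $S_i$. Verifying that \Cref{score_gen} can realize the prescribed score vector (which requires the differences to have the correct parity and magnitude) is routine but must be checked. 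Once these positional constraints are pinned down, the counting argument for the exact-cover equivalence is the same as before, so I expect the bulk of the novelty to lie purely in the construction of the $v_i$ and the dummy padding rather than in the correctness proof itself.
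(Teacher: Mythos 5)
There is a genuine gap in the reverse direction of your reduction, and it stems from a structural asymmetry of $k$-veto that your construction ignores. When the briber replaces a vulnerable vote $v_i$ by an arbitrary ranking $v_i'$, the only score changes available are: a candidate \emph{gains} one point if it is promoted out of the bottom $k$ positions (which is only possible for candidates that were in the bottom $k$ of the \emph{original} $v_i$), and a candidate \emph{loses} one point if it is demoted into the bottom $k$ (which is possible for \emph{any} candidate whatsoever, since the replacement ranking is unconstrained). Your encoding puts the set $S_i$ in the ``demotion'' direction: you want the changed vote $v_i'$ to veto precisely the three elements of $S_i$. But nothing forces this --- the briber may just as well demote any other three elements of $U$ into the veto zone of $v_i'$. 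Consequently, your reduced instance is a \YES{} instance whenever \emph{any} partition of $U$ into $\nfrac{|U|}{3}$ triples can be realized across $\nfrac{|U|}{3}$ changed votes (together with the required vetoes of $q$), irrespective of whether those triples belong to the set system; the counting argument at the end of your proposal only yields that the vetoed triples partition $U$, not that they are sets $S_i$. There is also a secondary arithmetic problem: your modified vote must veto $q$ together with the three elements of $S_i$, i.e.\ four candidates, which is impossible for $k=3$ even though the theorem claims $k\ge 3$.

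The paper's proof works precisely because it encodes $S_i$ in the constrained ``promotion'' direction. The original vote $v_i$ has $S_i$ together with a padding set $Q$ (with $|Q|=k-3$) occupying the bottom $k$ positions, and three auxiliary candidates $a_1,a_2,a_3$ are calibrated (via \Cref{score_gen}) so that each must be demoted in $\nfrac{|U|}{3}$ additional votes for $p$ to win, while each $q\in Q$ is calibrated so that it can never be promoted. Thus in every changed vote the bottom $k$ positions are forced to be exactly $\{a_1,a_2,a_3\}\uplus Q$, leaving no room for any element of $U$; the elements promoted out are therefore exactly those of $S_i$, and since each $x\in U$ satisfies $s(x)=s(p)-2$ it can afford to be promoted at most once. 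Disjointness plus the count $\nfrac{|U|}{3}\cdot 3=|U|$ then yields an exact cover. If you want to repair your proposal, you should flip the roles: place $S_i$ in the bottom $k$ of the original $v_i$ and introduce auxiliary candidates whose forced demotion saturates the veto zone of every changed vote.
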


\begin{proof}
 The problem is clearly in \NPshort{}. To show \NPshort{}-hardness, we reduce an arbitrary instance of X3C to \textsc{Frugal-\$bribery}. Let $(U, \{S_1,S_2, \dots, S_t\})$ be any instance of X3C. We define a \textsc{Frugal-\$bribery} instance as follows. The candidate set is $\mathcal{C} = U\uplus Q\uplus \{p, a_1, a_2, a_3, d\}$, where $|Q|=k-3$. For each $S_i, 1\le i\le t$, we add a vote $v_i$ as follows.
 \[ v_i : p \succ \text{others} \succ \underbrace{S_i \succ Q}_{k \text{ candidates}} \]
 By \Cref{score_gen}, we can add $poly(|U|)$ many additional votes to ensure following scores (denoted by $s(\cdot)$).
 \[ s(p) > s(d), s(p) = s(x) + 2, \forall x\in U, s(p) = s(q) + 1, \forall q\in Q,\]
 \[ s(p) = s(a_i) - \nfrac{|U|}{3} + 1, \forall i=1, 2, 3 \] 
 The tie-breaking rule is ``$a_1 \succ \cdots \succ p$''. The winner is $a_1$.
 The distinguished candidate is $p$ and thus all the votes in $\{v_i : 1\le i\le t\}$ are vulnerable. 
 The price of every $v_i$ is one and the price of any other vote is $\infty$. The budget is $\nfrac{|U|}{3}$.

 We claim that the two instances are equivalent. In the forward direction, suppose there exists an index set $I\subseteq \{1,\dots,t\}$ 
 with $|I|=\nfrac{|U|}{3}$ such that $\uplus_{i\in I} S_i = U$. We replace the votes $v_i$ with $v_i^{\prime}, i\in I,$ which are defined as follows.
 \[ v_i^{\prime} : \text{others} \succ \underbrace{a_1 \succ a_2 \succ a_3 \succ Q}_{k \text{ candidates}} \]
 The score of each $a_i$ decreases by $\nfrac{|U|}{3}$ and their final scores are $s(p)-1$, since the score of $p$ is not affected 
 by this change. Also the final score of each $x\in U$ is $s(p)-1$ since $I$ forms an exact set cover. This makes $p$ win the election.
 
To prove the result in the other direction, suppose the \textsc{Frugal-\$bribery} instance is a \YES{} instance. Then, notice that there will be exactly $\nfrac{|U|}{3}$ votes in $v_i, 1\le i\le t$, where every $a_j, j=1, 2, 3$, should come in the last $k$ positions since $s(p) = s(a_j) - \nfrac{|U|}{3} + 1$ and the budget is $\nfrac{|U|}{3}$. Notice that candidates in $Q$ must not be placed within top $m-k$ positions since $s(p) = s(q) + 1$, for every $q\in Q$. Hence, in the votes that have been changed, $a_1, a_2, a_3$ and all the candidates in $Q$ must occupy the last $k$ positions. We claim that the $S_i$'s corresponding to the $v_i$'s that have been changed must form an exact set cover. If not, then, there must exist a candidate $x\in U$ and two votes $v_i$ and $v_j$ such that, both $v_i$ and $v_j$ have been replaced by $v_i^{\prime} \ne v_i$ and $v_j^{\prime} \ne v_j$ and the candidate $x$ was present within the last $k$ positions in both $v_i$ and $v_j$. This makes the score of $x$ at least the score of $p$ which contradicts the fact that $p$ wins.
\end{proof}

 However, we show the existence of a polynomial time algorithm for the \textsc{Frugal-\$bribery} problem for the $k$-approval, Bucklin, and plurality with runoff voting rules, when the budget is a constant. The result below follows from the existence of a polynomial time algorithm for the \textsc{Coalitional-manipulation} problem for these voting rules for a constant number of manipulators~\cite{xia2009complexity}.
\begin{theorem}\label{thm:frugalKappP}
 The \textsc{Frugal-\$bribery} problem is in \Pshort{} for the $k$-approval, Bucklin, and plurality with runoff voting rules, if the budget is a constant.
\end{theorem}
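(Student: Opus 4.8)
The plan is to leverage the budget being a constant to reduce the \textsc{Frugal-\$bribery} problem to the \textsc{Coalitional-manipulation} problem with a constant number of manipulators, and then invoke the known polynomial-time tractability of the latter for these voting rules. The key structural observation is that if the budget $b$ is a constant, then any successful frugal bribery changes at most $b$ votes, because every vulnerable vote has a price of at least $1$ (recall prices are in $\mathbb{N}\cup\{\infty\}$, and a vote with price $0$ costs nothing so we may assume prices are positive integers, or handle zero-price votes separately by always bribing them). Thus the number of votes the briber actually modifies is bounded by the constant $b$.

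First I would iterate over all subsets $T$ of the vulnerable votes of size at most $b$ whose total price is within the budget. Since only vulnerable votes can be changed and we change at most $b$ of them, and $b$ is a constant, there are at most $\binom{n}{b} = n^{O(1)}$ such subsets, which is polynomial. For each candidate subset $T$, the remaining votes (all non-vulnerable votes together with the unchosen vulnerable votes) are held fixed, and the question becomes: can the $|T| \le b$ chosen votes be re-cast so that $p$ wins? This is precisely an instance of \textsc{Coalitional-manipulation} with $|T|$ manipulators and the fixed votes as the truthful profile. By the result of Xia et al.~\cite{xia2009complexity}, \textsc{Coalitional-manipulation} is solvable in polynomial time for the $k$-approval, Bucklin, and plurality with runoff voting rules when the number of manipulators is bounded by a constant. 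I would call this subroutine for each $T$ and answer \YES{} if any call succeeds.

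The correctness argument in the forward direction is immediate: a successful bribery identifies a set $T$ of modified vulnerable votes within budget, and the modified votes witness a successful coalitional manipulation for that $T$. In the reverse direction, if some subset $T$ within budget yields a \YES{} for coalitional manipulation, the manipulating votes constitute a valid frugal bribery. One technical point to handle carefully is the treatment of vulnerable votes with price $0$: these can always be bribed for free and do not count against the constant budget bound on $|T|$, so the naive ``at most $b$ modified votes'' claim could fail. The clean fix is to observe that we may always pre-bribe all zero-price vulnerable votes (folding them, too, into the manipulator coalition), but this could make the coalition size non-constant.

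The main obstacle, then, is exactly this zero-price subtlety, which threatens the constant bound on the coalition size needed to apply~\cite{xia2009complexity}. The resolution I expect to use is that in the definition of \textsc{Frugal-\$bribery} we can assume without loss of generality that prices are positive (a zero-price vulnerable vote is simply never a liability, but to keep the coalition small we instead argue that zero-price votes, being vulnerable, already prefer $p$ to the current winner and can be grouped with the fixed truthful votes in a way that only helps $p$); alternatively, one charges a minimum price of $1$ per bribed vote so that $|T| \le b$ holds genuinely. I would make this precise by noting that since all of $T$'s votes are to be re-cast anyway, and the \textsc{Coalitional-manipulation} algorithm is exact, enumerating subsets of bounded size suffices, giving overall running time $n^{O(b)} \cdot \mathrm{poly}(n,m)$, which is polynomial for constant $b$. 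This establishes membership in \Pshort{} for all three voting rules.
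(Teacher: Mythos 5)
Your overall skeleton matches the paper's: enumerate the $O(n^b)$ choices of which vulnerable votes to change, and solve each resulting instance as a \textsc{Coalitional-manipulation} problem. However, there is a genuine gap in your handling of zero-price vulnerable votes, and you yourself flag it without actually closing it. None of your three proposed fixes works. Assuming prices are positive ``without loss of generality'' is not valid: the problem definition explicitly allows prices in $\mathbb{N}\cup\{\infty\}$ with $0\in\mathbb{N}$, and zero-price votes are essential elsewhere in the chapter (e.g., the reduction from \textsc{Coalitional-manipulation} in \Cref{prop:conn} assigns price zero to the manipulators' votes). Charging a minimum price of $1$ per bribed vote changes the problem. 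And grouping zero-price vulnerable votes with the fixed truthful votes ``in a way that only helps $p$'' is false: a vulnerable vote merely prefers $p$ to the current winner, which does not mean it is favorable to $p$ under the rule at hand --- under $k$-approval such a vote may not approve $p$ at all, so leaving it unchanged can be exactly what prevents $p$ from winning, and changing it is free.

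The correct resolution, which is the one the paper uses, exploits a stronger fact than the one you invoke: for $k$-approval, Bucklin, and plurality with runoff, \textsc{Coalitional-manipulation} is in \Pshort{} for an \emph{arbitrary} number of manipulators (this is the content of \Cref{lem:frugalP} via~\cite{xia2009complexity}), not just for a constant number. One may therefore assume without loss of generality that \emph{all} zero-price vulnerable votes are changed --- including a vote in the coalition is never harmful, since a manipulator can always re-cast her original vote --- and only the votes of nonzero price are subject to the budget, so at most $b$ of them are changed. Enumerating the $O(n^b)$ subsets of nonzero-price vulnerable votes and solving each instance of \textsc{Coalitional-manipulation} with a possibly linear-size coalition then gives the claimed polynomial-time algorithm. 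By restricting yourself to the constant-coalition version of the manipulation result, you painted yourself into the corner that the zero-price votes expose.
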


\begin{proof}
 Let the budget $b$ be a constant. Then, at most $b$ many vulnerable votes whose price is not zero can be changed 
 since the prices are assumed to be in $\mathbb{N}$. Notice that we may assume, 
 without loss of generality, that all the vulnerable votes whose price is zero will be changed. 
 We iterate over all the $O(n^b)$ many possible vulnerable vote changes and we can solve each one 
 in polynomial time since the \textsc{Coalitional-manipulation} problem is in \Pshort{} for these voting rules~\cite{xia2009complexity}.
\end{proof}

We show that the \textsc{Frugal-\$bribery} problem is \NPC{} for a wide class of scoring rules as characterized in the following
result. Our next result shows that, the \textsc{Frugal-\$bribery} problem is \NPC{} for a wide class of scoring rules that includes the Borda voting rule. \Cref{thm:frugalScrNPC} can be proved by a reduction from the X3C problem.
\begin{theorem}\label{thm:frugalScrNPC}
 For any positional scoring rule $r$ with score vectors $\{\overrightarrow{s_i} : i\in \mathbb{N}\}$, if there exists a polynomial function $f: \mathbb{N} \longrightarrow \mathbb{N}$ such that, for every $m\in \mathbb{N}, f(m) \ge 2m$ and in the score vector $(\alpha_1, \ldots, \alpha_{f(m)})$, there exists a $1\le \ell\le f(m)-5$ satisfying the following condition:
 \[ \alpha_i - \alpha_{i+1} = \alpha_{i+1} - \alpha_{i+2} > 0, \forall \ell \le i \le \ell+3 \]
 then the \textsc{Frugal-\$bribery} problem is \NPC{} for $r$ even if the price of every vulnerable vote is either $1$ or $\infty$.
\end{theorem}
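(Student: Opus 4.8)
The plan is to reduce from X3C, mimicking the structure of the proofs of Theorems~\ref{thm:frugalKappNPC} and~\ref{thm:frugalKvetoNPC}, but now exploiting the local block of five consecutive positions $\ell, \ell+1, \ldots, \ell+5$ where the score vector decreases arithmetically (the hypothesis gives three equal consecutive gaps over positions $\ell$ through $\ell+3$, hence constant gaps across a window of six scores $\alpha_\ell, \ldots, \alpha_{\ell+5}$). Membership in \NPshort{} is immediate since we can verify a proposed set of vulnerable-vote changes in polynomial time. For hardness, given an X3C instance $(U, \{S_1, \ldots, S_t\})$ with $|U| = 3\kappa$, I would set $m = f(\text{number of candidates})$ appropriately and build a candidate set $\CC = U \uplus D \uplus \{p, q\}$ with enough dummy candidates $D$ to pad out the positions outside the critical window. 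The key design idea is to place, in the vote $v_i$ corresponding to set $S_i$, the three elements of $S_i$ together with $p$ and $q$ into the six special positions $\ell, \ldots, \ell+5$, so that changing $v_i$ shifts scores only by integer multiples of the common gap $g = \alpha_\ell - \alpha_{\ell+1}$.

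The scores would be engineered using \Cref{score_gen}: I would add a profile of complete (non-vulnerable) votes so that, relative to $p$, the candidate $q$ leads $p$ by exactly $\kappa g$ (so $q$ is the current winner), each element candidate $x \in U$ leads $p$ by exactly $g$, and every dummy in $D$ trails $p$ by a large margin. The budget is set to $\kappa = |U|/3$, and every vulnerable vote $v_i$ is given price $1$ while all other vulnerable votes get price $\infty$. The intended forward direction: if $I$ indexes an exact cover, then rebribing the $\kappa$ votes $\{v_i : i \in I\}$ to push $q$ down and simultaneously push each covered element $x$ down by $g$ makes $p$ tie-or-beat everyone; the tie-breaking order would be fixed in $p$'s favor as in the earlier proofs. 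Because the six critical positions decrease arithmetically with common gap $g$, a single rearrangement within the window can both demote $q$ by $g$ and demote the three $S_i$-elements each by the right amount, which is exactly what makes the arithmetic of an exact cover line up with the score bookkeeping.

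For the reverse direction, I would argue that any successful frugal bribery within budget $\kappa$ must change exactly $\kappa$ of the $v_i$'s (fewer cannot erase $q$'s lead of $\kappa g$, since each changed vote demotes $q$ by at most $g$ given the window structure), and that these $\kappa$ sets must cover $U$: if some element $x$ were left uncovered, its lead of $g$ over $p$ would survive in every changed vote, leaving $x$ ahead of $p$ and contradicting that $p$ wins. The disjointness forcing $|I| = \kappa$ to be an \emph{exact} cover comes from the counting: $\kappa$ sets of size three can cover $3\kappa = |U|$ elements only if they are pairwise disjoint.

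The main obstacle I anticipate is verifying that the arithmetic-progression window of only six positions is genuinely sufficient to realize all the required simultaneous score shifts while keeping the construction clean — in particular, ensuring that the reconfiguration of a changed vote cannot accidentally demote $p$ itself or inadvertently promote some element or dummy candidate past $p$. This forces $p$ to sit at a position whose score is invariant (or controllably placed) under the allowed rearrangements, and requires care that the five ``movable'' candidates $q$ and the three $S_i$-elements (plus one buffer slot) all fit into the arithmetic window without spillover. A secondary technical point is confirming that \Cref{score_gen} can indeed realize the prescribed relative scores using only polynomially many non-vulnerable votes for the given (possibly non-Borda) score vector; since that lemma is stated for arbitrary normalized score vectors, this should go through, but the padding with dummies in $D$ must be counted correctly so that $f(m) \geq 2m$ leaves room for both the window and the necessary dummy positions.
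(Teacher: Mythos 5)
Your high-level template — reduce from X3C, realize the required relative scores with \Cref{score_gen}, set the budget to $\kappa=|U|/3$ with unit prices on the $v_i$'s and $\infty$ elsewhere, and do all the score bookkeeping inside the six consecutive positions $\ell,\ldots,\ell+5$ whose gaps the hypothesis forces to be a common value $g$ — is exactly the paper's template, and you correctly read off the six-position arithmetic window from the stated condition. But your gadget inverts the paper's, and the inversion opens a genuine gap in the reverse direction. You place the threats ($q$ with lead $\kappa g$, and each element of $U$ with lead $g$) \emph{above} $p$ and argue that each changed vote can demote $q$ by at most $g$ and can demote an element only if it lies in the designated window slot. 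Neither claim is supported by anything in your construction: a bribed vote may be rewritten arbitrarily, so the briber can place $q$ and \emph{all} of $U$ at the very bottom of a single rewritten vote, each losing far more than $g$ (for a Borda-like score vector every downward shift loses points), while the only candidates promoted are the deep-deficit dummies in $D$, which you explicitly allow to ``trail by a large margin'' and hence absorb any promotion harmlessly. One changed vote then defeats every threat simultaneously, so \NO{} instances of X3C map to \YES{} instances of \textsc{Frugal-\$bribery} and the reduction fails. Fixing this requires a guard block occupying every position below the threats whose members cannot afford to gain even a single point; you do not build one, and your stated worry (that a rearrangement might ``promote some candidate past $p$'') is the wrong hazard — the problem is that demotion of the threats is too cheap, not that promotion is dangerous.

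The paper's construction is engineered precisely to avoid this. The distinguished candidate $p$ is pinned at the top of every $v_i$ (so its score is invariant), a single blocker $a$ with lead $|U|/3-1$ sits at position $\ell$, the three elements of $S_i$ sit immediately after it with \emph{deficit} $2$ relative to $p$, and a block $Q$ of guard candidates seals all positions from $\ell+4$ downward because each member of $Q$ sits just below $p$ and cannot gain a point without overtaking $p$ under the adversarial tie-breaking. Consequently the only useful move in a bribed vote is to slide $a$ down \emph{within} the window, which forcibly promotes the elements of $S_i$ by one position each; since no element can afford to be promoted twice (it would reach $p$'s score and win the tie-break), the changed votes are driven toward a disjoint family, and the requirement that $a$ shed its entire lead pins down the exact cover. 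In short, the paper converts the problem into a \emph{packing} constraint (no element promoted more than once) enforced by the sealed window, whereas your version relies on a \emph{covering} constraint (every element demoted at least once) that the briber can trivially satisfy outside your intended mechanism. To salvage your variant you would have to surround every element of $U\setminus S_i$ and the candidate $q$, in every vote $v_i$, with guards that cannot gain a single point — at which point you have essentially reconstructed the paper's gadget.
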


\begin{proof}
 Since the scoring rules remain unchanged if we multiply every $\alpha_i$ by any constant $\lambda>0$ and/or add any constant $\mu$, 
 we can assume the following without loss of generality.
 \[ \alpha_i - \alpha_{i+1} = \alpha_{i+1} - \alpha_{i+2} = 1, \forall \ell \le i \le \ell+3 \]
 The problem is clearly in \NPshort{}. To show \NPshort{}-hardness, we reduce an arbitrary instance of X3C to \textsc{Frugal-\$bribery}. 
 Let $(U, \{S_1, \dots, S_t\})$ be an instance of X3C. We define a \textsc{Frugal-\$bribery} instance as follows. 
 The candidate set is $\mathcal{C} = U \uplus Q \uplus \{p, a, d\}$, where $ |Q| = f(|U|)-\ell-4$. 
 For each $S_i = \{x,y,z\}, 1\le i\le t$, we add a vote $v_i$ as follows.
 \[ v_i : p \succ d \succ \text{others} \succ \underbrace{a \succ x\succ y\succ z \succ Q}_{l \text{ candidates}} \]
 By \Cref{score_gen}, we can add $poly(|U|)$ many additional votes to ensure the following scores (denoted by $s(\cdot)$). Note that the proof of \Cref{score_gen} in \cite{baumeister2011computational} also works for the normalization of $\alpha$ defined in the beginning of the proof.
 \[ s(d) < s(p), s(x) = s(p) - 2, \forall x\in U,\]
 \[ s(a) = s(p) + \nfrac{|U|}{3} - 1 , s(q) = s(p) + 1 \]
 The tie-breaking rule is ``$\cdots \succ p$''. The distinguished candidate is $p$. The price of every $v_i$ is $1$ and the price of every other vulnerable vote is $\infty$. The budget is $\nfrac{|U|}{3}$.

 We claim that the two instances are equivalent. In the forward direction, there exists an index set $I\subseteq [t], |I|=\nfrac{|U|}{3},$ such that $\uplus_{i\in I} S_i = U$. We replace the votes $v_i$ with $v_i^{\prime}, i\in I,$ which are defined as follows.
 \[ v_i^{\prime} : p \succ d \succ \text{others} \succ x\succ y\succ z \succ a \succ Q \]
 This makes the score of $p$ at least one more than the score of every other candidate and thus $p$ wins. 
 
To prove the result in the other direction, suppose there is a way to make $p$ win the election. Notice that the candidates in $Q$ cannot change their positions in the changed votes and must occupy the last positions due to their score difference with $p$. Now we claim that there will be exactly $\nfrac{|U|}{3}$ votes where the candidate $a$ must be placed at the $(l+4)^{th}$ position since $s(p) = s(a) - \nfrac{|U|}{3} + 1$ and the budget is $\nfrac{|U|}{3}$. We claim that the $S_i$'s corresponding to the changed votes must form an exact set cover. If not, then there must exist a candidate $x\in U$ whose score has increased by at least two contradicting the fact that $p$ wins the election.
\end{proof}

 For the sake of concreteness, an example of a function $f$, stated in \Cref{thm:frugalScrNPC}, that works for the Borda voting rule is $f(m)=2m$. \Cref{thm:frugalScrNPC} shows that the \textsc{Frugal-\$bribery} problem is intractable for the Borda voting rule. However, the following theorem shows the intractability of the \textsc{Uniform-frugal-\$bribery} problem for the Borda voting rule, even in a very restricted setting.
\Cref{thm:uniformBordaNPC} can be proved by a reduction from the \textsc{Coalition manipulation} problem for the Borda voting rule for two manipulators which is known to be \NP{}-complete~\cite{betzler2011unweighted,davies2011complexity}.
\begin{theorem}\label{thm:uniformBordaNPC}
 The \textsc{Uniform-frugal-\$bribery} problem is \NPC{} for the Borda voting rule, even when every vulnerable vote has a price of $1$ and the budget is $2$.
\end{theorem}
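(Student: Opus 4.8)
The plan is to reduce from the \textsc{Coalitional-manipulation} problem for the Borda voting rule with exactly two manipulators, which is known to be \NPC{}~\cite{betzler2011unweighted,davies2011complexity}. Recall that an instance of this problem consists of a truthful profile $\succ^t$ over a candidate set $\CC$ and a distinguished candidate $p$, and asks whether two additional votes can be cast to make $p$ the unique Borda winner. Membership in \NPshort{} is immediate, since a certificate consists of the (at most two) changed vulnerable votes together with their new rankings, and the Borda winner is computable in polynomial time.

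For the hardness direction, the idea is to mirror the construction used in \Cref{prop:conn} that reduces \textsc{Coalitional-manipulation} to \textsc{Nonuniform-frugal-\$bribery}, but adapted so that the manipulators' votes become vulnerable votes of \emph{uniform} price $1$ under a budget of $2$. First I would fix an arbitrary reference order $\succ_f = p \succ \text{others}$ on $\CC$ and set the two manipulator votes both to $\succ_f$. As argued in \Cref{prop:conn}, we may assume without loss of generality that $p$ does \emph{not} win when both manipulators vote $\succ_f$, since this is a polynomially checkable case. The distinguished candidate of the frugal-bribery instance is $p$, the budget is $2$, and the plan is to arrange that the only two vulnerable votes in the whole profile are precisely these two copies of $\succ_f$, each assigned price $1$; every other (non-vulnerable) vote then has price $\infty$ and cannot be touched.

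The key step is to guarantee that exactly the two manipulator votes are $p$-vulnerable and that all truthful votes are not. A vote is $p$-vulnerable iff the voter ranks $p$ above the current winner $w$ of the instance. Since we engineered the instance so that $p$ loses when both manipulators vote $\succ_f$, the current winner $w$ is some candidate other than $p$; because each $\succ_f$ places $p$ at the top, both copies rank $p$ above $w$ and are therefore vulnerable. To ensure the truthful votes are \emph{not} vulnerable, I would augment $\CC$ with a padding candidate (or use a suitable fixed winner) so that $w$ sits above $p$ in every truthful vote — concretely, by applying \Cref{score_gen} to add a controlled block of complete votes that forces the current winner to be a candidate ranked above $p$ in all the non-manipulator votes, leaving the Borda score gaps between $p$ and the rest exactly as in the original manipulation instance. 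With this setup, changing the two vulnerable votes within budget $2$ makes $p$ win iff the two manipulators can vote to make $p$ the Borda winner, so the two instances are equivalent.

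The main obstacle I anticipate is the bookkeeping required to simultaneously satisfy three constraints with a single application of \Cref{score_gen}: preserving the relative Borda-score differences of the original manipulation instance (so the equivalence is faithful), forcing the current winner $w$ to lie above $p$ in \emph{every} truthful vote (so no spurious vulnerable votes appear and no truthful vote becomes cheap to change), and keeping $w \ne p$ before bribery. The delicate point is that inserting padding votes to fix $w$ perturbs scores, and one must verify that after these perturbations the manipulators' two votes still suffice to elevate $p$ over $w$ precisely when the original coalition could make $p$ win — i.e.\ that the score offset introduced is an exact constant shared by $p$ and its rivals, which is exactly the guarantee \Cref{score_gen} provides. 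Getting the parity and the magnitude of this offset right, while holding the budget at $2$ and the price uniformly at $1$, is the crux of the argument.
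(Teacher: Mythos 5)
Your starting point -- a reduction from \textsc{Coalitional-manipulation} for Borda with two manipulators, with the manipulators reappearing as two price-$1$ vulnerable votes under a budget of $2$ -- is the same as the paper's. But the mechanism you rely on, namely arranging that those two votes are the \emph{only} vulnerable votes in the profile, does not go through, and this is precisely the difficulty the paper's gadget exists to handle. First, the truthful votes of the manipulation instance are arbitrary: any truthful vote that ranks $p$ first ranks $p$ above every candidate, hence above whatever current winner $w$ you engineer, and is therefore $p$-vulnerable no matter what padding you add -- \Cref{score_gen} adds new votes but cannot alter the content of the existing ones. Second, the padding votes produced by \Cref{score_gen} are themselves uncontrolled: some of them may rank $p$ above $w$ and so be vulnerable, and in the \textsc{Uniform-frugal-\$bribery} model you are not free to price them at $\infty$ -- the theorem requires \emph{every} vulnerable vote to carry the same price of $1$. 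The briber could then spend the budget of $2$ on two such stray vulnerable votes, each of which can shift $p$'s score relative to its rivals by up to the number of candidates, and potentially make $p$ win even when the manipulation instance is a \NO{} instance. Your proposal offers no way to exclude this, so the reverse direction of the reduction fails.

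The paper does not try to make the other votes non-vulnerable; it makes them \emph{useless}. It adds two dummy candidates $d$ and $q$, places $q$ immediately after $p$ only in the two designated votes $\nu_1,\nu_2$ (where $p$ sits near the bottom of the order), appends $d \succ q$ to every truthful vote, and builds the padding so that $q$ never immediately follows $p$ there. The score of $q$ is pinned at distance $2m-1$ from that of $p$, so that with a budget of two the required swing between $p$ and $q$ can only be achieved by extracting the maximal gain of $m$ from each changed vote, which is possible only in a vote where $q$ immediately follows $p$. This forces the two changed votes to be exactly $\nu_1$ and $\nu_2$ and forces their new form to be $p \succ d \succ \cdots \succ q$, from which the two manipulators' votes over $C$ can be read off with the correct score offsets. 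To repair your argument you would need an analogous score filter that neutralises every vulnerable vote other than the two designated ones, rather than attempting to make them non-vulnerable.
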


\begin{proof}
 The problem is clearly in \NPshort{}. To show \NPshort{}-hardness, we reduce an arbitrary instance of the \textsc{Coalitional-manipulation} problem for the Borda voting rule with two manipulators to an instance of the \textsc{Uniform-frugal-\$bribery} problem for the Borda voting rule. Let $(C, \succ^t, 2, p)$ be an arbitrary instance of the \textsc{Coalitional-manipulation} problem for the Borda voting rule. The corresponding \textsc{Frugal-\$bribery} instance is as follows. The candidate set is $C^{\prime} = C \uplus \{d,q\}$. For each vote $v_i \in \succ^t$, we add a vote $v_i^{\prime} : v_i \succ d \succ q$. Corresponding to the two manipulators', we add two more votes $\nu_1, \nu_2 : \overrightarrow{C\setminus \{p\}} \succ d \succ p \succ q$, where $\overrightarrow{C\setminus \{p\}}$ is an arbitrary but fixed order of the candidates in $C\setminus \{p\}$. We add more votes to ensure following score differences ($s(\cdot)$ and $s^{\prime}(\cdot)$ are the score functions for the \textsc{Coalitional-manipulation} and the \textsc{Uniform-frugal-\$bribery} instances respectively).
 $$s^{\prime}(p) = \lambda + s(p) - 2, s^{\prime}(x) = \lambda + s(x) \text{ for every } x\in C,$$ 
 $$s^{\prime}(q) = s^{\prime}(p) - 2m + 1, s^{\prime}(p) > s^{\prime}(d) + 2m \text{ for some } \lambda \in \mathbb{Z}$$ 
 This will be achieved as follows. For any two arbitrary candidates $a$ and $b$, the following two votes increase the score of $a$ by one more than the rest of the candidates except $b$ whose score increases by one less. This construction has been used before~\cite{xia2010scheduling,davies2011complexity}.
 \[ a \succ b \succ \overrightarrow{C\setminus \{a,b\}} \]
 \[ \overleftarrow{C\setminus \{a,b\}} \succ a \succ b \]
 Also, we can ensure that candidate $p$ is always in $(\nfrac{m-1}{2}, \nfrac{m+1}{2})$ positions and the candidate $q$ never immediately follows $p$ in these new votes.  The tie-breaking rule is ``$\text{others} \succ p$''. The distinguished candidate is $p$. The price of every vulnerable vote is one and the budget is two. We claim that the two instances are equivalent. 
 
 In the forward direction, suppose the \textsc{Coalitional-manipulation} instance is a \YES{} instance. Let $u_1, u_2$ be the manipulators' votes that make $p$ win. In the \textsc{Frugal-\$bribery} instance, we replace $\nu_i$ by $\nu_i^{\prime} : p \succ d \succ u_i \succ q$ for $i=1,2$. This makes $p$ win the election. In the reverse direction, recall that in all the vulnerable votes except $\nu_1$ and $\nu_2$, the candidate $q$ never immediately follows candidate $p$. Therefore, changing any of these votes can never make $p$ win the election since $s^{\prime}(q) = s^{\prime}(p) - 2m + 1$ and the budget is two. Hence, the only way $p$ can win the election, if at all possible, is by changing the votes $\nu_1$ and $\nu_2$. Let a vote $\nu_i^{\prime}$ replaces $\nu_i$ for $i=1,2$. We can assume, without loss of generality, that $p$ and $d$ are at the first and the second positions respectively in both $\nu_1^{\prime}$ and $\nu_2^{\prime}$. Let $u_i$ be the order $\nu_i^{\prime}$ restricted only to the candidates in $C$. This makes $p$ winner of the \textsc{Coalitional-manipulation} instance since, $s^{\prime}(p) = \lambda + s(p) - 2, s^{\prime}(x) = \lambda + s(x)$ for every $x\in C$.
\end{proof}

\section{Results for Weighted Elections}

Now we turn our attention to weighted elections. As before, we begin with some easy observations that follow from known results.

\begin{observation}\label{lem:wfrugalEasy}
 The \textsc{Frugal-bribery} problem is in \Pshort{} for the maximin and the Copeland voting rules for three candidates.
\end{observation}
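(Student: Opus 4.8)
The plan is to invoke the known tractability of \textsc{Coalitional-manipulation} in weighted elections and push it through \Cref{prop:man_bri_con}. Recall from \Cref{prop:man_bri_con} that \textsc{Frugal-bribery} $\le_\Pshort$ \textsc{Coalitional-manipulation} for every voting rule, under both weighted and unweighted settings: given a \textsc{Frugal-bribery} instance, we simply designate the set of all vulnerable votes as the manipulators and ask whether they can make $p$ win. Thus, to establish tractability of \textsc{Frugal-bribery} for maximin and Copeland on three candidates, it suffices to show that \textsc{Weighted-coalitional-manipulation} is in \Pshort{} for these two rules when $|\CC| = 3$. This is precisely the regime where the known dichotomy results for weighted manipulation land on the easy side: for a bounded (and in particular, for three) number of candidates, the weighted coalitional manipulation problem for maximin and Copeland is polynomial-time solvable (see~\cite{conitzer2007elections}).

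First I would make the reduction of \Cref{prop:man_bri_con} fully explicit in the weighted setting, checking that the weights of the vulnerable votes carry over unchanged to the manipulators and that the non-vulnerable votes become the fixed (truthful) part of the profile. Since the reduction neither alters any weight nor the candidate set, the number of candidates remains three, so the resulting manipulation instance stays in the tractable window. Second, I would cite (or, if a self-contained argument is preferred, reconstruct) the polynomial-time algorithm for weighted coalitional manipulation with three candidates. The standard approach here exploits that with only three candidates $\{p, a, b\}$, every manipulator's vote is characterized by its relative ordering of $a$ and $b$ (since placing $p$ first is without loss of generality), so the manipulators' collective effect is captured by a single scalar: how much total weight is allocated to $a \succ b$ versus $b \succ a$. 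One then sweeps over the threshold, computing for each split whether the induced maximin (respectively Copeland) score of $p$ strictly dominates those of $a$ and $b$, which is a constant-size check per candidate.

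The main obstacle I anticipate is not the reduction itself but the case analysis inside the three-candidate manipulation algorithm, particularly for maximin, where the score of $p$ is a minimum of pairwise margins and the optimal allocation of manipulator weight is not purely monotone: pushing all weight behind one ordering to help $p$ against $a$ may hurt $p$ against $b$. The resolution is that with three candidates the pairwise comparison structure is small enough to enumerate: one fixes $p$ at the top of every manipulative vote, and then the only free decision is the common ``tilt'' between $a$ and $b$, so a greedy argument (or a direct threshold computation) determines whether \emph{any} split of the manipulators' weight simultaneously satisfies $\min_y D(p,y) > \min_y D(a,y)$-type inequalities against both opponents. For Copeland the analysis is even more direct, since with three candidates the Copeland score depends only on the signs of the three pairwise margins, giving finitely many target sign patterns to test for feasibility. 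Verifying that each such feasibility test reduces to checking whether the manipulators' total weight can be partitioned to meet a fixed collection of linear inequalities—solvable in polynomial time—completes the argument and yields the claimed membership in \Pshort{}.
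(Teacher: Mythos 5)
Your proposal is correct and follows essentially the same route as the paper: the paper's proof simply observes that (weighted) coalitional manipulation is in \Pshort{} for maximin and Copeland with three candidates by~\cite{conitzer2007elections} and then applies \Cref{prop:man_bri_con}. Your additional sketch of the three-candidate manipulation algorithm is sound but not needed, since the paper treats that tractability as a black box.
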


\begin{proof}
 The \textsc{Manipulation} problem is in \Pshort{} for the maximin, Copeland voting rules for three candidates~\cite{conitzer2007elections}.
 Hence, the result follows from \Cref{prop:man_bri_con}.
\end{proof}

\begin{observation}\label{lem:wfrugalScr}
 The \textsc{Frugal-bribery} problem is \NPC{} for any scoring rule except plurality for three candidates.
\end{observation}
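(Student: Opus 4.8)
The claim is that \textsc{Frugal-bribery} is \NPC{} for every weighted scoring rule other than plurality, already for three candidates. Membership in \NP{} is immediate: a certificate consists of the modified vulnerable votes, and we can verify in polynomial time that only vulnerable votes were changed and that $p$ wins the resulting election. The substance is the hardness, and the natural plan is to reduce from a known weighted-manipulation hardness result. Recall that for three candidates, \textsc{Coalitional manipulation} is \NPH{} for every scoring rule except plurality (this is the classical result of Conitzer, Sandholm, and Lang, which underlies the companion observations in this section). Combined with \Cref{prop:man_bri_con}, which reduces \textsc{Frugal-bribery} \emph{to} \textsc{Coalitional-manipulation}, that direction is the wrong way; so the plan is instead to reduce \textsc{Coalitional-manipulation} \emph{into} \textsc{Frugal-bribery} directly, mirroring the construction used in \Cref{prop:conn}.

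Here is the reduction I would carry out. Start from a weighted \textsc{Coalitional-manipulation} instance with three candidates under a fixed scoring rule $r$ with normalized score vector $(\alpha_1,\alpha_2,0)$ where $\alpha_1 > \alpha_2$ or $\alpha_2 > 0$ (the non-plurality hypothesis guarantees at least one such strict inequality, which is exactly what makes manipulation nontrivial). The instance has truthful votes $\succ^t$, a set of manipulators with prescribed weights, and a target $p$. I would build the \textsc{Frugal-bribery} instance on the same three candidates, keeping all truthful votes, and replacing each manipulator of weight $w$ by a single weight-$w$ voter whose reported vote $\succ_f$ places $p$ \emph{last}. Pick $\succ_f$ so that with all these voters ranking $p$ at the bottom, the current winner is some candidate $c \neq p$ (as in \Cref{prop:conn}, we may assume the all-$\succ_f$ profile does not already elect $p$, since that case is polynomial-time checkable and handled separately). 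Crucially, because each such voter ranks $p$ below the current winner $c$, every manipulator-voter is $p$-\emph{vulnerable}; the truthful votes need not be vulnerable and we never touch them. The briber is then free to re-rank exactly these vulnerable votes, which is precisely the freedom the manipulators had.

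The equivalence argument then runs in both directions. If the manipulators can make $p$ win in the \textsc{Coalitional-manipulation} instance by casting votes $(\succ_j)_j$, the briber changes each corresponding vulnerable vote to $\succ_j$ and wins; conversely, any successful frugal bribery changes only the vulnerable (manipulator) votes, and those re-rankings form a valid manipulator profile electing $p$. Since weights are preserved verbatim and there is a one-to-one correspondence between votes, the reduction is correct for weighted elections; the scoring-rule-specific results for STV, plurality-with-runoff, maximin, Copeland, and Bucklin in the surrounding theorems are proved separately and do not interfere. The one subtlety I would be careful about is tie-breaking: the \textsc{Coalitional-manipulation} convention is that $p$ must win \emph{uniquely}, whereas the frugal-bribery winner is determined by a fixed tie-breaking order, so I would fix the tie-breaking rule in the constructed instance to favor every candidate over $p$ (``others $\succ p$''), which faithfully encodes the unique-winner requirement. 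That alignment of the two winner conventions is the main---and essentially only---obstacle, and it is routine once spelled out.

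\begin{proof}
Membership in \NP{} is clear. For hardness, we reduce from \textsc{Coalitional-manipulation} for weighted elections, which is \NPH{} for three candidates for every scoring rule other than plurality~\cite{conitzer2007elections}. Given such an instance with truthful profile $\succ^t$, manipulators of prescribed weights, and target $p$, construct a \textsc{Frugal-bribery} instance on the same three candidates by retaining $\succ^t$ and adding, for each manipulator of weight $w$, one weight-$w$ voter whose vote $\succ_f$ ranks $p$ last; we may assume (checking the contrary case in polynomial time) that with all added voters casting $\succ_f$, some candidate $c \neq p$ wins. Fix the tie-breaking order so that every candidate is preferred to $p$. Each added voter ranks $p$ below the winner $c$, hence is $p$-vulnerable, while no truthful vote is changed. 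A successful manipulation yields the required re-ranking of exactly these vulnerable votes electing $p$, and conversely any frugal bribery alters only the vulnerable votes, whose new rankings constitute a manipulating profile making $p$ the unique winner. As weights and the vote correspondence are preserved, the two instances are equivalent.
\end{proof}
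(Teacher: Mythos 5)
Your reduction has two genuine problems. First, you have inverted the definition of vulnerability. A vote $\succ_i$ is $p$-vulnerable precisely when $p \succ_i r(\succ)$, i.e., when the voter prefers $p$ to the \emph{current} winner. You replace each manipulator by a voter whose reported vote places $p$ \emph{last} and then assert that ``because each such voter ranks $p$ below the current winner $c$, every manipulator-voter is $p$-vulnerable'' --- this is exactly backwards. A voter who ranks $p$ below $c$ prefers the status quo and is therefore \emph{not} vulnerable, so the briber in your constructed instance is forbidden from touching any of the manipulator-replacement votes, and the forward direction collapses. The fix is to make these voters report $p \succ \text{others}$, as in the construction of \Cref{prop:conn}.

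Second, even after that repair, the black-box reduction from \textsc{Coalitional-manipulation} fails in the reverse direction. The truthful profile $\succ^t$ is arbitrary, so some truthful voters may themselves prefer $p$ to the current winner and hence be $p$-vulnerable; in \textsc{Frugal-bribery} the briber may change \emph{any} vulnerable vote, not just the ones you designated. A successful bribery that alters a heavy vulnerable truthful vote need not correspond to any manipulator strategy, so equivalence breaks. This is precisely why \Cref{prop:conn} only reduces \textsc{Coalitional-manipulation} to \textsc{Nonuniform-frugal-\$bribery}: it uses prices (zero for the manipulators, one for all other vulnerable votes) together with budget zero to lock the truthful vulnerable votes in place, a device unavailable in the priceless \textsc{Frugal-bribery} problem. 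The paper instead re-runs the specific \textsc{Partition}-based construction from Theorem 6 of \cite{conitzer2007elections}: there the non-manipulator votes all rank the distinguished candidate below the winner (hence are non-vulnerable), while the manipulators' votes can be taken to rank it first (hence vulnerable), so the set of vulnerable votes coincides exactly with the manipulator set and the equivalence goes through. To repair your argument you would need to open that construction and verify this structural property rather than treat \textsc{Coalitional-manipulation} as a black box.
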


\begin{proof}
 The same proof for Theorem 6 of~\cite{conitzer2007elections} would work here.
\end{proof}

\begin{theorem}\label{thm:wfrugalP} 
 The \textsc{Frugal-bribery} problem is in \Pshort{} for the plurality voting rule.
\end{theorem}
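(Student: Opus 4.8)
The plan is to show that a single greedy strategy---redirecting every vulnerable vote so that it ranks the distinguished candidate $p$ first---is optimal for plurality, and that the resulting election can be checked in polynomial time. First I would compute the current winner $c = r(\succ)$ and, by inspecting each vote, identify the set $\mathcal{V}_v$ of $p$-vulnerable votes, namely those with $p \succ_i c$; all of this is clearly polynomial in the input size. A useful preliminary observation is that a $p$-vulnerable voter can never rank $c$ at the top, since that would force $c \succ_i p$ and contradict vulnerability. The algorithm then forms the profile $\succ^\star$ obtained from $\succ$ by replacing each vulnerable vote with one placing $p$ in the first position (leaving the non-vulnerable votes untouched), and outputs \textsc{Yes} iff $r(\succ^\star) = p$.

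The crux is the correctness claim: $p$ can be made a winner by modifying only the vulnerable votes if and only if $p$ wins in $\succ^\star$. One direction is immediate, since $\succ^\star$ is itself an admissible modification. For the converse I would use an exchange/monotonicity argument specific to plurality. Let $s_0(\cdot)$ denote the plurality score contributed by the (fixed) non-vulnerable votes, and let $W$ be the total weight of the vulnerable votes. In any modified profile in which $p$ wins, $p$'s score equals $s_0(p) + W_p$ for some $W_p \le W$, while any opponent $q$ has score $s_0(q) + W_q$ with $W_q \ge 0$. In $\succ^\star$, by contrast, $p$ attains the maximum possible value $s_0(p) + W$ and every $q \neq p$ attains exactly $s_0(q)$. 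Hence passing to $\succ^\star$ can only raise $p$'s score and can only lower each competitor's score, so the set of maximum-score candidates in $\succ^\star$ is contained in the corresponding set for the winning profile and still contains $p$. Because the tie-breaking rule $\tau$ is lexicographic (a fixed order over candidates), $p$ being $\tau$-preferred within the larger tied set implies it is $\tau$-preferred within the smaller one; therefore $p$ remains the unique winner in $\succ^\star$.

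The main obstacle I anticipate is not the optimality of the greedy move---which is intuitive for plurality---but discharging the tie-breaking bookkeeping cleanly, i.e. verifying that shrinking the tied top set (while keeping $p$ in it) cannot dislodge $p$ as the $\tau$-selected winner; this is routine for fixed-order tie-breaking but should be stated explicitly. A secondary point worth recording is that vulnerability is defined once, with respect to the original winner $c$, so the set $\mathcal{V}_v$ is fixed in advance and no circularity arises when the winner changes after the votes are altered. Assembling these observations yields a polynomial-time decision procedure, establishing that \textsc{Frugal-bribery} for weighted plurality lies in \Pshort{}.
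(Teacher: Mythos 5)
Your proposal is correct and follows essentially the same route as the paper: the paper's proof is exactly the greedy strategy of replacing every vulnerable vote by $p \succ \text{others}$, justified by the monotonicity of plurality. Your score-comparison and tie-breaking bookkeeping is simply a more explicit write-up of that same monotonicity argument (the paper states it in one line), so there is nothing substantively different to flag.
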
 

\begin{proof}
 Let $p$ be the distinguished candidate of the campaigner. 
 The greedy strategy of just replacing every vulnerable vote by $p \succ others$ solves 
 the problem due to the monotonicity property of the plurality voting rule.
\end{proof}

Our hardness results in this section are based on the \textsc{Partition} problem, which is known to be \NPC{} \cite{garey1979computers}, and is defined as follows.
\begin{definition}(\textsc{Partition} Problem)\\
 Given a finite multi-set $W$ of positive integers with $\sum_{w\in W} w = 2K$, 
 does there exist a subset $W^{\prime} \subset W$ such that $\sum_{w\in W^{\prime}} w = K$?
 An arbitrary instance of \textsc{Partition} is denoted by $(W,2K)$. 
\end{definition}
 We define another problem which we call $\frac{1}{4}$-\textsc{Partition} as below. We prove that $\frac{1}{4}$-\textsc{Partition} is also \NPC{}.
We will use this fact in the proof of \Cref{thm:stv_wt}.
\begin{definition}(The $\frac{1}{4}$-\textsc{Partition} Problem)\\
 Given a finite multi-set $W$ of positive integers with $\sum_{w\in W} w = 4K$, 
 does there exist a subset $W^{\prime} \subset W$ such that $\sum_{w\in W^{\prime}} w = K$?
 An arbitrary instance of $\frac{1}{4}$-\textsc{Partition} is denoted by $(W,4K)$. 
\end{definition}

\begin{lemma}\label{lem:part}
$\frac{1}{4}$-\textsc{Partition} problem is \NPC{}.
\end{lemma}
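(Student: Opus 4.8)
The plan is to show that $\frac{1}{4}$-\textsc{Partition} is \NPC{} by a reduction from the classical \textsc{Partition} problem, which is already known to be \NPC{}. Membership in \NPshort{} is immediate: given a candidate subset $W^\pr\subseteq W$, we can verify in polynomial time whether $\sum_{w\in W^\pr} w = K$.

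For the hardness, the key observation is a simple ``padding'' idea. Suppose we are given an instance $(W, 2K)$ of \textsc{Partition}, where $W$ is a multi-set of positive integers summing to $2K$. I would construct an instance $(W^\pr, 4K^\pr)$ of $\frac{1}{4}$-\textsc{Partition} by augmenting $W$ with extra elements whose purpose is to rescale the target ratio from $\nfrac{1}{2}$ to $\nfrac{1}{4}$. Concretely, the most natural attempt is to set $W^\pr = W \uplus \{2K\}$, a single new element of value $2K$. Then $\sum_{w\in W^\pr} w = 2K + 2K = 4K$, so in the $\frac{1}{4}$-\textsc{Partition} instance the target value is $K^\pr$ with $4K^\pr = 4K$, i.e.\ $K^\pr = K$. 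The question becomes whether some subset of $W^\pr$ sums to exactly $K$.

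The equivalence of the two instances then needs to be argued in both directions. In the forward direction, if $W^\pr\subseteq W$ witnesses a \YES{} for $(W,2K)$ with $\sum_{w\in W^\pr} w = K$, then the very same subset $W^\pr$ (which does not contain the new padding element, since $2K > K$ when $K>0$) is a valid witness for the $\frac{1}{4}$-\textsc{Partition} instance. In the reverse direction, if some subset $T\subseteq W\uplus\{2K\}$ sums to $K$, then $T$ cannot contain the padding element $2K$ (again because $2K>K$), so $T\subseteq W$ and hence $T$ is exactly a subset of $W$ summing to $K$, witnessing a \YES{} for \textsc{Partition}. The reduction is clearly polynomial-time.

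The main thing to be careful about is the boundary case $K=0$ and whether all elements must remain positive integers; I would handle this by noting that \textsc{Partition} instances with $K=0$ (equivalently $W$ consisting of a set summing to $0$) are degenerate and can be assumed away, and that the single padding element $2K$ is a positive integer whenever $K>0$, so $W^\pr$ remains a valid multi-set of positive integers. The only subtlety is confirming that the padding element can never be selected in a target-$K$ subset, which holds precisely because its value $2K$ strictly exceeds the target $K$; this is what makes the correspondence between witnesses exact. With these checks in place, the reduction establishes \NPshort{}-hardness, and together with membership in \NPshort{}, the \NPCshort{}ness of $\frac{1}{4}$-\textsc{Partition} follows.
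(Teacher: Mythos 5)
Your proposal is correct and matches the paper's proof essentially verbatim: the paper also reduces from \textsc{Partition} by adding a single padding element of value $2K$ to obtain a multi-set summing to $4K$, and argues that this element cannot appear in any subset summing to $K$. The only cosmetic difference is that the paper dispenses with the degenerate case by assuming up front that $2K\notin W$, whereas you handle it via the observation $2K>K$ for $K>0$; both are fine.
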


\begin{proof}
 The problem is clearly in \NPshort{}. To show \NPshort{}-hardness, we reduce the \textsc{Partition} problem to it. 
 Let $(W,2K)$ be an arbitrary instance of the \textsc{Partition} problem. We can assume, without loss of generality, that $2K \notin W$, since 
 otherwise the instance is trivially a \textit{no} instance. 
 The corresponding $\frac{1}{4}$-\textsc{Partition} problem instance is defined by $(W_1,4K)$, where $W_1 = W \cup \{2K\}$. 
 We claim that the two instances are equivalent. Suppose the \textsc{Partition} instance is a \YES{} instance and thus 
 there exists a set $W^{\prime}\subset W$ such that $\sum_{w\in W^{\prime}} w = K$. This $W^{\prime}$ gives a solution to 
 the $\frac{1}{4}$-\textsc{Partition} instance. To prove the result in the other direction, 
 suppose there is a set $W^{\prime}\subset W_1$ such that $\sum_{w\in W^{\prime}} w = K$. 
 This $W^{\prime}$ gives a solution to the \textsc{Partition} problem instance since $2K \notin W^{\prime}$.
\end{proof}

In the rest of this section, we present the hardness results in weighted elections for the following voting rules: plurality, maximin, STV, Copeland$^\alpha$, and Bucklin. For plurality, recall that the \textsc{Frugal-bribery} problem is in \Pshort{}, and we will show that \textsc{Frugal-\$bribery} is \NPC{}. For all the other rules, we will establish that even \textsc{Frugal-bribery} is \NPC{}.

\begin{theorem}\label{thm:wfrugalPluNPC}
 The \textsc{Frugal-\$bribery} problem is \NPC{} for the plurality voting rule for three candidates.
\end{theorem}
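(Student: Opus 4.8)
The plan is to reduce from the \textsc{Partition} problem, exactly in the spirit of the weighted manipulation hardness results of Conitzer et al.~\cite{conitzer2007elections}, which this work repeatedly leverages. Membership in \NPshort{} is immediate, since given a guessed set of vulnerable votes together with their replacements we can verify in polynomial time that the total cost is within budget and that $p$ wins. So the entire effort is the \NPshort{}-hardness reduction. Let $(W,2K)$ with $W=\{w_1,\dots,w_n\}$ be an arbitrary \textsc{Partition} instance. I would build a three-candidate plurality election over $\CC=\{p,a,b\}$ in which the current winner is some candidate other than $p$, and in which the votes that carry the partition weights are precisely the vulnerable ones, so that the briber's only lever is deciding which of these weighted votes to ``flip'' toward $p$.

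The key construction step is to arrange scores so that $p$ needs to gain exactly $K$ units of weight to win, no more and no less. Concretely, I would introduce, for each integer $w_i\in W$, one voter of weight $w_i$ whose vote ranks $a$ first but ranks $p$ above the current winner --- this is what makes the vote $p$-vulnerable in the sense of the vulnerable-vote definition --- and I would assign each such vote a price equal to its weight $w_i$ (this is the crucial use of the \emph{nonuniform} \textsc{Frugal-\$bribery} pricing that is unavailable in plain \textsc{Frugal-bribery}, and is exactly why plurality separates into \Pshort{} for the latter but \NPC{} for the former). I would then add a block of fixed, non-vulnerable weighted votes (for $a$, $b$, and $p$) calibrated so that before any bribery $a$ and $b$ are tied ahead of $p$ by a margin that forces $p$ to collect precisely $K$ more first-place weight to overtake them under the chosen tie-breaking order, while simultaneously each flipped vote removes weight from $a$. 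Setting the budget to $K$ then couples the total \emph{price} spent and the total \emph{weight} shifted: flipping a subset $S$ of vulnerable votes costs $\sum_{i\in S}w_i$ and moves $\sum_{i\in S}w_i$ weight from $a$ to $p$.

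With this coupling in place the equivalence should follow cleanly. In the forward direction, a partition $W'$ with $\sum_{w\in W'}w=K$ tells the briber exactly which vulnerable votes to flip: the cost is $K$, meeting the budget, and $p$ gains $K$ weight, making $p$ the (unique, by the tie-breaking order) plurality winner. In the reverse direction, any successful frugal bribery spends at most $K$ on a set $S$ of flipped vulnerable votes; to make $p$ win, the weight transferred to $p$ must be at least $K$, so $\sum_{i\in S}w_i\geq K$, while the budget forces $\sum_{i\in S}w_i\leq K$, whence $S$ is an exact half-partition. I would verify that candidate $b$ (or a third ``spoiler'' block) prevents $p$ from winning by collecting less than $K$, so that the threshold is genuinely tight.

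The main obstacle I anticipate is entirely in the calibration of the fixed non-vulnerable block with only three candidates: I must simultaneously (i) make $p$ \emph{not} the current winner so that the vulnerable votes are genuinely vulnerable (each such voter must strictly prefer $p$ to the actual winner), (ii) force the overtaking threshold to be exactly $K$ under plurality scoring, and (iii) ensure flipping a vulnerable vote both adds to $p$ and subtracts from the front-runner by the same amount, so that price and progress stay perfectly synchronized. Achieving all three with integer weights in a three-candidate plurality contest, and handling the tie-breaking so that reaching exactly $K$ makes $p$ the \emph{unique} winner, is the delicate part; I expect to use a standard padding argument (adding a large common offset of fixed votes) to realize the desired score gaps, appealing to the fact that plurality scores from weighted votes are simply sums of weights and can be set freely.
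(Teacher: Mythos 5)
Your proposal is correct and follows essentially the same route as the paper's proof: a reduction from \textsc{Partition} over three candidates in which each weight-$w_i$ vote has price $w_i$, the budget is $K$, and the fixed block is calibrated so that $p$ must gain at least $K$ first-place weight to win while the budget caps the gain at $K$. The paper's concrete calibration sidesteps your anticipated difficulty (iii) by making $b$ the sole current winner with a fixed score of $3K$ (untouched by the flips, which only move weight from $a$ to $p$), so only the price-versus-gain coupling matters, exactly as in your forward and reverse arguments.
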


\begin{proof}
 The problem is clearly in \NPshort{}. We reduce an arbitrary instance of \textsc{Partition} to an instance of \textsc{Frugal-\$bribery} for the plurality voting rule. Let $(W,2K),$ with $W=\{w_1, \ldots, w_n\}$ and $\sum_{i=1}^n w_i=2K$, be an arbitrary instance of the \textsc{Partition} problem. The candidates are $p, a,$ and $b$. We will add votes in such a way that makes $b$ win the election. The distinguished candidate is $p$. For every $i\in [n]$, there is a vote $a\succ p\succ b$ of both weight and price $w_i$. There are two votes $b\succ p\succ a$ of weight $3K$ (we do not need to define the price of this vote since it is non-vulnerable) and $p\succ a\succ b$ of both weight and price $2K+1$. The tie-breaking rule is ``$a\succ b\succ p$''. We define the budget to be $K$.

 We claim that the two instances are equivalent. Suppose there exists a $W^{\prime} \subset W$ such that $\sum_{w\in W^{\prime}} w = K$. We change the votes corresponding to the weights in $W^{\prime}$ to $p\succ a\succ b$. This makes $p$ win the election with a score of $3K+1$. To prove the other direction, for $p$ to win, its score must increase by at least $K$. Also, the prices ensure that $p$'s score can increase by at most $K$. Hence, $p$'s score must increase by exactly by $K$ and the only way to achieve this is to increase its score by changing the votes corresponding to the weights in $W$. Thus, $p$ can win only if there exists a $W^{\prime} \subset W$ such that $\sum_{w\in W^{\prime}} w = K$.
\end{proof}
 Next we show the hardness result for the maximin voting rule.
\begin{theorem}\label{thm:wfrugalMaxmin}
 The \textsc{Frugal-bribery} problem is \NPC{} for the maximin voting rule for four candidates.
\end{theorem}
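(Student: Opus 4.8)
The plan is to establish \NP{}-hardness by a reduction from the \textsc{Partition} problem, in the same spirit as the plurality reduction in \Cref{thm:wfrugalPluNPC} and the weighted manipulation hardness results of Conitzer et al.~\cite{conitzer2007elections}. Membership in \NPshort{} is immediate since, given the set of bribed votes and their modified preferences, we can check in polynomial time whether the budget is respected and whether $p$ wins under maximin. For the hardness direction, I would start from an instance $(W, 2K)$ of \textsc{Partition} with $W = \{w_1, \ldots, w_n\}$ and $\sum_{i=1}^n w_i = 2K$, and build a four-candidate weighted election over $\CC = \{p, a, b, c\}$ in which the candidate $p$ is initially losing but can be made a maximin winner precisely when a subset of weights summing to $K$ can be shifted in $p$'s favor.

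The key design idea is to control the pairwise margins $D_\EE(\cdot,\cdot)$ so that $p$'s maximin score is governed by a single critical pairwise contest whose margin moves by exactly the total weight of the bribed vulnerable votes. First I would introduce, for each $i \in [n]$, a vulnerable vote of weight $w_i$ whose current ranking places $p$ above the current winner (so that it is genuinely $p$-vulnerable) but whose modification can transfer weight $w_i$ into the relevant head-to-head comparison for $p$. I would then add a fixed block of non-vulnerable ``padding'' votes (of large weight) to preset the remaining pairwise margins among $a, b, c$ and between $p$ and the others, so that after the briber acts, $p$ becomes a maximin winner if and only if $p$'s worst pairwise margin crosses a threshold that corresponds exactly to accumulating weight $K$ from the vulnerable votes. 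The four-candidate setting (as opposed to three) gives the extra freedom needed to create a cyclic or near-balanced structure among $a,b,c$ that pins down each candidate's maximin score independently, which is why three candidates is insufficient here (consistent with \Cref{lem:wfrugalEasy}). The budget would be set to $K$ and, since this is \textsc{Frugal-bribery}, all vulnerable votes would have cost zero — meaning I instead engineer the threshold so that the briber must change a subset of vulnerable votes whose weights sum to at least (and, by a matching argument, exactly) $K$.

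For correctness, in the forward direction I would take a partition witness $W' \subseteq W$ with $\sum_{w \in W'} w = K$, change exactly the vulnerable votes indexed by $W'$ to rank $p$ at the top (verifying that this is an admissible change for \emph{vulnerable} votes, i.e. the voters still prefer the new outcome $p$ to the old winner), and check that the resulting margins make $p$'s minimum pairwise margin dominate every other candidate's maximin score under the stated tie-breaking rule. In the reverse direction, I would argue that $p$'s worst pairwise margin increases by at most the total weight of changed vulnerable votes, so reaching the winning threshold forces that total to be at least $K$; a complementary upper-bound constraint built into the padding votes (mirroring the role of the weight-$(2K+1)$ vote in \Cref{thm:wfrugalPluNPC}) forces the total to be at most $K$, hence exactly $K$, yielding the partition.

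The main obstacle I anticipate is the \emph{simultaneous} control of all three relevant pairwise margins for $p$ against $a$, $b$, and $c$: since the maximin score is a minimum over opponents, I must ensure that shifting weight helps $p$ against its single bottleneck opponent while the margins against the other two opponents remain comfortably above the winning threshold and are not inadvertently degraded by the same vote changes. Getting the vulnerable votes to move weight into exactly one critical contest, while keeping the other contests fixed and keeping each such vote genuinely $p$-vulnerable (so that the frugality restriction is satisfied and the briber is allowed to alter it), is the delicate part; I expect to handle it by choosing the internal orderings of $p, a, b, c$ within the vulnerable and padding votes so that only the $p$-versus-bottleneck margin is sensitive to the bribe, and by invoking a construction like \Cref{thm:mcgarvey} (appropriately adapted to weighted votes) to realize the prescribed padding margins exactly.
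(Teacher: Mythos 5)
Your high-level plan (reduce from \textsc{Partition}, four weighted candidates $p,a,b,c$, vulnerable votes of weight $w_i$) matches the paper, but the central mechanism you propose has a genuine gap. You want $p$'s maximin score to be ``governed by a single critical pairwise contest whose margin moves by exactly the total weight of the bribed vulnerable votes,'' with a lower bound from the winning threshold and an upper bound ``built into the padding votes, mirroring the role of the weight-$(2K+1)$ vote'' in the plurality reduction. But that upper bound in \Cref{thm:wfrugalPluNPC} comes from the \emph{budget} of \textsc{Frugal-\$bribery}; here there is no budget and no prices. If, as you propose, the vulnerable votes are engineered so that only the $p$-versus-bottleneck margin is sensitive to the bribe and ``the other contests remain fixed,'' then changing additional vulnerable votes only (weakly) increases $p$'s margins --- the construction is monotone in the set of changed votes, so ``$p$ can be made to win'' becomes equivalent to ``some subset of $W$ sums to \emph{at least} $K$,'' which is always true (take all of $W$). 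The reduction would then map every \textsc{Partition} instance to a \textsc{Yes}-instance. Any working construction must make over-shifting \emph{hurt} $p$, which forces you back into the simultaneous control of several margins that you flag as the main obstacle but do not resolve.

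The paper sidesteps this by inverting the roles: $p$ is placed on top of every vulnerable vote and at the bottom of every fixed vote, so $D(p,x)=-K$ for each $x\in\{a,b,c\}$ and $p$'s maximin score is \emph{frozen} at $-K$ no matter what the briber does. The entire encoding lives in the sub-orderings of $a,b,c$ inside the vulnerable votes: with three fixed votes $c\succ a\succ b\succ p$, $b\succ c\succ a\succ p$, $a\succ c\succ b\succ p$ of weight $K$ each, the only way to drag \emph{all} of $a,b,c$ down to maximin score $-K$ (so that tie-breaking hands the win to $p$) is to create a $3$-cycle with every pairwise margin exactly $K$, and this balance condition forces the total weight of vulnerable votes ordered $p\succ a\succ b\succ c$ to be exactly $K$ --- too much weight one way raises $a$'s bottleneck above $-K$, too little raises $b$'s. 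That two-sided balance among the opponents, not a threshold on $p$'s own score, is what yields the ``exactly $K$'' characterization; your sketch is missing this idea. (A minor additional point: vulnerability is a property of the original vote relative to the current winner, so you need not verify anything about the modified votes' preferences --- the briber may rewrite a vulnerable vote arbitrarily.)
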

\begin{proof}
 The problem is clearly in \NPshort{}. We reduce an arbitrary instance of \textsc{Partition} to an instance of \textsc{Frugal-bribery} for the maximin voting rule. Let $(W,2K),$ with $W=\{w_1, \ldots, w_n\}$ and $\sum_{i=1}^n w_i = 2K$, be an arbitrary instance of the \textsc{Partition} problem. The candidates are $p, a, b,$ and $c$. For every $i\in [n]$, there is a vote $p\succ a\succ b\succ c$ of weight $w_i$. There is one vote $c\succ a\succ b\succ p$, one $b\succ c\succ a\succ p$, and one $a\succ c\succ b\succ p$ each of weight $K$. The tie-breaking rule is ``$p\succ a\succ b\succ c$''. The distinguished candidate is $p$. Let $T$ denotes the set of votes corresponding to the weights in $W$ and the rest of the votes $S$. Notice that only the votes in $T$ are vulnerable. We claim that the two instances are equivalent. 
 
 Suppose there exists a $W^{\prime} \subset W$ such that $\sum_{w\in W^{\prime}} w = K$. We change the votes corresponding to the weights in $W^{\prime}$ to $p\succ a\succ b\succ c$. We change the rest of the votes in $T$ to $p\succ b\succ c\succ a$. The maximin score of every candidate is $-K$ and thus due to the tie-breaking rule, $p$ wins the election.
 
 On the other hand, suppose there is a way to change the vulnerable votes, that is the votes in $T$, that makes $p$ win the election. Without loss of generality, we can assume that all the votes in $T$ place $p$ at top position. First notice that the only way $p$ could win is that the vertices $a, b,$ and $c$ must form a cycle in the weighted majority graph. Otherwise, one of $a, b,$ and $c$ will be a Condorcet winner and thus the winner of the election. Now we show that the candidate $b$ must defeat the candidate $c$. If not, then $c$ must defeat $b$ by a margin of $K$ since the maximin score of $p$ is fixed at $-K$. Also, $a$ must defeat $c$ by a margin of $K$, otherwise the maximin score of $c$ will be more than $-K$. This implies that all the votes in $T$ must be $p\succ a\succ c\succ b$ which makes $a$ defeat $b$. This is a contradiction since the vertices $a, b,$ and $c$ must form a cycle in the weighted majority graph. Hence $b$ must defeat $c$ by a margin of $K$. This forces every vote in $T$ to prefer $b$ over $c$. Hence, without loss of generality, we assume that all the votes in $T$ are either $p\succ a\succ b\succ c$ or $p\succ b\succ c\succ a$, since whenever $c$ is right after $a$, we can swap $a$ and $c$ and this will only reduce the score of $a$ without affecting the score of any other candidates. If the total weight of the votes $p\succ a\succ b\succ c$ in $T$ is more than $K$, then $D_E(c,a) < K$, thereby making the maximin score of $a$ more than the maximin score of $p$. If the total weight of the votes $p\succ a\succ b\succ c$ in $T$ is less than $K$, then $D_E(a,b) < K$, thereby making the maximin score of $b$ more than the maximin score of $p$. Thus the total weight of the votes $p\succ a\succ b\succ c$ in $T$ should be exactly $K$ which corresponds to a partition of $W$.
\end{proof}
 We now prove the hardness result for the STV voting rule.
\begin{theorem}\label{thm:stv_wt}
 The \textsc{Frugal-bribery} problem is \NPC{} for the STV voting rule for three candidates.
\end{theorem}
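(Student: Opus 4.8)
The plan is to reduce from the $\frac{1}{4}$-\textsc{Partition} problem (shown \NPC{} in \Cref{lem:part}), exploiting the sequential elimination structure of STV, which makes it very sensitive to small perturbations in the plurality scores of trailing candidates. Membership in \NPshort{} is immediate, since given a set of changed vulnerable votes we can compute the STV winner in polynomial time. For the hardness direction, I would set up an election over three candidates $\{p,a,b\}$ (with $p$ the distinguished candidate), where the non-vulnerable votes and the vulnerable votes together are arranged so that, in the original profile, the winner is some candidate other than $p$, and where $p$ can be made to win precisely when a subset of the weight-multiset summing to $K$ can be ``redirected'' by the briber.

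The key idea I would carry out is the following. Encode each integer $w_i \in W$ as a vulnerable vote of weight $w_i$ whose top choice is some candidate other than $p$ (so it can be legitimately changed, since $p$ is preferred by these voters to the current winner after the redirection). First I would fix the non-vulnerable votes using a construction along the lines of \Cref{thm:wfrugalPluNPC} and \Cref{thm:wfrugalMaxmin}, choosing the base plurality tallies so that in round one exactly one of $a,b$ is eliminated, and control which one is eliminated by how much weight the briber shifts toward $p$. The target is to make the round-one elimination, and hence the transfer of votes in round two, go in $p$'s favour only when the shifted weight is exactly $K$ (out of total $4K$); the factor-of-four gap in $\frac{1}{4}$-\textsc{Partition} gives the slack needed to separate the ``too little'' and ``too much'' cases with strict inequalities, just as the prices $2K+1$ and weight $3K$ were used to pin down an exact value in \Cref{thm:wfrugalPluNPC}. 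I would then verify both directions: a partition set $W'$ with $\sum_{w\in W'}w = K$ yields a redirection of the corresponding vulnerable votes making $p$ the STV winner; conversely, any successful frugal bribery forces exactly weight $K$ to be shifted, and the set of shifted votes recovers the desired subset.

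I expect the main obstacle to be the tie-breaking and round-by-round bookkeeping intrinsic to STV: unlike plurality or maximin, I cannot read off the winner from a single score comparison, so I must argue about what happens in \emph{both} elimination rounds and ensure no unintended candidate survives. In particular I would need to rule out the possibility that the briber achieves $p$'s victory through an elimination pattern not corresponding to a genuine partition (for instance, by over-shifting weight and changing \emph{which} of $a,b$ is eliminated first in a way that still elects $p$). The fix will be to calibrate the constants so that over-shifting causes $p$ itself to be eliminated in round one, or to tie with and lose to the surviving candidate under the fixed tie-breaking order, and under-shifting leaves the original non-$p$ winner intact; the $\frac14$-scaling is precisely what buys room for these strict separations. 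Once the arithmetic of the three base tallies is set correctly, the equivalence of the two instances should follow by the same style of case analysis used in the preceding weighted-election theorems.
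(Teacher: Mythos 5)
Your choice of source problem ($\frac{1}{4}$-\textsc{Partition}) and your high-level instinct --- exploit the sensitivity of the round-one elimination to small shifts in plurality weight --- match the paper, but the gadget is oriented backwards, and in a way that I believe cannot be repaired. You insist that each weight-$w_i$ vulnerable vote tops a candidate \emph{other} than $p$ and that the briber shifts weight \emph{toward} $p$. First, note that vulnerability is defined relative to the winner of the \emph{original} profile, not the profile ``after the redirection'': a vote is $p$-vulnerable iff it ranks $p$ above the current winner, so any vote with $p$ on top is automatically vulnerable. With three candidates, a vulnerable vote that does \emph{not} top $p$ must have $p$ second and the current winner $d$ last, and no non-vulnerable vote may rank $p$ above $d$. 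Consequently $p$ has zero first-round support outside the bribed votes, and $p$'s round-two tally can never exceed the total vulnerable weight $4K$; meanwhile $d$'s first-round tally alone must already be at least the weight ranking the third candidate above $d$ (which includes all $4K$ of the vulnerable votes) for $d$ to have won the original runoff. So $p$ can never be elected under your encoding, no matter how the constants are calibrated. Second, your proposed repair --- ``over-shifting causes $p$ itself to be eliminated in round one'' --- is internally inconsistent: shifting more weight onto the top of $p$'s pile only \emph{raises} $p$'s round-one score. More generally, in your setup every effect of increasing the shifted weight is weakly favourable to $p$, so at best you obtain a threshold condition rather than the exact-sum condition needed to encode a partition.

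The paper's construction resolves this by putting $p$ on \emph{top} of the vulnerable votes and making the briber move weight \emph{away} from $p$: the votes are $p \succ a \succ b$ with weights $w_i$ (total $4K$), plus a non-vulnerable $a \succ p \succ b$ of weight $3K-1$ and $b \succ a \succ p$ of weight $2K$, with tie-breaking $a \succ b \succ p$. Initially $b$ is eliminated and $a$ wins, so the $w_i$-votes are vulnerable. If the briber re-tops weight $x$ of them to $b$, the scores become $4K-x$, $3K-1$, $2K+x$, and $a$ is eliminated first (whereupon its $3K-1$ transfers to $p$ and $p$ wins) iff $3K-1 < 2K+x$ and $3K-1 < 4K-x$, i.e.\ iff $x=K$ exactly. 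It is precisely the tension between ``$p$'s own score must stay above $a$'s'' and ``$b$'s score must rise above $a$'s'' that pins the shifted weight to a single value; your one-directional shift has no such opposing constraint.
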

\begin{proof}
 The problem is clearly in \NPshort{}. We reduce an arbitrary instance of $\frac{1}{4}$-\textsc{Partition} to an instance of \textsc{Frugal-bribery} for the STV voting rule. Let $(W,4K),$ with $W=\{w_1, \ldots, w_n\}$ and $\sum_{i=1}^n w_i = 4K$, be an arbitrary instance of the $\frac{1}{4}$-\textsc{Partition} problem. The candidates are $p, a,$ and $b$. For every $i\in [n]$, there is a vote $p\succ a\succ b$ of weight $w_i$. There is a vote $a\succ p\succ b$ of weight $3K-1$ and a vote $b\succ a\succ p$ of weight $2K$. The tie-breaking rule is ``$a\succ b\succ p$''. The distinguished candidate is $p$. Let $T$ denotes the set of votes corresponding to the weights in $W$ and the rest of the votes be $S$. Notice that only the votes in $T$ are vulnerable. We claim that the two instances are equivalent.
 
 Suppose there exists a $W^{\prime} \subset W$ such that $\sum_{w\in W^{\prime}} w = K$. We change the votes corresponding to the  weights in $W^{\prime}$ to $b\succ p\succ a$. We do not change the rest of the votes in $T$. This makes $p$ win the election.
 
For the other direction, suppose there is a way to change the votes in $T$ that makes $p$ win the election. First Notice that $p$ can win only if $b$ qualifies for the second round. Hence, the total weight of the votes in $T$ that put $b$ at the first position must be at least $K$. On the other hand, if the total weight of the votes in $T$ that put $b$ at the first position is strictly more than $K$, then $p$ does not qualify for the second round and thus cannot win the election. Hence the total weight of the votes in $T$ that put $b$ at the first position must be exactly equal to $K$ which constitutes a $\frac{1}{4}$-partition of $W$.
\end{proof}
 For three candidates, the STV voting rule is the same as the plurality with runoff voting rule. Hence, we have the following corollary.
\begin{corollary}\label{cor:run_wt}
 The \textsc{Frugal-bribery} problem is \NPC{} for the plurality with runoff voting rule for three candidates.
\end{corollary}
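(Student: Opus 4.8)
The plan is to observe that \Cref{cor:run_wt} is an immediate consequence of \Cref{thm:stv_wt}, so no independent argument is required. The key structural fact is that the STV voting rule and the plurality with runoff voting rule coincide whenever the number of candidates is exactly three. This is because in the STV rule with three candidates, exactly one candidate is eliminated in the first round --- namely the one with the least plurality score --- and its votes are transferred to the next preferred candidate, leaving two candidates between whom the pairwise winner is declared the overall winner. This is precisely the definition of the plurality with runoff rule: select the top two candidates by plurality score and output the pairwise winner among them. Since these two rules produce identical winners on every three-candidate profile (using the same tie-breaking order), they are literally the same rule in this restricted setting.

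Given this equivalence, the proof would simply note that the reduction constructed in the proof of \Cref{thm:stv_wt} is already a valid reduction for the plurality with runoff voting rule. Recall that the hardness instance there uses only three candidates $p, a, b$, reduces from $\frac{1}{4}$-\textsc{Partition}, and the entire correctness argument turns on whether $b$ qualifies for the second round (equivalently, whether $b$ survives the single STV elimination) and on the pairwise contest $p$ versus $a$ in that round. Every step of that argument --- the total weight of votes placing $b$ first needing to be exactly $K$, the qualification of $p$ for the runoff, and the final pairwise comparison --- is phrased entirely in terms of first-round plurality scores and a single subsequent pairwise election, which is exactly the mechanics of plurality with runoff. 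Thus the same preference profile, the same distinguished candidate $p$, the same designation of vulnerable votes $T$, and the same tie-breaking rule ``$a\succ b\succ p$'' witness the equivalence between the $\frac{1}{4}$-\textsc{Partition} instance and the \textsc{Frugal-bribery} instance under plurality with runoff.

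There is no real obstacle here, which is why the statement is phrased as a corollary rather than a theorem. The only point requiring minor care is to confirm that the tie-breaking conventions used in the definition of STV on three candidates match those of plurality with runoff, so that the winner genuinely coincides on the constructed profiles; since the reduction fixes a single lexicographic tie-breaking order and that order governs both the elimination step and the pairwise step identically in the two rules, this is immediate. Membership in \NPshort{} is inherited verbatim from the STV case. Hence the full proof amounts to a single sentence invoking the coincidence of the two rules on three candidates together with \Cref{thm:stv_wt}, and I would present it exactly as such.
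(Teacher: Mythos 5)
Your proposal is correct and is exactly the paper's argument: the paper derives \Cref{cor:run_wt} in one line by noting that STV and plurality with runoff coincide on three-candidate profiles, so \Cref{thm:stv_wt} applies verbatim. Your additional remarks about tie-breaking and the structure of the reduction are sound but not needed beyond that observation.
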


We turn our attention to the Copeland$^{\alpha}$ voting rule next.
\begin{theorem}\label{thm:wfrugalCopeland}
 The \textsc{Frugal-bribery} problem is \NPC{} for the Copeland$^{\alpha}$ voting rule for four candidates, whenever $\alpha\in[0,1)$.
\end{theorem}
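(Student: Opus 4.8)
The plan is to reduce from \textsc{Partition}, following the same template used in \Cref{thm:wfrugalMaxmin,thm:stv_wt} for weighted elections on a small candidate set. First I would set up the candidate set $\CC = \{p, a, b, c\}$ with $p$ as the distinguished candidate, and encode a \textsc{Partition} instance $(W, 2K)$, $W = \{w_1, \ldots, w_n\}$, $\sum_i w_i = 2K$, by introducing for each $i \in [n]$ a vulnerable vote of weight $w_i$ that ranks $p$ first (say $p \succ a \succ b \succ c$). The remaining non-vulnerable votes are a small number of fixed-weight (proportional to $K$) votes whose sole purpose is to engineer the pairwise margins $D_E(\cdot,\cdot)$ among $a, b, c$ so that, in the reported profile, these three candidates beat $p$ and form a suitable configuration in the weighted majority graph. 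Since only $p$-vulnerable votes may be changed and $p$ is already at the bottom against $a,b,c$, the only freedom the briber has is to permute $a, b, c$ within the weight-$w_i$ votes while keeping $p$ on top.

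The heart of the argument is the same idea as in the maximin and STV reductions: the Copeland$^\alpha$ score of $p$ can be forced to depend on how a threshold quantity $K$ is met. The plan is to arrange the non-vulnerable votes so that $p$ can defeat (or tie, depending on $\alpha$) each of $a, b, c$ in pairwise comparison precisely when the reassigned vulnerable weights partition $W$ evenly; simultaneously, I would keep $a, b, c$ in a cyclic (Condorcet-paradox) relationship among themselves so that none of them is a Condorcet winner and each accrues exactly one pairwise victory among $\{a,b,c\}$. The key numerical step is to choose the fixed weights so that each of the three pairwise contests $p$ vs.\ $a$, $p$ vs.\ $b$, $p$ vs.\ $c$ flips in $p$'s favor exactly at the margin $K$, forcing the total weight of vulnerable votes ranking, say, $a$ ahead of $b$ to be exactly $K$, which corresponds to a valid partition $W' \subset W$. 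The role of the hypothesis $\alpha \in [0,1)$ is that when $\alpha < 1$, a candidate strictly prefers a pairwise \emph{win} to a \emph{tie}, so $p$ gaining outright victories (rather than ties) against $a, b, c$ is strictly rewarded; this is what lets the construction distinguish the yes and no cases through Copeland$^\alpha$ scores, whereas $\alpha = 1$ collapses the distinction between winning and tying and would break the gadget.

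The forward direction is routine: given a partition $W' \subset W$ with $\sum_{w \in W'} w = K$, I would change the vulnerable votes indexed by $W'$ to one permutation of $a,b,c$ and the rest to another, check that $p$ achieves the required pairwise outcomes, and conclude via the tie-breaking order ``$p \succ a \succ b \succ c$'' that $p$ wins. The reverse direction requires the careful case analysis familiar from \Cref{thm:wfrugalMaxmin}: assuming the vulnerable votes have been changed to make $p$ win, I would first argue (using that $p$ starts at the bottom) that $a, b, c$ must be in a cycle in the weighted majority graph (otherwise a Condorcet winner among them beats $p$), then show that the pairwise margins among $a,b,c$ force a canonical form on the altered votes (without loss of generality each is one of two permutations with $p$ on top), and finally deduce that the total vulnerable weight achieving one of these two forms must be exactly $K$.

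The main obstacle I anticipate is calibrating the fixed non-vulnerable weights so that all three pairwise margins with $p$ straddle zero at the single common threshold $K$, while keeping $a, b, c$ in a proper cycle across \emph{every} admissible reassignment of the vulnerable votes — the Copeland score counts discrete pairwise wins, so I must ensure no unintended pairwise contest flips and that the parity/integrality of the weights does not accidentally create ties (which $\alpha < 1$ penalizes but which could otherwise let an off-target candidate sneak in). I would most likely need an auxiliary padding argument, possibly adding a dummy element to $W$ as in \Cref{lem:part}, to guarantee the margins land on exact integer thresholds and that the three simultaneous constraints are mutually consistent; verifying this consistency across all four candidates is the delicate bookkeeping step.
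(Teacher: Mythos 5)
Your high-level template (a reduction from \textsc{Partition}, four candidates, vulnerable votes of weight $w_i$ with $p$ on top, fixed votes of weight roughly $K$, and a case analysis in the reverse direction) matches the paper's, but the core gadget you describe cannot work, for a reason you yourself set up: once every vulnerable vote ranks $p$ first both before and after bribery, every pairwise margin involving $p$ is invariant under any admissible change. Hence the three contests $p$ vs.\ $a$, $p$ vs.\ $b$, $p$ vs.\ $c$ can never ``flip at the margin $K$''; worse, since you arrange for all of $a,b,c$ to beat $p$ in the reported profile, $p$'s Copeland$^\alpha$ score is frozen at $0$ and $p$ can never win. Your fallback mechanism, a cycle among $a,b,c$, also fails to encode the partition: a weighted cycle on three candidates is realizable whenever the vulnerable weight can be split into three parts satisfying the triangle inequality, which is far weaker than an exact bipartition into two halves of weight $K$, so \NO{} instances of \textsc{Partition} would still admit successful briberies.

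The paper's gadget works differently, and the difference is exactly the piece you flag as a nuisance to be engineered away. The fixed votes are $a\succ p\succ b\succ c$ and $c\succ b\succ a\succ p$, each of weight $K+1$, so $a$ beats $p$ by a margin of $2$ no matter what the briber does, while $p$ beats $b$ and $c$ no matter what: $p$ is pinned at Copeland score exactly $2$. With tie-breaking $a\succ b\succ c\succ p$ (against $p$, not for it), the briber must hold each of $a,b,c$ strictly below score $2$. A short case analysis shows this is impossible if the three contests among $a,b,c$ are all strict (either $a$ wins one of them and reaches $2$, or the winner of $b$ vs.\ $c$ beats both $a$ and the other and reaches $2$), so at least one of those contests must be a \emph{tie}; since the two fixed votes contribute $0$ to each of the three margins among $a,b,c$, a tie forces the vulnerable votes preferring one candidate to the other to have total weight exactly $K$, which is precisely a partition of $W$. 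This is also where $\alpha\in[0,1)$ enters: the tied pair each earn only $\alpha<1$ from the tie, keeping their scores at $1+\alpha<2$, whereas $\alpha=1$ would let them tie with $p$ and win by tie-breaking. So the tie is not an accident to be avoided; it is the carrier of the partition constraint, and without it your reduction has no mechanism that distinguishes yes- from no-instances.
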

\begin{proof}
 The problem is clearly in \NPshort{}. We reduce an arbitrary instance of \textsc{Partition} to an instance of \textsc{Frugal-bribery} for the Copeland$^{\alpha}$ voting rule. Let $(W,2K),$ with $W=\{w_1, \ldots, w_n\}$ and $\sum_{i=1}^n w_i = 2K$, be an arbitrary instance of the \textsc{Partition} problem. The candidates are $p, a, b,$ and $c$. For every $i\in [n]$, there is a vote $p\succ a\succ b\succ c$ of weight $w_i$. There are two votes $a\succ p\succ b\succ c$ and $c\succ b\succ a\succ p$ each of weight $K+1$.  The tie-breaking rule is ``$a\succ b\succ c\succ p$''. The distinguished candidate is $p$. Let $T$ denotes the set of votes corresponding to the weights in $W$ and the rest of the votes be $S$. Notice that only the votes in $T$ are vulnerable. We claim that the two instances are equivalent.
 
 Suppose there exists a $W^{\prime} \subset W$ such that $\sum_{w\in W^{\prime}} w = K$. We change the votes corresponding to the  weights in $W^{\prime}$ to $p\succ c\succ b\succ a$. We change the rest of the votes in $T$ to $p\succ b\succ c\succ a$. This makes $p$ win the election with a Copeland$^{\alpha}$ score of two.
 
 On the other hand, suppose there is a way to change the votes in $T$ that makes $p$ win the election. Without loss of generality, we can assume that all the votes in $T$ place $p$ at top position. We will show that one of the three pairwise elections among $a, b,$ and $c$ must be a tie. Suppose not, then $a$ must lose to both $b$ and $c$, otherwise $a$ wins the election due to the tie-breaking rule. Now consider the pairwise election between $b$ and $c$. If $b$ defeats $c$, then $b$ wins the election due to the tie-breaking rule. If $c$ defeats $b$, then $c$ wins the election again due to the tie-breaking rule. Hence, one of the pairwise elections among $a, b,$ and $c$ must be a tie. Without loss of generality suppose $a$ and $b$ ties. Then the total weight of the votes that prefer $a$ to $b$ in $T$ must be $K$ which constitutes a partition of $W$.
\end{proof}

Finally, we show that the \textsc{Frugal-bribery} problem for the Bucklin voting rule is \NPC{}.
\begin{theorem}\label{thm:bucklin_wt}
 The \textsc{Frugal-bribery} problem is \NPC{} for the Bucklin voting rule for four candidates.
\end{theorem}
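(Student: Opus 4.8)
Membership in \NPshort{} is immediate: a certificate is the set of changed vulnerable votes together with their new preference orders, and we can verify in polynomial time both that only vulnerable votes are changed and that $p$ wins. For hardness, the plan is to reduce from \textsc{Partition}, mirroring the four-candidate weighted reductions already carried out in this section for maximin (\Cref{thm:wfrugalMaxmin}) and Copeland (\Cref{thm:wfrugalCopeland}). Given an instance $(W, 2K)$ with $W = \{w_1, \ldots, w_n\}$ and $\sum_i w_i = 2K$, I would use the candidate set $\{p, a, b, c\}$, attach a vulnerable vote of weight $w_i$ of the form $p \succ a \succ b \succ c$ for each $i \in [n]$ (so that the ``partition mass'' lives among the movable votes), and add a fixed block of non-vulnerable votes of weight proportional to $K$ designed to control the Bucklin thresholds of $a$, $b$, and $c$ relative to $p$.

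\textbf{Key steps.} First I would fix the non-vulnerable block so that, before any bribery, $p$ loses (the current winner is one of $a,b,c$), and so that $p$ cannot reach a majority within the top two positions unless essentially \emph{all} the vulnerable votes rank $p$ first --- which is allowed, since all $n$ weighted votes are vulnerable with respect to $p$. Second, and this is the crux, I would engineer the block so that placing $p$ on top in every vulnerable vote is necessary but not sufficient: $p$ achieving majority in the top two positions must force a specific total weight, namely $K$, to be assigned to one of the remaining candidates (say $b$) in the second position across the vulnerable votes, while the complementary weight $2K-K=K$ goes to another candidate, so that no single competitor reaches majority before $p$ does. The tie-breaking order (analogous to ``$p \succ a \succ b \succ c$'' or its reverse as used earlier) would be chosen to make $p$ the unique Bucklin winner exactly when such a balanced split exists. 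Third, I would verify both directions: a \textsc{Partition} witness $W' \subset W$ with $\sum_{w \in W'} w = K$ yields a bribery (route the $W'$-votes one way and the rest the other), and conversely any successful bribery induces, by the threshold constraints, a subset of weight exactly $K$.

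\textbf{Main obstacle.} The Bucklin rule's winner depends on the \emph{round} $\ell$ at which some candidate first attains a strict majority within the top $\ell$ positions, so the scores interact across thresholds rather than additively as in a scoring rule. The delicate part will be calibrating the non-vulnerable block so that the \emph{only} way for $p$ to win is through an exact $K/K$ split: I must simultaneously prevent any competitor from getting majority in the first round, ensure $p$ gets majority in the second round precisely under the balanced assignment, and rule out ``degenerate'' briberies where $p$ wins through an unbalanced split or through a different round. Because Bucklin allows multiple candidates to cross the majority line in the same round, I expect to need a careful argument --- as in the maximin proof where a cycle among $a,b,c$ was forced --- showing that any winning bribery can be normalized (e.g.\ by swapping adjacent candidates without lowering $p$'s standing) into the canonical two forms $p \succ b \succ c \succ a$ and $p \succ c \succ b \succ a$, at which point the weight of one form must equal $K$. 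Getting the four weights of the fixed block to make all these constraints tight simultaneously, while keeping the total an integer consistent with $\sum_i w_i = 2K$, is the computation I would have to push through with care.
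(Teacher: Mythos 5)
Your plan follows essentially the same route as the paper's proof: a reduction from \textsc{Partition} with candidates $\{p,a,b,c\}$, vulnerable votes $p \succ a \succ b \succ c$ of weight $w_i$, a fixed non-vulnerable block of weight proportional to $K$, and the key forcing argument that the $2K$ of vulnerable weight must split exactly $K/K$ between two competitors' second positions. The only calibration difference is that in the paper's construction (fixed votes $a\succ b\succ p\succ c$ and $c\succ b\succ a\succ p$, each of weight $K$) the candidate $p$ wins at round three rather than two; the forcing comes from the facts that $b$ may never occupy a second position in a changed vote and that neither $a$ nor $c$ may accumulate more than $2K$ of top-two weight, which pins the second-position weights of $a$ and $c$ at exactly $K$ each.
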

\begin{proof}
 The problem is clearly in \NPshort{}. We reduce an arbitrary instance of \textsc{Partition} to an instance of \textsc{Frugal-bribery} for the Bucklin voting rule. Let $(W,2K),$ with $W=\{w_1, \ldots, w_n\}$ and $\sum_{i=1}^n w_i = 2K$, be an arbitrary instance of the \textsc{Partition} problem. The candidates are $p, a, b,$ and $c$. For every $i\in [n]$, there is a vote $p\succ a\succ b\succ c$ of weight $w_i$. There are two votes $a\succ b\succ p\succ c$ and $c\succ b\succ a\succ p$ each of weight $K$.  The tie-breaking rule is ``$p\succ a\succ b\succ c$''. The distinguished candidate is $p$. Let $T$ denote the set of votes corresponding to the weights in $W$ and the rest of the votes be $S$. Notice that only the votes in $T$ are vulnerable. We claim that the two instances are equivalent.
 
 Suppose there exists a $W^{\prime} \subset W$ such that $\sum_{w\in W^{\prime}} w = K$. We change the votes corresponding to the  weights in $W^{\prime}$ to $p\succ c\succ b\succ a$. This makes $p$ win the election with a Bucklin score of three.
 
 
To prove the result in the other direction, suppose there is a way to change the votes in $T$ that makes $p$ win the election. Without loss of generality, we can assume that all the votes in $T$ place $p$ at the first position. First Notice that the Bucklin score of $p$ is already fixed at three. In the votes in $T$, the candidate $b$ can never be placed at the second position since that will make the Bucklin score of $b$ to be two. Also the total weight of the votes in $T$ that place $a$ in their second position can be at most $K$. The same holds for $c$. Hence, the total weight of the votes that place $a$ in their second position will be exactly equal to $K$ which constitutes a partition of $W$.
\end{proof}

We also have the following results for the Copeland$^\alpha$ and Bucklin voting rules by reducing from \textsc{Partition}.
\begin{theorem}\label{thm:wfrugalCopeland}
 The \textsc{Frugal-bribery} problem is \NPC{} for the Copeland$^{\alpha}$ and Bucklin voting rules for four candidates, whenever $\alpha\in[0,1)$.
\end{theorem}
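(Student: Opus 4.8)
The plan is to prove the two \NPC{} results by exhibiting, for each of the Copeland$^\alpha$ and Bucklin voting rules, a polynomial-time reduction from the \textsc{Partition} problem, which is \NPC{}~\cite{garey1979computers}. Membership in \NPshort{} is immediate in both cases: given a guessed set of altered vulnerable votes together with their replacement preferences, one can compute the winner in polynomial time and verify that only vulnerable votes were changed. So the whole effort goes into the \NPshort{}-hardness direction, and in fact I would simply combine the two constructions already sketched in the excerpt, namely the reduction behind \Cref{thm:wfrugalCopeland} for Copeland$^\alpha$ and the reduction behind \Cref{thm:bucklin_wt} for Bucklin; the statement here is just the conjunction of those two, so nothing new is needed beyond presenting both reductions under one roof.

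For the Copeland$^\alpha$ case with $\alpha \in [0,1)$, I would take an instance $(W, 2K)$ of \textsc{Partition} with $W = \{w_1, \dots, w_n\}$ and $\sum_i w_i = 2K$, introduce four candidates $p, a, b, c$, place a weight-$w_i$ vote $p \succ a \succ b \succ c$ for each $i \in [n]$, and add the two ``heavy'' votes $a \succ p \succ b \succ c$ and $c \succ b \succ a \succ p$ each of weight $K+1$, with tie-breaking $a \succ b \succ c \succ p$ and distinguished candidate $p$. Only the weight-$w_i$ votes are vulnerable. The forward direction changes the votes indexed by $W'$ (with $\sum_{w \in W'} w = K$) to $p \succ c \succ b \succ a$ and the rest to $p \succ b \succ c \succ a$, yielding Copeland$^\alpha$ score $2$ for $p$. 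The reverse direction argues that $p$ can win only if one of the three pairwise contests among $a,b,c$ is a tie (otherwise one of $a,b,c$ beats $p$ and survives tie-breaking), and a tie between the relevant pair forces the total weight of vulnerable votes preferring one candidate over the other to be exactly $K$, recovering a partition. The role of $\alpha < 1$ is exactly to keep a tie from rescuing $p$'s competitors, so I would make sure that case analysis is stated cleanly.

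For the Bucklin case, again starting from $(W,2K)$, I would use candidates $p,a,b,c$, weight-$w_i$ votes $p \succ a \succ b \succ c$, two weight-$K$ votes $a \succ b \succ p \succ c$ and $c \succ b \succ a \succ p$, tie-breaking $p \succ a \succ b \succ c$, distinguished candidate $p$, and again only the weight-$w_i$ votes vulnerable. The forward direction flips the $W'$-indexed votes to $p \succ c \succ b \succ a$ to make $p$ reach majority within the top three positions (Bucklin score $3$). The reverse direction observes that $p$'s Bucklin score is pinned at $3$, that $b$ can never be moved to the second position of a vulnerable vote (that would give $b$ a Bucklin score of $2$ and beat $p$), and that the total weight of vulnerable votes placing $a$ second, and likewise $c$ second, can each be at most $K$; forcing $a$ to exactly $K$ then yields a partition.

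The main obstacle, and the step I would be most careful about, is the reverse-direction case analysis rather than any calculation: in both reductions I must rule out \emph{all} alternative ways the briber could redistribute the vulnerable votes and still make $p$ win, showing that the only successful strategies correspond precisely to an exact balance of weight $K$. For Copeland$^\alpha$ this means a complete enumeration of the possible tie/defeat patterns in the weighted majority tournament on $\{a,b,c\}$ together with the tie-breaking order; for Bucklin it means tracking majority thresholds position-by-position and confirming no clever placement of $a$ or $c$ (or the forbidden $b$) beats the threshold argument. Since the excerpt has already carried out these arguments for \Cref{thm:wfrugalCopeland} and \Cref{thm:bucklin_wt} individually, I would lean on those and merely assemble them, noting that the two reductions share the same source instance and that their correctness proofs are independent.
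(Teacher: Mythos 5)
Your proposal is correct and coincides with the paper's own treatment: the paper proves the Copeland$^{\alpha}$ and Bucklin cases as two separate theorems via exactly the \textsc{Partition} reductions you describe (same candidate sets, same weighted votes, same tie-breaking orders, and the same forward/reverse arguments), and the combined statement is just their conjunction. Your observation about why $\alpha<1$ is needed (so that the forced tie among $p$'s competitors leaves their Copeland$^{\alpha}$ scores strictly below $2$) is also accurate.
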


 From \Cref{prop:conn}, \Cref{lem:wfrugalScr}, \Cref{thm:wfrugalPluNPC,thm:wfrugalMaxmin,thm:wfrugalCopeland,thm:stv_wt,thm:bucklin_wt}, 
and \Cref{cor:run_wt}, we get the following corollary.
\begin{corollary}
 The \textsc{Uniform-frugal-\$bribery} and the \textsc{Nonuniform-frugal-\$bribery} problems are \NPC{} for the scoring rules except 
 plurality, STV, and the plurality with runoff voting rules for three candidates and for the maximin, Copeland, and Bucklin voting rules for four candidates.
\end{corollary}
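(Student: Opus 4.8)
The final corollary asserts the \NPC{}ness of the \textsc{Uniform-frugal-\$bribery} and \textsc{Nonuniform-frugal-\$bribery} problems for a collection of voting rules, and the plan is to derive it as a direct consequence of the hardness results already established for \textsc{Frugal-bribery} together with the reduction chain of \Cref{prop:conn}. The single structural fact I would lean on is that \textsc{Frugal-bribery} $\le_\Pshort{}$ \textsc{Uniform-frugal-\$bribery} $\le_\Pshort{}$ \textsc{Nonuniform-frugal-\$bribery}, which is proved in \Cref{prop:conn} by setting all vulnerable-vote prices and the budget to zero. This means that any hardness established for \textsc{Frugal-bribery} transfers verbatim to both money-augmented variants, so the corollary is almost entirely a matter of bookkeeping over which rules already have a \textsc{Frugal-bribery} lower bound in the weighted setting.

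First I would handle membership in \NPshort{}: for every voting rule under consideration a certificate is simply the set of vulnerable votes that are changed together with their replacement preferences, and verifying that the distinguished candidate wins under the (polynomial-time computable) rule is clearly polynomial, so all the problems are in \NPshort{}. Next I would invoke the weighted \textsc{Frugal-bribery} hardness results one rule at a time: \Cref{lem:wfrugalScr} gives \NPC{}ness for every scoring rule except plurality on three candidates (covering veto, $k$-approval, $k$-veto, Borda); \Cref{thm:stv_wt} and \Cref{cor:run_wt} cover STV and plurality with runoff on three candidates; and \Cref{thm:wfrugalMaxmin}, \Cref{thm:wfrugalCopeland}, and \Cref{thm:bucklin_wt} cover maximin, Copeland$^\alpha$, and Bucklin on four candidates. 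For each such rule, the chain in \Cref{prop:conn} immediately upgrades the \textsc{Frugal-bribery} hardness to \textsc{Uniform-frugal-\$bribery} hardness, and then to \textsc{Nonuniform-frugal-\$bribery} hardness, matching exactly the candidate-count claims in the corollary statement.

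The one subtlety worth flagging — the mild ``obstacle'' — is plurality, which is deliberately excluded from the scoring-rule list in the corollary. For plurality, \textsc{Frugal-bribery} is in \Pshort{} (\Cref{thm:wfrugalP}), so the chain reduction gives nothing; instead the hardness of the \$-variants comes from \Cref{thm:wfrugalPluNPC}, which directly proves \textsc{Frugal-\$bribery} is \NPC{} for plurality on three candidates via a \textsc{Partition} reduction. Since that theorem's construction assigns genuine (nonuniform) prices and a nonzero budget, it establishes \textsc{Nonuniform-frugal-\$bribery} hardness for plurality; however the corollary's phrasing ``scoring rules except plurality'' indicates plurality is handled separately and need not be claimed under the uniform variant here. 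I would therefore state explicitly in the proof that for all the listed rules other than plurality the upgrade is via \Cref{prop:conn}, being careful to preserve the exact number of candidates ($3$ for scoring rules, STV, and runoff; $4$ for maximin, Copeland, and Bucklin) so that the corollary's parameter claims are faithfully inherited.

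Assembling these observations yields the corollary with no new gadget construction required; the proof is purely a composition of the stated reductions with a one-line \NPshort{}-membership argument, so I do not anticipate any genuinely hard step beyond ensuring that each cited weighted \textsc{Frugal-bribery} hardness result is matched to the correct rule and candidate count before applying \Cref{prop:conn}.
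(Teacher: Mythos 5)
Your proposal is correct and follows essentially the same route as the paper, which derives the corollary in one line from \Cref{prop:conn} together with \Cref{lem:wfrugalScr}, \Cref{thm:stv_wt}, \Cref{cor:run_wt}, \Cref{thm:wfrugalMaxmin}, \Cref{thm:wfrugalCopeland}, and \Cref{thm:bucklin_wt}. Your explicit treatment of \NPshort{}-membership, the preservation of candidate counts under the reduction, and the exclusion of plurality are all consistent with (and slightly more careful than) the paper's own justification.
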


\section{Conclusion}

We proposed and studied two important special cases of the \textsc{\$Bribery} problem where the briber is frugal. Our results show that even for these special cases, the bribery problem continues to be intractable, thus subsuming known hardness results in the literature. Our results reinforce that bribery is a rather hard computational problem, because of the hardness of several important special cases. This also strengthens the view that bribery, although a possible attack on an election in principle, may be infeasible in practice. 
\chapter{Summary and Future Directions}
\label{chap:summary}

\begin{quote}
{\it In this chapter, we summarize our work in this thesis and provide interesting future directions of research.}
\end{quote}

In this thesis, we studied computational complexity of three fundamental aspects of voting. We began with studying efficient strategies for eliciting preferences of voters in the first part of the thesis. We then moved on to study the problem of finding a winner under various interesting circumstances in the second part of the thesis. Finally, we showed interesting complexity theoretic results for the computational problem of controlling an election system in various forms. We now summarize our work in this thesis.

\section{Summary of Contributions}

\subsection{Part I: Preference Elicitation}

\subsubsection*{\Cref{chap:pref_elicit_peak}: \nameref{chap:pref_elicit_peak}}

In this work, we presented algorithms for eliciting the preferences of a set of voters when the preferences are single peaked on a tree. Moreover, our algorithms ask minimum number of comparison queries up to constant factors. We also presented algorithms for finding a weak Condorcet winner from a set of preferences which are single peaked on a tree by asking minimum number of comparison queries up to constant factors. We observed that, the query complexity of finding a weak Condorcet is much less than the query complexity for preference elicitation.

\subsubsection*{\Cref{chap:pref_elicit_cross}: \nameref{chap:pref_elicit_cross}}

In this work, we presented preference elicitation algorithm for single crossing preference profiles. We studied this problem when an ordering of the voters with respect to which the profile is single crossing is {\em known} and when it is {\em unknown}. We also considered different access models: when the votes can be accessed at random, as opposed to when they are coming in a predefined sequence. In the sequential access model, we distinguished two cases when the ordering is known: the first is that sequence in which the votes appear is also a single-crossing order, versus when it is not. Our algorithms ask for minimum number of comparison queries up to constant factors for all the above situations except one when we have a large number of voters.

\subsection{Part II: Winner Determination}

\subsubsection*{\Cref{chap:winner_prediction_mov}: \nameref{chap:winner_prediction_mov}}

We presented efficient algorithms for predicting the winner of an election as well as estimating the margin of victory of an election. We also showed interesting lower bounds for sample complexity of these problems which establish that our algorithms are often optimal up to constant factors.

\subsubsection*{\Cref{chap:winner_stream}: \nameref{chap:winner_stream}}

In this work, we studied the space complexity for determining approximate winners in the setting where votes are inserted continually into a data stream. We showed that allowing randomization and approximation indeed allows for much more space-efficient algorithms. Moreover, our bounds are tight in certain parameter ranges.

\subsection{Part III: Election Control}

\subsubsection*{\Cref{chap:kernel}: \nameref{chap:kernel}}

In this work, we proved that the possible winner problem does not admit any efficient preprocessing rules, more formally any kernelization algorithms for many common voting rules including scoring rules, maximin, Copeland, ranked pairs, and Bucklin when parameterized by the number of candidates. However, we showed that the coalitional manipulation problem which is an important special case of the possible winner problem does admit polynomial time kernelization algorithms parameterized by the number of candidates.

\subsubsection*{\Cref{chap:partial}: \nameref{chap:partial}}

In this chapter, we pursued a comprehensive study of manipulation with incomplete votes. We proposed three natural extension of manipulation in the usual complete information setting to the incomplete information setting namely, weak manipulation, opportunistic manipulation, and strong manipulation. We completely resolved computational complexity of all the three problems for many common voting rules including plurality, veto, $k$-approval, $k$-veto, Borda, maximin, Copeland, Bucklin, and Fallback voting rules.

\subsubsection*{\Cref{chap:detection}: \nameref{chap:detection}}

In this work, we initiated a promising direction of research namely detecting instances of manipulation in elections. We showed that detecting possible instances of manipulation can often be a much easier computational problem than the corresponding problem of manipulating election itself as seen for the case of the Borda voting rule.

\subsubsection*{\Cref{chap:frugal_bribery}: \nameref{chap:frugal_bribery}}

In this work, we studied the classical problem of bribery under a weak notion of briber namely when the briber if frugal in nature. We proved that the bribery problems remain intractable even with this weak briber thereby strengthening the intractability results from the literature on bribery. Hence, although theoretically possible, bribery may not be easy to do in practice.

\section{Future Directions of Research}

We next discuss some of the interesting directions of research from this thesis.

\subsection*{\Cref{chap:pref_elicit_peak}: \nameref{chap:pref_elicit_peak}}

\begin{itemize}
 \item One can generalize the notion of single peaked profiles on trees for preferences that are not necessarily a complete order. Indeed, voters often are indifferent between two or more candidates and there are interesting domains in this setting \cite{DBLP:conf/ijcai/ElkindL15}. Eliciting such incomplete preference is an interesting direction of research to pursue.
 \item The domain of single peaked profiles on trees can further be generalized to single peaked profiles on forests. It would be interesting to study this both as a domain itself and  from the point of view of existence of efficient preference elicitation strategies.
 \item Can we reduce the query complexity for preference elicitation further by assuming more on their preferences, for example, a social network structure on the voters?
 \item In this work, we assume complete knowledge of single peaked tree. How the query complexity would change if we only assume the knowledge of the structure of the single peaked tree without the identity of the candidate associated with each node of the tree?
\end{itemize}

\subsection*{\Cref{chap:pref_elicit_cross}: \nameref{chap:pref_elicit_cross}}

\begin{itemize}
 \item An immediate direction of research is to close the gap between upper and lower bound on query complexity for preference elicitation when we know a single crossing order and voters are allowed to be queried randomly.
 \item Another interesting direction is to study preference elicitation assuming a social network structure among the voters where neighbors tend to have similar preferences. 
\end{itemize}

\subsection*{\Cref{chap:winner_prediction_mov}: \nameref{chap:winner_prediction_mov}}

\begin{itemize}
\item
Is there an axiomatic characterization of the voting rules for which
the sample complexity is independent of $m$ and $n$? We note that a
similar problem in graph property testing was the subject of intense
study \cite{alon2006acomb,borgs2006graph}. 
\item
Specifically for scoring rules, is the sample complexity determined by
some natural property of the score vector, such as its sparsity? 
\item
Is it worthwhile for the algorithm to elicit only part of the vote
from each sampled voter instead of the full vote? As mentioned in the
Introduction, vote elicitation is a well-trodden area, but as far
as we know, it has not been studied how assuming a margin of victory
can change the number of queries.
\item
How can knowledge of a social network on the voters be used to
minimize the number of samples made? Some initial progress in this
direction has been made by Dhamal and Narahari \cite{dhamal2013scalable} and by Agrawal
and Devanur (private communication).
\end{itemize}

\subsection*{\Cref{chap:winner_stream}: \nameref{chap:winner_stream}}

\begin{itemize}
 \item An immediate future direction of research is to find optimal algorithms for heavy hitters variants for other voting rule.
 \item It may be interesting to implement these streaming algorithms for use in practice (say, for participatory democracy experiments or for online social networks) and investigate how they perform.
 \item Finally, instead of having the algorithms which passive observes a few random votes, could we improve performance of the algorithm by actively querying voters as they appear in the stream?
\end{itemize}

\subsection*{\Cref{chap:kernel}: \nameref{chap:kernel}}

\begin{itemize}
 \item There are other interesting parameterizations of these problems for which fixed parameter tractable algorithms are known but the corresponding kernelization questions are still open. One such parameter is the total number of pairs $s$ in all the votes for which an ordering has not been specified. With this parameter, a simple $O(2^s. \text{ poly($m,n$)})$ algorithm is known \cite{betzler2009multivariate}. However, the corresponding kernelization question is still open.
 \item Another interesting problem in the context of incomplete votes is the necessary winner problem which asks for a candidate which wins in every extension of the partial votes. Necessary winner is known to be intractable for Copeland, ranked pairs, voting trees, and STV voting rules \cite{xia2008determining}. Studying parameterized complexity of the necessary winner problem is also another interesting direction of research to pursue in future.
\end{itemize}

\subsection*{\Cref{chap:partial}: \nameref{chap:partial}}

\begin{itemize}
 \item we leave open the problem of completely establishing the complexity of strong, opportunistic, and weak manipulations for all the scoring rules. It would be interesting to resolve it.
 \item Other fundamental forms of manipulation and control do exist in voting, such as destructive manipulation and control by adding candidates. It would be interesting to investigate the complexity of these problems in a partial information setting.
 \item Another exciting direction is the study of average case complexity, as opposed to the worst case results that we have pursued. These studies have already been carried out in the setting of complete information~\cite{procaccia2006junta,faliszewski2010ai,walsh2010empirical}. Studying the problems that we propose in the average-case model would reveal further insights on the robustness of the incomplete information setting as captured by our model involving partial orders. 
\end{itemize}

\subsection*{\Cref{chap:detection}: \nameref{chap:detection}}

\begin{itemize}
 \item In this work, we considered elections with unweighted voters only. An immediate future research direction is to study the complexity of these problems in  weighted elections.
 \item Verifying the number of false manipulators that this model catches in a real or synthetic data set, where, we already have some knowledge about the manipulators, would be interesting.
\end{itemize}

\subsection*{\Cref{chap:frugal_bribery}: \nameref{chap:frugal_bribery}}

\begin{itemize}
 \item A potential and natural direction for future work is to study these problems under various other settings. One obvious setting is to restrict the campaigner's knowledge about the votes and/or the candidates who will actually turn up. The uncertainty can also arise from the voting rule that will eventually be used among a set of voting rules.
 \item Studying these bribery problems when the pricing model for vulnerable votes is similar to swap bribery would be another interesting future direction.
\end{itemize}

\renewcommand{\bibname}{References}
\bibliographystyle{alpha}
\bibliography{refThesis}

\end{document}